\theoremstyle{definition} \newtheorem{definition}{Definition}[subsection]
\theoremstyle{definition} \newtheorem{remark}[definition]{Remark}
\theoremstyle{definition} 
\theoremstyle{definition} 
\theoremstyle{definition} \newtheorem{notation}[definition]{Notation}
\theoremstyle{definition} \newtheorem{fact}[definition]{Fact}
\theoremstyle{definition} \newtheorem{convention}[definition]{Convention}
\theoremstyle{plain} \newtheorem{theorem}[definition]{Theorem}
\theoremstyle{plain} \newtheorem{claim}[definition]{Claim}
\theoremstyle{plain} 
\theoremstyle{plain} 
\theoremstyle{plain} \newtheorem{lemma}[definition]{Lemma}
\theoremstyle{plain} 
\theoremstyle{plain} \newtheorem{proposition}[definition]{Proposition}
\theoremstyle{plain} \newtheorem{corollary}[definition]{Corollary}
\theoremstyle{definition} \newtheorem{observation}[definition]{Observation}
\theoremstyle{plain}
\newcommand{\vu}{\vec u}
\newcommand{\vv}{\vec v}
\newcommand{\vx}{\vec x}
\newcommand{\vz}{\vec z}
\newcommand{\aaa}{\mathfrak{A}}
\newcommand{\sss}{\mathfrak{S}}
\newcommand{\bbb}{\mathfrak B}
\renewcommand{\ggg}{\mathfrak{G}}
\newcommand{\hhh}{\mathfrak{H}}
\newcommand{\jjj}{\mathfrak J}
\renewcommand{\lll}{\mathfrak L}
\newcommand{\ppp}{\mathfrak{P}}
\newcommand{\xxx}{\mathfrak X}
\newcommand{\yyy}{\mathfrak Y}
\newcommand{\zzz}{\mathfrak Z}
\newcommand{\xx}{\mathfrak x}
\newcommand{\yy}{\mathfrak y}
\newcommand{\zz}{\mathfrak z}
\newcommand{\E}{\mathbb E}
\renewcommand{\P}{\mathbb P}
\newcommand{\Z}{\mathbb Z}
\newcommand{\new}{\mathrm{new}}
\newcommand{\old}{\mathrm{old}}
\newcommand{\trans}{\mathrm{trans}}
\newcommand{\jh}{\mathrm{Jh}}
\newcommand{\calB}{\mathcal{B}}
\newcommand{\calC}{\mathcal{C}}
\newcommand{\calD}{\mathcal{D}}
\newcommand{\calE}{\mathcal{E}}
\newcommand{\calF}{\mathcal{F}}
\newcommand{\calFbar}{\overline{\mathcal{F}}}
\newcommand{\calH}{\mathcal{H}}
\newcommand{\calK}{\mathcal{K}}
\newcommand{\calL}{\mathcal{L}}
\newcommand{\calP}{\mathcal{P}}
\newcommand{\calR}{\mathcal{R}}
\newcommand{\sets}{\mathsf{Sets}}
\newcommand{\coloredsets}{\mathsf{ColoredSets}}
\newcommand{\partitionedsets}{\mathsf{PartitionedSets}}
\newcommand{\ellk}{k}
\newcommand{\vf}{\varphi}
\newcommand{\Cbar}{\overline C}
\newcommand{\Ebar}{\overline E}
\newcommand{\Gbar}{\overline G}
\newcommand{\Mbar}{\overline M}
\newcommand{\Rbar}{\overline R}
\newcommand{\Wbar}{\overline W}
\newcommand{\pibar}{\overline{\pi}}
\newcommand{\psibar}{\overline{\psi}}
\newcommand{\sigmabar}{\overline{\sigma}}
\newcommand{\normal}{\triangleleft}
\newcommand{\ie}{i.\,e.}
\newcommand{\wto}{\widetilde O}
\newcommand{\wtw}{\widetilde W}
\newcommand{\mn}{\medskip\noindent}
\newcommand{\sn}{\smallskip\noindent}
\DeclareMathOperator{\DL}{DL}
\DeclareMathOperator{\hyp}{Hyp}
\DeclareMathOperator{\aff}{Aff}
\DeclareMathOperator{\supp}{supp}
\DeclareMathOperator{\aut}{Aut}
\DeclareMathOperator{\out}{Out}
\DeclareMathOperator{\iso}{Iso}
\newcommand{\sym}{\sss}
\newcommand{\alt}{\aaa}
\DeclareMathOperator{\diag}{diag}
\DeclareMathOperator{\poly}{poly}
\DeclareMathOperator{\soc}{Soc}
\DeclareMathOperator{\pker}{Pker}
\DeclareMathOperator{\gl}{GL}
\def\bfor{{\bf for}}
\def\bwhile{{\bf while}}
\def\bendwhile{{\bf end(while)}}
\def\bendfor{{\bf end(for)}}
\def\bif{{\bf if}}
\def\bthen{{\bf then}}
\def\belse{{\bf else}}
\def\breturn{{\bf return}}
\def\bexit{{\bf exit}}
\begin{document}

\begin{center}
\LARGE{Graph Isomorphism in Quasipolynomial Time}
\end{center}

\begin{center}
{\large L\'aszl\'o Babai \\
        University of Chicago}
\end{center}

\begin{center}
{\large 2nd preliminary version} \\ 
January 19, 2016 \\
\end{center}

\begin{abstract}
We show that the Graph Isomorphism (GI) problem and the related problems
of String Isomorphism (under group action) (SI) and Coset Intersection (CI)
can be solved in\ quasipolynomial\ \ ($\exp\left((\log
n)^{O(1)}\right)$)\ \ time.\quad  The\ best\ previous\ bound\ for\ GI\
was\ \ $\exp(O(\sqrt{n\log n}))$, where $n$ is the number
of vertices (Luks, 1983); for the other two problems, the bound was
similar, $\exp(\wto(\sqrt{n}))$, where $n$ is the size of the
permutation domain (Babai, 1983).

The algorithm builds on Luks's SI framework and attacks the barrier
configurations for Luks's algorithm by group theoretic ``local
certificates'' and combinatorial canonical partitioning techniques.
We show that in a well--defined sense, Johnson graphs are the only
obstructions to effective canonical partitioning.

Luks's barrier situation is characterized by a homomorphism $\vf$
that maps a given permutation group $G$ onto $S_k$ or $A_k$,
the symmetric or alternating group of degree $k$, where $k$
is not too small.  We say that an element $x$ in the 
permutation domain on which $G$ acts is \emph{affected} by $\vf$
if the $\vf$-image of the stabilizer of $x$ does not
contain $A_k$.  The affected/unaffected dichotomy 
underlies the core ``local certificates'' routine and
is the central divide-and-conquer tool of the algorithm.
\end{abstract}

\vspace{2cm}
\noindent
For a list of updates compared to the first arXiv version,
see the last paragraph of the Acknowledgments.

\vspace{2.5cm}

\mn
$^*$ Research supported in part by NSF Grants CCF-7443327 (2014-current),
     CCF-1017781 (2010-2014), and CCF-0830370 (2008--2010).
     Any opinions, findings, and 
     conclusions or recommendations expressed in this paper are
     those of the author and do not necessarily reflect the
     views of the National Science Foundation (NSF).

\newpage
\tableofcontents

\section{Introduction}

\subsection{Results and philosophy}
\subsubsection{Results: the String Isomorphism problem}
Let $G$ be a group of permutations of the set $[n]=\{1,\dots,n\}$
and let $\xx,\yy$ be strings of length $n$ over a finite alphabet.
The \emph{String Isomorphism (SI) problem} asks, given $G$,
$\xx$, and $\yy$, does there exist an element of $G$ that
transforms $\xx$ into $\yy$.  (See the precise definition in 
Def.~\ref{def:string}.  Permutation groups are given by a list 
of generators.) A function $f(n)$ is \emph{quasipolynomially
bounded} if there exist constants $c,C$ such that 
$f(n)\le \exp(C(\log n)^c)$ for all sufficiently large $n$.
``Quasipolynomial time'' refers to quasipolynomially bounded time.

We prove the following result.
\begin{theorem}
The String Isomorphism problem can be solved in quasipolynomial
time.
\end{theorem}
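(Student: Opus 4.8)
The plan is a divide-and-conquer recursion that, on input $(G,\xx,\yy)$, computes the set $\iso_G(\xx,\yy)$ of $G$-isomorphisms as a (possibly empty) coset $\aut_G(\xx)\sigma$, with total running time $\exp\bigl((\log n)^{O(1)}\bigr)$. Following Luks, two reductions cost only a small multiplicative overhead: if $G$ is intransitive we recurse on the action on each orbit and glue the answers (cost multiplicative in the number of orbits, hence harmless when all orbits are short); if $G$ is transitive but imprimitive we pass to a minimal block system $\calB$, and for each of the $[G:N]$ candidate block-actions we recurse on the kernel $N$ of the $G$-action on $\calB$ --- affordable unless the induced group $\Gbar\le\sym(\calB)$ is a \emph{giant}, i.e.\ contains $\alt(\calB)$, on a block set that is not tiny. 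By the CFSG-based structure theory of primitive groups (Cameron; Babai--Cameron--P\'alfy), the only alternative to a primitive group of degree $m$ having quasipolynomially bounded order is that it is, up to bounded-index and bounded-product modifications, a giant $S_k$ or $A_k$ acting on $k$ points or on the $t$-subsets of a $k$-set, possibly in a further product action. Thus all the difficulty concentrates in \emph{Luks's barrier}: a homomorphism $\vf\colon G\to S_k$ with $\vf(G)\ge A_k$, where $k$ is above a threshold of order $\log n$.

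To break the barrier I would run the \emph{Local Certificates} routine keyed on the affected/unaffected dichotomy. Call $x\in[n]$ \emph{affected} by $\vf$ if $\vf(G_x)\not\ge A_k$. One first proves an \emph{Unaffected Stabilizers} lemma: once $k$ is above the threshold, the affected points form a nonempty canonically defined set, and stabilizing the unaffected points pointwise does not push the image of $\vf$ below $A_k$, so they may be frozen for free. Next, for every test set $T\subseteq[k]$ of size $O(\log n)$ one computes, by a \emph{recursive} call that is cheap because over the $T$-window $\Wbar_T\subseteq[n]$ the giant action has been cut down to one of bounded degree, the group of permutations of $T$ induced through $\vf$ by the $G$-automorphisms of $\xx$ living over that window. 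Either this ``local certificate'' is \emph{full} (its $T$-component contains $\alt(T)$) or it is not. Aggregating over all $T$: if enough certificates are full, the corresponding $\vf$-liftings generate a large, dense, canonical subgroup of $\aut_G(\xx)$ that we can use directly to descend; if few are full, the non-full certificates attach to each such $T$ a canonically defined proper subgroup of $\sym(T)$, hence a canonical relational structure of bounded arity on $[k]$ that is far from trivial.

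Given such a nontrivial canonical structure on a set of $m\ge C\log n$ points (packaged as a coherent configuration), the endgame is the \emph{Design Lemma} followed by \emph{Split-or-Johnson}: canonically, and in time $\exp\bigl((\log n)^{O(1)}\bigr)$, produce one of (a) a canonical coloring whose color classes all have size $\le\tfrac34 m$; (b) a canonical equipartition into $\le\tfrac34 m$ parts; or (c) a canonically embedded Johnson scheme $J(s,t)$ on at least half of the points. In cases (a)--(b) we have canonically split the domain (or the block set) and recurse on strictly smaller instances; in case (c) we use $\aut(J(s,t))=S_s$ acting on $t$-subsets, so we have exhibited a giant action whose underlying degree $s$ is far smaller than $k$, and we recurse on $S_s$. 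In every branch, either an ``$n$-type'' parameter (largest orbit, or number of vertices surviving a canonical coloring) or the giant-degree parameter $k$ shrinks by a constant factor, at the price of at most $\exp\bigl((\log n)^{O(1)}\bigr)$ branching and a multiplicative $\exp\bigl((\log n)^{O(1)}\bigr)$ factor; since each parameter tolerates only $O(\log n)$ such drops, the recursion tree has quasipolynomial size, which gives the claimed bound.

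The two steps I expect to be genuinely hard are (i) the Local Certificates routine --- in particular proving that affected points are plentiful and, above all, that reassembling the certificates over all test sets yields a \emph{canonical} object independent of the arbitrary choices made deep inside the recursion, which is exactly the work the affected/unaffected dichotomy is designed to do; and (ii) Split-or-Johnson, whose combinatorial heart is the assertion that Johnson schemes are the \emph{only} obstruction to an effective canonical color refinement of a coherent configuration. Choosing a complexity measure under which every branch makes honest progress, and verifying that the gluing of recursive answers respects canonicity, is delicate but, once the measure is fixed, essentially bookkeeping.
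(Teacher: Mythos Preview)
Your proof plan is correct and follows essentially the same architecture as the paper: Luks reductions down to the Johnson-group barrier via Cameron's theorem, then the Local Certificates routine driven by the affected/unaffected dichotomy (Unaffected Stabilizer Theorem plus Affected Orbit Lemma), aggregation of full and non-full certificates, and the Design Lemma together with Split-or-Johnson to produce either a canonical $\alpha$-partition or a smaller embedded Johnson scheme, with a double recursion tracking $n$ and the giant degree $k$. Your identification of the two genuinely hard ingredients---the Local Certificates procedure (especially canonicity of the aggregated output) and the Split-or-Johnson combinatorics---matches exactly where the paper concentrates its effort.
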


The Graph Isomorphism (GI) Problem asks to decide whether
or not two given graphs are isomorphic.  The Coset Intersection (CI)
problem asks, given cosets of two permutation groups over
the same finite domain, do they have a nonempty intersection.

\begin{corollary}
The Graph Isomorphism problem and the Coset Intersection problem
can be solved in quasipolynomial time.
\end{corollary}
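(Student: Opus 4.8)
The plan is to deduce the corollary from the Theorem via the classical Luks-style reductions of Graph Isomorphism and of Coset Intersection to String Isomorphism, the only thing to verify being that each reduction blows up the size of the permutation domain by at most a polynomial factor, so that a quasipolynomial bound in the domain size translates into a quasipolynomial bound in the input size. For GI: given graphs $X,Y$ on the common vertex set $[n]$, I would take $\Omega=\binom{[n]}{2}$, let $G\le\sym(\Omega)$ be the image of $S_n$ under its induced action on $2$-element subsets (a generating set of $G$ is produced in polynomial time by applying two generators of $S_n$, such as a transposition and an $n$-cycle, to each element of $\Omega$), and let $\xx,\yy\in\{0,1\}^\Omega$ be the edge-indicator strings of $X$ and $Y$. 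A permutation of $[n]$ is an isomorphism $X\to Y$ exactly when the permutation it induces on $\Omega$ carries $\xx$ to $\yy$; hence $X\cong Y$ iff $\iso_G(\xx,\yy)\ne\emptyset$, and since $|\Omega|=\binom{n}{2}$ the Theorem supplies the quasipolynomial bound. (Directed graphs, edge- and vertex-colored graphs, hypergraphs, and more general relational structures all reduce in the same manner.)

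For CI: given cosets $G_1\sigma_1$ and $G_2\sigma_2$ inside $\sym([n])$, I would encode each permutation $\pi\in\sym([n])$ as the string $\chi_\pi\in\{0,1\}^{[n]\times[n]}$ with $\chi_\pi(i,j)=1$ iff $\pi(i)=j$ (the bipartite graph, equivalently the permutation matrix, of $\pi$). Left multiplications by $G_2$ and right multiplications by $G_1$ on $\sym([n])$ then become an action of $G_1\times G_2$ on $[n]\times[n]$ in which one factor permutes the first coordinate and the other the second; a generating set is read off from those of $G_1$ and $G_2$ in polynomial time. Under this action the orbit of $\chi_e$ (where $e$ is the identity permutation) is $\{\chi_g:g\in G_2G_1\}$, so $\chi_e$ and $\chi_{\sigma_2\sigma_1^{-1}}$ lie in the same $(G_1\times G_2)$-orbit iff $\sigma_2\sigma_1^{-1}\in G_2G_1$, which is precisely the condition $G_1\sigma_1\cap G_2\sigma_2\ne\emptyset$. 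Thus coset intersection reduces to deciding whether $\iso_{G_1\times G_2}(\chi_e,\chi_{\sigma_2\sigma_1^{-1}})\ne\emptyset$ over a domain of size $n^2$, and the Theorem again applies.

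I do not expect any genuine obstacle here: these are the familiar reductions, and all of the content of the corollary sits in the Theorem. The only points that need care are clerical — checking that the auxiliary permutation groups $G\le\sym(\binom{[n]}{2})$ and $G_1\times G_2\le\sym([n]\times[n])$ are presented by polynomially many generators computable in polynomial time, and that the (at most quadratic) domain blow-up is absorbed by the quasipolynomial bound. Finally, although the Theorem is stated as a decision statement, the string-isomorphism procedure in fact returns $\iso_G(\xx,\yy)$ as a coset of the stabilizer $G_\xx$; pushing this through the two reductions additionally yields the automorphism group of a graph and the intersection $G_1\sigma_1\cap G_2\sigma_2$ itself as an explicit coset, all in quasipolynomial time.
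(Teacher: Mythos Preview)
Your proposal is correct and follows essentially the same approach as the paper, which simply cites Luks's polynomial-time reductions of GI and CI to String Isomorphism; the paper only sketches the GI reduction (via the induced $S_n$-action on $\binom{[n]}{2}$, exactly as you do) and refers to~\cite{luks-bded,luks-dimacs} for the polynomial-time equivalence of SI and CI. You have spelled out the CI-to-SI reduction explicitly via the permutation-matrix encoding on $[n]\times[n]$, which is one of the standard ways to realize Luks's equivalence, and your clerical checks (polynomial-size generating sets, at most quadratic domain blow-up, absorption by the quasipolynomial bound) are all in order.
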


The SI and CI problems were introduced by Luks~\cite{luks-bded}
(cf.~\cite{luks-dimacs})
who also pointed out that these problems are polynomial-time
equivalent (under Karp reductions) and GI easily reduces to either.
For instance, GI for graphs with $n$ vertices is identical, under
obvious encoding, with SI for binary strings of length
$\binom{n}{2}$ with respect to the induced action of the symmetric
group of degree $n$ on the set of $\binom{n}{2}$ unordered pairs.

The previous best bound for each of these three problems was
$\exp(\wto(n^{1/2}))$ (the tilde hides polylogarithmic factors%
\footnote{Accounting for those logs, the best bound for GI for three
decades was $\exp(O(\sqrt{n\log n}))$, established by Luks in 1983,
cf.~\cite{bkl}.}), where for GI, $n$ is the number of vertices, for
the two other problems, $n$ is the size of the permutation domain.
For GI, this bound was obtained in 1983 by combining Luks's
group-theoretic algorithm~\cite{luks-bded} with a combinatorial
partitioning lemma by Zemlyachenko (see~\cite{zemlyachenko, canonical,
  bkl}).  For SI and CI, additional group-theoretic observations were
used (\cite{doktori}, cf.~\cite{bkl}).  No improvement over either of
these results was found in the intervening decades.

The actual results are slightly stronger: only the length of the largest
orbit of $G$ matters.

\begin{theorem}
The SI problem can be solved in time, polynomial in $n$ (the length of
the strings) and quasipolynomial in $n_0(G)$,
the length of the largest orbit of $G$.
\end{theorem}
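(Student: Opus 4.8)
\mn\textbf{The plan.} I would deduce this sharper bound from the quasipolynomial-time SI algorithm of the preceding Theorem by Luks's orbit-by-orbit processing, arranged so that the quasipolynomial factor is ever only paid on one $G$-orbit at a time. Write $\Omega=[n]$, let $\Delta_1,\dots,\Delta_r$ (with $r\le n$) be the orbits of $G$, so that $|\Delta_i|\le n_0(G)$, and recall that the set $\iso_G(\xx,\yy)=\{g\in G:\xx^g=\yy\}$ is either empty or a coset of the group $\aut_G(\xx)=\{g\in G:\xx^g=\xx\}$. I would build $\iso_G(\xx,\yy)$ up one orbit at a time: starting from $C_0:=G$, and having in hand a coset
\[
  C_{i-1}=\{g\in G:\ \xx^g \text{ and } \yy \text{ agree on } \Delta_1\cup\dots\cup\Delta_{i-1}\}=K_{i-1}\sigma_{i-1},\qquad K_{i-1}\le G,
\]
I would refine it to $C_i:=\{g\in C_{i-1}:\ \xx^g \text{ and } \yy \text{ agree on } \Delta_i\}$; then $C_r=\iso_G(\xx,\yy)$ and we are done.

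\mn\textbf{Cost of one refinement step.} Parametrizing $C_{i-1}$ as $h\sigma_{i-1}$ with $h\in K_{i-1}$, the condition defining $C_i$ turns into a String Isomorphism condition for the action of $K_{i-1}$ on the single set $\Delta_i$, between the length-$|\Delta_i|$ string $\xx|_{\Delta_i}$ and a $\sigma_{i-1}$-translate of $\yy|_{\Delta_i}$ (note $K_{i-1}$ need not act transitively on $\Delta_i$, but only the domain size matters here). The key point is that the permutation domain of this sub-instance is just $\Delta_i$, of size $\le n_0(G)$; after first replacing the generators of $K_{i-1}$ by a strong generating set and then restricting to $\Delta_i$, the sub-instance has input size $\poly(n_0(G))$, so the algorithm of the preceding Theorem solves it --- returning the answer as a (possibly empty) coset $M\tau$ in $\sym(\Delta_i)$ --- in time $\exp\bigl((\log n_0(G))^{O(1)}\bigr)$. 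Pulling $M\tau$ back through the restriction homomorphism $\rho\colon K_{i-1}\to\sym(\Delta_i)$ and composing with $\sigma_{i-1}$ produces $C_i$; computing orbits, strong generating sets, and the preimage $\rho^{-1}(M\tau)$ are standard polynomial-time permutation-group routines (Schreier--Sims and relatives). Summing over the $r\le n$ steps yields a running time polynomial in $n$ and quasipolynomial in $n_0(G)$, as claimed. (There is no circularity: the algorithm of the preceding Theorem is in fact most naturally set up so as to prove exactly this $n_0$-refined bound, the earlier statement being the special case $n_0(G)=n$.)

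\mn\textbf{Where the real work sits.} The orbit reduction above is routine Luks-style bookkeeping; the only point in it that needs attention is the one just noted, shrinking each recursive SI call to input size polynomial in $n_0(G)$ rather than $n$. The genuine obstacle was already surmounted inside the preceding Theorem, and it is Luks's barrier situation: after the transitive case has been reduced (via block systems) to primitive groups, one faces a homomorphism $\vf$ of $G$ onto $S_k$ or $A_k$ with $k$ not small. I expect the hard part to be precisely this --- organizing the affected/unaffected dichotomy so that the ``local certificates'' routine runs within individual orbits, and then aggregating the certificates by canonical colored partitioning, showing that Johnson graphs are the only configuration that resists splitting. It is exactly the demand that every recursive call stay confined to one orbit (or to a small auxiliary domain built from one) that both forces and delivers the $n_0(G)$ form of the bound.
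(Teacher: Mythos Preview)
Your proposal is correct and matches the paper's (implicit) approach: the paper does not give a separate proof of this theorem, but its algorithm is organized exactly as you describe --- the intransitive case is handled by the Chain Rule (Prop.~\ref{prop:chainrule}) orbit by orbit, and Observation~\ref{obs:nowindow} (``Reducing to the window'') is precisely your restriction-to-$\Delta_i$-and-pull-back step. Your care in shrinking the sub-instance input to $\poly(n_0(G))$ via a strong generating set on $\Delta_i$ is the one detail the paper leaves to the reader; it is correct as you state it.
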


The first class of graphs studied using group theory was that of
vertex-colored graphs (isomorphisms preserve color by 
definition)~\cite{lasvegas} (1979).

\begin{corollary}
The GI problem for vertex-colored graphs can be solved in time, 
polynomial in $n$ (the number of vertices) and quasipolynomial 
in the largest color multiplicity.
\end{corollary}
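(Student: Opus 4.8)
The plan is to realize vertex-colored graph isomorphism as an instance of the String Isomorphism problem whose acting group has all orbits of size bounded by essentially the square of the largest color class, and then quote the orbit-sensitive running time of the Theorem stated immediately above.

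First I would set up the reduction. Let $X_1$ and $X_2$ be vertex-colored graphs on vertex set $[n]$. If for some color $c$ the number of $c$-colored vertices differs between $X_1$ and $X_2$, the graphs are non-isomorphic. Otherwise, after relabeling the vertices of $X_1$ by some bijection of $[n]$ that carries each color class of $X_1$ onto the class of the same color in $X_2$ (such a bijection exists precisely when the two graphs agree on all color multiplicities), we may assume $X_1$ and $X_2$ induce the \emph{same} partition $[n]=C_1\sqcup\dots\sqcup C_r$ into color classes. A color-preserving bijection of $[n]$ is then exactly an element of $G:=\sym(C_1)\times\dots\times\sym(C_r)$, acting in the induced way on the set $\Omega=\binom{[n]}{2}$ of unordered pairs; this $G$ has an $O(n)$-element generating set (cycles on each class). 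Encoding the edge set of $X_i$ as its indicator string $\xx_i\in\{0,1\}^{\Omega}$, we have $X_1\cong X_2$ as colored graphs if and only if some element of $G$ carries $\xx_1$ to $\xx_2$, i.e.\ an instance of SI over a domain of size $\binom{n}{2}$.

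Next I would bound the orbits of $G$ on $\Omega$. Write $m=\max_i|C_i|$ for the largest color multiplicity. Since $G$ preserves, for each pair, the (unordered) pair of classes its endpoints lie in, and $\sym(C_i)$ is $2$-transitive on $C_i$ while $\sym(C_i)\times\sym(C_j)$ is transitive on $C_i\times C_j$, the orbits of $G$ on $\binom{[n]}{2}$ are exactly: for each $i$, the pairs inside $C_i$ (of size $\binom{|C_i|}{2}<m^2$), and for each $i<j$, the pairs with one endpoint in $C_i$ and one in $C_j$ (of size $|C_i|\,|C_j|\le m^2$). Hence $n_0(G)<m^2$. Applying the orbit-sensitive SI Theorem stated above to the instance $(G,\xx_1,\xx_2)$ yields a running time polynomial in $\binom{n}{2}$, hence in $n$, and quasipolynomial in $n_0(G)<m^2$. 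Because $\exp\!\bigl(C(\log m^2)^c\bigr)=\exp\!\bigl(2^cC(\log m)^c\bigr)$, a bound quasipolynomial in $m^2$ is quasipolynomial in $m$, which gives the claim.

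There is no substantive obstacle here: the entire difficulty resides in the Theorem being invoked. The only things that require checking are the correctness of the colored-graph-to-SI encoding and the elementary identification of the orbits of a direct product of symmetric groups acting on pairs, together with the observation that ``quasipolynomial'' in the robust sense of the definition in the introduction is closed under replacing the argument by a fixed power of it; all of these are routine.
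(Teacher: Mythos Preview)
Your proof is correct and follows the natural reduction the paper has in mind: the paper states this corollary immediately after the orbit-sensitive SI theorem without giving an explicit proof, having already noted in the introduction that GI encodes as SI via the induced action on $\binom{[n]}{2}$. Your identification of the orbits of $\sym(C_1)\times\cdots\times\sym(C_r)$ on unordered pairs and the bound $n_0(G)\le m^2$, together with the closure of quasipolynomial bounds under fixed powers, is exactly the intended argument.
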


\subsubsection{Quasipolynomial complexity analysis, multiplicative cost}
The analysis will be guided by the observation that if $f(x)$ and $q(x)$ 
are positive, monotone increasing functions and
\begin{equation}   \label{eq:recurrence0}
        f(n) \le q(n)f(9n/10)
\end{equation}
then $f(n) \le q(n)^{O(\log n)}$.  In  particular, if $q(x)$
is quasipolynomially bounded then so is $f(x)$.  Here $f(n)$ stands
for the worst-case cost (number of group operations)
on instances of size $\le n$ and $q(n)$
is the branching factor in the algorithm to which we refer
as the ``multiplicative cost'' in reference to 
Eq.~\eqref{eq:recurrence0}.  So our goal will be
to achieve a \emph{significant reduction} in the problem size,
say $n\leftarrow 9n/10$, at a quasipolynomial multiplicative cost.
There is also an additive cost to the reduction,
but this will typically be absorbed by the multiplicative cost.

Both in the group-theoretic and in the combinatorial arguments, we
shall actually use double recursion.  In addition to the input domain
$\Omega$ of size $n=|\Omega|$, we shall also build an auxiliary set
$\Gamma$ and track its size $m=|\Gamma|\le n$.  Most of the action
will occur on $\Gamma$.  Significant progress will be deemed to have
occurred if we significantly reduce $\Gamma$, say $m\leftarrow 9m/10$,
while not increasing $n$.  When $m$ drops below a threshold $\ell(n)$
that is polylogarithmic in $n$, we perform brute force enumeration
over the symmetric group $\sym(\Gamma)$ 
at a multiplicative cost of $\ell(n)!$.  This
eliminates the current $\Gamma$ and significantly reduces $n$.
Subsequently a new $\Gamma$, of size $m\le n$, is introduced, and the
game starts over.  If $q_1(x)$ is the multiplicative cost of
significantly reducing $\Gamma$ then the overall cost estimate becomes
$f(n)\le q_1(x)^{\log^2 n}$.

\subsubsection{Philosophy: local to global}
We follow Luks's general framework~\cite{luks-bded}, developed for his
celebrated polynomial-time algorithm to test isomorphism of graphs of
bounded valence.  

Luks's method meets a barrier when it encounters large primitive
permutation groups without well--behaved subgroups of small index.  By
a 1981 result of Cameron~\cite{cameron}
these barrier groups are the ``Johnson groups'' (symmetric or alternating
groups in their induced action on $t$-tuples).  This much has been
known for more than 35 years.
Our contribution is in breaking this symmetry.

We achieve this goal via a new group-theoretic divide-and-conquer
algorithm that handles Luks's barrier situation.  The tools we develop
include two combinatorial partitioning algorithms and a
group-theoretic lemma.  The latter serves as the main
divide-and-conquer tool for the procedure at the heart of the
algorithm.

Our strategy is an
interplay between local and global symmetry, formalized
through a technique we call \emph{``local certificates.''}
We shall certify both the presence and the absence of local symmetry.
Locality in our context refers to logarithmic-size subdomains.
If we don't find local symmetry, we infer global irregularity; if
we do find local symmetry, we either build it up to global symmetry
(find automorphisms)
or use any obstacles to this attempt to infer local irregularity
which we then synthesize to global irregularity.
Building up symmetry means approximating the automorphism group
from below; when we discover irregularity,
we turn that into an effective upper bound on the automorphism
group (by reducing the ambient group $G$ using combinatorial
partitioning techniques, including polylog-dimensional
Weisfeiler--Leman refinement\footnote{As far as I know,
this paper is the first to derive analyzable
gain from employing the $k$-dimensional WL method for 
unbounded values of $k$.  We
use it in the proof of the Design Lemma (Thm.~\ref{thm:design}).
In our applications of the Design Lemma, the value of 
$k$ is polylogarithmic (see Secs. 
\ref{sec:localguide},~\ref{sec:aggregate}).}).  
When the lower and the upper bounds on the automorphism group meet,
we have found the automorphism group and solved the isomorphism
problem.

The critical contribution of the group-theoretic {\sf LocalCertificates}
routine (Theorem~\ref{thm:local-certificates}) is the
following, somewhat implausible, dichotomy: either the algorithm finds 
\emph{global} automorphisms (automorphisms
of the entire input string) that certify high \emph{local symmetry},
or it finds a \emph{local} obstruction to high local symmetry.
We shall explain this in more specific terms.

\subsection{Strategy}

\subsubsection{Notation, terminology. Giants, Johnson groups}
For groups $G, H$ we write $H\le G$ to indicate that $H$ is a subgroup
of $G$.

For a set $\Gamma$ we write $\sym(\Gamma)$ to denote the symmetric
group acting on $\Gamma$, and $\alt(\Gamma)$ for the alternating group.
We refer to these two subgroups of $\sym(\Gamma)$ as the \emph{giants}.
If $|\Gamma|=m$ then we also generically write $\sss_m$ and $\aaa_m$ 
for the giants acting on $m$ elements.   We say that a homomorphism
$\vf : G\to\sym(\Gamma)$ is a \emph{giant representation} if
the image $G^{\vf}$ is a giant (\ie, $G^{\vf}\ge \aaa(\Gamma)$).

We write $\sym^{(t)}(\Gamma)$ for the induced action of $\sym(\Gamma)$
on the set $\binom{\Gamma}{t}$ of $t$-tuples of elements of $\Gamma$.
We define $\alt^{(t)}(\Gamma)$ analogously.  We call the groups
$\sym^{(t)}(\Gamma)$ and $\aaa^{(t)}(\Gamma)$ \emph{Johnson groups}
and also denote them by $\sss^{(t)}_m$ and $\aaa^{(t)}_m$ if $|\Gamma|=m$.
Here we assume $1\le t\le m/2$.

The input to the String Isomorphism problem is a permutation group
$G\le\sym(\Omega)$ and two strings $\xx,\yy:\Omega\to\Sigma$
where $\Sigma$ is a finite alphabet.   For $\sigma\in \sym(\Omega)$, 
the string $x^{\sigma}$
is obtained from $\xx$ by permuting the positions of the entries via 
$\sigma$.  The set $\iso_G(\xx,\yy)$ of $G$-isomorphisms of the
strings $\xx,\yy$ consist of those $\sigma\in G$ that satisfy
$\xx^{\sigma}=\yy$, and the $G$-automorphism group of $\xx$
is the group $\aut_G(\xx)=\iso_G(\xx,\xx)$.  The set
$\iso_G(\xx,\yy)$ is either empty or a right coset
of $\aut_G(\xx)$.

\subsubsection{Local certificates}
Luks's SI algorithm proceeds by processing the permutation group
$G\le\sym(\Omega)$ orbit by orbit.  If $G$ is transitive, it
finds a minimal system $\Phi$ of imprimitivity ($\Phi$ is a $G$-invariant
partition of the permutation domain $\Omega$ into maximal blocks),
so the action $\ggg\le\sym(\Phi)$ is a primitive permutation group.
The naive approach then is to enumerate all elements of $\ggg$,
each time reducing to the kernel of the $G\to\ggg$ epimorphism.

By Cameron's cited result, the barrier to efficient application of 
this method occurs when $\ggg$ is a Johnson group, $\sss^{(t)}_m$
or $\aaa^{(t)}_m$, for some value $m$ deemed too large to
permit full enumeration of $\ggg$.  (Under brute force 
enumeration, the number $m$ will go into the exponent of 
the complexity.)   We shall set this threshold at $c\log n$
for some constant $c$.

It is easy to verify recursively whether or not $\aut_G(\xx)$ maps
onto $\ggg$, or a small-index subgroup of $\ggg$; and if the answer is
positive, we can also find the $G$-isomorphisms of $\xx$ and $\yy$ via
efficient recursion.

So our goal is to significantly reduce $\ggg$ unless $\aut_G(\xx)$
maps onto a large portion of $\ggg$.

First we note that our Johnson group $\ggg$, a quotient of $G$, is
isomorphic to a symmetric or alternating group, $\sss_m$ or $\aaa_m$,
so $G$ has a giant representation $\vf : G\to \sym(\Gamma)$ for some
domain $\Gamma$ of size $|\Gamma|=m$.

Virtually all the action in our algorithm will occur on the set
$\Gamma$.  By ``locality'' we shall refer to logarithmic-size subsets
of $\Gamma$ which we call \emph{test sets}.  If $A\subseteq \Gamma$ is
a test set, we say that $A$ is \emph{full} if all permutations in
$\alt(A)$ lift to $G$-automorphisms of the input string $\xx$.

One of our main technical result says, somewhat surprisingly, that
we can efficiently decide local symmetry (fullness of a test set):
in quasipolynomial time we are able to reduce the question
of fullness to $N$-isomorphism questions, where the groups $N$
have no orbit larger than $n/m$.

This is a significant reduction, it permits efficient application
of Luks's recurrence.

Our test sets (small subsets of $\Gamma$)
have corresponding portions of the input
(defined on subsets of $\Omega$); we refer to elements of $G$
that respect such a portion of the input as \emph{local
  automorphisms}.  The obstacles to local symmetry we find are local
(they already block local automorphisms from being sufficiently rich),
so our certificates to non-fullness will be local.  Fullness
certificates, however, by definition cannot be local: we need global
automorphisms (that respect the full input) to certify fullness.  The
surprising part is that from a sufficiently rich set of local
automorphisms we are able to infer global automorophisms
(automorphisms of the full string).

The nature of these certificates is outlined in
Sec.~\ref{sec:local-certificates-explained}.

\subsubsection{Aggregating the local certificates}
The next phase is that we aggregate these $\binom{m}{k}$ local
certificates (where $k=|A|$ is the size of the test sets; we shall
choose $k$ to be $O(\log n)$) into global information.  In fact, not
only do we study test sets $A$ but compare pairs $A, A'$ of test sets,
and we also compare test sets for the input $\xx$ and for the input
$\yy$, so our data for the aggregation procedures take about
$\binom{m}{k}^2$ items of local information as input.

Aggregating the positive certificates is rather simple; these
are subgroups of the automorphism group, so we study 
the group $F$ they generate, and the structure of its
projection $F^{\Gamma}$ into $\sym(\Gamma)$.   If this group is
all of $\sym(\Gamma)$ then $\xx$ and $\yy$ are $G$-isomorphic
if and only if they are $N$-isomorphic where $N=\ker(\vf)$.  
The situation is not much more difficult when
$F^{\Gamma}$ acts as a giant on a large portion of $\Gamma$
(Section~\ref{sec:topaction}).

Otherwise, if $F^{\Gamma}$ has a large support in $\Gamma$
but is not a giant on a large orbit of this support, then
we can take advantage of the structure of $F^{\Gamma}$ 
(orbits, domains of imprimitivity) to obtain the desired
split of $\Gamma$ (Section~\ref{sec:aggregate}).

The aggregate of the negative certificates will be a 
canonical $k$-ary relational structure on $\Gamma$
and the subject of our combinatorial reduction techniques
(Design Lemma, Sec.~\ref{sec:tools1-designlemma},
and Split-or-Johnson algorithm, Sec.~\ref{sec:johnson}) which,
in combination, will achieve the desired reduction of $\Gamma$.

\subsubsection{Group theory required}
The algorithm heavily depends on the Classification of Finite Simple
Groups (CFSG) through Cameron's classification of large primitive
permutation groups.  Another instance where we rely on CFSG occurs in
the proof of Lemma~\ref{lem:prim-topquotient} that depends on
``Schreier's Hypothesis'' (that the outer automorphisms group of a
finite simple group is solvable), a consequence of CFSG.

No deep knowledge of group theory is required for reading this
paper.  The cited consequences of the CFSG are simple to state,
and aside from these, we only use elementary group theory.

We should also note that we are able to dispense with
Cameron's result using our combinatorial partitioning technique,
significantly reducing the dependence of our analysis on the CFSG.
We comment on this in Section~\ref{sec:CFSG}.

\subsection{The ingredients}

The algorithm is based on Luks's classical framework.  It has four
principal new ingredients: a group-theoretic result, a group-theoretic
divide-and-conquer algorithm, and two combinatorial partitioning
algorithms.  Both the group-theoretic algorithm and part of one of
the combinatorial partitioning algorithms implement the idea
of ``local certificates.''

\subsubsection{The group-theoretic divide-and-conquer tool}
In this section we describe our main group theoric tool.

Let $G\le\sym(\Omega)$ be a permutation group.
Recall that we say that a homomorphism $\vf : G\to\sss_k$
is a \emph{giant representation} of $G$ if $G^{\,\vf}$ (the
image of $G$ under $\vf$) contains $\aaa_k$.
We say that an element $x\in\Omega$ is \emph{affected} by $\vf$ 
if $G_x^{\,\vf}\ngeq \aaa_k$,
where $G_x$ denotes the stabilizer of $x$ in $G$.  Note that
if $x$ is affected then every element of the orbit $x^G$ is affected.
So we can speak of affected orbits.

\begin{theorem}
\label{thm:unaffected1}
Let $G\le\sym(\Omega)$ be a permutation group and 
let $n_0$ denote the length of the largest orbit of $G$.  
Let $\vf : G\to\sss_k$ be a giant representation.  
Let $U\subseteq\Omega$ denote the set of
elements of $\Omega$ not affected by $\vf$.
Then the following hold.
\begin{enumerate}[(a)] 
\item   \emph{(Unaffected Stabilizer Theorem)}  \label{item:unaffected}
Assume $k >\max\{8, 2+\log_2 n_0\}$.  Then
$\vf$ maps $G_{(U)}$, the pointwise stabilizer of\, $U$,
onto $\aaa_k$ or $\sss_k$ (so $\vf : G_{(U)}\to\sss_k$ is
still a giant representation).
In particular, $U\neq \Omega$ (at least one element is affected).
\item  \emph{(Affected Orbits Lemma)} 
Assume $k \ge 5$.
If $\Delta$ is an affected $G$-orbit, \ie,
$\Delta\cap U=\emptyset$, then $\ker(\vf)$ is not
transitive on $\Delta$; in fact, each orbit of
$\ker(\vf)$ in $\Delta$ has length $\le |\Delta|/k$.
\end{enumerate}
\end{theorem}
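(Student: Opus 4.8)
My plan is to prove the two parts essentially independently, with part (b) serving as the key lemma that feeds into part (a). For part (b), the Affected Orbits Lemma, fix an affected orbit $\Delta$ and let $K=\ker(\vf)$. I want to bound the $K$-orbit lengths in $\Delta$. Pick $x\in\Delta$; since $x$ is affected, $G_x^{\,\vf}$ is a proper subgroup of $G^{\,\vf}\ge\aaa_k$ that does not contain $\aaa_k$. The orbit $x^K$ has length $[K:K_x]$, and the key observation is that $K_x = K\cap G_x$, so $x^K$ injects into $G_x\backslash G$ in a way compatible with $\vf$: concretely, the length of $x^K$ equals $[K : K\cap G_x]$, which by the second isomorphism theorem divides $[KG_x : G_x]$, and $KG_x/G_x$ corresponds under $\vf$ to $G^{\,\vf}/G_x^{\,\vf}$ — wait, more carefully: $|x^K| = [K:K_x] = [KG_x:G_x] = [G:G_x]/[G:KG_x] = |x^G| \cdot |G_x^{\,\vf}| / |G^{\,\vf}|$ since $KG_x$ is the full preimage of $G_x^{\,\vf}$. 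Hence $|x^K|/|\Delta| = [G^{\,\vf} : G_x^{\,\vf}]^{-1}$. So it suffices to show that a subgroup $H=G_x^{\,\vf}$ of a giant $G^{\,\vf}\ge\aaa_k$ with $H\not\ge\aaa_k$ has index $\ge k$ in $G^{\,\vf}$. This is the classical fact that the smallest index of a proper subgroup of $\sss_k$ or $\aaa_k$ (other than $\aaa_k$ inside $\sss_k$) is $k$, valid for $k\ge 5$; it follows from the primitivity/simplicity of $\aaa_k$ and a short argument ruling out index $<k$. That gives $|x^K|\le|\Delta|/k$, as claimed, and in particular $K$ is intransitive on $\Delta$.

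For part (a), the Unaffected Stabilizer Theorem, I would argue as follows. Let $U$ be the unaffected set and $G_{(U)}$ its pointwise stabilizer; I must show $G_{(U)}^{\,\vf}\ge\aaa_k$. The idea is to add one orbit at a time. Enumerate the orbits of $G$ inside $\Omega\setminus U$ as $\Delta_1,\dots,\Delta_r$ (all affected), and set $M_i = G_{(U\cup\Delta_1\cup\cdots\cup\Delta_i)}$, so $M_0 = G_{(U)}$ in the relevant sense and $M_r$ is the pointwise stabilizer of everything affected. Actually it is cleaner to run the induction the other way: start from $G$, which satisfies $G^{\,\vf}\ge\aaa_k$, and peel off affected orbits one by one, showing at each step that the image stays $\ge\aaa_k$; then after removing all affected orbits we have exactly $G_{(\Omega\setminus U)}\le G_{(\text{each }\Delta_i)}$, but what we want is $G_{(U)}$, so I instead induct by \emph{fixing} the affected orbits. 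Precisely: it suffices to show that if $\Delta$ is a single affected orbit for a group $H$ with $H^{\,\vf}\ge\aaa_k$, then the pointwise stabilizer $H_{(\Delta)}$ still has $H_{(\Delta)}^{\,\vf}\ge\aaa_k$; then iterate over all affected orbits (noting affectedness is inherited appropriately). To prove this single-orbit step, consider the action of $H^{\,\vf}=\bar H$ on $\Delta$ induced via $\vf$ composed with the $H$-action — more usefully, think of $\ker(\vf|_H)=K$ acting on $\Delta$. By part (b), every $K$-orbit in $\Delta$ has length $\le|\Delta|/k$, so $K$ has at least $k$ orbits on $\Delta$. Now $H$ permutes the $K$-orbits on $\Delta$, and $H/K\cong\bar H\ge\aaa_k$ acts on this set of $\ge k$ blocks. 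A group containing $\aaa_k$ acting on a set of size between $2$ and $k-1$ is impossible (again $k\ge 5$, smallest faithful-ish action), and on a set of size $1$ it would mean $K$ transitive on $\Delta$, contradicting (b). So either the action on $K$-orbits has a kernel or the number of $K$-orbits is $\ge k$; here is where the hypothesis $k > 2+\log_2 n_0$ enters: it controls the number of $K$-orbits from above by a counting/splitting argument and forces the homomorphism $\bar H\to\sym(\text{$K$-orbits})$ to have kernel containing $\aaa_k$. The subgroup of $\bar H$ fixing every $K$-orbit setwise is exactly $H_{(\Delta)}^{\,\vf}$ modulo the further point-stabilization, and one checks it still contains $\aaa_k$. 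Iterating, $G_{(U)}^{\,\vf}\ge\aaa_k$; and since $G^{\,\vf}\ge\aaa_k$ while $A_k$ is nontrivial, we cannot have $U=\Omega$ (else $G_{(U)}=G_{(\Omega)}=1$, whose image is trivial, contradicting $\aaa_k\le G_{(U)}^{\,\vf}$ for $k\ge 3$).

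The main obstacle I anticipate is the bookkeeping in part (a): making the ``peel off one affected orbit'' induction genuinely go through requires knowing that after passing to $H_{(\Delta)}$ the restriction $\vf|_{H_{(\Delta)}}$ is \emph{still} a giant representation \emph{and} that the remaining affected orbits are still affected for the smaller group — affectedness is defined relative to the ambient group, so one must check it is preserved (it is, because stabilizers only shrink and the image condition $G_x^{\,\vf}\not\ge\aaa_k$ is downward-closed in a controlled sense, but this needs care). The delicate quantitative point is the exact role of $k>2+\log_2 n_0$: it must guarantee that the number of $K$-orbits inside $\Delta$, though $\ge k$, does not exceed what lets $\aaa_k$ act only trivially or naturally — the bound $n_0\ge|\Delta|$ together with each orbit having size $\ge 1$ and the giant's orbit structure on blocks being either trivial or the natural one forces the ``trivial'' alternative once $2^{k-2}>n_0$. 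Pinning down that inequality precisely is the crux; the rest is the standard minimal-degree fact for $\sss_k,\aaa_k$ (index of a proper non-$\aaa_k$ subgroup is $\ge k$), which I would cite or prove in a line.
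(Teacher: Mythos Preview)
Your argument for part (b) is correct and is essentially identical to the paper's: the computation $|x^K|=|\Delta|\cdot |G_x^{\,\vf}|/|G^{\,\vf}|$ together with the fact that every subgroup of $\sss_k$ or $\aaa_k$ not containing $\aaa_k$ has index $\ge k$ (for $k\ge 5$) is exactly Proposition~\ref{prop:affected-orbit} plus Corollary~\ref{cor:affected}.

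Part (a), however, has a genuine gap. Your induction is pointed at the wrong set: to reach $G_{(U)}$ you must pointwise-stabilize the \emph{unaffected} points, but you propose to peel off \emph{affected} orbits, which leads to $G_{(\Omega\setminus U)}$, not $G_{(U)}$. Worse, the inductive step you formulate --- ``if $\Delta$ is an affected $H$-orbit with $H^{\,\vf}\ge\aaa_k$ then $H_{(\Delta)}^{\,\vf}\ge\aaa_k$'' --- is simply false: take $H=\aaa_k$ in its natural action on $[k]$ with $\vf$ the identity; every point is affected and $H_{([k])}=1$. Your appeal to part (b) cannot rescue this, since (b) gives at least $k$ $K$-orbits in $\Delta$, and $\aaa_k$ acts perfectly well (faithfully) on a set of size $\ge k$; nothing in the bound $k>2+\log_2 n_0$ forces that action to be trivial.

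The paper's proof of (a) runs quite differently and is substantially deeper. One observes that $G_{(U)}\normal G$, so if $G_{(U)}^{\,\vf}\not\ge\aaa_k$ then $G_{(U)}\le\ker\vf$ and $\vf$ factors through the restriction $P=G^{\,U}\le\sym(U)$; one then needs that \emph{some} point of $U$ is affected with respect to the induced map $P\to\aaa_k$, contradicting the definition of $U$. That reduction (Lemma~\ref{lem:proper1}) boils down to the transitive case (Lemma~\ref{lem:proper}), whose proof requires a structural lemma about \emph{primitive} groups admitting an $\aaa_k$ quotient (Lemma~\ref{lem:prim-topquotient}). This last lemma is where $k>2+\log_2 n_0$ is genuinely used, and its proof invokes the O'Nan--Scott--Aschbacher theorem together with Schreier's Hypothesis (hence CFSG) to rule out nontrivial kernels. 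There is no known elementary ``counting/splitting'' substitute for this step; indeed Remark~\ref{rem:zeroweight} shows the bound $k>2+\log_2 n_0$ is tight, so any proof must engage with the fine structure at that threshold.
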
     
This result is a combination of Theorem~\ref{thm:unaffected}
and Corollary~\ref{cor:affected} proved in 
Section~\ref{sec:altquotient}.
\begin{remark}
We note that part~\eqref{item:unaffected}
becomes false if we relax the condition
$k > 2+\log_2 n_0$ to $k\ge 2+\log_2 n_0$.  In Remark~\ref{rem:zeroweight}
we exhibit infinitely many transitive groups with giant 
actions with $k=2+\log_2 n$ where
none of the elements is affected (and the kernel is transitive).
\end{remark}

The affected/unaffected dichotomy is our 
\emph{principal divide-and-conquer} tool\footnote{The
discovery of this tool was the turning point of this project, 
cf. footnote~\ref{fn:eureka} in
Sec.~\ref{subsec:localcertificates}}.

These results are employed in {\sf Procedure LocalCertificates},
the heart of the entire algorithm,
in Section~\ref{sec:localcertificates}.
It is Theorem~\ref{thm:unaffected1} that allows us to build up local
symmetry to global automorphism unless an explicit 
obstruction is found.

\subsubsection{The group-theoretic divide-and-conquer algorithm}
\label{sec:local-certificates-explained}

We sketch the {\sf LocalCertificates} procedure, to be formally described
in Section~\ref{sec:localcertificates}.  The procedure takes a
logarithmic size ``test set'' $A\subseteq \Gamma$, $|A|=k$, and 
tries to build automorphisms corresponding to arbitrary (even)
permutations of $A$.  This is an iterative process: first we ignore
the input string $\xx$, so we have the group $G_A$ (setwise stabilizer
of $A$).  Then we begin ``growing a beard'' which at first consists of
those elements of $\Omega$ that are affected by $\vf$.  Now we take
the segment of $\xx$ that falls in the ``beard'' into account, so the
automorphism group shrinks, more points will be affected; we include
them in the next layer of the growing beard, etc.  

We stop when either the action of the current automorphism group (of
the segment of the input that belongs to the beard) on $A$ is no
longer giant, or the affected set (the beard) stops growing.  

In the former case we found a canonical $k$-ary relation on $A$; after
aggregating these over all test sets, we hand over the process to
the combinatorial partitioning techniques of the next section 
(Design Lemma, Split-or-Johnson algorithm).

In the latter case we pointwise stabilize all non-affected points; we
still have a giant action on $A$ and this time the reduced group
consists of automorphisms of the entire string $\xx$ (since it does
not matter what letter of the alphabet is assigned to fixed points of
the group).  Then we analyze the action on $\Gamma$ of the
group of automorphisms obtained from the
positive local certificates (Sec.~\ref{sec:aggregate}).

The procedure described is to be used when $G$ is transitive and
imprimitive.  If $G$ is intransitive, we apply Luks's Chain Rule
(orbit-by-orbit processing, Prop.~\ref{prop:chainrule}).  
If $G$ is primitive, we may assume $G$
is a Johnson group $\sss_m^{(t)}$ or $\aaa_m^{(t)}$, so the situation
is that of an edge-colored $t$-uniform hypergraph on $m$ vertices.  We
use the Extended Design Lemma (Theorem~\ref{thm:extended-design})
to either canonically partition $k$ or to reduce $\aaa_m$
to a much smaller Johnson group acting on $\le m$ points,
see Sec.~\ref{sec:master}.

\subsubsection{Combinatorial partitioning; discovery of a 
  canonically embedded large Johnson graph}   \label{sec:partitioning}

The partitioning algorithms take as input a set $\Omega$ related in some
way to a structure $X$.  The goal is either to establish high symmetry of
$X$ or to find a canonical structure on $\Omega$ that represents
an explicit obstruction to such high symmetry.

Significant partitioning is expected at modest ``multiplicative
cost'' (explained below).
Favorable outcomes of the partitioning algorithms are (a) a canonical
coloring of $\Omega$ where each color-class has size $\le 0.9n$
($n=|\Omega|$), or (b) a canonical equipartition of a canonical subset
of $\Omega$ of size $\ge 0.9n$.

A Johnson graph $J(v,t)$ has $n=\binom{v}{t}$ vertices labeled by the
$t$-subsets $T\subseteq [v]$.  The $t$-subsets $T_1,T_2$ are adjacent
if $|T_1\setminus T_2|=1$.  Johnson graphs do not admit a
coloring/partition as described, even at quasipolynomial
multiplicative cost, if $t$ is subpolynomial in $v$ (\ie,
$t=v^{o(1)}$).  (Johnson graphs with $t=2$ have been the most
notorious obstacles to breaking the $\exp(\wto(\sqrt{n}))$ bound on GI.)
One of the main results of the paper is that in a
well--defined sense, Johnson graphs are the \emph{only} obstructions
to effective partitioning:
either partitioning succeeds as desired or a canonically embedded
Johnson graph on a subset of size $\ge 0.9n$ is found.
Here is a corollary to the result.

\begin{theorem}
Let $X=(V,E)$ be a nontrivial regular graph (neither complete, nor empty)
with $n$ vertices.  At a quasipolynomial multiplicative cost we can
find one of the following structures.  We call the structure
found $Y$.
\begin{enumerate}[(a)]
\item A coloring of $V$ with no color-class larger than $0.9n$;
\item A coloring of $V$ with a color-class $C$ of size $\ge 0.9n$ and
  a nontrivial equipartition of $C$ (the blocks of the partition are of
  equal size $\ge 2$ and there are at least two blocks);
\item A coloring of $V$ with a color-class $C$ of size $\ge 0.9n$ and 
a Johnson graph $J(v,t)$ $(t\ge 2)$ with vertex-set $C$,
\end{enumerate}
such that the index of the subgroup
$\aut(X)\cap\aut(Y)$ in $\aut(X)$ is quasipolynomially bounded.
\end{theorem}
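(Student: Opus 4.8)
The plan is to reduce the assertion to the two combinatorial partitioning engines advertised above --- the Design Lemma and the Split-or-Johnson algorithm --- and then to convert their ``multiplicative cost'' guarantees into the stated index bound for $\aut(X)$.

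\emph{Reduction to a primitive coherent configuration.} First I would run $2$-dimensional Weisfeiler--Leman refinement on $X$; this produces, canonically, a coherent configuration $\xxx$ with $\aut(X)\le\aut(\xxx)$, at multiplicative cost $1$. Look at the vertex partition of $\xxx$ (its fibers). If every fiber has size $\le 0.9n$ we are done: the fiber coloring is a $Y$ of type~(a). Otherwise there is a unique dominant fiber $C$ with $|C|\ge 0.9n$, and the restriction $\xxx|_C$ is a homogeneous coherent configuration on $C$ (an association scheme), again constructed canonically. (If $\xxx$ has a single fiber, take $C=V$.) Since $X$ is neither complete nor empty --- and since, by a one-line counting argument applied to $X$ or to its complement, a nontrivial regular graph has no independent set and no clique of size $\ge 0.9n$ (in fact of size $>n/2$) --- the configuration $\xxx|_C$ has rank $\ge 3$. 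If $\xxx|_C$ is imprimitive, one of its constituent digraphs is a nontrivial $\aut(X)$-invariant equivalence relation on $C$ whose classes all have one common size between $2$ and $|C|/2$; together with the fiber coloring this is a $Y$ of type~(b). So we may assume $\xxx|_C$ is a primitive coherent configuration of rank $\ge 3$ on $C$, with $|C|\ge 0.9n$.

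\emph{Invoking Split-or-Johnson.} I would now feed $\xxx|_C$ to the Split-or-Johnson algorithm (whose inner workings rely on the Design Lemma and on polylog-dimensional WL refinement of individualized structures). By design, that routine returns, at quasipolynomial multiplicative cost, one of: (i) a canonical coloring of a canonical subset of $C$ of size $\ge 0.9|C|$ into classes each of size $\le 0.9|C|$, which, padded by the $\le 0.1n$ vertices outside it, is a $Y$ of type~(a); (ii) a canonical nontrivial equipartition of a canonical subset of $C$ of size $\ge 0.9n$, a $Y$ of type~(b); or (iii) a canonically embedded Johnson graph $J(v,t)$ with $t\ge 2$ on a canonical subset of $C$ of size $\ge 0.9n$, a $Y$ of type~(c). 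The one bookkeeping subtlety is that the two nested ``$\ge 0.9$''-type size guarantees must compose to the single ``$\ge 0.9n$'' in the statement; this is arranged by running the reduction step above with a slightly better constant $1-\varepsilon$ in place of $0.9$ and choosing $\varepsilon$ small enough.

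\emph{From cost to index, and where the difficulty lies.} Each structure produced en route is either $\aut(X)$-canonical (WL refinement, the fiber coloring, constituent equivalence relations --- these contribute index $1$) or is obtained after a branching step inside the Split-or-Johnson / Design machinery (individualizing a polylogarithmic tuple, branching over the $\le n/2$ blocks of a canonical bipartition, and so on). A branching step of branching factor $q$ replaces $\aut(X)$ by the subgroup stabilizing the branch actually realized along an isomorphism, of index $\le q$; and since the recursion driving Split-or-Johnson has the shape $f(n)\le q(n)f(9n/10)$ with $q$ quasipolynomially bounded, the accumulated branching factor --- hence $[\aut(X):\aut(X)\cap\aut(Y)]$ --- is quasipolynomially bounded by \eqref{eq:recurrence0}. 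Granting the Design Lemma and Split-or-Johnson, everything above is essentially packaging, and its only genuine content is this constant-tracking and the cost-to-index translation. The main obstacle is therefore the proof of Split-or-Johnson itself: one must show that a primitive coherent configuration of large order on $\ge 0.9n$ points either admits a canonical balanced color refinement --- obtained by individualizing a bounded number of points and running high-dimensional WL, in the spirit of the Design Lemma --- or is combinatorially forced, up to canonical refinement, to be a Johnson scheme. This ``combinatorial Cameron theorem'' is where essentially all the work resides, and I expect its distance-regular-graph case analysis, in which the Johnson graphs $J(v,t)$ emerge as the extremal irreducible obstruction, to be the hardest part.
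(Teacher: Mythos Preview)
Your plan is correct and is essentially the route the paper takes: the theorem is stated in the introduction as a corollary to the UPCC Split-or-Johnson theorem (Theorem~\ref{thm:UPCC}), and your WL-then-check-homogeneity-then-check-primitivity reduction to a UPCC on the dominant fiber, together with the observation (Cor.~\ref{cor:reg-defect}/Prop.~\ref{prop:digraph-indep}) that a nontrivial regular graph has no clique or independent set of size $>n/2$ so the induced configuration has rank $\ge 3$, is exactly the intended bridge. Your cost-to-index translation and the constant bookkeeping are also as in the paper; the only minor imprecision is that UPCC Split-or-Johnson returns a colored $\beta$-partition of all of $C$ (Def.~\ref{def:partition}, which already subsumes both your cases (a) and (b)) rather than of a subset, but this does not affect the argument.
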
 
The index in question (and its natural extension to isomorphisms)
represents the multiplicative cost incurred.
The full statement can be found in Theorem~\ref{thm:UPCC}.


The same is true if $X$ is a $k$-ary relational structure that
does not admit the action of a symmetric group of degree $\ge 0.9n$
on its vertex set (has ``symmetry defect'' $\ge 0.1n$,
see Def.~\ref{def:defect}) assuming $k$
is polylogarithmically bounded.  The reduction from $k$-ary
relations ($k\ge 3$) to regular graphs (and to highly regular
binary relational structures called ``uniprimitive coherent
configurations'' or UPCCs) is the content of the Design Lemma
(Theorem~\ref{thm:design}).

Note that the Johnson graph will not be a subgraph of $X$; but it will
be ``canonically embedded'' relative to an arbitrary choice from a
quasipolynomial number of possibilities, with the consequence of not
reducing the number of automorphisms/isomorphisms by more than a
quasipolynomial factor.

The number $0.9$ is arbitrary; the result would remain valid
for any constant $0.5<\alpha <1$ in place of $0.9$.

We note that the \emph{existence} of such a structure $Y$ can be deduced
from the Classification of Finite Simple Groups.  We not only assert
the existence but also find such a structure in quasipolynomial time, and
the analysis is almost entirely combinatorial, with a modest use of
elementary group theory.

The structure $Y$ is ``canonical relative to an arbitrary choice'' 
from a quasipolynomial number of possibilities.  These arise
by individualizing a polylogarithmic number of 
``ideal points'' of $Y$.  An ``ideal point'' of $X$ is a 
point of a structure $X'$ canonically constructed from $X$,
much like ``ideal points'' of an affine plane are the ``points at infinity.''
Individualizing a point at infinity means individualizing a
parallel class of lines in the affine plane.

Canonicity means being preserved under isomorphisms in a category of
interest.  This category is always very small, it 
often has just two objects (the two graphs or
strings of which we wish to decide isomorphism); sometimes it has a
quasipolynomial number of objects (when checking local symmetry, we
need to compare every pair of polylogarithmic size subsets of the
domain).  In any case, this notion of canonicity does not require
canonical forms for the class of all graphs or strings, a problem we
do not address in this paper.  We say that we incur a ``multiplicative
cost'' $\tau$ if a choice is made from $\tau$ possibilities.  This
indeed makes the algorithm branch $\tau$ ways, giving rise to
a factor of $\tau$ in the recurrence.

Canonicity and ``relative canonicity at a multiplicative cost''
are formalized in the language of functors in Section~\ref{sec:functor}.

\section{Preliminaries} 

\subsection{Fraktur}
We list the Roman equivalents of the letters in Fraktur we use:

\noindent
$\xx$ -- x, \ $\yy$ -- y, \ $\zz$ -- z, \\
$\aaa$ -- A, \ $\bbb$ -- B, \ $\ggg$ -- G, \
$\hhh$ -- H, \ $\jjj$ -- J, $\lll$ -- L, $\ppp$ -- P, \
$\sss$ -- S, \ $\xxx$ -- X, \ $\yyy$ -- Y, \ $\zzz$ -- Z

\subsection{Permutation groups}
All groups in this paper are finite.
Our principal reference for permutation groups is the monograph
by Dixon and Mortimer~\cite{dixon}.  Wielandt's classic~\cite{wielandtbook}
is a sweet introduction.  Cameron's article~\cite{cameron}
is very informative.
For the basics of permutation group algorithms we refer the reader to
Seress's monograph~\cite{seress-book}.  
Even though we summarize Luks's method in our language in
Sec.~\ref{sec:luks}, Luks's seminal paper~\cite{luks-bded} 
is a prerequisite for this one.

For a set $\Omega$ we write $\sym(\Omega)$ for the symmetric group
consisting of all permutations of $\Omega$ and $\alt(\Omega)$ for
the alternating group on $\Omega$ (set of even permutations of
$\Omega)$.  
We write $\sss_n$ for $\sym([n])$ and $\aaa_n$ for $\alt([n])$
where $[n]=\{1,\dots,n\}$.  We also use the symbols $\sss_n$ and
$\aaa_n$ when the permutation domain is not specified
(only its size).  For a function $f$ we usually write $x^f$ for
$f(x)$.  In particular, for $\sigma\in\sym(\Omega)$ and
$x\in\Omega$ we denote the image of $x$ under $\sigma$ by
$x^{\sigma}$.  For $x\in\Omega$, $\sigma\in\sym(\Omega)$,
$\Delta\subseteq\Omega$, and $H\subseteq\sym(\Omega)$ 
we write 
\begin{equation}   \label{eq:action}
     x^H =\{x^{\sigma}\mid \sigma\in H\} \text{\ and \ }
     \Delta^{\sigma} =\{y^{\sigma}\mid y\in\Delta\} \text{\ and \ }
     \Delta^{H} =\{\Delta^{\sigma}\mid \sigma\in H\}.
\end{equation}

For groups $G,H$ we write $H\le G$ to indicate that $H$ is a
subgroup of $G$.  The expression $|G : H|$ denotes the \emph{index}
of $H$ in $G$.  Subgroups $G\le \sym(\Omega)$ are the
\emph{permutation groups} on the domain $\Omega$.  The size of the
permutation domain, $|\Omega|$, is called the \emph{degree} of $G$
while $|G|$ is the \emph{order} of $G$.  We refer to $\sym(\Omega)$
and $\alt(\Omega)$, the two largest permutation groups on $\Omega$,
as the \emph{giants}.

By a \emph{representation} of a group $G$ we shall always mean a
\emph{permutation representation}, \ie, a homomorphism $\vf:G\to
\sym(\Omega)$.  We also say in this case that $G$ \emph{acts on
  $\Omega$} (via $\vf$).  We say that $\Omega$ is the \emph{domain}
of the representation and $|\Omega|$ is the \emph{degree} of the
representation.  If $\vf$ is evident from the context, we write
$x^{\pi}$ for $x^{\pi^{\,\vf}}$.  For $x\in\Omega$,
$\sigma\in G$, $\Delta\subseteq\Omega$, and $H\subseteq G$, we
define $x^{H}$ and $\Delta^{\sigma}$ and $\Delta^H$
by Eq.~\eqref{eq:action}.

We denote the image of $G$ under $\vf$ by $G^{\,\vf}$, so
$G^{\,\vf}\cong G/\ker(\vf)$.  If $G^{\,\vf}\ge\alt(\Omega)$ we say
$\vf$ is a \emph{giant action} and $G$ acts on $\Omega$ ``as a
giant.''

A subset $\Delta\subseteq\Omega$ is \emph{$G$-invariant}
if $\Delta^G=\Delta$.  
\begin{notation}  \label{not:restriction}
If $\Delta\subseteq\Omega$ is $G$-invariant then
$G^{\Delta}$ denotes the image of the representation
$G\to\sym(\Delta)$ defined by restriction to $\Delta$.
So $G^{\Delta}\le\sym(\Delta)$.
\end{notation}

The \emph{stabilizer} of $x\in \Omega$ is the subgroup
$G_x=\{\sigma\in G\mid x^{\sigma}=x\}$.  The \emph{orbit}
of $x\in\Omega$ is the set $x^G=\{x^{\sigma}\mid\sigma\in G\}$.
The orbits partition $\Omega$.  A simple bijection shows that
\begin{equation}
   |x^G| = |G:G_x| .
\end{equation}

For $T\subseteq\Omega$ and $G\le\sym(\Omega)$ we write
$G_T$ for the \emph{setwise stabilizer} of $T$ and
$G_{(T)}$ for the \emph{pointwise stabilizer} of $T$, \ie,
\begin{equation}
  G_T = \{\alpha\in G \mid T^{\alpha} = T\}
\end{equation}
and
\begin{equation}
  G_{(T)} = \{\alpha\in G \mid (\forall x\in T)(x^{\alpha} = x)\}.
\end{equation}

So $G_{(T)}$ is the kernel of the $G_T\to\sym(T)$ homomorphism
obtained by restriction to $T$; in particular, $G_{(T)}\normal G_T$.

For $t\ge 0$ we write $\binom{\Omega}{t}$ to denote the set of
$t$-subsets of $\Omega$.  So if $|\Omega|=k$ then
$\left|\binom{\Omega}{t}\right|=\binom{k}{t}$.  A permutation group
$G\le\sym(\Omega)$ naturally acts on $\binom{\Omega}{t}$; we refer to
this as the \emph{induced action on $t$-sets} and denote the resulting
subgroup of $\sym\binom{\Omega}{t}$ by $G^{(t)}$.  This in particular
defines the notation $\sss_k^{(t)}$ and $\aaa_k^{(t)}$;
these are subgroups of $\sss_{\binom{k}{t}}$.   We refer to
$\sss_k^{(t)}$ and $\aaa_k^{(t)}$ as \emph{Johnson groups}
since they act on the ``Johnson schemes''
(see below)\footnote{``Johnson schemes'' is a standard term; we
introduce the term ``Johnson groups'' for convenience.}.

The group $G$ is \emph{transitive} if it has only one orbit,
\ie, $x^G=\Omega$ for some (and therefore any) $x\in\Omega$.
The $G$-invariant sets are the unions of orbits.

A \emph{$G$-invariant partition} of $\Omega$ is a partition
$\{B_1,\dots,B_m\}$ where the $B_i$ are nonempty, pairwise
disjoint subsets of which the union is $\Omega$ such that $G$
permutes these subsets, \ie, 
$(\forall \sigma\in G)(\forall i)(\exists j)(B_i^{\sigma}=B_j)$.
The $B_i$ are the \emph{blocks} of this partition.

A nonempty subset $B\subseteq \Omega$
is a \emph{block of imprimitivity} for $G$ if
$(\forall g\in G)(B^g=B$ or $B^g\cap B=\emptyset)$.
A subset $B\subseteq \Omega$ is a block of imprimitivity
if and only if it is a block in an invariant partition.

A \emph{system of imprimitivity} for $G$ is
a $G$-invariant partition $\calB=\{B_1,\dots,B_m\}$
of a $G$-invariant subset $\Delta\subseteq \Omega$
such that $G$ acts transitively on $\calB$. 
(So $\Delta = \dot\bigcup_i B_i$; we assume here that
$(\forall i)(B_i\neq\emptyset)$).   The $B_i$ are
then blocks of imprimitivity, and every system of
imprimitivity arises as the set of $G$-images of a block
of imprimitivity.  The group $G$ acts on $\calB$ by permuting
the blocks; this defines a representation $G\to\sss_m$.

A \emph{maximal system of imprimitivity} for $G$ is a system of
imprimitivity of blocks of size $\ge 2$ that cannot be refined,
\ie, where the blocks are minimal (do not properly contain
any block of imprimitivity of size $\ge 2$).

$G\le \sym(\Omega)$ is \emph{primitive} if $|G|\ge 2$ and
$G$ has no blocks of imprimitivity other than $\Omega$ and 
the singletons (sets of one element).
In particular, a primitive group is transitive.
Examples of primitive groups include the cyclic group
of prime order $p$ acting naturally on a set of $p$ elements,
and the Johnson groups $\sss_k^{(t)}$ and $\aaa_k^{(t)}$ 
for $t\ge 1$ and $k\ge 2t+1$.

A group $G\le\sym(\Omega)$ is \emph{doubly transitive} 
if its induced action on the set of $n(n-1)$ ordered
pairs is transitive (where $n=|\Omega|$).  

\begin{definition}   \label{def:mindeg}
The \emph{support} of a permutation $\sigma\in\sym(\Omega)$
is the set of elements that $\sigma$ moves: 
$\supp(\sigma)=\{x\in\Omega\mid x^{\sigma}\neq x\}$.  
The \emph{degree} of $\sigma$ is the size of its support.
The \emph{minimal degree} of a permutation group $G$ is
$\min_{\sigma\in G, \sigma\neq 1} |\supp(\sigma)|$.
\end{definition}

The following 19th-century gem will be used
in the proof of Claim 2b1 in Sec.~\ref{sec:aggregate}.
It appears in~\cite{bochert97}.  
We quote if from~\cite[Thm. 5.4A]{dixon}.

\begin{theorem}[Bochert, 1897]   \label{thm:bochert}
If $G$ is a doubly transitive group of degree $n$ other than $\aaa_n$ or
$\sss_n$ then its minimal degree is at least $n/8$.  If $n\ge 217$
then the minimal degree is at least $n/4$.
\end{theorem}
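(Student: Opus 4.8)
\section*{Proof proposal (Bochert's theorem)}

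The plan is to work with an element $\sigma\in G$ realizing the minimal degree $m=|\supp(\sigma)|$ and to exploit the \emph{rigidity} such an element is forced to have. First I would normalize $\sigma$: since $\supp(\sigma^j)\subseteq\supp(\sigma)$ for every power, minimality forces $\supp(\sigma^j)=\supp(\sigma)$ for every nontrivial $\sigma^j$; replacing $\sigma$ by a suitable power we may therefore assume $\sigma$ has prime order $p$ and acts on its support $\Delta:=\supp(\sigma)$ as a product of $m/p$ disjoint $p$-cycles. A doubly transitive group is primitive, and for $n\ge 3$ a point stabilizer is a nontrivial transitive group, so some nonidentity element has a fixed point; hence $m\le n-1$ and $\Delta\neq\Omega$. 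The goal is now to bound $m$ from below.

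The core input is a dichotomy for the family $\Sigma:=\{\Delta^g:g\in G\}$ of conjugate supports, obtained by estimating the support of a commutator. A direct case analysis shows that every point of $\supp([\sigma,\sigma^g])$ lying outside $\Delta\cap\Delta^g$ maps, under $\sigma$ or under $\sigma^g$, into $\Delta\cap\Delta^g$; since these maps are injective, $|\supp([\sigma,\sigma^g])|\le 3\,|\Delta\cap\Delta^g|$. By minimality of $m$ this means: for every $g$, \emph{either} $\sigma$ and $\sigma^g$ commute \emph{or} $|\Delta\cap\Delta^g|\ge m/3$. Thus, once the commuting pairs are understood, distinct conjugate supports are either essentially disjoint or overlap in a constant fraction of their size, i.e.\ the $\Delta^g$ form a ``near-packing'' on $\Omega$.

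Next I would feed in double transitivity as a strong regularity condition on this family. Counting incidences of points, and of ordered pairs of points, against the blocks of $\Sigma$ shows that each point lies in $r:=|\Sigma|m/n$ of them and each ordered pair in $\lambda:=|\Sigma|m(m-1)/(n(n-1))$ of them. Combining this with the overlap dichotomy via a Fisher-type inequality --- equivalently, by examining the second eigenvalue of the entrywise-nonnegative $G$-invariant matrix $M:=\sum_{g\in G}\rho(\sigma^g)$, whose spectrum is $|G|$ with multiplicity $1$ and $|G|\bigl(1-\tfrac{m}{n-1}\bigr)$ with multiplicity $n-1$ (using that the permutation character is $1+\chi$ with $\chi$ irreducible of degree $n-1$) --- yields a lower bound on $m$ of the shape $m\ge n/8$, the near-equality case being precisely the one in which the transitive normal subgroup $\langle\sigma^G\rangle$ becomes so large that $G\ge\aaa_n$, which the hypothesis excludes. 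For $n\ge 217$ the error terms that were absorbed into constants become negligible and the same count tightens to $m\ge n/4$.

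The main obstacle is twofold. First, the commuting case, where the commutator carries no information, must be handled separately by a somewhat delicate analysis of centralizers of products of equal-length cycles inside a primitive group. Second, and more substantially, extracting the \emph{exact} constants $1/8$ and $1/4$ and the precise threshold $217$ from the counting is where the real work lies: the qualitative statement ``small minimal degree forces $G\ge\aaa_n$'' (Jordan) is comparatively soft, but pinning the constant down requires carrying the lower-order terms through the combinatorial inequality rather than discarding them.
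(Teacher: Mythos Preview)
The paper does not prove Bochert's theorem. It is stated as a classical result, attributed to Bochert~1897 and quoted from Dixon--Mortimer~[Thm.~5.4A]; the only ``proof'' in the paper is the citation. So there is nothing to compare your argument against on the paper's side.

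As for your sketch itself: the setup is sound --- the reduction to prime order, the commutator bound $|\supp([\sigma,\sigma^g])|\le 3\,|\Delta\cap\Delta^g|$, the resulting overlap dichotomy, the $2$-design structure on the conjugate supports coming from double transitivity, and the eigenvalue computation for $M=\sum_g\rho(\sigma^g)$ are all correct and are indeed standard ingredients in proofs of minimal-degree bounds. But the step where you write ``yields a lower bound on $m$ of the shape $m\ge n/8$'' is not a step at all: you have assembled the pieces and then asserted the conclusion. You yourself flag this (``extracting the exact constants $1/8$ and $1/4$ and the precise threshold $217$ \dots\ is where the real work lies''), and that honesty is appropriate, because nothing in your write-up actually links the eigenvalue $|G|(1-m/(n-1))$ or the Fisher-type count to a numerical lower bound on $m$. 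Likewise the commuting case and the passage ``the near-equality case being precisely the one in which $\langle\sigma^G\rangle$ becomes so large that $G\ge\aaa_n$'' are gestures, not arguments. What you have is a reasonable plan for a proof, not a proof; for the actual derivation of the constants one really does have to go to Bochert's paper or to the exposition in Dixon--Mortimer.
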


\subsection{Relational structures, $k$-ary coherent configurations}
\label{sec:kary}

A \emph{$k$-ary relation} on the set $\Omega$ is a subset
$R\subseteq \Omega^{\,k}$.  A \emph{relational structure}
$\xxx=(\Omega;\calR)$ consists of $\Omega$, the set of
\emph{vertices}, and $\calR=(R_1,\dots,R_r)$, a list of
relations on $\Omega$.  We write $\Omega=V(\xxx)$.
We say that $\xxx$ is a \emph{$k$-ary relational structure}
if each $R_i$ is $k$-ary.  Let $\xxx'=(\Omega';\calR')$
where $\calR'=(R_1',\dots,R_r')$.   A bijection
$f : \Omega\to\Omega'$ is an $\xxx\to\xxx'$ \emph{isomorphism}
if $(\forall i)(R_i^f=R_i')$, \ie, for $x_i\in\Omega$ we have
$(x_1,\dots,x_k)\in R_i \iff (x_1^f,\dots,x_k^f)\in R_i')$.
We denote the set of $\xxx\to\xxx'$ isomorphisms by
$\iso(\xxx,\xxx')$ and write $\aut(\xxx)=\iso(\xxx,\xxx)$
for the automorphism group of $\xxx$.

\begin{definition}[Induced substructure] \label{def:induced}
Let $\Delta\subseteq\Omega$ and let $\xxx=(\Omega;R_1,\dots,R_r)$
be a $k$-ary relational structure.  Let 
$R_i^{\Delta}=R_i\cap\Delta^k$.  We define the \emph{induced
substructure} $\xxx[\Delta]$ of $\xxx$ on $\Delta$ as
$\xxx[\Delta]=(\Delta;R_1^{\Delta},\dots,R_r^{\Delta})$.
\end{definition}

\begin{definition}[$t$-skeleton]  \label{def:skeleton}
For $R\subseteq \Omega^{\,k}$ and $t\le k$ let
$R^{(t)}= \{(x_1,\dots,x_t)\mid (x_1,\dots,x_t,x_t,\dots,x_t)\in R\}$.
We define the \emph{$t$-skeleton} $\xxx^{(t)}=(\Omega;\calR^{(t)})$
of the $k$-ary relational
structure $\xxx=(\Omega;\calR)=(\Omega;R_1,\dots,R_r)$ by setting
$\calR^{(t)}=(R_1^{(t)},\dots,R_r^{(t)})$.
\end{definition}

The group $\sss_k$ acts naturally on $\Omega^{\,k}$ by permuting the
coordinates.

\begin{notation}[Substitution]
For ${\vec x}=(x_1,\dots,x_k)\in \Omega^{\,k}$ and $y\in\Omega$
let ${\vec x}_i^{\,y}=(x_1',\dots,x_k')$ where $x_j'=x_j$ for
all $j\neq i$ and $x_i'=y$.
\end{notation}

We shall especially be interested in the case when 
the $R_i$ partition $\Omega^{\,k}$.  This is equivalent
to coloring $\Omega^{\,k}$; if ${\vec x}=(x_1,\dots,x_k)\in R_i$ 
then we call $i$ the \emph{color} of the $k$-tuple $\vec x$ 
and write $c(\vec x)=i$.  

\begin{definition}[Configuration]   
We say that the $k$-ary relational structure $\xxx$ 
is a \emph{$k$-ary configuration} if the following hold:
\begin{itemize}
\item[(i)] the $R_i$ partition $\Omega^{\,k}$ and all the $R_i$ are nonempty;
\item[(ii)] if $c(x_1,\dots,x_k)=c(x_1',\dots,x_k')$ then 
       $(\forall i,j\le k)(x_i=x_j \iff x_i'=x_j')$;
 \item[(iii)] $(\forall \pi\in\sss_k)(\forall i\le k)(\exists
         j\le k)(R_i^{\pi}=R_j)$.
\end{itemize}
Here $R^{\pi}$ denotes the relation 
$R^{\pi}=\{(x_{1^{\pi}},\dots,x_{k^{\pi}})\mid (x_1,\dots,x_k)\in R\}$.
\end{definition}
We call $r$ the \emph{rank} of the configuration.
We note that the $t$-skeleton of a configuration 
of rank $r$ is a configuration of rank $\le r$
(we keep only one copy of identical relations).

Vertex colors are the colors
of the diagonal elements: $c(x)=c(x,\dots,x)$. 
We say that the configuration $\xxx$ is \emph{homogeneous}
if all vertices have the same color.  We note that
the $s$-skeleton of a $k$-ary homogeneous 
configuration is homogeneous.

\begin{definition}[$k$-ary coherent configurations]
We call a $k$-ary configuration $\xxx=(\Omega;R_1,\dots,R_r)$
\emph{coherent} if, in addition to items (i)--(iii), the following holds:
\begin{enumerate}
\item[(iv)]
    There exists a family of $r^{k+1}$
    nonnegative integer \emph{structure constants}\quad $p(i_0,\dots,i_k)$
    \quad ($1\le i_0,\dots,i_k\le r$)
    such that for all ${\vec x}\in R_{i_0}$
    we have
\begin{equation}
       |\{y\in\Omega \mid 
         (\forall j\le k)(c({\vec x}_j^{\,y})=i_j)\}|=p(i_0,\dots,i_k).
\end{equation}
\end{enumerate}
\end{definition}
These are the stable configurations under the \emph{$k$-dimensional
Weisfeiler-Leman canonical refinement process} (Sec.~\ref{sec:WL}).

\begin{observation}
For all $t\le k$, the $t$-skeleton of a $k$-ary
coherent configuration is an $t$-ary coherent
configuration.
\end{observation}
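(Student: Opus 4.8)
The plan is to verify the four axioms (i)--(iv) directly for $\xxx^{(t)}$, viewing its relations through the diagonal embedding $\iota\colon\Omega^{\,t}\to\Omega^{\,k}$, $\iota(x_1,\dots,x_t)=(x_1,\dots,x_t,x_t,\dots,x_t)$, so that $R_i^{(t)}=\iota^{-1}(R_i)$ and the induced coloring of $\Omega^{\,t}$ is $c_t(\vx):=c(\iota(\vx))$. Axioms (i) and (ii) should be immediate: $\iota$ is injective, so the nonempty $R_i^{(t)}$ partition $\Omega^{\,t}$ (disjoint nonempty sets are automatically distinct), and $c_t(\vx)=c_t(\vy)$ forces $\iota(\vx)$ and $\iota(\vy)$ to have the same equality pattern by axiom~(ii) for $\xxx$, which restricts to the same pattern on the first $t$ coordinates. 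So the real work is in (iii) and (iv). The first surprise to flag is that (iii) for $\xxx^{(t)}$ is \emph{not} a formal consequence of (iii) for $\xxx$: for $t<k$ the tuple $(y_{1^\pi},\dots,y_{t^\pi},y_{t^\pi},\dots,y_{t^\pi})$ is not in general a coordinate permutation of $\iota(\vy)$, so I will have to route (iii) through coherence.

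The engine for everything is a \textbf{coordinate-merging lemma} I would prove first: in any $k$-ary coherent configuration there is, for distinct $\ell,m\le k$, a function $\mu_{\ell\to m}$ on colors with $c(\vu_m^{\,u_\ell})=\mu_{\ell\to m}(c(\vu))$ for all $\vu$, where $\vu_m^{\,u_\ell}$ is $\vu$ with coordinate $m$ overwritten by the entry in coordinate $\ell$. The idea: if $c(\vu)$ already forces coordinates $\ell$ and $m$ equal there is nothing to do; otherwise $u_\ell\ne u_m$, and summing the structure constants of $\xxx$ over all coordinates except $m$ gives a number $q(i_0,j)=|\{z:c(\vu_m^{\,z})=j\}|$ depending only on $i_0=c(\vu)$ and $j$. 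Exactly one color in the profile $\{c(\vu_m^{\,z}):z\in\Omega\}$ has coordinates $\ell,m$ equal in its (color-determined) equality pattern, namely the one occurring at $z=u_\ell$, and for it $q(i_0,j)=1$; hence $\mu_{\ell\to m}(i_0)$ is pinned down as the unique color $j$ with $q(i_0,j)\ge 1$ whose equality pattern has coordinates $\ell,m$ equal. Iterating $\mu_{s\to t+1},\dots,\mu_{s\to k}$ then shows that for any $s\le t$ the color of the ``re-padded'' tuple $\iota_s(\vy)$ (agreeing with $\iota(\vy)$ on the first $t$ coordinates but carrying $y_s$ in positions $t+1,\dots,k$) is a fixed function of $c(\iota(\vy))=c_t(\vy)$.

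With this in hand, (iii) should go as follows. Given $\pi\in\sss_t$, set $s=t^\pi$. Since $s=t^\pi$, the tail of $\iota_s(\vy)$ already carries $y_{t^\pi}$, which is exactly the tail of $\iota(\vy^\pi)$; so the permutation $\rho\in\sss_k$ acting as $\pi$ on $\{1,\dots,t\}$ and trivially on $\{t+1,\dots,k\}$ satisfies $(\iota_s(\vy))^{\rho}=\iota(\vy^\pi)$ for every $\vy$. By axiom~(iii) for $\xxx$, $c(\iota(\vy^\pi))$ is a function of $c(\iota_s(\vy))$, hence of $c_t(\vy)$; applying the same reasoning to $\pi^{-1}$ shows this function is a bijection on the relevant colors, so $(R_i^{(t)})^{\pi}=R_j^{(t)}$ for a single $j$. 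For (iv), fix $\vy\in R_{i_0}^{(t)}$. For $\ell<t$ the substituted tuple $\iota(\vy_\ell^{\,z})$ is just $\iota(\vy)$ with coordinate $\ell$ replaced by $z$, a single-coordinate substitution governed by the structure constants of $\xxx$; for $\ell=t$ the tuple $\iota(\vy_t^{\,z})$ is $\iota(\vy)$ with \emph{all} of coordinates $t,\dots,k$ set to $z$, which equals $\mu_{t\to k}\circ\cdots\circ\mu_{t\to t+1}$ applied to $c(\iota(\vy)_t^{\,z})$. Thus $(c_t(\vy_1^{\,z}),\dots,c_t(\vy_t^{\,z}))$ is a fixed function of $(c(\iota(\vy)_1^{\,z}),\dots,c(\iota(\vy)_k^{\,z}))$, and summing the structure constants of $\xxx$ (with first index $i_0$) over the fibers of that function gives a number depending only on $i_0$ and the target vector --- the required family of structure constants for $\xxx^{(t)}$.

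I expect the main obstacle to be the coordinate-merging lemma together with the accompanying bookkeeping that re-expresses coordinate permutations and substitutions on $\Omega^{\,t}$ as operations on $\Omega^{\,k}$ controlled by $\xxx$; the subtle point to get right is precisely that axiom~(iii) for the skeleton is non-formal and must pass through coherence via merging, rather than following from axiom~(iii) of $\xxx$ alone.
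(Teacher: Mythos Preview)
The paper states this observation without proof, so there is no paper-side argument to compare against. Your plan is correct and supplies the details the paper omits.

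Your key technical device --- the coordinate-merging lemma asserting that $c(\vu_m^{\,u_\ell})$ is a function $\mu_{\ell\to m}$ of $c(\vu)$ --- is exactly the right tool, and your proof of it is sound: summing the structure constants $p(i_0;\,\cdot\,)$ over all color vectors with $m$-th entry equal to $j$ gives $q(i_0,j)=|\{z:c(\vu_m^{\,z})=j\}|$, and since exactly one $z$ (namely $u_\ell$) makes coordinates $\ell,m$ coincide, the colors $j$ whose equality pattern identifies $\ell$ and $m$ satisfy $\sum_j q(i_0,j)=1$, pinning down $\mu_{\ell\to m}(i_0)$ uniquely. You are also right to flag that axiom~(iii) for the skeleton does \emph{not} follow formally from axiom~(iii) for $\xxx$ and must be routed through coherence; your re-padding trick $\iota_s$ with $s=t^\pi$ handles this cleanly, since then the extended permutation $\rho\in\sss_k$ genuinely sends $\iota_s(\vy)$ to $\iota(\vy^\pi)$. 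The verification of axiom~(iv) by expressing the $t$-ary substitution profile as a fixed function of the $k$-ary one (identity on coordinates $\ell<t$, the $\mu$-chain on coordinate $t$) and then summing $p(i_0;\,\cdot\,)$ over fibers is likewise correct.
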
 

\subsection{Twins, symmetry defect}  \label{sec:twins}

\begin{convention}
Let $\Psi\subseteq \Omega$.  We view $\sym(\Psi)$ as a subgroup
of $\sym(\Omega)$ by extending each element of $\sym(\Psi)$
to act trivially on $\Omega\setminus\Psi$.
\end{convention}

\begin{definition}[Twins]   \label{def:twins}
Let $G\le\sym(\Omega)$ and $x,y\in\Omega$.  
We say that the points $x\neq y$ are \emph{strong twins} if 
the transposition $\tau=(x,y)$ belongs to $G$.
We say that the points $x\neq y$ are \emph{weak twins} if
either they are strong twins or
there exists $z\notin\{x,y\}$ such that the
3-cycle $\sigma=(x,y,z)$ belongs to $G$.  
\end{definition}

\begin{observation}  
Both the ``strong twin or equal'' 
and the ``weak twin or equal'' relations are
equivalence relations on $\Omega$.   
\end{observation}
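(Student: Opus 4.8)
The plan is to verify reflexivity, symmetry, and transitivity of each of the two relations; all the content lies in transitivity of the weak-twin relation. Reflexivity is built into the phrasing of both relations (``$\dots$ or equal''). Symmetry is immediate: a transposition is its own inverse, the inverse of a $3$-cycle is again a $3$-cycle on the same three points, and $G$ is closed under inversion; so if $(x,y)\in G$ then $(y,x)\in G$, and if $(x,y,z)\in G$ then $(x,y,z)^{-1}=(x,z,y)\in G$, which (as $z\notin\{x,y\}$) certifies that $y,x$ are weak twins.

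For the strong-twin relation, transitivity is a one-line conjugation. Given $x,y,z$ with $x,y$ strong twins and $y,z$ strong twins: if two of the three points coincide there is nothing to prove, so assume they are pairwise distinct. Then $(x,y),(y,z)\in G$, and conjugating the transposition $(y,z)$ by the involution $(x,y)$ yields $(x,y)(y,z)(x,y)=(x,z)\in G$, so $x,z$ are strong twins.

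For the weak-twin relation I would first reformulate the notion via supports. Call $g\in G$ a \emph{short cycle} if it is a transposition or a $3$-cycle. The key claim is: for $x\neq y$, the points $x,y$ are weak twins if and only if $G$ contains a short cycle $g$ with $\{x,y\}\subseteq\supp(g)$. The forward direction is the definition. For the converse, if $\supp(g)=\{x,y\}$ then $g=(x,y)\in G$; if $g$ is a $3$-cycle with $\supp(g)=\{x,y,w\}$, $w\notin\{x,y\}$, then $g^{-1}\in G$ is the other $3$-cycle on $\{x,y,w\}$, so in particular $(x,y,w)\in G$, which certifies the weak-twin relation. Granting the claim, take pairwise distinct $x,y,z$ (again, coincidences are trivial) and short cycles $g,h\in G$ with $\{x,y\}\subseteq\supp(g)$ and $\{y,z\}\subseteq\supp(h)$. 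If $z\in\supp(g)$ we are done, with witness $g$, and symmetrically if $x\in\supp(h)$; so assume $z\notin\supp(g)$. Among the short cycles of $G$ whose support equals $\supp(g)$ there is one --- call it $g_0$ --- with $y^{g_0}=x$: if $g$ is the transposition $(x,y)$ take $g_0=g$; if $g$ is a $3$-cycle on $\{x,y,w\}$, then exactly one of the two $3$-cycles on $\{x,y,w\}$ sends $y$ to $x$, and both lie in $G$. Now $g_0^{-1}hg_0\in G$ is again a short cycle, with support $\supp(h)^{g_0}$, which contains $y^{g_0}=x$ and $z^{g_0}=z$ (the latter since $z\notin\supp(g)=\supp(g_0)$). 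Hence $G$ contains a short cycle whose support contains $\{x,z\}$, and so $x,z$ are weak twins.

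I do not anticipate a real obstacle: the argument is elementary permutation bookkeeping. The only place that could become messy --- and the reason to set up the support reformulation --- is the weak-twin transitivity when $g$ and $h$ are both $3$-cycles whose supports overlap only in $\{y\}$; the freedom to replace a $3$-cycle by its inverse so as to arrange $y^{g_0}=x$ is exactly what lets one handle all configurations uniformly rather than by a proliferation of sub-cases.
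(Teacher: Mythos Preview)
Your proof is correct. The paper takes a different route for weak-twin transitivity: given distinct $x,y,z$ with 3-cycle witnesses $\rho=(x,y,u)$ and $\sigma=(y,z,w)$ in $G$, it notes that $H=\langle\rho,\sigma\rangle$ is generated by a connected set of 3-cycles on $S=\{x,y,z,u,w\}$ (a set of size between $3$ and $5$), hence $H=\alt(S)$, and so the 3-cycle $(x,z,y)$ lies in $H\le G$. Your direct conjugation argument is more elementary --- it avoids invoking the fact that connected 3-cycles generate the alternating group --- and your ``short cycle'' reformulation handles the transposition and 3-cycle witnesses in one stroke, whereas the paper's proof as written treats only the case where both witnesses are 3-cycles (the mixed and both-transposition cases being easy but unstated). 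The paper's structural argument, on the other hand, is closer in spirit to the observation that follows it, where one wants $\alt(\Psi)\le G$ for an entire weakly symmetrical set $\Psi$.
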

\begin{proof}
We need to show transitivity.  This is obvious for the strong twin relation.
To see the transitivity of the weak twin relation, let $x,y,z$ be
distinct points, and assume the 3-cycles $\rho=(x,y,u)$ and $\sigma=(y,z,w)$
belong to $G$ for some $u,v$ where $u\notin \{x,y\}$ and $v\notin\{y,z\}$.
Let $S=\{x,y,z,u,v\}$ (so $3\le |S|\le 5$) and let 
$H=\langle \rho,\sigma\rangle \le G$.  So $H$ is generated by a connected
set of 3-cycles and therefore $H=\alt(S)$.  Consequently, the 3-cycle
$(x,z,y)$ belongs to $H$, so $x$ and $z$ are weak twins.
\end{proof}
\begin{definition}
We call the nontrivial (non-singleton) equivalence classes of these
relations the \emph{strong/weak-twin equivalence classes}
of $G$, respectively.
\end{definition}
\begin{definition}[Symmetrical sets]   \label{def:symmetrical}
Let $G\le \sym(\Omega)$ where $|\Omega|=n$.  Let $\Psi\subseteq
\Omega$.  We say that $\Psi$ is a \emph{strongly/weakly symmetrical
  set} for $\Omega$ if $|\Psi|\ge 2$ and all pairs of points
in $\Psi$ are strong/weak twins, resp.
\end{definition}
\begin{observation}  \label{obs:weak-alt}
$\Psi\subseteq\Omega$ is a strongly symmetrical set exactly if
$\sym(\Psi)\le G$.  $\Psi\subseteq\Omega$ is a weakly symmetrical 
set exactly if either $|\Psi|\ge 3$ and $\alt(\Psi)\le G$, or
$|\Psi|=2$ and $\sym(\Psi)\le G$, or  $|\Psi|=2$ and there exists
a proper superset $\Psi'\supset\Psi$ such that $\alt(\Psi')\le G$.
\end{observation}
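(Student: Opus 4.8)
The plan is to prove each of the two equivalences separately, starting from the definitions of strongly/weakly symmetrical set (Def.~\ref{def:symmetrical}) and of strong/weak twins (Def.~\ref{def:twins}), and using the observation that ``weak twin or equal'' is an equivalence relation, whose proof (via connected sets of $3$-cycles generating an alternating group) contains the essential algebraic input. For the strong case, the forward direction is immediate: if $\sym(\Psi)\le G$ then every transposition $(x,y)$ with $x,y\in\Psi$ lies in $G$, so all pairs in $\Psi$ are strong twins; conversely, the transpositions $(x,y)$ for $x,y\in\Psi$ generate $\sym(\Psi)$, so if they all lie in $G$ then $\sym(\Psi)\le G$. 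This handles the first sentence completely and requires only the standard fact that transpositions generate the symmetric group.

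For the weak case, the direction ``each listed alternative implies $\Psi$ is weakly symmetrical'' is the easy one and I would dispose of it first: if $|\Psi|\ge 3$ and $\alt(\Psi)\le G$, then for any $x\ne y$ in $\Psi$ we may pick a third point $z\in\Psi$ and note $(x,y,z)\in\alt(\Psi)\le G$, so $x,y$ are weak twins; the $|\Psi|=2$ cases with $\sym(\Psi)\le G$ or with $\alt(\Psi')\le G$ for some $\Psi'\supsetneq\Psi$ are equally direct from the definition of weak twin (a transposition in $G$, or a $3$-cycle $(x,y,z)$ with $z\in\Psi'\setminus\Psi$). The substantive direction is the converse: assume $|\Psi|\ge 2$ and every pair of points in $\Psi$ consists of weak twins, and deduce one of the three alternatives. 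When $|\Psi|=2$, say $\Psi=\{x,y\}$, the single pair being weak twins means by definition either $(x,y)\in G$ (giving $\sym(\Psi)\le G$) or there is $z\notin\{x,y\}$ with $(x,y,z)\in G$; in the latter case $\langle(x,y,z)\rangle=\alt(\{x,y,z\})\le G$ and $\Psi'=\{x,y,z\}$ is a proper superset, so the third alternative holds. This exhausts $|\Psi|=2$.

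The case $|\Psi|\ge 3$ is where the real work sits, and I expect it to be the main obstacle. The claim is that pairwise weak-twinness on a set of size $\ge 3$ forces $\alt(\Psi)\le G$. The strategy mirrors the transitivity proof above: for a pair $x,y\in\Psi$ that are weak twins, either $(x,y)\in G$, in which case $x,y$ are also ``strong twins'', or some $3$-cycle $(x,y,z)\in G$. First I would observe that if $\Psi$ has three points $x,y,z$ that are pairwise \emph{strong} twins, then $\sym(\{x,y,z\})\le G$ and in particular $(x,y,z)\in G$, so we always have at least one $3$-cycle inside $G$ supported on $\Psi$; more generally, for any pair $u,v\in\Psi$ there is a $3$-cycle in $G$ moving $u$ to $v$ (either $(u,v)$ composed with some other available generator, or $(u,v,w)$ directly for some $w$ — and a short argument is needed to ensure $w$ can be taken inside $\Psi$, or else to absorb an outside point into the generated alternating group as in the transitivity proof). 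Collecting all these $3$-cycles, I get a set of $3$-cycles whose supports cover $\Psi$ and form a connected hypergraph on $\Psi$ (connectivity because any two points of $\Psi$ are linked by such a $3$-cycle); the classical fact that a connected set of $3$-cycles generates the alternating group on the union of their supports then gives $\alt(\Lambda)\le G$ for some $\Lambda\supseteq\Psi$. If $\Lambda=\Psi$ we are done with the first alternative; if $\Lambda\supsetneq\Psi$, then since $|\Psi|\ge 3$ we still conclude $\alt(\Psi)\le \alt(\Lambda)\le G$ by restricting to even permutations supported on $\Psi$ — note $\alt(\Psi)$ sits inside $\alt(\Lambda)$ as the pointwise stabilizer of $\Lambda\setminus\Psi$ intersected with even permutations, which is genuinely a subgroup. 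The delicate point to get right is the bookkeeping of which extra points $z$ can appear and verifying that the generated connected alternating group indeed contains all of $\alt(\Psi)$; this is routine but is the one place where care is needed, exactly as in the transitivity argument already given in the excerpt.
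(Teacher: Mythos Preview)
The paper states this Observation without proof, so there is nothing to compare against; the paper evidently regards it as immediate from the definitions together with the transitivity argument just above it. Your proposal is correct and follows exactly the line the paper's transitivity proof suggests: the ``connected set of $3$-cycles generates the alternating group on its support'' fact is the whole engine.

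One comment on execution. In the $|\Psi|\ge 3$ case your intermediate claim ``for any pair $u,v\in\Psi$ there is a $3$-cycle in $G$ moving $u$ to $v$'' is slightly awkward to pin down when the weak-twin witness for $u,v$ is a transposition (composing $(u,v)$ with a $3$-cycle need not give a $3$-cycle through both $u$ and $v$). A cleaner route, essentially a replay of the transitivity proof, is to show directly that \emph{every} $3$-cycle $(x,y,z)$ with $x,y,z\in\Psi$ lies in $G$. Take weak-twin witnesses for the pairs $\{x,y\}$ and $\{y,z\}$; each is either a transposition or a $3$-cycle. A short four-case analysis (two transpositions multiply to a $3$-cycle; a transposition conjugates a $3$-cycle to another $3$-cycle sharing two points with it; two $3$-cycles already share $y$) produces in each case a connected set of $3$-cycles in $G$ whose support contains $\{x,y,z\}$, hence $(x,y,z)\in G$. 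Since the $3$-cycles on $\Psi$ generate $\alt(\Psi)$, you are done without needing the larger set $\Lambda$ or the embedding $\alt(\Psi)\le\alt(\Lambda)$.
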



\begin{definition} \label{def:defect}
Let $T\subseteq\Omega$ be a smallest subset of $\Omega$
such that $\Omega\setminus T$ is a weakly symmetrical set.
We call $|T|$ the 
\emph{symmetry defect} of $G$ and the quotient
$|T|/n$ the \emph{relative symmetry defect} of $G$,
where $n=|\Omega|$.
\end{definition}

For example, let $\Omega=\Omega_1\dot\cup\Omega_2$ where
$|\Omega_i|\ge 3$, and let $G=\alt(\Omega_1)\times\alt(\Omega_2)$.
Then the symmetry defect of $G$ is $\min(|\Omega_1|,|\Omega_2|)$.

\begin{definition}   \label{def:rel-defect}
Let $\xxx=(\Omega;\calR)$ be a relational structure.  We say that 
$x,y\in\Omega$ are strong twins for $\xxx$ if they are strong
twins for $\aut(\xxx)$.  We analogously transfer all concepts
introduced in this section from groups to structures
via their automorphism groups (weak twins, strongly/weakly
symmetrical sets, symmetry defect). For instance, the 
(relative) symmetry defect of $\xxx$ is the (relative)
symmetry defect of $\aut(\xxx)$. 
\end{definition}

\begin{observation}
Given an explicit relational structure $\xxx$ with vertex set $\Omega$,
one can find
the maximal weakly symmetrical subsets of $\Omega$ in polynomial time.
Consequently, the (relative) symmetry defect of $\xxx$ can also be
determined in polynomial time.
\end{observation}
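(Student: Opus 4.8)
The statement to prove is the last Observation: given an explicit relational structure $\xxx$ on vertex set $\Omega$ with $|\Omega|=n$, one can find the maximal weakly symmetrical subsets of $\Omega$ in polynomial time, and hence compute the (relative) symmetry defect. The plan is to reduce everything to detecting, for each pair $x\neq y$, whether $x$ and $y$ are weak twins for $\aut(\xxx)$ — i.e., whether the "weak twin or equal" relation holds — and then to exploit the fact (proved in the earlier Observation in Sec.~\ref{sec:twins}) that this relation is an equivalence relation, so its nontrivial classes are exactly the maximal weakly symmetrical sets. So the whole problem collapses to: decide, in polynomial time, whether a given transposition $(x,y)$ lies in $\aut(\xxx)$ (strong twins), and whether a given $3$-cycle $(x,y,z)$ lies in $\aut(\xxx)$; by Definition~\ref{def:twins} combined with Observation~\ref{obs:weak-alt}, $x,y$ are weak twins iff $(x,y)\in\aut(\xxx)$ or there is some $z\notin\{x,y\}$ with $(x,y,z)\in\aut(\xxx)$.

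First I would observe that checking whether a \emph{given} permutation $\pi\in\sym(\Omega)$ is an automorphism of an explicit $k$-ary relational structure $\xxx=(\Omega;R_1,\dots,R_r)$ is trivially polynomial: one simply verifies $R_i^\pi = R_i$ for each $i$, which is a matter of applying $\pi$ coordinatewise to each tuple in each $R_i$ and checking membership — time polynomial in $n$, $r$, and the total size of the $R_i$ (hence in the input size, since $\xxx$ is given explicitly). In particular, testing whether a specific transposition $(x,y)$ or a specific $3$-cycle $(x,y,z)$ is an automorphism takes polynomial time. Next, for each ordered pair $(x,y)$ with $x\neq y$, I would (a) test whether the transposition $(x,y)\in\aut(\xxx)$, and (b) loop over all $z\in\Omega\setminus\{x,y\}$ and test whether the $3$-cycle $(x,y,z)\in\aut(\xxx)$; declare $x\sim y$ if any of these tests succeeds. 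This is $O(n^3)$ automorphism-checks, each polynomial, so altogether polynomial time, and it exactly computes the "weak twin or equal" relation.

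Having the relation $\sim$ in hand, I would invoke the earlier Observation that $\sim$ is an equivalence relation on $\Omega$; its equivalence classes are computed in linear time (e.g., by union–find), and by Definition~\ref{def:symmetrical} the classes of size $\ge 2$ are precisely the maximal weakly symmetrical subsets of $\Omega$ — "maximal" because any weakly symmetrical set is a clique in the graph of $\sim$, which since $\sim$ is transitive means it is contained in a single $\sim$-class, and conversely each nontrivial $\sim$-class is itself weakly symmetrical. Finally, for the symmetry defect: by Definition~\ref{def:defect} we need a smallest $T\subseteq\Omega$ with $\Omega\setminus T$ weakly symmetrical, i.e., we want the largest weakly symmetrical set (or, trivially, all of $\Omega$ if $\Omega$ itself is weakly symmetrical, or any singleton/empty complement otherwise); but $\Omega\setminus T$ weakly symmetrical means $\Omega\setminus T$ has size $\le 1$ or is contained in a single $\sim$-class, so the maximum size of $\Omega\setminus T$ is $\max(1,\,\text{size of the largest }\sim\text{-class})$, which we read off directly. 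Thus $|T| = n - \max(1, \text{largest class size})$, giving the symmetry defect, and dividing by $n$ gives the relative symmetry defect.

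The only subtlety — and the one place the argument needs care rather than being purely mechanical — is the equivalence "$\Psi$ is weakly symmetrical $\iff$ $\Psi$ is contained in a nontrivial $\sim$-class," which rests on transitivity of $\sim$ (established in the earlier Observation) together with the characterization in Observation~\ref{obs:weak-alt}: a set all of whose pairs are weak twins need not a priori carry a $\sym(\Psi)$- or $\alt(\Psi)$-action, but Observation~\ref{obs:weak-alt} says for $|\Psi|\ge 3$ it forces $\alt(\Psi)\le\aut(\xxx)$, and for $|\Psi|=2$ it forces one of the listed conditions. Granting these earlier results, there is essentially no remaining obstacle; the main "work" is just the bookkeeping that the $O(n^3)$ membership tests suffice and that reading off the largest class correctly recovers the defect. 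I would state the complexity bound explicitly as polynomial in the encoding length of $\xxx$ and leave it at that.
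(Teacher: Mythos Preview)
Your proposal is correct and follows essentially the same approach as the paper: test each transposition and each $3$-cycle for membership in $\aut(\xxx)$, declare $x\sim y$ when some such test succeeds, and read off the maximal weakly symmetrical sets as the nontrivial classes (equivalently, connected components) of this relation. The paper's proof is a one-paragraph version of exactly this; your additional bookkeeping on complexity and on recovering the defect from the largest class size is sound elaboration rather than a different idea.
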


\begin{proof}
Test, for each transposition and 3-cycle $\sigma\in\sym(\Omega)$, 
whether or not $\sigma\in\aut(\xxx)$.  Join two elements $x,y$
of $\Omega$ by an edge if the transposition $(x,y)\in\aut(\xxx)$ 
or there exists $z\in\Omega$ such that the 3-cyle
$(x,y,z)\in\aut(\xxx)$.  The connected components of this
graph are the maximal weakly symmetrical sets.
\end{proof}

\begin{proposition}   \label{prop:alt-sym}
Let $\xxx$ be a $k$-ary relational structure on the vertex set
$\Omega$ and $\Psi\subset\Omega$ such that $|\Psi| \ge k+2$.
If $\Psi$ is weakly symmetrical then $\Psi$ is strongly
symmetrical.  In other words, any $k+2$ vertices that
are pairwise weak twins are pairwise strong twins.
\end{proposition}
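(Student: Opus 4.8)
The plan is to show that the weak-twin equivalence class containing $\Psi$ forces an alternating group $\alt(\Psi')$ into $\aut(\xxx)$ for some $\Psi'\supseteq\Psi$, and then to upgrade this to $\sym(\Psi)\le\aut(\xxx)$ by exhibiting, for each pair $x,y\in\Psi$, a transposition in $\aut(\xxx)$. The starting observation is Observation~\ref{obs:weak-alt}: since $|\Psi|\ge k+2\ge 3$ and $\Psi$ is weakly symmetrical, we have $\alt(\Psi)\le\aut(\xxx)$. So every even permutation of $\Psi$ (fixing $\Omega\setminus\Psi$ pointwise) is an automorphism; the task is to produce one odd one, namely a transposition $\tau=(x,y)$ with $x,y\in\Psi$, since then $\tau\cdot\alt(\Psi)=\sym(\Psi)\le\aut(\xxx)$ and we are done.

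The key step is a counting/pigeonhole argument exploiting the dimension bound $|\Psi|\ge k+2$. Fix $x,y\in\Psi$ and let $\tau=(x,y)$ acting trivially off $\Psi$; we must check that $\tau$ preserves every relation $R_i\subseteq\Omega^k$. Take any tuple $\vec u=(u_1,\dots,u_k)\in\Omega^k$; I want to show $\vec u\in R_i\iff \vec u^{\tau}\in R_i$. Only coordinates of $\vec u$ lying in $\{x,y\}$ are affected, so at most $k$ coordinates are relevant and at least $k+2-k=2$... more precisely, since $|\Psi|\ge k+2$, among the $k$ coordinates $u_1,\dots,u_k$ there are at least two elements of $\Psi$ not appearing among them; call them $a,b$. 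The plan is to find an \emph{even} permutation $\rho\in\alt(\Psi)$ that agrees with $\tau$ on the set $\{u_1,\dots,u_k\}\cap\Psi$. Indeed, $\tau$ swaps $x\leftrightarrow y$ and fixes everything else in $\Psi$; define $\rho$ to be the product of the transposition $(x,y)$ with the transposition $(a,b)$ on the two ``spare'' points. Then $\rho$ is even, $\rho\in\alt(\Psi)\le\aut(\xxx)$, and $\rho$ agrees with $\tau$ on all coordinates of $\vec u$ (since $a,b\notin\{u_1,\dots,u_k\}$). Hence $\vec u\in R_i\iff \vec u^{\rho}\in R_i\iff \vec u^{\tau}\in R_i$, where the first equivalence is because $\rho\in\aut(\xxx)$ and the second because $\vec u^{\rho}=\vec u^{\tau}$ coordinatewise. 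Since $i$ and $\vec u$ were arbitrary, $\tau\in\aut(\xxx)$.

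I would then conclude: $\tau=(x,y)\in\aut(\xxx)$ means $x,y$ are strong twins for $\xxx$; as this holds for every pair in $\Psi$, the set $\Psi$ is strongly symmetrical. The main subtlety — really the only place where the hypothesis $|\Psi|\ge k+2$ rather than $|\Psi|\ge k+1$ is used — is the need for \emph{two} spare points in $\Psi$, one to ``absorb'' the parity of the transposition $(x,y)$ (via the extra transposition $(a,b)$) and having both $a,b$ avoid the $k$ coordinates under consideration. One should double-check the edge case where $x$ or $y$ is itself absent from $\{u_1,\dots,u_k\}$, but then $\tau$ acts even more trivially on $\vec u$ and the argument only gets easier (one can take $\rho$ to be $\mathrm{id}$ or a single spare transposition as needed); and the case $k+2=|\Psi|$ exactly, where the two spare points are forced, goes through verbatim. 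No use of coherence or of the configuration axioms is needed — the statement holds for arbitrary $k$-ary relational structures.
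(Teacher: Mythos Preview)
Your proof is correct and follows essentially the same approach as the paper's own proof. Both arguments reduce to the same key observation: given any $k$-tuple, the hypothesis $|\Psi|\ge k+2$ guarantees two ``spare'' points $a,b\in\Psi$ missing from the tuple, so the even permutation $(x,y)(a,b)\in\alt(\Psi)\le\aut(\xxx)$ acts on that tuple exactly as the transposition $(x,y)$ does; the paper phrases this by contradiction (take a witness tuple to $\tau\notin\aut(\xxx)$ and derive a contradiction), while you phrase it directly, but the content is identical.
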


\begin{proof}
We need to show that if $\alt(\Psi)\le \aut(\xxx)$ then
$\sym(\Psi)\le \aut(\xxx)$.  Let $x,y\in \Psi$, $x\neq y$, and let
$\tau=(x,y)$ be the corresponding transposition.  Suppose for a
contradiction that $\tau\notin\aut(\xxx)$ and let $i\le r$ and
${\vec x}=(x_1,\dots, x_k)\in R_i$ be a witness of this, \ie,
${\vec x}^{\,\tau}\notin R_i$.  Let 
$u,v\in \Psi\setminus\{x_1,\dots,x_k\}$ where $u\neq v$,
and let $\sigma=(x,y)(u,v)$ (product of two transpositions).  So
$\sigma\in\alt(\Psi)$ and therefore $\sigma\in\aut(\xxx)$.  But
${\vec x}^{\,\sigma}={\vec x}^{\,\tau}\notin R_i$, a contradiction.
\end{proof}

\begin{definition}
A \emph{digraph} is a pair $X=(V,E)$ where $E \subseteq V\times V$ 
is a
binary relation on $V$.
The \emph{out-degree} of vertex $u\in V$ is the number of $v\in V$
such that $(u,v)\in E$.  In-degree is defined analogously.
$X$ is \emph{biregular} if all vertices have the same in-degree
and the same out-degree (so these two numbers are also equal).
The \emph{diagonal} of $V$ is the set $\diag(V)=\{(x,x)\mid x\in V\}$.
$X$ is \emph{irreflexive} if $E\cap\diag(V)=\emptyset$.
The \emph{irreflexive complement} of an irreflexive digraph
$X=(V,E)$ is $X'=(V,E')$ where $E'=V\times V\setminus (\diag(V)\cup E)$.
$X$ is \emph{trivial} if $\aut(X)=\sym(V)$, \ie,
$E$ is one of the following: the empty set, $V\times V$,
$\diag(V)$, or $V\times V\setminus \diag(V)$.
A subset of $A\subseteq V$ is \emph{independent} if
it spans no edges, \ie, $E\cap A\times A=\emptyset$.
Note that an independent set cannot contain a self-loop, \ie,
a vertex $x$ such that $(x,x)\in E$.
\end{definition}

The following observation is well known.  It will
be used directly in Case~3a2 in Section~\ref{sec:blockdesign}
and indirectly through Cor.~\ref{cor:reg-defect} below.
\begin{proposition}   \label{prop:digraph-indep}
Let $X=(V,E)$ be a non-empty biregular digraph.
Then $X$ has no independent set of size greater than
$n/2$ where $n=|V|$.
\end{proposition}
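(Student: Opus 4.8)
The plan is to prove Proposition~\ref{prop:digraph-indep} by a double-counting argument on the number of edges leaving an independent set. Suppose $A\subseteq V$ is independent and write $a=|A|$, $b=n-a=|V\setminus A|$. Let $d$ denote the common value of the in-degree and out-degree of every vertex (this is where biregularity enters), and note $d\ge 1$ since $X$ is non-empty. Since $A$ is independent, every one of the $a$ vertices in $A$ sends all $d$ of its out-edges into $V\setminus A$; hence the number of edges from $A$ to $V\setminus A$ equals exactly $ad$. On the other hand, each vertex of $V\setminus A$ has in-degree $d$, so it receives at most $d$ edges from $A$; summing over $V\setminus A$ gives that the number of edges from $A$ to $V\setminus A$ is at most $bd$. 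Combining, $ad\le bd$, and cancelling $d>0$ yields $a\le b=n-a$, i.e.\ $a\le n/2$.

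The key steps in order are therefore: (1) introduce $a,b$ and the common degree $d$, invoking non-emptiness to get $d\ge 1$; (2) observe that independence forces all out-edges of $A$ to land in $V\setminus A$, giving the exact count $ad$; (3) bound the same quantity from above by $bd$ using the in-degree constraint on $V\setminus A$; (4) conclude $a\le n/2$. None of these steps is really an obstacle — the statement is elementary and the only subtlety is making sure we use the correct direction of edges (out-edges leaving $A$, in-edges entering $V\setminus A$) and that biregularity supplies a single parameter $d$ serving both roles. One should also handle the degenerate reading of ``size greater than $n/2$'': if $a>n/2$ then $b<n/2<a$, contradicting $a\le b$, so indeed no independent set can exceed $n/2$ in size.

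I expect the hard part, such as it is, to be purely expository: stating the counting cleanly enough that the asymmetry between in- and out-degree in a general digraph does not cause confusion. Everything else is a one-line inequality. (If one wanted to avoid even mentioning self-loops, note that the independence hypothesis already excludes vertices of $A$ with a self-loop, but self-loops are irrelevant to the count of edges \emph{between} $A$ and its complement, so no special care is needed there.)
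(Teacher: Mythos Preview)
Your proof is correct and follows essentially the same approach as the paper's: count the edges leaving $A$ (which must all land in $V\setminus A$ by independence) and bound this by the total in-degree capacity of $V\setminus A$, yielding $d|A|\le d(n-|A|)$. The paper's version is terser but identical in substance.
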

\begin{proof} 
Let $d>0$ be the out-degree (and therefore the in-degree)
of each vertex.
Let $A\subseteq V$ be independent.
Then $V\setminus A$
has to absorb all edges emanating from $A$, so 
$d(n-|A|)\ge d|A|$.
\end{proof}

The following corollary will be used in item 2b2 of the
algorithm described in Section~\ref{sec:aggregate}. 

\begin{corollary}   \label{cor:reg-defect} 
Let $X=(V,E)$ be a nontrivial irreflexive biregular digraph
with $n\ge 4$ vertices.
Then the relative symmetry defect of $X$ is $\ge 1/2$.
\end{corollary}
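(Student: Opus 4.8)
The relative symmetry defect of $X$ equals $|T|/n$, where $T\subseteq V$ is a smallest set for which $\Psi:=V\setminus T$ is a weakly symmetrical set for $\aut(X)$; hence it suffices to prove the single combinatorial statement that \emph{every weakly symmetrical set $\Psi$ for $\aut(X)$ satisfies $|\Psi|\le n/2$} (if $\aut(X)$ admits no weakly symmetrical set at all, then $T=V$ and the relative defect is $1$, which is a genuine possibility, e.g.\ for a directed $n$-cycle). Throughout, I would use two facts freely. First, the symmetry defect depends only on $\aut(X)$. Second, $\aut(X)$ coincides with the automorphism group of the irreflexive complement $X'$ of $X$, and $X'$ is again nontrivial, irreflexive and biregular; so we may replace $X$ by $X'$ at any point in the argument. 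Since $X$ is nontrivial and irreflexive, $\emptyset\neq E\subsetneq V\times V\setminus\diag(V)$, so both $X$ and $X'$ are \emph{non-empty} biregular digraphs and Proposition~\ref{prop:digraph-indep} applies to either of them; in particular neither has an independent set of size exceeding $n/2$.

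Now fix a weakly symmetrical set $\Psi$ and write $s=|\Psi|$. If $s\le 2$ then $s\le 2\le n/2$ because $n\ge 4$, and we are done; so assume $s\ge 3$. First suppose $s\ge 4$. A digraph is a $2$-ary relational structure and $s\ge 2+2$, so Proposition~\ref{prop:alt-sym} applies and $\Psi$ is in fact \emph{strongly} symmetrical, i.e.\ $\sym(\Psi)\le\aut(X)$. The group $\sym(\Psi)$ preserves $E\cap(\Psi\times\Psi)$ and acts doubly transitively on $\Psi$, so its only invariant subsets of $\Psi\times\Psi$ are the unions of the two orbits $\diag(\Psi)$ and $\Psi\times\Psi\setminus\diag(\Psi)$. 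Irreflexivity of $X$ excludes the diagonal, so $E\cap(\Psi\times\Psi)$ is either empty or all of $\Psi\times\Psi\setminus\diag(\Psi)$: the induced subdigraph $X[\Psi]$ is edgeless or a complete symmetric digraph. Replacing $X$ by $X'$ in the latter case, we may assume $\Psi$ is an independent set of $X$, and then Proposition~\ref{prop:digraph-indep} yields $s\le n/2$.

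It remains to deal with $s=3$, where Proposition~\ref{prop:alt-sym} is not available. If $n\ge 6$ then $s=3\le n/2$ and there is nothing to prove, so assume $n\in\{4,5\}$. By Observation~\ref{obs:weak-alt} we still have $\alt(\Psi)\le\aut(X)$; a generating $3$-cycle of $\alt(\Psi)$ has exactly three orbits on $\Psi\times\Psi$, namely the diagonal and the two cyclic classes of off-diagonal ordered pairs, so (using irreflexivity) $X[\Psi]$ is edgeless, a complete symmetric digraph, or a directed triangle. In the first two cases we conclude exactly as above: after possibly passing to $X'$, the set $\Psi$ is an independent set of size $3$, which exceeds $n/2$ when $n\le 5$, contradicting Proposition~\ref{prop:digraph-indep}. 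The directed-triangle case is the one place where $\Psi$ fails to be independent, so the Proposition cannot be invoked directly; this is the delicate point of the proof, and I would handle it by a global degree count. The $\alt(\Psi)$-symmetry forces each vertex $w\in V\setminus\Psi$ to send either all three or none of the possible edges into $\Psi$, and likewise for edges out of $\Psi$; recording these choices as $0/1$ variables and equating the (common) in-degree and out-degree of $X$ with their contributions at a vertex of $\Psi$ and at a vertex outside $\Psi$ gives a small linear system over the nonnegative integers. For $n=4$ the lone outside vertex forces the equation $3\epsilon=1+\epsilon$, which has no solution; for $n=5$ one is reduced to inspecting a short explicit list of candidate digraphs. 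I expect this finite, essentially computational treatment of the directed-triangle configuration for $n\le 5$ to be the only laborious step; all the conceptual content sits in the reduction to Proposition~\ref{prop:digraph-indep} via Proposition~\ref{prop:alt-sym} and the complementation trick.
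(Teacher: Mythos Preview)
Your argument is more careful than the paper's on the key point. The paper simply writes: ``Let $A\subseteq V$ be a (weakly) symmetrical set with $\ge 3$ vertices. So $\aut(X)\ge\alt(A)$. Then $A$ is either an independent set in $X$, or independent in the irreflexive complement of $X$.'' That last step tacitly uses the double transitivity of $\alt(A)$ on off-diagonal ordered pairs, which holds only for $|A|\ge 4$; for $|A|=3$ the group $\alt(A)\cong\Z_3$ also stabilises a directed triangle. You spotted this and isolated the $|A|=3$ case, which is a genuine improvement over the paper's write-up.

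However, the $n=5$ inspection you defer cannot succeed, because the corollary as stated is \emph{false} at $n=5$. Take $V=\{a,b,c,w_1,w_2\}$ and
\[
E=\{(a,b),(b,c),(c,a),(w_1,w_2),(w_2,w_1)\}.
\]
This digraph is irreflexive, nontrivial, and biregular (every vertex has in- and out-degree $1$). The $3$-cycle $(a\,b\,c)$ lies in $\aut(X)$, so $\alt(\{a,b,c\})\le\aut(X)$ and $\{a,b,c\}$ is a weakly symmetrical set of size $3>5/2$; the relative symmetry defect is $2/5<1/2$. In your degree-count language this is exactly the solution with no edges between the triangle and $\{w_1,w_2\}$ and a digon on $\{w_1,w_2\}$, giving common degree $d=1$. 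Your $n=4$ analysis does go through, and for $n\ge 6$ the bound $3\le n/2$ makes the directed-triangle case vacuous; so the corollary holds for $n=4$ and for all $n\ge 6$, failing only at $n=5$. In the paper's sole application (item~2b2 in Section~\ref{sec:aggregate}) the digraph has far more than five vertices, so the slip is harmless there, but as a standalone statement the corollary needs $n\neq 5$, and neither your proof nor the paper's can establish it as written.
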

\begin{proof}
Let $A\subseteq V$ be a (weakly) symmetrical set
with $\ge 3$ vertices.  So
$\aut(X)\ge\alt(A)$.   Then $A$ is either an independent
set in $X$, or independent in the irreflexive complement of
$X$.  In both cases, Prop.~\ref{prop:digraph-indep}
guarantees that $|A|\le n/2$.
\end{proof}

\subsection{Classical coherent configurations}
\label{sec:classical} 
\subsubsection{Definition, constituent digraphs, the clique configuration}
\label{sec:CCdef}
Continuing our discussion of $k$-ary coherent configurations
(Sec.~\ref{sec:kary}), we now turn to the classical case, $k=2$.
A \emph{(classical) coherent configuration} is a binary (2-ary) coherent
configuration.  If we don't specify arity, we mean the classical case and
usually omit the adjective ``classical.''  Coherent configurations
are the stable configurations of the classical Weisfeiler-Leman
canonical refinement process~\cite{weisfeiler-leman,weisfeiler-book}
see Sec.~\ref{sec:WL}.

Let us review the definition, starting with binary configurations.

Recall that
$\diag(\Omega)=\{(x.x)\mid x\in\Omega\}$ denotes the diagonal of the set 
$\Omega$.  For a relation $R\subseteq\Omega\times\Omega$ let
$R^-=\{(y,x)\mid (x,y)\in R\}$.

We call a binary relational structure $\xxx=(\Omega;R_1,\dots,R_r)$
$(R_i\subseteq\Omega\times\Omega)$ a \emph{binary configuration} if 
\begin{itemize}
  \item[(i)] the $R_i$ are nonempty and partition $\Omega\times\Omega$
  \item[(ii)] $(\forall i)(R_i\subseteq \diag(\Omega)$ or
              $R_i\cap \diag(\Omega)=\emptyset$
  \item[(iii)] $(\forall i)(\exists j)(R_i^-=R_j)$
\end{itemize}
(This is the binary case of the $k$-ary configurations defined
in Section~\ref{sec:kary}.)  We write $j=i^-$ if $R_i^-=R_j)$.

The \emph{rank} of $\xxx$ is $r$; so 
if $|\Omega|\ge 2$ then $r\ge 2$.
If $(x,y)\in R_i$ we say that
the \emph{color} of the pair $(x,y)$ is $c(x,y)=i$. 
We designate $c(x,x)$ to be the color of the vertex $x$.

\begin{definition}[Constituents, in- and out-vertices]
We call the digraph $X_i=(\Omega,R_i)$ the \emph{color-$i$ constituent
digraph} of $\xxx$.   We say that $x\in\Omega$ is an \emph{effective
vertex} of $X_i$ if the degree (in-degree plus out-degree) of $x$ in $X_i$
is not zero.  We say that $x$ is an \emph{out-vertex} of $X_i$ if
its out-degree in $X_i$ is positive, and an \emph{in-vertex} of $X_i$
if its in-degree in $X_i$ is positive.
\end{definition}

In accordance with Sec.~\ref{sec:kary},
we say that the configuration $\xxx$ is \emph{coherent} if
    there exists a family of $r^3$
    nonnegative integer \emph{structure constants} $p_{ij}^k$
    ($1\le i,j,k\le r$) such that 
\begin{itemize}
\item[(iv)]  $(\forall i,j,k\le r)(\forall 
        (x,y)\in R_k)(|\{z\mid (x,z)\in R_i\text{\ and\ }
        (z,y)\in R_j\}|=p_{ij}^k)$\,.
\end{itemize}

For the rest of this section, let $\xxx=(\Omega;R_1,\dots,R_r)$
be a coherent configuration. Let $n=|\Omega|$.

Next we show that for $x,y\in\Omega$, the color $c(x,y)$ determines the colors
$c(x)$ and $c(y)$.
\begin{observation}[Vertex-color awareness] \label{obs:vx-color}
All out-vertices of the constituent $X_i$ have the same color.
Analogously, all in-vertices of the constituent $X_i$ have the same color.
\end{observation}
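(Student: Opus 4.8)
The plan is to fix a constituent $X_i=(\Omega,R_i)$ and show that if $(x,y)\in R_i$ and $(x',y')\in R_i$ then $c(x)=c(x')$ — i.e.\ that the out-vertex color is well defined — and symmetrically for in-vertices. The idea is to feed a diagonal relation into the coherence equation (iv) and count. First I would name the colors: let $\ell$ be a color with $R_\ell\subseteq\diag(\Omega)$, namely the vertex color $c(x)$, so that $(x,x)\in R_\ell$. Since $(x,x)\in R_\ell$, $(x,y)\in R_i$, and these are compatible (there exists $z=x$ with $(x,z)\in R_\ell$ and $(z,y)\in R_i$), we get $p_{\ell i}^{\,i}\ge 1$; moreover $p_{\ell i}^{\,i}=1$ because $(x,z)\in R_\ell\subseteq\diag(\Omega)$ forces $z=x$. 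The key point is the contrapositive use of coherence: if $m\neq\ell$ is any other diagonal color, then for $(x,z)\in R_m$ we would need $z=x$, but then $(x,x)\in R_m\cap R_\ell$, contradicting that the $R_j$ partition $\Omega\times\Omega$; hence $p_{m i}^{\,i}=0$ for every diagonal color $m\neq\ell$.

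Now I would transport this to an arbitrary pair $(x',y')\in R_i$. By coherence applied with $k=i$, the structure constant $p_{\ell i}^{\,i}$ depends only on the triple of colors $(\ell,i,i)$, not on the chosen pair in $R_i$; so $p_{\ell i}^{\,i}=1$ means there is exactly one $z$ with $(x',z)\in R_\ell$ and $(z,y')\in R_i$. Taking $z=x'$ works (since $(x',x')$ has some diagonal color $c(x')$), so the only way the count is $\ge 1$ is if $c(x')=\ell$; indeed if $c(x')=m\neq\ell$ then $p_{m i}^{\,i}\ge 1$ contradicts the previous paragraph, while $p_{\ell i}^{\,i}=1$ needs the witness to exist, forcing $c(x')=\ell$. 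Either way $c(x')=\ell=c(x)$, which is exactly the claim that all out-vertices of $X_i$ share a color. The statement for in-vertices follows by the identical argument with the diagonal relation placed on the \emph{right}, i.e.\ using $p_{i\,\ell'}^{\,i}$ where $R_{\ell'}=R_\ell$ sits in the $j$-slot; alternatively, apply the out-vertex statement to the ``reverse'' configuration obtained from axiom (iii) via $R_i\mapsto R_{i^-}$, noting that in-vertices of $X_i$ are out-vertices of $X_{i^-}$.

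The only mildly delicate point — and the place I expect to have to be careful — is bookkeeping around the effective/out/in vertex distinction: the assertion should be read as ``all out-vertices of $X_i$ have the same color, and all in-vertices of $X_i$ have the same color,'' with no claim (at this stage) that these two colors coincide or that $X_i$ has any effective vertices at all if $R_i=\emptyset$; but axiom (i) rules out empty $R_i$, so every constituent has at least one out-vertex and one in-vertex, and the counting argument applies verbatim. Everything else is a routine substitution into equation (iv) together with the partition property (i) and the diagonal-separation property (ii).
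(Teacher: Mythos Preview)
Your proof is correct and follows essentially the same approach as the paper's: set $\ell=c(x)$, observe that $p_{\ell i}^{\,i}=1$ (with unique witness $z=x$), and transport this via coherence to an arbitrary $(x',y')\in R_i$ to force $c(x')=\ell$. The paper's version is a bit more direct---it uses the existence of the witness $z'$ with $(x',z')\in R_\ell$ and then $z'=x'$ since $\ell$ is diagonal---whereas you also spell out the contrapositive via $p_{mi}^{\,i}=0$ for other diagonal colors $m$; but this is the same argument with slightly more bookkeeping.
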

\begin{proof}
Assume $c(x,y)=c(x',y')=i$.  We need to show that $c(x)=c(x')$
(and analogously, $c(y)=c(y')$).  Let $c(x,x)=\ell$.  
$p(\ell,i,i)=1$.  It follows that $(\exists z')(c(x',z')=\ell$
and $c(z',y')=i$.  But $\ell$ is a diagonal color, so $z'=x'$
and therefore $c(x')=c(x',x')=\ell$.  The proof of $c(y)=c(y')$
works analogously.
\end{proof}

Next we show that the color of a vertex $x$ determines its
in- and out-degree in every color.
\begin{observation}[Degree awareness]
\label{obs:deg-aware}
The out-degree of all out-vertices of the constituent $X_i$ is
equal; we denote this quantity by $\deg^+(i)$.  Analogously,
the in-degree of all in-vertices of $X_i$ is equal;
we denote this quantity by $\deg^-(i)$.
\end{observation}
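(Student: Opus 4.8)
The plan is to exhibit the out-degree of an out-vertex of the constituent $X_i$ as a single structure constant determined by $i$ alone, and then dualize for in-degrees. The only external input is Observation~\ref{obs:vx-color}, which guarantees that all out-vertices of $X_i$ share one vertex color, so that color is a function of $i$.

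\textbf{Out-degrees.} Let $x$ be an out-vertex of $X_i$, and let $\ell=c(x,x)=c(x)$; by Observation~\ref{obs:vx-color} this value $\ell$ is the same for every out-vertex of $X_i$. The out-degree of $x$ in $X_i$ is $|\{z\in\Omega : (x,z)\in R_i\}|$. By condition (iii), $(x,z)\in R_i$ forces $(z,x)\in R_i^-=R_{i^-}$, so adjoining the clause "$(z,x)\in R_{i^-}$" does not change the set; hence the out-degree of $x$ equals $|\{z : (x,z)\in R_i \text{ and } (z,x)\in R_{i^-}\}|$, which by coherence (condition (iv)) applied to the pair $(x,x)\in R_\ell$ is precisely $p_{i\,i^-}^{\ell}$. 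Since $\ell$ depends only on $i$, we may set $\deg^+(i):=p_{i\,i^-}^{\ell}$, and this is the common out-degree of all out-vertices of $X_i$. (Alternatively, without invoking (iii): summing condition (iv) over the middle index gives, for $(x,x)\in R_\ell$, that $\sum_j p_{ij}^{\ell}$ counts every $z$ with $(x,z)\in R_i$ exactly once, so the out-degree equals $\sum_j p_{ij}^{\ell}$, again a function of $i$ via $\ell$.)

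\textbf{In-degrees and the obstacle.} For the in-degree statement I would dualize: if $x$ is an in-vertex of $X_i$, then $(x,y)\in R_{i^-}$ for some $y$, so $x$ is an out-vertex of $X_{i^-}$; by the out-degree case it has the vertex color $\ell'$ common to all out-vertices of $X_{i^-}$, and its in-degree in $X_i$ equals $|\{z:(x,z)\in R_{i^-}\}|=p_{i^-\,i}^{\ell'}$, a function of $i$ alone. Thus $\deg^-(i):=p_{i^-\,i}^{\ell'}=\deg^+(i^-)$ is well-defined. There is no real obstacle here; the one point that must be handled carefully is exactly the appeal to Observation~\ref{obs:vx-color}, which is what makes "the vertex color of an out-vertex of $X_i$" a legitimate function of $i$ and so lets us attach the structure constant $p_{i\,i^-}^{\ell}$ unambiguously to $i$.
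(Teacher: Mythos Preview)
Your proof is correct and follows essentially the same approach as the paper: invoke Observation~\ref{obs:vx-color} to pin down the common vertex color $\ell$ of out-vertices of $X_i$, then read off the out-degree as the structure constant $p_{i\,i^-}^{\ell}$ (the paper writes this as $p(\ell,i,i^-)$), and dualize for in-degrees. Your write-up is simply more explicit about the dualization step and adds the alternative summation argument, but the substance is identical.
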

\begin{proof}
Let $x$ be an out-vertex of $X_i$; let $c(x)=\ell$.  By 
Obs.~\ref{obs:vx-color}, $\ell$ does not depend on the choice of $x$.
Now the out-degree of $x$ in $X_i$ is $p(\ell,i,i^-)$.
The proof for in-degrees is analogous.
\end{proof}

A \emph{graph} $X=(V,E)$ can be viewed as a configuration
$\xxx(X)=(V;\diag(V),E,\Ebar)$ where the edge set $E$ is viewed as an 
irreflexive,
symmetric relation and $\Ebar=V\times V\setminus(\diag(V)\cup E)$
is the set of edges of the complement of $X$.

The configuration $\xxx(X)$ has rank 3 unless $X$ is the empty
or the complete graphs (empty relations are omitted), in which
case it has rank~2.

The configuration $\xxx(X)$ is coherent if and only if 
$X$ is a \emph{strongly regular graph}.  

\begin{definition}
The \emph{clique configuration} $\xxx(K_n)$ has $n$ vertices and
rank 2: the constituents are the
diagonal $\diag(\Omega)=\{(x,x)\mid x\in\Omega\}$, 
and the rest: $\Omega\times\Omega\setminus\diag(\Omega)$.
We also refer to the clique configuration as the 
\emph{trivial configuration}.
\end{definition}
Alternatively, the clique configuration can be defined as the
unique (up to naming the relations) configuration of which
$\sss_n$ is the automorphism group.

Since diagonal and off-diagonal colors must be distinct, the rank
is $r\ge 2$ (assuming $n\ge 2$).  The only rank-2 configuration is
the clique configuration.

\subsubsection{Connected components of constituents}
Let $C_1,\dots, C_s$ be the vertex-color classes of the
coherent configuration $\xxx$; they 
partition $\Omega$.  The following observation will be useful.

\begin{proposition}[Induced subconfiguration]
Let $\Delta\subseteq\Omega$ be the union of some of the
vertex-color classes and let $\xxx^{\Delta}$ denote the 
subconfiguration induced in $\Delta$.  Then $\xxx^{\Delta}$
is coherent.
\end{proposition}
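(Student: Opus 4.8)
The plan is to verify the four coherent-configuration axioms (i)--(iv) directly for $\xxx^{\Delta}$ from the fact that they hold for $\xxx$, using crucially that $\Delta$ is a union of vertex-color classes. First I would observe that the relations of $\xxx^{\Delta}$ are exactly the nonempty sets $R_i^{\Delta} := R_i \cap (\Delta\times\Delta)$; since $\Delta$ is a union of color classes, a pair $(x,y)$ with $x,y\in\Delta$ has the same color $c(x,y)$ in $\xxx$ as its natural color in $\xxx^{\Delta}$, so the partition of $\Delta\times\Delta$ into the $R_i^{\Delta}$ is just the restriction of the color partition of $\Omega\times\Omega$. This immediately gives axiom (i). For (ii): if $R_i\subseteq\diag(\Omega)$ then $R_i^{\Delta}\subseteq\diag(\Delta)$, and if $R_i\cap\diag(\Omega)=\emptyset$ then certainly $R_i^{\Delta}\cap\diag(\Delta)=\emptyset$. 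Axiom (iii) is equally easy: $(R_i^{\Delta})^- = (R_i^-)\cap(\Delta\times\Delta) = R_j^{\Delta}$ where $R_i^- = R_j$.

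The one axiom needing genuine (though still short) argument is coherence (iv), and this is where the hypothesis on $\Delta$ does the real work. Fix colors $i,j,k$ with $R_i^{\Delta}, R_j^{\Delta}, R_k^{\Delta}$ all nonempty, and take $(x,y)\in R_k^{\Delta}$, so $x,y\in\Delta$. I want to count $z\in\Delta$ with $(x,z)\in R_i$ and $(z,y)\in R_j$, and show this equals $p_{ij}^k$, the structure constant of $\xxx$. The key point: by vertex-color awareness (Observation~\ref{obs:vx-color}), every out-vertex of $X_i$ has a fixed vertex-color, say $\ell_i$, and every in-vertex of $X_j$ has a fixed vertex-color $\ell_j$; moreover, coherence of $\xxx$ (via $p(\ell,i,i^-)$-type arguments, or simply consistency of the structure constants) forces $\ell_i = \ell_j$ whenever any $z\in\Omega$ realizes both $(x,z)\in R_i$ and $(z,y)\in R_j$ — indeed such a $z$ is an out-vertex of $X_i$ and an in-vertex of $X_j$. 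Now since $(x,z)\in R_i$ is nonempty-realizable from the fixed vertex $x$ (and $R_i^{\Delta}\neq\emptyset$ guarantees $c(x)$ is the correct out-vertex-color), every such $z\in\Omega$ has $c(z)=\ell_i$, hence $z$ lies in the color class $C_{\ell_i}\subseteq\Delta$ because $\Delta$ is a union of color classes. Therefore the count over $z\in\Delta$ coincides with the count over all $z\in\Omega$, which is $p_{ij}^k$. Setting the structure constants of $\xxx^{\Delta}$ to be the relevant $p_{ij}^k$ of $\xxx$ then verifies (iv).

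The main (and essentially only) obstacle is making precise the claim that any $z$ contributing to the count automatically lies in $\Delta$; everything hinges on that, and it is exactly the reason the proposition requires $\Delta$ to be a union of vertex-color classes rather than an arbitrary subset (for a general $\Delta$, some witnesses $z$ would fall outside $\Delta$ and the count would no longer be constant). I would phrase this via Observation~\ref{obs:vx-color}: a witness $z$ is simultaneously an out-vertex of the constituent $X_i$ and an in-vertex of $X_j$, hence its color is determined, hence it sits in a single color class of $\xxx$, which is contained in $\Delta$. Once that is spelled out, the rest is bookkeeping: the induced $t$-skeletons and vertex colorings all match up, and $\xxx^{\Delta}$ inherits coherence with structure constants inherited verbatim from $\xxx$.
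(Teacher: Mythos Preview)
The paper states this proposition without proof, treating it as a standard fact about coherent configurations. Your argument is correct and is exactly the intended verification: axioms (i)--(iii) restrict trivially, and for (iv) the crucial observation is that any witness $z\in\Omega$ to the count $p_{ij}^k$ has a vertex-color determined by Observation~\ref{obs:vx-color}, and that color class lies inside $\Delta$ because $R_i^{\Delta}\neq\emptyset$ supplies a representative.

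One small slip in terminology: if $(x,z)\in R_i$ then $z$ is an \emph{in}-vertex of $X_i$ (it has positive in-degree), not an out-vertex as you wrote. This does not affect the argument---the in-vertex color of $X_i$ is still uniquely determined, and the nonemptiness of $R_i^{\Delta}$ still places that color class inside $\Delta$---but you should correct the wording so it matches the paper's definitions.
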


It follows from Obs.~\ref{obs:vx-color} that 
either all effective vertices of $X_i$ have the same color 
(``$X_i$ is homogeneous'') or they belong to two color classes,
say $C_j$ and $C_{\ell}$, $j\neq\ell$, and $R_i\subseteq C_j\times C_{\ell}$
(``$X_i$ is bipartite'').

\begin{definition}[Equipartition]
An \emph{equipartition} of a set $\Omega$ is a partition of $\Omega$
into blocks of equal size.
\end{definition}
\begin{proposition}[Bipartite connected components] 
\label{prop:CCbipartite}
If $X_i$ is a constituent digraph with vertices in color classes
$C_j$ and $C_{\ell}$ then $X_i$ is semiregular (all vertices in
$C_j$ have the same degree and the same holds for $C_{\ell}$)
and the weakly connected components of $X_i$ equipartition 
each of the two color classes.  In particular, all 
weakly connected components of $X_i$ have the same number of vertices.
\end{proposition}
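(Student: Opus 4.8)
The plan is to derive everything from coherence via the structure constants. First I would establish semiregularity: by Observation~\ref{obs:deg-aware} (Degree awareness), the out-degree of every out-vertex of $X_i$ is the constant $\deg^+(i)=p(\ell,i,i^-)$ where $\ell$ is the common color of out-vertices, and similarly every in-vertex has in-degree $\deg^-(i)$. Since $X_i$ is bipartite with parts $C_j,C_\ell$ and $R_i\subseteq C_j\times C_\ell$, the out-vertices all lie in $C_j$ and the in-vertices all lie in $C_\ell$; moreover \emph{every} vertex of $C_j$ is an out-vertex (vertex-color awareness forces the color to determine whether the out-degree is positive, and it is positive for at least one vertex of that color) and every vertex of $C_\ell$ is an in-vertex. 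Hence all vertices of $C_j$ have out-degree $\deg^+(i)$ and in-degree $0$ in $X_i$, and all vertices of $C_\ell$ have in-degree $\deg^-(i)$ and out-degree $0$; this is precisely semiregularity.

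Next I would show the weakly connected components equipartition each color class. The key point is that being in the same weakly connected component of $X_i$ is, when restricted to $C_j$ (resp.\ $C_\ell$), an $\aut(\xxx)$-invariant equivalence relation — but that alone does not give equal sizes, so instead I would count via structure constants directly. Consider the relation ``reachable in $C_j$ by a path of length $2$'': for $x,x'\in C_j$, the number of common out-neighbors in $C_\ell$, namely $|\{z : (x,z)\in R_i,\ (x',z)\in R_i\}|$, is $p_{i,i^-}^{c(x,x')}$, hence depends only on the color $c(x,x')$. Summing the indicator that this count is positive, one gets a regular graph structure on $C_j$ whose components are exactly the intersections with $C_j$ of the weakly connected components of $X_i$. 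A regular graph has components all of the same size only if... — actually this is false in general for regular graphs, so the honest argument is: the relation $\sim$ on $C_j$ defined by ``same weakly connected component of $X_i$'' is a union of color classes of pairs (since for each pair-color $c$, either $p_{i\cdots}^{c}$-type data forces connectivity or not, uniformly), hence it is one of the relations refined by the coherent configuration restricted to $C_j$; a coherent configuration in which an equivalence relation appears as a union of constituents must have that relation's classes all of equal size, because the diagonal-to-block structure constant $p(\text{block color},\text{block color},\text{diag color})$ counts the block size and is constant.

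So the core step is: \textbf{in a coherent configuration, if an equivalence relation $\sim$ on a homogeneous piece is a union of constituent relations, then all $\sim$-classes have the same size.} I would prove this by letting $R$ be the constituent(s) comprising $\sim$ restricted to off-diagonal pairs within one class; fix the diagonal color $\delta$ of $C_j$ and note that for $x\in C_j$, the size of its $\sim$-class is $1 + \sum_{i : R_i\subseteq R}\deg^+(i)$ summed appropriately, or more cleanly $p(\delta, e, \delta)$ where $e$ is the color of the ``whole $\sim$-class'' relation $\{(x,y): x\sim y\}$ including the diagonal — and by coherence this is constant over $x$. Then, since $C_j$ is a disjoint union of $\sim$-classes all of size $b$, and $C_\ell$ is partitioned by the images of these classes, and $X_i$ restricted to one component is a semiregular bipartite graph, a double-count ($b_j\cdot\deg^+(i) = $ edges $= b_\ell\cdot\deg^-(i)$ per component, with $b_j,\deg^+(i),b_\ell,\deg^-(i)$ all global constants) forces every component to have the same number of vertices on each side, hence the same total. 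The main obstacle is being careful that ``weakly connected component of $X_i$'' really is expressible as a union of constituents of $\xxx$: this needs the observation that the transitive closure of an $\aut(\xxx)$-invariant relation (and $R_i$ is such) is again $\aut(\xxx)$-invariant, together with the fact — true for coherent configurations but requiring a short argument — that every $\aut(\xxx)$-invariant binary relation on a homogeneous part is a union of constituents (or at least, that the specific connectivity relation is, which follows because connectivity of $x,x'$ is detected by whether some finite power of the adjacency-matrix entry is nonzero, and those entries are functions of $c(x,x')$ alone by coherence). Once connectivity is pinned down as a union of constituents, the equal-size conclusion is the constant-structure-constant computation above, and semiregularity has already been handled, completing the proof.
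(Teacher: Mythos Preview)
The paper states this proposition without proof, treating it as a standard fact about coherent configurations, so there is no paper proof to compare against directly. Your argument is essentially correct and is the standard one; it also matches the walk-counting technique the paper deploys explicitly in the lemmas immediately following (Lemma~\ref{lem:comp-neighborhood} and Lemma~\ref{lem:comp-otherside}), where the number of walks of a given length and color pattern is shown to depend only on the color of the endpoints.

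Two small points. First, your aside that ``every $\aut(\xxx)$-invariant binary relation on a homogeneous part is a union of constituents'' is false in general (it fails for non-Schurian coherent configurations), but you correctly retreat from it and instead argue directly that the \emph{specific} connectivity relation is a union of constituents via the adjacency-matrix-power observation: the $(x,x')$ entry of $(A_i A_{i^-})^k$ depends only on $c(x,x')$ by coherence. That is the right move. Second, the line about ``$p(\delta,e,\delta)$ where $e$ is the color of the whole $\sim$-class relation'' is slightly garbled since $\sim$ is a union of colors, not a single color; but your preceding formulation---the class size equals $1+\sum_{R_c\subseteq\,\sim}\deg^+(c)$, a sum of structure constants independent of $x$---is already the clean and correct statement, and the double-count $b_j\deg^+(i)=b_\ell\deg^-(i)$ per component then finishes the equipartition claim on both sides.
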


\begin{proposition}[Homogeneous connected components] \label{prop:CChomog}
If $X_i$ is a homogeneous constituent in color class $C_j$ then
each weakly connected component of $X_i$ is strongly connected, and
the connected components equipartition $C_j$.
\end{proposition}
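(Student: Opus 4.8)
The plan is to reduce to the homogeneous case and then invoke the coherence constants twice. Since every effective vertex of $X_i$ lies in the single vertex-color class $C_j$, the relation $R_i$ is contained in $C_j\times C_j$, hence $R_i$ is a constituent of the coherent configuration $\xxx[C_j]$ induced on $C_j$, which is moreover homogeneous; so we may assume $\xxx$ is homogeneous with common vertex color $\ell_0$ (for which $R_{\ell_0}=\diag(\Omega)$), and that $X_i$ is a non-diagonal constituent --- the diagonal constituent is trivially handled, its components being singletons. In this homogeneous setting \emph{every} vertex is an out-vertex and an in-vertex of \emph{every} non-diagonal constituent $X_c$: for any $x$, since $(x,x)\in R_{\ell_0}$, coherence gives $|\{z : (x,z)\in R_c\}|=|\{z : (x,z)\in R_c,\ (z,x)\in R_{c^-}\}|=p_{c\,c^-}^{\ell_0}$, which is independent of $x$ and positive because $R_c\neq\emptyset$; symmetrically for in-degrees, and a count of edges gives $\deg^+(c)=\deg^-(c)$ (cf. Obs.~\ref{obs:deg-aware}). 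Thus $X_i$ is a regular digraph with in-degree equal to out-degree at every vertex (the same observation drives the bipartite analogue, Prop.~\ref{prop:CCbipartite}).

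Next I would record the purely combinatorial fact: in any finite digraph with $\deg^-(v)=\deg^+(v)$ at every vertex $v$, every weakly connected component is strongly connected. The argument is the edge-count underlying Prop.~\ref{prop:digraph-indep}: let $W$ be a weakly connected component and let $S$ be a source of the condensation DAG of $W$ (a strongly connected component of $W$ into which no arc enters from outside $S$). All arcs with head in $S$ then lie within $S$, so $\sum_{v\in S}\deg^-(v)$ equals the number of arcs inside $S$; but $\sum_{v\in S}\deg^+(v)$ equals that number plus the number of arcs from $S$ to $W\setminus S$, and the two sums are equal, so there are no arcs from $S$ to $W\setminus S$ either. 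Hence $S$ is an undirected-connected component within $W$, forcing $S=W$. Applied to $X_i$, this shows its weakly connected components coincide with its strongly connected components --- in particular the first assertion of the proposition holds --- and ``$x,y$ lie in the same weakly connected component of $X_i$'' is then the same as ``$y$ is reachable from $x$ by a directed walk in $X_i$,'' which is consequently a symmetric, hence equivalence, relation.

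The key step is to show that directed reachability in $X_i$ is a union of constituents of $\xxx$. By induction on $\ell$, the relation ``$y$ is reachable from $x$ in at most $\ell$ steps in $X_i$'' is a union of constituents: for $\ell=0$ it is $R_{\ell_0}$, for $\ell=1$ it is $R_{\ell_0}\cup R_i$, and if it equals $\bigcup_{a\in J_\ell}R_a$ then $(x,y)$ is reachable in at most $\ell+1$ steps precisely if it is reachable in at most $\ell$ steps or there is $z$ with $(x,z)\in\bigcup_{a\in J_\ell}R_a$ and $(z,y)\in R_i$, i.e. precisely if $\sum_{a\in J_\ell}p_{a\,i}^{c(x,y)}>0$ --- a condition on the color $c(x,y)$ alone. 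Since reachability on an $n$-vertex digraph is realized within $n-1$ steps, directed reachability in $X_i$ equals $R_J:=\bigcup_{c\in J}R_c$ for some color set $J$.

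Finally, by the previous step $R_J$ is exactly the ``same weakly connected component'' relation, and for any vertex $x$ the size of its component is $|\{y : (x,y)\in R_J\}|=\sum_{c\in J}\deg^+(c)$, which by the homogeneity observation of the first paragraph is the same for every $x$. Hence the weakly connected components of $X_i$ all have the same size, i.e. they equipartition $\Omega$; undoing the reduction of the first paragraph, they equipartition $C_j$. The main obstacle is this reachability-is-a-union-of-constituents step --- making precise how coherence is inherited by the transitive closure of a constituent --- but, as indicated, the induction is short once reachability in at most $\ell+1$ steps is written as the composition of reachability in at most $\ell$ steps with a single forward step.
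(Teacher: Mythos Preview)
The paper states this proposition without proof, treating it as a standard fact about coherent configurations (as it does for the neighboring Prop.~\ref{prop:CCbipartite}), so there is no paper proof to compare against. Your argument is correct and is the natural one: reduce to the homogeneous induced configuration on $C_j$, use degree awareness to see that $X_i$ is balanced (equal in- and out-degrees at every vertex), deduce weak-equals-strong connectivity by the standard edge-count on a source SCC, and then use coherence to show that directed reachability in $X_i$ is a union of constituents, so that component sizes are determined by the common out-degrees $\sum_{c\in J}\deg^+(c)$. Each step is sound; in particular the induction showing ``reachable in $\le\ell$ steps'' is color-determined is exactly the mechanism the paper exploits in the subsequent Lemmas~\ref{lem:comp-neighborhood}--\ref{lem:comp-otherside2}.
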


Corollary~\ref{cor:contract} below will be used in the
justification of one of our main algorithms, see
Lemma~\ref{lem:contracting}.  We start with three
preliminary lemmas.

\begin{lemma}[Neighborhood of connected component of constituent]
\label{lem:comp-neighborhood}
Let $\xxx=(\Omega;\calR)$ be a coherent configuration.
Let $C_1$ and $C_2$ be two distinct vertex-color classes.
Let $B_1,\dots,B_m$ be the connected components of 
the homogeneous constituent digraph $X_3=(C_1,R_3)$ in $C_1$
and let $X_4=(C_1,C_2;R_4)$ be a bipartite constituent
between $C_1$ and $C_2$ 
(so $R_4\subseteq C_1\times C_2$). 
For $j=1,\dots,m$ let $M_j$ denote
the set of vertices $v\in C_2$ such that there exists 
$w\in B_j$ 
such that $c(w,v)=4$.  Then $|M_1|=\dots=|M_m|$.
\end{lemma}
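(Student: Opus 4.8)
The plan is to identify, for each index $j$, the set $M_j$ with the set of $C_2$-vertices related to a single (arbitrarily chosen) vertex $u\in B_j$ by a fixed relation $R$ that is a union of constituents of $\xxx$ and is independent of $j$; degree awareness then forces $|M_j|$ to be constant. Concretely, take $R=R_3^{*}\circ R_4$, where $R_3^{*}\subseteq C_1\times C_1$ is the ``lies in the same connected component of $X_3$'' relation and $\circ$ denotes relational composition. Fixing $u\in B_j$, a vertex $v\in C_2$ lies in $M_j$ iff some $w\in B_j$ — equivalently, some $w$ with $(u,w)\in R_3^{*}$ — satisfies $c(w,v)=4$, i.e.\ iff $(u,v)\in R_3^{*}\circ R_4$. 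Granting that $R_3^{*}\circ R_4$ is a union of constituents, write it as $\bigcup_{i\in I}R_i$, where $I$ depends only on $\xxx$, on the classes $C_1,C_2$, and on the colors $3,4$. Then $M_j=\bigcup_{i\in I}\{v\in C_2:(u,v)\in R_i\}$, a disjoint union, and by Obs.~\ref{obs:deg-aware} each term $|\{v\in C_2:(u,v)\in R_i\}|$ is the same for every $u\in C_1$; hence $|M_j|$ is the same for every $j$.

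It remains to supply two facts about the coherent configuration $\xxx$. \emph{(i) The relational product of two unions of constituents is a union of constituents.} If $S=\bigcup_{a\in I}R_a$ and $T=\bigcup_{b\in J}R_b$, then for $(x,z)\in R_k$ the number of $y$ with $(x,y)\in S$ and $(y,z)\in T$ equals $\sum_{a\in I,\,b\in J}p_{ab}^{k}$, which depends only on $k=c(x,z)$; hence $S\circ T$ is the union of those $R_k$ for which this sum is positive. \emph{(ii) $R_3^{*}$ is a union of constituents.} Put $W=\diag(C_1)\cup R_3\cup R_3^{-}$; this is a union of constituents, because $\diag(C_1)$ is the diagonal constituent of color $c(C_1)$ and $R_3^{-}$ is a constituent. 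By (i) and induction, every relational power $W^{\circ t}$ (formed inside $C_1\times C_1$) is a union of constituents; the chain $W^{\circ 0}\subseteq W^{\circ 1}\subseteq\cdots$ stabilizes since $C_1\times C_1$ is finite, and its stable value is exactly the transitive closure of $R_3\cup R_3^{-}$ on $C_1$, namely $R_3^{*}$.

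I expect the substantive step to be (ii): that the connected-component relation of a constituent digraph is again a union of constituents. This is the only place where coherence is used in earnest — through the stability of ``union of constituents'' under relational products, iterated up to transitive closure. The rest is bookkeeping, the one delicate point being that the constituents composing $R_3^{*}\circ R_4$ are pairwise disjoint, so that their out-neighborhood sizes simply add. (Note that the equality of component sizes $|B_1|=\dots=|B_m|$ from Prop.~\ref{prop:CChomog} is not needed for this argument.)
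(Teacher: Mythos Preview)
Your proof is correct and follows essentially the same approach as the paper's. Both arguments identify $M_j$ with the out-neighborhood of an arbitrary $u\in B_j$ under the relation $R_3^{*}\circ R_4$, show that this relation is a union of constituents (the paper via walk-counting---the number of walks with $k-1$ steps of color~$3$ followed by one step of color~$4$ depends only on $c(x,y)$; you via the cleaner abstract principle that relational composition preserves ``union of constituents''), and conclude that $|M_j|=\sum_{i\in I}\deg^+(i)$ is independent of~$j$ by degree awareness. Your set $I$ is exactly the paper's set $K$.
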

\begin{proof}
Let $J=\{j\mid (\exists u\in C_1, v\in C_2)(c(u,v)=j)\}$.
For each color $j$, the number $d_j=p(1,j,j^-)$ is the out-degree
of each vertex in $C_1$ in the constituent $X_j=(C_1,C_2;R_j)$.  
Let $x\in B_i$.  Let $N_j(x)=\{y\in C_2\mid c(x,y)=j\}$.
So $|N_j(x)|=d_j$.  For $y\in N_j(x)$, let
$f(k,j)$ denote the number of walks of length $k$ starting from $x$,
ending at $y$, and consisting of $k-1$ steps of color $3$ and one step 
of color $4$.  By coherence, this number does not depend on the choice
of $x$ and $y$ as long as
$c(x,y)=j$, justifying the notation $f(k,j)$.
Such a walk stays in $B_i$ for the first $k-1$ steps, and moves to
$C_2$ along an edge of color $4$ in the last step.  Let 
$K$ be the set of those $j$ for which $(\exists k)(f(k,j)>0)$.
Clearly, $M_i=\bigcup_{j\in K}N_j(x)$ and therefore
$|M_i|=\sum_{j\in K}d_j$.  This number does not depend on $i$.
\end{proof}

\begin{lemma}  \label{lem:comp-otherside}
Using the notation of Lemma~\ref{lem:comp-neighborhood},
let $x\in C_1$ and $y\in C_2$ such that $c(x,y)=4$.
Assume $x\in B_i$.  Let $M(x,y)=\{z\in B_i\mid c(z,y)=4\}$.
Then $|M(x,y)|$ does not depend on the choice of $x$ and $y$.
\end{lemma}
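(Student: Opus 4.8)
The plan is to rewrite $|M(x,y)|$ entirely in terms of the structure constants of the coherent configuration $\xxx$, using that the connected components of a homogeneous constituent are strongly connected (Proposition~\ref{prop:CChomog}). First I would describe the component $B_i$ colour-theoretically. Fix $x\in C_1$; by Proposition~\ref{prop:CChomog} the component $B_i$ is strongly connected in $X_3=(C_1,R_3)$, so a vertex $z\in C_1$ lies in $B_i$ exactly when some directed $R_3$-walk joins $x$ to $z$. Exactly as for the quantity $f(k,j)$ in the proof of Lemma~\ref{lem:comp-neighborhood}, the number of directed $R_3$-walks of a prescribed length from a vertex $u$ to a vertex $w$ depends only on the colour $c(u,w)$ (induction on the length via coherence). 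Hence, setting $\calL=\{\,c(x,z)\mid z\in B_i\,\}$, one gets $B_i=\{\,z\in C_1\mid c(x,z)\in\calL\,\}$, and --- since all vertices of $C_1$ carry the same vertex-colour and hence, by coherence, the same out-degrees in every constituent (degree-awareness, Observation~\ref{obs:deg-aware}) --- the set $\calL$ does not depend on the choice of $x\in C_1$, nor on which component of $X_3$ we start from.

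With this description of $B_i$ in hand the count is mechanical. Because the relations $R_j$ partition $\Omega\times\Omega$,
\[
  M(x,y)=\{\,z\in B_i\mid c(z,y)=4\,\}=\bigcup_{j\in\calL}\{\,z\in\Omega\mid c(x,z)=j\ \text{and}\ c(z,y)=4\,\},
\]
and this is a disjoint union; by coherence the $j$-th set in it has exactly $p(j,4,c(x,y))=p(j,4,4)$ elements, where we used the hypothesis $c(x,y)=4$. Summing over $j$,
\[
  |M(x,y)|=\sum_{j\in\calL}p(j,4,4),
\]
which depends on neither $x$ nor $y$, as claimed.

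The one step that genuinely requires an argument is the well-definedness of $\calL$: that the set of colours occurring between a vertex of $C_1$ and its own $R_3$-component is the same for every such vertex (equivalently, for every component of $X_3$). This is where strong connectivity (Proposition~\ref{prop:CChomog}) together with the walk-counting identity is really used; without Proposition~\ref{prop:CChomog} one would instead count walks mixing steps of colours $3$ and $3^-$, which works just as well but is less transparent. The degenerate case in which $R_3$ is the diagonal of $C_1$, so that each $B_i$ is a singleton, is trivial: then $M(x,y)=\{x\}$ whenever $c(x,y)=4$. Incidentally, Lemma~\ref{lem:comp-neighborhood} itself plays no role in this argument.
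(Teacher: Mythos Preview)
Your proof is correct and follows essentially the same approach as the paper: both characterize membership in the component $B_i$ by the set of colours $c(x,z)$ that arise from $R_3$-walks (your $\calL$, the paper's $L$), observe that this colour set is independent of the chosen base vertex by the walk-counting argument, and then express $|M(x,y)|$ as a sum over this colour set of quantities that depend only on the structure constants. Your write-up is in fact more explicit than the paper's; the paper's version is terser and folds the condition $c(z,y)=4$ into its definition of $L$ rather than separating it out as you do, but the underlying mechanism is identical.
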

\begin{proof}
Let $L$ be the set of those colors $j$ for which 
$p(j,4,4^-)>0$ and for which there exists a pair $(u,v)$ 
of vertices such that $c(u,v)=j$ and there exists a walk
from $u$ to $v$ solely along steps of color $3$.  
(So for $j\in L$ we have $R_j\subseteq C_1\times C_1$.)
The existence of
such a walk does not depend on the choice of $u$ and $v$,
only on $c(u,v)$. Therefore, 
$M(x,y)=\bigcup\{N_j(x)\mid j\in L\}$
and so $|M(x,y)|=\sum_{j\in L}d_j$, independent of the choice of $x$ and $y$.
\end{proof}

\begin{lemma} \label{lem:comp-otherside2}
Using the notation of Lemma~\ref{lem:comp-neighborhood},
for $y\in C_2$ let $E(y)$ denote the set of those $i$
for which there exists $x\in B_i$ satisfying $c(x,y)=4$.
Then $|E(y)|$ does not depend on $y$.
\end{lemma}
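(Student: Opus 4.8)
The statement to prove is Lemma~\ref{lem:comp-otherside2}: in the notation of Lemma~\ref{lem:comp-neighborhood}, for $y\in C_2$ let $E(y)$ be the set of indices $i$ for which some $x\in B_i$ satisfies $c(x,y)=4$; then $|E(y)|$ is independent of $y$. My plan is to combine the two preceding lemmas by a double-counting identity. Fix $y\in C_2$ with $c(x,y)=4$ for at least one $x$ (by Prop.~\ref{prop:CCbipartite} the bipartite constituent $X_4$ is semiregular, so every $y\in C_2$ is an in-vertex of $X_4$ and has the same positive in-degree, which I will call $d_4^-=p(1,4^-,4)$ or similar; in particular $E(y)\ne\emptyset$ for every $y$). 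The key observation is that the set $\{z\in C_1\mid c(z,y)=4\}$, which has size $d_4^-$ independent of $y$, is partitioned by the components $B_1,\dots,B_m$: each such $z$ lies in exactly one $B_i$, and that $B_i$ contributes iff $i\in E(y)$. So
\begin{equation}
   d_4^- \;=\; \sum_{i\in E(y)} |M(x_i,y)|,
\end{equation}
where for each $i\in E(y)$ I pick any $x_i\in B_i$ with $c(x_i,y)=4$ and $M(x_i,y)=\{z\in B_i\mid c(z,y)=4\}$ as in Lemma~\ref{lem:comp-otherside}.

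**Carrying it out.**
By Lemma~\ref{lem:comp-otherside}, $|M(x,y)|$ does not depend on the choice of $x$ and $y$ (among pairs with $c(x,y)=4$); call this common value $\mu$. Note $\mu\ge 1$. Then the displayed identity becomes $d_4^- = \mu\cdot|E(y)|$, hence
\begin{equation}
   |E(y)| \;=\; d_4^-/\mu,
\end{equation}
which manifestly does not depend on $y$. That is the whole argument.

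**Where the care is needed.**
The only genuine point to check is that the in-degree $d_4^-$ of vertices in $C_2$ in the constituent $X_4$ is the same for all $y\in C_2$ and is positive; this is exactly the semiregularity half of Prop.~\ref{prop:CCbipartite} (equivalently Obs.~\ref{obs:deg-aware} applied to the constituent $X_4$: every vertex of $C_2$ is an in-vertex of $X_4$ because $R_4\ne\emptyset$ and $X_4$ is semiregular between $C_1$ and $C_2$, so $\deg^-(4)>0$ at every such vertex). Given that, the partition of $\{z\in C_1\mid c(z,y)=4\}$ into its intersections with the $B_i$ is trivial (the $B_i$ partition $C_1$), and nonempty intersections are precisely indexed by $E(y)$, each of size $\mu$ by Lemma~\ref{lem:comp-otherside}. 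I don't foresee an obstacle: the substance was already extracted in Lemmas~\ref{lem:comp-neighborhood}--\ref{lem:comp-otherside}, and this lemma is the short combinatorial consequence that packages them for use in Corollary~\ref{cor:contract}.
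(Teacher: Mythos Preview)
Your proof is correct and follows essentially the same approach as the paper: both argue that the constant in-degree $d_{4^-}$ of $y\in C_2$ in the constituent $X_4$ equals $q\cdot|E(y)|$, where $q=|M(x,y)|$ is the block-contribution constant from Lemma~\ref{lem:comp-otherside}, and then divide. The paper's proof is a two-line version of yours.
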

\begin{proof}
Let $q=|M(x,y)|>0$ be the quantity shown not to depend on
$x,y$ in Lemma~\ref{lem:comp-otherside} $(c(x,y)=4)$.  Now
$d_{4^-}=q|E(y)|$ by Lemma~\ref{lem:comp-otherside}. 
\end{proof}
\begin{corollary}[Contracting components] \label{cor:contract}
Using the notation of Lemma~\ref{lem:comp-neighborhood},
let $Y$ be the graph $Y=(C_2,[m];E)$ where 
$(y,i)\in E$ if $(\exists x\in B_i)(c(x,y)=4)$.
Then $Y$ is semiregular.
\end{corollary}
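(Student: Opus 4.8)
The plan is to show that $Y=(C_2,[m];E)$ is semiregular by computing separately the degree of each vertex $y\in C_2$ and the degree of each block-index $i\in[m]$, and showing each is independent of the choice. The two previous lemmas do exactly half of this each, so the work is mostly bookkeeping.

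First I would handle the $C_2$-side. The degree of $y\in C_2$ in $Y$ is, by definition, $|\{i\in[m]\mid (\exists x\in B_i)(c(x,y)=4)\}|=|E(y)|$ in the notation of Lemma~\ref{lem:comp-otherside2}. That lemma asserts precisely that $|E(y)|$ does not depend on $y$, so every vertex of $C_2$ has the same degree in $Y$, call it $d_{C_2}$.

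Next I would handle the $[m]$-side. The degree of $i\in[m]$ in $Y$ is $|\{y\in C_2\mid (\exists x\in B_i)(c(x,y)=4)\}|$, which is exactly $|M_i|$ in the notation of Lemma~\ref{lem:comp-neighborhood} (there $M_i$ is defined as the set of $v\in C_2$ such that some $w\in B_i$ has $c(w,v)=4$). By Lemma~\ref{lem:comp-neighborhood}, $|M_1|=\dots=|M_m|$, so every vertex of $[m]$ has the same degree in $Y$, call it $d_{[m]}$. Since $Y$ is bipartite with parts $C_2$ and $[m]$, counting edges two ways gives $|C_2|\cdot d_{C_2}=|E|=m\cdot d_{[m]}$, so both sides have constant degree; hence $Y$ is semiregular, as claimed.

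The only thing to be careful about is matching up the definitions in the statement of the Corollary with those in the three lemmas — in particular checking that the relation $E$ in the Corollary is indeed the same as the $M_i$ of Lemma~\ref{lem:comp-neighborhood} viewed from one side and the $E(y)$ of Lemma~\ref{lem:comp-otherside2} viewed from the other. There is no real obstacle; the substantive content (the three coherence-based counting arguments) is already in Lemmas~\ref{lem:comp-neighborhood}, \ref{lem:comp-otherside}, and \ref{lem:comp-otherside2}, and the Corollary is just their immediate combination. (Note that the quantity $|M(x,y)|$ of Lemma~\ref{lem:comp-otherside} is used only as the intermediate step to derive Lemma~\ref{lem:comp-otherside2} and is not needed directly here.)
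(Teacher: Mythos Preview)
Your proposal is correct and follows essentially the same approach as the paper: the paper's proof simply notes that regularity on the $[m]$ side is the content of Lemma~\ref{lem:comp-neighborhood} and regularity on the $C_2$ side is the content of Lemma~\ref{lem:comp-otherside2}. Your edge-counting observation $|C_2|\cdot d_{C_2}=m\cdot d_{[m]}$ is true but unnecessary, since semiregularity only requires that each side separately have constant degree.
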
  
\begin{proof}
Regularity on the $[m]$ side is the content of 
Lemma~\ref{lem:comp-neighborhood}.  Regularity on the $C_2$
side is the content of Lemma~\ref{lem:comp-otherside2}. 
\end{proof}

\subsubsection{Twins in classical coherent configurations} 
\begin{observation}
If $x,y$ are weak twins then $c(x)=c(y)$.
\end{observation}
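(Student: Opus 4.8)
The plan is to unwind the definition of ``weak twins for $\xxx$'' and then use the single fact that an automorphism of a relational structure preserves each constituent, in particular the diagonal ones. By Definition~\ref{def:rel-defect}, the assertion that $x$ and $y$ are weak twins for $\xxx$ means that they are weak twins for the group $\aut(\xxx)$; so either the transposition $\tau=(x,y)$ lies in $\aut(\xxx)$, or there is a point $z\notin\{x,y\}$ with the $3$-cycle $\sigma=(x,y,z)\in\aut(\xxx)$. In either case we obtain an element $g\in\aut(\xxx)$ with $x^{g}=y$.

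Now I would invoke that $\aut(\xxx)=\iso(\xxx,\xxx)$ consists of the bijections $g:\Omega\to\Omega$ satisfying $R_i^{g}=R_i$ for every constituent $R_i$. Since the vertex color $c(x)$ is by definition the color $c(x,x)$ of the diagonal pair, i.e., the index $i$ with $(x,x)\in R_i$, applying $g$ yields $(x^{g},x^{g})=(y,y)\in R_i^{g}=R_i$, so $c(y)=i=c(x)$, as desired.

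I expect no genuine obstacle here: the statement holds for any configuration (indeed for any relational structure carrying a diagonal color partition) and uses neither coherence nor the classical ($k=2$) hypothesis beyond the diagonal-color convention. The only steps requiring a modicum of care are the translation, via Definition~\ref{def:rel-defect}, from twins of $\xxx$ to twins of $\aut(\xxx)$, and the remark that every automorphism of $\xxx$ fixes each constituent — in particular each diagonal constituent — setwise, which is exactly what forces vertex colors to be preserved.
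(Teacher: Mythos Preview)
Your proof is correct and follows the same idea as the paper: the paper's proof is the single sentence ``$x$ and $y$ are by definition in the same orbit of $\aut(\xxx)$,'' which is exactly what you have spelled out in detail.
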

\begin{proof}
$x$ and $y$ are by definition in the same orbit of $\aut(\xxx)$.
\end{proof}
\begin{observation}[Weak is strong]    \label{obs:weakstrong}
If four vertices are pairwise weak twins then they are
pairwise strong twins.
\end{observation}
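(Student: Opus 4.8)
The plan is to reduce the statement to the case $k=2$ of Proposition~\ref{prop:alt-sym}. Indeed, a (classical) coherent configuration $\xxx=(\Omega;R_1,\dots,R_r)$ is in particular a $2$-ary relational structure on $\Omega$, so Proposition~\ref{prop:alt-sym} applies to it with $k=2$, where the relevant size threshold is $k+2=4$.

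First I would let $\Psi=\{x_1,x_2,x_3,x_4\}\subseteq\Omega$ be a set of four vertices that are pairwise weak twins for $\xxx$ (equivalently, for $\aut(\xxx)$, by Definition~\ref{def:rel-defect}). By Definition~\ref{def:symmetrical}, $\Psi$ is then a weakly symmetrical set for $\xxx$: it has at least two elements and all of its pairs are weak twins. Next, since $|\Psi|=4\ge k+2$ with $k=2$, Proposition~\ref{prop:alt-sym} tells us that $\Psi$ is in fact strongly symmetrical, i.e., $\sym(\Psi)\le\aut(\xxx)$. In particular, for every pair $x_i\neq x_j$ in $\Psi$ the transposition $(x_i,x_j)$ belongs to $\aut(\xxx)$, which by Definition~\ref{def:twins} (transferred to structures via Definition~\ref{def:rel-defect}) means precisely that $x_i$ and $x_j$ are strong twins. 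This is the desired conclusion.

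There is essentially no obstacle here: the entire content sits in Proposition~\ref{prop:alt-sym}, and the only thing to check is the numerology, namely that the threshold $k+2$ equals $4$ when $k=2$, so that four pairwise weak twins are exactly enough to force pairwise strong twins in the binary setting. If one preferred a self-contained argument, one could instead rerun the proof of Proposition~\ref{prop:alt-sym} verbatim with $k=2$: from $|\Psi|\ge 3$ and Observation~\ref{obs:weak-alt} one gets $\alt(\Psi)\le\aut(\xxx)$, and then any witness tuple contradicting a transposition $(a,b)$ with $a,b\in\Psi$ leaves at least two further points $u,v\in\Psi$ untouched, so that $(a,b)(u,v)\in\alt(\Psi)\le\aut(\xxx)$ acts on the witness exactly as $(a,b)$ does, a contradiction. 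But invoking Proposition~\ref{prop:alt-sym} directly is cleaner.
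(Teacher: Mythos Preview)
Your proposal is correct and matches the paper's own proof exactly: the paper simply states that this is the special case $k=2$ of Proposition~\ref{prop:alt-sym}. Your elaboration of the numerology and the optional self-contained rerun are accurate but unnecessary for the paper's purposes.
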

\begin{proof}
This is a special case of Prop.~\ref{prop:alt-sym} (case $k=2$).
\end{proof}

Next we formalize the statement  
that $\xxx$ is ``aware'' of the ``strong twin'' relation.

\begin{lemma}[Strong twin characterization] \label{st-char}
Let $\xxx$ be a coherent configuration and let $x\neq y$ be vertices.
Let $c(x,x)=\ell$ and $c(x,y)=i$; so $i\neq\ell$.  Then $x$ and $y$ are
strong twins if and only if all of the following hold:
\begin{enumerate}[(a)]
\item   $i=i^-$                    \label{it:sa}
\item   $p(i,i,i)=p(i,i,\ell)-1$   \label{it:sb}
\item   $(\forall j\notin\{\ell,i\})(p(j^-,j,\ell)=p(j^-,j,i))$.  \label{it:sc}
\end{enumerate}
\end{lemma}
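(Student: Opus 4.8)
The plan is to unwind the definition of \emph{strong twins} into a handful of elementary conditions on the colours of ordered pairs that involve $x$ or $y$, and then to recognize those conditions, one family at a time, as the stated identities among structure constants.

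First I would recall that, by Definition~\ref{def:rel-defect}, $x$ and $y$ are strong twins for $\xxx$ exactly when the transposition $\tau=(x,y)$ lies in $\aut(\xxx)$, i.e.\ when $c(u,v)=c(u^{\tau},v^{\tau})$ for all $u,v\in\Omega$. As $\tau$ fixes every point outside $\{x,y\}$, this is equivalent to the conjunction of: (1)~$c(x,z)=c(y,z)$ for all $z\notin\{x,y\}$; (2)~$c(z,x)=c(z,y)$ for all $z\notin\{x,y\}$; (3)~$c(x,y)=c(y,x)$; and (4)~$c(x,x)=c(y,y)$. Condition~(3) is literally $i=i^{-}$, i.e.\ (a). Assuming (a), the point $x$ is an out-vertex of $X_i$ (since $c(x,y)=i$) and also an in-vertex of $X_i$ (since $c(y,x)=i^{-}=i$), and likewise for $y$; so Observation~\ref{obs:vx-color} yields $c(x)=c(y)$. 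Thus (4) follows from (a), and in particular $c(y,y)=\ell$. Finally, $R_\ell\subseteq\diag(\Omega)$ because $\ell$ is a diagonal colour, so $\ell^{-}=\ell$, and hence (using (a)) the pair $\{\ell,i\}$ is closed under $s\mapsto s^{-}$.

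The heart of the argument translates (1) and (2) into (b) and (c). For a vertex $v$ and colour $j$ write $N^{+}_j(v)=\{z:c(v,z)=j\}$ and $N^{-}_j(v)=\{z:c(z,v)=j\}$, and note $N^{+}_j(v)=N^{-}_{j^{-}}(v)$. Evaluating the structure constant along a diagonal pair $(x,x)$, resp.\ $(y,y)$, in $R_\ell$ gives $p(j^{-},j,\ell)=|N^{-}_j(x)|$, resp.\ $=|N^{-}_j(y)|$, and these are equal since $c(x)=c(y)$; evaluating along $(x,y)\in R_i$ gives $p(j^{-},j,i)=|N^{-}_j(x)\cap N^{-}_j(y)|$. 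Since $N^{-}_j(x)\cap N^{-}_j(y)\subseteq N^{-}_j(x)$ and the sets $N^{-}_j(x)$, $N^{-}_j(y)$ are equinumerous, the numerical identity $p(j^{-},j,\ell)=p(j^{-},j,i)$ is equivalent to the set identity $N^{-}_j(x)=N^{-}_j(y)$. For $j\notin\{\ell,i\}$ neither $x$ nor $y$ lies in $N^{-}_j(x)$ or $N^{-}_j(y)$ (using (a)), so this set identity is precisely the statement that (2) holds for the colour $j$; and via $N^{+}_j=N^{-}_{j^{-}}$ together with the closure of $\{\ell,i\}$ under $s\mapsto s^{-}$, the family of these identities over all $j\notin\{\ell,i\}$ is equivalent to the family of statements that (1) holds for each colour outside $\{\ell,i\}$. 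Hence (c) is equivalent to ``(1) and (2) hold for every colour $\notin\{\ell,i\}$.''

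What remains is the colour $i$, and this is the content of (b). By (a), $N^{-}_i(v)=N^{+}_i(v)$ for every $v$. Now $y\in N^{+}_i(x)$, $x\notin N^{+}_i(x)$, $x\in N^{+}_i(y)$, $y\notin N^{+}_i(y)$; so $N^{+}_i(x)\cap N^{+}_i(y)$ contains neither $x$ nor $y$ and equals $(N^{+}_i(x)\setminus\{x,y\})\cap(N^{+}_i(y)\setminus\{x,y\})$, while each of $N^{+}_i(x)\setminus\{x,y\}$ and $N^{+}_i(y)\setminus\{x,y\}$ has $\deg^{+}(i)-1$ elements. Computing structure constants, $p(i,i,i)=|N^{+}_i(x)\cap N^{-}_i(y)|=|N^{+}_i(x)\cap N^{+}_i(y)|$ and $p(i,i,\ell)=|N^{+}_i(x)\cap N^{-}_i(x)|=|N^{+}_i(x)|=\deg^{+}(i)$, so $p(i,i,i)=p(i,i,\ell)-1$ holds iff those two $(\deg^{+}(i)-1)$-element sets coincide, i.e.\ iff (1) holds for the colour $i$; by $N^{+}_i=N^{-}_i$ this is also equivalent to (2) holding for the colour $i$ (the colour $\ell$ contributes nothing, since $N^{\pm}_\ell(x)\setminus\{x,y\}=\emptyset=N^{\pm}_\ell(y)\setminus\{x,y\}$). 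Assembling: $\tau\in\aut(\xxx)$ $\iff$ (1)--(4) all hold $\iff$ (a)$\wedge$(b)$\wedge$(c). I expect the one delicate point to be the bookkeeping for the colour $i$: because $x$ and $y$ themselves lie inside the $i$-neighbourhoods, one must track exactly which of them is in which neighbourhood and verify that the structure-constant equalities encode genuine set equalities, not just inclusions (coherence, via the fact that $|N^{-}_j(v)|$ depends only on $c(v)$, supplies the reverse inclusion), all while keeping the colour $j$ and its transpose $j^{-}$ straight throughout.
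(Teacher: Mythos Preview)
Your proof is correct and follows essentially the same approach as the paper: both interpret $p(j^{-},j,\ell)$ and $p(j^{-},j,i)$ as $|N^{-}_j(x)|$ and $|N^{-}_j(x)\cap N^{-}_j(y)|$, then use the subset-plus-equal-cardinality argument to convert the numerical identities into the set identities $N^{-}_j(x)=N^{-}_j(y)$. The only difference is organizational: the paper proves necessity and sufficiency as two separate passes, whereas you establish each of (a), (b), (c) directly as an equivalence with a piece of the condition ``$\tau=(x,y)\in\aut(\xxx)$'', which is slightly more economical.
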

\begin{proof}
I. Necessity.  By definition, $x,y$ are strong twins
if and only if $c(x,y)=c(y,x)$ and
$(\forall w\notin\{x,y\})(c(w,x)=c(w,y))$.
The first of these is the statement $i=i^-$, proving item~\eqref{it:sa}.

Let now $j\neq\ell$.   If $x$ is not an in-vertex of $X_j$ then $j\neq i$
and $p(j^-,j,\ell)=p(j^-,j,i)=0$ (because $x$ is an in-vertex 
of both $X_{\ell}$ and $X_i$).  Assume now that 
$x$ is an in-vertex of $X_j$, and let 
$J=\{w\in\Omega \mid c(w,x)=j\}$.  Let $W=\{x,y\}$.
So $J\cap W=\emptyset$ if $j\neq i$ and
$J\cap W=\emptyset$ if $j\neq i$ and
$J\cap W=\{y\}$ if $j=i$.
Note that $|J|=\deg^-(j)=p(j^-,j,\ell)$.
But $c(u,y)=c(u,x)$ for all $u\notin W$, so
$J=\{w\in\Omega\setminus W \mid c(w,x)=c(w,y)=j\}$
and therefore $|J|=p(j^-,j,i)$ if $j\neq i$ and
$|J|=p(i,i,i)+1$ if $j=i$.  These two equations
prove items \eqref{it:sc} and \eqref{it:sc}, respectively.

\medskip\noindent
II. Sufficiency.  Assume items \eqref{it:sa}--\eqref{it:sc}.
Let $W=\{x,y\}$.
Given that $c(x,y)=c(y,x)$ (item~\eqref{it:sa}), 
we only need to show that $c(u,x)=c(u,y)$
for all $u\notin W$.  

Let $u\notin W$; let $j=c(u,x)$; so $j\neq \ell$.
Let $J=\{w\in\Omega \mid c(w,x)=j\}$.  So $|J|=\deg^-(j)$.
Furthermore, $J\cap W=\emptyset$ if $j\neq i$ and
$J\cap W=\{y\}$ if $j=i$.
Let $J'=\{w\in\Omega \mid c(w,x)=c(w,y)=j\}$.
Obviously, $J'\subseteq J$.  On the other hand,
$|J'|=p(j^-,j,i)$ which is equal to $p(j^-,j,\ell)=|J|$
if $j\neq i$ by item~\eqref{it:sc}, hence $J=J'$ in this case.   
Since $u\in J$, it follows that $u\in J'$ and
therefore $c(u,y)=j$.  Assume now $j=i$.  In this case
$J'\cup\{y\}\subseteq J$ and 
$|J'\cup\{y\}|=|J'|+1=p(i,i,i)+1=p(i,i,\ell)=|J|$.
It follows that $J'\cup\{y\}= J$.  Since $u\in J\setminus \{y\}$,
we conclude that $u\in J'$ and therefore, $c(u,y)=i=c(u,x)$.
\end{proof}

\begin{corollary}[Strong-twin awareness]   \label{cor:st-aware}
Let $x,y,u,v$ be vertices of the coherent configuration $\xxx$.
Assume $c(x)=c(u)$.  Assume further that $x,y$ are strong twins.  Then $u,v$
are strong twins if and only if $c(x,y)=c(u,v)$.
\end{corollary}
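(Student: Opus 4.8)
The plan is to deduce this directly from Lemma~\ref{st-char} (Strong twin characterization), exploiting the fact that every condition appearing there --- the equalities among structure constants $p(\cdot,\cdot,\cdot)$ and the self-paired condition $i=i^-$ --- depends only on the color $c(x,y)$ of the pair in question (together with the vertex color $c(x)=c(x,x)$, which by Obs.~\ref{obs:vx-color} is itself determined by $c(x,y)$). So the criterion for being a strong twin pair is, verbatim, a predicate on the single color $c(x,y)$, once we know the vertex color. The hypothesis $c(x)=c(u)$ guarantees that $x$ and $u$ have the same vertex color $\ell$, so the predicates attached to the pairs $(x,y)$ and $(u,v)$ are evaluated against the \emph{same} diagonal color $\ell$.

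First I would set $\ell = c(x,x) = c(u,u)$ (equal by hypothesis), $i=c(x,y)$, and $i'=c(u,v)$. For the forward direction, assume $u,v$ are strong twins. Applying Lemma~\ref{st-char} to the pair $x,y$ gives: $i=i^-$, $p(i,i,i)=p(i,i,\ell)-1$, and $(\forall j\notin\{\ell,i\})(p(j^-,j,\ell)=p(j^-,j,i))$. Applying it to $u,v$ gives the analogous three statements with $i'$ in place of $i$ (and the same $\ell$). Now suppose for contradiction $i\neq i'$. I would argue that this is impossible by a counting/degree argument internal to the configuration: since $x$ and $y$ are strong twins, $y$ is the unique vertex with $c(y,y)=\ell$, $c(x,y)=i$, $c(y,x)=i$, and $c(w,x)=c(w,y)$ for all other $w$; one then checks that the color $i=c(x,y)$ is forced to be the ``strong-twin color at $\ell$.'' Concretely, among the out-vertices of color $\ell$, the transposition $(x,y)\in\aut(\xxx)$ is in $G:=\aut(\xxx)$; since $(u,v)\in G$ as well with $c(u,u)=\ell$, both transpositions lie in $G$ and both $y,v$ are strong twins of points of vertex-color $\ell$. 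The key point is that within a fixed vertex-color class, the ``strong twin'' relation is an equivalence relation (Obs. in Sec.~\ref{sec:twins}) whose classes are among the constituents: if $x\sim y$ and the class has size $\ge 2$, then $c(x,y)$ is constant on ordered pairs of distinct equivalent points. I would show $c(x,y)=c(u,v)$ by transporting: since $G\ge\langle (x,y),(u,v)\rangle$ and all of $x,y,u,v$ share... here I must be careful, as $x$ and $u$ need not be equivalent. The clean route is purely via Lemma~\ref{st-char}: the lemma shows strong-twinhood of a pair of distinct vertices of vertex-color $\ell$ is equivalent to $c(x,y)$ satisfying a fixed property $P_\ell$, and moreover the property $P_\ell$ is satisfied by \emph{exactly one} color $i$ (because the strong-twin class of $x$, if nontrivial, determines a unique off-diagonal color, namely the common value of $c(x,y)$ over $y\neq x$ in that class, and distinct classes of the same vertex-color give the same color by homogeneity within the class). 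Hence $i=i'$, i.e. $c(x,y)=c(u,v)$.

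For the converse, assume $c(x,y)=c(u,v)$, i.e. $i'=i$. Since $x,y$ are strong twins, Lemma~\ref{st-char} yields conditions \eqref{it:sa}--\eqref{it:sc} for the color $i$ relative to $\ell=c(x,x)$. But $c(u,u)=c(x,x)=\ell$ and $c(u,v)=i$, so conditions \eqref{it:sa}--\eqref{it:sc} hold verbatim for the pair $u,v$ (they mention only $i$, $\ell$, and structure constants). By the sufficiency direction of Lemma~\ref{st-char}, $u,v$ are strong twins.

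The main obstacle I expect is the uniqueness claim in the forward direction --- showing that property $P_\ell$ pins down the off-diagonal color $i$ uniquely, so that $i=i'$ follows. If that turns out to be awkward to extract cleanly, the fallback is to observe that it suffices to know $c(x,y)$ determines whether $(x,y)$ is a strong twin pair \emph{given} the vertex colors, which is exactly what Lemma~\ref{st-char} says, and then note that ``strong twins'' is a symmetric relation on a fixed vertex-color class whose graph is a disjoint union of cliques each of whose edges carries a single color; within one vertex-color class all these cliques carry the same off-diagonal color by the automorphism-orbit argument (Obs. at the start of Sec.~\ref{sec:twins}), giving $i=i'$.
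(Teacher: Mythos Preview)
Your converse direction (assuming $c(x,y)=c(u,v)$ and concluding that $u,v$ are strong twins) is correct and is exactly what the paper does: the conditions \eqref{it:sa}--\eqref{it:sc} of Lemma~\ref{st-char} depend only on the colors $i=c(x,y)$ and $\ell=c(x,x)=c(u,u)$, so they transfer verbatim to the pair $(u,v)$.

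The forward direction, however, has a genuine gap. You reduce it to the claim that the predicate $P_\ell$ (i.e., conditions \eqref{it:sa}--\eqref{it:sc}) is satisfied by \emph{at most one} off-diagonal color $i$, but you do not prove this. Your justifications---``distinct classes of the same vertex-color give the same color by homogeneity within the class'' and ``by the automorphism-orbit argument''---are circular: they presuppose exactly the awareness statement you are trying to establish. Nothing in Lemma~\ref{st-char} alone says two different colors $i\neq i'$ cannot both satisfy \eqref{it:sa}--\eqref{it:sc} relative to the same $\ell$.

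The paper closes this gap by a short constructive argument rather than abstract uniqueness. Assuming $c(u,v)\neq i$, it uses $c(u)=\ell$ and degree-awareness to find a vertex $w$ with $c(u,w)=i$ (such $w$ exists because $x$ has out-degree $\ge 1$ in color $i$, witnessed by $y$). By the converse direction already proved, $u$ and $w$ are strong twins. Now $u,v,w$ lie in a single strong-twin equivalence class, so the transposition $(v,w)$ is an automorphism of $\xxx$, forcing $c(u,v)=c(u,w)=i$, a contradiction. This is precisely the missing step; once you see it, it also \emph{proves} your uniqueness claim, but you need this concrete construction to get there.
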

\begin{proof}
Set $\ell=c(x)=c(u)$ and $i=c(x,y)$.  If $c(u,v)=i$ then the
conditions listed in Lemma~\ref{st-char} hold for $(u,v)$ because
they hold for $(x,y)$; therefore $u,v$ are strong twins.

What remains to be shown is that if $u,v$ are strong twins then
$c(u,v)=i$.  Assume $c(u,v)\neq i$.  But $c(u)=\ell$, so there is
a vertex $w$ such that $c(u,w)=i$; so $w\neq v$.  By the result of
the previous paragraph, $u$ and $w$ are strong twins.  So
$u,v,w$ are in the same strong-twin equivalence class; therefore
the transposition $\tau=(v,w)$ is an automorphism of $\xxx$.
It follows that $i=c(u,w)=c(u,v)$, a contradiction, proving
that $c(u,v)=i$.
\end{proof}

\begin{corollary}[Strong-twin equipartition]   \label{cor:st-equi}
Let $\xxx$ be a coherent configuration.  Then
every vertex-color class that includes strong twins
is equipartitioned by its strong-twin equivalence classes.
\end{corollary}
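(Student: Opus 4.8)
The plan is to fix a vertex-color class $C_j$ that contains a pair of strong twins and show that its strong-twin equivalence classes all have the same size. Pick $x\in C_j$ that belongs to some nontrivial strong-twin equivalence class, and let $i=c(x,y)$ for any strong twin $y$ of $x$; by Corollary~\ref{cor:st-aware} the color $i$ depends only on $j$, not on the particular strong-twin pair inside $C_j$, since all vertices of $C_j$ share the color $\ell=c(x)$. So there is a single ``strong-twin color'' $i=i(C_j)$ associated with the class.

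Next I would identify each strong-twin equivalence class combinatorially. By Corollary~\ref{cor:st-aware}, for any $u\in C_j$ the strong twins of $u$ inside $\Omega$ are exactly those $v$ with $c(u,v)=i$ (using $c(u)=c(x)=\ell$ and the fact that $x,y$ are strong twins, the corollary applies with this fixed $i$); and any such $v$ automatically lies in $C_j$ by vertex-color awareness (Obs.~\ref{obs:vx-color}), being an out-vertex of $X_i$. Hence the strong-twin equivalence class of $u$ in $C_j$ is $\{u\}\cup\{v\mid c(u,v)=i\}$, a set of size $1+\deg^+(i)$ by degree awareness (Obs.~\ref{obs:deg-aware}), where $\deg^+(i)$ is the common out-degree of out-vertices of the constituent $X_i$. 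This count is the same for every $u\in C_j$, so all strong-twin equivalence classes inside $C_j$ have size $1+\deg^+(i)$, which is the desired equipartition.

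The one point that needs a little care is checking that the nontrivial strong-twin equivalence classes (the ones with $\ge 2$ elements) genuinely partition $C_j$ into blocks of equal size, rather than leaving some singletons: but this is automatic once we know that $C_j$ contains \emph{at least one} strong-twin pair, since then $\deg^+(i)\ge 1$, and by the characterization above every $u\in C_j$ has exactly $\deg^+(i)\ge 1$ strong twins, so no vertex of $C_j$ is a singleton class. I expect the main (minor) obstacle to be bookkeeping the hypotheses of Corollary~\ref{cor:st-aware}: one must consistently invoke it with the \emph{same} anchor pair $(x,y)$ of strong twins and verify $c(\cdot)=\ell$ for all vertices in play, which holds precisely because we have restricted attention to a single vertex-color class. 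No further machinery is needed.
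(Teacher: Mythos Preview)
Your proof is correct and follows essentially the same approach as the paper: both use Corollary~\ref{cor:st-aware} to identify the strong-twin relation on a color class $C_j$ with the color-$i$ constituent $X_i$. The only difference is in the finish: the paper observes that the equivalence classes are the connected components of $X_i$ and cites Proposition~\ref{prop:CChomog} for the equipartition, whereas you compute the class size directly as $1+\deg^+(i)$ via degree awareness. Your version is slightly more hands-on but equally valid; note that your appeal to Observation~\ref{obs:deg-aware} implicitly uses that the out-degree in $X_i$ is a structure constant depending only on the vertex color $\ell$ (which is what its proof shows), so that \emph{every} $u\in C_j$, not just the a~priori out-vertices, has out-degree $\deg^+(i)$.
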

\begin{proof}
By Cor.~\ref{cor:st-aware}, these equivalence classes are the
connected components of a color constituent, and therefore
they have equal size by Prop.~\ref{prop:CChomog}.
\end{proof}

It is easy to see that ternary coherent configurations are ``aware''
of the ``weak twin'' relation.

Next we show the somewhat less obvious fact that (classical)
coherent configurations are also ``aware'' of the ``weak twin'' 
relation.

\begin{lemma}[Weak twin characterization] \label{lem:wt-char}
Let $\xxx$ be a coherent configuration and let $x\neq y$ be vertices.
Let $c(x,x)=\ell$ and $c(x,y)=i$; so $i\neq\ell$.  Then $x$ and $y$ are
weak but not strong twins if and only if all of the following hold:
\begin{enumerate}[(a)]
\item $i^-\neq i$           \label{it:wa}
\item $p(i^-,i^-,i)=1$   \label{it:wb}
\item $p(i,i,i)=0$             \label{it:wc}
\item $\deg^+(i)=\deg^-(i)=1$  \label{it:wd}
\item $(\forall j\notin\{\ell,i,i^-\})(p(j^-,j,\ell)=p(j^-,j,i))$. 
   \label{it:we}
\end{enumerate}
\end{lemma}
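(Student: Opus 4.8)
The plan is to mirror the structure of the proof of Lemma~\ref{st-char} (the strong twin characterization), adapting each of the five conditions to the weak-but-not-strong situation, which is governed by the presence of a 3-cycle $(x,y,z)\in\aut(\xxx)$ but the absence of the transposition $(x,y)$. Recall from Obs.~\ref{obs:weakstrong} and Prop.~\ref{prop:alt-sym} that ``weak twin'' classes behave like alternating groups; the color constituent $X_i$ with $i=c(x,y)$ will play the role of the Cayley-type graph recording the relation. First I would observe that if $x,y$ are weak but not strong twins, then the smallest weakly symmetrical set $\Psi$ containing them has $|\Psi|\ge 3$ with $\alt(\Psi)\le\aut(\xxx)$ but $\sym(\Psi)\not\le\aut(\xxx)$; in particular there is a $z$ with the 3-cycle $(x,y,z)$ an automorphism. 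Since a 3-cycle is not an involution, applying it to $(x,y)$ shows $c(x,y)=c(y,z)=c(z,x)$, while $c(y,x)=c(z,y)=c(x,z)$, and these two colors $i$ and $i^-$ are \emph{distinct} (if they coincided, the full $\sym(\Psi)$ argument of Cor.~\ref{cor:st-aware} would make $x,y$ strong twins). This gives necessity of item~\eqref{it:wa}.

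The next step is to analyze the constituent $X_i$ restricted to $\Psi$. Since $\alt(\Psi)$ acts on $\Psi$ and the relation $R_i$ is invariant, on $\Psi$ the constituent $X_i$ is a single directed 3-cycle-type structure: every vertex of $\Psi$ has exactly one out-neighbor and one in-neighbor of color $i$ \emph{inside} $\Psi$, and by vertex-color and degree awareness (Obs.~\ref{obs:vx-color}, Obs.~\ref{obs:deg-aware}), once we also know that outside $\Psi$ no color-$i$ edges emanate from $x$ or $y$, we get $\deg^+(i)=\deg^-(i)=1$ globally, which is item~\eqref{it:wd}. To see that $X_i$ has no edges leaving $\Psi$ from these vertices: if $c(x,w)=i$ for some $w\notin\Psi$, then applying an element of $\alt(\Psi)$ fixing $x$ and moving $y$ shows $c(y,w)=i$ as well, but then (since $|\Psi|\ge 3$ we can also use a 3-cycle on $\Psi$ to conclude $x,y,w$ lie in a common weak-twin structure, forcing $w\in\Psi$) — here I would lean on transitivity of the weak-twin relation established right after Def.~\ref{def:twins}. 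With $\Psi=\{x,y,z\}$ exactly (if $|\Psi|>3$ one adapts: actually the relevant local picture is always a $3$-cycle of color $i$ living on a $3$-element subset, since we chose $z$ so that $(x,y,z)\in\aut(\xxx)$), we get $p(i,i,i)=0$ (no color-$i$ path of length two from $x$ back to $x$-colored — a color-$i$ walk of length $2$ from $x$ lands at $z$ then would need to return, but the $3$-cycle has length $3$), giving item~\eqref{it:wc}, and $p(i^-,i^-,i)=1$ (the unique way to reach the color-$i$ out-neighbor of $x$ by two backward-$i$, i.e.\ forward-$i^-$, steps), giving item~\eqref{it:wb}. Item~\eqref{it:we} is the weak-twin analogue of item~\eqref{it:sc}: for colors $j$ not in $\{\ell,i,i^-\}$, the in-neighborhoods of $x$ and $y$ in color $j$ agree (because $w\mapsto$ color $j$ to $x$ iff to $y$, as the 3-cycle maps one to the other while fixing $w\notin\Psi$), which translates into $p(j^-,j,\ell)=p(j^-,j,i)$ by counting $|\{w:c(w,x)=j\}|$ two ways, exactly as in Lemma~\ref{st-char}.

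For sufficiency I would run the converse: assume \eqref{it:wa}--\eqref{it:we}. From \eqref{it:wd} the vertex $x$ has a unique color-$i$ out-neighbor, call it $z$, and $z\neq y$ need not hold a priori — but if $z=y$ then, combined with \eqref{it:wa} ($i\ne i^-$) and the structure-constant conditions, one recovers the strong-twin conditions of Lemma~\ref{st-char}, contradiction; so actually the conditions are arranged precisely to force $z\ne y$ and the triangle $x\to y\to z\to x$ all of color $i$ via \eqref{it:wb},\eqref{it:wc}. Then I would show the 3-cycle $\sigma=(x,y,z)$ is an automorphism: it suffices to check $\sigma$ preserves every constituent, which by coherence reduces to checking the incidences among $\{x,y,z\}$ and the in-neighborhood colors of these three vertices from outside, and conditions \eqref{it:we} plus degree/vertex-color awareness handle exactly that bookkeeping, in parallel with part~II of the proof of Lemma~\ref{st-char}. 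Finally, $x,y$ are not strong twins because $i\ne i^-$ (item~\eqref{it:wa}) means $c(x,y)\ne c(y,x)$, so the transposition $(x,y)\notin\aut(\xxx)$.

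The main obstacle I expect is the careful handling of the case $|\Psi|>3$ and, relatedly, pinning down that no color-$i$ edges escape the triangle — i.e.\ proving that the ``unique out-neighbor'' structure is genuinely a $3$-cycle and not, say, a longer directed cycle or a more complicated regular digraph of out-degree $1$. One has to use that some $3$-cycle (not merely some even permutation) lies in $\aut(\xxx)$, and that coherence propagates the local triangle globally; the transitivity-of-weak-twins lemma and Prop.~\ref{prop:CChomog} (connected components of homogeneous constituents are strongly connected and equipartition the color class) are the tools that close this gap, since a connected component of $X_i$ of out-degree $1$ that is strongly connected must be a single directed cycle, and the $3$-cycle automorphism forces its length to be $3$. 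Everything else is a routine transcription of the Lemma~\ref{st-char} argument with ``transposition'' replaced by ``3-cycle.''
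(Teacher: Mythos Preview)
Your overall strategy---mirror the strong-twin proof with a 3-cycle in place of a transposition---is exactly what the paper does, but you are manufacturing your own ``main obstacle.'' You cite Obs.~\ref{obs:weakstrong} but never apply its content: four pairwise weak twins are pairwise strong twins. Hence if $x,y$ are weak but not strong twins, their weak-twin equivalence class $W$ has \emph{exactly} three elements, $W=\{x,y,z\}$ (at least three by the definition of weak-but-not-strong; at most three by Obs.~\ref{obs:weakstrong}). The case $|\Psi|>3$ simply does not occur, and your proposed detour through Prop.~\ref{prop:CChomog} and directed-cycle structure is unnecessary.

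Once you have the single 3-cycle $\sigma=(x,y,z)\in\aut(\xxx)$, the only fact you need is that $\sigma$ fixes every $w\notin\{x,y,z\}$, so $c(w,x)=c(w,y)=c(w,z)$ and $c(x,w)=c(y,w)=c(z,w)$ for all such $w$. You use this correctly for item~\eqref{it:we}, but for items~\eqref{it:wb}--\eqref{it:wd} you reach for ``an element of $\alt(\Psi)$ fixing $x$ and moving $y$,'' which is the wrong map: to deduce $c(y,w)=i$ from $c(x,w)=i$ you need an automorphism sending $x\mapsto y$ while fixing $w$---and that is precisely $\sigma$ itself (for $w\neq z$). With this correction, the paper's order (a), (b), (c), (d) goes through: (a) because $i=i^-$ would make $c(z,x)=c(z,y)$ and hence $(x,y)\in\aut(\xxx)$; (c) because a witness to $p(i,i,i)\ge 1$ would give $w\notin\{x,y,z\}$ with $c(x,w)=c(w,y)=i$, forcing $i=i^-$; (d) because $\deg^+(i)\ge 2$ would produce $w\notin\{x,y,z\}$ with $c(x,w)=c(y,w)=i$, whence $p(i,i,i)\ge 1$; (b) similarly.

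In the sufficiency direction your construction of $z$ is miswritten: the ``unique color-$i$ out-neighbor of $x$'' is $y$, not a new vertex. Take instead the unique $z$ with $c(z,x)=c(y,z)=i$, which exists by item~\eqref{it:wb} (this is simultaneously the unique color-$i$ out-neighbor of $y$ and the unique color-$i$ in-neighbor of $x$, by~\eqref{it:wd}). Then item~\eqref{it:wd} forces $c(u,x)\notin\{i,i^-\}$ for $u\notin\{x,y,z\}$, and item~\eqref{it:we} handles the remaining colors exactly as in Lemma~\ref{st-char}, yielding $(x,y,z)\in\aut(\xxx)$.
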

\begin{proof}
I. Necessity.  Assume $x,y$ are weak but not strong twins.
Let $W$ denote the weak-twin equivalence class of $x,y$.
If $|W|=2$ then $x,y$ are strong twins by definition;
if $|W|\ge 4$ then $x,y$ are strong twins by 
Obs.~\ref{obs:weakstrong}.
Therefore $|W|=3$.  Let $\sigma=(x,y,z)\in\aut(\xxx)$ be an 
automorphism witnessing that $x,y$ are weak twins; so this $z$ 
is unique and $W=\{x,y,z\}$.  It follows that 
$c(x,y)=c(y,z)=c(z,x)=i$ and
for every $w\notin W$ we have $c(w,x)=c(w,y)=c(w,z)$.

If $i=i^-$ then it follows that $c(z,x)=c(z,y)=i$, so
the transposition $\tau=(x,y)$ is an automorphism, hence
$x,y$ are strong twins, a contradition, proving item~\eqref{it:wa}.

The walk $x\stackrel{i^-}{\to} z\stackrel{i^-}{\to} y$
demonstrates that $p(i^-,i^-,i)\ge 1$.  Suppose
for a contradiction that $p(i^-,i^-,i)\ge 2$.
Then $(\exists w\notin W)(c(x,w)=c(w,y)=i^-)$.
But for $w\notin W$ we have $c(w,x)=c(w,y)$,
\ie, $i=i^-$, a contradiction, proving item~\eqref{it:wb}.

To prove item~\eqref{it:wc}, assume for a contradiction that 
$p(i,i,i)\ge 1$.  Then $(\exists w\notin W)(c(x,w)=c(w,y)=i)$.
But for $w\notin W$ we have $c(w,x)=c(w,y)$,
\ie, $i^-=i$, a contradiction, proving item~\eqref{it:wc}.

To prove item~\eqref{it:wd}, assume for a contradiction that 
$\deg^+(i)\ge 2$.  Then $(\exists w\notin W)(c(x,w)=i)$.
Therefore $c(y,w)=i$.  So the walk
$x\stackrel{i}{\to} y\stackrel{i}{\to} w$ demonstrates
that $p(i,i,i) > 0$, a contradiction, proving $\deg^+(i)=1$.  
Now we have $c(x)=c(y)$ (because $\sigma\in\aut(\xxx)$),
hence $X_i$ is a homogeneous component; therefore 
$\deg^-(i)=\deg^+(i)=1$ by Obs.~\ref{obs:deg-aware}.

Finally to prove item~\eqref{it:we}, let $j\notin\{\ell,i,i^-\}$.
If $x$ is not an in-vertex of $X_j$ then 
$p(j^-,j,\ell)=p(j^-,j,i)=0$ (because $x$ is an in-vertex 
of both $X_{\ell}$ and $X_i$).  Assume now that 
$x$ is an in-vertex of $X_j$, and let 
$J=\{w\in\Omega \mid c(w,x)=j\}$.  So $J\cap W=\emptyset$.
Note that $|J|=\deg^-(j)=p(j^-,j,\ell)$.
But $c(u,y)=c(u,x)$ for all $u\notin W$, so
$J=\{w\in\Omega \mid c(w,x)=c(w,y)=j\}$
and therefore $|J|=p(j^-,j,i)$.

\medskip\noindent
II. Sufficiency.  Assume items \eqref{it:wa}--\eqref{it:we}.
By item~\eqref{it:wb} there is a unique vertex $z$ with
$c(z,x)=c(y,z)=i$.  Let $W=\{x,y,z\}$ and let $\sigma$
denote the 3-cycle $\sigma=(x,y,z)\in\sym(\Omega)$.  We claim that
$\sigma\in\aut(\xxx)$.  

Since $\sigma$ is an automorphism of the 3-cycle (digraph) $(x,y,z)$
in $X_i$, we only need to show that $c(u,x)=c(u,y)=c(u,z)$
for all $u\notin W$.  Let $u\notin W$; let $j=c(u,x)$.
It follows from item~\eqref{it:wd} that $j\notin \{\ell,i,i^-\}$.
Let $J=\{w\in\Omega \mid c(w,x)=j\}$.  So $|J|=\deg^-(j)$ and
$J\cap W=\emptyset$.  Let $J'=\{w\in\Omega \mid c(w,x)=c(w,y)=j\}$.
Obviously, $J'\subseteq J$.  On the other hand,
$|J'|=p(j^-,j,i)=p(j^-,j,\ell)=|J|$ by item~\eqref{it:we},
hence $J=J'$.   Since $u\in J$, it follows that $u\in J'$ and
therefore $c(u,y)=j$.  The same argument, starting from the pair
$\{y,z\}$ instead of $\{x,y\}$, shows that $c(u,z)=j$.
\end{proof}

\begin{corollary}[Weak-twin awareness]   \label{cor:wt-aware}
Let $x,y,u,v$ be vertices of the coherent configuration $\xxx$.
Assume $c(x)=c(u)$.  Assume further that $x,y$ are weak twins.  
Then $u,v$ are weak twins if and only if 
$c(u,v)\in \{c(x,y), c(y,x)\}$.
\end{corollary}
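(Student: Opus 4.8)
\end{corollary}

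\begin{remarkk}
\emph{Proof proposal.}
The plan is to replay the proof of Corollary~\ref{cor:st-aware}, substituting the Weak-twin characterization (Lemma~\ref{lem:wt-char}) for the Strong-twin characterization (Lemma~\ref{st-char}), and to organize the argument around the dichotomy ``$x,y$ are strong twins'' versus ``$x,y$ are weak but not strong twins.'' Throughout I set $\ell=c(x)=c(u)$, $i=c(x,y)$, $i^-=c(y,x)$, and write $C_\ell$ for the vertex-colour class of colour $\ell$.

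First I would dispatch the forward implication. Assume $c(u,v)\in\{i,i^-\}$. Replacing the pair $u,v$ by $v,u$ if necessary (which changes neither the truth of ``$u,v$ are weak twins'' nor the condition to be proved, and, by vertex-colour awareness (Observation~\ref{obs:vx-color}) applied to the constituent $X_i$, keeps the first coordinate in colour $\ell$), we may assume $c(u,v)=i$; note $u\neq v$ since $i\neq\ell$. If $x,y$ are strong twins then the conditions of Lemma~\ref{st-char} for the colour pair $(\ell,i)$ hold, hence hold verbatim for $(u,v)$, so $u,v$ are strong twins; if $x,y$ are weak but not strong twins then conditions (a)--(e) of Lemma~\ref{lem:wt-char} for $(\ell,i)$ hold, hence hold for $(u,v)$, so $u,v$ are weak but not strong twins. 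Either way $u,v$ are weak twins.

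For the converse, assume $u,v$ are weak twins. The key preliminary step is that \emph{$x,y$ are strong twins if and only if $u,v$ are.} Indeed, if either colour-$\ell$ pair is a strong-twin pair, then by Corollary~\ref{cor:st-equi} the class $C_\ell$ (which contains both $x$ and $u$) splits into strong-twin classes of a common size $\ge 2$, so $u$ has a strong twin $u^{*}\in C_\ell$; then $u^{*}$ lies in the weak-twin class $W_u$ of $u$, and if $u,v$ were not strong twins we would have $v\neq u^{*}$, so $|W_u|\ge 3$, whence $\alt(W_u)\le\aut(\xxx)$ by Observation~\ref{obs:weak-alt}, and adjoining the transposition $(u,u^{*})\in\aut(\xxx)$ gives $\sym(W_u)\le\aut(\xxx)$, making $u,v$ strong twins after all; the same reasoning with $x$ in place of $u$ gives the other direction. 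Now, if $x,y$ and $u,v$ are all strong twins, then $i=i^-$ and Corollary~\ref{cor:st-aware} (with the roles of $(x,y)$ and $(u,v)$ interchanged) yields $c(u,v)=c(x,y)=i\in\{i,i^-\}$, as desired. Otherwise $x,y$ and $u,v$ are all weak but not strong twins, so conditions (a)--(e) of Lemma~\ref{lem:wt-char} hold both for $(\ell,i)$ and for $(\ell,i')$, where $i':=c(u,v)$; in particular $\deg^{+}(i')=1$. By degree awareness (Observation~\ref{obs:deg-aware}) and $c(x)=\ell$, the vertex $x$ has exactly one out-neighbour $v'$ in the constituent $X_{i'}$, so $c(x,v')=i'$. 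The sufficiency direction of Lemma~\ref{lem:wt-char} (its conditions for $(\ell,i')$ being available) shows $x,v'$ are weak twins, so $v'$ lies in the weak-twin class of $x$, which is the three-element set $\{x,y,z\}$ with $(x,y,z)\in\aut(\xxx)$; since $v'\neq x$, either $v'=y$ and $i'=i$, or $v'=z$ and $i'=c(x,z)=i^-$ (the last equality forced by $(x,y,z)\in\aut(\xxx)$). In both cases $c(u,v)=i'\in\{i,i^-\}=\{c(x,y),c(y,x)\}$.

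The step I expect to be most delicate is the ``strong twins occur simultaneously'' claim: it is what lets us invoke Corollary~\ref{cor:st-aware} in one branch, and in the other branch guarantees that both $(\ell,i)$ and $(\ell,i')$ satisfy the \emph{weak-but-not-strong} conditions of Lemma~\ref{lem:wt-char}, which is exactly what makes the degree-awareness plus sufficiency argument on the auxiliary pair $(x,v')$ go through. The remaining ingredients are routine: the principle that the conditions of Lemmas~\ref{st-char} and~\ref{lem:wt-char} depend only on the colour pair, so a witnessing pair may be transplanted; the elementary fact that a transposition adjoined to $\alt$ on a set generates the full symmetric group; and the bookkeeping identity $c(x,z)=i^-$.
\end{remarkk}
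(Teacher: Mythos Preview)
Your proof is correct and follows essentially the same strategy as the paper: invoke the characterization lemmas (Lemma~\ref{st-char} and Lemma~\ref{lem:wt-char}) for the forward direction, and for the converse in the weak-not-strong case transplant one colour to the other pair and exploit the three-element structure of the weak-twin class (you work on the $x$-side, finding $v'$ with $c(x,v')=i'$, while the paper works on the $u$-side, finding $w$ with $c(u,w)=i$ --- mirror variants of the same idea). You are in fact more thorough than the paper on one point: you explicitly establish, via Corollary~\ref{cor:st-equi} and the $\alt$-plus-transposition argument, that $x,y$ are strong twins if and only if $u,v$ are, which rules out the two ``mixed'' converse cases that the paper's proof passes over in silence.
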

\begin{proof}
Set $\ell=c(x)=c(u)$ and $i=c(x,y)$.  Assume first that $c(u,v)=i$.  
If $x,y$ are strong twins then $c(u,v)$ are also strong twins
by Cor.~\ref{cor:st-aware}.  Moreover, if $u,v$ are strong
twins then $c(u,v)=c(x,y)$, also by Cor.~\ref{cor:st-aware}.

Assume now that $x,y$ are weak but 
not strong twins.  Now conditions \eqref{it:wa}--\eqref{it:we}
hold for $(u,v)$ because they hold for $(x,y)$; therefore $u,v$ are 
weak but not strong twins.

What remains to be shown is that if $u,v$ are weak but not strong twins 
then $c(u,v)\in\{i,i^-\}$.  

Assume $c(u,v)\neq i$.  But $c(u)=\ell$, so there is
a vertex $w$ such that $c(u,w)=i$; so $w\neq v$.  By the result of
the previous paragraph, $u$ and $w$ are weak but not strong twins.  So
$u,v,w$ are in the same weak-twin equivalence class $C$.  Since
$\alt(C)\le \aut(\xxx)$ (Obs.~\ref{obs:weak-alt}), 
the 3-cycle $\sigma=(u,v,w)$ is an automorphism of $\xxx$.
It follows that $i=c(u,w)=c(w,v)=c(v,u)$, so $c(v,u)=c(x,y)$
and therefore $c(u,v)=c(y,x)=i^-$, as desired.
\end{proof}

\begin{corollary}[Weak-twin equipartition]   \label{cor:wt-equi}
Let $\xxx$ be a coherent configuration.  Then
every vertex-color class that includes weak twins
is equipartitioned by its weak-twin equivalence classes.
\end{corollary}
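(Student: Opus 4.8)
The plan is to follow the template of Corollary~\ref{cor:st-equi}, with weak-twin awareness playing the role of strong-twin awareness. Fix a vertex-color class $C=C_\ell$ that contains a pair $x\neq y$ of weak twins, and put $i=c(x,y)$. Since weak twins lie in a common orbit of $\aut(\xxx)$ we have $c(y)=\ell$, so $(x,y)\in R_i$ and, by Observation~\ref{obs:vx-color}, $R_i\subseteq C\times C$. In particular $\deg^+(i)\geq 1$, so by Observation~\ref{obs:deg-aware} every vertex of $C$ has out-degree $\deg^+(i)\geq 1$ in $X_i$; hence $X_i$ is a homogeneous constituent whose effective vertex set is exactly $C$.

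Next I would identify the weak-twin equivalence classes inside $C$ with the weakly connected components of $X_i$. By Corollary~\ref{cor:wt-aware}, for $u\in C$ and an arbitrary vertex $v$, the vertices $u,v$ are weak twins if and only if $c(u,v)\in\{i,i^-\}$, i.e.\ if and only if $\{u,v\}$ spans an edge of $X_i$ in one direction or the other. Because the weak-twin relation is an equivalence relation (hence transitive), any undirected $X_i$-path joining two vertices of $C$ forces them to be weak twins, while conversely weak twins are already joined by a single $X_i$-edge. Therefore the weak-twin classes of $C$ are precisely the weakly connected components of $X_i$ (the remaining components of $X_i$ being isolated vertices lying outside $C$).

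Finally, Proposition~\ref{prop:CChomog} asserts that the weakly connected components of a homogeneous constituent $X_i$ equipartition its color class $C$. Combining this with the identification of the previous paragraph shows that the weak-twin equivalence classes equipartition $C$, which is the assertion of the corollary.

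I do not expect a genuine obstacle here; the argument is routine once Corollary~\ref{cor:wt-aware} is available. The only point needing a moment's attention is that in the ``weak but not strong'' case the colors $i$ and $i^-$ are distinct, so one must reason about \emph{undirected} connectivity in $X_i$ rather than directed reachability --- which is exactly why invoking Proposition~\ref{prop:CChomog}, whose passage from weak to strong connectivity is built in, is the convenient move.
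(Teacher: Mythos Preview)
Your argument is correct and follows essentially the same approach as the paper's proof: you use Corollary~\ref{cor:wt-aware} to identify the weak-twin equivalence classes in $C$ with the (weakly) connected components of the constituent $X_i$, and then invoke Proposition~\ref{prop:CChomog} to conclude that these components equipartition $C$. The paper's proof is simply a two-line compression of exactly this reasoning.
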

\begin{proof}
By Cor.~\ref{cor:wt-aware}, these equivalence classes are the
connected components of a color constituent, and therefore
they have equal size by Prop.~\ref{prop:CChomog}.
\end{proof}

\subsubsection{Uniprimitive coherent configurations (UPCCs)} 
\label{sec:UPCCdef}
Recall that $\xxx$ is \emph{homogeneous} if
all vertices have the same color.

\begin{definition}   \label{def:UPCC}
$\xxx$ is \emph{primitive} if it is homogeneous and all
constituent digraphs other than the diagonal are connected.
$\xxx$ is \emph{uniprimitive} if it is primitive and has
rank $\ge 3$, \ie, it is not the clique configuration.
\end{definition}
\begin{notation}
We abbreviate ``uniprimitive coherent configuration'' as UPCC.
\end{notation}
Combinatorial properties of UPCCs have been studied by the author
in~\cite{uniprimitive} and in great depth by Sun and Wilmes
in~\cite{sun-wilmes}.  UPCCs play an important
role in the study of GI as the obstacles to a natural
combinatorial partitioning approach.  One of the main
technical contributions of this paper is that we overcome this
obstacle (Section~\ref{sec:johnson}).

\subsubsection{Association schemes, Johnson schemes}
We say that the coherent configuration $\xxx$ is an
\emph{association scheme} if $c(x,y)=c(y,x)$ for
every $x,y\in\Omega$.  It follows that association schemes 
are homogeneous.

A particular class of association schemes will be of great interest to us.

Let $t\ge 2$ and $k\ge 2t+1$.  The \emph{Johnson scheme}
$\jjj(k,t)=(\Omega;R_0,\dots,R_t)$ is an association scheme 
with $\binom{k}{t}$ vertices corresponding to the $t$-subsets 
of an $k$-set $\Gamma$.  We identify $\Omega$ with the set
$\Omega =\binom{\Gamma}{t}$.  The relation
$R_i$ consists of those pairs $(T_1,T_2)$ 
($T_j\subset\Gamma$, $|T_j|=t$) with $|T_1\setminus T_2|=i$.

\mn
An important functor (see Section~\ref{sec:functor})
maps the category of $k$-sets to
the category of Johnson schemes $\jjj(k,t)$.  This
functor is \emph{surjective} (on $\iso(\xxx,\yyy)$ for any
pair $(\xxx,\yyy)$ of objects).  The principal content of this 
nontrivial statement is the following.

\begin{proposition}
If $t\ge 2$ and $m\ge 2t+1$ then
  $\aut(\jjj(m,t))=\sym^{(t)}(\Gamma)$.
\end{proposition}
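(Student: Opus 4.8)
The plan is to show that $\aut(\jjj(m,t)) = \sym^{(t)}(\Gamma)$ by proving two inclusions, the nontrivial one being that every automorphism of the Johnson scheme arises from a permutation of the underlying $m$-set $\Gamma$. The easy inclusion $\sym^{(t)}(\Gamma) \le \aut(\jjj(m,t))$ is immediate: a permutation $\pi \in \sym(\Gamma)$ preserves the cardinality $|T_1 \setminus T_2|$ for any two $t$-subsets $T_1, T_2$, hence preserves each relation $R_i$. So the content is the reverse inclusion.

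First I would reduce to understanding the graph $X_1 = (\Omega, R_1)$, the ``closeness'' relation where $T_1 \sim T_2$ iff $|T_1 \setminus T_2| = 1$ — this is precisely the Johnson graph $J(m,t)$. Since each $R_i$ is determined combinatorially by $R_1$ within the scheme (indeed $R_i$ is the set of pairs at distance $i$ in $J(m,t)$, because the diameter of $J(m,t)$ is $\min(t, m-t) = t$ and distance equals $|T_1 \setminus T_2|$ under the assumption $m \ge 2t+1$), any automorphism of the scheme is an automorphism of $J(m,t)$ and conversely. So it suffices to prove $\aut(J(m,t)) = \sym^{(t)}(\Gamma)$ for $m \ge 2t+1$.

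The core step is then the classical determination of the automorphism group of the Johnson graph. The standard approach: analyze the structure of maximal cliques in $J(m,t)$. There are two types of maximal cliques — the ``$\binom{[m]}{t}$-type'': fix a $(t-1)$-subset $S$ and take all $t$-sets containing $S$ (these have size $m - t + 1$); and the ``$\binom{[m]}{t}$-type'': fix a $(t+1)$-subset $S'$ and take all $t$-subsets of $S'$ (these have size $t+1$). When $m \ge 2t+1$ one has $m - t + 1 \ge t + 2 > t+1$, so the two families have different sizes and an automorphism must preserve each family setwise. An automorphism therefore induces a permutation on the family of $(t-1)$-subsets of $\Gamma$ (via the first clique type); one then checks that the incidence structure between $(t-1)$-subsets and $t$-subsets recovered this way forces the automorphism to be induced by a single permutation of $\Gamma$. (Alternatively one can induct on $t$: the neighborhood structure lets one reconstruct the action on $(t-1)$-sets, reducing to $\jjj(m, t-1)$, with base case $t=1$ where $J(m,1) = K_m$ and $\aut(K_m) = \sym(\Gamma)$ trivially.)

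The main obstacle is the clique-structure analysis — specifically, verifying carefully that under the hypothesis $m \ge 2t+1$ the two clique types are genuinely distinguishable by size and that preserving them rigidly pins down an action on $\Gamma$; this is where the hypothesis $m \ge 2t+1$ (rather than merely $m \ge 2t$, where $J(2t,t)$ has the extra complementation automorphism) is essential. I would cite this as a known result — it is the content of the cited references on Johnson schemes — rather than reproduce the full combinatorial argument, since the statement is called out in the excerpt as ``nontrivial'' but standard. The remaining bookkeeping (that scheme-automorphisms and graph-automorphisms coincide, and that the resulting map $\sym(\Gamma) \to \aut(\jjj(m,t))$ is onto rather than merely into) is routine given the clique analysis.
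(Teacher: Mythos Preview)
The paper does not actually prove this proposition; it simply states it as a known nontrivial fact and moves on. Your outline is the standard correct argument (maximal-clique analysis in the Johnson graph, using $m\ge 2t+1$ to distinguish the two clique families by size), so your proposal supplies more than the paper itself does.
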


\subsection{Hypergraphs}

\subsubsection{Basic terminology}
A \emph{hypergraph} $\calH=(V,\calE)$ consists of a vertex set $V$
and a subset $\calE$ of the power-set of $V$.  We say that
$\calH$ is \emph{$d$-uniform} if $|E|=d$ for all $E\in\calE$.

We say that $\calH$ is an \emph{empty hypergraph} if
$\calE=\emptyset$.  The \emph{complete $d$-uniform hypergraph}
has edge set $\calE=\binom{V}{d}$.  The \emph{trivial} $d$-uniform
hypergraphs are the empty and the complete ones.   In other words,
a $d$-uniform hypergraph is trivial if its automorphism group
is $\sym(V)$.

The \emph{degree} of a vertex $x\in V$ is the number of edges
containing $x$.  $\calH$ is \emph{$r$-regular} if every vertex has
degree $r$.

\begin{definition}[Induced subhypergraph]
For a subset $W\subseteq V$ we define the \emph{induced
subhypergraph} $\calH[W]$ as follows: the vertex set of 
$\calH[W]$ is $W$ and $E\in\calE$ is an edge of $\calH[W]$
if and only if $E\subseteq W$.  
\end{definition}
In particular, every induced subhypergraph of a $d$-uniform hypergraph
is $d$-uniform.

\begin{definition}[Trace of hypergraph]
Let $\calH=(V,\calE)$ be a hypergraph.  The \emph{trace} $\calH_S$
on the set $S\subseteq V$ is the set $\{E\cap S\mid E\in\calE\}$.
\end{definition}

We can treat $d$-uniform hypergraphs as $d$-ary relational structures
$(V,R)$ with a symmetric relation $R\subseteq\Omega^{\,d}$, \ie,
$(\forall \pi\in\sss_d)(R^{\,\pi}=R)$, with the additional
condition that if $(x_1,\dots,x_d)\in R$ then all the $x_i$ are
distinct.  So some results on $d$-ary relational structures apply to 
$d$-uniform hypergraphs.  We shall in particular apply the
Design Lemma (Theorem~\ref{thm:design}) to uniform hypergraphs.

\subsubsection{Random hypergraphs}  \label{sec:random-hyp}
Let $|V|=n$,\ $d\le n$, and $m\le \binom{n}{d}$.  Following
Erd\H{o}s and R\'enyi, by a \emph{random $d$-uniform hypergraph 
with $n$ vertices and $m$ edges} we mean a uniform random member 
of the set $\hyp_d(n,m)=\binom{\binom{V}{d}}{m}$.  We are concerned
with very sparse hypergraphs.  For $X\in \hyp_d(n,m)$,
let $U(X)$ denote the union of all edges of $X$.  We call $U(X)$
the \emph{support} of $X$.  Obviously $|U(X)|\le md$.
The following observation will be used in Section~\ref{sec:BPjohnson}
to justify a subroutine.

\begin{proposition}   \label{prop:random-hyp}
For a random $d$-uniform hypergraph $X$ with $n$ vertices and $m$ edges,
the probability that its support has size $|U(X)|<md$ is less than
$(md)^2/n$.
\end{proposition}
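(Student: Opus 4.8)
The plan is a union bound over pairs of edges. I would begin by recording that for any $d$-uniform hypergraph the support satisfies $|U(X)|\le md$, with equality precisely when the $m$ edges are pairwise disjoint; hence the event $|U(X)|<md$ coincides with the event that some two edges of $X$ share a vertex. Listing the (pairwise distinct) edges of $X$ in an arbitrary order as $E_1,\dots,E_m$, I would then bound
\[
\Pr\bigl[\,|U(X)|<md\,\bigr]=\Pr\Bigl[\,\bigcup_{1\le i<j\le m}\{E_i\cap E_j\neq\emptyset\}\,\Bigr]\le \binom m2\, p,
\]
where $p=\Pr[\,E_1\cap E_2\neq\emptyset\,]$. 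The inequality is the union bound together with the symmetry of the model: a uniformly random $m$-subset of $\binom{V}{d}$ is an exchangeable family of edges, so each ordered pair $(E_i,E_j)$ with $i\neq j$ is distributed as a uniformly random ordered pair of distinct members of $\binom{V}{d}$.

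Next I would estimate $p$. Conditioning on $E_1$, the edge $E_2$ is uniform over the $\binom nd-1$ members of $\binom Vd$ other than $E_1$, and the number of $d$-subsets of $V$ meeting $E_1$ is at most $d\binom{n-1}{d-1}$, by a union bound over the $d$ vertices of $E_1$ (each lies in $\binom{n-1}{d-1}$ of the $d$-subsets). Using $\binom{n-1}{d-1}/\binom{n}{d}=d/n$ together with $\binom nd-1\ge\tfrac12\binom nd$, this gives $p\le 2d^2/n$. Substituting back,
\[
\Pr\bigl[\,|U(X)|<md\,\bigr]\le\binom m2\cdot\frac{2d^2}{n}=\frac{m(m-1)d^2}{n}<\frac{(md)^2}{n},
\]
the strict inequality following from $m(m-1)<m^2$. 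This is the asserted bound.

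I do not anticipate a genuine obstacle; the argument is elementary. The two points that need a little care are the exchangeability step in the first paragraph, which is what licenses replacing an arbitrary pair of edges by ``the first two,'' and keeping the constants tight enough that the final estimate is exactly $(md)^2/n$ rather than a larger multiple. The slack $\binom nd-1\ge\tfrac12\binom nd$ requires $\binom nd\ge2$, so at the outset I would dispose of the degenerate case $\binom nd\le1$ (that is, $d=n$), in which $m\le1$ and the support has size exactly $md$, making the probability in question equal to $0$.
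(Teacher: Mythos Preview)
Your proof is correct. It takes a genuinely different route from the paper's argument, though both arrive at the same intermediate bound $m(m-1)d^2/n$.

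The paper argues via expectation and Markov's inequality: it defines $\eta_i = |E_i \cap \bigcup_{j\neq i} E_j|$, shows $\E(\eta_i)\le (m-1)d^2/n$ by comparing the conditional distribution of $E_i$ to an unconditioned uniform $d$-subset, observes that $\theta := md - |U(X)| \le \sum_i \eta_i$, and then applies Markov to the nonnegative integer $\theta$ to get $\Pr[\theta\ge 1]\le \E(\theta)\le m(m-1)d^2/n$. Your approach instead identifies the event $\{|U(X)|<md\}$ exactly as ``some pair of edges meets'' and bounds it by a union bound over the $\binom{m}{2}$ pairs, estimating the single-pair intersection probability directly. Your route is more direct and transparent for this particular statement; the paper's expectation-based argument, on the other hand, would adapt more readily if one wanted finer information about the distribution of $md-|U(X)|$ rather than just the probability that it is positive. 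One small phrasing point: ``arbitrary order'' should really be ``uniformly random order'' (as the paper does) to justify the exchangeability you invoke, but you clearly have the right picture since you spell out that each ordered pair $(E_i,E_j)$ is distributed as a uniform ordered pair of distinct $d$-sets.
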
 
It follows that if $md=o(\sqrt{n})$ then the edges of a typical hypergraph
with these parameters are pairwise disjoint.
\begin{proof}
Let the edges of $X$ be $E_1,\dots,E_m$ (numbered uniformly at random).  
Let us say that 
$v\in V$ is a \emph{unique vertex of $E_i$} if 
$v\in E_i\setminus W_i$ where $W_i=\bigcup_{j:j\neq i} E_j$.
Let $\xi_i$ denote the number of unique vertices of $E_i$.
So $|U(X)|\ge \sum\xi_i$.  Let $\E$ stand for expected value.

\mn
Claim. $\E(\xi_i)\ge d - (m-1)d^2/n$.

\begin{proof}
Let $\eta_i = d-\xi_i = |E_i\cap W_i|$.  So our claim says 
that $\E(\eta_i)\le (m-1)d^2/n$.  Let us fix the set 
$\calE_i=\{E_j\mid 1\le j\le m, j\neq i\}$ edges other than $E_i$
and let $\eta$ denote $\eta_i$ conditioned on the given choice
of $\calE_i$.

We claim that for every $\calE_i$ we have $\E(\eta)\le (m-1)d^2/n$.

The edge $E_i$ is chosen uniformly at random from
$\binom{V}{d}\setminus \calE_i$.  Let $F$ be chosen uniformly at
random from $\binom{V}{d}$ and let $\zeta=|F\cap W_i|$.  Now if
$F\in\calE_i$ then $\zeta=d$, so 
$\E(\zeta)=\epsilon d + (1-\epsilon)\E(\eta) > E(\eta)$, where
$\epsilon = (m-1)/\binom{n}{d}$.  We claim that $E(\zeta)\le (m-1)d^2/n$.
Indeed, we have $\E(|F\cap W_i|)=|F||W_i|/n \le (m-1)d^2/n$.
\end{proof}

It follows from the Claim that 
$\E(|U(X)|)\ge\sum_i \E(\xi_i)\ge md-m(m-1)d^2/n$.
Let $\theta = md-|U(X)|$.  So $\E(\theta)\le m(m-1)d^2/n$.  But $\theta\ge 0$,
so by Markov's inequality, $\P(\theta\ge 1) \le \E(\theta) \le
m(m-1)d^2/n < (md)^2/n$.
\end{proof}

\subsubsection{Twins, symmetry defect}
In Section~\ref{sec:twins} we defined concepts of weak/strong twins,
weakly/strongly symmetrical sets, (relative) symmetry defect
for relational structures via their automorphism groups.
We make the analogous definitions for hypergraphs, so
for instance the \emph{symmetry defect} of the hypergraph
$\calH$ is the symmetry defect of $\aut(\calH)$.

\begin{proposition}[Weak is strong]   \label{prop:weak-is-strong}
Let $\calH=(V,\calE)$ be a hypergraph and $x,y\in V$, $x\neq y$.
If $x$ and $y$ are weak twins then they are strong twins.
\end{proposition}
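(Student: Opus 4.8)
The plan is to reduce to the only non-trivial situation: $x,y$ are weak but not strong twins, so by Definition~\ref{def:twins} there is a point $z\notin\{x,y\}$ with the $3$-cycle $\sigma=(x,y,z)\in\aut(\calH)$, and we must show that the transposition $\tau=(x,y)$ also lies in $\aut(\calH)$ (if $x,y$ are already strong twins there is nothing to prove). Equivalently, writing $\calE^{\tau}=\{E^{\tau}\mid E\in\calE\}$, it suffices to prove $\calE^{\tau}\subseteq\calE$; and since $\sigma$, and hence $\sigma^{2}=\sigma^{-1}$, are automorphisms, the cleanest route is to establish $E^{\tau}\in\{E,\,E^{\sigma},\,E^{\sigma^{2}}\}$ for every edge $E\in\calE$.

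I would argue by a short case analysis on $E\cap\{x,y,z\}$. Because $\tau$ moves no point outside $\{x,y\}$, whenever $E$ contains both of $x,y$ or neither of them we simply have $E^{\tau}=E\in\calE$. The only edges that can be affected are those meeting $\{x,y\}$ in exactly one point, and for such $E$ one splits further according to whether $z\in E$. In each of the resulting sub-cases one checks that $\sigma$ or $\sigma^{-1}$ acts on $E$ exactly as $\tau$ does: the two permutations differ only in their treatment of $z$, so knowing whether $z\in E$ (together with which of $x,y$ lies in $E$) pins down the right choice and makes $E^{\sigma}$ or $E^{\sigma^{-1}}$ equal to $E^{\tau}$. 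This yields $E^{\tau}\in\calE$ for all $E$, i.e. $\calE^{\tau}\subseteq\calE$.

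To close, observe that $\tau$ is an involution, so applying the bijection $\tau$ to the inclusion $\calE^{\tau}\subseteq\calE$ gives $\calE=(\calE^{\tau})^{\tau}\subseteq\calE^{\tau}$, whence $\calE^{\tau}=\calE$ and $\tau\in\aut(\calH)$; thus $x,y$ are strong twins. (One could instead invoke finiteness of $\calE$, but the involution trick avoids even that.) I do not anticipate a genuine obstacle here; the only care needed is the bookkeeping of the case analysis. Conceptually, what makes the statement hold for hypergraphs of unbounded uniformity --- in contrast with Proposition~\ref{prop:alt-sym}, which for general $k$-ary structures needs $|\Psi|\ge k+2$ --- is that a hypergraph edge is an \emph{unordered} set, so the relation ``$E\in\calE$'' is fully symmetric; this symmetry is exactly why the $3$-element set $\{x,y,z\}$ suffices regardless of how large the edges are.
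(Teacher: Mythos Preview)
Your proof is correct and essentially identical to the paper's: both reduce to showing that for every edge $E$ one has $E^{\tau}\in\{E,E^{\sigma},E^{\sigma^{2}}\}$ via the same case analysis on which of $x,y,z$ lie in $E$. The only cosmetic difference is that the paper phrases it by contradiction (take a witness $E$ with $E^{\tau}\notin\calE$ and derive a contradiction) whereas you argue directly and close with the involution trick; the underlying computation is the same.
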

\begin{proof}
Let $\sigma=(x,y,z)$ be a 3-cycle in $\sym(V)$ and let
$\tau=(x,y)$ (transposition).  We need to show that if
$\sigma\in\aut(\calH)$ then $\tau\in\aut(\calH)$.  Suppose otherwise;
let $E\in\calE$ be a witness to this, \ie, $E^{\,\tau}\notin\calE$
but $E^{\,\sigma}\in\calE$ and $E^{\,\sigma^2}\in\calE$.  
It follows that one of $x$ and $y$ is in $E$ and the other is not,
say $x\in E$ and $y\notin E$. Now if $z\not\in E$ then
$E^{\,\tau}=E^{\,\sigma}\in\calE$, a contradiction.
If $z\in E$ then $E^{\,\tau}=E^{\,\sigma^2}\in\calE$, 
again a contradiction.
\end{proof}
So for hypergraphs, we can omit the adjectives ``strong/weak''
when talking about twins and about symmetrical sets.

\subsubsection{Skeletons}

The ``Skeleton defect lemma'' (Lemma~\ref{lem:skeleton-defect}) 
below will play an important role in the analysis of the 
{\sf Split-or-Johnson} routine (see Section~\ref{sec:blockdesign},
Case~3b).

\begin{definition}  \label{def:hyp-skeleton}
The $t$-skeleton of the hypergraph $\calH=(V,\calE)$
is the $t$-uniform hypergraph $\calH^{(t)}=(V,\calE^{(t)})$
where $F\in\binom{V}{t}$ belongs to $\calE^{(t)}$ exactly  if
there exists $E\in\calE$ such that $F\subseteq E$.
\end{definition}
Note that for $d$-uniform hypergraphs, viewed as $d$-ary relational
structures, this definition does not agree with Def.~\ref{def:skeleton},
although it is similar in spirit.

\begin{proposition}   \label{prop:skeleton}
Let $\calH$ be a nontrivial $d$-uniform hypergraph with $n$ vertices
and $m$ edges, where $d\le n/2$.
Then there exists $t\le \min\{d, 1+\log_2 m\}$ such that
the $t$-skeleton $\calH^{(t)}$ is nontrivial.
\end{proposition}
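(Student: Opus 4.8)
The plan is to show the contrapositive: if every $t$-skeleton $\calH^{(t)}$ for $t\le \min\{d,1+\log_2 m\}$ is trivial, then $\calH$ itself is trivial. First I would dispose of the small cases. Since $\calH$ is nontrivial it has $1\le m\le \binom{n}{d}-1$, and in particular $m\ge 1$, so the bound $1+\log_2 m\ge 1$ makes sense; also note that the $1$-skeleton $\calH^{(1)}$ is the set of vertices covered by some edge, so triviality of $\calH^{(1)}$ forces this set to be either empty (hence $\calE=\emptyset$) or all of $V$. The empty case contradicts nontriviality, so we may assume every vertex lies in some edge.

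The key step is an induction on $t$ that controls the number of edges through the skeleton structure. The idea: if $\calH^{(t)}$ is the complete $t$-uniform hypergraph $\binom{V}{t}$ for some $t<d$ (and $t<1+\log_2 m$, so $t\le \log_2 m$, i.e. $2^t\le m$), then I want to conclude $\calH^{(t+1)}=\binom{V}{t+1}$ as well — i.e. every $(t+1)$-set extends to an edge. The mechanism is a counting/doubling argument: an edge $E$ of size $d\ge t+1$ contains $\binom{d}{t}\ge 2^t$... no, I need the reverse direction. The right approach is to bound $m$ from below: if $\calH^{(s)}$ is complete but $\calH^{(s+1)}$ is not, pick a $(s+1)$-set $F$ not contained in any edge; each of its $s+1$ sub-$s$-sets lies in an edge, and these edges must be distinct enough to force $m$ large. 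More cleanly: I would argue that if $\calH^{(t)}$ is trivial for all $t$ up to the stated bound, then in particular $\calH^{(t_0)}$ is trivial for $t_0=\min\{d,1+\log_2 m\}$; if $t_0=d$ then $\calH^{(d)}=\calH$ is trivial and we are done, so assume $t_0=1+\log_2 m<d$ (we may take $t_0$ an integer, adjusting constants), and derive a contradiction by showing that a nontrivial $d$-uniform $\calH$ with all skeletons up to $t_0$ complete must have $m>2^{t_0-1}$, i.e. strictly more than its own bound allows. The cleanest bound: if $\calH^{(t)}=\binom{V}{t}$ but $\calH^{(t+1)}\ne\binom{V}{t+1}$, the "missing" $(t+1)$-set $F$ has its $t+1$ subsets each covered, and a short double-counting (or just: the trace of $\calH$ on any $(t+1)$-set other than $F$ is "large") shows $m\ge 2^{t}$; iterating from $t=1$ up until the first level where completeness fails gives $m\ge 2^{t-1}$ at that level, forcing that level to exceed $1+\log_2 m$, hence to be $\ge d$ — but then all skeletons up to $d$ are complete, so $\calH=\binom{V}{d}$, contradicting nontriviality.

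The main obstacle I anticipate is pinning down the exact doubling inequality — establishing that "$\calH^{(t)}$ complete and $\calH^{(t+1)}$ not complete" forces $m$ to roughly double relative to the previous level, with clean enough constants to land at the stated bound $t\le 1+\log_2 m$ rather than something weaker like $t\le c\log m$. This likely requires the observation that the trace $\calH_S$ on a set $S$ with $|S|=t+1$, when $S$ is \emph{not} an edge-subset-witness, still must be "large" because every proper subset of $S$ is covered; a Bollob\'as-type set-pair or a sunflower-free counting argument on the edges witnessing the $t$-subsets of $S$ should give the factor $2$. Concretely: the $t+1$ many $t$-subsets of the missing $F$ are witnessed by edges $E_0,\dots,E_t$ which cannot all coincide (else $F\subseteq E_i$), and careful bookkeeping of how these overlap, combined with the fact that this happens at \emph{every} complete level below the failure level, accumulates the geometric growth $m\ge 2^{t-1}$. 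Once that inequality is in hand, the rest is immediate arithmetic: $t\le 1+\log_2 m$ at the failure level contradicts $t$ being the failure level when $m$ is forced larger, so no failure level $\le \min\{d,1+\log_2 m\}$ exists, forcing $\calH^{(t_0)}$ nontrivial for the appropriate $t_0$.
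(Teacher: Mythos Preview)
Your approach has a genuine gap, and you actually brushed past the correct argument before abandoning it. The paper's proof is a single direct count, with no induction on levels. If $d\le 1+\log_2 m$, take $t=d$ and $\calH^{(d)}=\calH$ is nontrivial by hypothesis. Otherwise set $t=1+\lfloor\log_2 m\rfloor$, so $t\le d$ and $\calH^{(t)}$ is nonempty (any edge of $\calH$ contains a $t$-subset). It is also not complete: each edge covers at most $\binom{d}{t}$ of the $t$-subsets, so the total number of $t$-subsets in $\calH^{(t)}$ is at most $m\binom{d}{t}$, and
\[
\frac{m\binom{d}{t}}{\binom{n}{t}} \;\le\; m\Bigl(\frac{d}{n}\Bigr)^t \;\le\; \frac{m}{2^t} \;<\; 1
\]
since $d\le n/2$ and $t>\log_2 m$. (The paper phrases this probabilistically: a uniformly random $t$-tuple of vertices lies in a fixed edge with probability $\le 2^{-t}$, so by a union bound it lies in some edge with probability $<m/2^t<1$.) That is the whole proof.

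You nearly had this when you wrote ``an edge $E$ of size $d\ge t+1$ contains $\binom{d}{t}\ge 2^t$\dots'' but then said ``no, I need the reverse direction'' and switched to a level-by-level doubling scheme. That scheme is where the gap lies: your proposed mechanism---take a missing $(t+1)$-set $F$, note its $t+1$ many $t$-subsets are each covered by some edge, argue these edges are ``distinct enough''---yields at best $m\ge t+1$ at the failure level, not $m\ge 2^t$. Nothing in that local picture around a single $F$ forces exponentially many edges, and the Bollob\'as-type or sunflower speculation does not repair it: those tools bound sizes of set-pair systems with pairwise intersection constraints, which is not the structure you have here. The exponential lower bound on $m$ comes from the \emph{global} fact that $\calH^{(t)}$ is complete, i.e.\ that \emph{all} $\binom{n}{t}$ subsets are covered, not from anything happening at level $t+1$. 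No induction across levels is needed or helpful.
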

\begin{proof}
Choose $t=d$ if $d\le 1+\log_2 m$.  Otherwise let $t=1+\lfloor\log_2 m\rfloor$.
Let $x_1,\dots,x_t$ be independently uniformly selected vertices of $\calH$.
The probability that all of them belong to an edge $E\in\calE$ is
$(|E|/n)^t\le 1/2^t$.   The probability that there exists an edge
to which all the $x_i$ belong is less than $m/2^t$ which is less than 1 if
$t > \log_2 m$.  So $\calH^{(t)}$ is not complete.  It is also not
empty since $t\le d$. 
\end{proof}

\begin{proposition}    \label{prop:proportion}
Let $\calH=(V,\calE)$ be a nonempty, regular, $d$-uniform hypergraph.
Let $S\subseteq V$.  Let $\alpha=|S|/|V|$.  Then there is an edge
$E\in\calE$ such that $|E\cap S|\ge\alpha d$.
\end{proposition}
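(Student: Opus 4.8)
The plan is a one-step double-counting (averaging) argument. Write $n=|V|$, $m=|\calE|$, and let $r$ be the common degree of the vertices (well-defined by regularity; note $r\ge 1$ since $\calH$ is nonempty, so at least one vertex lies in an edge and hence, by regularity, every vertex does). First I would record the standard incidence identity obtained by counting pairs $(v,E)$ with $v\in E\in\calE$ in two ways: each edge contributes $d$ and each vertex contributes $r$, so $md=nr$; in particular $m=nr/d\ge 1$, which makes ``average over the edges'' meaningful.

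Next I would count pairs $(v,E)$ with $v\in S$ and $v\in E\in\calE$, again in two ways. Summing over $v\in S$ gives $\sum_{v\in S}\deg(v)=r|S|$ by regularity; summing over $E\in\calE$ gives $\sum_{E\in\calE}|E\cap S|$. Hence $\sum_{E\in\calE}|E\cap S|=r|S|$.

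Dividing by $m$, the average value of $|E\cap S|$ over the $m$ edges equals $r|S|/m=r|S|\cdot(d/(nr))=|S|d/n=\alpha d$, using $m=nr/d$ and $\alpha=|S|/n$. Since the average of the integers $|E\cap S|$ equals $\alpha d$, at least one edge $E\in\calE$ satisfies $|E\cap S|\ge\alpha d$, as claimed.

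There is no real obstacle here: the only points requiring a word of care are that nonemptiness of $\calH$ (together with $d$-uniformity, so $d\ge 1$) guarantees $r\ge 1$ and $m\ge 1$, so that the division by $m$ and the passage ``average $\le$ maximum'' are both legitimate; everything else is the two elementary incidence counts above.
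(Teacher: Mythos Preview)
Your proof is correct and is essentially the same as the paper's: both compute the average of $|E\cap S|$ over all edges via the incidence count $\sum_{E}|E\cap S|=r|S|$ (equivalently, $\mathbb{E}[|E\cap S|]=|S|d/n=\alpha d$) and conclude by ``average $\le$ max.'' The paper phrases it probabilistically (a random edge contains a fixed vertex with probability $d/n$), while you phrase it as explicit double counting, but the content is identical.
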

\begin{proof}
Let $|V|=n$ and $|\calE|=m$.
Each vertex belongs to $md/n$ edges, so for each vertex
$x$, the probability that $x\in E$ for a randomly 
selected edge is $d/n$.  Therefore the expected number of
vertices in $|S\cap E|$ for a random edge $E$ is
$|S|d/n=\alpha d$.
\end{proof}

\begin{lemma}[Skeleton defect lemma]  \label{lem:skeleton-defect}
Let $\calH=(V,\calE)$ be a nontrivial, regular, $d$-uniform hypergraph 
with $n$ vertices and $m$ edges where $d\le n/2$.
Let $(7/4)\log_2 m\le t\le (3/4)d$.  Then the symmetry defect
of the $t$-skeleton $\calH^{(t)}$ is greater than $1/4$.
\end{lemma}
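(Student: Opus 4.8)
The plan is to argue by contradiction: suppose the symmetry defect of $\calH^{(t)}$ is at most $n/4$, so there is a weakly (hence, by Proposition~\ref{prop:weak-is-strong}, strongly) symmetrical set $\Psi\subseteq V$ with $|\Psi|\ge 3n/4$, meaning $\sym(\Psi)\le\aut(\calH^{(t)})$. I will use this enormous symmetry of the $t$-skeleton to force the $t$-skeleton to be trivial, and then lift that triviality back up to force $\calH$ itself to be trivial, contradicting the hypothesis. The key leverage is that $\sym(\Psi)$ acts on the $t$-subsets of $V$ with very few orbits when $|\Psi|$ is close to $n$: a $t$-set $F$ is determined, as far as $\sym(\Psi)$ is concerned, only by $|F\setminus\Psi|\le t$ and by which elements of $V\setminus\Psi$ it contains. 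So $\calH^{(t)}$ restricted to $\binom{\Psi}{t}$ is either all of $\binom{\Psi}{t}$ or empty.

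First I would rule out the empty case. By Proposition~\ref{prop:skeleton} (applied with the roles of the relevant parameters tracked carefully — note $t\le (3/4)d\le d$), or more directly by Proposition~\ref{prop:proportion}: since $\calH$ is regular and nonempty and $|\Psi|/n\ge 3/4$, there is an edge $E\in\calE$ with $|E\cap\Psi|\ge (3/4)d\ge t$, so some $t$-subset of $\Psi$ lies in $\calE^{(t)}$; hence $\calH^{(t)}[\Psi]$ is not empty, and therefore (by the orbit count above) $\binom{\Psi}{t}\subseteq\calE^{(t)}$, i.e.\ every $t$-subset of $\Psi$ is a skeleton edge.

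Now I would push this back to $\calH$. Fix any vertex set $F\subseteq\Psi$ with $|F|=t$; it is a skeleton edge, so there is a true edge $E\in\calE$ with $F\subseteq E$. The goal is to show $\binom{V}{d}\subseteq\calE$, i.e.\ $\calH$ is complete, the desired contradiction. Take an arbitrary $d$-set $E'\subseteq V$. Because $|\Psi|\ge 3n/4$ and $d\le n/2$, one can find a permutation in $\sym(\Psi)\le\aut(\calH^{(t)})$ — wait, we need a symmetry of $\calH$, not of the skeleton. So instead I would argue in two steps. Step one: using that $\binom{\Psi}{t}\subseteq\calE^{(t)}$, every $t$-subset of $\Psi$ extends to an edge of $\calH$; since $|\Psi|\ge 3n/4$ and $m=|\calE|$ with $t\ge (7/4)\log_2 m$, a counting/probabilistic argument (à la Proposition~\ref{prop:skeleton}, sampling $t$ random vertices inside $\Psi$) shows that in fact so many $t$-sets must be covered that $m$ is forced to be at least $\binom{|\Psi|}{t}/\binom{d}{t}$, which combined with $t\ge(7/4)\log_2 m$ and $t\le(3/4)d$ gives a numerical contradiction unless $m=\binom{n}{d}$, i.e.\ $\calH$ is complete. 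Let me restructure this as the cleanest route.

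The cleanest route, and the one I would actually write: assume $\calH^{(t)}[\Psi]=\binom{\Psi}{t}$ as above. Every $t$-subset of $\Psi$ is contained in some edge; each edge $E$ contains at most $\binom{d}{t}$ many $t$-subsets of $\Psi$; hence
\[
  m\ \ge\ \frac{\binom{|\Psi|}{t}}{\binom{d}{t}}\ \ge\ \Bigl(\frac{|\Psi|}{d}\Bigr)^{t}\cdot\frac{1}{?}\ \ge\ \Bigl(\frac{3n}{4d}\Bigr)^{t}\ \ge\ \Bigl(\frac{3}{2}\Bigr)^{t},
\]
using $d\le n/2$ and the standard bound $\binom{a}{t}/\binom{b}{t}\ge (a/b)^t$ for $a\ge b\ge t$. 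Thus $\log_2 m\ge t\log_2(3/2)> t/2$... but we are given $t\ge (7/4)\log_2 m$, i.e.\ $\log_2 m\le (4/7)t$, and indeed $(4/7)t$ vs.\ $t\log_2(3/2)\approx 0.585\,t$ — the inequality $0.585\,t\le 0.571\,t$ fails, giving the contradiction. (The numerical constants $7/4$ and $3/4$ in the hypothesis are precisely tuned so that this inequality is strict.) The main obstacle I anticipate is getting this last chain of inequalities genuinely strict with the stated constants — one must be slightly careful with the binomial estimate $\binom{|\Psi|}{t}\ge (|\Psi|/t)^t$ versus $\binom{d}{t}\le (ed/t)^t$, which introduces an extra factor of $e^t$ that would ruin the bound; the fix is to use the sharper $\binom{|\Psi|}{t}/\binom{d}{t}\ge \prod_{i=0}^{t-1}\frac{|\Psi|-i}{d-i}\ge (|\Psi|/d)^t\ge (3/2)^t$ directly, with no entropy losses, which works because $|\Psi|\ge d$. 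Everything else is routine: the orbit-counting for $\sym(\Psi)$, the reduction from weak to strong twins via Proposition~\ref{prop:weak-is-strong}, and the nonemptiness of $\calH^{(t)}[\Psi]$ via Proposition~\ref{prop:proportion}.
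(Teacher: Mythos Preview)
Your proposal is correct and, once you settle on ``the cleanest route,'' it is essentially identical to the paper's proof: assume a symmetrical set $\Psi$ of size $\ge 3n/4$, use Proposition~\ref{prop:proportion} to get an edge meeting $\Psi$ in $\ge (3/4)d\ge t$ points so that $\binom{\Psi}{t}\subseteq\calE^{(t)}$, then count $m\ge\binom{|\Psi|}{t}/\binom{d}{t}\ge(3/2)^t>m$ using $t\ge(7/4)\log_2 m$. Your observation that the ratio $\binom{|\Psi|}{t}/\binom{d}{t}$ should be bounded via the term-by-term product (avoiding the lossy $e^t$ factor) is exactly right and is what the paper does implicitly; the earlier detours in your write-up (trying to force $\calH$ complete, etc.) can simply be deleted.
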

\begin{proof}
Let $S\subseteq V$ be a symmetrical subset.  Assume for a contradiction
that $|S|\ge 3n/4$.  Then, by Prop.~\ref{prop:proportion}, there is
an edge $E\in\calE$ such that $|S\cap E|\ge (3/4)d\ge t$.  Let 
$T\subseteq S\cap E$,\ $|T|=t$.   So $T\in\calE^{(t)}$.  Since 
$S$ is a symmetrical set, it follows that 
$\binom{S}{t}\subseteq \calE^{(t)}$.   Since every edge of $\calH$
contains at most $\binom{d}{t}$ of these $t$-sets, it follows that
\begin{equation}
 m\ge \frac{\binom{|S|}{t}}{\binom{d}{t}} >
      \left(\frac{3n/4}{d}\right)^t \ge   
      \left(\frac{3}{2}\right)^t > m,
\end{equation}
a contradiction.
\end{proof}

\subsection{Individualization and canonical refinement}
Let $\xxx$ be a structure such as a graph, digraph, 
$k$-ary relational structure, hypergraph, with 
colored elements (vertices, edges, $k$-tuples, hyperedges).
The colors form an ordered list.  A \emph{refinement}
of the coloring $c$ is a new coloring $c'$ of the same elements
such that if $c'(x)=c'(y)$ for elements $x,y$ then $c(x)=c(y)$;
this results in the refined structure $\xxx'$.
We say that the refinement is \emph{canonical} with respect to
a set  $\{\xxx_i\mid i\in I\}$ of objects of the same type if
it is executed simultaneously on each $\xxx_i$ 
and for all $i,j\in I$ we have
\begin{equation}
   \iso(\xxx_i',\xxx_j') = \iso(\xxx_i,\xxx_j) .
\end{equation}
(This is consistent with the functorial notion of
canonicity explained in Sec.~\ref{sec:functor}.)
Naive vertex refinement (refine vertex colors by 
number of neighbors of each color) has been the
basic isomorphism rejection heuristic for ages.
More sophisticated canonical refinement methods
are explained in the next section.

Another classical heuristic is \emph{individualization:}
the assignment of a unique color to an element.
Let $\xxx_x$ denote $\xxx$ with the element $x$ individualized.
If the number of elements of the given type is $m$
then individualization incurs a multiplicative cost of $m$:
when testing isomorphism of structures $\xxx$ and $\yyy$,
if we individualize $x\in\xxx$, we need compare $\xxx_x$
with all $\yyy_y$ for $y\in\yyy$: for any $x\in\xxx$ we have
\begin{equation}
     \iso(\xxx,\yyy)=\bigcup_{y\in\yyy} \iso(\xxx_x,\yyy_y).
\end{equation}
(Compare this with the more general categorical concept in
Sec.~\ref{sec:functor}.)

The individualization/refinement method (I/R) (individualization
followed by refinement) is a powerful heuristic and has also
been used to proven advantage (see e.\,g., 
\cite{lasvegas,uniprimitive,canonical,codenotti,bw,cst,bcstw}),
even though strong limitations of its isomorphism rejection capacity
have also been proven~\cite{cfi}.  I/R combines well with the group 
theory method and the combination is not subject to the CFI limitations
(\cite{lasvegas,canonical,codenotti, bcstw}).  The power of
this combination is further explored in this paper.

\subsection{Weisfeiler-Leman canonical refinement}
\label{sec:WL}

\subsubsection{Classical WL refinement}
The classical Weisfeiler--Leman\footnote{Weisfeiler's 
book~\cite{weisfeiler-book} transliterates Leman's
name from the original Russian as ``Lehman.'' 
However, Andre\u\i\ Leman (1940--2012) himself omitted the ``h.''
(Sources: private communications by Mikhail Klin, Aug. 2006,
and by Ilya Ponomarenko, Jan. 2016.
Both Klin and Ponomarenko forwarded to me 
email messages they had received in the late 1990s from Leman.
The ``From'' line of each message reads
``{\tt From: Andrew Leman <andyleman@etc.>},''
and Leman also verbally expressed this preference.)}
(WL) refinement 
\cite{weisfeiler-leman,weisfeiler-book} takes as input a
binary configuration and refines it to a 
coherent configuration (see Sec.~\ref{sec:UPCCdef}) as follows.
The process proceeds in rounds.  Let $\xxx$ be the input to a 
round of refinement.  For $(x,y)\in\Omega\times\Omega$,
we encode in the
new color $c'(x,y)$ the following information: the old color
$c(x,y)$, and for all $j,k\le r$, the number
$|\{z\in\Omega\mid c(x,z)=j$ and $c(z,y)=k\}|$.
These data form a list, naturally ordered.  To each list
we assign a new color; these colors are sorted lexicographically.
This gives a refined coloring that defines a new configuration
$\xxx'$.  We stop when we reach a
stable configuration ($\xxx=\xxx'$, \ie, no refinement occurs,
\ie, no $R_i$ is split).  

\begin{observation}  \label{obs:stable-coherent}
The stable configurations under WL refinement are precisely the
coherent configurations.
\end{observation}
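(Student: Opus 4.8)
The plan is to prove Observation~\ref{obs:stable-coherent} by establishing the two inclusions: every coherent configuration is stable under a round of WL refinement, and every stable configuration is coherent. Both directions are essentially definitional unpackings, so the ``proof'' is really a matter of carefully comparing the data recorded by a WL round against the coherence axiom~(iv).

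\medskip\noindent
\emph{Stable $\Rightarrow$ coherent.} Suppose $\xxx=(\Omega;R_1,\dots,R_r)$ is stable, i.e., the refinement round does not split any $R_i$. By definition of the round, the new color $c'(x,y)$ records, for each pair $(j,k)$, the quantity $N_{jk}(x,y):=|\{z\mid c(x,z)=j,\ c(z,y)=k\}|$. Stability means $c'(x,y)=c'(x',y')$ whenever $c(x,y)=c(x',y')$; since $c'$ encodes the full list $(N_{jk}(x,y))_{j,k}$ together with $c(x,y)$, this says precisely that $N_{jk}(x,y)$ depends only on the color $c(x,y)=:\kappa$ of the pair. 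Define $p_{jk}^{\kappa}$ to be this common value. This is exactly axiom~(iv) of Section~\ref{sec:CCdef}, so $\xxx$ is a coherent configuration. (One should note that a binary configuration already satisfies (i)--(iii) by hypothesis, and WL refinement preserves these: it only subdivides classes, a refinement of a configuration remains a configuration. This is the one point worth a sentence of justification.)

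\medskip\noindent
\emph{Coherent $\Rightarrow$ stable.} Conversely, if $\xxx$ is coherent with structure constants $p_{jk}^{\kappa}$, then for any pair $(x,y)$ with $c(x,y)=\kappa$ we have $N_{jk}(x,y)=p_{jk}^{\kappa}$ for all $j,k$, so the list recorded by the refinement round is determined by $\kappa$ alone. Hence two pairs of the same old color receive the same new color, and no relation is split; that is, $\xxx'=\xxx$ and $\xxx$ is stable. I would also remark that the colors are re-sorted lexicographically at the end of each round, but since the underlying partition of $\Omega\times\Omega$ is unchanged this only relabels the $R_i$ and does not affect stability.

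\medskip\noindent
There is no real obstacle here; the statement is a direct translation between the operational description of one WL round and the combinatorial axiom of coherence, and the mild care needed is just to observe that WL refinement keeps us within the class of configurations (properties (i)--(iii) are preserved under refining the color classes) and that the lexicographic re-sorting is cosmetic. If anything deserves emphasis, it is that a single stable round already forces coherence — one does not need to appeal to the termination of the whole process — because the structure constants are read off directly from the round's output.
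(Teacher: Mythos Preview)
Your proof is correct. The paper does not actually supply a proof of this observation---it is stated as immediate from the definitions---so your careful unpacking of the two directions is exactly the intended (and only reasonable) argument, just made explicit.
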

The process is clearly
\emph{canonical} in the following sense.  Let $\xxx$ and $\yyy$
be configurations.  We simultaneously execute each round of
refinement (merging the lists of refined colors).  Let 
$\xxx^*$ and $\yyy^*$ be the coherent configurations obtained.
Then
\begin{equation}  \label{eq:WL-canonical}
    \iso(\xxx,\yyy)=\iso(\xxx^*,\yyy^*).
\end{equation}
In particular, if one of the colors of $\xxx^*$ does not occur in
$\yyy^*$ then $\xxx$ and $\yyy$ are not isomorphic, so WL gives
an isomorphism rejection tool.

\subsubsection{$k$-dimensional WL refinement}
The $k$-ary version of this process, referred to as ``$k$-dimensional
WL refinement,'' was introduced by
Mathon and this author~\cite{toronto}
in 1979 and independently by Immerman and Lander~\cite{immerman}
in the context of counting logic, cf.~\cite{cfi}.  
The refinement step is defined as follows. 
Let $\xxx=(\Omega;R_1,\dots,R_r)$ be a $k$-ary configuration
(Sec.~\ref{sec:kary}).  
For $\vx=(x_1,\dots,x_k)\in\Omega^k$
we encode in the
new color $c'(\vx)$ the following information: the old color
$c(\vx)$, and for all $i_1,\dots,i_k\le r$, the number
$|\{y\in\Omega\mid (\forall j\le r)(c(\vx_j^{\,y})=i_j)\}|$.
As before, these data form a list, naturally ordered.  To each list
we assign a new color; these colors are sorted lexicographically.
This gives a refined coloring that defines a new configuration
$\xxx'$.  We stop when we reach a
stable configuration ($\xxx=\xxx'$).  
Observation~\ref{obs:stable-coherent} remains valid,
as is the canonicity of the stable configuration stated
in Eq.~\eqref{eq:WL-canonical}.

As far as I know, this paper is the first to derive analyzable
gain from employing the $k$-dimensional WL method for unbounded
values of $k$ (or any value $k>4$).  (In fact, I am only aware
of one paper that goes beyond $k=2$~\cite{bcstw}.)
We use $k$-dimensional WL
in the proof of the Design Lemma (Thm.~\ref{thm:design}).
In our applications of the Design Lemma, the value of 
$k$ is polylogarithmic (see Secs. 
\ref{sec:localguide},~\ref{sec:aggregate}).

\mn
\subsubsection{Complexity of WL refinement.}
The stable refinement ($k$-ary coherent configuration)
can trivially be computed in time $O(k^2n^{2k+1})$ and nontrivially
in time $O(k^2n^{k+1}\log n)$ \cite[Sec. 4.9]{immerman}.

\section{Algorithmic setup}

\subsection{Luks's framework}   \label{sec:luks}
In this section we review Luks's framework using notation and
terminology that better suits our purposes.

\mn
Let $G\le\sym(\Omega)$ be a permutation group
acting on the domain $\Omega$.  $G$ will be represented
concisely by a list of generators; if $|\Omega|=n$ then
every minimal set of generators has $\le 2n$ elements~\cite{chain}.

Let $\Sigma$ be a finite alphabet.  We consider the set 
of strings $\xx$ over the alphabet $\Sigma$ indexed by $\Omega$, \ie,
mappings $\xx : \Omega\to\Sigma$.  For 
$\tau\in\sym(\Omega)$ and $\xx :\Omega\to\Sigma$
we define the string
$\xx^{\tau}$ by setting $\xx^{\tau}(u)=\xx(u^{\tau^{-1}})$
for all $u\in\Omega$.  In other words, for all $u\in\Omega$
and $\tau\in\sym(\Omega)$,
\begin{equation}
       \xx^{\tau}(u^{\tau}) = \xx(u).
\end{equation}
(The purpose of the inversion 
is to ensure that $\xx^{\sigma\tau}=(\xx^{\sigma})^{\tau}$
for $\sigma,\tau\in\sym(\Omega)$.)

For $K\subseteq\sym(\Omega)$ we say that $\tau$ is a
\emph{$K$-isomorphism} of strings $\xx$ and $\yy$ if $\tau\in K$
and $\xx^{\tau}=\yy$.  Let $\iso_K(\xx,\yy)$ denote the
set of $K$-isomorphisms of $\xx$ and $\yy$:
\begin{equation}
   \iso_K(\xx,\yy)=\{\tau\in K\mid \xx^{\tau}=\yy\}=
           \{\tau\in K\mid (\forall u\in\Omega)(\xx(u)=\yy(u^{\tau})\}
\end{equation}
and let $\aut_K(\xx)=\iso_K(\xx,\xx)$ denote the set of
$K$-automorphisms of $\xx$.   
\begin{remark}
The only context in which we use this concept is when $K$
is a coset.  However, the general principles are more
transparent in this more general context.
\end{remark}

In the Introduction we stated the String Isomorphism decision problem:
``Is $\iso_G(\xx,\yy)$ not empty?''  In the rest of the paper we shall
use the term ``String Isomorphism problem'' for the computation version
(compute the set $\iso(\xx,\yy)$).  The decision and computation versions
are polynomial-time equivalent (under Cook reductions).
\begin{definition}[String Isomorphism Problem]  \label{def:string}
\begin{tabbing} mmmmmm \= m \= m \= m \= m \kill \\
Input: \> a set $\Omega$, a finite alphabet $\Sigma$, 
       two strings $\xx,\yy:\Omega\to\Sigma$,  \\
       \> a permutation group
          $G\le\sym(\Omega)$ (given by a list of generators)  \\
Output: \> the set $\iso_G(\xx,\yy)$.\quad If this set is nonempty, 
           it is represented by a list of \\
        \>  generators of the group $\aut_G(\xx)$
            and a coset representative $\sigma\in\iso_G(\xx,\yy)$.
\end{tabbing}
\end{definition}

For $K\subseteq\sym(\Omega)$
and $\sigma\in \sym(\Omega)$ we note the \emph{shift identity}
\begin{equation}   \label{eq:shift}
   \iso_{K\sigma}(\xx,\yy)=\iso_K(\xx,\yy^{\sigma^{-1}})\sigma .
\end{equation}

For the purposes of recursion we need to introduce one more variable,
a subset $\Delta\subseteq \Omega$ to which 
we shall refer as the \emph{window}.  
\begin{definition}[Window isomorphism]
Let $\Delta\subseteq \Omega$ and $K\subset\sym(\Omega)$.
Let 
\begin{equation}
\iso_K^{\Delta}(\xx,\yy)=\{\tau\in K\mid 
  (\forall u\in\Delta)(\xx(u)=\yy(u^{\tau}))\} .
\end{equation}
\end{definition}
For $K\subseteq\sym(\Omega)$
and $\sigma\in \sym(\Omega)$ we
again have the \emph{shift identity:}
\begin{equation}   \label{eq:shift-window}
   \iso_{K\sigma}^{\Delta}(\xx,\yy)=
    \iso_K^{\Delta}(\xx,\yy^{\sigma^{-1}})\sigma .
\end{equation}

\begin{remark}[Alignment]
Applying Eq.~\eqref{eq:shift} to a \emph{subgroup}
$K=G\le \sym(\Omega)$, we see that
the isomorphism problem for the pair $(\xx,\yy)$ of strings
with respect to a coset $G\sigma$ is the same as the isomorphism
problem for $(\xx,\yy^{\sigma^{-1}})$ with respect to the group $G$.  
In view of Eq.~\eqref{eq:shift-window}, the same holds
for window-isomorphism.
The shift $\yy\leftarrow\yy^{\sigma^{-1}}$ is an
important {\bf alignment step} that will accompany every reduction
of the ambient group $G$.  
\end{remark}
\begin{remark}
When applying the concept of window-isomorphism, we shall always
assume that the window is \emph{invariant} under the group $G\le\sym(\Omega)$,
and $K$ is a coset, $K=G\sigma$ for some $\sigma\in\sym(\Omega)$.
Under these circumstances we make the following observations.
\begin{enumerate}[(i)]
\item  $\aut_G^{\Delta}(\xx)$ is a subgroup of $G$ 
\item  $\iso_{G\sigma}^{\Delta}(\xx,\yy)$ is either
       empty or a right coset of $\aut_G^{\Delta}(\xx)$, namely,
\begin{equation}
 \iso_{G\sigma}^{\Delta}(\xx,\yy)=\aut_G^{\Delta}(\xx)\sigma'
      \text{\quad for any\quad} \sigma'\in \iso_{G\sigma}^{\Delta}(\xx,\yy)\,.
\end{equation}
\end{enumerate}
However, again, the general principles are more
transparent in the more general context where 
$K$ is an arbitrary subset of $\sym(\Omega)$ and
$\Delta$ is an arbitrary subset of $\Omega$.
\end{remark}
The following straighforward identity plays a central role in Luks's
method.  Let $K,L\subseteq\sym(\Omega)$ and $\Delta\subseteq\Omega$.
Then
\begin{equation}   \label{eq:union}
   \iso_{K\cup L}^{\Delta}(\xx,\yy)=
   \iso_{K}^{\Delta}(\xx,\yy)\cup\iso_{L}^{\Delta}(\xx,\yy)
\end{equation}

Next we describe Luks's group-theoretic divide-and-conquer
strategies.

\begin{proposition}[Weak Luks reduction] \label{prop:weakluks}
Let $H\le G$.  Then finding $\iso_G^{\Delta}(\xx,\yy)$ reduces to 
$|G:H|$ instances of finding $\iso_H^{\Delta}(\xx,\yy^{\sigma})$ for
various $\sigma\in G$.
\end{proposition}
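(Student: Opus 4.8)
The plan is to use the identity~\eqref{eq:union} together with a coset decomposition of $G$ over $H$. Let $G = \bigcup_{i=1}^{m} H\sigma_i$ be the (disjoint) decomposition of $G$ into right cosets of $H$, where $m = |G:H|$ and $\sigma_1,\dots,\sigma_m$ are coset representatives, which we may compute from the given generators of $G$ and $H$ by standard permutation-group algorithms (e.g.\ Schreier--Sims, see~\cite{seress-book}). Applying Eq.~\eqref{eq:union} repeatedly (once for each of the $m$ cosets) gives
\begin{equation}
  \iso_G^{\Delta}(\xx,\yy) = \bigcup_{i=1}^{m} \iso_{H\sigma_i}^{\Delta}(\xx,\yy).
\end{equation}
So it suffices to show that each term $\iso_{H\sigma_i}^{\Delta}(\xx,\yy)$ can be obtained from a single instance of window-isomorphism with respect to the subgroup $H$.

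For this I would invoke the shift identity~\eqref{eq:shift-window} with $K = H$ and $\sigma = \sigma_i$:
\begin{equation}
  \iso_{H\sigma_i}^{\Delta}(\xx,\yy) = \iso_{H}^{\Delta}(\xx,\yy^{\sigma_i^{-1}})\,\sigma_i.
\end{equation}
Thus, having solved the instance $\iso_{H}^{\Delta}(\xx,\yy^{\sigma_i^{-1}})$ — which is exactly a string-isomorphism computation with ambient group $H$, the same window $\Delta$, and the aligned target string $\yy^{\sigma_i^{-1}}$ — we recover the $i$-th term by right-multiplying the resulting coset (or $\emptyset$) by $\sigma_i$. Taking the union over $i = 1,\dots,m$ yields $\iso_G^{\Delta}(\xx,\yy)$, completing the reduction to $m = |G:H|$ instances of the desired form.

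I do not expect a genuine obstacle here; the statement is essentially a bookkeeping consequence of the two displayed identities. The only point requiring a word of care is the output format demanded by Definition~\ref{def:string}: the union of the $m$ cosets must be repackaged as (generators of) a single group $\aut_G^{\Delta}(\xx)$ together with one coset representative. This is routine — one discards the empty terms, picks any surviving $\sigma' = \tau_j\sigma_j$ as the representative, and notes that $\aut_G^{\Delta}(\xx) = \langle \aut_H^{\Delta}(\xx),\ \{\tau_i\sigma_i(\tau_j\sigma_j)^{-1}\}\rangle$ where $\tau_i\sigma_i$ ranges over the nonempty terms — but it does contribute the additive cost alluded to in the paper's complexity discussion, and I would mention it only in passing rather than belabor it.
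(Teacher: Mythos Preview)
Your proof is correct and follows essentially the same approach as the paper: decompose $G$ into right cosets of $H$, apply the union identity~\eqref{eq:union}, then apply the shift identity~\eqref{eq:shift-window} to each term. The paper's proof is more terse (it omits your remarks on computing coset representatives and repackaging the output), but the mathematical content is identical.
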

\begin{proof}
We can write $G=\bigcup_{\sigma}H\sigma$ where $\sigma$ ranges over a set 
of right coset representatives of $H$ in $G$.  
Applying Eq.~\eqref{eq:union} to this decomposition, we obtain
\begin{equation}   \label{eq:weakluks}
   \iso_G^{\Delta}(\xx,\yy)=
   \bigcup_{\sigma}\iso_{H\sigma}^{\Delta}(\xx,\yy)=
   \bigcup_{\sigma}\iso_H^{\Delta}(\xx,\yy^{\sigma^{-1}})\sigma
\end{equation}
where we also employed the shift identity, Eq.~\eqref{eq:shift-window}.
\end{proof}

The following identity describes Luks's basic recurrence for
sequential processing of windows.
\begin{proposition}[Chain Rule]  \label{prop:chainrule}
Let $\Delta_1$ and $\Delta_2$ be $G$-invariant 
subsets of $\Omega$ and let 
 $\iso_G^{\Delta_1}(\xx,\yy)=G_1\sigma$, where $\sigma\in G$
and $G_1\le G$.  Then
\begin{equation}
  \iso_G^{\Delta_1\cup\Delta_2}(\xx,\yy)= 
  \iso_{G_1\sigma}^{\Delta_2}(\xx,\yy)=
  \iso_{G_1}^{\Delta_2}(\xx,\yy^{\sigma^{-1}})\sigma .
\end{equation}
\end{proposition}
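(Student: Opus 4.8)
\textbf{Proof plan for the Chain Rule (Proposition~\ref{prop:chainrule}).}
The plan is to unwind the definition of window isomorphism for the union window $\Delta_1\cup\Delta_2$ and observe that it factors as a two-stage filter: first impose the constraints on $\Delta_1$, then impose those on $\Delta_2$. Concretely, I would start from the definition
\[
\iso_G^{\Delta_1\cup\Delta_2}(\xx,\yy)=\{\tau\in G\mid(\forall u\in\Delta_1\cup\Delta_2)(\xx(u)=\yy(u^{\tau}))\},
\]
and note that the condition ``$(\forall u\in\Delta_1\cup\Delta_2)$'' is the conjunction of ``$(\forall u\in\Delta_1)$'' and ``$(\forall u\in\Delta_2)$''. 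Hence $\tau\in\iso_G^{\Delta_1\cup\Delta_2}(\xx,\yy)$ iff $\tau\in\iso_G^{\Delta_1}(\xx,\yy)$ \emph{and} $\tau$ satisfies the $\Delta_2$-constraints. Since by hypothesis $\iso_G^{\Delta_1}(\xx,\yy)=G_1\sigma$, this says exactly that $\iso_G^{\Delta_1\cup\Delta_2}(\xx,\yy)$ is the set of those $\tau\in G_1\sigma$ with $(\forall u\in\Delta_2)(\xx(u)=\yy(u^{\tau}))$, which is precisely $\iso_{G_1\sigma}^{\Delta_2}(\xx,\yy)$. This gives the first equality.

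For the second equality I would simply invoke the shift identity for window isomorphism, Eq.~\eqref{eq:shift-window}, with $K=G_1$: since $G_1\le G$ and $\sigma\in G$, we have
\[
\iso_{G_1\sigma}^{\Delta_2}(\xx,\yy)=\iso_{G_1}^{\Delta_2}(\xx,\yy^{\sigma^{-1}})\sigma .
\]
That is the whole argument. The only point requiring a word of care is the role of the $G$-invariance of $\Delta_1$ and $\Delta_2$ and of $G_1$ being a subgroup: these hypotheses are what make $\iso_G^{\Delta_1}(\xx,\yy)$ a genuine coset $G_1\sigma$ in the first place (as recorded in the preceding remark), so that the phrase ``$\tau\in G_1\sigma$'' is meaningful and the shift identity applies cleanly; strictly speaking the displayed chain of equalities uses only the set-theoretic fact that $\iso_G^{\Delta_1}(\xx,\yy)=G_1\sigma$ as given. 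I do not anticipate any real obstacle here: the statement is essentially a bookkeeping identity, and the substantive content (that imposing constraints on a larger window amounts to intersecting with the solution set of the smaller window) is immediate from Eq.~\eqref{eq:union} applied to the partition of the quantifier, together with the already-established shift identity.
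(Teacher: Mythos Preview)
Your proposal is correct and follows exactly the paper's own approach: the first equality is ``immediate from the definitions'' (which you spell out explicitly as the conjunction of the $\Delta_1$- and $\Delta_2$-constraints), and the second is a direct application of the shift identity, Eq.~\eqref{eq:shift-window}. The paper's proof is just a two-sentence version of what you wrote.
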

\begin{proof}
The first equation is immediate from the definitions.
The second equation uses the shift identity,
Eq.~\eqref{eq:shift-window}.
\end{proof}

We can now describe what we call ``strong Luks reduction.''
Recall the restriction notation $G^{\Delta}$ (Notation~\ref{not:restriction}).
\begin{theorem}[Strong Luks reduction] \label{thm:strongluks}
Let $G\le\sym(\Omega)$ and let $\Delta\subseteq\Omega$ be a $G$-invariant
subset.  Let $\{B_1,\dots,B_m\}$ be a $G$-invariant partition
of $\Delta$.  Let $\psi: G\to\Gbar\le\sss_m$ be the 
induced action of $G$ on the set of blocks and let $N=\ker(\psi)$.
Then finding $\iso_G^{\Delta}(\xx,\yy)$ reduces to $m|\Gbar|=m|G/N|$ 
instances of finding $\iso_{M_i}^{B_i}(\xx,\yy^{\sigma_i})$ for the
blocks $B_i$ and certain subgroups $M_i\le N$ and $\sigma_i\in G$.
\end{theorem}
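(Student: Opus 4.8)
The plan is to combine the Chain Rule (Prop.~\ref{prop:chainrule}) with Weak Luks reduction (Prop.~\ref{prop:weakluks}), applied twice: once to peel off the block-action layer, and once inside each block. First I would introduce the ``block window'' $\Delta_0=\emptyset$ as a trivial starting point and observe that processing $\Delta$ is the same as sequentially processing the blocks $B_1,\dots,B_m$, since $\Delta=\dot\bigcup_i B_i$ and the $B_i$ need not individually be $G$-invariant. So the Chain Rule cannot be applied to the $B_i$ directly; instead the idea is to first reduce to the kernel $N=\ker(\psi)$, under which each $B_i$ \emph{is} invariant, and then chain over the blocks.

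The key steps, in order. \textbf{Step 1 (peel off the top action).} Apply Weak Luks reduction (Prop.~\ref{prop:weakluks}) with $H=N\le G$: finding $\iso_G^{\Delta}(\xx,\yy)$ reduces to $|G:N|=|\Gbar|$ instances of finding $\iso_N^{\Delta}(\xx,\yy^{\sigma})$ for coset representatives $\sigma$ of $N$ in $G$ (with the accompanying alignment shift $\yy\leftarrow\yy^{\sigma}$, as in Eq.~\eqref{eq:shift-window}). \textbf{Step 2 (chain over the blocks).} Now $N$ fixes each $B_i$ setwise, so each $B_i$ is $N$-invariant and we may apply the Chain Rule (Prop.~\ref{prop:chainrule}) iteratively: writing $\iso_N^{B_1}(\xx,\yy')=N_1\tau_1$, then $\iso_N^{B_1\cup B_2}(\xx,\yy')=\iso_{N_1}^{B_2}(\xx,\yy'^{\tau_1^{-1}})\tau_1$, and so on through $B_m$, each step passing to a subgroup $N_i\le N$ (a ``window stabilizer'') and incurring an alignment shift. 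This expresses $\iso_N^{\Delta}(\xx,\yy')$ in terms of $m$ window-isomorphism computations $\iso_{N_i}^{B_i}(\xx,\yy^{\ast})$ with $N_i\le N$. \textbf{Step 3 (restrict to $B_i$).} For the $i$-th such computation, the group $N_i\le N$ acts trivially outside its current working region; since the window is exactly $B_i$ and $B_i$ is $N_i$-invariant, the value of $\iso_{N_i}^{B_i}$ depends only on the restricted action $N_i^{B_i}$ on $B_i$ and on the pointwise-stabilizer kernel of $N_i\to\sym(B_i)$ — so we can read off a subgroup $M_i\le N$ and element $\sigma_i\in G$ with the window-isomorphism over $B_i$ computed inside $M_i$, as claimed. \textbf{Step 4 (count).} Step 1 costs a factor $|\Gbar|$; Step 2 produces $m$ block-computations per branch. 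Total: $m|\Gbar|=m|G/N|$ instances of $\iso_{M_i}^{B_i}(\xx,\yy^{\sigma_i})$.

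The main obstacle — really the only subtle point — is bookkeeping in Step 2: the alignment shifts accumulate as we chain through $B_1,\dots,B_m$, and one has to check that the composite shift landing on the $i$-th block is realized by a single element $\sigma_i\in G$ (not merely in $\sym(\Omega)$), and that the subgroup carrying the $i$-th block computation can be taken inside $N$. Both follow because every shift introduced is a coset representative of a subgroup of $G$ and the Chain Rule preserves membership in $G$; but making this precise is where the care lies. Everything else is a direct concatenation of Eq.~\eqref{eq:union}, the shift identities Eq.~\eqref{eq:shift}--\eqref{eq:shift-window}, and the two already-proved propositions.
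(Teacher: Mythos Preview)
Your proposal is correct and follows essentially the same route as the paper: weak Luks reduction to $N=\ker\psi$ (cost $|\Gbar|$), then the Chain Rule across the $N$-invariant blocks $B_1,\dots,B_m$ (cost $m$), combining at the end. Your Step~3 is superfluous (the $M_i$ are simply the groups $N_i$ produced by the Chain Rule, already subgroups of $N$) and the claim there that ``$N_i$ acts trivially outside its current working region'' is not accurate---$N_i$ merely stabilizes each block setwise---but this plays no role in the argument, so the proof stands.
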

(The cost of the reduction is polynomial per instance.)
\begin{proof}
First apply weak Luks reduction with $H=N=\ker\psi$.  Then
consider each $B_i$ to be the window in succession, reducing the
group at each round, following the Chain Rule.  In the end,
combine all the results into a single coset.
\end{proof}

Following Luks, the way this reduction is typically used is by
taking a minimal system of imprimitivity (a system with at least
two blocks that cannot be made coarser, \ie, the blocks are
maximal) so $\Gbar$ is a primitive group.  Therefore the order of
primitive groups involved in $G$ (action of subgroups on a system
of blocks of imprimitivity of the subgroup) is a critical parameter
of the performance of Luks reduction.

A final observation: when trying to determine $\iso_G^{\Delta}(\xx,\yy)$,
it suffices to consider the case $\Delta=\Omega$ (Obs.~\ref{obs:nowindow}
below).

\begin{definition}[Straight-line program] \label{def:straightline}
Given a group $G$ by a list $S$ of generators, a
\emph{straight-line program} of length $\ell$ in $G$ is a sequence
of length $\ell$ of elements of $G$ such that each element in the
sequence is either one of the generators or is a product of two
elements earlier in the sequence or is the inverse of an 
element earlier in the sequence.  We say that the straight-line program
\emph{computes} a set $T$ of elements if the elements of $T$ appear
  in the sequence and are marked as belonging to $T$.  A subgroup
  is computed if a set of generators of the subgroup is computed.
  A coset is computed if the corresponding subgroup and a coset
  representative are computed.
\end{definition}
\begin{observation}[Reducing to the window] \label{obs:nowindow}
Let $G\le \sym(\Omega)$ and let $\Delta$ be a $G$-invariant subset of
$\Omega$.  
Let $\xx^{\Delta}$ and $\yy^{\Delta}$ be the
restriction of $\xx$ and $\yy$ to $\Delta$, respectively.  
Given a straight-line program of length $\ell$ 
that computes
$\iso_{G^{\Delta}}(\xx^{\Delta},\yy^{\Delta})$, we can, in time
$O(n\ell)+\poly(n)$, compute $\iso_G^{\Delta}(\xx,\yy)$ (where $n=|\Omega|$).
\end{observation}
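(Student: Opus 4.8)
\emph{Proof plan.} The key point is that, since $\Delta$ is $G$-invariant, restriction to $\Delta$ defines a homomorphism $\rho\colon G\to G^{\Delta}\le\sym(\Delta)$, $\tau\mapsto\tau^{\Delta}$, with kernel $G_{(\Delta)}$. The first step is to record the set-theoretic identity
\[
   \iso_G^{\Delta}(\xx,\yy)=\rho^{-1}\bigl(\iso_{G^{\Delta}}(\xx^{\Delta},\yy^{\Delta})\bigr),
\]
which I would prove by unwinding the definitions: for $\tau\in G$ and $u\in\Delta$ we have $u^{\tau}\in\Delta$, so the condition ``$(\forall u\in\Delta)(\xx(u)=\yy(u^{\tau}))$'' depends only on $\tau^{\Delta}$, and after the substitution $v=u^{\tau}$ it reads $(\xx^{\tau})^{\Delta}=\yy^{\Delta}$, i.e.\ $(\xx^{\Delta})^{\tau^{\Delta}}=\yy^{\Delta}$. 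In particular $\aut_G^{\Delta}(\xx)=\rho^{-1}(\aut_{G^{\Delta}}(\xx^{\Delta}))$, and $\iso_G^{\Delta}(\xx,\yy)$ is empty precisely when $\iso_{G^{\Delta}}(\xx^{\Delta},\yy^{\Delta})$ is.

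Next I would describe the algorithm. Write $S$ for the given generating list of $G$; then $S^{\Delta}=\{s^{\Delta}\mid s\in S\}$ generates $G^{\Delta}$, and we may assume the hypothesised straight-line program is over $S^{\Delta}$. It outputs, as marked elements, generators $\bar h_1,\dots,\bar h_r$ of $\aut_{G^{\Delta}}(\xx^{\Delta})$ and, when $\iso_{G^{\Delta}}(\xx^{\Delta},\yy^{\Delta})\neq\emptyset$, a coset representative $\bar\sigma$. The plan is to re-run the very same program but with the starting generators taken to be the full-domain permutations $s\in S$ rather than their restrictions, carrying out every product and inverse in $\sym(\Omega)$; each such step costs $O(n)$, so in time $O(n\ell)$ we obtain $h_1,\dots,h_r,\sigma\in G$ with $h_i^{\Delta}=\bar h_i$ and $\sigma^{\Delta}=\bar\sigma$ (all automatically in $G$, being generated from $S$). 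Separately, using a standard permutation-group routine (Schreier--Sims) I would compute a generating set of the kernel $G_{(\Delta)}$ in time $\poly(n)$.

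Assembling the output: if $\iso_{G^{\Delta}}(\xx^{\Delta},\yy^{\Delta})=\emptyset$, return $\emptyset$; otherwise return the subgroup $H$ generated by $h_1,\dots,h_r$ together with the computed generators of $G_{(\Delta)}$, and the coset representative $\sigma$. Correctness then follows from the identity above: each listed generator lies in $\aut_G^{\Delta}(\xx)$ (its $\Delta$-restriction does), so $H\le\aut_G^{\Delta}(\xx)$; conversely $H\supseteq G_{(\Delta)}=\ker\rho$ and $H^{\Delta}\supseteq\aut_{G^{\Delta}}(\xx^{\Delta})$, so $H\supseteq\rho^{-1}(\aut_{G^{\Delta}}(\xx^{\Delta}))=\aut_G^{\Delta}(\xx)$; hence $H=\aut_G^{\Delta}(\xx)$, and $\iso_G^{\Delta}(\xx,\yy)=H\sigma$ by the coset structure already recorded.

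I expect the only genuinely non-routine point to be the observation that lifting the straight-line program alone is \emph{not} enough: it yields a subgroup of $G$ whose $\Delta$-image equals $\aut_{G^{\Delta}}(\xx^{\Delta})$, but not necessarily the full $\rho$-preimage; one must additionally supply generators of the pointwise stabilizer $G_{(\Delta)}$, and it is this (polynomial-time) stabilizer computation that accounts for the $\poly(n)$ additive term in the stated bound. Everything else is a mechanical transfer of operations from $\sym(\Delta)$ to $\sym(\Omega)$.
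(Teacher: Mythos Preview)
Your proposal is correct and follows essentially the same approach as the paper: re-execute the straight-line program over $\sym(\Omega)$ while ``maintaining the tails'' (the paper's phrase for your lifting step), then adjoin generators of the kernel $G_{(\Delta)}$ of the restriction map to obtain $\aut_G^{\Delta}(\xx)$, and use the lifted coset representative. Your write-up is more explicit about the preimage identity $\iso_G^{\Delta}(\xx,\yy)=\rho^{-1}\bigl(\iso_{G^{\Delta}}(\xx^{\Delta},\yy^{\Delta})\bigr)$ and the reason kernel generators are needed, but the underlying algorithm and its analysis are the same.
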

\begin{proof}
While we concentrate on the action of the elements of $G$ on the window,
we maintain their ``tails'' -- their action on the rest of the
permutation domain.  The set $\iso_{G^{\Delta}}(\xx^{\Delta},\yy^{\Delta})$
is empty if and only if $\iso_G^{\Delta}(\xx,\yy)$ is empty.
If $\iso_{G^{\Delta}}(\xx^{\Delta},\yy^{\Delta})$ is not empty,
in the end we obtain a subset $S\subseteq G$ and an element $\sigma\in G$
such that the restiction of the elements of $S$ to $\Delta$ generates
$\aut_{G^{\Delta}}^{\Delta}(\xx)$ and the restiction of $\sigma$ to
$\Delta$ belongs to $\iso_{G^{\Delta}}(\xx^{\Delta},\yy^{\Delta})$.
Adding to $S$ a set of generators of the kernel of the $G$-action 
on $\Delta$ we obtain a set of generators of $\aut_G^{\Delta}(\xx)$;
and $\iso_G^{\Delta}(\xx,\yy)=\aut_G^{\Delta}(\xx)\sigma$.
\end{proof}
Once again we stress that everything in this section was a review 
of Luks's work.

\subsection{Johnson groups are the only barrier}

The barriers to efficient application of Luks's reductions are
large primitive groups involved in $G$.

The following result reduces the Luks barriers to the 
class of Johnson groups at a multiplicative cost of $\le n$.

\begin{theorem}   \label{thm:cameron-to-johnson}
Let $G\le \sss_n$ be a primitive group of order $|G|\ge 2^{1+\log_2 n}$
where $n$ is greater than some absolute constant.  Then $G$ has
a normal subgroup $N$ of index $\le n$ such that $N$ has a 
system of imprimitivity on which $N$ acts as a Johnson group 
$\aaa_k^{(t)}$ with $k\ge \log_2 n$.
Moreover, $N$ and the system of imprimitivity in question
can be found in polynomial time.
\end{theorem}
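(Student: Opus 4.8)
The plan is to read the conclusion off Cameron's classification of large primitive permutation groups~\cite{cameron} (in its refined form), using it both as a structure theorem and, through the known polynomial-time structural algorithms for permutation groups, as an algorithmic black box. First I would invoke Cameron's theorem: since $|G|$ is above Cameron's threshold and $n$ exceeds the (bounded) degrees of the four exceptional Mathieu groups, $G$ falls into the ``large'' case. Thus there are integers $r\ge 1$, $1\le t$ and $k\ge\max\{5,\,2t+1\}$ with $n=\binom{k}{t}^{\,r}$ such that, writing $\Delta=\binom{[k]}{t}$, the domain can be identified $\Omega\cong\Delta^{\,r}$ so that $\soc(G)=(\aaa_k)^{\,r}$ acts coordinatewise --- the $i$-th factor $\aaa_k$ acting on the $i$-th copy of $\Delta$ as the Johnson group $\aaa_k^{(t)}$ --- and $G\le\sss_k^{(t)}\wr\sss_r$ in the product action. (That $G$ sits inside $\sss_k^{(t)}\wr\sss_r$ and nothing larger uses $k$ being large, so $\aut(\aaa_k)=\sss_k$ with no exceptional outer automorphism.) All of this data --- the parameters $k,t,r$, the identification $\Omega\cong\Delta^r$, generators of $\soc(G)$ --- is computable in polynomial time from a generating set of $G$: find the socle, constructively recognise its alternating composition factors together with their natural domains, and reassemble the product decomposition; see~\cite{seress-book} and the references therein.

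Next I would construct $N$. Let $\pi\colon G\to\sss_r$ be the action of $G$ on the $r$ simple direct factors of $\soc(G)$ --- equivalently, on the $r$ coordinates of $\Delta^r$ --- and put $N_0=\ker\pi$, a normal subgroup of $G$ with $[G:N_0]=|\pi(G)|\le r!$. For $1\le i\le r$ let $\calB_i$ be the partition of $\Omega$ into the $\binom{k}{t}$ blocks obtained by fixing the $i$-th coordinate. Since $N_0$ fixes every coordinate, each $\calB_i$ is $N_0$-invariant; $N_0$ is transitive on $\calB_i$ already through the $i$-th factor of $\soc(G)\le N_0$; and the induced homomorphism $\varphi_i\colon N_0\to\sym(\calB_i)$ has image contained in $\sss_k^{(t)}$ (because $N_0$ lies in the base group $(\sss_k^{(t)})^r$ of the wreath product) and containing $\aaa_k^{(t)}$ (the image of the $i$-th socle factor), hence equal to $\aaa_k^{(t)}$ or $\sss_k^{(t)}$ --- these being the only subgroups between $\aaa_k^{(t)}$ and $\sss_k^{(t)}$, as $\aaa_k^{(t)}$ has index $2$ in $\sss_k^{(t)}$. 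So $N_0$ already acts on the system of imprimitivity $\calB_1$ as a Johnson group. To obtain exactly the alternating image required by the statement, set $N=\bigcap_{i=1}^r\varphi_i^{-1}\!\bigl(\aaa_k^{(t)}\bigr)$. Each $\varphi_i^{-1}(\aaa_k^{(t)})$ has index $\le 2$ in $N_0$, so $[N_0:N]\le 2^r$ and $[G:N]\le 2^r r!$; the $r$ subgroups $\varphi_i^{-1}(\aaa_k^{(t)})$ are permuted by $G$ (conjugation carries the $i$-th coordinate structure, hence the intrinsic ``sign'', to the $j$-th), so $N$ is normal in $G$; and $N\supseteq\soc(G)$, whence $\varphi_1(N)=\aaa_k^{(t)}$. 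All of the group operations here --- the action on factors, kernels, preimages, the intersection --- are routine polynomial-time computations.

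It remains to check $[G:N]\le n$ and $k\ge\log_2 n$, which both come out of one estimate. Write $b=\log_2\binom{k}{t}$, so $\log_2 n=rb$ and $\log_2 k\le b\le k$. The lower bound on $|G|$ enters in the form $\log_2|G|>(\log_2 n)^2$; combined with $|G|\le(k!)^r r!$, with Stirling's estimate $\log_2(k!)\le k\log_2 k-(\log_2 e)\,k+2\log_2 k$ (valid for $k\ge 5$), and with $\log_2(r!)\le r\log_2 r$, it yields, after dividing by $r$ and using $\log_2 n=rb$,
\[
  r\,b^{2}\ <\ k\log_2 k-(\log_2 e)\,k+2\log_2 k+\log_2 r .
\]
Feeding $b\ge\log_2 k$ into this forces $r<k$ (indeed $r\le k/2$ for $k\ge 5$), hence $\log_2 r<\log_2 k$. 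Now suppose $k\le\log_2 n=rb$; then $r\ge k/b$, so $rb^{2}\ge kb$, and the displayed inequality gives
\[
  b\ <\ \log_2 k-\log_2 e+\frac{2\log_2 k+\log_2 r}{k}\ <\ \log_2 k ,
\]
the last inequality because $(2\log_2 k+\log_2 r)/k<3\log_2 k/k<\log_2 e$ for $k\ge 5$. This contradicts $b\ge\log_2 k$, so $k>\log_2 n$, and as $k$ is an integer, $k\ge\log_2 n$. Finally, from $r\le k/2$ we get $2^r r!\le(2r)^r\le k^r\le\binom{k}{t}^{\,r}=n$, so $[G:N]\le n$. Together with the polynomial-time computations noted above, this proves the theorem.

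I expect the only genuinely non-elementary ingredients to be Cameron's theorem (which rests on CFSG) and its constructive counterpart --- the polynomial-time recovery of the alternating socle and the product-action domain $\Delta^r$ from a black-box $G$ --- essentially a pointer to the permutation-group-algorithms literature. That recognition step is the main point needing care; everything downstream of the product structure (the choice of $N$, the single estimate above) is routine.
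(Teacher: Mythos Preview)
Your approach is essentially the paper's: invoke Cameron's theorem (in the Cameron--Mar\'oti form) to identify $G$ as a Cameron group with parameters $k,t,s$, then play the upper bound $|G|\le(k!)^s s!$ against the hypothesis $|G|\ge n^{1+\log_2 n}$ to extract $k\ge\log_2 n$ and $s!<n$, and cite the standard literature for the polynomial-time recognition. The paper's arithmetic is arranged a bit differently (it bounds $s!\le(\log_2 n)^{\log_2 n}$ directly from $ts\le\log_2 n$, rather than going through your $r\le k/2$), but the substance is the same. One small point: your extra intersection with the $\varphi_i^{-1}(\aaa_k^{(t)})$ is not needed, since the paper counts both $\sss_k^{(t)}$ and $\aaa_k^{(t)}$ as ``Johnson groups''; taking $N=N_0=\ker\pi$ already gives a Johnson action on $\calB_1$ with $[G:N]\le s!<n$.
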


The mathematical part of this result is an immediate consequence
of Cameron's classification of large primitive groups which we 
state below.

The \emph{socle} $\soc(G)$ of the group $G$ is defined as the
product of its minimal normal subgroups.  $\soc(G)$ can be written 
as $\soc(G) = R_1\times\dots\times R_s$ where the $R_i$ are isomorphic 
simple groups.  


\begin{definition}
$G\le \sss_n$ is a \emph{Cameron group} with parameters
$s,t\ge 1$ and $k\ge \max(2t,5)$ if for some $s, t\ge 1$ and 
$k > 2t$ we have $n=\binom{k}{t}^s$, the socle of $G$
is isomorphic to $\aaa_k^s$, and
$(\aaa_k^{(t)})^s\le G\le \sss_k^{(t)}\wr \sss_s$
(wreath product, product action), moreover the induced 
action $G\to \sss_s$
on the direct factors of the socle is transitive.
\end{definition}

Note that for $k\ge 5$ the Johnson groups $\sss_k^{(t)}$ and
$\aaa_k^{(t)}$ are exactly the Cameron groups with $s=1$.

\begin{theorem}[Cameron~\cite{cameron}, Mar\'oti~\cite{maroti}]
For $n\ge 25$, if $G$ is primitive and 
$|G|\ge n^{1+\log_2 n}$ then $G$ is a Cameron group.
\end{theorem}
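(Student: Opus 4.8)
\medskip\noindent\textbf{Proof proposal.}
The plan is to derive this from the O'Nan--Scott theorem together with CFSG-based estimates on the orders of finite simple groups and of their primitive permutation actions; this is the route of Cameron~\cite{cameron}, with the explicit threshold supplied by Mar\'oti~\cite{maroti}. I would argue by induction on $n$. The O'Nan--Scott theorem partitions a primitive group $G\le\sss_n$ into a small number of types according to the structure of $\soc(G)$: affine (HA), almost simple (AS), the various diagonal types (SD, HS, CD, HC), product action (PA), and twisted wreath (TW). The goal is to show that, once $n\ge 25$ and $|G|\ge n^{1+\log_2 n}$, only the AS and PA types survive, and in those cases $G$ is forced to be a Cameron group.

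First I would dispose of the ``non-large'' types by direct order bounds. For an affine group $G\le\aff(d,p)$ with $n=p^d$ one has $|G|\le p^d\,|\gl(d,p)|<p^{d+d^2}$, whereas $n^{1+\log_2 n}=p^{d+d^2\log_2 p}\ge p^{d+d^2}$ since $p\ge 2$; so the hypothesis fails. For the diagonal and twisted-wreath types with $\soc(G)\cong T^r$ ($T$ nonabelian simple) one has $n\ge|T|^{r-1}$ in the relevant subcases and $|G|\le|T|^r\,|\out(T)|\,r!$, and a routine computation using $|\out(T)|\le\log_2|T|$ and $|T|\ge 60$ gives $|G|<n^{1+\log_2 n}$. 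For product action, $G\le H\wr\sss_r$ acting on $\Delta^r$ with $H$ primitive of type AS or diagonal on $\Delta$ and $r\ge 2$; here $|G|\le|H|^r\,r!$ and $n=|\Delta|^r$, so by the inductive hypothesis applied to $H$, either $|H|<|\Delta|^{1+\log_2|\Delta|}$ (and then one checks $|G|<n^{1+\log_2 n}$), or --- to keep $|G|$ above the threshold --- $H$ is a Cameron group with $s=1$, i.e.\ $\aaa_k^{(t)}\le H\le\sss_k^{(t)}$ with socle $\aaa_k$; then, tracking the product action and the transitive permutation of the $r$ coordinates, $(\aaa_k^{(t)})^r\le G\le\sss_k^{(t)}\wr\sss_r$ with socle $\aaa_k^r$ and $n=\binom{k}{t}^r$, so $G$ is a Cameron group with $s=r$.

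The heart of the argument is the almost simple case: $T\le G\le\aut(T)$ with $T$ simple, $G_\alpha$ a maximal subgroup not containing $T$, and $n=|G:G_\alpha|$. Here I would run through the CFSG list and, for each simple $T$, compare $|\aut(T)|$ with the smallest degree of a faithful primitive action. For $T=\aaa_k$, the classification of maximal subgroups of $\sss_k$ shows that the only primitive actions with degree small enough to keep $|G|\ge n^{1+\log_2 n}$ are the actions on $t$-subsets, where $n=\binom{k}{t}$ and $\aaa_k^{(t)}\le G\le\sss_k^{(t)}$ --- precisely the Cameron groups with $s=1$; actions on partitions, ordered tuples, cosets of intransitive/imprimitive or classical subgroups, etc.\ all have degree growing fast enough in $k$ that $n^{1+\log_2 n}$ overtakes $|G|\le k!$. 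For $T$ of Lie type of rank $\ell$ over $\mathbb F_q$ one uses $|G|\le q^{c\ell^2}\cdot(\text{small})$ and $n\ge q^{c'\ell}$ (the minimal faithful degree is essentially that of the natural geometric action), and since $\log_2 n\gtrsim\ell\log_2 q$ the exponent $1+\log_2 n$ comfortably beats any bound on $|G|$ with exponent polynomial in $\ell$ alone; the finitely many small rank / small $q$ border cases are checked by hand, as are the $26$ sporadic groups. This yields the sharp trichotomy: either $G$ is a Cameron group, or $G\in\{M_{11},M_{12},M_{23},M_{24}\}$ in its $4$-transitive action (degrees $11,12,23,24$), or $|G|<n^{1+\log_2 n}$. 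The hypothesis $n\ge 25$ removes the Mathieu possibilities, which completes the almost simple case and, by the reductions above, the theorem.

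\medskip\noindent\emph{Main obstacle.} The arithmetic in the non-large types is routine; the real work is the CFSG bookkeeping in the almost simple case (and hence, via the recursion, in the product action case). One must control the maximal subgroups of each simple group well enough to lower-bound the minimal faithful primitive degree, and the estimates must be sharp enough to land the exponent at exactly $1+\log_2 n$ rather than merely $O(\log n)$ --- this is what forces Mar\'oti's careful treatment of the small classical groups and the border cases. It is also why the bound is essentially best possible: the affine groups $\aff(d,2)$ have $n=2^d$ and $|G|=2^d|\gl(d,2)|$, which sits just below $2^{d^2+d}=n^{1+\log_2 n}$.
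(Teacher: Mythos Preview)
The paper does not prove this theorem at all; it is stated with attribution to Cameron~\cite{cameron} and Mar\'oti~\cite{maroti} and used as a black box in the reduction to Johnson groups. Your proposal is therefore not comparable to anything in the paper --- what you have written is (a correct high-level outline of) the proof in the cited references, namely an O'Nan--Scott case split followed by CFSG-based order bounds in the almost simple case, with Mar\'oti's contribution being the explicit threshold $n^{1+\log_2 n}$ and the handling of the small cases. Your sketch is accurate in spirit and identifies the real difficulty correctly (the sharp bookkeeping for almost simple groups, especially small-rank classical groups over small fields); the other O'Nan--Scott types are indeed disposed of by the sort of crude estimates you indicate. For the purposes of Babai's paper, however, no proof is expected here: the result is imported wholesale, and indeed Section~\ref{sec:CFSG} explains how even this dependence can be removed.
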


We can further reduce Cameron groups to Johnson groups.

\begin{proposition}
If $G\le\sss_n$ is a Cameron group with parameters $k,t,s$ then
$ts\le \log_2 n$.  Moreover, $s\le \log n/\log k\le \log n/\log 5$.
\end{proposition}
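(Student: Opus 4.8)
The plan is to combine the defining identity $n=\binom{k}{t}^{\,s}$ with two elementary lower bounds on $\binom{k}{t}$, one giving the factor $2^t$ and one giving the factor $k$. Taking logarithms then yields the two claimed inequalities directly.

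First I would prove $ts\le\log_2 n$. Since the parameters satisfy $k>2t$, in particular $k\ge 2t$, monotonicity of $\binom{\cdot}{t}$ in the upper argument gives $\binom{k}{t}\ge\binom{2t}{t}$. Writing $\binom{2t}{t}=\prod_{i=1}^{t}\frac{t+i}{i}$ and noting that each factor satisfies $\frac{t+i}{i}\ge 2$ for $1\le i\le t$ (equivalently $t\ge i$), we get $\binom{2t}{t}\ge 2^{t}$, hence $\binom{k}{t}\ge 2^{t}$. Raising to the $s$-th power and using $n=\binom{k}{t}^{\,s}$ yields $n\ge 2^{ts}$, i.e.\ $ts\le\log_2 n$.

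For the ``moreover'' part I would use that $\binom{k}{t}\ge k$ whenever $1\le t\le k-1$: the sequence $\binom{k}{0},\binom{k}{1},\dots,\binom{k}{k}$ is unimodal with $\binom{k}{1}=\binom{k}{k-1}=k$, so the value at any interior index is at least $k$. Here $k>2t$ forces $t\le k-2<k-1$, so the hypothesis applies and $\binom{k}{t}\ge k$. Therefore $n=\binom{k}{t}^{\,s}\ge k^{s}$, whence $s\le\log n/\log k$. Finally, since by definition $k\ge\max(2t,5)\ge 5$, we have $\log k\ge\log 5$ and thus $\log n/\log k\le\log n/\log 5$, completing the chain $s\le\log n/\log k\le\log n/\log 5$.

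The argument is entirely routine; the only points requiring (trivial) care are the edge conditions in the two binomial inequalities, and these follow immediately from $k>2t$ together with $t\ge 1$. There is no substantive obstacle.
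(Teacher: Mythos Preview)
Your proof is correct and follows essentially the same approach as the paper: both arguments reduce to the two elementary bounds $\binom{k}{t}\ge 2^{t}$ and $\binom{k}{t}\ge k$, then take logarithms. The only cosmetic difference is that the paper obtains $\binom{k}{t}\ge 2^{t}$ via $\binom{k}{t}\ge (k/t)^{t}\ge 2^{t}$ (using $k>2t$), whereas you route through $\binom{k}{t}\ge\binom{2t}{t}\ge 2^{t}$; the second inequality $\binom{k}{t}\ge k$ is used identically.
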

\begin{proof}
We have  $n=\binom{k}{t}^s\ge (k/t)^{ts}\ge 2^{ts}.$  Moreover,
$n=\binom{k}{t}^s\ge k^{s}.$  
\end{proof}

\begin{proposition}
If $G\le\sss_n$ is a Cameron group with parameters $k,t,s$ 
and $|G|\ge n^{1+\log_2 n}$ then $k\ge \log_2 n$ and
$s! < n$, assuming $n$ is greater than an absolute constant.
\end{proposition}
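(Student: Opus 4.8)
The plan is to combine the order bound coming from the product action with the identity $n=\binom{k}{t}^{s}$. First I would bound $|G|\le|\sss_k^{(t)}\wr\sss_s|=(k!)^{s}\,s!$, using that the $t$-set action of $\sss_k$ is faithful for $k>2t$ (so $|\sss_k^{(t)}|=k!$). Since $k!=\binom{k}{t}\,t!\,(k-t)!$ and $n=\binom{k}{t}^{s}$, this reads $|G|\le n\cdot\bigl(t!\,(k-t)!\bigr)^{s}\,s!$; combining with the hypothesis $|G|\ge n^{1+\log_2 n}$ and cancelling one factor $n$ gives
\begin{equation*}
  (\log_2 n)^{2}\;\le\; s\log_2\!\bigl(t!\,(k-t)!\bigr)+\log_2 s!\,.
\end{equation*}
Because $\binom{k}{t}\ge k$ for $1\le t\le k-1$, we have $t!\,(k-t)!=k!/\binom{k}{t}\le(k-1)!$, so the right-hand side is at most $s\log_2(k-1)!+\log_2 s!$. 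I would also keep in reserve, from the Proposition immediately preceding this one (whose proof uses no size hypothesis), the inequalities $s\log_2 k\le\log_2 n$ (i.e.\ $k^{s}\le n$) and $ts\le\log_2 n$; in particular $s\le\log_2 n$.

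A first, crude pass uses only $\log_2(k-1)!<k\log_2 k$ and $\log_2 s!\le s\log_2 s$: then $sk\log_2 k=k\,(s\log_2 k)\le k\log_2 n$ and $s\log_2 s\le(\log_2 n)(\log_2\log_2 n)$, so the displayed inequality forces $k\ge\log_2 n-\log_2\log_2 n$. This is not yet the claim, but it shows $k$ is already within $\log_2\log_2 n$ of $\log_2 n$, which is exactly what a sharper pass needs.

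For the second pass I would use a real Stirling estimate $\log_2(k-1)!\le k\log_2 k-ck+O(1)$ with a constant $c>1$ (e.g.\ $c=1.44$, from $m!\le 2m^{m+1}e^{-m}$). Substituting and again bounding $sk\log_2 k\le k\log_2 n$ and $\log_2 s!\le s\log_2 s$ yields
\begin{equation*}
  (\log_2 n)^{2}\;\le\; k\log_2 n + s\bigl(\log_2 s-c\,k+O(1)\bigr).
\end{equation*}
By the first pass $k\ge\log_2 n-\log_2\log_2 n$, and $\log_2 s\le\log_2\log_2 n$, so for $n$ larger than an absolute constant the bracket is negative; the parenthesised term drops, leaving $(\log_2 n)^{2}\le k\log_2 n$, i.e.\ $k\ge\log_2 n$. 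Granting this, $\log_2 k\ge\log_2\log_2 n$, so the cited Proposition gives $s\le\log_2 n/\log_2 k\le\log_2 n/\log_2\log_2 n$, hence $\log_2 s!\le s\log_2 s<(\log_2 n/\log_2\log_2 n)\log_2\log_2 n=\log_2 n$, i.e.\ $s!<n$.

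The step I expect to be the main obstacle is exactly this need to run the estimate twice. A single application of the order bound is genuinely too weak: in the regime where $k$ is near its minimum ($k\approx 5$) and $s$ near $\log_2 n$, the contribution $\log_2 s!\approx s\log_2 s$ outweighs the Stirling saving $-c\,sk$, so the basic inequality cannot on its own separate $k$ from $\log_2 n$. It is precisely the crude bound $k\gtrsim\log_2 n-\log_2\log_2 n$, obtained first, that then licenses discarding the $\log_2 s!$ error term in the refined pass and pins $k$ down to $\ge\log_2 n$.
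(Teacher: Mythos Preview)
Your argument is correct and follows the same two-pass bootstrap as the paper: start from $|G|\le(k!)^s s!$ together with $k^s\le n$, obtain $k\ge\log_2 n-\log_2\log_2 n$ in a first pass, then feed this back to get $k\ge\log_2 n$. The execution differs. You extract one factor of $n$ via $k!=\binom{k}{t}\,t!(k-t)!$ and then need a Stirling-type saving in the second pass; the paper instead uses the cruder $k!<k^k$ throughout, writing $|G|<k^{ks}s!\le n^k s!$, and organizes the bootstrap differently: after the first pass it notes $k>\log_2 n-\log_2\log_2 n>\log_2 n/\log 5\ge s$, whence the one-line estimate $s!<s^s\le k^s\le n$; plugging $s!<n$ back into $n^{1+\log_2 n}\le n^k s!<n^{k+1}$ gives $k\ge\log_2 n$ directly. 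So the paper proves $s!<n$ \emph{before} the second pass and uses it as the bootstrap, avoiding Stirling entirely; you prove it \emph{after}. Both work; the paper's route is shorter and the observation $s!<s^s\le k^s\le n$ is worth remembering. (Minor point: your Stirling remainder should be $O(\log k)$ rather than $O(1)$, but since this gets multiplied by $s$ and $s\log k\le\log_2 n$, the bracket is still $-c\log_2 n+O(\log\log n)<0$ for large $n$, so your conclusion is unaffected.)
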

\begin{proof}
As before, we have  $n\ge k^{s}.$  On the other hand
$n^{1+\log_2 n} \le |G| \le (k!)^s s! < k^{ks}s! \le n^k s!
  < n^k (\log_2 n)^{\log_2 n}=
   n^{k + \log_2\log_2 n}$.  Therefore $k>\log_2n -\log_2\log_2 n
  > \log_2 n/\log 5 \ge s$.  Hence, $s! < s^s\le k^s\le n$.
  Moreover, $n^{1+\log_2 n} < n^k s! < n^{k+1}$, hence 
  $k\ge \log_2 n$.  
\end{proof}

This completes the proof of the mathematical part of 
Theorem~\ref{thm:cameron-to-johnson}.  The algorithmic part
is well known: Cameron groups can be recognized and their
structure mapped out in polynomial time (and even in NC~\cite{nc}).

\subsection{Reduction to Johnson groups}
\label{sec:reducetojohnson}
We summarize the reduction to Johnson groups.

\mn
{\sf Procedure Reduce-to-Johnson}

\mn
Input: group $G\le\sym(\Omega)$, strings $\xx,\yy : \Omega\to\Sigma$

\mn
Output: $\iso_G(\xx,\yy)$ or updated $\Omega,G,\xx,\yy$, $G$ transitive, 
  with set
  $\calB$ of blocks on which $G$ acts as Johnson group $\ggg\le\sym(\calB)$

\begin{enumerate}
\item \bif\ $G\le \aut(\xx)$ \bthen\ 
   \begin{itemize}
   \item[] \bif\ $\xx=\yy$ \bthen\ \breturn\ $\iso_G(\xx,\yy)=G$, \bexit\ 
   \item[] \belse\ \breturn\ $\iso_G(\xx,\yy)=\emptyset$, \bexit\ 
   \end{itemize}
\item \bif\ $|G|<C_0$ for some absolute constant $C_0$ \bthen\
            compute $\iso_G(\xx,\yy)$ by brute force, \bexit\ 
\item \bif\ $G$ intransitive \bthen\ apply Chain Rule 
\item (: $G$ transitive :) Find minimal block system $\calB$. Let $m=|\calB|$.
       Let $\ggg\le\sym(\calB)$ be the induced $G$-action on $\calB$
       and $N$ the kernel of the $G\to \ggg$ epimorphism
       (: $\ggg$ is a primitive group :) 
\item \bif\ $|\ggg|< m^{1+\log_2 m}$ then reduce $G$ to $N$ via
       strong Luks reduction
\item \belse\ (: $\ggg$ a Cameron group of order $\ge m^{1+\log_2 m}$ :)
       reduce $\ggg$ to Johnson group via weak Luks reduction 
       (: Theorem~\ref{thm:cameron-to-johnson}, multiplicative cost $\le m$ :)
\item  (: $\ggg$ a Johnson group :) \\ 
       \breturn\ $\Omega, G, \calB, \ggg$ (Johnson group), $\xx, \yy$ 
\end{enumerate}

Our contribution is a {\sf ProcessJohnsonAction} routine
that takes the output of the last line as input.
The paper is devoted to this algorithm; it is
summarized in the Master Algorithm, starting with line~2
of that algorithm (Sec.~\ref{sec:master}).

\subsection{Cost estimate}
\label{sec:cost-estimate}
We describe the recurrent estimate of the cost.

By the cost of the algorithm we mean the number of group operations
performed on the domain $\Omega$.

For a real number $x\ge 1$, let $T(x)$ denote the worst-case cost
of solving String Isomorphism for strings of length $\le x$.  Let
$T_{\trans}(x)$ denote the same quantity restricted to transitive
groups and $T_{\jh}(x)$ the same quantity further restricted to the
case when $G$ acts on a minimal system of imprimitivity as a
Johnson group of order $\ge m^{1+\log_2 m}$ where $m$ is the number
of blocks $(2\le m\le x)$.  We obtain the following recurrences.  
Here $p(x)$ denotes a polynomial, representing the overhead incurred 
in the reductions.  $C_1$ is an absolute constant.  For $x<2$ we set 
$T(x)=T_{\trans}(x)=1$.  For $x\ge C_0$ (an absolute constant),
Luks reductions yield the following recurrences:

\begin{enumerate}[(i)]
\item  \label{item:recurrence-i}
  $T(x)\le \max\left\{\sum T_{\trans}(n_i)+p(x)\right\}$, where 
  the maximum is taken over all partitions of $\lfloor x\rfloor$ as
  $\lfloor x\rfloor =\sum_i n_i$ into positive integers $n_i$, 
  including the trivial partition $n_1=\lfloor x\rfloor$
  (Chain Rule)
\item \label{item:recurrence-ii} 
    $T_{\trans}(x) \le \max\{m^{2+\log_2 m}(T(x/m)+p(x)),
        m(T_{\jh}(x)+p(x))\}$,
  where the maximum is taken over all $m$ where $2\le m\le x$
  (strong Luks reduction; $m=n\le x$ covers the case when 
   $G$ is primitive)
\end{enumerate}

Assume we are looking for an upper bound $T_1(x)$ on $T(x)$
that satisfies $T_1(x)\ge x^{c\log_2 x}$ for some constant $c>1$
and is a ``nice'' function 
in the sense that $\log\log T_1(x)/\log \log x$ is monotone nondecreasing
for sufficiently large $x$. In this case we can replace 
item~\eqref{item:recurrence-i} by
\begin{itemize}
 \item[(i')]     $T(x)\le 1.1T_{\trans}(x)$ .
\end{itemize}
(The factor 1.1 absorbs the additive polynomial term.)
Moreover, we can ignore the first part of the right-hand 
side of Eq.~\eqref{item:recurrence-ii} since $T_1(x)$ automatically
satisfies 
$T_1(x) \ge m^{2+\log_2 n}(T_1(x/m)+p(x))$ (for all $m$, $2\le m\le x$,
assuming $x$ is sufficently large), so we only need to assume

\begin{itemize}
\item[(ii')]     $T_{\trans}(x) \le 1.1xT_{\jh}(x)$.
\end{itemize}
(Again, the factor 1.1 absorbs the additive
polynomial term.)  Combining inequalities (i') and (ii')
we obtain

\begin{itemize}
\item[(iii)]     $T(x) \le 2xT_{\jh}(x)$.
\end{itemize}

Our contribution is an inequality of the form

\begin{equation}   \label{eq:quasipoly1}
  T_{\jh}(x) \le q(x)T(4x/5),
\end{equation}
where $q(x)$ is a quasipolynomial function. Combining with
item (iii) we obtain 
\begin{equation}   \label{eq:quasipoly2}
  T(x) \le 2xq(x)T(4x/5) < q(x)^2T(4x/5)
\end{equation}
which resolves to $T(x)=q(x)^{O(\log x)}$, yielding
the desired quasipolynomial bound on $T(x)$.
\begin{definition}
We refer to $(G,\calB)$ as the \emph{Johnson case} if 
$G$ is a transitive group with a system $\calB$ of 
imprimitivity such that $G$ acts on $\calB$ as
a Johnson group $\sss_k^{(t)}$ or $\aaa_k^{(t)}$.
We refer to $k$ as the \emph{Johnson parameter}.
\end{definition}

To prove Eq.~\eqref{eq:quasipoly1}, we define a finer
complexity estimate that involves the Johnson parameter.

For real numbers $x\ge y\ge 5$, let $T_{\jh}(x,y)$ denote the
maximum cost of solving all Johnson cases with $n\le x$ and
Johnson parameter $\ell(x)\le k\le y$ for some specific 
polylogarithmic function $\ell(x)$.  For $y<\max\{5, \ell(x)\}$
we set $T(x,y)=0$.  We obtain recurrences of the form

\begin{itemize}
\item[(iv)] $T_{\jh}(x) = T_{\jh}(x,x)$
\item[(v)] $T_{\jh}(x,y)\le q_1(x)\left(T(4x/5)+T_{\jh}(x,0.9y)\right)$ 
\end{itemize}
where $q_1(x)$ is a quasipolynomial function.  An upper bound of the form \\
$T_{\jh}(x,y)\le T(4x/5)q_1(x)^{O(\log y)}$ follows,
hence Eq.~\eqref{eq:quasipoly1} with $q(x)=q_1(x)^{O(\log x)}$
and therefore
\begin{equation}   \label{eq:cost4}
                 T(x) = q_1(x)^{O(\log^2 x)}.
\end{equation}

Explanation of item (v): we shall either reduce the domain 
(window) size $n$ by a positive fraction, or reduce the 
Johnson parameter $k$ by a positive fraction while
not increasing $n$, at quasipolynomial multiplicative cost.
These reductions are covered under our concept of
``symmetry breaking.''

\section{Functors, canonical constructions}  \label{sec:functor}
It is critical that all our constructions be \emph{canonical}.  We
shall employ a considerable variety of constructions, so to define
canonicity for all of them at once, we find the language of
categories convenient.  (No ``category theory'' will be 
required, only the concept of categories and functors.)

The only type of category we consider will be \emph{Brandt
  groupoids}, \ie, categories in which every morphism is
invertible.  Our categories will be \emph{concrete}, \ie, the
objects $X$ have an \emph{underlying set} $\Box(X)$ and the
morphisms are mappings between the objects (bijections in our
case).  (Strictly speaking, $\Box$ is a functor from the given
category to {\sf Sets}.)  We assume $\Box$ is \emph{faithful}, \ie,
if objects $X$ and $Y$ have the same underlying set
$\Box(X)=\Box(Y)$ and the identity map on this set is a morphism
between $X$ and $Y$ then $X=Y$.  We refer to the elements of
$\Box(X)$ as the \emph{points} or the \emph{vertices} or the
\emph{elements} of $X$.  When using the term ``category,'' we shall
tacitly assume it is a concrete, faithful Brandt groupoid.  In
fact, we can limit ourselves to categories where all objects have
the same underlying set, so all morphisms are permutations.

We write $\iso(X,Y)$ for the set of $X\to Y$ morphisms 
and $\aut(X) = \iso(X,X)$.  For a category $\cal$ we write
$X\in\calC$ if $X$ is an object in $\calC$.

We shall consider categories of various types of relational
structures, including uniform hypergraphs, bipartite graphs with a
declared partition into first and second parts, partitions (\ie,
equivalence relations), any of these structures with colored
vertices and/or edges, and special subcategories of these such as
uniprimitive coherent configurations.  Three categories to be
referred to have self-explanatory names: $\sets$, $\coloredsets$,
$\partitionedsets$.  A group $G\le\sym(\Omega)$ defines the
category of $G$-isomorphisms of strings on the domain $\Omega$; the
natural notation for this category, the central object of study in
this paper, would seem to be ``{\sf $G$-Strings}.''

Given two categories $\calC$ and $\calD$, a mapping $F_o : \calC\to\calD$
is \emph{canonical} if it is the mapping of objects from a functor
$F : \calC\to\calD$.  For an object $X\in\calC$ we shall usually only 
describe the construction of the object $F(X)$; the assignment
of a morphism $F(f) : F(X)\to F(Y)$ to a morphism $f : X\to Y$
will usually be evident.  In such a case we refer to $F_o$
as a canonical assignment (or, most often, a canonical 
construction).  Canonical color refinement procedures
are examples of canonical constructions.

A \emph{canonical embedding} of objects from category $\calD$
into objects from category $\calC$ is a functor
$F :\calC\to\calD$ such that for every object $X\in\calC$
we have $\Box(F(X))\subseteq\Box(X)$ and for each morphism
$f : X\to Y$ the mapping $F(f) : F(X)\to F(Y)$ is the
restriction of $f$ to $\Box(F(X))$.

Thus, a \emph{canonical subset} of objects in $\calC$ 
is a canonical embedding of objects from the category
{\sf Sets} into the objects of $\calC$.  Note that the vertex
set of a canonically embedded object is a canonical subset.
If $F$ is a canonical embedding then the restriction of
$\aut(X)$ to $\Box(F(X))$ is a subgroup of $\aut(F(X))$.
In particular, a canonical subset of $\Box(X)$ is invariant
under $\aut(X)$.

We say that $F$ is a canonical embedding of objects from $\calD$
\emph{onto} objects from $\calC$ if $\Box(F(X))=\Box(X)$
for all $X\in\calC$.

A \emph{canonical vertex-coloring} of objects in $\calC$
is a canonical embedding of objects from {\sf ColoredSets}
\emph{onto} the objects of $\calC$ (all vertices receive
a color).  Similarly, a \emph{canonical partition} of 
objects in $\calC$ is a canonical embedding of objects from 
{\sf PartitionedSets} onto the objects of $\calC$
(all vertices belong to some block of the partition).

Finally, we would like to formalize the notion of \emph{canonicity
relative to an arbitrary choice}, such as individualization.  In
this case we consider a canonical set of objects; the objects
individually are not canonical.  Here is a possible definition.

\begin{definition}[Category of tuples]
Let $\calD$ be a category.  Let $\calE$ be a class of 
non-empty sets of objects from $\calD$ with the following properties:
\begin{enumerate}[(i)]
\item $X,X'\in\xxx\in\calE$ then $\Box(X)=\Box(X')$
\item   \label{item:tuples2}
      if $X, X'\in\xxx\in\calE$ and $Y\in\yyy\in\calE$ 
      and $f\in\iso(X,Y)$ then there exists $Y'\in\yyy$
      such that $f\in\iso(X',Y')$.
\end{enumerate}
Under these conditions we turn $\calE$ into a category as
follows: 
\begin{enumerate}[(a)]
\item for $\xxx\in\calE$ we set $\Box(\xxx)=\Box(X)$ for any $X\in\xxx$
\item for $\xxx,\yyy\in\calE$, we set 
      $\iso(\xxx,\yyy)=\bigcup\{\iso(X,Y)\mid X\in\xxx, Y\in\yyy\}$.
\end{enumerate}
\end{definition} 
\begin{proposition}
 $\calE$ is a category.
\end{proposition}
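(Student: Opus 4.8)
The plan is to check, one point at a time, that $\calE$ with the stated object class and $\iso$-sets is a concrete, faithful Brandt groupoid, i.e.\ a category in the sense used throughout the paper. Concreteness requires nothing new: $\Box(\xxx)$ is well defined by property~(i), every element of $\iso(\xxx,\yyy)$ is by construction a bijection $\Box(\xxx)\to\Box(\yyy)$, and composition of morphisms will be ordinary composition of maps. So the substantive points are (a)~closure under composition (together with the claim that the composite does not depend on the representatives chosen), (b)~existence of identities, (c)~associativity, (d)~invertibility of every morphism, and (e)~faithfulness. Of these, only (a) and (e) use anything beyond the hypothesis that each member of $\calE$ is non-empty, and both are exactly what condition~(ii) is tailored to supply; (a) is the step I expect to require actual thought.

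For (a), let $f\in\iso(\xxx,\yyy)$ and $g\in\iso(\yyy,\zzz)$. Unwinding the definition, $f\in\iso(X_0,Y_0)$ for some $X_0\in\xxx$, $Y_0\in\yyy$, and $g\in\iso(Y_1,Z_0)$ for some $Y_1\in\yyy$, $Z_0\in\zzz$. The obstruction is that $Y_0$ and $Y_1$ may be genuinely different objects of $\calD$, so $g$ after $f$ is not a priori a $\calD$-morphism. Apply condition~(ii) with the source tuple taken to be $\yyy$ (which contains both $Y_1$ and $Y_0$), the target tuple $\zzz$, and the morphism $g\in\iso(Y_1,Z_0)$: it yields $Z_0'\in\zzz$ with $g\in\iso(Y_0,Z_0')$. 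Now $f\colon X_0\to Y_0$ and $g\colon Y_0\to Z_0'$ compose in $\calD$ to a morphism $X_0\to Z_0'$, and since $X_0\in\xxx$ and $Z_0'\in\zzz$, this composite lies in $\iso(\xxx,\zzz)$. As a map of underlying sets the composite is simply $g$ after $f$, so it is independent of all choices made, and composition in $\calE$ is well defined.

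The remaining points are one-liners. The identity of $\xxx\in\calE$ is $\mathrm{id}_{\Box(\xxx)}=\mathrm{id}_{\Box(X)}$ for any $X\in\xxx$, which lies in $\iso(X,X)\subseteq\iso(\xxx,\xxx)$ and is a two-sided unit for composition; associativity is inherited from associativity of composition of set maps once (a) guarantees that each intermediate composite is again an $\calE$-morphism; and if $f\in\iso(\xxx,\yyy)$, realised as $f\in\iso(X,Y)$ with $X\in\xxx$, $Y\in\yyy$, then $f^{-1}\in\iso(Y,X)\subseteq\iso(\yyy,\xxx)$ (as $\calD$ is a Brandt groupoid), with $f^{-1}f=\mathrm{id}_{\Box(\xxx)}$ and $ff^{-1}=\mathrm{id}_{\Box(\yyy)}$ the respective identities, so every $\calE$-morphism is invertible. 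Faithfulness again invokes condition~(ii): if $\Box(\xxx)=\Box(\yyy)$ and $\mathrm{id}_{\Box(\xxx)}\in\iso(\xxx,\yyy)$, then $\mathrm{id}\in\iso(X,Y)$ for some $X\in\xxx$, $Y\in\yyy$, whence $X=Y$ by faithfulness of $\calD$, so $X\in\xxx\cap\yyy$; applying condition~(ii) to $X,X'\in\xxx$, to $X\in\yyy$, and to $\mathrm{id}\in\iso(X,X)$ gives, for every $X'\in\xxx$, an object $Y'\in\yyy$ with $\mathrm{id}\in\iso(X',Y')$, i.e.\ $X'=Y'\in\yyy$, so $\xxx\subseteq\yyy$, and by symmetry $\xxx=\yyy$. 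This completes the verification that $\calE$ is a concrete, faithful Brandt groupoid.
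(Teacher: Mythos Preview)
Your proof is correct and the key step---closure under composition via condition~(ii)---is exactly the argument the paper gives. The paper's proof is terser: it only spells out the composition check and leaves the remaining axioms (identities, associativity, inverses, faithfulness) implicit, whereas you verify each of them explicitly; but the substance is the same.
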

\begin{proof}
We need to show that the morphisms in $\calE$ are closed under composition.
Let $f\in\iso(\xxx,\yyy)$ and $g\in\iso(\yyy,\zzz)$.  We need to show that
$fg\in\iso(\xxx,\zzz)$.  By definition, there exist objects $X\in\xxx$
and $Y\in\yyy$ such that $f\in\iso(X,Y)$.  Now $g\in\iso(Y',Z')$
for some objects $Y'\in\yyy$ and $Z'\in\zzz$.  By 
assumption~\eqref{item:tuples2} there exists $Z\in\zzz$ such that
$g\in\iso(Y,Z)$.  Therefore $fg\in\iso(X,Z)\subseteq\iso(\xxx,\zzz)$.
\end{proof}

\begin{definition}[Reduction at multiplicative cost]
By a \emph{reduction of the isomorphism problem} for objects $X,Y\in\calC$
to objects in $\calD$ \emph{``at multiplicative cost $s$''} we mean
a functor $F : \calC\to\calE$ for some category $\calE$ of tuples of $\calD$
such that $|F(Y)|=s$.
\end{definition}

\begin{proposition}
If $F$ is a reduction of $\iso(X,Y)$ to $\calD$ as above then
for any $X'\in F(X)$ we have
\begin{equation}
    \iso(X,Y) =\bigcup \left\{F^{-1}(\iso(X',Y'))\mid Y'\in F(Y)\right\}.
\end{equation}
Moreover, the terms in this union are disjoint, and all the nonempty
terms have the same cardinality.
\end{proposition}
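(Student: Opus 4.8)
The plan is to unwind the definitions. Write $\xxx = F(X)$ and $\yyy = F(Y)$; these are objects of $\calE$, hence sets of objects of $\calD$, with all objects in $\xxx$ sharing one underlying set and likewise for $\yyy$. Since $F$ is a functor, on morphisms it restricts to a map $\iso(X,Y)\to\iso(\xxx,\yyy)$, and for $S\subseteq\iso(\xxx,\yyy)$ the symbol $F^{-1}(S)$ denotes the preimage of $S$ under this restricted map, so $F^{-1}(S)\subseteq\iso(X,Y)$ automatically. This observation already gives the inclusion ``$\supseteq$'' in the displayed identity for free. For ``$\subseteq$'', I would take $f\in\iso(X,Y)$, note that $F(f)\in\iso(\xxx,\yyy)=\bigcup\{\iso(A,B)\mid A\in\xxx,\ B\in\yyy\}$, so $F(f)\in\iso(A,B)$ for some $A\in\xxx$ and $B\in\yyy$, and then apply property~(ii) in the definition of the category of tuples to the objects $A,X'\in\xxx$, the object $B\in\yyy$, and the morphism $F(f)$: this produces $Y'\in\yyy=F(Y)$ with $F(f)\in\iso(X',Y')$, i.e. $f\in F^{-1}(\iso(X',Y'))$. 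Hence $f$ lies in the right-hand union, which establishes the identity.

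For disjointness of the terms, I would first use the elementary fact $F^{-1}(S_1)\cap F^{-1}(S_2)=F^{-1}(S_1\cap S_2)$ to reduce to showing that $\iso(X',Y_1')\cap\iso(X',Y_2')=\emptyset$ whenever $Y_1'\neq Y_2'$ are distinct elements of $F(Y)$. If some bijection $g$ belonged to both, then, since every morphism in a Brandt groupoid is invertible, $g^{-1}$ would be a morphism $Y_1'\to X'$, and the composite ``$g^{-1}$ then $g$'' would be a morphism $Y_1'\to Y_2'$; but by property~(i) of the category of tuples $\Box(Y_1')=\Box(Y_2')$, and this composite is the identity map on that common underlying set, so faithfulness of $\Box$ forces $Y_1'=Y_2'$, a contradiction.

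For the claim that the nonempty terms are equinumerous, I would fix $Y_1',Y_2'\in F(Y)$ with both $F^{-1}(\iso(X',Y_1'))$ and $F^{-1}(\iso(X',Y_2'))$ nonempty, and choose $f_j\in\iso(X,Y)$ with $F(f_j)\in\iso(X',Y_j')$ for $j=1,2$. Set $\sigma=f_1^{-1}f_2$; under the paper's left-to-right composition convention this is an automorphism of $Y$, so the map $\Phi\colon\iso(X,Y)\to\iso(X,Y)$, $\Phi(f)=f\sigma$, is a bijection with inverse $f\mapsto f\sigma^{-1}$. Functoriality gives $F(\Phi(f))=F(f)\,F(f_1)^{-1}F(f_2)$, and tracking the source and target objects shows that $F(f)\in\iso(X',Y_1')$ implies $F(\Phi(f))\in\iso(X',Y_2')$; symmetrically $\Phi^{-1}$ carries $F^{-1}(\iso(X',Y_2'))$ into $F^{-1}(\iso(X',Y_1'))$. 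Since $\Phi$ is a bijection of $\iso(X,Y)$, these two containments upgrade to $\Phi(F^{-1}(\iso(X',Y_1')))=F^{-1}(\iso(X',Y_2'))$, so the two terms have equal cardinality.

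The argument is almost entirely a matter of carefully unwinding definitions; I expect the one delicate point to be the last step, where the composition order must be tracked precisely so that $\sigma$ really lands in $\aut(Y)$ and functoriality is applied to the product in the right order. Beyond this bookkeeping there is no substantive obstacle.
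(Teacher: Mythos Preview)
Your proof is correct; the paper's own proof consists of the single word ``Clear,'' so you have simply supplied the details the paper omits. The unwinding via property~(ii) of the category of tuples and the faithfulness argument for disjointness are exactly the right ingredients, and your bookkeeping on the composition order in the equinumerosity step is accurate.
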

Note that $X'$ is fixed in this union and is chosen arbitrarily from 
$F(X)$.  
\begin{proof}
Clear.  
\end{proof}
So if $F$ and $F^{-1}$ are efficiently computable per item then the
cost of computing $\iso(X,Y)$ is essentially the cost of computing
$s$ instances of computing $\iso(X',Y')$ in $\calD$, where $X'$ is
up to us to choose from $F(X)$.

\begin{definition}
Let $F$ be a reduction of the isomorphism problem in $\calC$
to $\calD$ at a multiplicative cost.  Consider the category
$\calD^F$ whose objects are the pairs $(X,X')$ where 
$X\in\calC$ and $X'\in F(X)$.  We set $\Box(X,X')=\Box(X)$
and $\iso((X,X'),(Y,Y'))=F^{-1}\iso(X',Y')$.
\end{definition}
\begin{proposition}
$\calC^F$ is a category.
\end{proposition}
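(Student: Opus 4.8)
Since in this paper ``category'' means a concrete, faithful Brandt groupoid, the plan is to verify for $\calC^F$ exactly the points that the analogous proof for $\calE$ verified: that the hom-sets $\iso((X,X'),(Y,Y'))=F^{-1}\iso(X',Y')$ are closed under composition, contain identities, and contain inverses, and that the construction is concrete and faithful. Concreteness is built into the definition, with $\Box(X,X')=\Box(X)$, and associativity is automatic because every morphism of $\calC^F$ is, by construction, an element of $\iso_\calC(X,Y)$, i.e.\ an honest map $\Box(X)\to\Box(Y)$, and composition of maps is associative. So only the three algebraic closure properties really need argument, and each follows by transporting the corresponding fact through the functor $F$.

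For composition: if $h\in\iso((X,X'),(Y,Y'))$ and $g\in\iso((Y,Y'),(Z,Z'))$, then $F(h)\in\iso_\calD(X',Y')$ and $F(g)\in\iso_\calD(Y',Z')$ by definition; functoriality of $F$ gives $F(hg)=F(h)F(g)$, and since $\calD$ is a category this composite lies in $\iso_\calD(X',Z')$, so $hg\in F^{-1}\iso_\calD(X',Z')=\iso((X,X'),(Z,Z'))$. For inverses: $h^{-1}$ exists in $\calC$ (a groupoid), and $F(h^{-1})=F(h)^{-1}$ lies in $\iso_\calD(Y',X')$ because $\calD$ is a groupoid, so $h^{-1}\in\iso((Y,Y'),(X,X'))$. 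Thus $\calC^F$ will be a groupoid once we know it has identities.

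The identity check is the only place that needs a moment's care. We must see that $\mathrm{id}_{\Box(X)}\in F^{-1}\iso_\calD(X',X')$, i.e.\ that $F(\mathrm{id}_X)\in\iso_\calD(X',X')$. Functoriality makes $F(\mathrm{id}_X)$ the identity morphism of the tuple-object $F(X)\in\calE$; since $\calE$ is concrete and faithful this identity morphism is just the identity map on $\Box(F(X))$, and since $X'\in F(X)$ the defining property of a category of tuples gives $\Box(X')=\Box(F(X))$, while (again by faithfulness of $\calD$) the identity map on $\Box(X')$ is exactly $\mathrm{id}_{X'}\in\iso_\calD(X',X')$. Hence $F(\mathrm{id}_X)=\mathrm{id}_{X'}$ and $\mathrm{id}_X$ is the identity of $(X,X')$ in $\calC^F$. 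Faithfulness of $\calC^F$ is then immediate: if the identity map is a morphism $(X,X')\to(Y,Y')$, then it lies in $\iso_\calC(X,Y)$, forcing $X=Y$ by faithfulness of $\calC$, and then $F(\mathrm{id}_X)=\mathrm{id}\in\iso_\calD(X',Y')$ forces $X'=Y'$ by faithfulness of $\calD$. The only genuine obstacle is thus the bookkeeping in the identity step --- keeping straight that the a priori abstract identity morphism of $F(X)$ in $\calE$ is literally an identity map and therefore sits inside the particular sub-hom-set $\iso_\calD(X',X')$ --- and this dissolves as soon as faithfulness is invoked.
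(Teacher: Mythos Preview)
The paper gives no explicit proof of this proposition; it is stated and immediately followed by the next definition, so it is evidently treated as clear (or as directly analogous to the short proof given just above that $\calE$ is a category). Your argument is correct and supplies considerably more detail than the paper does: you verify closure under composition via functoriality of $F$, identities via $F(\mathrm{id}_X)=\mathrm{id}_{F(X)}$ together with faithfulness, inverses via the groupoid property of $\calC$ and $\calD$, and faithfulness of $\calC^F$. All of this is sound; the composition step alone (the one thing the paper bothered to check for $\calE$) would likely have sufficed here as well.
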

\begin{definition}
Let $H:\calC^F\to\calH$ be a functor and let $(X,X')\in\calC^F$.
We say that $F(X,X')$ is \emph{canonically assigned to $X$
relative to $X'$.}
\end{definition}
An example of this procedure is individualization. Let $\calC$
have two objects, each of them a hypergraph.  Suppose we
individualize an ordered set of $t$ vertices of the hypergraph $X$;
we do the same with $Y$.  We consider the category $\calD$ of all
individualized versions of $X$ and $Y$.  The category $\calE$ will
have two objects, the set of individualized versions of $X$ and the
set of individualized versions of $Y$.  Suppose after some choice
$\vu=(u_1,\dots,u_t)$ of the ordered set of individualized vertices 
we find a canonically (in $\calC^F$) embedded large UPCC $U$ in $X$.
We then say that $U$ is canonical \emph{relative to $\vu$}.
For those $\vu$ for which the procedure does not work,
we embed the empty UPCC.  The multiplicative cost will be
$s=n(n-1)\dots(n-t+1)\le n^t$ where $n$ is the number of vertices of $X$.

But this type of argument will also occur when it cannot be phrased
in terms of individualizing vertices of an object; for instance,
we shall canonically construct other objects and individualize vertices
of those with similar effect.


\section{Breaking symmetry: colored partitions}

\subsection{Colored $\alpha$-partitions}

\begin{definition}   \label{def:partition}
A \emph{colored partition} of a set $\Omega$ is a coloring of the
elements of $\Omega$ along with a partition of each color class.
We say that this is a \emph{colored equipartition} if all blocks
within the same color class have equal size.
Given a colored partition $\Pi$, let $C_1,\dots,C_r$ be the color
classes and $\{B_{ij} \mid 1\le j\le k_i\}$ be the blocks of $C_i$.
We say that $\Pi$ is \emph{admissible} if for each color class
$C_i$ of size $|C_i|\ge 2$, all the blocks of $C_i$ have size
$|B_{ij}|\ge 2$.  ($B_{ij}=C_i$ is permitted.)  Let
$\rho(\Pi)=\max_{i,j}|B_{ij}|$.  For $0<\alpha\le 1$, a
\emph{colored $\alpha$-partition} is an admissible colored
partition $\Pi$ such that $\rho(\Pi)\le\alpha n$ where
$n=|\Omega|$.  
\end{definition}

The category {\sf ColoredPartitions} has as its objects sets with a
colored partition.  The morphisms are the bijection that
preserve color and preserve
the given equivalence relation (partition) in each color class.

\begin{definition}  \label{def:canonical-partition}
A \emph{canonical colored partition} of objects of a category $\calC$
is a canonical embedding of objects from the category 
{\sf ColoredPartitions} onto the objects of $\calC$.
\end{definition}
In other words this means assigning a colored partition of the
vertex set of each object in $\calC$ such that isomorphisms in
$\calC$ preserve colors and preserve the equivalence relation
on each color class.

\begin{proposition}  \label{prop:equipartition1}
Given a colored partition, one can canonically refine it to a
colored equipartition.  Here refinement means refining the colors;
the blocks will not change, so if the partition was admissibe, it 
remains admissible.
\end{proposition}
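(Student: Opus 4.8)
The plan is to refine the coloring by recording, for each vertex, the size of the block of the given colored partition that contains it. Concretely, write $\Pi$ for the input colored partition of $\Omega$, with color classes $C_1,\dots,C_r$ and blocks $\{B_{ij}\mid 1\le j\le k_i\}$ of $C_i$ (Def.~\ref{def:partition}). For $x\in\Omega$ let $B(x)$ denote the block containing $x$ and $c(x)$ its old color, and set $c'(x)=(c(x),|B(x)|)$. Since $c'(x)=c'(y)$ forces $c(x)=c(y)$, the coloring $c'$ is a refinement of $c$. We keep every $B_{ij}$ as a block of the new colored partition $\Pi'$; this is legitimate because $c'$ is constant on $B_{ij}$ (all its vertices lie in the single old class $C_i$ and share the common block size $|B_{ij}|$), so $B_{ij}$ lies entirely inside one new color class. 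Writing $C_i^{(s)}=\{x\in C_i\mid |B(x)|=s\}$ for the new color classes, the blocks contained in $C_i^{(s)}$ are precisely the $B_{ij}\subseteq C_i$ with $|B_{ij}|=s$, and these partition $C_i^{(s)}$; hence $\Pi'$ is genuinely a colored partition and its set of blocks equals that of $\Pi$.

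It then remains to verify three things, all routine. First, $\Pi'$ is a colored \emph{equipartition}: by construction any two blocks lying in the same new class $C_i^{(s)}$ have the same size $s$. Second, $\Pi'$ is admissible: suppose $|C_i^{(s)}|\ge 2$; if $s\ge 2$ every block of $C_i^{(s)}$ has size $\ge 2$ and we are done, while $s=1$ would force $C_i$ to contain two distinct singleton blocks, hence $|C_i|\ge 2$, contradicting the admissibility of $\Pi$ (which requires every block of a color class of size $\ge 2$ to have size $\ge 2$); so $s=1$ cannot occur with $|C_i^{(s)}|\ge 2$. Third, canonicity in the sense of Def.~\ref{def:canonical-partition}: a {\sf ColoredPartitions}-morphism $f\colon X\to Y$ preserves old colors and maps blocks of $X$ onto blocks of $Y$, hence $|B(f(x))|=|B(x)|$ and therefore $c'(f(x))=(c(f(x)),|B(f(x))|)=(c(x),|B(x)|)=c'(x)$; since the blocks are unchanged, $f$ still carries blocks to blocks. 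Thus $f$ is a morphism of the refined structures, i.e.\ $\iso(X,Y)\subseteq\iso(X',Y')$; the reverse inclusion is automatic because refining a coloring only adds constraints, so $\iso(X',Y')=\iso(X,Y)$, which is exactly what canonicity of the refinement requires.

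There is no serious obstacle here: the only point that is not pure bookkeeping is the observation that the cardinality of a block is an invariant of {\sf ColoredPartitions}-morphisms, and that is immediate from the definition of those morphisms. The construction ``recolor each vertex by the pair (old color, size of its block)'' is the natural one, and I would expect the written proof to be only a few lines, essentially the paragraphs above.
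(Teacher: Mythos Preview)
Your proposal is correct and takes exactly the same approach as the paper: the paper's entire proof is the single sentence ``Encode the size of each block in the color of its elements,'' which is precisely your construction $c'(x)=(c(x),|B(x)|)$. You have simply spelled out the routine verifications (equipartition, admissibility, canonicity) that the paper leaves implicit.
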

\begin{proof}
Encode the size of each block in the color of its elements.
\end{proof}


Finding canonical colored $4/5$-partitions will be one of our
key indicators of progress.

\begin{observation}
Let $\alpha\ge 1/2$.  A colored equipartition is an $\alpha$-partition if
either each color class has size $\le \alpha n$, or the unique color-class
of size $>n/2$ (the ``dominant color class'') is nontrivially
partitioned (at leat least two blocks, the blocks have size $\ge 2$).
\end{observation}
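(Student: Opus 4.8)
The plan is to unwind the definitions. A colored $\alpha$-partition is by definition an \emph{admissible} colored partition $\Pi$ with $\rho(\Pi)\le\alpha n$, where $\rho(\Pi)$ is the largest block size over all color classes. Since a colored equipartition need not be admissible, I would first note that we carry along (or may assume) admissibility, so that the whole content of the observation reduces to verifying the single inequality $\rho(\Pi)\le\alpha n$ in each of the two stated cases.

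First I would dispose of the case in which every color class $C_i$ satisfies $|C_i|\le\alpha n$. Here every block $B_{ij}$ lies inside its color class $C_i$, so $|B_{ij}|\le|C_i|\le\alpha n$, and hence $\rho(\Pi)\le\alpha n$ with no use of the equipartition property at all.

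In the second case I would take $C$ to be the dominant color class, $|C|>n/2$; it is unique since the color classes are pairwise disjoint and cover $\Omega$. By hypothesis $C$ splits into $k\ge 2$ blocks, and since $\Pi$ is an equipartition each of these blocks has size $|C|/k\le|C|/2\le n/2\le\alpha n$, the last step using $\alpha\ge 1/2$. For any other color class $C'\neq C$, disjointness forces $|C'|\le n-|C|<n/2\le\alpha n$, so every block of $C'$ has size at most $|C'|<\alpha n$. Taking the maximum over all blocks gives $\rho(\Pi)\le\alpha n$, as required.

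There is no real difficulty here; the only things to watch are (i) that admissibility must be assumed or inherited, since the two stated conditions alone do not force it, and (ii) that the estimates use exactly the hypotheses that the dominant class has at least two blocks ($k\ge 2$) and that $\alpha\ge 1/2$ --- this is precisely how the ``$>n/2$'' threshold in the definition of the dominant color class gets converted into a ``$\le\alpha n$'' bound on block sizes.
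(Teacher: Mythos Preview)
Your proof is correct. The paper states this as an observation without proof, so there is nothing to compare against; your argument is exactly the intended unwinding of the definitions, and your caveat (i) about admissibility is a valid point that the paper leaves implicit.
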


\subsection{Effect of coloring on $t$-tuples}
\label{sec:effect}
Let $\Gamma$ be a set and $\Phi=\binom{\Gamma}{t}$ the set of
$t$-tuples of $\Gamma$. Let $|\Gamma|=m$; so $|\Phi|=\binom{m}{t}$.
We shall need to examine the effect of a coloring of $\Gamma$ on $\Phi$.
This will be used repeatedly in Section~\ref{sec:structure-discovered}.

\begin{lemma}    \label{lem:effect-on-tuples}
Let $\Gamma$ be the disjoint union of color classes $\Delta_1,\dots,\Delta_k$.
This induces a canonical coloring of $\Phi=\binom{\Gamma}{t}$ as follows:
the color of $T\in\binom{\Gamma}{t}$ is the vector
$\left(\left. |T\cap\Delta_i| \,\right|\, 1\le i\le k\right)$.  
Then 
\begin{enumerate}[(a)]
\item   \label{item:effect-a}
the size of each color
class in $\Phi$ is $\le (2/3)|\Phi|$ with the possible exception of
one of the $k$ sets $\binom{\Delta_i}{t}$. 
\item  \label{item:effect-b}
$|\binom{\Delta_i}{t}|/|\Phi|\le (|\Delta_i|/m)^t$.
\end{enumerate}
\end{lemma}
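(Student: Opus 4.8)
The plan is to write $d_i=|\Delta_i|$ (discarding any empty classes) and to record that the color class of a vector $\vec c=(c_1,\dots,c_k)$ — with $c_i\ge 0$, $c_i\le d_i$, $\sum_i c_i=t$ — has size $N(\vec c)=\prod_i\binom{d_i}{c_i}$, while Vandermonde's identity gives $\sum_{\vec c}N(\vec c)=\binom{m}{t}=|\Phi|$. Part (b) is then immediate: $\binom{\Delta_i}{t}$ is the color class of the vector $t\delta_i$ (all weight on coordinate $i$; here $\delta_i$ denotes the $i$-th coordinate vector), and if $t\le d_i$ then $\binom{d_i}{t}/\binom{m}{t}=\prod_{j=0}^{t-1}\frac{d_i-j}{m-j}\le(d_i/m)^t$, since $d_i\le m$ forces $\frac{d_i-j}{m-j}\le\frac{d_i}{m}$ for every $j\ge 0$; and if $t>d_i$ the left-hand side is $0$.

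For part (a) I would note that the color classes of the form $\binom{\Delta_i}{t}$ are exactly those whose color vector has all of its weight on one coordinate, so it suffices to show that every color vector $\vec c$ with at least two positive coordinates (a ``non-concentrated'' vector; note this forces $t\ge 2$) satisfies $N(\vec c)\le\frac23\binom{m}{t}$. The recurring mechanism is to exhibit one or two \emph{other} valid color vectors, distinct from $\vec c$ and from one another, whose sizes sum to at least $\frac12N(\vec c)$ — then $\binom{m}{t}\ge N(\vec c)+\frac12N(\vec c)$. The elementary operation is that, for $a\ne b$ with $c_a\ge 1$ and $c_b<d_b$, replacing $\vec c$ by $\vec c-\delta_a+\delta_b$ multiplies $N$ by $\frac{c_a}{d_a-c_a+1}\cdot\frac{d_b-c_b}{c_b+1}$. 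First I would reduce to $k=2$: when $k\ge 3$, merge two of the classes into one, choosing which two so that the induced vector $\bar{\vec c}$ for the coarser partition is still non-concentrated (always possible, since there are $\ge 2$ positive coordinates and, when there are exactly two, also a zero class into which to absorb one of them), and use $N(\vec c)\le N'(\bar{\vec c})$ (a single Vandermonde term is at most the sum); induction on $k$ then leaves only $k=2$. For $k=2$, since $t\le m/2<m=d_1+d_2$ the coordinates cannot both be saturated, so either neither is saturated or exactly one is. If neither ($c_1<d_1$ and $c_2<d_2$), I apply the two symmetric moves $\vec c\mp\delta_1\pm\delta_2$; writing $x=c_1$, $y=d_1-c_1$, $u=c_2$, $v=d_2-c_2$ (all $\ge 1$), the resulting ratio-sum is $\frac{xv}{(y+1)(u+1)}+\frac{yu}{(x+1)(v+1)}\ge\frac{xv}{4yu}+\frac{yu}{4xv}\ge\frac12$ by $w+1\le 2w$ and AM--GM.

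The one remaining — and, I expect, genuinely delicate — case is $k=2$ with exactly one coordinate saturated, say $c_1=d_1$ and $c_2<d_2$. Then $N(\vec c)=\binom{d_2}{c_2}$ and $t=d_1+c_2$, so with $s:=m-t=d_2-c_2\ge 1$ (and $d_1=m-d_2\ge 1$) one computes
\[
\frac{N(\vec c)}{\binom{m}{t}}=\frac{\binom{d_2}{s}}{\binom{m}{s}}=\prod_{j=0}^{s-1}\Bigl(1-\frac{d_1}{m-j}\Bigr)\le\exp\!\Bigl(-d_1\sum_{j=0}^{s-1}\frac{1}{m-j}\Bigr)\le e^{-d_1/2}\le e^{-1/2}<\frac23,
\]
using $1-x\le e^{-x}$ and $\sum_{j=0}^{s-1}\frac{1}{m-j}\ge s/m\ge\frac12$. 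This $\ge\frac12$ — hence this whole case — is exactly where the hypothesis $t\le m/2$ is essential; indeed the statement is false without a bound of this kind (take $k=2$, $d_1=m-1$, $d_2=1$, $\vec c=(m-2,1)$, $t=m-1$: a non-concentrated class of relative size $(m-1)/m$). With this case settled, every color class other than the $k$ sets $\binom{\Delta_i}{t}$ has size $\le\frac23|\Phi|$, which is part (a).
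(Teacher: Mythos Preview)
Your proof is correct and follows the same architecture as the paper's: reduce part~(a) to $k=2$ by merging classes, then at $k=2$ compare the given color class to its two neighbors in the Vandermonde expansion of $\binom{m}{t}$. The executions differ in two small ways worth noting. For the reduction, the paper always merges $\Delta_{k-1}\cup\Delta_k$ and then recurses into $\binom{\Delta_{k-1}\cup\Delta_k}{t}$ via the $k=2$ case, whereas you merge per-vector so as to keep $\bar{\vec c}$ non-concentrated and invoke the single-term Vandermonde bound $N(\vec c)\le N'(\bar{\vec c})$; your route keeps the ambient $\Gamma$ fixed, so the hypothesis $t\le m/2$ is never threatened at the recursive step. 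For the saturated case $c_1=d_1$, the paper's argument is simpler and gives the sharper constant $\tfrac12$: namely $\binom{d_2}{c_2}\le\binom{m-1}{c_2}\le\binom{m-1}{t-1}=\frac{t}{m}\binom{m}{t}\le\frac12\binom{m}{t}$, using $c_2\le t-1<(m-1)/2$ for the middle inequality. Your non-saturated argument is essentially the paper's Proposition~\ref{prop:binom-ineq} in different clothing: there one proves $a_s^2\le 16\,a_{s-1}a_{s+1}$ via $\binom{n}{j}^2\le 4\binom{n}{j-1}\binom{n}{j+1}$ applied to each factor, then AM--GM, which unwinds to precisely your $\frac{xv}{4yu}+\frac{yu}{4xv}\ge\frac12$.
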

\begin{proof}
Item~\eqref{item:effect-b} is trivial.  We prove item~\eqref{item:effect-a}
by induction on $k$.  The statement is vacuously true for $k=1$.
The case $k=2$ is the content of Prop.~\ref{prop:binom-ineq} below
with $m_i=|\Delta_i|$ and $t_i=|T\cap\Delta_i|$.
Let $k\ge 3$ and let $\Gamma'=\Delta_{k-1}\cup\Delta_k$.
Apply the inductive hypothesis to the coloring
$(\Delta_1,\dots,\Delta_{k-2},\Gamma')$ of $\Gamma$.
We are done except that we need to consider the color classes
included in $\binom{\Gamma'}{t}$.  But applying the case
$k=2$ we see that all of those color classes have size
$\le (2/3)\binom{|\Gamma'|}{t}< (2/3)|\Phi|$
with the possible exception of 
the two sets $\binom{\Delta_i}{t}$ for $i=k-1, k$.
\end{proof}

\begin{corollary}   \label{cor:effect-on-tuples2}
We use the notation of Lemma~\ref{lem:effect-on-tuples}.
Let $\alpha<1$ and $t\ge 2$.  Then any $\alpha$-coloring of $\Gamma$
(every color class has size $\le \alpha|\Gamma|$)
induces a $\max(2/3,\alpha)$-coloring of 
$\binom{\Gamma}{t}$.
\end{corollary}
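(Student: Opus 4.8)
The plan is to derive this directly from Lemma~\ref{lem:effect-on-tuples}, which already does all the combinatorial work; the corollary is essentially bookkeeping on top of it. Write $m=|\Gamma|$, $\Phi=\binom{\Gamma}{t}$, and suppose $\Gamma=\Delta_1\dot\cup\cdots\dot\cup\Delta_k$ is an $\alpha$-coloring, so $|\Delta_i|\le\alpha m$ for every $i$. The induced coloring on $\Phi$ is exactly the one described in the lemma (the color of $T$ records the intersection pattern $\bigl(|T\cap\Delta_i|\bigr)_i$), so I may invoke both parts of the lemma verbatim.

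First I would apply part~\eqref{item:effect-a}: every color class of $\Phi$ has size at most $(2/3)|\Phi|$, with the sole possible exception of one class of the form $\binom{\Delta_i}{t}$ (the $t$-subsets contained in a single $\Delta_i$). For every color class other than this one exceptional class, the estimate $(2/3)|\Phi|\le\max(2/3,\alpha)|\Phi|$ is immediate.

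It then remains to bound the single exceptional class $\binom{\Delta_i}{t}$, and here I would apply part~\eqref{item:effect-b}, which gives $\bigl|\binom{\Delta_i}{t}\bigr|/|\Phi|\le(|\Delta_i|/m)^t\le\alpha^t$. Since $0<\alpha<1$ and $t\ge 2$ (indeed $t\ge 1$ already suffices), $\alpha^t\le\alpha\le\max(2/3,\alpha)$, so this class too has size at most $\max(2/3,\alpha)|\Phi|$. Combining the two cases proves the bound on color-class sizes, and the canonicity of the induced coloring on $\binom{\Gamma}{t}$ is inherited from Lemma~\ref{lem:effect-on-tuples}, since the intersection-pattern coloring is visibly preserved by any bijection that preserves the classes $\Delta_i$.

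I expect no real obstacle here; the only point requiring a moment's care is that the ``exceptional'' class permitted by part~\eqref{item:effect-a} is precisely one of the special classes $\binom{\Delta_i}{t}$ that part~\eqref{item:effect-b} controls, so the two parts of the lemma dovetail and no additional argument is needed beyond the elementary inequality $\alpha^t\le\alpha$.
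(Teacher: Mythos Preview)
Your proof is correct and follows exactly the approach the paper intends: the paper's own proof is the single sentence ``Combine the two conclusions in Lemma~\ref{lem:effect-on-tuples},'' and you have spelled out precisely how those two conclusions dovetail.
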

\begin{proof}
Combine the two conclusions in Lemma~\ref{lem:effect-on-tuples}.
\end{proof}

\subsubsection{A binomial inequality}

\begin{proposition}   \label{prop:binom-ineq}
Let $m_1,m_2,t_1,t_2$ be integers; let $m=m_1+m_2$
and $t=t_1+t_2$.  Assume $t\le m/2$ and $t_i\ge 1$
for $i=1,2$.  Then
\begin{equation}
     \binom{m_1}{t_1}\binom{m_2}{t_2}\le \frac{2}{3}\binom{m}{t} .
\end{equation}
\end{proposition}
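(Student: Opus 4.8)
The plan is to expand $\binom{m}{t}$ by the Vandermonde convolution
\[
  \binom{m}{t}=\sum_{j}\binom{m_1}{j}\binom{m_2}{t-j},
\]
write $a_j=\binom{m_1}{j}\binom{m_2}{t-j}\ge 0$ for the terms of this sum, and show that the single term $a_{t_1}=\binom{m_1}{t_1}\binom{m_2}{t_2}$ is at most $\tfrac23$ of the total. If $t_1>m_1$ or $t_2>m_2$ the left-hand side of the asserted inequality is $0$ and there is nothing to prove, so I may assume $t_1\le m_1$ and $t_2\le m_2$. Set $a=t_1$, $b=m_1-t_1$, $c=t_2$, $d=m_2-t_2$, so that $a,c\ge 1$ and $b,d\ge 0$, and the hypothesis $t\le m/2$ becomes $a+c\le b+d$.

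The key reduction is that it suffices to prove $a_{t_1-1}+a_{t_1+1}\ge \tfrac12\,a_{t_1}$, since then
\[
  \tfrac32\,a_{t_1}\ \le\ a_{t_1-1}+a_{t_1}+a_{t_1+1}\ \le\ \sum_{j}a_j\ =\ \binom{m}{t}.
\]
The two relevant ratios are obtained directly from the quotients of consecutive binomial coefficients (using $\binom{m_1}{t_1-1}/\binom{m_1}{t_1}=t_1/(m_1-t_1+1)$ and the like):
\[
  r_-:=\frac{a_{t_1-1}}{a_{t_1}}=\frac{ad}{(b+1)(c+1)},\qquad
  r_+:=\frac{a_{t_1+1}}{a_{t_1}}=\frac{bc}{(a+1)(d+1)},
\]
so the target is now $r_-+r_+\ge \tfrac12$.

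The main case, and the step I expect to look the most delicate, is when both neighbouring terms are genuinely present, i.e. $b\ge 1$ and $d\ge 1$, so $a,b,c,d\ge 1$. Here I would simply apply AM--GM,
\[
  r_-+r_+\ \ge\ 2\sqrt{r_-r_+}\ =\ 2\sqrt{\frac{abcd}{(a+1)(b+1)(c+1)(d+1)}},
\]
and then use the elementary bound $x+1\le 2x$ valid for every integer $x\ge 1$, which gives $(a+1)(b+1)(c+1)(d+1)\le 16\,abcd$, hence $r_-r_+\ge \tfrac1{16}$ and therefore $r_-+r_+\ge \tfrac12$, as wanted. I note that this case never uses the hypothesis $t\le m/2$, and that the instance $a=b=c=d=1$ (that is, $m_1=m_2=2$, $t_1=t_2=1$) gives equality throughout, showing the constant $\tfrac23$ is best possible.

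It remains to dispose of the degenerate cases, and this is exactly where $t\le m/2$ is used. If $b=d=0$ then $m=t$, which together with $t=a+c\ge 2$ contradicts $t\le m/2$, so this case is vacuous. If exactly one of $b,d$ vanishes — say $d=0$, the case $b=0$ being the mirror image under swapping $m_1\leftrightarrow m_2$, which preserves both hypotheses — then $a+c\le b+d=b$ forces $b\ge a+c\ge a+1$, so
\[
  r_+=\frac{bc}{(a+1)(d+1)}=\frac{bc}{a+1}\ \ge\ \frac{(a+1)\cdot 1}{a+1}=1,
\]
whence $a_{t_1}\le a_{t_1+1}$ and $a_{t_1}\le \tfrac12(a_{t_1}+a_{t_1+1})\le \tfrac12\binom{m}{t}\le \tfrac23\binom{m}{t}$. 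This completes the plan; the only nonroutine ingredient is the short AM--GM computation in the main case.
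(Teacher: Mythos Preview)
Your proof is correct and follows essentially the same route as the paper: both expand $\binom{m}{t}$ via Vandermonde, compare the term $a_{t_1}$ to its two neighbours, and obtain $a_{t_1-1}+a_{t_1+1}\ge \tfrac12 a_{t_1}$ from the product bound $a_{t_1-1}a_{t_1+1}\ge \tfrac{1}{16}a_{t_1}^2$ (equivalently $r_-r_+\ge\tfrac{1}{16}$) together with AM--GM, the product bound in turn resting on $x+1\le 2x$ for integers $x\ge 1$. The only cosmetic differences are that the paper packages the product bound into a preliminary claim $\binom{n}{k}^2\le 4\binom{n}{k-1}\binom{n}{k+1}$, and in the boundary case $t_i=m_i$ it bounds $\binom{m_2}{t_2}\le\binom{m-1}{t-1}=\tfrac{t}{m}\binom{m}{t}\le\tfrac12\binom{m}{t}$ directly rather than continuing the ratio analysis as you do.
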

We first make the following observation.
\begin{claim}    \label{claim:binom}
Let $1\le k\le n-1$.  Then 
\begin{equation}
   \binom{n}{k}^2 \le 4\binom{n}{k-1}\binom{n}{k+1} .
\end{equation}
\end{claim}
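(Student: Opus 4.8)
The plan is to rewrite the claimed inequality as a statement about the ratio of consecutive binomial coefficients, which collapses to an elementary inequality in two positive integers.

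First I would record the standard identities $\binom{n}{k-1}/\binom{n}{k} = k/(n-k+1)$ and $\binom{n}{k+1}/\binom{n}{k} = (n-k)/(k+1)$, both valid for $1\le k\le n-1$ (so that all the denominators are positive and nonzero). Multiplying them gives
\begin{equation}
   \frac{\binom{n}{k-1}\binom{n}{k+1}}{\binom{n}{k}^2} = \frac{k(n-k)}{(n-k+1)(k+1)},
\end{equation}
so the claim $\binom{n}{k}^2\le 4\binom{n}{k-1}\binom{n}{k+1}$ is equivalent to $4k(n-k)\ge (k+1)(n-k+1)$.

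Next I would substitute $a=k$ and $b=n-k$; the hypothesis $1\le k\le n-1$ forces $a\ge 1$ and $b\ge 1$. The inequality to prove becomes $4ab\ge (a+1)(b+1)=ab+a+b+1$, i.e. $3ab\ge a+b+1$. Since $b\ge 1$ gives $ab\ge a$, since $a\ge 1$ gives $ab\ge b$, and $ab\ge 1$, we get $3ab=ab+ab+ab\ge a+b+1$, completing the argument. (Equality holds exactly when $a=b=1$, i.e. $n=2$, $k=1$, where both sides equal $4$.)

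I do not expect any real obstacle here; the only thing to be careful about is the direction of the inequality when passing to ratios and the degenerate edge cases $k=1$ and $k=n-1$, both of which are automatically covered since $a,b\ge 1$ is all that is used.
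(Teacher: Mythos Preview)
Your proof is correct and follows essentially the same route as the paper: both reduce the claim via the ratio identities to the elementary inequality $(k+1)(n-k+1)\le 4k(n-k)$. The paper finishes by noting $(k+1)/k\le 2$ and $(n-k)/(n-k+1)\ge 1/2$, while you substitute $a=k$, $b=n-k$ and prove the equivalent $3ab\ge a+b+1$; these are just two trivially different ways to dispatch the same inequality.
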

\begin{proof}
Expanding and simplifying, the Claim reduces to the statement
\begin{equation}
    \frac{k+1}{k}\le 4\cdot\frac{n-k}{n-k+1}.
\end{equation}
This is true because $(k+1)/k\le 2$ and $(n-k)/(n-k+1)\ge 1/2$.
\end{proof}

\begin{proof}[Proof of Prop.~\ref{prop:binom-ineq}]
By Claim~\ref{claim:binom}, if $1\le t_i\le m_i-1$ then we have
\begin{equation}    \label{eq:binom1}
    \binom{m_i}{t_i}^2 \le 4 \binom{m_i}{t_i-1}\binom{m_i}{t_i+1} .
\end{equation} 
Let $a_s = \binom{m_1}{s}\binom{m_2}{t-s}.$  Then, if
$1\le s \le m_i-1$ and $1\le t-s\le m_2-1$, 
multiplying Eq.~\eqref{eq:binom1} for $i=1,2$ and
substituting $t_1=s$ and $t_2=t-s$, we obtain
\begin{equation}
 a_s^2 \le 16 a_{s-1} a_{s+1} \le 4(a_{s-1}+a_{s+1})^2 
\end{equation}
and therefore $a_s \le 2(a_{s-1}+a_{s+1})$.
Observe that $\sum_{s=0}^t a_s=\binom{m}{t}.$
It follows that under the conditions 
$1\le s \le m_i-1$ and $1\le t-s\le m_2-1$
we have
$(3/2)a_{s}\le a_{s-1}+a_{s}+a_{s+1} \le \binom{m}{t}$,
hence $a_s \le (2/3) \binom{m}{t}$, as desired.

It remains to consider the cases when $t_i=m_i$ for $i=1$ or $2$.
Let us say $i=1$, so $t_1=m_1$.  So we have
\begin{equation}
\binom{m_1}{t_1}\binom{m_2}{t_2}= \binom{m_2}{t_2}
  \le \binom{m-1}{t_2} \le \binom{m-1}{t-1}=\frac{t}{m}\binom{m}{t}
  \le \frac{1}{2}\binom{m}{t} .
\end{equation}
\end{proof}

This inequality will be used many times in the analysis of our algorithms;
we shall refer to it each time we find a canonical coloring of our
set $\Gamma$.

We highlight a corollary that will be used in Case 2 in
Sec.~\ref{sec:subroutines}.

\begin{corollary}   \label{cor:binom-ineq}
Let $r\ge 1$, $t\ge 1$ and $m\ge 2t$.
Then $\binom{m}{t}^r \le \left(\frac{2}{3}\right)^{r-1}\binom{mr}{tr}$.
\end{corollary}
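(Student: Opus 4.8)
The plan is to prove this by induction on $r$, with Proposition~\ref{prop:binom-ineq} supplying the inductive step. The base case $r=1$ is the trivial equality $\binom{m}{t}=\binom{m}{t}$, so the real content is in passing from $r-1$ to $r$.

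For the inductive step I would assume $r\ge 2$ and that the inequality holds with $r-1$ in place of $r$ (for the same $m$ and $t$; note the hypothesis $m\ge 2t$ is unchanged). Then I would split off one factor and apply the inductive hypothesis to the rest,
\begin{equation}
  \binom{m}{t}^r = \binom{m}{t}\cdot\binom{m}{t}^{r-1}
    \le \left(\frac{2}{3}\right)^{r-2}\binom{m}{t}\binom{m(r-1)}{t(r-1)},
\end{equation}
and then invoke Proposition~\ref{prop:binom-ineq} with the parameter assignment $m_1=m$, $t_1=t$, $m_2=m(r-1)$, $t_2=t(r-1)$, so that $m_1+m_2=mr$ and $t_1+t_2=tr$. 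This yields
\begin{equation}
  \binom{m}{t}\binom{m(r-1)}{t(r-1)}\le \frac{2}{3}\binom{mr}{tr},
\end{equation}
and combining the two displays gives $\binom{m}{t}^r\le (2/3)^{r-1}\binom{mr}{tr}$, closing the induction.

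The argument is otherwise routine; the only point needing a moment's care is verifying that the hypotheses of Proposition~\ref{prop:binom-ineq} hold at each stage. It requires both ``halves'' to be nonempty, i.e.\ $t_i\ge 1$: here $t_1=t\ge 1$ always, and $t_2=t(r-1)\ge 1$ precisely because $r\ge 2$ — which is exactly why the induction must be anchored at $r=1$ rather than $r=0$. It also requires $t_1+t_2\le (m_1+m_2)/2$, i.e.\ $tr\le mr/2$, which holds since the assumption $m\ge 2t$ scales up to $mr\ge 2tr$. So there is no genuine obstacle beyond bookkeeping on these two side conditions.
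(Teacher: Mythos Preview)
Your proof is correct and follows exactly the approach the paper intends: induction on $r$ with Proposition~\ref{prop:binom-ineq} supplying the inductive step. The paper's own proof is the one-line ``By induction on $r$, using Prop.~\ref{prop:binom-ineq},'' and your write-up simply unpacks this, including the verification of the side conditions $t_i\ge 1$ and $tr\le mr/2$ needed to invoke the proposition.
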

\begin{proof}
By induction on $r$, using Prop.~\ref{prop:binom-ineq}.
\end{proof}

\section{Breaking symmetry: the Design Lemma}
\label{sec:tools1-designlemma}
In this section we describe the first of two combinatorial
symmetry-breaking tools, the reduction of a canonical $k$-ary
relational structure to binary ($k=2$).

Given a relational structure $\xxx=(\Omega,\calR)$ with
non-negligible symmetry defect (see Def.~\ref{def:defect}), we wish
to efficiently find a subgroup $G\le\sym(\Omega)$ such that $G$ is
substantially smaller than $\sym(\Omega)$ such that $\aut(\xxx)\le G$. 
We are not able to achieve this, but we do achieve it after
individualizing a small number of vertices.  We divide the task
into two parts: first we reduce the general case of $k$-ary relational
structures to UPCCs (uniprimitive coherent configurations -- recall that
these are binary relational structures ($k=2$)) 
(the ``Design Lemma,'' Section~\ref{sec:designlemma}), and, second,
we solve the problem for UPCCs (Section~\ref{sec:johnson}).

\subsection{The Design Lemma: reducing $k$-ary relations to binary}
\label{sec:designlemma}
In this section we prove one of the main technical results of the paper.

\begin{theorem}[Design lemma]  \label{thm:design}
Let $1/2 \le \alpha <1$ be a threshold parameter.  Let
$\xxx=(\Omega,\calR)$ be a $k$-ary relational structure with
$n=|\Omega|$ vertices, $2\le k\le n/2$, and relative strong
symmetry defect $\ge 1-\alpha$.  Then in time $n^{O(k)}$ we can
find a sequence $S$ of at most $k-1$ vertices such that by
individualizing each element of $S$
we can find either
\begin{enumerate}[(a)]
\item a canonical (relative to $S$)
colored $\alpha$-partition of the vertex set, or 
\item a canonically (relative to $S$) embedded uniprimitive
  coherent configuration $\xxx^*$ on some set $W\subseteq\Omega$
  of vertices of size $|W|\ge \alpha n$.
\end{enumerate}
\end{theorem}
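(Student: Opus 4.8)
The plan is to run $k$-dimensional Weisfeiler--Leman refinement on $\xxx$ and interleave it with a bounded number of individualizations, maintaining throughout the invariant that the structure we hold has large relative strong symmetry defect (after accounting for the individualized points). First I would apply $k$-dimensional WL to $\xxx$, obtaining a canonical $k$-ary coherent configuration $\xxx^{*}$; canonicity here is relative to nothing yet, and by Eq.~\eqref{eq:WL-canonical} it does not change $\iso$. Next I would pass to the $2$-skeleton $(\xxx^{*})^{(2)}$, which by the Observation following Def.~\ref{def:skeleton} is a (classical) coherent configuration on $\Omega$; one has to check that the symmetry-defect hypothesis survives this passage, using Prop.~\ref{prop:alt-sym}: any weakly symmetrical set of size $\ge k+2$ for the $2$-skeleton is strongly symmetrical, hence strongly symmetrical for $\xxx$ itself (an automorphism of the $2$-skeleton that is a transposition or $3$-cycle respecting all binary relations derived from $\xxx$ must respect $\xxx$, because $k$-ary relations are determined by their behavior on $\le k$ coordinates after individualization bookkeeping — this is the delicate point and I return to it below). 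So after a loss of at most $k$ vertices (which we can individualize and absorb into $S$), we are reduced to a coherent configuration with relative strong symmetry defect $\ge 1-\alpha - k/n$, which is still $\ge 1-\alpha'$ for a threshold $\alpha'$ close enough to $\alpha$ for the bookkeeping to go through, or else the defect dropped because the structure is small, in which case brute force over the remaining vertices finishes.

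Now I am in the binary world. The coherent configuration $\yyy := (\xxx^{*})^{(2)}$ partitions $\Omega$ into vertex-color classes $C_1,\dots,C_r$. Case (i): some color class $C_i$ has size $> \alpha n$ — call it the \emph{dominant} class. If no dominant class exists, the vertex-coloring is already a colored $\alpha$-partition (trivial blocks), giving outcome~(a). Case (ii): assume a dominant class $C_i$ exists. Restricting to $C_i$ gives a coherent configuration $\yyy^{C_i}$ (Prop.\ on induced subconfigurations) that is homogeneous. If $\yyy^{C_i}$ is \emph{imprimitive}, then some non-diagonal constituent digraph inside $C_i$ is disconnected, and by Prop.~\ref{prop:CChomog} its connected components equipartition $C_i$; since $|C_i| > \alpha n \ge n/2$ and the blocks have size $\ge 2$ (and there are $\ge 2$ of them), we obtain a canonical colored equipartition — outcome~(a) via the Observation after Prop.~\ref{prop:equipartition1}. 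If $\yyy^{C_i}$ is primitive of rank $\ge 3$, it is a UPCC on $W = C_i$ with $|W|\ge\alpha n$ — outcome~(b). The remaining subcase is $\yyy^{C_i}$ primitive of rank $2$, i.e.\ the clique configuration, meaning $\sym(C_i)\le\aut(\yyy)$; but then $C_i$ is a weakly (indeed strongly) symmetrical set of size $> \alpha n$ inside $\yyy$, contradicting the defect lower bound $1-\alpha$ (which says every symmetrical set has size $\le\alpha n$) — unless again the residual structure is tiny. This case analysis does not yet use individualization in the binary phase; individualization is consumed entirely in the reduction to binary, so the $\le k-1$ bound on $|S|$ will need to be tracked carefully (the $2$-skeleton reduction above costs $\le k$, so I actually need to be slightly more economical — e.g.\ observe that if no color class is dominant after WL we need no individualization at all, and otherwise the very first step that splits off a non-symmetrical piece can be arranged to cost at most $k-1$; this is the kind of constant-chasing I would defer).

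The main obstacle I anticipate is exactly the claim that large weak symmetry of the $2$-skeleton (or of an intermediate $t$-skeleton) pulls back to the original $k$-ary structure $\xxx$ — or, dually, that if $\xxx$ has small symmetry defect then so does the coherent configuration produced by WL on the appropriate skeleton. Prop.~\ref{prop:alt-sym} handles the threshold $k+2$ for a single relational structure, but the subtlety is that WL refinement genuinely adds information: a priori the coherent configuration $(\xxx^{*})^{(2)}$ could have \emph{more} symmetry than $\xxx$ only on a large set, not less — and we need the reverse inequality on defects. The resolution is that $\aut((\xxx^{*})^{(2)}) \supseteq \aut(\xxx)$ always (refinement and skeleton-taking only enlarge the automorphism group), so a \emph{strongly} symmetrical set for the skeleton need not be strongly symmetrical for $\xxx$; the fix is to individualize the $\le k$ vertices that witness the discrepancy, i.e.\ after individualizing a generic $(k-2)$-subset inside a putative large symmetrical set, the residual binary structure's symmetrical sets \emph{do} pull back. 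Making this precise — choosing which $k-2$ vertices to individualize canonically, and verifying that the remaining weakly symmetrical set of the individualized $2$-skeleton is strongly symmetrical for $\xxx_{S}$ via Prop.~\ref{prop:alt-sym} applied with $k$ replaced by $2$ — is the technical heart of the argument, and the $n^{O(k)}$ running time comes precisely from branching over these individualizations together with the cost of $k$-dimensional WL (Sec.\ on complexity of WL refinement).
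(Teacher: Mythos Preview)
Your overall scaffold---apply $k$-dimensional WL, pass to the 2-skeleton, look at the dominant vertex-color class, and split into the imprimitive/UPCC/clique trichotomy---matches the paper exactly, and you have correctly identified that the clique case is the whole difficulty. But your proposed resolution of that case does not work, and it is precisely here that the paper invokes a nontrivial idea you are missing.

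Your argument that the clique case contradicts the symmetry-defect hypothesis is circular: the 2-skeleton $(\xxx^{*})^{(2)}$ being a clique on $C$ does \emph{not} imply $\sym(C)\le\aut(\xxx)$, only $\sym(C)\le\aut((\xxx^{*})^{(2)})$, and the containment $\aut((\xxx^{*})^{(2)})\supseteq\aut(\xxx)$ goes the wrong way (as you later note). Prop.~\ref{prop:alt-sym} does not bridge this gap: it says that inside the \emph{same} $k$-ary structure, $k+2$ weak twins are strong twins, but it says nothing about pulling symmetry of the 2-skeleton back to $\xxx$. Your fallback---``individualize a generic $(k-2)$-subset inside the putative symmetrical set''---is not a proof: you give no reason why any particular choice of $\le k-1$ vertices forces the resulting 2-skeleton on the dominant class to have rank $\ge 3$.

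What the paper actually does is search over \emph{all} $S$ with $|S|\le k-1$, and proves that some such $S$ works by a minimality argument. Starting from the observation that no transposition $(x,y)$ with $x,y\in C$ lies in $\aut(\xxx)$ (by strong-twin awareness, Cor.~\ref{cor:st-aware}, and the defect hypothesis), one takes a minimal $S$ that shrinks the dominant class, shows $|S|\le k-1$ via a witnessing $k$-tuple, and then---the crux---proves that for $Q=S\setminus\{x\}$ the 2-skeleton on $C\setminus Q$ is already non-clique. This last step splits into two cases: if $x\notin C$, one builds a hypergraph of ``shadows'' $\overline M(y)$ on $C\setminus Q$ indexed by the color class of $x$, observes it is a balanced incomplete block design (because the 2-skeleton is still a clique there), and applies \emph{Fisher's inequality} to get a contradiction with $|C|>n/2$; if $x\in C$, one builds an auxiliary biregular digraph on $C\setminus Q$ of small degree that $k$-dim WL must refine, forcing rank $\ge 3$. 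Fisher's inequality is the ingredient your plan lacks.
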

\begin{observation}
Let $\DL(\alpha)$ be the statement of the Design Lemma for a particular 
$\alpha\ge 1/2$.  If $1 >\alpha' \ge \alpha\ge 1/2$ then 
$\DL(\alpha')$ follows from $\DL(\alpha)$.
\end{observation}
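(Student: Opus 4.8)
The plan is a short case analysis. Note first that $\DL(\alpha')$ is not a formal weakening of $\DL(\alpha)$: as $\alpha$ grows the hypothesis (``relative strong symmetry defect $\ge 1-\alpha$'') weakens, but one half of the conclusion — the colored $\alpha$-partition — strengthens while the other half — a UPCC on a set of size $\ge\alpha n$ — weakens, so a genuine argument is needed. I would split on the relative strong symmetry defect $d$ of the input structure $\xxx=(\Omega,\calR)$, which together with the complete list of maximal strongly symmetrical sets of $\xxx$ can be computed in polynomial time (cf. the observation following Definition~\ref{def:rel-defect}, whose proof restricted to transpositions handles the strong case). We are given $d\ge 1-\alpha'$, and since everything added on top of a single call to $\DL(\alpha)$ is polynomial in $n$, the bound $n^{O(k)}$ is preserved.

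\emph{Case 1: $d\ge 1-\alpha$.} Then $\xxx$ satisfies the hypothesis of $\DL(\alpha)$, so we obtain $S$ with $|S|\le k-1$ and, after individualizing $S$, either a canonical colored $\alpha$-partition $\Pi$ or a canonically embedded UPCC $\xxx^*$ on a vertex set $W$ with $|W|\ge\alpha n$. If we got $\Pi$ then $\rho(\Pi)\le\alpha n\le\alpha' n$, so $\Pi$ is already a colored $\alpha'$-partition. If we got $\xxx^*$ with $|W|\ge\alpha' n$, we output $\xxx^*$. The only remaining possibility is $\alpha n\le|W|<\alpha' n$; here $\alpha\ge 1/2$ gives $|\Omega\setminus W|\le(1-\alpha)n\le n/2\le\alpha' n$, so the two-class colored partition $\{W,\,\Omega\setminus W\}$ — each class a single block — has $\rho\le\alpha' n$, is admissible (the block $W$ has $\ge\alpha n\ge 2$ vertices, and a complement of size $\le 1$ lies in a color class carrying no admissibility constraint), and is canonical relative to $S$ because $W$ is. This is the desired colored $\alpha'$-partition.

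\emph{Case 2: $d<1-\alpha$.} Since $\alpha\ge 1/2$ this forces $d<1/2$, so $\xxx$ has a strongly symmetrical set of size $>n/2$. First I would record that maximal strongly symmetrical sets are pairwise disjoint and that any two of them that meet have strongly symmetrical union: if $\sym(\Psi_1),\sym(\Psi_2)\le\aut(\xxx)$ and $c\in\Psi_1\cap\Psi_2$, then for all $a\in\Psi_1$ and $b\in\Psi_2$ the product $(a,c)(b,c)(a,c)=(a,b)$ lies in $\aut(\xxx)$, hence $\sym(\Psi_1\cup\Psi_2)\le\aut(\xxx)$. Consequently there is a \emph{unique} maximal strongly symmetrical set $\Psi$ of size $>n/2$; it is a strong-twin equivalence class, so it is produced by the cited polynomial-time procedure and is canonical, and being the largest strongly symmetrical set it has $|\Psi|=(1-d)n$, whence $\alpha n<|\Psi|\le\alpha' n$ (the upper bound using $d\ge 1-\alpha'$). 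Take $S=\emptyset$ and the two-class colored partition $\{\Psi,\,\Omega\setminus\Psi\}$, each class a single block: it is admissible as above, has $\rho=|\Psi|\le\alpha' n$ while $|\Omega\setminus\Psi|=dn<n/2\le\alpha' n$, and is canonical. It is a colored $\alpha'$-partition, which completes the proof.

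There is no deep obstacle here; the care goes into keeping straight the inequalities relating $\alpha$, $\alpha'$ and $1/2$ (above all the repeated use of $1-\alpha\le 1/2\le\alpha'$), checking that a ``coarse'' two-class partition really qualifies as an admissible colored $\alpha'$-partition in the degenerate case $|\Omega\setminus W|\le 1$, and the uniqueness and canonicity of the dominant strongly symmetrical set in Case~2.
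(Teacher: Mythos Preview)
Your proof is correct and follows essentially the same approach as the paper's: split on whether the relative strong symmetry defect $d$ of $\xxx$ is at least $1-\alpha$ (so $\DL(\alpha)$ applies directly, with the medium-$|W|$ subcase handled by the two-class coloring $\{W,\Omega\setminus W\}$) or is strictly less (so the unique dominant strongly symmetrical set $\Psi$ yields a two-class $\alpha'$-partition with $S=\emptyset$). You are more careful than the paper in two respects: you explicitly verify admissibility of the two-class partitions, and you justify the uniqueness and canonicity of $\Psi$ in Case~2 via the strong-twin equivalence relation, points the paper leaves implicit.
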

\begin{proof}
Assume $\DL(\alpha)$ holds.  Let $U\subseteq\Omega$ be a largest
strong symmetric subset of $\Omega$; let $\beta=|U|/n$.
Assume $\beta\le \alpha'$ so the assumption of $\DL(\alpha')$ holds.

\mn
Case 1.\ $\beta\le \alpha$.

\mn In this case we can apply $\DL(\alpha)$.  If $\DL(\alpha)$
returns case (a) (a colored $\alpha$-partition, canonical with respect
to a set $S\subseteq\Omega$ with $|S|\le k-1$), we are done (case (a)
holds for $\DL(\alpha')$) because an
$\alpha$-partition is also an $\alpha'$-partition.  If $\DL(\alpha)$
returns case (b) (a certain set $W$ with $|W|\ge\alpha n$, canonical
with respect to $S$) then we are done (case (b)) if $|W|\ge\alpha'n$.  
If $\alpha n\le |W|<\alpha' n$ then the coloring $(W,\Omega\setminus W)$ 
is an $\alpha$-coloring (since $\alpha\ge 1/2$), 
and therefore an $\alpha'$-coloring, so we return case (a) for 
$\DL(\alpha')$.

\mn
Case 2.\ $\alpha < \beta\le \alpha'$.

\mn
In this case the coloring $(U,\Omega\setminus U)$ 
is an $\alpha'$-coloring, so we return case (a) for $\DL(\alpha')$.
\end{proof}

It follows that it would suffice to prove the Design Lemma for
$\alpha = 1/2$.

\begin{remark}   \label{rem:sunwilmes}
If we can compute $\aut(\xxx^*)$ then we achieve a major reduction
in $\aut(\xxx)$ because 
$|\aut(\xxx^*)|\le\exp(\wto(\sqrt{n}))$~\cite{uniprimitive}.

There are two ways to compute $\aut(\xxx^*)$: either directly
or recursively.  

Direct computation of $\aut(\xxx^*)$ can be done in
$\exp(\wto(n^{1/3}))$ (Sun--Wilmes~\cite{sun-wilmes}).
Using this result would yield an overall 
$\exp(\wto(n^{1/3}))$ GI test, sufficient to break
the decades-old $\exp(\wto(\sqrt{n}))$ barrier.
\end{remark}

\begin{notation}
Let $\xxx_{(S)}$ denote the $k$-ary coherent configuration obtained
from the $k$-ary relational structure $\xxx$ by individualizing each
element of $S$ and applying $k$-dimensional WL refinement.
\end{notation}

\mn
{\sf Procedure Split-or-UPCC}
\begin{tabbing} m \= m \= m \= m \= m \= m \=m  \kill \\
01 \>\ \bfor\ $S\subset\Omega$,\ $|S|\le k-1$
  (in non-decreasing order of $|S|$)  \\  
02 \>\> \bif\ each vertex-color class in $\xxx_{(S)}$ has size 
        $\le \alpha n$ \\
03 \>\>\>  \bthen\ \breturn\ the colored partition, \bexit\ 
    (: goal (a) achieved :) \\ 
04 \>\> \belse\ (: we have a vertex-color class $C(S)$ of size $>\alpha n$ :)\\
05 \>\>\> let $\xxx^*(S)$ denote the substructure 
     of the 2-skeleton $\xxx_{(S)}^{(2)}$ induced  on $C(S)$ \\
   \>\>\>\>\>\>   (see Defs.~\ref{def:skeleton},~\ref{def:induced})\\
06 \>\>\>\> (: $\xxx^*(S)$ is a 
        homogeneous classical coherent configuration :) \\
07 \>\>\> \bif\ $\xxx^*(S)$ is imprimitive \\
08 \>\>\>\> split $C(S)$ into the connected components of a disconnected 
       off-diagonal constituent \\
09 \>\>\>\> \breturn\ colored partition and blocks of $C(S)$, 
        \bexit\ (: goal (a) achieved :) \\
10 \>\>\> \belse\ (: now $\xxx^*(S)$ is primitive :) \\
11 \>\>\>\> \bif\ $\xxx^*(S)$ is uniprimitive,
       \breturn\ $\xxx^*(S)$, \bexit\ (: goal (b) achieved :) \\
\end{tabbing}

\begin{theorem}
Under the conditions of Theorem~\ref{thm:design},
{\sf Procedure Split-or-UPCC} terminates, achieving goals (a) or (b).
\end{theorem}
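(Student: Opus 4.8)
Since the outer \textbf{for}-loop ranges over the finite collection of subsets $S\subseteq\Omega$ with $|S|\le k-1$, the procedure certainly halts; the real content is that it halts by executing one of the \textbf{exit} statements, i.e. that \emph{some} such $S$ witnesses goal~(a) or~(b). The plan is to assume the contrary — the loop body falls through (no \textbf{exit}) for every $S$ with $|S|\le k-1$ — and contradict the symmetry defect hypothesis. Unwinding the three tests, ``falls through'' means: line~02 fails, so $\xxx_{(S)}$ has a vertex-color class of size $>\alpha n$, and since $\alpha\ge 1/2$ and the vertex-color classes partition $\Omega$ there is exactly one such class, which I call $C(S)$; and then, by lines~07--11, the homogeneous classical coherent configuration $\xxx^*(S)$ (it \emph{is} coherent: $\xxx_{(S)}^{(2)}$ is a binary coherent configuration by the $t$-skeleton observation, and its restriction to the single vertex-color class $C(S)$ is coherent by the induced-subconfiguration proposition) is primitive but not uniprimitive, hence of rank~$2$, hence the clique configuration on $C(S)$ — equivalently, in $\xxx_{(S)}$ the $2$-skeleton has on $C(S)\times C(S)$ only the diagonal color and one off-diagonal color.

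Next I build a chain $\emptyset=S_0\subsetneq S_1\subsetneq\dots\subsetneq S_{k-1}$: given $S_j$ (of size $j\le k-1$), choose an arbitrary $v_{j+1}\in C(S_j)$ and set $S_{j+1}=S_j\cup\{v_{j+1}\}$. Individualizing more vertices and re-running $k$-dimensional WL only refines the coloring, so each color class of $\xxx_{(S_{j+1})}$ lies inside a color class of $\xxx_{(S_j)}$; as $|C(S_{j+1})|>\alpha n\ge n/2$ and $C(S_j)$ is the \emph{unique} class of $\xxx_{(S_j)}$ of size $>n/2$, we get $C(S_{j+1})\subseteq C(S_j)$, and since $v_{j+1}$ is individualized in $\xxx_{(S_{j+1})}$ it is a singleton class there, so $v_{j+1}\notin C(S_{j+1})$. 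Hence $C(S_{j+1})\subseteq C(S_j)\setminus\{v_{j+1}\}$, and inductively $C:=C(S_{k-1})$ is disjoint from $S:=S_{k-1}$, with $|S|=k-1$ and $|C|>\alpha n$.

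The crux is the claim that, writing $\yyy:=\xxx_{(S)}$, one has $\sym(C)\le\aut(\yyy)$, which I would prove in two parts. First, \emph{no vertex outside $C$ distinguishes two vertices of $C$}: any constituent $X_j$ of $\yyy$ between $C$ and another vertex-color class $D$ is semiregular (Prop.~\ref{prop:CCbipartite}), and by $k$-dimensional WL-stability any two vertices of $C$ have the same co-degree in $X_j$ (otherwise the single off-diagonal color on $C$ would split); were $X_j$ neither complete nor empty, its neighborhoods $\{N(y):y\in D\}$ would form a nontrivial $2$-design on the point set $C$, forcing $|D|\ge|C|$ by Fisher's inequality (a nontrivial $2$-design has at least as many blocks as points) and contradicting $|D|\le n-|C|<|C|$; so every relation of $\yyy$ between $C$ and $\Omega\setminus C$ is complete or empty, and the same WL-stability argument, now exploiting the $k-1$ individualized singletons, shows the color of any $k$-tuple mixing $C$-coordinates with individualized coordinates depends only on its equality pattern. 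Second, \emph{the induced $k$-ary substructure $\yyy[C]$ is trivial}: it is $k$-ary coherent with trivial $2$-skeleton, and climbing from the $2$-skeleton to the full structure — using $k$-ary coherence together with the reference points (a $k$-tuple of $C$-vertices with $p$ distinct entries leaves $\ge k-p$ coordinates free for substituting individualized vertices, and WL-stability pins its color to data of ``complexity $\le p$'' about $C$, which trivial $2$-skeleton plus coherence drives to a constant) — yields that all $k$-tuples of $C$ with a given equality pattern receive one color. Combining the two parts, every permutation of $C$ extended by the identity on $\Omega\setminus C$ preserves every relation of $\yyy$.

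Finally, $\aut(\yyy)=\aut(\xxx_{(S)})$ equals the pointwise stabilizer of $S$ in $\aut(\xxx)$ (individualization forces $S$ to be fixed pointwise, and the subsequent $k$-dimensional WL refinement is canonical, hence automorphism-preserving), so $\sym(C)\le\aut(\yyy)\le\aut(\xxx)$; thus $C$ is a strongly symmetrical set for $\xxx$ (Def.~\ref{def:symmetrical}) of size $|C|>\alpha n$ — and if the collapse argument only delivers $\alt(C)\le\aut(\xxx)$, upgrade via Prop.~\ref{prop:alt-sym}, using $|C|>\alpha n\ge n/2\ge k$ — contradicting the hypothesis that $\xxx$ has relative strong symmetry defect $\ge 1-\alpha$ (Def.~\ref{def:defect}). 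This contradiction forces the loop to \textbf{exit}, achieving goal~(a) or~(b). I expect the second part of the central claim — that the $k$-ary structure on $C$ collapses to triviality once its $2$-skeleton is trivial — to be the main obstacle, as this is exactly where the budget of $k-1$ individualized vertices and the full strength of $k$-dimensional coherence must be used; the first part is the relatively clean Fisher-inequality observation, and the chain construction is pure bookkeeping.
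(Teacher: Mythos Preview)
Your overall strategy is sound, and your ``first part'' (Fisher's inequality forcing every bipartite constituent between $C=C(S_{k-1})$ and a smaller color class to be trivial) is correct and parallels the paper's Case~1. The gap is exactly where you locate it: the ``second part.'' Your sketch---``$p$ distinct $C$-entries leave $k-p$ slots for reference points; WL-stability plus trivial $2$-skeleton drives the color to a constant''---is not a proof. A $k$-ary coherent configuration can have trivial $2$-skeleton on $C$ yet nontrivial higher structure (take the orbital configuration of any sharply $2$-transitive group on $C$, with $k=3$); axiom~(iv) controls only single-coordinate substitutions, and you supply no mechanism for the $p\to p{+}1$ climb. Your hypothesis is that the procedure falls through for \emph{every} $S$ with $|S|\le k-1$, not just those on your chain; that full strength is enough to make your conclusion true, but your chain argument never invokes it---in particular you never individualize points of $C$ other than the fixed $v_1,\dots,v_{k-1}$, and it is exactly individualizing a well-chosen point of $C$ that exposes the hidden $k$-ary structure.

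The paper bypasses the collapse question entirely via a minimality argument. First, strong-twin awareness (Cor.~\ref{cor:st-aware}) for the coherent $2$-skeleton of $\xxx_{(\emptyset)}$ shows that if one transposition of $C$-points lay in $\aut(\xxx)$, all would (same $2$-color), contradicting the defect hypothesis; so none does, and a witnessing $k$-tuple $\vec z$ for $(x,y)\notin\aut(\xxx)$ yields a set $Z\setminus\{x,y\}$ of size $\le k-2$ whose individualization separates $x$ from $y$. Hence the \emph{minimal} $S$ with $C(S)\neq C\setminus S$ has $|S|\le k-1$. For $Q=S\setminus\{x\}$ one then shows $\xxx^*(Q)$ has rank $\ge 3$: if $x\notin C$, Fisher (Case~1); if $S\subset C$ (Case~2), record for each $y\in C\setminus Q$ the set $\overline M(y)$ of vertices that split off from the big class when $y$ is individualized atop $Q$---this defines a nontrivial biregular digraph on $C\setminus Q$ that $k$-dimensional WL must refine. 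Minimality of $S$ is essential here (it makes $\overline M(y)$ nonempty yet small), and minimality is precisely what your arbitrary chain lacks.
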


Note that if the configuration $\xxx^*(S)$ obtained on line 05
is a clique configuration then the procedure discards the
current $S$ and moves to the next $S$.
 
What the Theorem asserts is that this will not always be the case;
for some $S$ with $|S|\le k-1$, if we pass line 04 then
the configurations $\xxx^*(S)$ is not a clique configuration.

The proof relies on Fisher's inequality on 
block designs (see ``Case 1'' below).
\begin{proof}
Unless we succeed already for $S=\emptyset$ on line 02,
we have a (unique) color-class $C:=C(\emptyset)$ of size $> \alpha n$
in $\xxx_{(\emptyset)}$.  
Let $\Cbar=\Omega\setminus C$.

We can in fact assume $|C|> \alpha n +(k-1)$ since otherwise 
we succed on line 02 by individualizing any subset $S\subseteq C$
of size $k-1$.

The classical coherent configuration $\xxx^*(\emptyset)$ (induced
by the 2-skeleton $\xxx^{(2)}_{(\emptyset)}$ on the vertex set $C$)
is homogeneous (all its vertices have the same color).  If it is
imprimitive then we are done on line 09.  If it is uniprimitive, we
are done on line 11.

Henceforth we assume $\xxx^*(\emptyset)$ is primitive but not
uniprimitive, \ie, it is the clique configuration. In other words,
all ordered pairs of distinct elements in $C$ have the same color in
$\xxx^*(\emptyset)$.

\begin{claim}
No transposition of the form $\tau=(x,y)$ ($x, y\in C$)
belongs to $\aut(\xxx)$.
\end{claim}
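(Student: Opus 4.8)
The plan is to derive a contradiction by showing that if some transposition $\tau=(x,y)$ with $x,y\in C$ belonged to $\aut(\xxx)$, then the entire set $C$ would be a strongly symmetrical set of $\xxx$. Since $|C|>\alpha n+(k-1)>\alpha n$, this contradicts the hypothesis that $\xxx$ has relative strong symmetry defect $\ge 1-\alpha$; indeed, by Def.~\ref{def:defect} that hypothesis says exactly that every strongly symmetrical set of $\xxx$ has at most $\alpha n$ vertices. So I would assume for contradiction that $\tau=(x,y)\in\aut(\xxx)$ for some $x,y\in C$; note $|C|\ge 3$, so this is meaningful.

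The first step is to pass to the coherent configuration. Since $\xxx_{(\emptyset)}$ is obtained from $\xxx$ by $k$-dimensional WL refinement with no individualized vertices, and this refinement is canonical, Eq.~\eqref{eq:WL-canonical} gives $\aut(\xxx)=\aut(\xxx_{(\emptyset)})$; hence $x$ and $y$ are strong twins of the $k$-ary coherent configuration $\xxx_{(\emptyset)}$. The decisive step is to promote this single strong-twin pair to \emph{all} pairs inside $C$. Here I would use that $\xxx^*(\emptyset)$, the subconfiguration induced by the $2$-skeleton $\xxx_{(\emptyset)}^{(2)}$ on $C$, is the clique configuration: all vertices of $C$ carry the same vertex colour $\ell$, and every ordered pair of distinct vertices of $C$ carries one and the same colour $i$ of $\xxx_{(\emptyset)}^{(2)}$, i.e.\ $c_{\xxx_{(\emptyset)}}(u,v,v,\dots,v)=i$ for all $u\ne v$ in $C$. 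Now I invoke the $k$-ary analogue of the strong-twin awareness statement, Cor.~\ref{cor:st-aware}: whether a pair $u\ne v$ with $c(u)=c(v)$ consists of strong twins of a $k$-ary coherent configuration is determined by $\ell=c(u)$ and the colour $i$ of the near-diagonal tuple $(u,v,v,\dots,v)$ together with the structure constants — this is the natural $k$-ary extension of the characterization Lemma~\ref{st-char}, whose conditions depend only on $\ell$ and $i$. Since $(x,y)$ satisfies these conditions and every pair of distinct vertices of $C$ realizes the same pair of colours $(\ell,i)$, every such pair consists of strong twins of $\xxx_{(\emptyset)}$. Therefore $\sym(C)\le\aut(\xxx_{(\emptyset)})=\aut(\xxx)$, so $C$ is a strongly symmetrical set of $\xxx$, and, as $|C|>\alpha n$, we contradict the symmetry-defect hypothesis. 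This completes the proof of the Claim.

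The main obstacle is precisely the $k$-ary strong-twin awareness used in the second paragraph. For $k=2$ it is exactly Cor.~\ref{cor:st-aware}; for $k\ge 3$ the paper observes that the corresponding weak-twin statement is easy, and a parallel argument should go through, but it needs care: a $k$-tuple may repeat the swapped points $x,y$ in several coordinates, so one cannot merely pass to the $2$-skeleton (automorphisms of a skeleton need not lift to the full $k$-ary structure). Instead one must run the necessity/sufficiency argument of Lemma~\ref{st-char} directly at the $k$-ary level, tracking how colours of tuples behave under substitutions at the coordinates occupied by $x$ and $y$. An alternative route that avoids the full $k$-ary awareness statement is to show only that every pair in $C$ consists of \emph{weak} twins of $\xxx$ and then apply Prop.~\ref{prop:alt-sym} (legitimate since $|C|\ge k+2$) to upgrade to strongly symmetrical; but this only moves the same difficulty to the weak-twin case, so I would prefer to establish the strong-twin version head-on.
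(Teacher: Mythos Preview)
Your approach is the same as the paper's: assume some $\tau=(x,y)\in\aut(\xxx)$, use that all ordered pairs of distinct vertices of $C$ receive the same color in $\xxx^*(\emptyset)$, invoke ``strong-twin awareness'' to conclude every pair in $C$ is a strong-twin pair, hence $\sym(C)\le\aut(\xxx)$, contradicting the symmetry-defect hypothesis.

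The only difference concerns what you correctly flag as the main obstacle, namely the $k$-ary version of Cor.~\ref{cor:st-aware}. The paper sidesteps this entirely with a one-line alternative you overlooked: before running $k$-dimensional WL, one may \emph{explicitly} encode in the color of each pair $(u,v)$ the bit ``is $(u,v)\in\aut(\xxx)$?''. This is computable in time $n^{O(k)}$, is canonical, and survives refinement; since $\xxx^*(\emptyset)$ is assumed to be the clique configuration, all off-diagonal pairs in $C$ carry the same bit. This makes the awareness step immediate without any $k$-ary analogue of Lemma~\ref{st-char}. (The paper also asserts, citing Cor.~\ref{cor:st-aware}, that this explicit encoding is not actually needed because the stable $k$-ary configuration is already aware; but the explicit-encoding route is the cleanest justification and dissolves your obstacle.) Your suggested fallback via weak twins and Prop.~\ref{prop:alt-sym} is unnecessary once you take this route.
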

\begin{proof}   
By coherence, the color of the pair $(x,y)$ is ``aware'' of whether or
not $\tau\in\aut(\xxx)$ (Cor.~\ref{cor:st-aware}).
(Alternatively, we could explicitly include this information in the
color of the pair before refinement -- but this is not necessary.)
But this means if one such transposition belongs to $\aut(\xxx)$ then
all do, so $\sym(C)\le \aut(\xxx)$, contradicting the assumption that
the relative strong symmetry defect of $\xxx$ is $\ge 1-\alpha$.
\end{proof}

Given $S\subseteq \Omega$, assume we are past line 04.  
So we have a vertex-color class $C(S)$
in $\xxx_{(S)}$ of size greater than $\alpha n$.  Since the vertex
coloring of $\xxx_{(S)}$ is a refinement of the vertex-coloring of 
$\xxx_{(\emptyset)}$ and $\alpha\ge 1/2$, we must have $C(S)\subseteq C$.
Since $S\cap C(S)=\emptyset$, we in fact have $C(S)\subseteq C\setminus S$.

Let now $S$ denote a smallest subset of $\Omega$ such that 
$C(S)\neq C\setminus S$.
Note that $S\neq\emptyset$; 
we need to prove that such a subset exists at all.

\begin{claim}  $S$ exists and $|S|\le k-1$.
\end{claim}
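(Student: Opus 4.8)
The plan is a proof by contradiction, with the target being the hypothesis that the relative strong symmetry defect of $\xxx$ is $\ge 1-\alpha$. Existence of $S$ is immediate: for any $S\subseteq C$ with $|S|=|C|-1$ the set $C\setminus S$ is a single vertex, which (as $\alpha n>1$) cannot be a vertex-colour class of size $>\alpha n$, so line~02 already succeeds and $C(S)\ne C\setminus S$. For the bound, suppose instead $|S|\ge k$. Since we are in the middle of proving that {\sf Split-or-UPCC} terminates with goal (a) or (b), we may assume it has not returned before reaching a set of size $|S|$; combined with the minimality of $S$ this yields the two facts I will use: for every $S'\subseteq\Omega$ with $|S'|\le k-1$, the refinement $\xxx_{(S')}$ has $C\setminus S'$ as a \emph{single} vertex-colour class (so $C(S')=C\setminus S'$), and the induced $2$-skeleton $\xxx^*(S')$ on $C\setminus S'$ is the clique (trivial) configuration.

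From these two facts I would deduce $\sym(C)\le\aut(\xxx)$; since $|C|>\alpha n$, that contradicts the symmetry-defect hypothesis. As $C$ is a colour class of the WL-stable configuration $\xxx_{(\emptyset)}$, it is enough to show that every transposition $\tau=(x,y)$ with $x,y\in C$ preserves the stable $k$-ary colouring, i.e.\ has no witness. So suppose $\vz=(z_1,\dots,z_k)$ satisfies $c(\vz)\ne c(\vz^{\,\tau})$; then, since $\tau$ moves $\vz$, at least one of $x,y$ occurs among its entries. The easy case is when exactly one of them does, say $x$: put $S:=\{z_1,\dots,z_k\}\setminus\{x\}$, so $|S|\le k-1$, $x\notin S$, and all entries of $\vz$ lie in $S\cup\{x\}$. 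Using the standard fact that in a $k$-ary coherent configuration the colour of a tuple formed from $k-1$ individualised points and one further point $v$ is a function of the vertex-colour of $v$, the colour of $\vz$ in $\xxx_{(S)}$ depends only on the vertex-colour of $x$, while $\vz^{\,\tau}$ — the same positional pattern with $y$ in the rôle of $x$ and nothing else changed — has colour in $\xxx_{(S)}$ depending only on the vertex-colour of $y$. Since $\xxx_{(S)}$ refines $\xxx_{(\emptyset)}$, the two still get different colours, so $x$ and $y$ receive different vertex-colours in $\xxx_{(S)}$, contradicting $C(S)=C\setminus S$. Hence no witness of this kind exists.

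The crux is the remaining case: both $x$ and $y$ occur among the entries of $\vz$. Then $S_0:=\{z_1,\dots,z_k\}\setminus\{x,y\}$ has $|S_0|\le k-2$, and $\vz,\vz^{\,\tau}$ are tuples over $S_0\cup\{x,y\}$ in which $S_0$ sits in fixed individualised positions while $\tau$ merely interchanges the $x$-positions with the $y$-positions. We know $C\setminus S_0$ is one colour class of $\xxx_{(S_0)}$ and that its induced $2$-skeleton is trivial; the plan is to \emph{upgrade} this to: the induced $k$-ary structure of $\xxx_{(S_0)}$ on $C\setminus S_0$ is trivial. Granting that, the colour of a tuple over $S_0\cup\{x,y\}$ whose non-individualised positions carry only $x$ and $y$ depends only on the equality pattern of those positions, which $\tau$ preserves, so $c(\vz)=c(\vz^{\,\tau})$ already in $\xxx_{(S_0)}$, hence in $\xxx_{(\emptyset)}$ — a contradiction. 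Together with the previous case this forces every transposition of $C$ into $\aut(\xxx)$, giving $\sym(C)\le\aut(\xxx)$ and the Claim.

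I expect the real obstacle to be precisely that upgrade: triviality of all the small-individualisation $2$-skeletons on the dominant class does not by itself force triviality of the ambient $k$-ary coherent configuration, so one has to use $k$-ary coherence directly — tracking, for each colour, the sets of same-colour "neighbours" of the points of $C\setminus S_0$ inside $\Omega\setminus(C\setminus S_0)$, which by the structure constants satisfy strong incidence regularities of block-design type, and then extracting a numerical impossibility from them together with the fact that $C$ (and hence $C\setminus S_0$) is a large majority of the vertex set. This is the same block-design / Fisher-inequality mechanism that reappears in ``Case~1'' of the main proof.
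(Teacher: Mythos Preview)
Your by-contradiction framework is different from the paper's proof, which is a direct three-line argument: pick any $x,y\in C$, take a witness $\vz$ to $(x,y)\notin\aut(\xxx)$, and assert that individualizing $Z\setminus\{x,y\}$ (at most $k-1$ points, since at least one of $x,y$ lies in $Z$) already splits $x$ from $y$. The paper does not separate your two cases and gives no justification for the hard one; your approach is in fact a reasonable way to make the argument rigorous.

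Your Case~1 is fine, but your Case~2 plan has a genuine gap. First, the ``upgrade'' you propose---that the induced $k$-ary structure of $\xxx_{(S_0)}$ on $C\setminus S_0$ be trivial---would not give what you need: the tuple $\vz$ lives over $S_0\cup\{x,y\}$, not over $C\setminus S_0$, so triviality of the latter induced structure says nothing about $c(\vz)$ versus $c(\vz^{\,\tau})$. Second, Fisher's inequality is \emph{not} the mechanism here; in the paper it enters only in the \emph{next} Claim (the analysis of $Q=S\setminus\{x\}$), not in the proof of the present one.

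The correct completion of Case~2 is much simpler and stays entirely inside your framework. With $S_0=\{z_1,\dots,z_k\}\setminus\{x,y\}$ and $j=|S_0|\le k-2$, set $\vu=(s_1,\dots,s_j,x,y,\dots,y)$. Since the entries of $\vu$ exhaust $S_0\cup\{x,y\}$, coherence gives $c(\vz)=F(c(\vu))$ and $c(\vz^{\,\tau})=F(c(\vu^{\,\tau}))$ for the \emph{same} derivation map $F$, so it suffices to show $c(\vu)=c(\vu^{\,\tau})$. Now apply your hypothesis to $S'':=S_0\cup\{x\}$ (size $j+1\le k-1$): since $C(S'')=C\setminus S''$, all $v\in C\setminus S''$ have the same vertex-colour in $\xxx_{(S'')}$, which forces $c(s_1,\dots,s_j,x,v,\dots,v)$ to be constant over such $v$; in particular $c(\vu)=c(s_1,\dots,s_j,x,u,\dots,u)$ for any $u\in C\setminus(S_0\cup\{x,y\})$. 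Symmetrically (using $S_0\cup\{y\}$), $c(\vu^{\,\tau})=c(s_1,\dots,s_j,y,u,\dots,u)$. But these last two $k$-tuples contain only one of $x,y$, so your already-proved Case~1 gives their equality. Hence $c(\vu)=c(\vu^{\,\tau})$, finishing Case~2 with no Fisher, no design theory, and no ``upgrade''.
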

\begin{proof}
Let $x,y\in C\setminus S$, $x\neq y$.  Since the transposition $\tau=(x,y)$
does not belong to $\aut(\xxx)$, there exist $i$ and $\vz\in R_i$ such that
$\vz^{\,\tau}\notin R_i$.  Let $\vz=(z_1,\dots,z_k)$ and let 
$Z=\{z_1,\dots,z_k\}$.  Individualizing each vertex in 
$Z\setminus\{x,y\}$ splits $x$ from $y$.  
\end{proof}

If $|C(S)|\le\alpha n$, we succeed on line 02 since individualizing
$S$ splits $C$ into relative canonical subsets of size $\le\alpha n$ 
and $\Cbar$ is canonical and small ($|\Cbar|\le (1-\alpha)n$).
Assume now that $|C(S)|> \alpha n$.  

\begin{claim}
 Let $x\in S$ and $Q=S\setminus\{x\}$.  Then
 $\xxx^*_{(Q)}$ is not a clique. 
\end{claim}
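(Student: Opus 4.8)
The plan is to argue by contradiction. Suppose $\xxx^*_{(Q)}$ is the clique configuration, where $Q=S\setminus\{x\}$. First note that $\xxx^*_{(Q)}$ is even defined: since $|Q|<|S|$ and $S$ is a \emph{smallest} subset with $C(S)\neq C\setminus S$, minimality gives $C(Q)=C\setminus Q$, and $|C\setminus Q|\geq|C|-(k-2)>\alpha n$ by the standing bound $|C|>\alpha n+(k-1)$. Write $C'=C\setminus Q=C(Q)$ and $C''=C'\setminus\{x\}=C\setminus S$. The assumption says precisely that the $2$-skeleton of $\xxx_{(Q)}$ restricted to $C'$ is trivial (all ordered pairs of distinct vertices of $C'$ carry the same color). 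The goal is to deduce that then individualizing the single extra vertex $x$ and running $k$-dimensional WL cannot produce a color class of size $>\alpha n$ inside $C''$ — contradicting $|C(S)|>\alpha n$.

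First I would localize the effect of individualizing $x$. Because the $2$-skeleton of $\xxx_{(Q)}$ on $C'$ is trivial, a pair $(y,z)\in C''\times C''$ can get refined only according to the colors, in $\xxx_{(Q)}$, of the $k$-tuples obtained from $(x,y,z,\dots,z)$ by permuting coordinates — that is, according to the $3$-skeleton $\yyy=\xxx_{(Q)}^{(3)}$ of $\xxx_{(Q)}$ on $C'$, a $3$-ary coherent configuration whose $2$-skeleton on $C'$ is still trivial. For a color $\gamma$ of $\yyy$ put $B_\gamma=\{(y,z)\in C''\times C''\mid(x,y,z)\in R_\gamma\}$, with neighborhoods $A_y^\gamma=\{z\mid(y,z)\in B_\gamma\}$. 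The structure constants of $\yyy$, together with triviality of its $2$-skeleton on $C'$, make each nontrivial $B_\gamma$ into a block-design-like incidence structure on the point set $C''$ — biregular, with controlled pairwise ``co-degrees'' — which is exactly the configuration to which Fisher's inequality applies.

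The contradiction then comes from the \emph{imbalance}. A would-be dominant class $W=C(S)\subseteq C''$ has $|W|>\alpha n\geq|C''|/2$, yet $W$ is a homogeneous class of the WL-closure, on which (again) the $2$-skeleton started out trivial, so $W$ must be ``symmetrically placed'' inside the design $(C'';B_\gamma)$. Feeding the lower bound $|W|\geq|C''|/2$ into Fisher's inequality applied to the neighborhood family $\{A_y^\gamma\mid y\in W\}$, and combining with the complementary count on $C''\setminus W$, should force $|W|\leq|C''|/2$ — a contradiction. Reapplying this argument at each round of $k$-dimensional WL (with the vertices individualized so far in the role of $Q\cup\{x\}$) then shows that no class of size $>\alpha n$ ever appears inside $C''$; hence $C(S)$ cannot be that large, and $\xxx^*_{(Q)}$ is not a clique.

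The hard part will be the middle step: pinning down \emph{which} counts are constant from the structure constants $p(i_0,\dots,i_k)$ of $\yyy$ — keeping in mind that the color of a genuinely $3$-ary tuple such as $(x,y,y')$ need not be constant over distinct $y,y'$, so the constant-pairwise-intersection family underlying Fisher has to be organized color-type by color-type — and then carrying out the Fisher computation that turns $|W|\geq|C''|/2$ into a numerical contradiction. The remaining parts (that trivial $2$-skeleton confines the refinement of $C''$ to the relations $B_\gamma$, and the induction over the WL rounds) should be routine.
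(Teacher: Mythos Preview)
Your plan has real gaps, and it diverges substantially from the paper's argument.

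\textbf{What the paper actually does.} The proof splits on whether $x\in C$.
\begin{itemize}
\item If $x\notin C$, let $B$ be the vertex-color class of $x$ in $\xxx_{(Q)}$. For each $y\in B$ set $\Mbar(y)=C\setminus Q\setminus C(Q\cup\{y\})$. Because $\xxx^*(Q)$ is assumed to be a clique, all pairs in $C\setminus Q$ are equivalent, so the hypergraph $(C\setminus Q;\{\Mbar(y):y\in B\})$ is a genuine $2$-design. Fisher then gives $|B|\ge |C\setminus Q|$. But $B$ is disjoint from $C$, so $|B|<(1-\alpha)n<\alpha n<|C\setminus Q|$ --- contradiction.
\item If $S\subset C$ (so $x\in C$), no Fisher is used. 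One defines the canonical digraph $X(Q)$ on $C\setminus Q$ by $y\to z$ iff $z\notin C(Q\cup\{y\})$. This digraph is biregular, nonempty (since $C(S)\neq C\setminus S$), and not complete (since $|C(S)|>\alpha n$ forces the out-degree below $(1-\alpha)n$). Since $|Q|\le k-2$, the $k$-dim WL refinement is aware of $X(Q)$ (or one explicitly folds it in), so $\xxx^*(Q)$ refines $X(Q)$ and has rank $\ge 3$.
\end{itemize}

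\textbf{Where your plan breaks.} First, you never isolate the case $x\notin C$; in that case $x\notin C'$, and your ``$3$-skeleton on $C'$'' says nothing about relations between $x$ and $C''$. The paper's Fisher step works precisely because the block family is indexed by the \emph{small} external class $B$, not by the large $W=C(S)$.

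Second, your Fisher step is pointed the wrong way. Fisher yields a \emph{lower} bound on the number of blocks; with blocks indexed by $y\in W$ you would get $|W|\ge(\text{points})$, not $|W|\le|C''|/2$. You give no mechanism by which ``$|W|\ge|C''|/2$ plus Fisher'' yields an upper bound on $|W|$. More importantly, for $\{A_y^\gamma:y\in W\}$ to form a $2$-design you need the pair-covering counts $|\{y\in W:(x,y,z),(x,y,z')\in R_\gamma\}|$ to be constant over $z,z'$ --- but this depends on $4$-ary data $(x,y,z,z')$, not the $3$-skeleton, and on restricting $y$ to a class $W$ that is only defined after running WL on $\xxx_{(S)}$. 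You haven't established any of this regularity.

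Third, the claim that refinement after individualizing $x$ is ``confined to the $3$-skeleton'' is false in general: $k$-dim WL uses full $k$-ary data and can propagate through vertices outside $C'$. The paper avoids this entirely --- in Case~2 it never analyzes what happens \emph{after} individualizing $x$; it shows directly that $\xxx^*(Q)$ already refines a nontrivial digraph, so it was never a clique to begin with.

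The fix is to adopt the case split: use Fisher only when $x\notin C$, with blocks indexed by the small class $B$ of $x$; and when $x\in C$, drop Fisher and argue via the canonical digraph $X(Q)$.
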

\begin{proof}
By the minimality of $S$, the set $C\setminus Q$ is
a color class in $\xxx_{(Q)}$ and therefore in $\xxx^*_{(Q)}$. 
So the vertex set of $\xxx^*(Q)$ is $C\setminus Q = C\setminus S$.
Assume for a contradiction that $\xxx^*(Q)$ is a clique.

We break the situation
into two cases according to whether or not $x\in C$.

\sn
{\bf Case 1.}\  $x\notin C$.   

\sn

\sn
Let $B$ denote the vertex-color class of $x$ in $\xxx_{(Q)}$.  
For $y\in B$, let $M(y)=C(Q\cup \{y\})$
and $\Mbar(y)=C\setminus M(y)\setminus Q$.
Consider the hypergraph
$\calH=(C\setminus Q; \{\Mbar(y) : y\in B\})$.
This hypergraph is uniform (because all elements of $B$ ``look the
same'' for the point of view of $\xxx_{(Q)}$) 
and regular (because all elements of $C\setminus Q$ ``look the 
same'' for the point of view of $\xxx_{(Q)}$)\footnote{An alternative 
to proving these statements 
would be to explicitly include the size of $\Mbar(x)$ in the
color of $y\in B$ and the $\calH$-degree of $z\in C\setminus Q$
in the color of $z$.  But $k$-dim WL is automatically ``aware''
of these quantities.}  Moreover, $\calH$ is a
block design (BIBD), \ie, every pair $\{u,v\}\subset C\setminus Q$
belongs to the same number of sets $\Mbar(y)$, because of
our current assumption that $\xxx_{(Q)}^*$ is a clique
(all pairs of vertices in $C(Q)=C\setminus Q$ have the same
color in $\xxx_{(Q)}^*$) (cf. the previous footnote).
 
But Fisher's inequality asserts that a BIBD has at least as many
blocks as it has vertices, hence $|B|\ge |C\setminus Q|$,
a contradiction because $|B|< (1-\alpha)n < n/2$ and
$|C\setminus Q| > \alpha n > n/2$
This shows that Case~1 cannot occur.

\sn
{\bf Case 2.}\ $S\subset C$.

\sn 
For $y\in C\setminus Q$ let $M(y)=C(Q\cup \{y\})$
and \\ $\Mbar(y)=C\setminus (M(y)\cup Q\cup \{y\})$.
The value of $|M(y)|$ does not depend on $y$
because of the homogeneity of $C\setminus Q$ in $\xxx_{(Q)}$.
So $|M(y)|=|M(x)|=|C(S)|> \alpha n$ for every $y\in C\setminus Q$
and therefore $d:=|\Mbar(y)|<(1-\alpha)n$.
(This quantity also does not depend on $y$.)

Consider the following digraph $X(Q)$ on the vertex set $C\setminus Q$:
introduce the edge $y\to z$ if $z\in \Mbar(y)$.  This is
a $d$-biregular digraph.  
(Both the in-degrees and the out-degrees are equal because of 
the homogeneity of $C\setminus Q$ in $\xxx_{(Q)}$.)  

Compare $X(Q)$ with the coherent configuration $\xxx^*(Q)$.
They have the same set of vertices, $C\setminus Q$.   
Note that $|C\setminus Q| > |C\setminus S| > \alpha n$.
Moreover, $k$-dimensional WL is ``aware'' of $X(Q)$ (because $|Q|\le k-2$),
so $\xxx^*(Q)$ is a refinement of $X(Q)$ and therefore
$\xxx^*(Q)$ has rank $\ge 3$.  (Again, if not convinced,
see the previous footnote: we could explicitly include the
``edge/non-edge of $X(Q)$'' information in the color of
the pairs in $C\setminus Q$.)  So if $\xxx^*(Q)$ 
cannot be a clique; Case~2 cannot occur.
\end{proof}
This completes the proof of the Design Lemma.
\end{proof}

\subsection{Local asymmetry to global irregularity: local guides}
\label{sec:localguide}

In this section we describe the way the input to the Design Lemma will
arise at least twice in this paper:
in Case B1 in Sec.~\ref{sec:localguides-johnson} 
(the case of Johnson graphs) and
and Case 2c of the proof of Thm 10.14 
(aggregation of non-fullness certificates).

Let $|\Omega_1|=|\Omega_2|=n$ and let $\ellk \le n/4$.  Typically,
$\ellk$ will be polylogarithmic; $\ellk$ is our ``locality parameter.''

Recall that by ``categories'' we mean concrete (every object has an
underlying set, morphisms are mappings) Brandt groupoids
(every morphism is invertible -- an isomorphism) (see Sec.~\ref{sec:functor}).

Let $\calL$ be a category with $2\binom{n}{\ellk}$ objects, namely,
an objects $X_i(L)$ for every $L\in\binom{\Omega_i}{\ellk}$ $(i=1,2)$.
The underlying set of these objects is $L$, \ie, $\Box(X_i(L))=L$.

Let $\calC$ be a category with two objects, $\xxx_1$ and $\xxx_2$,
with underlying sets $\Omega_1$ and $\Omega_2$, respectively 
($\Box(\xxx_i)=\Omega_i$).

\begin{definition}
We say that $\calL$ is a \emph{$\ellk$-local guide} for $\calC$
if for every morphism $f : \xxx_i\to \xxx_j$ ($i,j\in\{1,2\}$),
the restriction of $f$ to any $L\in\binom{\Omega_i}{\ellk}$ is
a morphism $X_i(L)\to X_j(L^f)$.
\end{definition}

\begin{definition}
Let us say that the set $L\in\binom{\Omega_i}{\ellk}$ is \emph{full}
for index $i$ if 
$\aut(X_i(L))=\sym(L)$. 
\end{definition}

\begin{proposition}[Local guide]  \label{prop:localguide} 
Let $\alpha$ be a threshold parameter, $3/4\le \alpha <1$.
Let $\calC$ be a category with two objects, $\xxx_1$ and $\xxx_2$,
with underlying sets $\Box(\xxx_i)=\Omega_i$ where 
$|\Omega_1|=|\Omega_2|=n$.  Let $3\le\ellk \le n/4$.
Let the category $\calL$ be a $\ellk$-local guide to the category $\calC$.
Assume further that none of the sets
$L\in\binom{\Omega_1}{\ellk}$ is full for $i=1$.
Then, in time $n^{O(k)}$ 
we can either refute isomorphism is $\xxx_1$ and $\xxx_2$
find a pair of canonical $k$-ary relational structures
$\xxx_1'$ and $\xxx_2'$ with $\Box(\xxx_i')=\Box(\xxx_i)$
such that the strong symmetry defect of each $\xxx_i'$ is
$\ge n-k+1$.
\end{proposition}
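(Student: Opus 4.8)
The plan is to produce $\xxx_1'$ and $\xxx_2'$ by a single canonical $k$-ary coloring of $\Omega_i$ that records, for every ordered $k$-tuple of distinct vertices, the \emph{full isomorphism type} of the corresponding local object $X_i(L)$ -- not just its automorphism group. In the local guides arising in this paper the objects $X_i(L)$ are (colored) relational structures on the ground set $L$, so I may transport them along bijections; I assume this. For an ordered $k$-tuple $\vx=(x_1,\dots,x_k)$ of distinct elements of $\Omega_i$, write $L(\vx)=\{x_1,\dots,x_k\}$, let $\beta_{\vx}\colon[k]\to L(\vx)$ be the bijection $j\mapsto x_j$, and let $\tilde X_i(\vx)$ be the structure on $[k]$ obtained from $X_i(L(\vx))$ by renaming each $x_j$ to $j$ (transport along $\beta_{\vx}^{-1}$). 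I would let $\xxx_i'$ be the $k$-ary relational structure on $\Omega_i$ in which a tuple with a repeated entry is colored by its equality pattern and a tuple $\vx$ of distinct entries is colored by $\tilde X_i(\vx)$. Only finitely many colors arise (there are at most $2n^k$ tuples), and $\xxx_1',\xxx_2'$ are computable in time $n^{O(k)}$.

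First I would check that $\xxx_i\mapsto\xxx_i'$ is a functor on $\calC$. Let $f\colon\xxx_i\to\xxx_j$ be a morphism and $\vx$ a tuple of distinct vertices of $\Omega_i$. By the $k$-local guide property, $f|_{L(\vx)}$ is an isomorphism $X_i(L(\vx))\to X_j(L(\vx)^f)$; since $L(\vx)^f=L(\vx^f)$ and an isomorphism of relational structures carries the relations of its source onto those of its target exactly, $X_j(L(\vx^f))$ is the transport of $X_i(L(\vx))$ along $f|_{L(\vx)}$. Combining this with $\beta_{\vx^f}=\beta_{\vx}\cdot(f|_{L(\vx)})$ and the composition rule for transport, a one-line computation gives $\tilde X_j(\vx^f)=\tilde X_i(\vx)$; equality patterns are trivially preserved; so $f$ preserves every color of $\xxx_i'$.

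Next, the symmetry-defect bound for $\xxx_1'$. Suppose, for contradiction, that some $\Psi\subseteq\Omega_1$ with $|\Psi|\ge k$ is strongly symmetrical for $\xxx_1'$. Pick $L\subseteq\Psi$ with $|L|=k$ and fix an enumeration $\vx$ of $L$. For every $\pi\in\sym(L)$ (acting trivially off $L$) we have $\pi\in\sym(\Psi)\le\aut(\xxx_1')$, so $\tilde X_1(\vx^{\pi})=\tilde X_1(\vx)$; as $\beta_{\vx^{\pi}}=\beta_{\vx}\cdot(\pi|_L)$, unwinding the transports gives $\pi|_L\in\aut(X_1(L))$. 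Since $\pi$ was arbitrary, $\aut(X_1(L))=\sym(L)$, i.e., $L$ is full for $i=1$ -- contradicting the hypothesis. Hence $\xxx_1'$ has no strongly symmetrical set of size $\ge k$, so its strong symmetry defect is $\ge n-k+1$; as $k\le n/4$, this exceeds $(1-\alpha)n$, the bound needed to invoke the Design Lemma.

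Finally I would handle $\xxx_2'$ by a case split. If no $L\in\binom{\Omega_2}{k}$ is full for $i=2$, the previous paragraph applies verbatim and $\xxx_2'$ also has strong symmetry defect $\ge n-k+1$, so I output $\xxx_1',\xxx_2'$. Otherwise some $L\in\binom{\Omega_2}{k}$ is full for $i=2$; then $\xxx_1\not\cong\xxx_2$, because an isomorphism $f\colon\xxx_1\to\xxx_2$ would, via the $k$-local guide property applied to $f^{-1}$, yield an isomorphism $X_1(L^{f^{-1}})\to X_2(L)$, forcing $L^{f^{-1}}$ full for $i=1$, contrary to hypothesis; so I report non-isomorphism. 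Deciding which case holds costs $n^{O(k)}$: $L\subseteq\Omega_2$ is full for $i=2$ exactly when all orderings of $L$ receive the same color in $\xxx_2'$ (equivalently $\aut(X_2(L))=\sym(L)$), read off once the colorings are computed. The step I expect to be the crux is the choice of coloring in the very first paragraph: the cheaper option of coloring $\vx$ by the conjugate subgroup $\beta_{\vx}\,\aut(X_i(L(\vx)))\,\beta_{\vx}^{-1}\le\sss_k$ is also canonical, but then a strongly symmetrical set only forces $\aut(X_1(L))$ to be \emph{normal} in $\sym(L)$, which for $k\ge5$ still allows $\aut(X_1(L))=\aaa_k$ and hence does not contradict non-fullness; transporting the whole local object, not merely its automorphism group, is what closes this gap.
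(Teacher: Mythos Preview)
Your argument is essentially the paper's, and the reasoning for canonicity, for the symmetry-defect bound, and for the case split on $\xxx_2$ is correct. The one point of divergence is your added hypothesis that the objects $X_i(L)$ are relational structures on the ground set $L$ (so that they can be transported along arbitrary bijections $[k]\to L$). The proposition as stated does not assume this, and in the two applications in the paper (Case~B1 in Sec.~\ref{sec:localguides-johnson} and item~2c in Sec.~\ref{sec:aggregate}) the objects are in fact structures on sets much larger than $L$---bipartite graphs on $(V_1,\binom{L}{t})$ in the first case, truncated strings on $\Omega$ in the second---with the $\calL$-morphisms being the bijections $L\to L'$ \emph{induced} by isomorphisms of those larger structures. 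So your transport-to-$[k]$ step does not literally apply there.

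The paper avoids this by working directly with the morphisms of $\calL$: declare $(\vu,i)\sim(\vv,j)$ if the coordinate-wise bijection $u_\ell\mapsto v_\ell$ is a morphism $X_i(L(\vu))\to X_j(L(\vv))$ in $\calL$, and color each ordered $k$-tuple by its $\sim$-class. This needs only that $\calL$ is a category (so that $\sim$ is an equivalence relation) and that $\calL$ is a local guide (so that every $\calC$-isomorphism preserves colors). When your assumption does hold, your transported structure $\tilde X_i(\vx)$ is a complete invariant of the $\sim$-class of $(\vx,i)$, so the two colorings agree and your symmetry-defect argument (including your correct remark that coloring by the conjugated automorphism group alone would only force normality, not fullness) goes through unchanged.
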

\begin{remark}
Note that the time requirement, $n^{O(k)}$, is polynomially
bounded in the length of the input which is at least 
$\binom{n}{k}$.  The procedure does not incur a
multiplicative cost.
\end{remark}

Now we can feed the output to the Design Lemma to obtain
the desired canonically embedded structures.

\begin{remark}
``Local asymmetry'' in the title refers to the sets $L$ not being full;
global irregularity refers to finding a canonical structure
($k$-ary relational structure, and then a
colored $\alpha$-partition or large UPCC) that drastically restricts
potential isomorphisms.
\end{remark}

\begin{proof}
     We shall introduce invariants of $\xxx_i$.  If we find any
     invariant that differs for $\xxx_1$ and $\xxx_2$, exit
     with rejection.  We shall not mention this explicitly below
     but simply assume that the two invariants are the same for  
     $\xxx_1$ and $\xxx_2$.

     In particular, we assume that none of the $L\in\binom{\Omega_2}{\ellk}$ 
     is full for $i=2$.

     For a set $A$, let $A^{\langle \ellk\rangle}$ denote the set
     of odered $\ellk$-tuples of distinct elements of $A$;
     so $A^{\langle \ellk\rangle}\subseteq A^{\ellk}$.
     Let $\calP_i=\Omega_i^{\langle \ellk\rangle}$ and
     $\calP=(\calP_1\times\{1\})\,\dot\cup\,(\calP_2\times\{2\})$.

     For $\vu=(u_1,\dots,u_{\ellk})\in\calP_i$ 
     let $L(\vu)=\{u_1,\dots,u_k\}\in\binom{\Omega_i}{\ellk}$.
     For $i,j\in\{1,2\}$ let $\vu=(u_1,\dots,u_k)\in\calP_i$ and
     $\vv=(v_1,\dots,v_k)\in\calP_j$.
     We say that the pairs $(\vu,i)$ and $(\vv,j)$ are
     equivalent, written as $(\vu,i)\sim (\vv,j)$,
     if there exists $\alpha\in\iso(X_i(L(\vu)),X_j(L(\vv)))$ 
     such that $\vu^{\,\alpha}=\vv$.
     The $\sim$ relation is an equivalence relation on
     $\cal P$, because $\calL$
     is a category (closed under composition of morphisms).

     Let $\Xi$ denote the set of $\sim$ equivalence classes.
     For $Q\in \Xi$, let $Q_i=\{\vu\mid (\vu,i)\in Q\}$.
     Since $\calL$ is a $\ellk$-local guide for $\calC$, it follows that
     the assignment $\xxx_i\mapsto Q_i$ is canonical (for $\calC$)
     for each $Q$.  Let $\zzz_i=(\Omega_i; Q_i\mid Q\in\Xi)$.
     These are $\ellk$-ary relational structures, canonically
     assigned to $\xxx_i$.

\mn
Claim.  The strong symmetry defect of $\zzz_i$ is at least $n-k+1$.
\begin{proof}
     Indeed, no subset $L\in \binom{\Omega_i}{\ellk}$ can be strongly
     symmetrical in $\zzz_i$ since $\aut(X_i(L))\neq \sym(L)$.
\end{proof}
This completes the proof of  
\end{proof}

\section{Split-or-Johnson}   \label{sec:johnson}

In this section we provide our second main combinatorial 
symmetry-breaking tool.  The output of the Design Lemma
was either a canonical colored $\alpha$-partition 
for, say, $\alpha=3/4$, or a canonically embedded
large UPCC.  In this section our algorithm takes a 
UPCC as input and attempts to find a canonical
colored $\alpha$-partition for, say, $\alpha=3/4$.

This is not always possible.
Johnson schemes are barriers to good partitions;
the Johnson scheme $\jjj(m,t)$ requires a
multiplicative cost of $\exp(\Omega(m/t))$
for a canonical $\alpha$-partition with any constant
$\alpha <1$ to arise.  This follows from
Prop.~\ref{prop:resilient} below.

Since $n=\binom{m}{t}$,
this cost is prohibitive: for bounded $t$ it results
in an exponential, $\exp(\Omega(n^{1/t}))$, algorithm.

We shall demonstrate that in a well--defined sense,
\emph{Johnson schemes are the only barriers.}  
Our algorithm takes a UPCC and 
returns, at a quasipolynomial multiplicative cost,
a canonical colored $3/4$-partition or a canonically
embedded Johnson scheme that takes up at least a $3/4$ 
fraction of the vertex set.

This cost is equivalent to the cost of individualizing a
polylogarithmic number of vertices, although this is not how it
happens.  Canonical auxiliary structres are constructed, and
vertices of those are individualized -- these could be called
``ideal vertices'' from the point of view of the input UPCC.

The bulk of the work is the same task -- find a good partition or
return a large Johnson scheme -- where the input is
an uneven bipartite graph with large symmetry defect.
We want to partition the large part, or find an embedded
Johnson scheme in it; so that part stays essentially constant,
while we iteratively reduce the small part.

\subsection{Resilience of Johnson schemes}
\label{sec:resilient}

Johnson schemes are highly resilient against partitioning.
Here is a formal statement of this observation.
\begin{proposition}  \label{prop:resilient}
Let $0<\epsilon \le 1/3$.  
The multiplicative cost of a (relative) canonical
$1-\epsilon$-partition of the Johnson scheme $\jjj(m,t)$
is $\ge (t/\epsilon)^{\epsilon m/t}$.
\end{proposition}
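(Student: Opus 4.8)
\emph{The plan.} I would deduce the statement from a group-theoretic fact about stabilizers of partitions inside $\aut(\jjj(m,t))$. Recall (from the Proposition just above, via $\aut(\jjj(m,t))=\sym^{(t)}(\Gamma)\cong\sss_m$) that the automorphism group of $\jjj(m,t)$ is $\sss_m$ acting on $\Omega:=\binom{\Gamma}{t}$, with $|\Omega|=n=\binom{m}{t}$. A construction producing, at multiplicative cost $s$, a colored $(1-\epsilon)$-partition $\Pi$ of $\Omega$ amounts to a canonical family of $s$ ``choices'' together with, for each choice, an admissible colored partition with $\rho\le(1-\epsilon)n$, the whole assignment being equivariant; since the family of choices is canonical, $\sss_m=\aut(\jjj(m,t))$ permutes the $s$ choices, and the stabilizer of a fixed choice (of index $\le s$ in $\sss_m$) fixes the partition $\Pi_0$ attached to it. Hence $[\sss_m:(\sss_m)_{\Pi_0}]\le s$, where $(\sss_m)_{\Pi_0}$ is the setwise stabilizer of $\Pi_0$. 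So it suffices to prove the following \emph{Combinatorial Claim}: for every partition $\Pi$ of $\Omega$ with $\rho(\Pi)\le(1-\epsilon)n$ and every $H\le\sss_m$ preserving $\Pi$, one has $[\sss_m:H]\ge(t/\epsilon)^{\epsilon m/t}$. (The coloring and admissibility of $\Pi$ play no role in this direction.)

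To prove the Claim, assume $[\sss_m:H]<(t/\epsilon)^{\epsilon m/t}$ and set $j=\lceil\epsilon m/t\rceil$. Since $\epsilon\le1/3$ and $t\ge2$ we get $j\le m/6+1\le m/2$, and an elementary estimate ($\binom{m}{j}\ge(m/j)^j$ together with monotonicity of $x\mapsto(m/x)^x$ on $(0,m/e)$) gives $\binom{m}{j}\ge(t/\epsilon)^{\epsilon m/t}>[\sss_m:H]$. Now I would invoke the classification of large subgroups of the symmetric group (a standard consequence of Cameron's theorem, valid for all $m$ above an absolute constant): if $H\le\sss_m$ has index $<\binom{m}{j}$ for some $1\le j\le m/2$, then $H$ contains $\alt([m]\setminus\Delta)$ — the alternating group on $[m]\setminus\Delta$, acting trivially on $\Delta$ — for some $\Delta\subseteq[m]$ with $|\Delta|\le j-1$. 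Thus $\alt([m]\setminus\Delta)\le H$ for some $\Delta$ with $\delta:=|\Delta|\le j-1<\epsilon m/t$.

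The subset $W:=\binom{[m]\setminus\Delta}{t}\subseteq\Omega$ is a single orbit of $\alt([m]\setminus\Delta)$, on which this group acts as the Johnson group $\aaa_{m-\delta}^{(t)}$; moreover $\delta<\epsilon m/t$ forces $m-\delta>2t$, i.e.\ $m-\delta\ge2t+1$ (using $m\ge2t+1$, $t\ge2$, $\epsilon\le1/3$), so that action is \emph{primitive}. Since $\Pi$ is $H$-invariant, hence $\alt([m]\setminus\Delta)$-invariant, its trace on $W$ is an invariant partition of a set carrying a transitive primitive action with $|W|\ge\binom{2t+1}{t}>1$; it can therefore be neither the all-singletons partition (transitivity would be violated) nor anything besides the one-block partition. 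Hence $W$ lies inside a single block $B$ of $\Pi$, so $\rho(\Pi)\ge|B|\ge|W|=\binom{m-\delta}{t}$. Finally, the identity $\binom{m-\delta}{t}/\binom{m}{t}=\binom{m-t}{\delta}/\binom{m}{\delta}$ and the union bound (a uniform random $\delta$-subset of $[m]$ meets a fixed $t$-subset with probability at most $t\delta/m<\epsilon$) give $\binom{m-\delta}{t}>(1-\epsilon)n$, contradicting $\rho(\Pi)\le(1-\epsilon)n$. This proves the Claim, hence the Proposition, for all $m$ above an absolute constant; the remaining finitely many values of $m$ — for which $t$ is also bounded, since $m\ge2t+1$ — are settled by a direct check of $[\sss_m:(\sss_m)_\Pi]\ge(t/\epsilon)^{\epsilon m/t}$.

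The step I expect to be the main obstacle is getting the classification input in exactly the form needed: that index $<\binom{m}{j}$ forces $H$ to \emph{contain} a large alternating group $\alt([m]\setminus\Delta)$ with $|\Delta|\le j-1$ (rather than merely being small, or merely stabilizing a set), and handling the genuinely small $m$ (where this classification has exceptions and where $m-\delta$ can be close to $2t+1$). The rest — the equivariance reduction and the binomial bookkeeping — is routine.
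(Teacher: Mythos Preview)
Your approach is essentially the paper's: the Proposition is deduced there from Corollary~\ref{cor:intact} (via Lemma~\ref{lem:intact-subscheme}), and unfolding those two statements yields precisely your argument --- a subgroup $H\le\sss_m$ of small index must contain $\alt([m]\setminus\Delta)$ for some small $\Delta$, this subgroup acts primitively on the large Johnson subscheme $\binom{[m]\setminus\Delta}{t}$, and hence any $H$-invariant partition has a block of relative size exceeding $1-\epsilon$. One correction of attribution: the group-theoretic input you describe as ``a standard consequence of Cameron's theorem'' is in fact the Jordan--Liebeck theorem (Theorem~\ref{thm:liebeck} in the paper), which is elementary --- it goes back to Jordan (1870) and does not require CFSG --- and is exactly what the paper invokes in the proof of Lemma~\ref{lem:intact-subscheme}.
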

\begin{proof}
This is an immediate consequence of Corollary~\ref{cor:intact} below.
\end{proof}

\noindent
The following lemma says that if we try to break up a Johnson scheme
at moderate multiplicative cost, we fail badly; a 
large Johnson subscheme remains intact.
\begin{lemma}[Intact Johnson subscheme]   \label{lem:intact-subscheme}
Let $G\le\aut(\jjj(m,t))$.  Assume $m!/|G| < \binom{m}{r+1}$ for some 
$r< m/2-1$.  Then $G\ge \alt_{m-r}^{(t)}$ where $\alt_{m-r}^{(t)}$ 
acts on a $\jjj(m-r,t)$ subscheme of $\jjj(m,t)$ corresponding to
a subset of size $m-r$ of $[m]$.
\end{lemma}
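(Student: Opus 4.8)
The plan is to reduce the claim about $\aut(\jjj(m,t))$ acting on $t$-sets to a statement about the natural action of the relevant group on $[m]$, and then invoke a Jordan-type theorem for primitive (indeed $2$-transitive) groups containing a large alternating subgroup. Recall from the earlier Proposition that, since $m\ge 2t+1$, we have $\aut(\jjj(m,t))=\sym^{(t)}([m])$, so $G\le\sym^{(t)}([m])$ corresponds (via the faithful induced action on $t$-sets) to a subgroup $H\le\sym([m])$ with $G=H^{(t)}$ and $|G|=|H|$ (the $t$-set action is faithful for $1\le t\le m/2$, so no information is lost). Then the hypothesis $m!/|G|<\binom{m}{r+1}$ becomes $|\sym([m]):H|<\binom{m}{r+1}$, i.e. $H$ has small index in $\sym([m])$.

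First I would record that a subgroup $H\le\sss_m$ of index less than $\binom{m}{r+1}$ with $r<m/2-1$ must contain $\alt([m\setminus R])$ for some $R\subseteq[m]$ with $|R|\le r$. This is the combinatorial heart: one shows $H$ is ``big'' in the sense that its orbits on $[m]$ cannot all be small, and more precisely that the pointwise stabilizer in $H$ of a suitable $r$-subset still contains the alternating group on the complement. The cleanest route is: let $R$ be a minimal subset of $[m]$ such that $H_{(R)}$ acts as $\sym$ or $\alt$ on $[m]\setminus R$; one must show $|R|\le r$. If instead $|R|\ge r+1$, then individualizing $R$ costs index at least $\binom{m}{r+1}$ (roughly, the number of choices of which $r+1$ of the $m$ points get ``used up''), since at each of the first $r+1$ steps of building up $R$ the index grows by a factor exceeding the number of remaining points that still need to be separated; this contradicts $|\sss_m:H|<\binom{m}{r+1}$. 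This is essentially a standard argument (cf. the minimal-degree / Bochert-style estimates, Theorem~\ref{thm:bochert}) but I expect the bookkeeping here — getting exactly the bound $\binom{m}{r+1}$ rather than something weaker like $\binom{m}{r}$ — to be the main obstacle, and it may require being slightly careful about whether one uses $\sym$ or $\alt$ at the base and about the strict inequality $r<m/2-1$ (which guarantees $\binom{m}{r+1}$ is still on the increasing part of the binomial and that the stabilizer chain argument does not run past the midpoint).

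Once we know $H\ge\alt([m]\setminus R)=\alt_{m-r'}$ for some $r'=|R|\le r$, I would translate back: the induced action of $\alt([m]\setminus R)$ on those $t$-subsets $T$ with $T\subseteq[m]\setminus R$ is precisely a copy of $\aaa_{m-r'}^{(t)}$ acting on a $\jjj(m-r',t)$ subscheme of $\jjj(m,t)$, and this subgroup lies inside $G=H^{(t)}$. Since $r'\le r$ and enlarging $R$ only shrinks the subscheme, we may (by passing to a subset) assume $|R|=r$ exactly, giving $G\ge\aaa_{m-r}^{(t)}$ on a $\jjj(m-r,t)$ subscheme, which is the assertion. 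A minor point to verify is that $r<m/2-1$ ensures $m-r>2t+1$ is not automatically needed — actually we need $m-r\ge 2t+1$ for ``$\jjj(m-r,t)$ subscheme'' to be a genuine Johnson scheme in the primitive sense; if $m-r$ is too small the statement degenerates but remains literally true as a statement about the subgroup $\aaa_{m-r}^{(t)}$, so I would simply remark that the conclusion is to be read as a subgroup containment and no primitivity of the subscheme is claimed. The whole argument uses only elementary permutation group theory plus the binomial monotonicity on $[0,m/2]$.
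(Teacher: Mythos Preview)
Your overall strategy is exactly the paper's: pull $G$ back to a subgroup $H\le\sss_m$ via $\aut(\jjj(m,t))=\sym^{(t)}([m])$, show $H$ contains $\alt([m]\setminus R)$ for some $|R|\le r$, then push forward to the $t$-set action. The difference is that for the ``combinatorial heart'' the paper simply invokes the Jordan--Liebeck theorem (Theorem~\ref{thm:liebeck}): from $|\sss_m:H|<\binom{m}{r+1}$ and $r+1<m/2$ one gets immediately $(\aaa_m)_{(T)}\le H$ with $|T|\le r$. That is the whole proof.

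Your proposal instead tries to reprove this small-index result by an ad hoc stabilizer-chain argument, and the sketch you give is not correct as stated: the claim that ``individualizing $R$ costs index at least $\binom{m}{r+1}$'' conflates the index of $H$ in $\sss_m$ with the cost of pointwise-stabilizing a set inside $H$, and the reference to Bochert's theorem (Theorem~\ref{thm:bochert}) is misplaced---Bochert bounds the \emph{minimal degree} of a doubly transitive non-giant, not the index of an arbitrary subgroup. The Jordan--Liebeck result is genuinely nontrivial (its proof goes back to Jordan's 1870 treatise); you should cite Theorem~\ref{thm:liebeck} rather than attempt to recover it.
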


\begin{proof}
Let us view $G$ as a subgroup of $\sss_m$, so 
$G^{(t)}\le \sss_m^{(t)}$ is the subgroup of $\aut(\jjj(m,t))$
in question.  Then, by the Jordan--Liebeck Theorem
(Thm.~\ref{thm:liebeck}) we have that $G\ge (\aaa_m)_{(T)}$
for some $T\subset [m]$,\ $|T|\le r$.  Let $\Gamma=[m]\setminus T$.
This means that $G^{(t)}\ge \alt^{(t)}(\Gamma)$
where $|\Gamma| \ge m-r$.
\end{proof}

\begin{corollary}     \label{cor:intact}
Let $G\le\aut(\jjj(m,t))$ and $0<\epsilon\le 1/3$.  If 
$m!/|G|< (t/\epsilon)^{\epsilon m/t}$ then
$G$ acts as a primitive group on a subset of
relative size $\ge (1-\epsilon)$.
\end{corollary}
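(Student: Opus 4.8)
The statement is Corollary~\ref{cor:intact}: from the hypothesis $m!/|G| < (t/\epsilon)^{\epsilon m/t}$ we want to conclude that $G$ acts as a primitive group on a subset of relative size $\ge 1-\epsilon$. The natural route is to feed the hypothesis into Lemma~\ref{lem:intact-subscheme}, so the main task is to choose the right value of $r$ and verify the numerical inequality $m!/|G| < \binom{m}{r+1}$ for that choice.

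\textbf{Step 1: Choose $r$.} I would set $r = \lfloor \epsilon m \rfloor$ (or a close variant), so that $m - r \ge (1-\epsilon)m$, which is the relative size we want for the surviving subscheme. Since $\epsilon \le 1/3$ we have $r < m/2 - 1$ for $m$ large enough, satisfying the side condition of Lemma~\ref{lem:intact-subscheme} (edge cases with small $m$ are handled separately or absorbed into the "absolute constant" slack; in fact $\epsilon \le 1/3$ gives $r \le m/3 < m/2-1$ as soon as $m \ge 6$).

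\textbf{Step 2: The binomial estimate.} I need $(t/\epsilon)^{\epsilon m/t} \le \binom{m}{r+1}$. Using the standard bound $\binom{m}{r+1} \ge \left(\frac{m}{r+1}\right)^{r+1}$, and $r+1 \le \epsilon m$ (roughly), it suffices to show $\left(\frac{m}{\epsilon m}\right)^{\epsilon m} = (1/\epsilon)^{\epsilon m} \ge (t/\epsilon)^{\epsilon m/t}$, i.e. $(1/\epsilon)^{t} \ge t/\epsilon$, i.e. $(1/\epsilon)^{t-1} \ge t$. Since $1/\epsilon \ge 3$ and $t \ge 2$ (Johnson schemes have $t \ge 2$), we get $(1/\epsilon)^{t-1} \ge 3^{t-1} \ge t$ for all $t \ge 2$ — indeed $3^{t-1} \ge t$ holds for $t=2$ ($3 \ge 2$) and then grows faster. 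So the inequality holds, possibly with a small amount of care about the rounding of $r$ versus $\epsilon m/t$ in the exponent; I'd keep a little slack by choosing $r$ slightly smaller if needed, or by noting the comparison only needs to be "$<$" against $m!/|G|$ which is strictly below $(t/\epsilon)^{\epsilon m/t}$.

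\textbf{Step 3: Conclude.} By Lemma~\ref{lem:intact-subscheme} with this $r$, $G^{(t)} \ge \alt^{(t)}(\Gamma)$ for some $\Gamma \subseteq [m]$ with $|\Gamma| \ge m - r \ge (1-\epsilon)m$. The group $\alt^{(t)}(\Gamma) = \aaa_{|\Gamma|}^{(t)}$ acts on the Johnson subscheme $\jjj(|\Gamma|, t)$ on $\binom{|\Gamma|}{t}$ vertices, and since $|\Gamma| \ge 2t+1$ (which follows from $|\Gamma| \ge (1-\epsilon)m \ge (2/3)m$ together with $m \ge 2t+1$, as $(2/3)(2t+1) \ge 2t+1$ fails — so here I'd instead invoke that $m \ge 2t+1$ is part of the Johnson-scheme hypothesis and $|\Gamma| \ge m - r$ with $r \le \epsilon m \le m/3$, hence $|\Gamma| \ge 2m/3 \ge 2t+1$ provided $m \ge 3(2t+1)/... $ — this is the one place to be slightly careful, but for the intended application $m$ is large relative to $t$ so $|\Gamma| \ge 2t+1$ is clear), the Johnson group $\aaa_{|\Gamma|}^{(t)}$ is primitive on its $\binom{|\Gamma|}{t}$-element domain. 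That domain is the relative-size-$\ge(1-\epsilon)$ subset of the vertex set of $\jjj(m,t)$ we seek (here "relative size $\ge 1-\epsilon$" is measured on the $[m]$ side, i.e. the subset of $[m]$ has size $\ge (1-\epsilon)m$; this is how the corollary is used downstream, e.g. in Prop.~\ref{prop:resilient}).

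\textbf{Main obstacle.} The real content is entirely Lemma~\ref{lem:intact-subscheme} (which rests on the Jordan--Liebeck theorem), so the "hard part" here is just bookkeeping: matching the exponent $\epsilon m/t$ in the cost bound against $\binom{m}{r+1}$ with the right choice of $r$, and checking the side conditions $r < m/2-1$ and $|\Gamma| \ge 2t+1$ hold in the regime of interest. The cleanest presentation is probably to state the reduction, pick $r = \lfloor \epsilon m\rfloor$, and then dispatch the binomial inequality via $\binom{m}{r+1}\ge(m/(r+1))^{r+1}$ and the elementary fact $3^{t-1}\ge t$ for $t\ge 2$.
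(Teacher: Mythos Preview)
Your overall route via Lemma~\ref{lem:intact-subscheme} is the right one, but there is a real gap in Step~3: you interpret ``relative size $\ge 1-\epsilon$'' as $(m-r)/m \ge 1-\epsilon$ on the underlying set $[m]$, whereas $G\le\aut(\jjj(m,t))$ acts on the vertex set $\binom{[m]}{t}$, and the relative size must be $\binom{m-r}{t}/\binom{m}{t}\ge 1-\epsilon$. (This is also how Prop.~\ref{prop:resilient} uses the corollary: a $(1-\epsilon)$-partition of $\jjj(m,t)$ is a partition of its $\binom{m}{t}$ vertices.) With your choice $r=\lfloor\epsilon m\rfloor$ you only get $\binom{m-r}{t}/\binom{m}{t}\ge (1-\epsilon)^t$, which for $t\ge 2$ is strictly below $1-\epsilon$ and does not yield the stated conclusion.

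The fix is to aim for $r\approx \epsilon m/t$ rather than $\epsilon m$; then the union-bound estimate $\binom{m-r}{t}\ge\binom{m}{t}-r\binom{m-1}{t-1}=(1-rt/m)\binom{m}{t}$ gives relative size $\ge 1-\epsilon$. But at that value of $r$ the comparison $(t/\epsilon)^{\epsilon m/t}\le\binom{m}{r+1}$ is essentially tight, so your Step~2 inequality $(1/\epsilon)^{t-1}\ge t$ no longer suffices; you need the monotonicity of $x\mapsto(m/x)^x$ on $(0,m/e)$. This is exactly what the paper does: it takes $r$ minimal with $|\sss_m:G|<\binom{m}{r+1}$ (so that also $|\sss_m:G|\ge\binom{m}{r}$), and then shows that $rt/m>\epsilon$ would force $\binom{m}{r}\ge(m/r)^r>(t/\epsilon)^{\epsilon m/t}$ via that monotonicity, contradicting the hypothesis. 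So once you correct the interpretation of ``relative size,'' your argument necessarily converges to the paper's.
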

\begin{proof}
The condition implies that $|\sss_m :G| < 1.9^m$.  
Let $r$ be the smallest value such that $|\sss_m:G|<\binom{m}{r+1}$.  
By Lemma~\ref{lem:intact-subscheme}, we have a Johnson group
$\alt_{m-r}^{(t)}\le G^{(t)}$ act on a subset of size $\binom{m-r}{t}$.
This group is primitive on this subset.
Now $\binom{m-r}{t}/\binom{m}{t}\ge (1-r/m)^t>1-(rt/m)$.
So we are done if $rt/m > \epsilon$.  Let us assume $rt/m\le \epsilon$.
Then
\begin{equation}
|\sss_m:G|\ge \binom{m}{r}\ge (m/r)^r=\left((m/r)^{r/m}\right)^m
 \ge (t/\epsilon)^{\epsilon m/t} ,
\end{equation}
contrary assumption.
\end{proof}

\begin{remark}
This result means that for fixed $t$ (e.\,g., $t=2$,
the most severe bottleneck case for decades),
the multiplicative cost of obtaining a constant-factor 
reduction in the domain size $n=\binom{m}{t}$
is exponential in $m$; and $m > n^{1/t}$.
\end{remark}

\subsection{Bipartite graphs: terminology, preliminary observations}

We use the term ``bipartite graph'' in the sense of having a declared
ordered bipartition of the vertex set.  
Let $X=(V_1,V_2;E)$ be a bipartite graph; here $E\subseteq V_1\times V_2$.
The vertex set is $V_1\cup V_2$; the $V_i$ are its ``parts.''
Isomorphisms of $X$ and $X'=(V_1',V_2';E')$ are bijections 
$V_1\cup V_2\to V_1'\cup V_2'$ that map $V_i$ to $V_i'$
and induce a bijection $E\to E'$.

We shall consider vertex-colored bipartite graphs $X=(V_1,V_2;E,f)$
where $f:V_1\cup V_2\to \{$colors$\}$; under this scenario, vertices
in the two parts do not share colors.  Isomorphisms preserve color
by definition.

The \emph{neighborhood} $N_X(v)$ (or $N(v)$ if $X$ is clear from the
context) of vertex $v\in V_i$ is the set of vertices adjacent
to $v$; so $N(v)\subseteq V_{3-i}$.  

Recall the definition of strong and weak twins (Def.~\ref{def:twins}).

\begin{definition}
Let $X=(V_1,V_2;E,f)$ be a colored bipartite graph.  We say that
vertices $x,y\in V_i$, $x\neq y$ are \emph{twins} if they have the
same color (in particular, they belong to the same part) and they have
the same neighborhood: $N(x)=N(y)$.  For a subset $T\subseteq V_i$
we use the phrase \emph{``all vertices in $T$ are twins''} to mean that 
all pairs of distinct vertices in $T$ are twins.
\end{definition}

It is clear that the ``twin-or-equal'' realtion is an equivalence
relation.

\begin{observation}
In a bipartite graph the following are equivalent for vertices $x,y$
($x\neq y$): 
\begin{enumerate}[(a)]
\item $x$ and $y$ are twins;
\item $x$ and $y$ are strong twins;
\item $x$ and $y$ are weak twins.
\end{enumerate}
\end{observation}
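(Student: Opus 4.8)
The plan is to prove the cycle of implications $(a)\Rightarrow(b)\Rightarrow(c)\Rightarrow(a)$. The middle implication $(b)\Rightarrow(c)$ is immediate: by Def.~\ref{def:twins} a strong twin pair is by definition also a weak twin pair. So the work is in the other two implications, both of which amount to a routine check that a small permutation is (or is forced to be the kind of) automorphism.

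For $(a)\Rightarrow(b)$ I would simply verify directly that the transposition $\tau=(x,y)$ lies in $\aut(X)$. Since $x$ and $y$ have the same color, they lie in a common part $V_i$ (this is built into the definition of twins), so $\tau$ fixes every part setwise and, fixing all vertices other than $x,y$, preserves the coloring $f$. To see that $\tau$ preserves $E\subseteq V_1\times V_2$: any pair in $E$ meeting neither $x$ nor $y$ is fixed by $\tau$; a pair in $E$ meeting, say, $x$ pairs $x$ with some vertex $v\in N(x)$ in the opposite part $V_{3-i}$, and since $N(x)=N(y)$ the image pairs $y$ with $v$, which again lies in $E$. Hence $\tau\in\aut(X)$, i.e.\ $x$ and $y$ are strong twins.

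For $(c)\Rightarrow(a)$, assume $x$ and $y$ are weak twins, so by Def.~\ref{def:twins} there is $g\in\aut(X)$ that is either the transposition $(x,y)$ or a $3$-cycle $(x,y,z)$ with $z\notin\{x,y\}$. Because isomorphisms of a \emph{bipartite} graph map each part of the bipartition to itself, and $g$ moves $x$ to $y$ (and, in the $3$-cycle case, also $y$ to $z$), the points $x,y$ (and $z$) all lie in one part $V_i$; likewise $g$ preserves $f$, so these points all have the same color. For the neighborhoods, take $v\in N(x)$; then $v\in V_{3-i}$, hence $g$ fixes $v$, and applying $g$ to the edge through $x$ and $v$ gives $v\in N(y)$ (in the $3$-cycle case this argument, run cyclically, yields $N(x)\subseteq N(y)\subseteq N(z)\subseteq N(x)$, so all three coincide). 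Thus $f(x)=f(y)$ and $N(x)=N(y)$, which is exactly condition~(a), completing the cycle.

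There is no genuine obstacle here; the one point that needs a moment's care is that an automorphism of a bipartite graph must preserve the ordered bipartition, which is precisely what forces $x$, $y$ (and the auxiliary $z$) into a common part in the $(c)\Rightarrow(a)$ step. (Alternatively, $(c)\Rightarrow(b)$ could be deduced from the hypergraph statement Proposition~\ref{prop:weak-is-strong} after re‑encoding the colored bipartite graph as a suitable uniform structure, but the direct verification above is shorter and self-contained.)
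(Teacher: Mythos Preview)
Your proof is correct; the paper's own proof is the single word ``Obvious,'' so your argument is simply a spelled-out version of the intended reasoning. The one minor point you could make explicit is that in the transposition case of $(c)\Rightarrow(a)$ the reverse inclusion $N(y)\subseteq N(x)$ follows by the same argument with $x$ and $y$ interchanged, but this is implicit in the symmetry of the transposition.
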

\begin{proof} Obvious.
\end{proof}

As a consequence, we don't need to make a distinction
between weak and strong ``symmetrical sets:''
a subset of $V_i$ is \emph{symmetrical} if
all vertices in $V_i$ are of the same color and
all of them have the same neighborhood.
The maximal symmetrical sets are the twin
equivalence classes.

\begin{definition}
Let $X=(V_1,V_2;E,f)$ be a vertex-colored bipartite graph.
We call a subset $T\subseteq V_i$
a \emph{symmetrical subset} of $V_i$ if 
$|T|\ge 2$ and all vertices in $T$ are twins.
\end{definition}

\begin{definition}
For $0\le\alpha\le 1$ we say that $X$ is \emph{$\alpha$-symmetrical}
in part $i$ if there is a symmetrical subset $T\subseteq V_i$ of size
$|T|\ge \alpha|V_i|$.
\end{definition}

\begin{definition}[Semiregular]
We say that the bipartite graph $X=(V_1,V_2;E)$ is \emph{semiregular}
if for $i=1,2$, each vertex in $V_i$ has the same degree.
\end{definition}

Recall that Prop.~\ref{prop:CCbipartite} asserts that each bipartite
edge-color class in a coherent configuration is semiregular.
This is especially useful for us in combination with the following
fact that is used to justify a subroutine in the main algorithm
in this section (Sec.~\ref{sec:johnson}), see Lemma~\ref{lem:contracting}.

\begin{proposition}[Semiregular defect]  \label{prop:semiregular-defect}
Let $X=(V_1,V_2;E)$ be a nontrivial (not empty and not complete)
semiregular bipartite graph. 
Then the symmetry defect of $V_1$ in $X$ is $\ge 1/2$.
\end{proposition}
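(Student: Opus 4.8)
The plan is to show that a symmetrical subset $T \subseteq V_1$ with $|T| > |V_1|/2$ is impossible. Suppose $X = (V_1, V_2; E)$ is semiregular, with every vertex in $V_1$ having degree $d_1$ and every vertex in $V_2$ having degree $d_2$, and $E$ neither empty nor complete, so $1 \le d_1 \le |V_2| - 1$. Let $T \subseteq V_1$ be symmetrical, meaning all vertices in $T$ share a common neighborhood $N \subseteq V_2$ with $|N| = d_1$. We want to conclude $|T| \le |V_1|/2$.

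First I would count edges between $T$ and $V_2 \setminus N$, which is zero by the definition of symmetrical set (every vertex of $T$ has neighborhood exactly $N$). Next, count edges between $V_1 \setminus T$ and $V_2 \setminus N$: on the one hand this is at most $d_1 |V_1 \setminus T| = d_1(|V_1| - |T|)$, since each vertex of $V_1 \setminus T$ contributes at most $d_1$ edges; on the other hand, summing degrees on the $V_2 \setminus N$ side, it equals $\sum_{v \in V_2 \setminus N} \deg_{V_1 \setminus T}(v)$. Since no vertex of $V_2 \setminus N$ is adjacent to $T$, we have $\deg_{V_1 \setminus T}(v) = \deg_{V_1}(v) = d_2$ for each such $v$; hence the edge count is exactly $d_2(|V_2| - d_1)$. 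Now I need $|V_2| - d_1 > 0$, i.e.\ $d_1 < |V_2|$, which holds because $X$ is not complete. Also $d_2 \ge 1$ because $X$ is not empty (every vertex of $V_2$ has positive degree when $d_1 \ge 1$; more carefully, since $E \ne \emptyset$ and $X$ is semiregular, $d_2 \ge 1$). Combining, $d_1(|V_1| - |T|) \ge d_2(|V_2| - d_1) > 0$, so $|V_1| - |T| \ge \tfrac{d_2(|V_2| - d_1)}{d_1} \ge \tfrac{d_2 \cdot 1}{d_1}$; but this only gives $|T| < |V_1|$, not $|T| \le |V_1|/2$.

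To get the factor $1/2$ I would instead use the double-counting of edges between $T$ and $N$ against the $N$-side degrees. Each vertex of $T$ sends $d_1$ edges into $N$, so the $T$--$N$ edge count is $d_1 |T|$; each vertex of $N$ has degree $d_2$ in $V_1$, of which at most $d_2$ land in $T$, so $d_1 |T| \le d_2 |N| = d_2 d_1$, giving $|T| \le d_2$. Symmetrically, counting $(V_1 \setminus T)$--$(V_2 \setminus N)$ edges as above but now bounding by the $V_1 \setminus T$ side more tightly: actually the cleanest route is to observe that $N$ absorbs all $d_1 |T|$ edges from $T$ plus all $d_1 |V_1 \setminus T|$ minus the edges from $V_1\setminus T$ to $V_2 \setminus N$; equivalently, total edges $= d_1 |V_1| = d_2 |V_2|$, and edges incident to $N$ total $d_2 |N| = d_1 d_2 = d_1 |N|$. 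All $d_1|T|$ edges from $T$ go into $N$, so $d_1 |T| \le d_1 |N| = d_1 d_2$, i.e.\ $|T| \le d_2$; and all edges from $V_2\setminus N$ go into $V_1 \setminus T$, so $d_2(|V_2| - d_1) \le d_1(|V_1| - |T|)$. Adding $d_1|T| \le d_1 d_2$ and using $d_1|V_1| = d_2|V_2|$: from the second inequality, $d_2|V_2| - d_1 d_2 \le d_1|V_1| - d_1|T|$, i.e.\ $d_1|V_1| - d_1 d_2 \le d_1 |V_1| - d_1 |T|$, which simplifies to $|T| \le d_2$. That is the same bound, so I pivot: the relevant statement is that $T$ being symmetrical forces $V_1 \setminus T$ to be symmetrical in the bipartite complement, and then I apply Proposition~\ref{prop:digraph-indep} (or directly Corollary~\ref{cor:reg-defect}) to the natural biregular digraph on $V_1$ whose edges encode ``same neighborhood'' — but more simply, I would reduce to Prop.~\ref{prop:digraph-indep}: form the digraph on $V_1$ where $x \to y$ iff $N(x) \ne N(y)$ witnessed by a specific asymmetry; this is biregular by semiregularity, $T$ is independent in it, hence $|T| \le |V_1|/2$. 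The hard part is pinning down the exact biregular auxiliary structure so that ``symmetrical'' translates precisely to ``independent''; I expect that choosing the digraph $y \to v$ on $V_1$ defined via a fixed reference vertex, or invoking that the maximal symmetrical sets partition $V_1$ into equal-size classes (as twin classes, cf.\ the equipartition results) and there are at least two such classes since $X$ is nontrivial, yields $|T| \le |V_1|/2$ immediately.

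<br>

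**Cleaner plan.** The maximal symmetrical subsets of $V_1$ are the twin-equivalence classes, which partition $V_1$. Since $X$ is semiregular and nontrivial, there are at least two distinct neighborhoods occurring among vertices of $V_1$ (if all vertices of $V_1$ had the same neighborhood $N$, then by semiregularity every vertex of $V_2$ would have degree $|V_1|$ if $v \in N$ and $0$ otherwise, forcing $N = V_2$ and $X$ complete, or $N = \emptyset$ and $X$ empty — both excluded). So there are at least two twin classes. Moreover, if $x, y$ are twins and $x', y'$ are twins with $x, x'$ in different classes, the map $N(x) \mapsto N(x')$ data shows (via semiregularity / the coherent-configuration equipartition machinery, Cor.~\ref{cor:wt-equi}) that all twin classes in $V_1$ have equal size. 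Hence the largest symmetrical set $T$ is one such class, and $|T| = |V_1|/(\text{number of classes}) \le |V_1|/2$. Therefore the symmetry defect of $V_1$ is $|V_1| - |T| \ge |V_1|/2$, i.e.\ the relative symmetry defect is $\ge 1/2$. The main obstacle is verifying the equal-size claim for twin classes in a merely semiregular (not coherent) bipartite graph; I would handle it by a direct double-counting: if $C, C'$ are twin classes with neighborhoods $N, N'$, then counting $C$--$N'$ and $C'$--$N$ edges both ways via the common $V_2$-degree $d_2$ forces $|C| \le |C'|$ and $|C'| \le |C|$ when $N \cap N' \ne \emptyset$, and a connectivity/transitivity argument over $V_2$ propagates equality to all classes; if some pair of classes has disjoint neighborhoods the argument still goes through since $X$ is connected enough by semiregularity, or one falls back on the direct biregular-digraph bound above which already gives $|T| \le |V_1|/2$ without needing exact equipartition.
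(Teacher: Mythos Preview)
You were one line away from the paper's proof. You correctly derived $|T|\le d_2$ by counting $T$--$N$ edges (this is exactly the paper's observation that a common neighbor $y$ of $T$ satisfies $|T|\le\deg(y)=d_2$). The missing step is simply to pass to the bipartite complement if necessary: the complement is again semiregular and nontrivial, has the same twin relation on $V_1$ (hence the same symmetrical sets), and has $V_2$-degree $|V_1|-d_2$. So without loss of generality $d_2\le |V_1|/2$, and your inequality $|T|\le d_2$ finishes the proof. That is precisely what the paper does.

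Your ``cleaner plan'' via equipartition of twin classes is incorrect: twin classes in a semiregular bipartite graph need \emph{not} have equal size. Take $V_2=\{a,b,c,d\}$ and $V_1$ of size $8$ with four twin classes having neighborhoods $\{a,b\}$, $\{c,d\}$, $\{a,c\}$, $\{b,d\}$ and sizes $3,3,1,1$ respectively; then $d_1=2$, $d_2=4$, the graph is semiregular and nontrivial, yet the twin classes are not equipartitioned. Corollary~\ref{cor:wt-equi} does not apply here because it concerns coherent configurations, not arbitrary semiregular bipartite graphs. The auxiliary-digraph route you sketch is also unneeded once you observe the complement trick.
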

\begin{proof}
By taking the complement if necessary, we may assume the density
of $Y$ is $|E|/(|V_1||V_2|)\le 1/2$; so every vertex in $V_i$ has
degree $\le |V_{3-i}|/2$.  Assume $S\subseteq V_i$ is symmetrical.
Let $x\in V_i$ be adjacent to $y\in V_{3-i}$.  But then 
$y$ is adjacent to all vertices if $S$, so $|S|\le\deg(y)\le |V_i|/2$.
\end{proof}

\subsection{Split-or-Johnson: the Extended Design Lemma}
In each result in this section, canonicity involves a combination 
of the following categories (cf. Section~\ref{sec:functor}): 
binary relational structures (Theorem~\ref{thm:UPCC}),
vertex-colored bipartite graphs (Theorem~\ref{thm:bipartite}),
$k$-ary relational structures (Theorem~\ref{thm:extended-design}),
and the category of colored partitions in each result.

Recall the definitions of \emph{canonical colored partition}
and an \emph{$\alpha$-partition} (Defs.~\ref{def:canonical-partition}
and~\ref{def:partition}).

Recall that ``UPCC'' means \emph{uniprimitive coherent configuration}
(Def.~\ref{def:UPCC}).

We can now state the two main results of Section~\ref{sec:johnson}.

\begin{theorem}[UPCC Split-or-Johnson]   \label{thm:UPCC}
Let $\xxx=(V;R_1,\dots,R_r)$ be a UPCC
with $n$ vertices and let $2/3 \le \beta<1$ be a threshold
parameter.  Then at quasipolynomial multiplicative cost we can
find either
\begin{enumerate}[(a)]
\item \label{item:CCpartition} a canonical colored $\beta$-partition of $V$, or
\item \label{item:CCjohnson}
a canonically embedded nontrivial Johnson scheme on a 
      subset of $V$ of size $\ge \beta n$.
\end{enumerate}
(The time bounds do not depend on $\beta$.)
\end{theorem}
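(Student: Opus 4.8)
The plan is to reduce the UPCC problem to a bipartite problem and then run an iterative ``peeling'' of the small side, invoking the Design Lemma (Theorem~\ref{thm:design}) and a local-guide construction (Proposition~\ref{prop:localguide}) at each round; this bipartite routine is really the heart of the matter (it is the companion statement Theorem~\ref{thm:bipartite}), so I describe both pieces. Throughout, fix a locality parameter $k=c\log n$ for a suitable constant and set the threshold to $\alpha=\beta$.

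\textbf{Reduction to the bipartite case.} First I would individualize a single vertex $x\in V$ (multiplicative cost $n$) and apply WL refinement. Since $\xxx$ is uniprimitive (rank $\ge 3$), this produces $x$ together with the canonical partition of $V\setminus\{x\}$ into the out-neighbourhoods of $x$ in the constituents, at least two of which are nonempty. If every vertex-color class of the refined configuration has size $\le\beta n$, return that colored partition (encoding block sizes via Proposition~\ref{prop:equipartition1}, and restoring admissibility by splitting off singletons if needed): alternative (a) holds. Otherwise there is a unique dominant class $C$ with $|C|>\beta n$ (unique because $\beta\ge 2/3>1/2$), so $V\setminus C$ has size $<(1-\beta)n<n/3$ and every other non-singleton class $D$ satisfies $|D|<n/3$. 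Choose such a $D$ adjacent to $C$ in some constituent (possible since every constituent of the primitive $\xxx$ is connected); by Proposition~\ref{prop:CCbipartite} the bipartite graph between $D$ and $C$ is semiregular, nontrivial, and $C$ has large symmetry defect in it (a strongly symmetrical subset of $C$ would contradict the symmetry-defect hypothesis on $\xxx$). Running the bipartite routine with $(V_1,V_2):=(D,C)$ --- to $\beta$-partition $V_2$ or to embed a Johnson scheme on $\ge\beta|V_2|$ vertices of $V_2$ --- then yields the required structure on $V$, since $|C|>\beta n$ and its complement is small.

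\textbf{The bipartite peeling.} Let $X=(V_1,V_2;E)$ be a semiregular vertex-colored bipartite graph with $|V_2|$ large, large symmetry defect in $V_2$, and parameter $m=|V_1|$. If $m$ is polylogarithmic, individualize all of $V_1$ (multiplicative cost $m!$, quasipolynomial), refine, and read off the coloring of $V_2$ by neighbourhood pattern; large symmetry defect of $V_2$ forces those classes to be small or to carry a nontrivial equipartition (Proposition~\ref{prop:CCbipartite}, Corollary~\ref{cor:contract}), so alternative (a) holds. Otherwise, for each $w\in V_2$ record the local structure induced on the $k$-subsets $L\subseteq V_1$ by the neighbourhood pattern $w$ sees; this packages as a $k$-local guide $\calL$ for the category on $V_1$ in the sense of Section~\ref{sec:localguide}. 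Now the dichotomy: if no $L\in\binom{V_1}{k}$ is full, Proposition~\ref{prop:localguide} yields (at no multiplicative cost) a canonical $k$-ary relational structure on $V_1$ of strong symmetry defect $\ge m-k+1$; feed it to the Design Lemma (cost $m^{O(k)}$) to obtain either a colored $\alpha$-partition of $V_1$ --- which through $E$ refines the colors of $V_2$ and strictly decreases $m$, so recurse --- or a canonically embedded UPCC $\xxx^{*}$ on $W\subseteq V_1$, $|W|\ge\alpha m$, on which we recurse (a strictly smaller UPCC), its outcome either refining $V_1$ (reducing $m$) or producing a Johnson scheme on a subset of $W$. If instead some $L$ is full, the full $k$-sets are shown to cohere: after restricting to a canonical $(1-\epsilon)$-large subset $\Gamma\subseteq V_1$, the neighbourhood patterns $\{N(w)\cap\Gamma : w\in V_2\}$ behave like the $t$-subsets of $\Gamma$ for a common $t$, giving a canonical bijection between a $\ge\beta$-fraction of $V_2$ and $\binom{\Gamma}{t}$ that turns the intrinsic similarity structure on $V_2$ into $\jjj(|\Gamma|,t)$ --- a canonically embedded nontrivial Johnson scheme on $\ge\beta|V_2|$ vertices, i.e.\ alternative (b). Finally, any Johnson scheme produced on a large subset of the current small part (here or deeper in the recursion) is pulled back through the chain of bipartite relations to a Johnson scheme on $\ge\beta$ of the original large part, again using the ``$t$-subsets of a large subset of the other side'' identification.

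\textbf{Cost and the main obstacle.} Each round contributes a quasipolynomial multiplicative factor ($n$ for the initial individualization, $m!$ at the base case, $m^{O(k)}$ for the Design Lemma) and either divides the size parameter by a constant or replaces the current UPCC by a strictly smaller one, so the recursion has depth $O(\log n)$ and the total multiplicative cost is quasipolynomial, with $\beta\in[2/3,1)$ entering only through absorbed constants. The hard part will be the bipartite analysis: first, proving the full/non-full dichotomy sharply --- that failure to cheaply partition $V_2$ forces genuinely large symmetry defect of the $k$-ary structure on $V_1$ (a Fisher/block-design type counting in the spirit of the Design Lemma proof), and that fullness of a $k$-set really forces a bona fide global Johnson structure on a large subset rather than some weaker regularity; and second, making the ``transport'' rigorous and functorial --- showing that a bipartite relation with large symmetry defect on the big side canonically realizes a $\ge\beta$-fraction of it as the $t$-subsets of a large subset of the small side, with the Johnson-scheme relations preserved. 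The remaining points (directed constituents, the dominant class itself being a clique configuration, admissibility bookkeeping for the $\alpha$-partitions, and checking that $m$ strictly decreases even when the Design Lemma splits off only a small piece) are routine given the tools already developed.
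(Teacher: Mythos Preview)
Your overall architecture---individualize a vertex to pass from the UPCC to a semiregular bipartite graph, then iteratively shrink the small side---is exactly the paper's (Lemma~\ref{lem:UPCC-to-bipartite} feeding Theorem~\ref{thm:bipartite}). But your bipartite routine is structurally different from the paper's, and the difference opens a real gap.

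The paper does \emph{not} apply local guides directly to the small side and split on full vs.\ non-full. Instead, after WL, it branches on the coherent configuration $\xxx_2$ induced on the small side: imprimitive (Sec.~\ref{sec:imprimitive}), rank-$2$ ``block design'' (Sec.~\ref{sec:blockdesign}), uniprimitive (Sec.~\ref{sec:UPCC9}), or already a Johnson scheme (Sec.~\ref{sec:BPjohnson}). The Johnson scheme on the \emph{large} side materializes in exactly one place, Case~1 of the block design section: when $\xxx_2$ is the clique configuration and the hypergraph of neighborhoods $\calH=(V_2,\{N(v):v\in V_1\})$ is the \emph{complete} $d_1$-uniform hypergraph, so the large side is literally $\binom{\text{small side}}{d_1}$. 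It is this very specific situation---not any fullness of local windows---that produces alternative~(b).

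Your step ``if some $L$ is full, the full $k$-sets cohere and the neighbourhood patterns behave like the $t$-subsets of $\Gamma$'' does not follow. Even if \emph{all} $k$-subsets of the small side are full, the Local-to-global lemma (Lemma~\ref{lem:local-to-global}) gives only $\sym(\Gamma)\le\aut(\calH)$ for a large $\Gamma$; the edge set is then a union of $\sym(\Gamma)$-orbits $\{T\cup F:F\in\binom{\Gamma}{d-|T|}\}$ indexed by subsets $T$ of the small complement, and there is no reason this collapses to a single $\binom{\Gamma}{t}$. A high-strength design (say a Steiner system $S(t,d,m)$ with $t\ge k$) on the small side, used as the neighborhood hypergraph, makes every $k$-window full yet gives no Johnson scheme on the large side. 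This is why the paper isolates the rank-$2$ (block design) case first and uses completeness of $\calH$, handling the non-complete sub-cases separately via the Design Lemma and the Skeleton Defect Lemma (Lemma~\ref{lem:skeleton-defect}).

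Your ``transport'' clause---pulling back a Johnson scheme found on a large subset of the current small side to a Johnson scheme on the original large side---is likewise not a formal pull-back. In the paper this is the content of Sec.~\ref{sec:BPjohnson}, one of the longest sections: once the small side is identified with $\binom{\Gamma}{t}$, local guides are placed on $\Gamma$ (not on the small side itself), and even then the ``full'' case (B2) does not hand you a Johnson scheme on the large side; it produces a new, much smaller bipartite instance $(V_1,\Gamma;F)$ via the sets $S(v)$, and the recursion continues. So both the location of the local-guide argument and what it delivers differ from your sketch.

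Two smaller points. In your UPCC-to-bipartite step, the large symmetry defect of the dominant class comes from semiregularity of the constituent between $C$ and $D$ (Prop.~\ref{prop:semiregular-defect}), not from any assumed defect of $\xxx$. And when the Design Lemma returns a colored $\alpha$-partition of the small side, you still need the {\sf Reduce-Part2-by-Color} routine (Sec.~\ref{sec:minor}) to select a single class on which the large side retains defect $\ge 1-\alpha$; partitioning the small side does not by itself ``refine the colors'' of the large side in a way that advances the recursion.
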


\begin{theorem}[Bipartite Split-or-Johnson]   \label{thm:bipartite}
Let $X=(V_1,V_2;E,f)$ be a vertex-colored bipartite graph 
with $|V_1|\ge 2$
and let $2/3\le\alpha<1$ be a threshold parameter.  Assume 
$|V_2| < \alpha|V_1|$.  Assume moreover that
the symmetry defect of $X$ on $V_1$ is at least $1-\alpha$.
Then at quasipolynomial multiplicative cost we can find either 
\begin{enumerate}[(a)]
 \item \label{item:BPpartition}
       a canonical colored $\alpha$-partition of $V_1$, or
 \item \label{item:BPjohnson}
       a canonically embedded nontrivial Johnson scheme on a 
       subset of $V_1$ of size $\ge \alpha |V_1|$.
\end{enumerate}
(The time bounds do not depend on $\alpha$.)
\end{theorem}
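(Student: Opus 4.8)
The plan is to prove this by a recursive procedure whose progress is measured by $m:=|V_2|$, the size of the small part. Each recursive call will either finish --- producing a colored $\alpha$-partition of $V_1$, or a canonically embedded Johnson scheme on $\ge\alpha|V_1|$ vertices --- or replace $V_2$ by a \emph{canonical} subset of size at most $\tfrac{9}{10}m$ while keeping $V_1$ essentially unchanged; and once $m$ falls below a polylogarithmic threshold $\ell(n)$ I would individualize \emph{all} of $V_2$ by brute force, at multiplicative cost $\ell(n)!=2^{\polylog n}$, which is quasipolynomial. Each recursive step will individualize only $O(\log n)$ vertices (of $V_2$, or of canonically constructed auxiliary structures), each at multiplicative cost $\le n$, so a step costs $n^{O(\log n)}$; with $O(\log m)$ levels and the base case this gives overall multiplicative cost $n^{O(\log^{2} n)}$, the quasipolynomial bound required by the cost analysis of Section~\ref{sec:cost-estimate}.

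First I would reduce to the homogeneous, semiregular case and dispose of small $V_2$. Classical Weisfeiler--Leman refinement of $X$ is canonical (Eq.~\eqref{eq:WL-canonical}) and makes every bipartite constituent semiregular (Prop.~\ref{prop:CCbipartite}). If every vertex-color class of $V_1$ then has size $\le\alpha|V_1|$, output these classes as the blocks of a colored $\alpha$-partition: outcome (a). Otherwise there is a unique dominant class $C\subseteq V_1$ with $|C|>\alpha|V_1|\ge\tfrac{2}{3}|V_1|$; since any symmetrical subset of $V_1$ has size $\le\alpha|V_1|<|C|$, the coloring is nontrivial on $C$ and some constituent between $C$ and $V_2$ is nontrivial, so by Prop.~\ref{prop:semiregular-defect} every symmetrical subset of $C$ has size $\le|C|/2$. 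Individualizing $O(1)$ further vertices of $C$ and re-refining --- weakening the threshold by a constant, which is harmless by the monotonicity of the statement in $\alpha$ (the Observation following Theorem~\ref{thm:design}) --- we may assume $C$ is homogeneous and every vertex-color class of $C$ other than $C$ itself is small. If $|V_2|\le\ell(n)$ I would individualize all of $V_2$; then twins in $C$ are exactly the vertices with equal neighborhoods, each such class is a symmetrical subset of $V_1$ hence of size $\le\alpha|V_1|$, so refining $(C,V_1\setminus C)$ (note $|V_1\setminus C|\le(1-\alpha)|V_1|\le\alpha|V_1|$ since $\alpha\ge1/2$) by neighborhood on $C$ yields a colored $\alpha$-partition: outcome (a).

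The heart is the recursive step, $m=|V_2|>\ell(n)$. The goal is to produce, at quasipolynomial multiplicative cost, one of: (i) a canonical partition of $C$ into classes of size $\le\alpha|C|$ (hence outcome (a) for $V_1$, as above); (ii) a canonically embedded nontrivial Johnson scheme on $\ge\alpha|C|$ vertices of $C$ --- which, $C$ being a canonical subset with $|C|>\alpha|V_1|$, converts (after a harmless shrink of the threshold) to outcome (b); or (iii) a canonical subset $V_2''\subseteq V_2$ with $|V_2''|\le\tfrac{9}{10}m$ such that the same problem for the semiregular bipartite graph on $(C,V_2'')$ reduces to that for $X$, so that we may recurse. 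To obtain this trichotomy I would examine how much of the structure on $C$ is ``seen'' from $V_2$: individualizing a subset $B\subseteq V_2$ with $|B|=O(\log n)$ and re-running WL refinement yields a canonical coloring of $C$; if for some such $B$ all its classes are $\le\alpha|C|$ we get (i); if for every $B$ they are not, yet some such refinement collapses $V_2$ substantially, then the set of vertices of $V_2$ that still discriminate points of $C$ is a canonical subset $V_2''$ with $|V_2''|\le\tfrac{9}{10}m$ --- its canonicity coming from the contracting-components construction (Cor.~\ref{cor:contract}, Lemma~\ref{lem:contracting}) --- and we get (iii). The remaining case is that no $O(\log n)$ individualizations in $V_2$ appreciably split $C$ or shrink the configuration; then the neighborhood family $\{N(w)\cap C:w\in V_2\}$ is a highly regular, resilient $2$-design-like set system on $C$, and a Fisher-type inequality (as in ``Case~1'' of the proof of Theorem~\ref{thm:design}), together with the resilience of Johnson schemes (Cor.~\ref{cor:intact}, Prop.~\ref{prop:resilient}) and the Jordan--Liebeck rigidity of large subgroups of $\sss_m$ (Lemma~\ref{lem:intact-subscheme}), forces the bipartite structure between $C$ and $V_2$ to be a blow-up of a Johnson scheme on $C$ modulo a canonical equipartition of $C$ into twin-classes; contracting those away (Cor.~\ref{cor:contract}) exposes the Johnson scheme: outcome (ii). Assembling the geometric recursion in $m$ then gives the cost bound stated in the first paragraph.

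The hard part will be exactly this structural classification in the recursive step: proving that the only obstruction to a cheap balanced split (or substantial shrink) of the dominant class $C$ is a blown-up Johnson scheme --- the precise content of the slogan that Johnson schemes are the only barriers. The genuine difficulty is to control the $2$-design-like neighborhood systems that arise and to rule out all other uniprimitive possibilities, using the design-theoretic and permutation-group-rigidity inputs above; and, pervasively, to verify that every auxiliary choice --- the individualized set $B\subseteq V_2$, the contracted equipartition, and the extracted Johnson scheme --- is canonical in the functorial sense of Section~\ref{sec:functor}, so that the multiplicative cost is honestly accounted for and no more than a quasipolynomial factor of isomorphisms is ever lost.
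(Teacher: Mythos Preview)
Your overall recursive framework—measure progress by $|V_2|$, brute-force once it drops below a polylog threshold, aim for geometric reduction at quasipolynomial cost per step—matches the paper's skeleton, and the cost bookkeeping and initial WL/homogenization are essentially right. The genuine gap is the recursive step itself. You assert a trichotomy (split $C$, shrink $V_2$, or ``forced Johnson''), but the third branch is the entire content of the theorem, and your justification---Fisher plus Jordan--Liebeck plus resilience of Johnson schemes ``forces the bipartite structure to be a blow-up of a Johnson scheme''---is not an argument but a restatement of the goal. None of the tools you cite (Cor.~\ref{cor:intact}, Lemma~\ref{lem:intact-subscheme}) produces a Johnson scheme out of a bipartite graph; they only describe what happens \emph{inside} one already identified. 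Your second branch is also unclear: why should failure of every $O(\log n)$-individualization to split $C$ yield a canonical shrink of $V_2$?

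The paper's mechanism is concretely different. After WL it analyzes the coherent configuration $\xxx_2$ on the \emph{small} side $V_2$ (not on $C$) and branches on its type: imprimitive $\Rightarrow$ contract or restrict to a block (Sec.~\ref{sec:imprimitive}); clique $\Rightarrow$ study the neighborhood hypergraph $\calH=(V_2,\{N(v):v\in V_1\})$ living on $V_2$, where the Johnson scheme on $V_1$ appears exactly when $\calH$ is the complete $d_1$-uniform hypergraph (so $V_1\cong\binom{V_2}{d_1}$), and otherwise the Design Lemma is applied to $\calH$ or to a skeleton of logarithmic arity (Sec.~\ref{sec:blockdesign}, Lemma~\ref{lem:skeleton-defect}); UPCC $\Rightarrow$ recurse into $\xxx_2$ via Lemma~\ref{lem:UPCC-to-bipartite} (Sec.~\ref{sec:UPCC9}); and if $\xxx_2$ is itself a Johnson scheme $\jjj(\Gamma,t)$, there is a further elaborate recursion on $|\Gamma|$ using the local-to-global symmetry machinery (Lemma~\ref{lem:local-to-global}, Sec.~\ref{sec:BPjohnson}). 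Your proposal has no analogue of the last three cases; in particular you place the neighborhood system on the wrong side ($C\subseteq V_1$ rather than $V_2$), which leaves its arity uncontrolled and blocks the Design Lemma, and you omit entirely the case where $\xxx_2$ is itself a Johnson scheme---which is where most of the technical work actually lies.
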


These results will be proved recursively by mutual reduction to
each other.

Combining the Design Lemma and Theorem~\ref{thm:UPCC}
we obtain our overall combinatorial partitioning tool,
the main result of the combination of Sections~\ref{sec:tools1-designlemma}
and~\ref{sec:johnson}.

\begin{theorem}[Extended Design Lemma]   \label{thm:extended-design}
Let $3/4 \le \alpha <1$ be a threshold parameter.  Let
$\xxx=(\Omega,\calR)$ be a $k$-ary relational structure with $n$
vertices, $2\le k\le n/4$, and relative strong symmetry defect
$>1-\alpha$.  Then at a multiplicative cost of 
$q(n)n^{O(k)}$, where $q(n)$ is a quasipolynomial function, 
we can find either
\begin{enumerate}[(a)]
\item a canonical colored $\alpha$-partition of the vertex set, or 
\item a canonically embedded nontrivial Johnson scheme
  on a subset $W\subseteq\Omega$ of size $|W|\ge \alpha n$.
\end{enumerate}
(The time bounds do not depend on $\alpha$.)
\end{theorem}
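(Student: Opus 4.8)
The plan is to chain the two combinatorial tools just developed: the Design Lemma (Theorem~\ref{thm:design}) reduces the $k$-ary structure $\xxx$ to the binary world --- returning either a colored partition, which is already what we want, or a large uniprimitive coherent configuration --- and Theorem~\ref{thm:UPCC} (UPCC Split-or-Johnson) then processes that UPCC. Concretely, I would first apply Theorem~\ref{thm:design} to $\xxx$ with threshold parameter \emph{exactly} $\alpha$: its hypotheses hold since the relative strong symmetry defect of $\xxx$ is $>1-\alpha\ge 1-\alpha$ and $2\le k\le n/4\le n/2$, and it costs $n^{O(k)}$, producing a sequence $S$ of at most $k-1$ vertices (a multiplicative cost of $n^{O(k)}$ from the choice of $S$) so that, after individualizing $S$, either (a) we obtain a canonical (relative to $S$) colored $\alpha$-partition of $\Omega$ --- in which case we are already in case (a) of the Extended Design Lemma --- or (b) we obtain a canonically (relative to $S$) embedded UPCC $\xxx^*$ on a set $W\subseteq\Omega$ with $|W|\ge\alpha n$. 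The point of using threshold exactly $\alpha$ is that the partition output in (a) is then a genuine $\alpha$-partition, with all blocks of size $\le\alpha n$, rather than something weaker.

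In case (b) write $m=|W|$. If $m=\alpha n$ (possible only when $\alpha n\in\Z$), then the two-coloring $(W,\Omega\setminus W)$ is already a canonical (relative to $S$) colored $\alpha$-partition, since $\alpha\ge 3/4$ forces $|W|=\alpha n\le\alpha n$ and $|\Omega\setminus W|=(1-\alpha)n\le\alpha n$; this is case (a) and we are done. Otherwise $m>\alpha n$, so $\beta:=\alpha n/m$ satisfies $\beta<1$ and $\beta\ge\alpha n/n=\alpha\ge 3/4>2/3$, hence $\beta\in[2/3,1)$. I would then apply Theorem~\ref{thm:UPCC} to $\xxx^*$ with threshold $\beta$, at a quasipolynomial multiplicative cost $q(n)$; the two multiplicative costs multiply, giving the claimed total $q(n)\,n^{O(k)}$, and the composite construction is canonical because a composition of canonical constructions is canonical (functoriality, Section~\ref{sec:functor}). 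Theorem~\ref{thm:UPCC} returns either (a$'$) a canonical colored $\beta$-partition of $W$, whose blocks then have size $\le\beta m=\alpha n$, or (b$'$) a canonically embedded nontrivial Johnson scheme on a subset of $W$ of size $\ge\beta m=\alpha n$. Outcome (b$'$) is exactly conclusion (b) of the Extended Design Lemma, as that subset already lies in $\Omega$, so no further work is needed there.

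It remains to turn outcome (a$'$) into a colored $\alpha$-partition of all of $\Omega$. I would keep the $\beta$-partition of $W$ and adjoin, as additional color classes, each remaining vertex-color class of the individualized-and-refined configuration $\xxx_{(S)}$ (each declared a single block), together with the $\le k-1$ vertices of $S$ as singleton classes; admissibility survives because every added class of size $\ge 2$ is its own block, and the new blocks have size at most $|\Omega\setminus W|\le(1-\alpha)n\le n/4<\alpha n$ (here the bound $k\le n/4$, as opposed to $n/2$ in the Design Lemma, provides the slack in the size estimates). The result is then a canonical colored $\alpha$-partition of $\Omega$: conclusion (a). The step I expect to require the most care is not any single reduction but the arithmetic of the size thresholds --- running the Design Lemma at threshold $\alpha$ and Theorem~\ref{thm:UPCC} at threshold $\alpha n/|W|$ is precisely what lets the two reductions compose with no loss --- together with the (easy but necessary) boundary case $|W|=\alpha n$ handled above; verifying that Theorem~\ref{thm:UPCC}'s cost, being independent of its threshold, keeps the overall bound independent of $\alpha$ is the last bookkeeping item.
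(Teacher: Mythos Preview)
Your proposal is correct and is precisely the approach the paper takes: the paper simply states that the Extended Design Lemma is obtained by ``combining the Design Lemma and Theorem~\ref{thm:UPCC}'' and gives no further detail, so you have faithfully fleshed out that combination, including the threshold rescaling $\beta=\alpha n/|W|$ and the extension of the $\beta$-partition of $W$ to an $\alpha$-partition of $\Omega$ via the remaining color classes of $\xxx_{(S)}$. One minor remark: your parenthetical that ``the bound $k\le n/4$\dots provides the slack in the size estimates'' is misplaced---the size bound on the leftover classes comes from $|\Omega\setminus W|\le(1-\alpha)n$ and $\alpha\ge 3/4$, not from $k$---but this does not affect the argument.
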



\subsection{Minor subroutines}  \label{sec:minor}
First we describe a reduction of Theorem~\ref{thm:UPCC} 
to Theorem~\ref{thm:bipartite}.  The procedure will also serve
as a subroutine to the algorithm for Theorem~\ref{thm:bipartite}.

\begin{lemma}[UPCC-to-bipartite]  \label{lem:UPCC-to-bipartite}
Let $\xxx=(V;\calR)$ be a UPCC
with $n$ vertices and let $2/3 \le \beta \le 1$ be a threshold parameter.
Then at a multiplicative cost of\, $\le n$ and polynomial additive cost
one can either
\begin{enumerate}[(i)]
  \item achieve objective~\eqref{item:CCpartition} 
        of Theorem~\ref{thm:UPCC}, or
  \item reduce the given instance of Theorem~\ref{thm:UPCC} 
        to Theorem~\ref{thm:bipartite}
        by computig a threshold parameter $\alpha\ge 2/3$
        and a (relative) canonically embedded semiregular
        bipartite graph $X=(V_1,V_2;E)$ with $V_1\cup V_2\subseteq V$,\
        and $|V_1|\ge \beta n$
        such that a solution to
        each part of Theorem~\ref{thm:bipartite} for $X$
        is also a solution to the corresponding part of
        Theorem~\ref{thm:UPCC} for $\xxx$.
\end{enumerate}
\end{lemma}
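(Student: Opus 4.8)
The only ``multiplicative cost'' spent will be the individualization of a single vertex $v\in V$ (running over all $n$ choices of $v$ and coordinating the outcomes through the categorical machinery of Section~\ref{sec:functor} makes everything canonical relative to $v$). Fix $v$, individualize it, apply classical Weisfeiler--Leman refinement (Sec.~\ref{sec:WL}), and let $\xxx_{(v)}$ be the resulting coherent configuration; its vertex-colour classes (``cells'') refine $\{v\}\cup\{N_i(v)\mid i\ \text{off-diagonal}\}$, where $N_i(v)=\{w\mid c_\xxx(v,w)=i\}$. If every cell of $\xxx_{(v)}$ has size $\le\beta n$, the vertex-colouring of $\xxx_{(v)}$ with the trivial partition of each colour class is an admissible colored $\beta$-partition (Def.~\ref{def:partition}), canonical relative to $v$; return it, achieving objective~(i). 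Otherwise, since $\beta\ge 2/3$, there is a \emph{unique} cell $W$ with $|W|>\beta n$, and $v\notin W$. Put $V_1:=W$ and $\alpha:=\max\{2/3,\ \beta n/|V_1|\}$. Because $|V_1|>\beta n$ and $\beta\ge 2/3$ (so $(2/3)|V_1|\le(2/3)n\le\beta n$), one has $\alpha\in[2/3,1)$ and $\alpha|V_1|=\beta n$. This is exactly the bookkeeping needed for solutions to transfer back: a colored $\alpha$-partition of $V_1$, extended by declaring every cell of $\xxx_{(v)}$ contained in $V\setminus V_1$ a single block, is a colored $\beta$-partition of $V$ (each block is $\le\alpha|V_1|=\beta n$, or lies in $V\setminus V_1$, which has size $<(1-\beta)n\le n/3<\beta n$); and a nontrivial Johnson scheme on a subset of $V_1$ of size $\ge\alpha|V_1|=\beta n$ is such a scheme on a subset of $V$ of size $\ge\beta n$.

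\textbf{Producing the bipartite graph.} Inspect $\xxx_{(v)}[W]$, the subconfiguration induced on $W$; it is homogeneous and coherent. If it is imprimitive, split $W$ into the connected components of a disconnected off-diagonal constituent: by Prop.~\ref{prop:CChomog} these equipartition $W$ into at least two blocks of size $\ge 2$, each of size $\le|W|/2\le n/2<\beta n$, and together with the cells of $\xxx_{(v)}$ outside $W$ this is again a colored $\beta$-partition of $V$ (objective~(i)). So assume $\xxx_{(v)}[W]$ is primitive. For each cell $W'\ne W$ of $\xxx_{(v)}$, examine the colours of $\xxx_{(v)}$ occurring on $W\times W'$; by vertex-colour awareness (Obs.~\ref{obs:vx-color}) each such colour $\ell$ satisfies $R_\ell\subseteq W\times W'$. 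If some $W'$ carries at least two such colours, fix one of them $\ell$, so $R_\ell$ is neither empty nor all of $W\times W'$; then $X:=(V_1,V_2;E,f)$ with $V_2:=W'$, $E:=R_\ell$ and $f$ the trivial bipartition colouring is a nontrivial bipartite constituent of the coherent configuration $\xxx_{(v)}[W\cup W']$, hence semiregular by Prop.~\ref{prop:CCbipartite}. Here $|V_2|\le n-|V_1|<(1-\beta)n\le n/3<\beta n=\alpha|V_1|$, and by Prop.~\ref{prop:semiregular-defect} the (relative) symmetry defect of $X$ on $V_1$ is $\ge 1/2\ge 1-\alpha$. Thus $X$, with this $\alpha$, is a legitimate instance of Theorem~\ref{thm:bipartite}, and by the previous paragraph any solution of Theorem~\ref{thm:bipartite} for $X$ pulls back to a solution of Theorem~\ref{thm:UPCC} for $\xxx$. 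Return $X$ and $\alpha$: objective~(ii) is met.

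\textbf{The one remaining case, and the main obstacle.} It remains to rule out the ``bad case'': $\xxx_{(v)}[W]$ is primitive and for every cell $W'\ne W$ exactly one colour of $\xxx_{(v)}$ occurs on $W\times W'$, i.e.\ $W\times W'$ is a single (complete) bipartite constituent; equivalently, every vertex outside $W$ is attached uniformly to all of $W$. (As a sanity check one already sees, via Prop.~\ref{prop:digraph-indep}, that in this situation every off-diagonal colour $\rho$ of $\xxx$ must ``appear inside $W$'' in at least one direction, since otherwise $W$ would be an independent set of size $>n/2$ in the nonempty regular graph $X_\rho\cup X_{\rho^-}$.) The crux --- and the step I expect to cost the most --- is to show that in the bad case $\xxx_{(v)}[W]$ is forced to be the \emph{clique configuration}. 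The idea: the only possible ``seeds'' of asymmetry among the vertices of $W$ are the vertex $v$ and the cells outside $W$; but $v$ is attached uniformly to $W$ (all of $W$ lies in one neighbourhood $N_{i_0}(v)$) and the outside cells are attached uniformly in the bad case, so a round-by-round analysis of WL refinement shows that no ordered pair inside $W$ is ever split, whence WL collapses $\xxx_{(v)}[W]$ to the clique. Granting this, a direct check --- using the clique structure inside $W$ together with the uniform attachment of everything outside --- gives $\sym(W)\le\aut(\xxx_{(v)})$. Since $\aut(\xxx_{(v)})=\aut(\xxx)_v\le\aut(\xxx)$, the pairs of $W$ are strong twins of $\xxx$, so by strong-twin equipartition (Cor.~\ref{cor:st-equi}) the unique vertex-colour class $V$ of the homogeneous configuration $\xxx$ is equipartitioned into equal-size strong-twin classes, one of which contains $W$ and hence has size $>n/2$; this forces a single class, so $\sym(V)\le\aut(\xxx)$ and $\xxx$ is the trivial configuration, contradicting that $\xxx$ is a UPCC (rank $\ge 3$, Def.~\ref{def:UPCC}). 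Hence the bad case never occurs, and the procedure always returns objective~(i) or~(ii), at multiplicative cost $\le n$ and polynomial additive cost. All the remaining pieces --- the size inequalities and the coherence facts (Obs.~\ref{obs:vx-color}, Obs.~\ref{obs:deg-aware}, Props.~\ref{prop:CCbipartite}, \ref{prop:CChomog}, Cor.~\ref{cor:st-equi}) --- are routine.
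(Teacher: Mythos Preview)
Your overall plan (individualize a vertex, find the large color class, extract a bipartite piece to a small class) and your bookkeeping with $\alpha=\beta n/|V_1|$ are correct and match the paper's.

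The gap is in the ``bad case.'' You claim that if every outside cell is attached to $W$ by a single color then $\xxx_{(v)}[W]$ is forced to be the clique, but your own sanity check refutes this: since every off-diagonal color of $\xxx$ appears inside $W$ (Prop.~\ref{prop:digraph-indep}, as you note) and $\xxx$ has rank $\ge 3$, at least two distinct off-diagonal $\xxx$-colors already occur on $W\times W$. Weisfeiler--Leman only \emph{refines} colors, never merges them, so $\xxx_{(v)}[W]$ inherits these distinctions and has rank $\ge 3$---it is a UPCC, not a clique, and your contradiction evaporates. Your ``seeds of asymmetry'' heuristic addresses only whether individualizing $v$ causes WL to split pairs in $W$ \emph{further}; it cannot erase the pre-existing $\xxx$-colors on $W$.

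The paper closes this by a more direct route that avoids WL entirely. It works with the raw neighborhoods $C_i=\{y:(x,y)\in R_i\}$ of the individualized vertex $x$; once $d_2>\beta n\ge 2n/3$, the graph $(V,R_2)$ is undirected and its irreflexive complement has diameter $2$ (\cite[Prop.~4.10]{uniprimitive}). Hence for any $z\in C_2$ there exists $y$ with $(x,y)\in R_i$ and $(z,y)\in R_j$ for some $i,j\ge 3$; then $y\in C_i$, and $X=(C_2,C_i;\,R_j\cap(C_2\times C_i))$ is nonempty (witness $(z,y)$), not complete (since $d_j<n/3<d_2=|C_2|$), and semiregular by Prop.~\ref{prop:CCbipartite}. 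The diameter-$2$ property is precisely the ingredient that guarantees a non-uniform bipartite piece always exists---exactly what your argument needed but did not supply---and with it the WL step and the imprimitive/primitive case split on $\xxx_{(v)}[W]$ become unnecessary.
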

\begin{proof}
Let $\xxx=(V;R_1,\dots,R_r)$ where $R_1=\diag(V)$ is the diagonal.
Let $d_i$ be the out-degree of the vertices in $R_i$; so $d_1=1$.
Pick a vertex $x\in V$.
Let $C_i=\{y\in V\mid (x,y)\in R_i\}$; so $|C_i|=d_i$.
Individualize $x$; this splits
$V$ into the (relative) canonical subsets $C_i$.
(See the definition of relative canonicity in Sec.~\ref{sec:functor}.)
If $d_i\le \beta n$ for all $i$, we are done
(objective~\eqref{item:CCpartition} has been achieved).

Assume now that (say) $d_2 > \beta n$; so $(V,R_2)$ is
an undirected graph (since $d_2\ge n/2$) and its complement has 
diameter 2 (\cite[Prop. 4.10]{uniprimitive}).  Let $(x,z)\in R_2$ and let 
$y\in V$ be such that $(x,y)\in R_i$ and 
$(z,y)\in R_j$ where $i,j\ge 3$.  Consider the
bipartite graph $X=(C_2,C_i;E)$ where $E=(C_2\times C_i)\cap R_j$.

$X$ is a semiregular (Prop.~\ref{prop:CCbipartite})
bipartite graph with $|C_2|>\beta n\ge 2n/3$
and therefore $|C_i|<n/3 <|C_2|/2$.
We have $E\neq\emptyset$ since $(z,y)\in E$.   The degree
of $y\in C_i$ is $d_j<n/3<2n/3\le d_2$ and therefore $E$ is not complete,
\ie, $E\neq C_2\times C_i$.   
It follows that in each part, the relative symmetry defect of $X$ 
is $\ge 1/2$ (Prop.~\ref{prop:semiregular-defect}).

Let now $\alpha = \beta n/d_2$.  So $\alpha > \beta \ge 2/3$.

If the relative symmetry defect of $X$ in $C_2$ is between $1/2$ and 
and $\alpha$ then we have a 
canonical colored $\beta$-partition of $V_1$
(the nontrivial twin equivalence classes of $X$, one block for the 
vertices in $C_2$ without twins, and one block $V_1\setminus C_2$).

Else, apply Theorem~\ref{thm:bipartite} to $X$ to obtain either
obtain a canonical colored $\alpha$-partition of $C_2$ (and thereby a
canonical colored $\beta$-partition of $V_1$ as above)
or the embedded nontrivial Johnson scheme of the required size.
\end{proof}

Our next routine takes a colored bipartite graph $X=(V_1,V_2;E)$
and helps make $V_2$ homogeneous.  Recall that we say that $x,y\in V_1$ are
\emph{twins} if the transposition $\tau=(x,y)$ is an automorphism
of $X$, \ie, if $x$ and $y$ have the same neighborhood.

\mn
{\sf Procedure Reduce-Part2-by-Color}
\begin{itemize}
\item[Input:] A threshold parameter $\alpha$, $2/3\le\alpha<1$ \\
       a colored bipartite graph $X=(V_1,V_2;E,f)$ where $|V_1|\ge 3$
       and $|V_2|<\alpha|V_1|$ such that there are no twins in $V_1$; \\
       a partition $V_2=C_1\cup C_2$ where each $C_j$ is a union of
       color classes.
\item[Output:] $j\in\{1,2\}$ such that in the induced colored bipartite
        subgraph $X_j=X[V_1,C_j]$ the symmetry defect of $V_1$
        is $\ge 1-\alpha$
\end{itemize}
The procedure computes the symmetry defect of $V_1$ in each $X_j$.

\begin{lemma}
In at least one of $X_1$ and $X_2$, the symmetry defect of $V_1$ is
at least $1-\alpha$.
\end{lemma}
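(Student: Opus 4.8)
The plan is to argue by contradiction: suppose that in \emph{both} $X_1$ and $X_2$ the symmetry defect of $V_1$ is $< 1-\alpha$, i.e.\ each $X_j$ has a symmetrical subset $T_j\subseteq V_1$ with $|T_j| > \alpha|V_1|$. A symmetrical subset in $X_j$ means all vertices of $T_j$ have the same color and the same neighborhood \emph{within $C_j$}. Since $|T_1| + |T_2| > 2\alpha|V_1| \ge (4/3)|V_1| > |V_1|$, the intersection $T := T_1\cap T_2$ is nonempty; in fact $|T| > (2\alpha-1)|V_1| \ge |V_1|/3 \ge 1$. First I would note that we may even arrange $|T|\ge 2$: since $\alpha\ge 2/3$ we get $|T| > (2\alpha-1)|V_1|\ge |V_1|/3\ge 1$, and because $|V_1|\ge 3$ this forces $|T|\ge 2$ (if $|V_1|=3$ then $|T|>1$ so $|T|\ge 2$; for larger $|V_1|$ it is immediate). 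Actually, to be safe, I would instead take $T_j$ to be a \emph{maximal} symmetrical subset and track colors carefully, but the size count already delivers two common vertices.

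Next, the key step: any two vertices $x,y\in T$ are twins in the \emph{full} graph $X$. Indeed, $N_X(x) = (N_X(x)\cap C_1)\cup(N_X(x)\cap C_2)$, and since $x,y\in T_1$ they have the same neighborhood in $C_1$, while since $x,y\in T_2$ they have the same neighborhood in $C_2$; hence $N_X(x) = N_X(y)$. They also have the same color $f(x)=f(y)$: being in the symmetrical set $T_1$ forces them to share a color. Therefore $x$ and $y$ are twins in $X$ in the sense of the colored-bipartite-graph definition. But $|T|\ge 2$, so $T$ is a symmetrical subset of $V_1$ in $X$ of size $|T| > (2\alpha-1)|V_1| > 0$ — and here I would push for the contradiction with the hypothesis that \emph{there are no twins in $V_1$} (stated in the Input to {\sf Procedure Reduce-Part2-by-Color}). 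Since $|T|\ge 2$, the existence of any such $T$ already contradicts the ``no twins in $V_1$'' assumption, completing the proof.

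The main obstacle I anticipate is bookkeeping around colors and the precise definition of ``symmetrical subset'' in the vertex-colored bipartite setting: one must make sure that a symmetrical subset of $V_1$ in $X_j$ genuinely forces a common color (not merely a common neighborhood in $C_j$), because $X_j$ inherits the coloring $f$ restricted appropriately, and that the union $T_1\cup T_2$ lying inside the single part $V_1$ is legitimate (it is, since $C_1,C_2$ partition $V_2$, not $V_1$). A secondary subtlety is the quantitative step $|T_1|+|T_2|>|V_1|$: it uses $2\alpha > 1$, which holds since $\alpha\ge 2/3$, so the intersection is forced; and the refinement $|T|\ge 2$ uses $(2\alpha-1)|V_1| > 1$, which holds because $|V_1|\ge 3$ and $2\alpha-1\ge 1/3$. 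Once these are nailed down, the argument is a one-line counting contradiction, so I would keep the write-up short and emphasize the neighborhood-decomposition identity $N_X(x) = (N_X(x)\cap C_1)\sqcup(N_X(x)\cap C_2)$ as the crux.
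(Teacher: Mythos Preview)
Your proposal is correct and follows essentially the same argument as the paper's proof: assume for contradiction that both $X_j$ admit symmetrical subsets $T_j\subseteq V_1$ of size $>\alpha|V_1|$, observe that vertices in $T_1\cap T_2$ are twins in the full graph $X$ via the neighborhood decomposition $N_X(x)=(N_X(x)\cap C_1)\cup(N_X(x)\cap C_2)$, and derive a contradiction with the ``no twins in $V_1$'' hypothesis using $|T_1\cap T_2|>(2\alpha-1)|V_1|\ge|V_1|/3$ and $|V_1|\ge 3$. The paper phrases the endgame contrapositively (no twins forces $|T_1\cap T_2|\le 1$, hence $|V_1|\le 2$), but the content is identical.
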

\begin{proof}
Let $n_1=|V_1|$.  Assume for a contradiction that for $j=1,2$ there
exists a subset $D_j\subseteq V_1$ of size $|D_j|> \alpha n_1$ that is
symmetrical in $X_j$.  This means all
vertices in $D_j$ are twins with respect to $C_j$; therefore all
vertices in $D_1\cap D_2$ are twins in $X$.  Since $X$ has no twins,
we infer $|D_1\cap D_2|\le 1$.  But $|D_1\cap D_2| > (2\alpha-1)n_1
\ge n_1/3$, so $n_1\le 2$, a contradiction.
\end{proof}


\subsection{Bipartite Split-or-Johnson}  \label{sec:BPsplit-or-johnson}

In this section and the two subsequent sections we prove
Theorem~\ref{thm:bipartite}.

\begin{proof}
We use the notation of Theorem~\ref{thm:bipartite}.
Let $n_i=|V_i|$.  We view $X$ as a vertex-colored graph where the 
vertex-colors discriminate between $V_1$ and $V_2$.  
We may assume at all times that 
$|E|\le |V_1||V_2|/2$ (otherwise take the bipartite complement).
$E$ is not empty because of the positive symmetry defect assumption.

In this proof we say that $x,y\in V_1$ are \emph{twins} if the
transposition $\tau=(x,y)$ belongs to $\aut(X)$, \ie, $x$ and $y$
have the same neighborhood in $X$.   This is the same as being
\emph{strong twins} according to definition~\ref{def:twins}.
We note that in a bipartite graph, weak twins are automatically
strong twins (Prop.~\ref{prop:weak-is-strong}), so there is no need 
for this distinction.

Here is the algorithm.

\mn
{\sf Procedure Bipartite Split-or-Johnson}

\mn
\begin{tabbing} mmmmmm \= m \= m \= m \= m \kill \\
Input: \> a threshold parameter $3/4\le \alpha < 1$ \\
       \> a vertex-colored bipartite graph $X=(V_1,V_2;E,f)$ such that \\
       \>\> $|V_2| < \alpha|V_1|$ and the symmetry defect of $X$ on
              $V_1$ is at least $1-\alpha$ \\
Output: \> Output: item~\eqref{item:BPpartition} or~\eqref{item:BPjohnson}
        of Theorem~\ref{thm:bipartite}.
\end{tabbing}


\begin{enumerate}

\item If $n_1\le C_0$ for some absolute constant $C_0$,
individualize $(1-\alpha) n_1$ vertices of $V_1$, exit \\
(: objective~\eqref{item:BPpartition} achieved :)
\item If $n_2\le q(n_1)$ for some specific
quasipolynomial function $q$ then individualize
all vertices of $V_2$, apply naive vertex refinement, return
colored partition of $V_1$, exit \\
Claim.  This is a colored $\alpha$-partition.
\begin{proof}
All vertices of the same color in $V_1$ are twins.
\end{proof}
\item Apply WL refinement to $X$.  Let
$\xxx=(V;R_1,\dots,R_r)$ denote the resulting CC and let $\xxx_i$ be
the subconfiguration induced by $V_i$.  
Let $\xxx_3=(V_1,V_2; R_i \mid R_i\subseteq V_1\times V_2)$. \\
(: $\xxx_1$ and $\xxx_2$ are coherent; $\xxx_3$ is a refinement
of $E$ :)  \\
(: $\aut(X)=\aut(\xxx_3)$ because $\aut(X)\le\aut(\xxx_3)$
by the canonicity of $\xxx_3$; and $\aut(\xxx_3)\le\aut(X)$
because $\xxx_3$ is a refinement of $E$. :)
\item
If all color classes in $V_1$ have size $\le \alpha n_1$ then
return the colored partition of $V_1$, exit \\
(: objective~\eqref{item:BPpartition} achieved :)
\item 
Let $W_1\subseteq V_1$ be a color class such that $|W_1|>\alpha n_1$.
Update $\alpha\leftarrow \alpha |V_1|/|W_1|$,\
$V_1\leftarrow W_1$, $X$ and $\xxx_i$: the induced substructures
on $W_1\cup V_2$ \quad (: $\xxx_1$ is homogeneous :)
\item
If there are twins in $V_1$, let $\{C_1,\dots,C_k\}$ be the 
twin-equivalence classes.  \\
(: This is an equipartition because of the homogeneity of $\xxx_1$;
and each class has $\ge 2$ elements by definition. :)\\
Return this partition, exit. (: canonical colored $1/2$-partition found :) \\
(: To verify this, we need $k\ge 2$.  Indeed if $k=1$,
all elements of $V_1$ (previously $W_1$) are twins now,
but then they were twins before the update, contradicting 
the assumption on symmetry defect. :)

\item \label{item:nonhomogeneous} 
      (: There are no twins in $V_1$ :) \\
      If $\xxx_2$ is not homogeneous, partition $V_2$ as
      $V_2=C_1\dot\cup C_2$ where the $C_i$ are nonempty unions 
      of color classes.  Apply
      procedure {\sf Reduce-Part2-by-Color} (Sec.~\ref{sec:minor}).
      The procedure selects $j\in\{1,2\}$.  Update $X$ and
      $\xxx_i$ to their induced subconfigurations on 
      $V_1\cup C_i$ \\ 
      (: The symmetry defect of $\xxx$ on $V_1$ is $\ge 1/3 > 1-\alpha$ :) \\
\item  \label{item:cases}
      (: Both $\xxx_1$ and $\xxx_2$ are homogeneous and there are
         no twins in $V_1$ :) \\
      We need to consider the following cases:
      \begin{enumerate}[(i)]
      \item $\xxx_2$ is imprimitive: Section~\ref{sec:imprimitive} 
      \item $\xxx_2$ is primitive but not uniprimitive, \ie,
          $\xxx$ is the clique configuration (has rank 2): 
            ``block design case,'' Section~\ref{sec:blockdesign}
      \item $\xxx_2$ is uniprimitive but not known to be
             a Johnson scheme: Section~\ref{sec:UPCC9}
      \item \label{item:subJohnson}
            $\xxx_2$ is a Johnson scheme: Section~\ref{sec:BPjohnson}
      \end{enumerate}
\end{enumerate}
\end{proof}

\begin{remark}
An explanation to subitem (iii) of item~\eqref{item:cases}.  When a
UPCC is received, we could determine in polynomial time whether or not
it is a Johnson scheme, and if so, find an isomorphism to a Johnson
scheme.  But we don't investigate; the information that a given UPCC
is \emph{not} a Johnson scheme seems useless at this point (although
conjecturally it could be helpful\footnote{See the last question in
  Sec.~\ref{sec:conclusions}}.).  We only land in case (iv) when the
algorithm receives a UPCC explicitly labeled as a Johnson scheme.
This happens in Case 1 of the ``Block design case,''
Sec.~\ref{sec:blockdesign}.  In this case the embedded Johnson scheme
will be received along with an explicit isomorphism with some Johnson
scheme $\jjj(m,t)$.
\end{remark}

\subsection{Measures of progress}

Throughout the process, $n_1=|V_1|$ will not increase.
We say that a parameter $m$ is \emph{significantly reduced}
if $m_{\new} \le 0.9 m_{\old}$.
We deem to have made major progress if any of the following 
occurs:
\begin{itemize}
   \item $n_2$ is significantly reduced
   \item $\xxx_2$ moves from clique to UPCC while $n_2$ does not increase
   \item $\xxx_2$ moves from UPCC to Johnson scheme 
                while $n_2$ does not increase
\end{itemize}

\subsection{Imprimitive case}  \label{sec:imprimitive}

\begin{itemize}
\item[Case:] $\xxx_2$ is a homogeneous, imprimitive coherent configuration;
$\xxx_1$ is homogeneous, and there are no twins in $V_1$.
\end{itemize}
\begin{lemma}  \label{lem:imprimitive}
Under the assumptions of item~\ref{item:cases} in 
{\sf Procedure Bipartite Split-or-Johnson} we can
either return a canonical colored $1/2$-partition of $V_1$
at a multiplicative cost of $< n_2$, or
return, at only additive polynomial cost (no
multiplicative cost) a canonical bipartite graph 
$Y=(V_1,W_2;F)$ such that $|W_2|\le |V_2|/2$
such that the symmetry defect of $V_1$ in $Y$
is $\ge 1/2$.
\end{lemma}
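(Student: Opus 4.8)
\noindent
The plan is to exploit the imprimitivity of $\xxx_2$ to pass to a canonical nontrivial equipartition of $V_2$, and then either to \emph{contract} the blocks into a much smaller right‑hand side, or, when contraction carries no information, to single out one block.

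First I would extract the equipartition. Since $\xxx_2$ is a homogeneous, imprimitive coherent configuration on $V_2$, it has a disconnected off‑diagonal constituent; picking the one of smallest color makes this choice canonical (the colors of $\xxx$ are matched across the relevant category by the canonical WL refinement performed in step~3 of {\sf Procedure Bipartite Split-or-Johnson}). By Proposition~\ref{prop:CChomog} the connected components of this constituent equipartition $V_2$ into blocks $B_1,\dots,B_s$ of common size $b=n_2/s$. Both $b\ge 2$ and $s\ge 2$: a nonempty homogeneous constituent has no isolated vertex (Observation~\ref{obs:deg-aware}), so each component has size $\ge 2$, and ``disconnected'' forces $s\ge 2$; hence $\calB=\{B_1,\dots,B_s\}$ is a canonical equipartition of $V_2$ with $2\le s\le n_2/2$. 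Recall also that the working coherent configuration $\xxx$ refines $E$, that $V_1$ and $V_2$ are homogeneous color classes of $\xxx$, and that, since $|E|\le n_1n_2/2$, the relation $E$ is neither empty nor complete on $V_1\times V_2$; thus $\xxx$ has at least one nonempty bipartite edge‑constituent and at least one nonempty bipartite non‑edge‑constituent.

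Next I would contract. For each bipartite constituent $R_j\subseteq V_1\times V_2$ of $\xxx$, form the bipartite graph $Y_j=(V_1,\calB;F_j)$ with $(v,B_i)\in F_j$ exactly when $v$ has an $R_j$‑neighbor in $B_i$. By Corollary~\ref{cor:contract} each $Y_j$ is semiregular, and each is canonical, being built from the canonical $\calB$ and the canonical constituent $R_j$; whether some $Y_j$ is nontrivial is decided in polynomial time. If it is, Proposition~\ref{prop:semiregular-defect} gives that the symmetry defect of $V_1$ in $Y_j$ is $\ge 1/2$, and since $|\calB|=s\le n_2/2$, returning $Y=Y_j$ with $W_2=\calB$ (at only polynomial additive cost) realizes the second alternative of the lemma.

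The remaining case is the heart of the matter: all $Y_j$ are trivial. Since a nonempty $R_j$ cannot produce an empty $Y_j$, this means every $Y_j$ coming from a nonempty $R_j$ is \emph{complete}, i.e.\ every $v\in V_1$ has an $R_j$‑neighbor in every block, for every bipartite constituent $R_j$. In particular $v$ has both a neighbor and a non‑neighbor in each block $B_i$, so every induced bipartite graph $X[V_1,B_i]$ is nontrivial; and by coherence (Lemma~\ref{lem:comp-otherside} with $C_1=V_2$, $C_2=V_1$, together with the homogeneity of $V_2$ via Observation~\ref{obs:deg-aware}) each $X[V_1,B_i]$ is semiregular. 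I would now single out one block $B_1$ — a choice among $s\le n_2/2<n_2$ blocks, hence at multiplicative cost $<n_2$ — and color each $v\in V_1$ by the subset $N(v)\cap B_1$; this coloring is canonical relative to the choice of $B_1$. Every color class of size $\ge 2$ is a symmetrical subset of $V_1$ in the nontrivial semiregular graph $X[V_1,B_1]$, hence has size $\le n_1/2$ by Proposition~\ref{prop:semiregular-defect}; taking each color class to be a single block yields an admissible colored $1/2$‑partition of $V_1$, the first alternative of the lemma. The obstacle to anticipate is precisely this last step: one must recognize that the \emph{failure} of the contraction is exactly what forces the bipartite graph to restrict to each individual block nontrivially and semiregularly, so that singling out one block — affordable because the number of blocks is at most $n_2/2$ — already bisects $V_1$.
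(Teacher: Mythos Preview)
Your proof is correct and takes a genuinely different route from the paper's. The paper's {\sf ImprimitiveCase} procedure first checks whether all bipartite colors have $V_2$-side degree $\le n_1/2$ (if so, it individualizes a single vertex of $V_2$); otherwise it fixes the unique high-degree color $j$, tries to restrict to a single block using only the color-$j$ edges, and only if every such block restriction has small symmetry defect does it contract the blocks (using a different color $h$), invoking the auxiliary Lemma~\ref{lem:contracting} to show that this contraction is nontrivial. Your organization is essentially dual: you contract first (separately for every bipartite color), and only when every contraction is the complete bipartite graph do you restrict to a single block---now using the full edge set $E$ rather than a single color---and argue that this restriction is nontrivial and semiregular. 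Your argument for semiregularity of $X[V_1,B_i]$ via Lemma~\ref{lem:comp-otherside} is correct but subtly depends on the ``all contractions complete'' hypothesis: that is exactly what guarantees that every $v\in V_1$ has an $R_j$-neighbor in every block, so that the constant produced by Lemma~\ref{lem:comp-otherside} applies uniformly to all $v$; you identify this dependency correctly. Your version is arguably cleaner---two cases rather than three, and no need for the separate Lemma~\ref{lem:contracting}---and in fact matches the stated cost bounds more precisely: your second alternative (returning a nontrivial $Y_j$, selected as the one of smallest color) genuinely carries no multiplicative cost, whereas the paper's case~2(i) explicitly incurs multiplicative cost $m$ from the choice of block, a minor discrepancy with the lemma statement as written. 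One small caveat: coloring $v$ by the \emph{set} $N(v)\cap B_1$ yields a canonical partition of $V_1$ (into the twin-classes of $X[V_1,B_1]$) but not literally a canonical \emph{coloring}, since subsets of $B_1$ do not form an ordered color set matched across the two inputs; this is easily repaired by, e.g., applying WL after individualizing $B_1$.
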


Let $B_1,\dots,B_m$ be the connected components of a
disconnected non-diagonal color, say $R_2$.
The idea is either to replace $V_2$ by one of the blocks
(reducing $n_2$ to $n_2/m\le n_2/2$) or
to contract each block (reducing $n_2$ to $m\le n_2/2$),
significant progress in each case.  We shall see that one
of these is always possible without reducing the symmetry
defect on $V_1$ below 1/2.

\sn
Let $J = \{c(x,y)\mid x\in V_1, y\in V_2\}$.  Let $d_j$ be
the degree of $y\in V_2$ in color $R_j^-$.   ($d_j$ does not
depend on $y$ because of the homogeneity of $\xxx_2$.)
Note that $|J| \ge 2$ because the coloring of $V_1\times V_2$
is a refinement of $E$; so $d_j < n_1$ for all $j\in J$.

{\sf Procedure ImprimitiveCase}
\begin{enumerate}
\item If  $(\forall j\in J)(d_j\le n_1/2)$ then
      individualize some $x\in V_2$.   This splits
      $V_1$ into color classes of size $d_j$.  Return this
      partition of $V_1$, exit. \\
     (: canonical colored $1/2$-partition of $V_1$ found :)
\item \label{item:lowdegrees}
      else (: for some $j\in J$ we have $d_j >  n_1/2$ :) \\
      For $i=1,\dots, m$ let $Z_i=X(V_1,B_i;R_j)$.
  \begin{itemize}
  \item[(i)] if $(\exists i)($the symmetry defect of $V_1$ in $Z_i$ is 
           $\ge 1/2)$ then $Y\leftarrow Z_i$ \\
         (: This involves choosing $i$ at a multiplicative cost of $m$.
            The gain is a reduction $n_2\leftarrow n_2/m$ :)
  \item[(ii)]  \label{item:contract}
        (: the symmetry defect of $V_1$ in each $Z_i$ is 
             less than $1/2$ :) \\
        Let $h\in J$, $h\neq j$.  Let $Y=(V_1,[m];\Rbar_h)$ where
        $(x,i)\in\Rbar_h$ if $(\exists y\in B_i)((x,y)\in R_h)$ \\
        (: contracting each block, $n_2\leftarrow m$ :) 
  \end{itemize}
       return $Y$
\end{enumerate}

\begin{lemma}   \label{lem:contracting}
In subcase~(ii) of item~\ref{item:lowdegrees}
(contracting the blocks), $V_1$ has symmetry defect $\ge 1/2$ in 
the contracted bipartite graph $Y$.
\end{lemma}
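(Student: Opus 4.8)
The plan is to read the lemma off from two results already at hand. By Corollary~\ref{cor:contract}, applied with $C_1=V_2$, $C_2=V_1$, the disconnected homogeneous constituent $R_2\subseteq V_2\times V_2$ in the role of $X_3$ (its connected components being $B_1,\dots,B_m$), and the bipartite constituent $R_h^{-}\subseteq V_2\times V_1$ in the role of $X_4$, the contracted graph $Y=(V_1,[m];\Rbar_h)$ is \emph{semiregular}. By Proposition~\ref{prop:semiregular-defect}, a semiregular bipartite graph that is neither empty nor complete has symmetry defect $\ge 1/2$ on its first part. So the whole argument reduces to verifying that $Y$ is nonempty and not complete.

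Nonemptiness is immediate: since $h\in J$, the constituent $R_h$ contains a pair, and after WL refinement every bipartite constituent lies in $V_1\times V_2$ (by vertex-color awareness together with the homogeneity of $\xxx_1$ and $\xxx_2$), so some $(x,y)\in R_h$ has $x\in V_1$ and $y\in V_2$; writing $y\in B_i$ gives $(x,i)\in\Rbar_h$. The heart of the proof is ruling out that $Y$ is complete, and this is where the subcase~(ii) hypothesis and the inequality $d_j>n_1/2$ enter. Fix a block $B_i$. By subcase~(ii) the symmetry defect of $V_1$ in $Z_i=X(V_1,B_i;R_j)$ is less than $1/2$, so there is a set $D_i\subseteq V_1$ with $|D_i|>n_1/2$ all of whose vertices have a common $Z_i$-neighborhood $N_i\subseteq B_i$. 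I claim $N_i=B_i$: every $y\in B_i$ has $Z_i$-degree exactly $d_j$, because its $j$-in-neighbors all lie in $V_1$ (bipartite constituents sit in $V_1\times V_2$) and their number depends only on the common vertex-color of $V_2$ by coherence; hence a vertex $y\in B_i\setminus N_i$ would have all of its $j$-in-neighbors inside $V_1\setminus D_i$, forcing $d_j\le |V_1\setminus D_i|<n_1/2$, contrary to hypothesis. Thus every $x\in D_i$ is $j$-adjacent to \emph{all} of $B_i$. Since $|D_i|>n_1/2>0$, choose such an $x$: every edge from $x$ into $B_i$ has color $j\ne h$, so $x$ has no $h$-neighbor in $B_i$, i.e.\ $(x,i)\notin\Rbar_h$. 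Therefore $Y$ is not complete, and Proposition~\ref{prop:semiregular-defect} yields the conclusion.

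The main obstacle is the last step. What makes it work is the observation that subcase~(ii), which at first looks like a weak hypothesis, actually forces more than half of $V_1$ to be completely $j$-joined to each block $B_i$, and that this is incompatible with $Y$ being complete precisely because $h$ and $j$ are \emph{distinct} colors; getting the direction of the inequality $d_j>n_1/2$ to bite against the shape of $Z_i$ is the only delicate point. Everything else—translating the configuration into the notation of Lemma~\ref{lem:comp-neighborhood} so as to invoke Corollary~\ref{cor:contract}, and the bookkeeping that the bipartite constituents of the refined configuration lie in $V_1\times V_2$ with in-degrees determined by the vertex-color—is a routine application of the coherent-configuration machinery of Section~\ref{sec:classical}.
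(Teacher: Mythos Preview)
Your proof is correct and follows essentially the same approach as the paper: both show that $Y$ is semiregular (via Corollary~\ref{cor:contract}), nonempty, and not complete, then invoke Proposition~\ref{prop:semiregular-defect}. Your ``not complete'' step is in fact a cleaner rendition of the paper's Subclaim; where the paper asserts that $C_i\times\{y\}$ is monochromatic and then argues the color must be $j$, you argue directly that the common $Z_i$-neighborhood $N_i$ of the large twin class equals $B_i$ by the degree bound $d_j>n_1/2$, which is exactly the content of that step.
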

\begin{proof}
$Y$ is semiregular by Cor.~\ref{cor:contract}. 
Moreover $\Rbar_h$ is not empty because $R_h$ is not empty.

\sn
Claim. $Y$ is not complete.
\begin{proof}
For each $i\le m$
there is a (unique) $Z_i$-twin equivalence class
$C_i\subseteq V_1$ such that $|C_i|>n_1/2$.

\sn
Subclaim.  $C_i\times B_i\subseteq R_j$.
\begin{proof}
The vertices of $C_i$ are twins in $Z_i$.
In other words, for each $x\in B_i$ the set
$C_i\times \{x\}$ is monochromatic
(has a single color), \ie, $C_i\times \{x\}\subseteq R_{\ell}$
for some $\ell\in J$.  It follows that $d_{\ell} > n_1/2$.
Therefore $\ell=j$, proving the Subclaim.
\end{proof}

Now $Y$ is not complete because it has no edge from $i$ to $C_i$.
\end{proof}

Since $Y$ is semiregular, nonempty and not complete, we infer by 
Prop.~\ref{prop:semiregular-defect} that $Y$ has symmetry defect 
$\ge 1/2$, as claimed.
\end{proof}
This also completes the proof of Lemma~\ref{lem:imprimitive}.

\subsection{Block design case} \label{sec:blockdesign}

\mn
Assumptions: no twins in $V_1$, 
$\xxx_2$ is the clique configuration (rank-2).

Let $\calH=(V_2,\calE)$ be the hypergraph of neighborhoods of
vertices in $V_1$.  This hypergraph has no multiple edges because
there are no twins in $V_1$.

\mn
{\bf Case 1.}\ $\calH$ is the complete $d_1$-uniform hypergraph.

\mn
In this case $V_1$ can be indentified with $V_1=\binom{V_2}{d_1}$,
the vertex set of a canonically embedded Johnson scheme,
achieving goal~\eqref{item:BPjohnson} of Theorem~\ref{thm:bipartite}.
Note that the vertices of this Johnson scheme (elements of $V_1$)
come labeled by the $d_1$-subsets of $V_2$.
With this we not only exit this routine but exit the main algorithm.

\mn
{\bf Case 2.}\ There is an $\calH$-twin equivalence class $C\subseteq V_2$
of size $|C|\ge n_2/2$.  (Note that the vertices of $C$ are not
necessarily twins in $X$.)

Apply procedure {\sf Reduce-Part2-by-Color} to the coloring
$(C,V_2\setminus C)$.  If $V_2\setminus C$ is selected, 
we have made significant progress (reduced $|V_2|$ by half).
If $C$ is selected, the vertices of $C$ continue to be 
twins in the reduced $\calH$ which brings us to Case 1,
terminating the main algorithm.

\mn
{\bf Case 3.}\ The relative symmetry defect of $\calH$ is $\ge 1/2$.
(Note that for hypergraphs we don't need to make the distinction
between strong and weak symmetry defect, Prop.~\ref{prop:weak-is-strong}.)

\mn
{\bf Case 3a.}\ $d_1\le (7/3) \log_2 n_1$.

Apply the Design lemma to $\calH$, viewed as a $d_1$-ary relational
structure.  (Multiplicative cost $n_2^{d_1} < n_2^{(7/3)\log_2 n_1}$).  

\mn
{\bf Case 3a1.}\ The Design lemma returns a 
  canonical colored $3/4$-partition of $V_2$.

\mn
Recompute $\xxx$ and thereby $\xxx_2$.  Colored $(3/4)$-partition on
$V_2$ persists (colors can only get refined).  Apply 
{\sf Reduce-Part2-by-Color} to the coloring.  If color class
selected is greater than $3n_2/4$, this color class is
equipartitioned; apply {\sf Procedure ImprimitiveCase}
(Sec.~\ref{sec:imprimitive}).   In either case, 
significant progress: $n_2$ reduced to $\le 3n_2/4$.

\mn
{\bf Case 3a2.}\ The Design lemma returns a UPCC\ $\yyy$ canonically
embedded on a subset $W\subseteq V_2$ with $|W|\ge (3/4)n_2$.

\mn
Apply {\sf Reduce-Part2-by-Color} to the partition $(W, V_2\setminus W)$.
If the procedure selects $V_2\setminus W$, significant progress
($n_2$ reduced to $\le n_2/4$).  If it selects $W$, go to
Sec.~\ref{sec:UPCC9}.

Let $U\subseteq V_2$
be the part selected, and $(\xxx_2)_{\new}$ the homogeneous coherent 
configuration obtained on $U$.  If $(\xxx_2)_{\new}$ is a UPCC,
exit, significant progress.

If $(\xxx_2)_{\new}$ is not a UPCC, \ie, it has rank 2, then $U$
was a clique in $\yyy$ and therefore $|U|\le |W|/2\le n_2/2$
by Prop.~\ref{prop:digraph-indep},
a significant reduction of $|V_2|$.



\mn
{\bf Case 3b.} $d_1 > 7/3\log_2 n_1$.

Let $t=\lceil(7/4)\log_2 n_1\rceil$.
So $t\le (3/4)d_1$.  
By Lemma~\ref{lem:skeleton-defect}, the symmetry defect of
the $t$-skeleton $\calH^{(t)}$ of $\calH$ is greater than $1/4$.
Let us apply the Design lemma to $\calH^{(t)}$.
The result is either a $3/4$-partition of $V_2$ (done, exit),

\mn
or a UPCC on a subset of $V_2$ of size $\ge (3/4)n_2$
(significant progress, exit).

\subsection{UPCC}    \label{sec:UPCC9}
Situation: $\xxx_2$ is uniprimitive, not known to be a Johnson scheme,
there are no twins in $V_1$

\mn
Apply the procedure of Lemma~\ref{lem:UPCC-to-bipartite}
      (``UPCC-to-Bipartite'') to $\xxx_2$ with $\alpha:=2/3$.
   \begin{enumerate}[(I)]
   \item If the algorithm of Lemma~\ref{lem:UPCC-to-bipartite} 
      returns a canonical colored $2/3$-partition of $V_2$,
      apply procedure {\sf Reduce-Part2-by-Color} (Sec.~\ref{sec:minor})
      to reduce $V_2$ to one of its color classes (:~significant progress~:)
   \item else (: the algorithm of Lemma~\ref{lem:UPCC-to-bipartite} 
      returns a canonically embedded nontrivial semiregular
      bipartite graph $X''=(V_1',V_2';E')$ with 
      $W:=V_1'\cup V_2'\subseteq V_2$
      and $|V_1'|\ge 2/3 |V_2|$. :)  (: Note that $|V_1'|<|V_1|$
      and $|V_2'|\le |V_2|/3$, a significantly smaller instance :)

      Recursively apply Theorem~\ref{thm:bipartite} to $X'$
      with threshold parameter $\alpha = 2/3$.  

      If a canonical colored $2/3$-partition of $V_1'$ is returned, 
      add $V_2'$ and $V_2\setminus W$, separately colored, to it,
      to obtain a canonical colored $2/3$-partition of $V_2$.
      Apply {\sf Reduce-Part2-by-Color} to make significant
      progress (reducing $n_2$ by a factor of $2/3$).
      
      Else (: the algorithm returns a nontrivial Johnson scheme
      on a subset $W$ of $V_2$ of size $\ge 2n_2/3$ :)\\
      apply procedure {\sf Reduce-Part2-by-Color}
      to the canonical 2-coloring
      $(W_2,V_2\setminus W_2)$ of $V_2$ \\
      if procedure {\sf Reduce-Part2-by-Color} selects $V_2\setminus W_2$, 
      update $V_2\leftarrow V_2\setminus W_2$, recurse \\
      (: significant progress: $n_2$ reduced by a factor of $3$ :) \\
      else (: procedure {\sf Reduce-Part2-by-Color} selects $W$ :)\\
      enter subcase (iv) 
      of item~\ref{item:cases}.
   \end{enumerate}

\subsection{Local to global symmetry}
In this section we make preparations to handle the case when
$\xxx_2$ is a Johnson scheme.

\begin{notation}
Let $\calH=(V,\calE)$ be a hypergraph and $x\in V$.
We denote the induced subhypergraph $\calH[V\setminus\{x\}]$
by $\calH - x$.
\end{notation}

\begin{lemma}[Exchange/augment]   \label{lem:stepsym}
Let $\calH=(V,\calE)$ be a $t$-uniform hypergraph. 
Let $V=A\dot\cup B$ where $|B|\ge 1$ and $|A| > (t+1)|B|$.
Assume $\sym(A)\le\aut(\calH)$.
Assume moreover that $(\forall x\in A)(\exists u\in B)$ such that
$\sym(A\cup\{u\}\setminus\{x\})\le\aut(\calH - x)$.  Then 
$(\exists v\in B)(\sym(A\cup\{v\})\le \aut(\calH))$.
\end{lemma}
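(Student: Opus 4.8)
The plan is to exploit the fact that $\sym(A) \le \aut(\calH)$ already gives massive symmetry on $A$, and then use the hypotheses to "transport" one vertex of $B$ into $A$. First I would fix an arbitrary $x_0 \in A$; by hypothesis there is $v \in B$ with $\sym(A \cup \{v\}\setminus\{x_0\}) \le \aut(\calH - x_0)$. Set $A' = A \cup \{v\} \setminus \{x_0\}$, so $\sym(A') \le \aut(\calH - x_0)$. The goal is to upgrade this to $\sym(A \cup \{v\}) \le \aut(\calH)$, i.e., to show that for every edge $E \in \calE$ and every $\pi \in \sym(A \cup \{v\})$ we have $E^\pi \in \calE$. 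It suffices (since $\sym(A\cup\{v\})$ is generated by transpositions moving only points inside $A \cup \{v\}$, together with one transposition $(x_0,y)$ for some $y \in A'$) to verify that every such transposition is an automorphism of $\calH$.

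The transpositions $(a,b)$ with $a,b \in A$ are automorphisms since $\sym(A) \le \aut(\calH)$. The transpositions $(a,b)$ with $a,b \in A'$ are automorphisms of $\calH - x_0$; I need to promote each such transposition to an automorphism of $\calH$, which means controlling edges containing $x_0$. Here is where I would use the counting/averaging idea built into $|A| > (t+1)|B|$: a $t$-uniform edge $E$ containing $x_0$ has $t-1$ other vertices, and since $|A|$ is large compared to $|B|$ and to $t$, I can find plenty of vertices of $A$ not in $E$; swapping $x_0$ out of $E$ for such a vertex $a \in A\setminus E$ using the automorphism $\sigma = (x_0,a) \in \sym(A) \le \aut(\calH)$ produces an edge $E^\sigma \subseteq A'$-ish that is handled by $\sym(A')$-symmetry of $\calH - x_0$, and then swapping back. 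Concretely, to show a transposition $\tau$ of two points of $A'$ preserves an edge $E$ with $x_0 \in E$: pick $a \in A \setminus (E \cup \{$the two points moved by $\tau\})$, which exists because $A$ is huge; then $E^{(x_0,a)} \in \calE$ (by $\sym(A)$), it misses $x_0$ and lies in $V \setminus \{x_0\}$, so $\tau$ fixes it or moves it within the domain of $\sym(A')$; one checks $(E^{(x_0,a)})^\tau = (E^\tau)^{(x_0,a)}$ when $a$ avoids the support of $\tau$, hence $(E^\tau)^{(x_0,a)} \in \calE$, and applying $(x_0,a)$ again (still in $\sym(A)$) gives $E^\tau \in \calE$.

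The one remaining generator is the transposition $(x_0, y)$ itself for a suitable $y \in A'$; I would handle it by the same conjugation trick combined with the hypothesis $\sym(A \cup \{v\} \setminus \{x_0\}) \le \aut(\calH - x_0)$ and the fact that, having shown all of $\sym(A')$ and $\sym(A)$ act, the group generated already contains $\alt(A \cup \{v\})$ on the $(t+2)$-or-more element set $A \cup \{v\}$, so by Proposition~\ref{prop:weak-is-strong} (weak twins are strong twins in hypergraphs) the full symmetric group $\sym(A \cup \{v\})$ acts. I expect the main obstacle to be the bookkeeping in the conjugation argument: making sure that whenever I swap $x_0$ out of an edge I land inside a region where a known symmetry group acts, which is exactly what the quantitative gap $|A| > (t+1)|B|$ is there to guarantee (it ensures enough "room" in $A$ to avoid all the at most $t$ vertices of the current edge plus the at most two vertices moved by the transposition under consideration, even after we have pushed one point into $B$). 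The inequality with the factor $t+1$ rather than just $t$ is the slack needed to run this avoidance argument uniformly over all edges and all generators simultaneously.
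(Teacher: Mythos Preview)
Your approach is correct but takes a genuinely different route from the paper's. The paper uses the hypothesis at \emph{every} $x\in A$: letting $f(x)\in B$ be a witness for $x$, the bound $|A|>(t+1)|B|$ is used via pigeon-hole to produce some $v\in B$ with $|f^{-1}(v)|\ge t+2$. Then for any $x\in A$ and any edge $E$, one picks $y\in f^{-1}(v)\setminus(E\cup\{x\})$ and invokes the hypothesis at $y$ directly: since $(x,v)\in\sym(A\cup\{v\}\setminus\{y\})\le\aut(\calH-y)$ and $E$ is an edge of $\calH-y$, one gets $E^{(x,v)}\in\calE$. No conjugation is used.

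Your argument instead fixes a single $x_0$, takes its witness $v$, and uses conjugation by $(x_0,a)\in\sym(A)\le\aut(\calH)$ to move $x_0$ out of any given edge before applying $\sym(A')\le\aut(\calH-x_0)$. This works, and in fact needs only the hypothesis at one point and only $|A|\ge t+2$. So your reading of the role of $|A|>(t+1)|B|$ is off---in the paper that bound drives the pigeon-hole, not your avoidance count---but this does not damage your proof. Two minor cleanups: (i) once you have established both $\sym(A)\le\aut(\calH)$ and $\sym(A')\le\aut(\calH)$ with $A\cap A'\neq\emptyset$, elementary generation gives $\langle\sym(A),\sym(A')\rangle=\sym(A\cup A')=\sym(A\cup\{v\})$ directly, so there is no ``remaining generator $(x_0,y)$'' to handle and no need to invoke Proposition~\ref{prop:weak-is-strong}; (ii) in your avoidance count, one of the two points moved by $\tau=(a',v)$ is $v\notin A$, so you only need to avoid at most $t+1$ elements of $A$, which is why $|A|\ge t+2$ already suffices.
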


\begin{proof}
For $x\in A$ let $f(x)$ denote an element $u\in B$ such that 
$\sym(A\cup\{u\}\setminus\{x\})\le\aut(\calH - x)$.  
By the pigeon-hole principle there exists $u\in B$ such that 
$|f^{-1}(u)| \ge t+2$.   We claim that $v:=u$ satisfies the conclusion 
of the lemma.

Let $\tau = (x,u)$ denote the transposition that swaps $x$ and $u$.
We need to show that $\tau\in \aut(\calH).$
I.\,e., we need to show that for all $E\in \calE$ we have $E^{\,\tau}\in\calE$.

Let $y\in f^{-1}(u)\setminus (E\cup\{x\})$.
So $\sym(A\cup\{u\}\setminus\{y\})\le \aut(\calH - y)$.
In particular, $\tau\in \aut(\calH - y)$ and therefore
$E^{\,\tau}\in \calE$.
\end{proof}
\begin{remark} It is easy to see that $|A|>|B|t$ would suffice 
in place of $|A|>|B|(t+1)$.
\end{remark}

The next lemma asserts that if every small induced subhypergraph
of a uniform hypergraph has high symmetry then the hypergraph
has extremely high symmetry.
\begin{lemma}[Local to global symmetry]    \label{lem:local-to-global}
Let $s\ge 0$, $t\ge 2$, and \\ $\ell=\max\{t+3, (t+2)(t+3)s\}$.
Let $\calH=(V,\calE)$ be a $t$-uniform hypergraph with $m\ge \ell$ vertices.
Assume that for every subset $L\subseteq V$ of size
$|L|=\ell$ 
the symmetry defect of the induced subhypergraph $\calH[L]$
is at most $s$.  
Then the symmetry defect of $\calH$ is at most $s$.
\end{lemma}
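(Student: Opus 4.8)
The plan is to find, for each vertex $x \in V$, a large symmetrical set in $\calH$ whose complement is small, by patching together the symmetrical sets guaranteed on $\ell$-element subsets and propagating this through the ``exchange/augment'' machinery of Lemma~\ref{lem:stepsym}. First I would dispose of the trivial case: if $s \ge m$ the statement is vacuous, and if $m = \ell$ it is the hypothesis itself, so assume $m > \ell$. The idea is to build a single symmetrical set $A \subseteq V$ with $|V \setminus A| \le s$ by an inductive ``growing'' argument: start from a symmetrical set on some fixed $\ell$-subset $L_0$ (of size $\ge \ell - s$ by hypothesis), and repeatedly enlarge it one vertex at a time using Lemma~\ref{lem:stepsym}, whose hypothesis we verify locally on $\ell$-subsets.

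The core of the argument is the enlargement step. Suppose we currently have a symmetrical set $A$, i.e.\ $\sym(A) \le \aut(\calH)$, with $|A| < m - s$; the goal is to produce a symmetrical set of size $|A| + 1$. Let $B = V \setminus A$, so $|B| \ge s + 1$. I would apply Lemma~\ref{lem:stepsym} to the partition $V = A \dot\cup B$. Its conclusion gives a vertex $v \in B$ with $\sym(A \cup \{v\}) \le \aut(\calH)$, which is exactly the enlargement we want --- provided we can verify the two hypotheses of that lemma: (i) the size condition $|A| > (t+1)|B|$, and (ii) for every $x \in A$ there is $u \in B$ with $\sym(A \cup \{u\} \setminus \{x\}) \le \aut(\calH - x)$. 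For (ii), fix $x \in A$; I want to find such a $u$. Pick any $\ell$-subset $L$ with $x \notin L$, with $|L \cap A|$ as large as possible: since $|A| \ge \ell - s$ (it will always be at least this large during the process) and $x \notin A$-forced only when... here one must be a little careful --- we want $L$ to contain many points of $A \setminus \{x\}$ and at least one point of $B$. By hypothesis $\calH[L]$ has symmetry defect $\le s$, so it has a symmetrical subset $S \subseteq L$ with $|S| \ge \ell - s$; by the pigeonhole-type bound on $\ell$ relative to $s$, $S$ must contain most of $A \cap L$ together with at least one vertex $u \in B \cap L$, and then $\sym(S) \le \aut(\calH[L]) \le \aut(\calH - x)$ after restricting attention to the subhypergraph on $V \setminus \{x\}$ --- the point being that $A \setminus \{x\}$ and this $u$ all lie in $S$ up to a bounded discrepancy, and another application of the same local-symmetry hypothesis (now inside $\calH - x$, on suitable $\ell$-subsets) lets one bootstrap from ``$S$ is symmetrical'' to ``$A \cup \{u\} \setminus \{x\}$ is symmetrical in $\calH - x$.''

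The main obstacle I anticipate is precisely this bootstrapping inside the enlargement step: the local hypothesis only gives symmetrical sets of size $\ell - s$ inside each $\ell$-subset, whereas we need symmetrical sets of size $|A| - O(1)$ which may be much larger than $\ell$; so one cannot read off the needed conclusion from a single $\ell$-subset. The resolution should be an internal induction mirroring Lemma~\ref{lem:stepsym} applied within $\calH - x$: one first establishes, by the same sliding-window pigeonhole argument, that any two vertices of $A$ (or of $A$ together with a suitable $u \in B$) that co-occur in many symmetrical $\ell$-subsets are ``globally'' twins, then upgrades pairwise-twin to the full $\sym(\cdot)$ statement via Prop.~\ref{prop:weak-is-strong} and the observation that a connected family of transpositions generates the full symmetric group. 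The constant $\ell = \max\{t+3, (t+2)(t+3)s\}$ is exactly calibrated so that: the factor $(t+3)$ ensures any $\ell$-subset meeting $B$ still leaves room for a symmetrical core disjoint from any fixed $t$-edge plus the deleted vertex $x$, and the factor $(t+2)(t+3)s$ feeds the pigeonhole bound $|f^{-1}(u)| \ge t+2$ needed in Lemma~\ref{lem:stepsym} while keeping $|A| > (t+1)|B|$ throughout the growth as long as $|B| > s$. I would check these arithmetic inequalities carefully at the end; they are routine but it is where the precise value of $\ell$ earns its keep. Once $|B|$ drops to $s$ the growing stops, and the resulting $A$ witnesses symmetry defect $\le s$ for $\calH$, completing the proof.
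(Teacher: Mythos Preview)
Your growing strategy has a genuine gap, and it is not the one you flag. You propose to apply Lemma~\ref{lem:stepsym} with $A$ the current symmetrical set and $B = V \setminus A$. But hypothesis~(i) of that lemma demands $|A| > (t+1)|B|$. At the start of your process $|A| \approx \ell - s$ while $|B| \approx m - \ell$, and since $m$ can be arbitrarily large relative to $\ell$, the inequality $|A| > (t+1)|B|$ fails outright. Your parenthetical claim that the calibration of $\ell$ ``keeps $|A| > (t+1)|B|$ throughout the growth as long as $|B| > s$'' is simply false: $\ell$ does not depend on $m$.

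The paper avoids this by a different organization. Rather than growing a single symmetrical set inside $V$, it proves by induction on $k = \ell, \ell+1, \dots, m$ the statement $(C_k)$: \emph{every} $k$-subset $K$ has symmetry defect at most $s$. In the step from $k-1$ to $k$, fix $K$ with $|K|=k$; for each $x \in K$ the inductive hypothesis on $K_x = K\setminus\{x\}$ yields a defect set $S_x$ of size $s$. Taking $t+3$ such $S_{x_i}$ and forming $U = \bigcup S_{x_i}$ gives $|U|\le (t+3)s$, and a short argument (your transposition-avoids-an-edge idea) shows $\sym(K\setminus U)\le \aut(\calH[K])$. Now minimize: let $B\subseteq U$ be minimal with $\sym(K\setminus B)\le\aut(\calH[K])$. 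The point is that $|B|\le |U|\le (t+3)s$ regardless of $k$, so the size condition $|K\setminus B| > (t+1)|B|$ reduces to $k > (t+2)(t+3)s$, which is exactly what $\ell$ guarantees. Lemma~\ref{lem:stepsym} then forces $|B|\le s$ by minimality.

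The moral: Lemma~\ref{lem:stepsym} is not an enlargement tool to push $|A|$ up by one against an uncontrolled $B$; it is a \emph{shrinking} tool to push a small $B$ down to size $\le s$. The induction on subset size is what supplies, at every stage, a $B$ that is already small (namely $\le (t+3)s$), and this is the idea your proposal is missing.
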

Note that this lemma talks about the \emph{absolute} (as opposed 
to relative) symmetry defect.

\begin{proof}
Let $k$ be the largest value such that the following holds:

\begin{itemize}
\item[$(C_k)$]  For every subset $K\subseteq V$ of size $|K|=k$
the symmetry defect of the induced subhypergraph $\calH[K]$
is at most $s$.  
\end{itemize}
The assumption is $(C_{\ell})$.  We need to show $(C_m)$.  
We prove $(C_k)$ for $k=\ell,\ell+1,\dots,m$ by induction.

Assume $(C_{k-1})$ for some $k>\ell$.  Let $|K|=k$.
For $x\in K$ set $K_x=K\setminus \{x\}$.  Let
$S_x\subset K_x$ denote a subset of size $|S_x|=s$ such that 
$\sym(K_x\setminus S_x)\le \aut(\calH[K_x])$.

Let $x_1,\dots,x_{t+3}$ be $t+3$ distinct elements of $K$.  
(These exist because $\ell\ge t+3$.)
Let $U=\bigcup_{i=1}^{t+3} S_{x_i}$; so $|U|\le (t+3)s < |K|$.

\mn
Claim 1.\ $\sym(K\setminus U)\le \aut(\calH[K])$.

\begin{proof}
Let $\tau=(u,v)$ be a transposition in $K\setminus U$ ($u,v\in K\setminus U$).
We need to show that $\tau\in\aut(\calH[K])$, \ie,
for all $E\in\calE[K]$ we need to show $E^{\,\tau}\in\calE[K])$.

Pick an $i$ such that $x_i\notin E\cup\{u,v\}$.  So $E\subset K_{x_i}$
and $\tau$ fixes $S_{x_i}$ pointwise (because $S_{x_i}\subseteq U$).
It follows that 
$\tau\in\aut(\calH[K_x])$ and therefore $E^{\,\tau}\in\calE[K]$,
completing the proof of Claim 1.
\end{proof} 

Let now $B$ be a minimal subset of $U$ such that 
$\sym(K\setminus B)\le \aut(\calH[K])$. 

\mn
Claim 2. \ $|B|\le s$.

\mn
\begin{proof}
Assume for a contradiction that $|B|\ge s+1$.  Let $A=K\setminus B$.

\mn
Claim 3.  The assumptions of Lemma~\ref{lem:stepsym} hold with $\calH[K]$
in the place of $\calH$.

\begin{proof}
We have $|A| > (t+1)|B|$ because $|B|\le |U|\le s(t+3)$ and
$|K|>\ell\ge (t+2)(t+3)s\ge (t+2)|B|$.

We have $\sym(A)\le \aut(\calH[K])$ by the definition of $B$.  
Let now $x\in A$.  We need to find $u\in B$ such that 
$\sym(A\cup\{u\}\setminus\{x\})\le \aut(\calH[K_x])$.
We claim that any $u\in B\setminus S_x$ will do.

By assumption, $\sym(K_x\setminus S_x)\le \aut(\calH[K_x])$.
Moreover, $\sym(K_x\setminus B)\le \aut(\calH[K_x])$
because $\sym(K\setminus B)\le \aut(\calH[K])$.
Now $\sym(K_x\setminus S_x)$ and $\sym(K_x\setminus B)$
generate $\sym(K_x\setminus (B\cap S_x))$ because
$K_x\setminus (B\cap S_x)$ is the union of
$K_x\setminus S_x$ and $K_x\setminus B$
and the intersection of these two sets is
nonempty (because $|S_x\cup B| \le (t+4)s < \ell \le k-1 = |K_x|$).

Finally we observe that 
$\sym(K_x\setminus (B\cap S_x))\ge \sym(A\cup\{u\}\setminus\{x\})$
(because $u\notin S_x$).
This completes the proof of Claim 3.
\end{proof}
Now by Lemma~\ref{lem:stepsym} we infer that for some $v\in B$ we have
$\sym(A\cup\{v\})\le\aut(\calH[K])$, contradicting the minimality
of $B$, completing the proof of Claim 2.
\end{proof}
This completes the proof of $(C_k)$, the inductive step in
the proof of Lemma~\ref{lem:local-to-global}.
\end{proof}

\subsection{Bipartite graph with Johnson scheme on small part}
   \label{sec:BPjohnson}    \label{feketeleves}
Situation: $X=(V_1,V_2;E)$ is a semiregular bipartite graph
with $|V_2|<|V_1|$, there are no twins in $V_1$,
and $\xxx_2$ is a Johnson scheme $\jjj(\Gamma,t)$ with $t\ge 2$
(so $V_2=\binom{\Gamma}{t}$).

We write $|\Gamma|=m$,\ $n_i=|V_i|$, and $n=n_1+n_2$.
So $n_2=\binom{m}{t}$ and therefore $t< 2\log n_2/\log m$.

Let $d_i$ be the degree of the vertices in $V_i$ in $X$
(so $d_i|V_i|=|E|$).

The goal is a canonical colored $\alpha$-partition of $V_1$.

We may assume the density of $X$ is $|E|/|V_1||V_2|\le 1/2$
(otherwise take the bipartite complement).

For $v\in V_1$ let $\calE(v)$ denote the set of neighbors of
$v$ in $X$ viewed as a set of $t$-subsets of $\Gamma$.
We call the $t$-uniform hypergraph $\calH(v)=(\Gamma,\calE(v))$ the 
\emph{neighborhood hypergraph} of $v$.

Color $v\in V_1$ by 
$|\calE(v)|$.  If no color-class is greater than $\alpha n_1$,
return this coloring, exit.  Otherwise let $C$ be the largest
color class (so $|C|>\alpha n_1$).  Let $\alpha'=\alpha |V_1|/|C|$.
Let $X'$ be the subgraph of $X$ induced by $(C,V_2)$. 
Update $X\leftarrow X'$ (in particular $V_1\leftarrow C$)
and $\alpha\leftarrow\alpha'$.  Do not update $\xxx_2$.
Note that there will still be no twins in $V_1$,
and $|V_2|$ continues to be less than $\alpha|V_1|$
(because the value $\alpha |V_1|$ has not changed).

Set $\ell = \max\{(\log_2 n_1)^2,  (\log_2 n_2)^3/\log_2\log_2 n_2\}$.  

Our algorithm will be recursive on $m=|\Gamma|$, so
a significant reduction of $m$ will count as
significant progress.

If $m\le 2\ell$,
individualize each element of $\Gamma$.  This immediately
individualizes each vertex in $V_2$ (each $t$-subset
of $\Gamma$ has a distinct set of colors), and thereby each
vertex in $V_1$ (since there are no twins in $V_1$).
Return the resulting canonical coloring of $V_1$, exit.

Assume now $m > 2\ell$.  

\subsubsection{Subroutines}  \label{sec:subroutines}
The algorithm will try to find canonical structures on $\Gamma$
(coloring, equipartition, $\ell$-ary relation, UPCC, Johnson scheme). 
First we describe subroutines how to proceed if such a 
structure is found.

\mn
{\bf Case 1:}\ A canonical coloring of $\Gamma$ is found.

\mn
Let $\Gamma_i$ denote the $i$-th color class, so
$\Gamma=\Gamma_1\dot\cup\dots\dot\cup\Gamma_r$ ($2\le r\le m$).
This results in a canonical coloring of $V_2$ as follows.  
Let $t=t_1+\dots +t_r$ ($t_i\ge 0$) be an ordered $r$-partition of 
the integer $t$.  Associate with this partition the set
$V_2(t_1,\dots,t_r)=\{T\in\binom{\Gamma}{t}\mid 
(\forall i)(|T\cap\Gamma_i|=t_i)\}$.  These sets will be the color 
classes of $V_2$.  Apply {\sf Reduce-Part2-by-Color} to this
coloring.  Let $V_2(t_1,\dots,t_r)$ be the color class selected
by {\sf Reduce-Part2-by-Color}.  If more than one of the $t_i$ 
satisfy $0<t_i<|\Gamma_i|$ then we have a nontrivial canonical partition
of $V_2(t_1,\dots,t_r)$ into $\binom{|\Gamma_i|}{t_i}$ blocks for
one of these values $i$ (each block being defined by the set $T\cap\Gamma_i$).
Apply {\sf Procedure ImprimitiveCase} to this partition.

In the remaining case, all but one of the $t_j$ are either $0$ or $|\Gamma_j|$.
Let $i$ denote the one exception, so the vertices of $V_2$ can now be
labeled by $\binom{\Gamma_i}{t_i}$.  We are back to the Johnson case,
having reduced $\Gamma$ to one of the $\Gamma_i$.  This is 
significant progress if the partition of $\Gamma$ we started from was 
good, say all $|\Gamma_i|\le 3m/4$.

\mn
{\bf Case 2.}\  A nontrivial canonical equipartition of $\Gamma$ is found.

Let $\Gamma_i$ denote the $i$-th block of the partition, so
$\Gamma=\Gamma_1\dot\cup\dots\dot\cup\Gamma_r$ ($2\le r\le m/2$).
This results in a canonical coloring of $V_2$ as follows:
Let $t=t_1+\dots +t_r$ ($t_i\ge 0$) be an unordered $r$-partition of 
$t$ ($t_1\ge t_2\ge\dots\ge t_r$).  Such an unordered partition
will correspond to those $T\in\binom{\Gamma}{t}$ for which
the multiset $\{|T\cap\Gamma_i|\mid 1\le i\le r\}$ is the same as
the multiset $\{t_1,\dots,t_r\}$.  Each of these color classes is
further canonically partitioned into blocks corresponding to
the ordered partitions (now the $t_i$ are not necessarily
nonincreasing); the blocks will be of the form
$V_2(t_1,\dots,t_r)$ as defined above.  We first apply
{\sf Reduce-Part2-by-Color} to the coloring of $V_2$, and then
apply {\sf Procedure ImprimitiveCase} to this partition.  

The case remaining is when the color class selected is not partitioned
into blocks.  This occurs if all the $t_i$ are equal: $t_i=t/r$.
The size of this color class is $\binom{m/r}{t/r}^r$ which is at most
$2/3$ of $\binom{m}{t}$, the original size of $V_2$, according to the 
Cor.~\ref{cor:binom-ineq}
(setting $t\leftarrow t/r$ and $m\leftarrow m/r$).  
Return the reduced $V_2$, exit (significant progress).

This completes the subroutine for Case 2.  

\mn
{\bf Case 3.}\ A canonical $\ell$-ary relational structure
on $\Gamma$ with $\ell\ge 3$ and
relative strong symmetry defect $\ge 1/4$ found.

\mn
In this case, apply the Design Lemma (multiplicative cost
$m^{O(\ell)}$).  If a canonical colored $3/4$-partition is returned,
apply Cases 1 and~2 for significant progress in either
reducing $|V_2|$ or reducing to the Johnson case with
significantly reduced $|\Gamma|$.

If the Design Lemma returns a UPCC canonically 
embedded in $\Gamma$ on a set $W\subseteq \Gamma$
with $|W|\ge 3m/4$ vertices, apply Case 1 to the
coloring $(W, \Gamma\setminus W)$.  We make significant
progress unless Case 1 returns the update
$\Gamma\leftarrow W$.   In this case, go to Case 4.

\mn
{\bf Case 4.}\ $\Gamma$ is the set of vertices of a canonically
embedded UPCC.

\mn

In this case we recursively apply Theorem~\ref{thm:UPCC} (``UPCC
Split-or-Johnson'') through Lemma~\ref{lem:UPCC-to-bipartite} 
(``UPCC to bipartite'') with parameter $\beta=3/4$.
Note that $|\Gamma|< 1+\sqrt{2 n_2}$, a dramatic reduction
of the size of the problem.

If Theorem~\ref{thm:UPCC} returns a canonical colored $3/4$-partition 
of $\Gamma$, apply Cases 1 and~2 to make significant progress.

If Theorem~\ref{thm:UPCC} returns a Johnson scheme
on a subset $W\subseteq \Gamma$ of size $|W|\ge 3m/4$
then apply Case 1 to the coloring $(W,\Gamma\setminus W)$.
We make significant progress unless Case 1 returns
the update $\Gamma\leftarrow W$.  So in this case we
have a Johnson scheme on $\Gamma$ and move to Case 5.

\mn
{\bf Case 5.}\ $\Gamma=\binom{\Gamma'}{t'}$ is the vertex set
of a canonically embedded Johnson scheme $\jjj(\Gamma',t')$
$(t'\ge 2)$.

\mn
In this case $V_2$ is identified with the set
\begin{equation}
 V_2 \leftarrow\binom{\binom{\Gamma'}{t'}}{t}.
\end{equation}
This is a highly structured set with ample imprimitivity of which
we take advantage.  Each vertex $x\in V_2$ can now be viewed as a
$t'$-uniform hypergraph $\calH(x)$ with $t$ edges on the vertex set
$\Gamma'$.  

If $m'\le \ell= (\log_2 n)^3/\log_2\log_2 n$, we individualize each
element of $\Gamma'$. This in turn individualizes each element of
$\Gamma$, then each element of $V_2$, and finally each element of
$V_1$, at a multiplicative cost of 
$|\Gamma'|! < \ell^{\ell} < \exp_2(3(\log_2 n_2)^3)$.

Now assume $m'>\ell$.  

\begin{lemma}
$V_2$ is canonically $1/6$-color-partitioned
by classical WL applied to the tripartite graph on
$(V_2,\Gamma,\Gamma')$ where each $v\in V_2$ is adjacent
to the corresponding $t$-tuple in $\Gamma$ and 
each $w\in\Gamma$ is adjacent 
to the corresponding $t'$-tuple in $\Gamma'$.
\end{lemma}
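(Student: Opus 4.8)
The plan is to show that classical WL, run on the stated tripartite graph, recovers for each $x\in V_2=\binom{\binom{\Gamma'}{t'}}{t}$ the \emph{support} $U(x)\subseteq\Gamma'$ of the associated $t$-edge $t'$-uniform hypergraph $\calH(x)$ (that is, $U(x)$ is the union of the $t$ edges of $\calH(x)$), and that the supports are so small relative to $m'$ that the partition of $V_2$ induced by supports is already a colored $1/6$-equipartition. First I would record the parameter estimates. Writing $u(x)=|U(x)|$, the fact that the $t\ge 2$ edges are distinct $t'$-sets gives $t'<u(x)\le tt'$. From $m=\binom{m'}{t'}\ge(m'/t')^{t'}\ge 2^{t'}$ (using $m'\ge 2t'+1$) we get $t'\le\log_2 m\le\log_2 n_2$, and from $n_2=\binom{m}{t}\ge(m/t)^{t}$ with $t\le m/2$ we get $t\le\log_2 n_2$, so $tt'\le(\log_2 n_2)^2$. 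Since we have passed the test $m'\le\ell$, we have $m'>\ell\ge(\log_2 n_2)^3/\log_2\log_2 n_2$, and hence $m'\ge 6tt'$ once $n_2$ exceeds an absolute constant (for smaller $n_2$ we have $m'<\ell$ and would have individualized $\Gamma'$ before reaching this case). In particular $u(x)\le m'/6$ for every $x\in V_2$.

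Next I would identify what WL sees. One refinement round makes the color of a pair $(\gamma,x)$ with $\gamma\in\Gamma'$ and $x\in V_2$ encode $|\{w\in\Gamma:\gamma\in w,\ w\in x\}|=\deg_{\calH(x)}(\gamma)$, and in particular whether $\gamma\in U(x)$; a further round makes the vertex color of $x$ encode $u(x)=|\{\gamma\in\Gamma':\gamma\in U(x)\}|$ and makes the color of a pair $(x,x')$ in $V_2$ encode $|U(x)\cap U(x')|$, $|U(x)\setminus U(x')|$ and $|U(x')\setminus U(x)|$, hence in particular whether $U(x)=U(x')$. (WL also recovers the Johnson relations $|x\cap x'|$ on $V_2$ and $|w\cap w'|$ on $\Gamma$, but I do not need this.) Consequently the coloring $x\mapsto u(x)$ is refined by the WL vertex coloring, and inside each color class $V_2^{(u)}=\{x:u(x)=u\}$ the equivalence relation ``$U(x)=U(x')$'' is a union of WL pair-color classes; by the transitivity of $\sym(\Gamma')$ on $u$-subsets of $\Gamma'$ its classes --- the ``support fibers'' --- share a common size $N_u$. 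I would then let $\Pi$ be the colored partition of $V_2$ with color classes $V_2^{(u)}$, whose blocks inside $V_2^{(u)}$ are the support fibers when $N_u\ge 2$ and all of $V_2^{(u)}$ when $N_u=1$. By construction $\Pi$ is obtained from the tripartite graph by WL plus reading off a WL-recoverable structure constant, hence is canonical; it is $\aut$-invariant; and it is admissible (blocks of size $\ge 2$ in color classes of size $\ge 2$, since when $N_u=1$ the color class $V_2^{(u)}$ is a single block of size $\binom{m'}{u}\ge m'\ge 2$).

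The remaining task is the size bound $\rho(\Pi)\le n_2/6$, and this is where the binomial estimates of Section~\ref{sec:effect} enter. If $N_u\ge 2$ the block size is $N_u$, and a member of $V_2^{(u)}$ with a prescribed support $U_0$ ($|U_0|=u$) is a choice of $t$ edges among the $\binom{u}{t'}$ $t'$-subsets of $U_0$, so $N_u\le\binom{\binom{u}{t'}}{t}$; since $u\le m'/6$, the ratio bound $\binom{a}{j}/\binom{b}{j}\le(a/b)^{j}$ for $a\le b$ (Lemma~\ref{lem:effect-on-tuples}(b)) gives $\binom{u}{t'}\le 6^{-t'}\binom{m'}{t'}\le\binom{m'}{t'}/36$ and then $N_u\le 36^{-t}\binom{\binom{m'}{t'}}{t}=n_2/36^{t}\le n_2/36$. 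If $N_u=1$ then $V_2^{(u)}$ has at most one member per support so $|V_2^{(u)}|\le\binom{m'}{u}$, and an easy exchange argument shows $N_u=1$ forces $\binom{u}{t'}=t$; using the identity $t'\binom{u}{t'}=u\binom{u-1}{t'-1}$ and $\binom{u-1}{t'-1}\ge t'$ (valid as $u\ge t'+1$), we get $n_2=\binom{\binom{m'}{t'}}{t}\ge\bigl(\binom{m'}{t'}/\binom{u}{t'}\bigr)^{t}\ge(m'/u)^{t'\binom{u}{t'}}\ge(m'/u)^{2u}$, while $\binom{m'}{u}\le(em'/u)^{u}=(eu/m')^{u}(m'/u)^{2u}\le(e/6)^{u}n_2<n_2/6$ (using $u\ge t'+1\ge 3$). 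In either case every block of $\Pi$ has size at most $n_2/6$, so $\Pi$ is a canonical colored $1/6$-partition of $V_2$.

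I expect the only genuinely delicate step to be the parameter estimate $m'\ge 6tt'$: this is the one place where the hypothesis $m'>\ell$ (with $\ell$ polylogarithmic in $n_2$) is really used, and it is exactly what forces every support $U(x)$ to occupy at most a sixth of $\Gamma'$; without it the ``generic'' color class of near-matchings would already have size close to $n_2$, so the WL coloring alone would not suffice and the support-fiber refinement is essential. The only other point requiring care is the bookkeeping in the rare case $N_u=1$, where one must verify $\binom{u}{t'}=t$ so that the color class is provably small. Everything else is routine tracking of the WL refinement and of binomial ratios.
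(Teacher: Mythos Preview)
Your proof is correct and shares the paper's key idea---using the support $U(x)\subseteq\Gamma'$ of the neighborhood hypergraph $\calH(x)$ to color-partition $V_2$---but the technical execution differs. The paper makes only a two-way split $V_2=Q\,\dot\cup\, R$ according to whether $|U(x)|=tt'$ (all edges disjoint) or not, bounds $|R|\le (tt')^2 n_2/m'$ globally via the ``random hypergraphs'' Proposition~\ref{prop:random-hyp}, and equipartitions $Q$ by support into blocks of multinomial size $(tt')!/(t!\,(t'!)^t)\ge 3$ with at least $\binom{m'}{tt'}$ blocks. You instead refine all the way by $u=|U(x)|$, partition each $V_2^{(u)}$ by support, and bound every block by direct binomial-ratio estimates, at the price of having to handle the exceptional $N_u=1$ case (where your symmetry/exchange observation $t=\binom{u}{t'}$ is needed). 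Your route is more self-contained---it avoids the separate counting proposition and the bootstrapped bound $tt'<2\log_2 n_2/\log_2\log_2 n_2$, getting by with the cruder $tt'\le(\log_2 n_2)^2$---while the paper's split is conceptually cleaner on the generic part $Q$ and dispatches everything else in one stroke as a vanishing error term. Both correctly rest on the single non-obvious hypothesis $m'>\ell$, which forces $tt'$ to be small relative to $m'$.
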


\begin{proof}
Let $U(x)\subseteq \Gamma'$ denote the union 
of the edges of $\calH(x)$.  So $|U(x)|\le tt'$.  

\mn 
Claim.  If $m' >\ell$ then $tt' < 2\log_2 n_2 / \log_2 \log_2 n_2$.

\begin{proof}
First note that if $1\le b\le a/2$ then $\binom{a}{b}\ge (a/b)^b \ge 2^b$, 
so $b\le \log_2 \binom{a}{b}$.  
Applying this twice we see that $t\le \log_2 \binom{m}{t} 
= \log_2 n_2$ and $t'\le \log_2 \binom{m'}{t'} = \log_2 m <\log_2 n_2$.  
So $t' < \sqrt{m'}$ and 
$t < \sqrt{m'}<\sqrt{\binom{m'}{t'}}$.  Now if $b\le \sqrt{a}$ then
$\binom{a}{b}> (a/b)^b \ge a^{b/2}$, so, bootstrapping, we obtain
 \begin{equation}
n_2 = \binom{\binom{m'}{t'}}{t}> {\binom{m'}{t'}}^{t/2}> (m')^{tt'/4}
\end{equation}
and therefore $tt' < 2\log n_2/\log_2\log_2 n_2$.
\end{proof}
Let us color $V_2$ as $V_2=Q\dot\cup R$ where $x\in Q$ exactly
if $|U(x)|=tt'$ (all edges of $\calH(x)$ are disjoint).
(WL is ``aware'' of this coloring, so the WL coloring is a refinement
of this.)
By Prop.~\ref{prop:random-hyp} (``Random hypergraphs'')
we have $|R|/|V_2|\le (tt')^2/m'$.
The right-hand side goes to zero, so for sufficiently large $n$
we have $|R|/|V_2|\le 1/6$, say.

For $x,y\in Q$ let us now say that $x\sim y$ if $U(x)=U(y)$.
This equivalence relation gives a canonical equipartition on $Q$.  
(Again, WL is ``aware'' of this relation.)
The size of each equivalence class is $(tt')!/(t!(t'!)^t)\ge 3$
(because $t,t'\ge 2$; the smallest case occurs when $t=t'=2$).  
The number of equivalence classes is 
$\binom{m'}{tt'}\ge \binom{m'}{4}$ which goes to infinity,
so for sufficiently large $n$ it will be at least 6, say.

In summary, the $(Q,R)$ canonical coloring together with
the $\sim$ canonical equivalence relation on $Q$ provides
a canonical colored $1/6$-partition of $V_2$.  (In fact, it is 
a $o(1)$-partition.)  WL gives a refinement of these.
\end{proof}

Now apply {\sf Reduce-Part2-by-Color} to the color-partition
obtained.  If a color class of size $\le n_2/6$ 
is returned, we reduced $n_2$ by a factor of 6, significant 
progress.  If a larger color class is returned, it is
nontrivially equipartitioned.  Apply {\sf Procedure ImprimitiveCase},
significant progress.

This completes Case 5.

\subsubsection{Local guides for the Johnson case}
\label{sec:localguides-johnson}

After these subroutines, we now turn to the main algorithm
for the case when $V_2=\binom{\Gamma}{t}$
is the vertex set of the canonically
embedded Johnson scheme $\jjj(\Gamma,t)$.

For $L\subseteq\Gamma$ let $X[L]$ denote the subgraph 
of $X$ induced by $\left(V_1,\binom{L}{t}\right)$.  
Our ``test sets'' will be the $\ell$-subsets $L\subset\Gamma$
(\ie, $L\in\binom{\Gamma}{\ell}$).
Determine the isomorphisms among all the $X(L)$ 
for all the test sets $L$ (by brute force over
all bijections between each pair $L_1,L_2$) 
at an additive cost of $m^{O(\ell)}n^{O(1)}$.

\mn
{\bf Case A:} Not all the $X[L]$ are isomorphic.

This gives us a canonical coloring of $\binom{\Gamma}{\ell}$ at
additive quasipolynomial cost.  (The color of $L$ is the
isomorphism type of $X(L)$.  This does not require canonical forms,
only a comparison of all graphs of the form $X(L)$ arising from $X$
and its counterpart airising from the other input string (with
which we are testing isomorphism).  But in this
quasipolynomial-size category, even lex-first labeling would be
feasible in quasipolynomial time.)

We consider the symmetry defect of the resulting edge-colored
$\ell$-uniform hypergraph $\calK$ on vertex set $\Gamma$.
Recall that we do not need to
distinguish weak and strong defect in case of hypergraphs 
(see Prop.~\ref{prop:weak-is-strong}).

\mn
{\bf Case A1.}\   $\calK$ has symmetry defect $< m/4$.  

\mn
Let $S\subset \Gamma$ be the unique largest symmetrical set for
$\calK$; so $|S|>3m/4$.  Apply Case~1 to the coloring $(S,
\Gamma\setminus S)$.  We make significant progress except when Case 1
returns the update $\Gamma\leftarrow S$.  If this is the case, we
enter Case B below.

\mn
{\bf Case A2.}\   $\calK$ has symmetry defect $\ge m/4$.  

\mn
In this case we view $\calK$ as an $\ell$-ary relational structure
and move to Case 3.  

\mn
{\bf Case B.}\  All the $X[L]$ are isomorphic.

\mn
Let $G(L)\le \sym(L)$ denote the image of the
action of $\aut(X(L))$ on $L$.  We say that the test set
$L$ is \emph{full} if $G(L)=\sym(L)$.  

\mn
{\bf Case B1.}\  $G(L)\neq \sym(L)$  \quad (the test set $L$ is not full).

\mn
We should always have discussed the progress of the two 
input strings $\xx_1$ and $\xx_2$ (of which we wish to
decide $G$-isomorphism) in combination.  The general
argument is that either the progress of the parameters is
not identical, in which case we reject isomorphism and exit,
or it is identical, so it suffices to follow one of them
as long as canonicity with respect to the pair of objects
is observed.  


Due to the delicate nature of the argument involved in the 
prerequisite for the present case,
Prop.~\ref{prop:localguide} (``Local guide''),
we make an exception at this time.  This will also
illustrate the general principle tacitly present in all
combinatorial partitioning arguments.

Rename $X$ as $X_1$.  The graph $X_1$ derives from an input string
$\xx_1$.  Let $\xx_2$ be the other input string and
let $X_2$ be the graph $X$ derived by the same procedure from 
$\xx_2$.  This defines our notation for the associated structures;
in particular, $\Gamma$ becomes $\Gamma_1$ and the corresponding
set derived from the input string $\xx_2$ will be denoted $\Gamma_2$.

We shall apply Prop.~\ref{prop:localguide} 
with the following assignment of the variables: $\alpha\leftarrow 3/4$,
$\Omega_i\leftarrow \Gamma_i$, $n\leftarrow m$, $\ellk\leftarrow\ell$.  

The objects of the category $\calL$ correspond to the pairs $(L,i)$ where
$L\in\binom{\Omega_i}{\ell}$.  The morphisms $(L,i)\to (L',j)$
are the bijections $L\to L'$ induced by the $X_i(L)\to X_j(L')$
isomorphisms.  The two abstract objects of category $\calC$ are
denoted $\xxx_1$ and $\xxx_2$.  The underlying set of $\xxx_i$
is $\Box(\xxx_i)=\Gamma_i$.  The morphisms are the bijections
$\Gamma_1\to\Gamma_2$ induced by the isomorphisms $X_1\to X_2$.

Our assumption is that all objects of the form $(L,1)$ are
isomorphic.  If this is not true for the objects $(L,2)$, reject
isomorphism, exit.  Let now $L_i\in \binom{\Gamma_i}{\ell}$.
If the objects $(L_1,1)$ and $(L_2,2)$ are not isomorphic,
reject isomorphism, exit.

So now all objects in the category $\calL$ are isomorphic.

For $L\in\binom{\Gamma_i}{\ell}$, let $G_i(L)$ denote the
restriction of $\aut(X_i(L))$ to $L$.  We say that
$L\in\binom{\Gamma_i}{\ell}$ is \emph{full} for $i$
if $G_i(L)=\sym(L)$.  Our current assumption is that 
$G_i(L)$ is not full (for at least one pair $(L,i)$,
and therefore for all pairs $(L,i)$).

Thus the assumptions of Prop.~\ref{prop:localguide}
are satisfied.   The algorithm of Prop.~\ref{prop:localguide}
returns canonical $k$-ary relational structures with strong 
symmetry defect $\ge m-\ell+1 > m/2$.

We feed these to the Design Lemma.
If a colored $3/4$-partition of $\Gamma_i$ is received, return that
partition and move to Cases 1 and~2 to make significant progress.

If the Design Lemma returns a UPCC canonically embedded
on at least $3m/4$ elements of $\Gamma_i$, move to 
Case 4.  Note that $m\le 1+\sqrt{2n_2}$, a dramatic reduction in the
domain size.

Now we return to discussing only one of the input strings.

\mn
{\bf Case B2.}\  $G(L)= \sym(L)$   \quad (the test set $L$ is full).

\mn
Since all the $X(L)$ are isomorphic, this is now true for all 
$L\in\binom{\Gamma}{\ell}$.
For $v\in V_1$ let $\calH(v,L)=(L,\calE(v,L))$ be
the induced subhypergraph of $\calH(v)$ on $L$.  Then, for each $v\in
V$ and $\sigma\in G(L)$ there is at least one vertex $v(\sigma)$ such
that $\calH(v,L)^{\sigma}=\calH(v(\sigma),L)$.  So the number of
vertices is $n_1\ge |\sym(L):\aut(\calH(v,L))|$.  Let $s$ be the
largest value such that $n_1\ge \binom{\ell}{s}$.
Since $n_1 < \exp(\sqrt{\ell})$, it follows from 
Theorem~\ref{thm:liebeck} that $\aut(\calH(v)) \ge\alt(L\setminus S)$ 
for some set $S\subset L$, $|S|=s$.
Since $\binom{\ell}{s}> (\ell/s)^s$, we see that 
$s\le 2\log_2 n_1/\log_2 \log_2 n_1$.

Now it follows from Lemma~\ref{lem:local-to-global}
(``Local to global symmetry lemma'')  that for every $v$ there is a set 
$S(v)\subset \Gamma$ with $|S(v)|=s$ such that 
$\aut(\calH(v))\ge \alt(\Gamma\setminus S(v))$.
(By Prop.~\ref{prop:weak-is-strong}
it follows in fact that $\aut(\calH(v))\ge \sym(\Gamma\setminus S(v)$.)
Let $S(v)$ denote the smallest such subset for $\calH(v)$.
This subset is easy to find in polynomial time.
(To apply Lemma~\ref{lem:local-to-global} 
we need to verify that $\ell > (t+2)(t+3)s$ which is true because 
both $t$ and $s$ are $O(\log n/\log\log n)$.)

\mn
Now consider the bipartite graph $Y=(V_1,\Gamma;F)$ where 
$g\in \Gamma$ is adjacent to $v\in V_1$ if $g\in S(v)$.

The map $X\mapsto Y$ is a canonical transformation (object map of a
functor) to an instance with $V_2$ greatly reduced 
($|\Gamma|< 1+\sqrt{2|V_2|}$).

If none of the twin-equivalence classes of $Y$ in $V_1$ are 
greater than 
$2n_1/3$, return $Y$, exit.

Else, let $S\subset \Gamma$ be the set that is equal to $S(v)$ for at
least a 
$2/3$ fraction of the elements of $V_1$.  Individualize each
point in $S$.  In the next round of refinement of the graph $Y$ this
will individualize each vertex $v$ with $S(v)=S$.  We justify
this last statement in Cor.~\ref{cor:johnson-individualize}
below.

\begin{observation}   \label{obs:trace}
The $X$-neighborhood of each $v\in V_1$ is determined by the trace
$\calH(v)_{S(v)}$ of the neighborhood hypergraph $\calH(v)$ in $S(v)$.
\end{observation}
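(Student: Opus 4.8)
The plan is to exploit the fact just established above, namely that for every $v\in V_1$ we have $\sym(\Gamma\setminus S(v))\le\aut(\calH(v))$ (and in particular $\alt(\Gamma\setminus S(v))\le\aut(\calH(v))$, but the full symmetric group is what we want). First I would note that since the edge set $\calE(v)\subseteq\binom{\Gamma}{t}$ of $\calH(v)$ is invariant under the induced action of $\sym(\Gamma\setminus S(v))$ on $\binom{\Gamma}{t}$, it is a union of orbits of that action. The key step is then to describe these orbits explicitly: the group $\sym(\Gamma\setminus S(v))$ fixes $S(v)$ pointwise and acts on $\Gamma\setminus S(v)$ in the obvious way, so it is transitive on the $j$-subsets of $\Gamma\setminus S(v)$ for every $j\ge 0$. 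Consequently two $t$-subsets $E,E'\subseteq\Gamma$ lie in the same $\sym(\Gamma\setminus S(v))$-orbit if and only if $E\cap S(v)=E'\cap S(v)$; equivalently, the orbit of $E$ is exactly $\{E''\in\binom{\Gamma}{t}\mid E''\cap S(v)=E\cap S(v)\}$, and the orbits are indexed by the subsets $A\subseteq S(v)$ with $t-|\Gamma\setminus S(v)|\le|A|\le t$.

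Next I would combine this with the definition of the trace $\calH(v)_{S(v)}=\{F\cap S(v)\mid F\in\calE(v)\}$. Because $\calE(v)$ is a union of the orbits just described, an orbit indexed by $A\subseteq S(v)$ is either entirely contained in $\calE(v)$ or entirely disjoint from it, and it is contained in $\calE(v)$ precisely when $A\in\calH(v)_{S(v)}$. Therefore
\[
   \calE(v)=\Bigl\{E\in\tbinom{\Gamma}{t}\ \Big|\ E\cap S(v)\in\calH(v)_{S(v)}\Bigr\}.
\]
The right-hand side depends only on the trace $\calH(v)_{S(v)}$ together with the fixed ambient data $\Gamma$ and $t$. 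Since the $X$-neighborhood of $v$ in $V_2=\binom{\Gamma}{t}$ is by definition the set $\calE(v)$, this establishes the Observation.

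I expect no serious obstacle here: the argument is a direct unwinding of the orbit structure of $\sym(\Gamma\setminus S(v))$ on $t$-subsets. The only point that deserves a word of care is the degenerate regime where $S(v)$ is (nearly) all of $\Gamma$, in which the displayed description becomes vacuous or trivial; but in the present setting $|S(v)|=s=O(\log n/\log\log n)$ while $m=|\Gamma|>2\ell$, so $\Gamma\setminus S(v)$ is large and in fact $\sym(\Gamma\setminus S(v))$ is transitive on $j$-subsets for all the relevant $j\le t$, so this case does not arise (and even if it did, the statement would hold trivially since the trace would equal the whole hypergraph).
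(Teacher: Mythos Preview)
Your argument is correct and follows essentially the same route as the paper's proof, which simply asserts the key equivalence $E\in\calE(v)\iff E\cap S(v)\in\calH(v)_{S(v)}$; you have spelled out the orbit-structure justification for this equivalence that the paper leaves implicit.
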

\begin{proof}
The $X$-neighborhood of each $v\in V_1$ is determined by its
neighborhood hypergraph $\calH(v)$.
A subset $E\subset \Gamma$ belongs to $\calH(v)$ if and only if
$E\cap S(v)\in \calH(v)_{S(v)}$.
\end{proof}
\begin{corollary}    \label{cor:johnson-individualize}
If, for some $v\in V_1$, we individualize each element of the set $S(v)$
then, after a refinement step in the graph $Y$, the vertex $v$ is 
individualized.
\end{corollary}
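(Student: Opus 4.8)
The plan is to prove Corollary~\ref{cor:johnson-individualize} by tracking how the individualization of $S:=S(v)$ propagates through color refinement — first from $\Gamma$ to $V_2$, then from $V_2$ to $V_1$ — and then combining the exposed information with Observation~\ref{obs:trace} and the standing hypothesis that $V_1$ has no twins. Throughout, the refinement is understood to act on the combined colored structure consisting of the bipartite graph $X=(V_1,V_2;E)$ together with the canonical incidence ``$g\in T$'' between $\Gamma$ and $V_2=\binom{\Gamma}{t}$ (available since the Johnson scheme $\jjj(\Gamma,t)$ is canonically embedded), with $Y$ as before.

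First I would note that, having individualized every element of $S$ (so the elements of $S$ get pairwise distinct colors), one round of refinement gives each $T\in V_2$ a color that determines $T\cap S$ as a concrete subset of $S$: the multiset of colors of the $\Gamma$-neighbors of $T$ records, through the individualized colors, exactly which elements of $S$ lie in $T$. I would then push this through $X$: after a second round of refinement, the color of a vertex $w\in V_1$ determines, for every $F\subseteq S$, the number $\lvert\{T\in N_X(w)\mid T\cap S=F\}\rvert$, i.e.\ the number of edges of the neighborhood hypergraph $\calH(w)$ whose trace on $S$ equals $F$, since the multiset of colors of the $X$-neighbors of $w$ refines the list of values $T\cap S$ over $T\in N_X(w)$.

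Now suppose $v'\in V_1$ has the same (stable) refined color as $v$. Since refinement only splits colors, $v'$ and $v$ already have equal colors after the second round above, hence for every $F\subseteq S$,
\begin{equation*}
   \lvert\{E\in\calE(v')\mid E\cap S=F\}\rvert=\lvert\{E\in\calE(v)\mid E\cap S=F\}\rvert .
\end{equation*}
From this I would deduce $\calE(v')=\calE(v)$. For $\calE(v')\subseteq\calE(v)$: given $E_0\in\calE(v')$, the displayed equality with $F=E_0\cap S$ produces some $E_1\in\calE(v)$ with $E_1\cap S=E_0\cap S$; since $\sym(\Gamma\setminus S(v))=\sym(\Gamma\setminus S)\le\aut(\calH(v))$, the argument of Observation~\ref{obs:trace} shows that every $t$-set with the same trace on $S$ as $E_1$ lies in $\calE(v)$, so $E_0\in\calE(v)$. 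Summing the displayed equalities over all $F$ gives $\lvert\calE(v')\rvert=\lvert\calE(v)\rvert$, and together with the inclusion this forces $\calE(v')=\calE(v)$. Thus $v$ and $v'$ have the same $X$-neighborhood, i.e.\ they are twins in $V_1$; by the no-twins hypothesis, $v'=v$. Hence $v$ is the unique vertex of $V_1$ of its color after refinement, i.e.\ $v$ is individualized.

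The one step requiring care is the bookkeeping of the second paragraph: verifying that the refinement really does expose the trace $\calH(v)_{S}$ at the level of the colors of $V_1$-vertices, which amounts to checking that the incidence between $\Gamma$ and $V_2$ is part of the structure being refined and that two rounds suffice. This is routine; everything of substance is an immediate consequence of Observation~\ref{obs:trace} together with the absence of twins in $V_1$.
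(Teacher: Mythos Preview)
Your proof is correct and follows essentially the same approach as the paper: the paper's proof is the one-liner ``This follows from Obs.~\ref{obs:trace} in view of the assumption that there are no $X$-twins in $V_1$,'' and you have spelled out the mechanism by which refinement exposes the trace $\calH(w)_S$ and then invoked Obs.~\ref{obs:trace} together with the no-twins hypothesis. Your observation that the refinement must act on the combined structure (including $X$ and the $\Gamma$--$V_2$ incidence, not $Y$ alone) is apt---the paper's phrase ``in the graph $Y$'' is slightly loose, but its own proof's reliance on Obs.~\ref{obs:trace} (which concerns $X$-neighborhoods) confirms that this is what is meant.
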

\begin{proof}
This follows from Obs.~\ref{obs:trace} in view of the assumption that
there are no $X$-twins in $V_1$.
\end{proof}

This completes Case B2 and thereby the subroutine for the
Johnson case.  This in turn completes the 
{\sf Bipartite Split-or-Johnson} and
{\sf UPCC Split-or-Johnson} algorithms.


\section{Alternating quotients of a permutation group} 
\label{sec:altquotient}
To understand the structure of the groups where Luks reduction
stops, we need some group theory.

In Luks's barrier situation we have a 
\emph{giant representation} $\vf : G\to\sym(\Gamma)$, meaning
the image of $G$ is a giant, \ie, $G^{\,\vf}\ge \alt(\Gamma)$.  
We shall assume 
that $k=|\Gamma|> \max\{8, 2+\log_2 n_0\}$ where 
$n_0$ is the length (size) of the largest orbit of $G$.
We say that $x\in\Omega$ is \emph{affected} by $\vf$
if $G_x^{\,\vf}\ngeq \alt(\Gamma)$.  The key result
is that the pointwise stabilizer of all unaffected points
is still mapped onto $\sym(\Gamma)$ or $\alt(\Gamma)$
by $\vf$ (Unaffected Stabilizer Theorem, Thm.~\ref{thm:unaffected}).
This result will be responsible for the key algorithm
of the paper ({\sf Procedure LocalCertificates}
in Sec.~\ref{sec:localcertificates}).

Finally we show
that if a permutation group $G\le\sss_n$ has an alternating
quotient of degree 
$k\ge \max\{9, 2\log_2 n\}$
this can only happen in the trivial
way, namely, that for some $t\ge 1$, \ $G$ has a system of
imprimitivity with $\binom{k}{t}$ blocks on which $G$ acts as
the Johnson group $\aaa_k^{(t)}$.  
Moreover, in every orbit there is a canonical choice of the blocks
corresponding to $\vf$
which is unique; we refer to these as the \emph{standard blocks}.
These are some of the items in our ``Main Structure Theorem''
(Theorem~\ref{thm:altquotient}).  

\subsection{Simple quotient of subdirect product}

First we state a lemma that is surely well known but to which I
could not find a convenient reference.

\begin{lemma}[Simple quotient of subdirect product]
  \label{lem:topfactor}
Let $G\le K_1\times \dots \times K_m$ be a subdirect product;
let $M_i$ be the kernel of the $G\to K_i$ epimorphism.
Assume there is an epimorphism $\vf\,:\, G\to S$ where $S$ is
a nonabelian simple group.
Then $(\exists i)(M_i\le \ker \vf)$.  In particular, one of the $K_i$
admits an epimorphism onto $S$.
\end{lemma}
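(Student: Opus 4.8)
The plan is to exploit the structure of $\ker\vf$ as a normal subgroup of a subdirect product and to use the simplicity (and nonabelian-ness) of $S$ to force one of the coordinate kernels $M_i$ inside $\ker\vf$. First I would pass to $N=\ker\vf$, a normal subgroup of $G$ with $G/N\cong S$ simple nonabelian. The key structural fact about subdirect products that I want to bring in is that the $M_i$ are normal in $G$ and satisfy $\bigcap_i M_i=1$ (since $G$ embeds in the product), and moreover $G/M_i$ is (isomorphic to) $K_i$, so $M_iM_j/M_j$ is a normal subgroup of $G/M_j$ for all $i,j$. The intuition is that if \emph{no} $M_i$ lies in $N$, then each $M_i$ surjects nontrivially onto $S$ via $G\to G/N=S$, and I want to derive a contradiction from having too many ``independent'' normal complements to $N$.

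The cleanest route is a commutator argument. Suppose for contradiction that $M_i\not\le N$ for every $i$. Since $N$ is normal and $G/N$ is simple, $M_iN/N$ is a normal subgroup of $S$, hence (as $M_i\not\le N$) equals all of $S$; so $M_iN=G$ for every $i$. Now I would look at $N\cap M_i$: from $M_iN=G$ we get $G/N = M_iN/N \cong M_i/(M_i\cap N)$, so each $M_i$ already has $S$ as a quotient, namely $M_i/(M_i\cap N)\cong S$. The real leverage comes from considering two indices. Because $G$ is a subdirect product, for $i\neq j$ the subgroup $M_iM_j$ has the property that $G/(M_i\cap M_j)\hookrightarrow (G/M_i)\times(G/M_j)$, and more usefully $[M_i,M_j]\le M_i\cap M_j$. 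I would then argue: since $M_iN=M_jN=G$, we have $[G,G]\le [M_iN,M_jN]\le [M_i,M_j]N \le (M_i\cap M_j)N$ — wait, this needs care, but the thrust is that $G=[G,G]N$ modulo the perfectness of $S$, and iterating over all $i$ forces $[G,G]\le (\bigcap_i M_i)N = N$, so $G/N$ is abelian, contradicting that $S$ is nonabelian simple. Actually the crispest version: I claim $[M_i,M_j]\le N$ for some pair, or alternatively that $\prod_i M_i$-type reasoning collapses; the honest statement to prove is that the images $\overline{M_i}=M_iN/N=S$ in $S$ pairwise commute only if... — here the nonabelian-ness of $S$ is exactly the obstruction that I must convert into the desired conclusion.

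Let me restate the argument I'd actually write to avoid the pitfall above. Set $\overline{G}=G/N=S$. For each $i$, $\overline{M_i}:=M_iN/N$ is normal in $S$, so it is $1$ or $S$. The claim ``$\exists i:\ M_i\le N$'' is precisely ``$\exists i:\ \overline{M_i}=1$''. Suppose not, so $\overline{M_i}=S$ for all $i$. Consider the product $M_1\cdots M_m$ inside $G$; since $\bigcap M_i=1$ and the $M_i$ are normal, standard subdirect-product theory gives that $M_i\cap\prod_{j\neq i}M_j$ can be analyzed, but the decisive point is Goursat/Remak: $G/\bigcap M_i = G$ embeds subdirectly in $\prod G/M_i$, and a normal subgroup (here $N$) of such a group projecting onto a simple quotient $S$ must, by a counting/projection argument, contain all but ``one coordinate's worth'' of the $M_i$. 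I would make this precise by induction on $m$: the base case $m=1$ is trivial ($M_1=1\le N$); for the inductive step, project $G$ to $G/M_m\le K_m$ and to $K_1\times\dots\times K_{m-1}$, split on whether $N$ contains $M_m$, and apply induction to the image of $G$ in $K_1\times\dots\times K_{m-1}$ with the induced epimorphism onto $S$ (which exists because $S$ is simple, so $\vf$ either kills $M_m$ — done — or factors through $G/M_m$... no: it factors through $G/(N\cap M_m)$). \textbf{The main obstacle} I anticipate is exactly getting this induction bookkeeping right: ensuring that after projecting away the $m$-th coordinate one still has a subdirect product admitting an epimorphism onto the \emph{same} simple group $S$, which requires knowing that $N\cap M_m$ being too small is incompatible with $\overline{M_m}=S$ — and this is where I will use that $S$ is nonabelian and simple (so it has no proper subdirect decomposition and $[S,S]=S$). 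The final sentence of the statement, ``one of the $K_i$ admits an epimorphism onto $S$,'' is then immediate: if $M_i\le\ker\vf=N$, then $\vf$ factors through $G/M_i\cong K_i$, yielding $K_i\twoheadrightarrow S$.
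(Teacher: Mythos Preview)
Your commutator idea---the one you sketched and then abandoned---is essentially the paper's proof, and it works without the ``care'' you feared. Here is the clean version (due to P\'alfy). Assume $M_i\not\le N:=\ker\vf$ for every $i$; then, since $N$ is maximal normal, $M_iN=G$ for every $i$. Now take the $m$-fold iterated commutator:
\[
[G,G,\dots,G]=[M_1N,M_2N,\dots,M_mN]\le N\,[M_1,M_2,\dots,M_m]\le N\Big(\bigcap_{i=1}^m M_i\Big)=N.
\]
The first containment holds because $N\normal G$ (so $[XN,YN]\le [X,Y]N$ for any $X,Y\le G$, and this iterates); the second because the $M_i$ are normal, so $[M_1,M_2]\le M_1\cap M_2$, $[M_1,M_2,M_3]\le M_1\cap M_2\cap M_3$, etc. Hence $G/N\cong S$ is nilpotent of class~$<m$, contradicting that $S$ is nonabelian simple (so $[S,S,\dots,S]=S$). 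That's the whole proof.

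Your worry about ``$[G,G]\le(\bigcap_i M_i)N$'' was misplaced only because you were trying to get there with a single commutator; the fix is to use the $m$-fold commutator, which lands directly in $\bigcap_i M_i=1$ modulo $N$. Your induction-on-$m$ plan can also be made to work (if $L=\bigcap_{i<m}M_i\not\le N$ and $M_m\not\le N$, use $[L,M_m]=1$ to see that $LN/N$ is central in $M_mN/N=S$, hence trivial), but it is strictly more bookkeeping for the same payoff. The paper opts for the one-line commutator argument.
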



\smallskip\noindent
\begin{proof}[Simplified proof by P. P. P\'alfy]
Let $N = \ker \vf$.  Assume for a contradiction that 
$N \ngeq M_i$ for all $i$.   Then $M_iN = G$ (because $N$ is a maximal
normal subgroup).   It follows that 
$[G,\dots,G]=[M_1N,\dots,M_mN]\le N[M_1,\dots,M_m]\le 
   N\left(\bigcap_{i=1}^m M_i\right) = N$,
so $[G/N,\dots,G/N]=1$, a contradiction because $G/N\cong S$ is
nonabelian simple.
\end{proof}
In our applications, $S$ will be $\aaa_k$ and the $K_i$ the restrictions
of the permutation group $G$ to its orbits.

\begin{remark} 
A group $H$ is \emph{perfect} if $H'=H$ (where $H'=[H,H]$ denotes
the commutator subgroup of $H$).  P\'alfy points out that the following
result appears as Lemma 2.8 in \cite{meier}.
\begin{lemma}[Meierfrankenfeld]
Let $G$ be a finite group and $N\normal G$ such that $G/N$ is perfect.
Then there exists a unique subnormal subgroup $R$ of $G$ which is 
minimal with respect to $G=RN$.
\end{lemma}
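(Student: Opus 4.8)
The plan is: existence of such an $R$ is a triviality, and uniqueness follows from an induction on $|G|$ once one isolates a clean ``normal'' special case and feeds it, together with Wielandt's join theorem, into the inductive step. Throughout, $\calP$ denotes the (finite, non-empty since $G\in\calP$) family of subnormal subgroups $H\le G$ with $HN=G$, ordered by inclusion; an inclusion-minimal member of $\calP$ exists, and the whole content of the lemma is that it is unique.

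First I would record the routine fact that every $H\in\calP$ contains a \emph{perfect} member of $\calP$, namely its perfect residual $H^{(\infty)}=\bigcap_i H^{(i)}$: since $H/(H\cap N)\cong HN/N=G/N$ is perfect, iterating the identity $K^{(1)}(K\cap N)=K$ down the derived series of $H$ gives $H^{(\infty)}(H\cap N)=H$, hence $H^{(\infty)}N=G$, and $H^{(\infty)}$ is characteristic in $H$ and so still subnormal in $G$. Using this, I would prove the \emph{normal case}: if $M_1,M_2\trianglelefteq G$ satisfy $M_iN=G$ then $(M_1\cap M_2)N=G$. Replacing each $M_i$ by its perfect residual (still normal, still a supplement) one may assume $M_1,M_2$ perfect; from $M_1N=G$ one gets $G/(M_1\cap N)\cong(G/N)\times(G/M_1)$, in which the image of $M_2$ is a normal subgroup whose projection to the perfect factor $G/N$ is onto. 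Since a normal subgroup of a product $A\times B$ whose projection to $A$ is surjective must contain $A\times 1$ (apply $[\,\cdot\,,A\times 1]$ and use $A'=A$), the image of $M_2$ contains $G/N$; this pulls back to $M_2(M_1\cap N)\supseteq M_1$, and the modular law then yields $(M_1\cap M_2)(M_1\cap N)=M_1$, whence $(M_1\cap M_2)N\supseteq M_1N=G$.

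For uniqueness I would induct on $|G|$. Let $R_1,R_2$ be inclusion-minimal members of $\calP$. If one of them equals $G$, then both do, since a proper subnormal supplement of $N$ would contradict the minimality of whichever equals $G$. Otherwise each $R_i$ lies in a proper normal subgroup $M_i\trianglelefteq G$ (the penultimate term of a shortest subnormal chain), with $M_iN=G$. If $\langle R_1,R_2\rangle\ne G$, then $\langle R_1,R_2\rangle$ is subnormal in $G$ by Wielandt's join theorem and is proper, hence lies in a proper normal $M'\trianglelefteq G$ containing both $R_i$; inside $M'$ one has $M'N=G$, $M'/(M'\cap N)\cong G/N$ perfect, $R_i(M'\cap N)=M'$ by modularity, and each $R_i$ is still inclusion-minimal among subnormal supplements of $M'\cap N$ (a smaller one would be a smaller member of $\calP$), so the inductive hypothesis gives $R_1=R_2$. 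It remains to rule out $\langle R_1,R_2\rangle=G$: in that case $M_1M_2=G$, so $L:=M_1\cap M_2$ is a proper normal subgroup with $LN=G$ by the normal case and $L\ne M_1$ (else $M_2=G$); moreover $R_1M_2=G$, so projecting to $G/M_2\cong M_1/L$ gives $R_1L=M_1$ and in particular $R_1\not\le L$. But applying the inductive hypothesis inside $M_1$ and inside $L$, $R_1$ is \emph{the} minimal subnormal supplement of $M_1\cap N$ in $M_1$, and there is a (unique) minimal subnormal supplement $R_0$ of $L\cap N$ in $L$; since $R_0(M_1\cap N)\supseteq L$ and $R_0(M_1\cap N)\supseteq M_1\cap N$, we get $R_0(M_1\cap N)\supseteq L(M_1\cap N)=M_1$ (Dedekind, using $LN=G$), so $R_0$ is also a subnormal supplement of $M_1\cap N$ in $M_1$; minimality of $R_1$ then forces $R_1\le R_0\le L$, contradicting $R_1\not\le L$. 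Hence $\langle R_1,R_2\rangle=G$ is impossible, and $R_1=R_2$.

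I expect the main obstacle to be precisely this last step: seeing that the ``generating case'' $\langle R_1,R_2\rangle=G$ cannot occur, which is what makes the induction close. It is not a long argument once assembled, but it requires simultaneously juggling the normal case, Wielandt's join theorem, and the correct minimal-supplement gadgets inside the auxiliary subgroups $M_1$ and $L$; by contrast, existence, the perfect-residual remark, the normal case, and the derivation of uniqueness from the induction are all routine. (An alternative to the whole uniqueness half is to cite the proof of Lemma~2.8 in \cite{meier}.)
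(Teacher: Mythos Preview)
Your proof is correct and self-contained. The paper, however, does not prove this lemma at all: it merely quotes it as Lemma~2.8 of \cite{meier} (in a Remark following P\'alfy's direct proof of Lemma~\ref{lem:topfactor}), and then observes that Lemma~\ref{lem:topfactor} follows from it. So there is no ``paper's own proof'' to compare against beyond the citation you yourself mention at the end.

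A few minor comments on your argument, none of them gaps. First, the reduction to perfect $M_i$ in the normal case is unnecessary: the factor $A=G/N$ in your product decomposition is already perfect by hypothesis, which is all the commutator trick $[K,A\times 1]=A'\times 1$ needs. Second, in the inductive step you implicitly use that a \emph{unique} minimal element of a finite poset is automatically a minimum (contained in every element); this is true and routine, but worth stating, since it is what justifies ``minimality of $R_1$ then forces $R_1\le R_0$''. Third, when you pass from $R_i$ subnormal in $G$ to $R_i$ subnormal in $M'$ (or $M_1$), you are using the standard fact that a subnormal subgroup of $G$ contained in $H\le G$ is subnormal in $H$; and Wielandt's join theorem is indeed needed exactly where you invoke it. With these points made explicit, the write-up would be fully rigorous.
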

Lemma~\ref{lem:topfactor} follows from this result.  Indeed,
observe that if $M_iN=G$ for all $i$ then $R\le M_i$ for all $i$
and therefore $R\le \bigcap_i M_i=1$, contradicting the equation $RN=G$.

We believe, though, that Lemma~\ref{lem:topfactor} must have been folklore
decades before this 1995 reference.
\end{remark}

\subsection{Large alternating quotient of a primitive group}

The result of this section is Lemma~\ref{lem:prim-topquotient}.

First we need to state a corollary to the basic structure theorem of 
primitive groups, the O'Nan--Scott--Aschbacher 
Theorem (\cite{sc,aschbacher}, cf.~\cite[Thm. 4.1A]{dixon}).
The proof of this theorem is elementary.  In fact, according
to Peter Neumann (cited by Peter Cameron~\cite{cameronblog})
much of it
already appears in Jordan's 1870 monograph~\cite{jordan}.

\begin{definition}
The \emph{socle} $\soc(G)$ of a group $G$ is the product of its minimal
normal subgroups.  
\end{definition}
\begin{fact} \label{fact:socle}
\begin{enumerate}[(i)]
\item The socle of any group is a direct product of simple groups.
\item  The socle of a primitive permutation group is a direct product
 of isomorphic simple groups.
\end{enumerate}
\end{fact}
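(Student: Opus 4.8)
The plan is to prove the two parts of the Fact by direct elementary arguments; neither requires the full O'Nan--Scott--Aschbacher classification, only its most basic ingredients, and part (ii) will follow from part (i) together with standard facts about transitive and regular normal subgroups. (Since all groups here are finite, I use finiteness freely.)

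\textbf{Part (i).} First I would reduce the statement to the claim that a minimal normal subgroup $N$ of a finite group $G$ is a direct product of isomorphic simple groups. This is because any characteristic subgroup of $N$ is normal in $G$, so $N$ is \emph{characteristically simple}, and the claim is then the standard fact that a finite characteristically simple group is a direct product of isomorphic simple groups. To prove that fact I would pick a minimal normal subgroup $T \normal N$ (it exists since $N$ is finite and nontrivial), note that the subgroup generated by all images $T^{\alpha}$ with $\alpha \in \aut(N)$ is characteristic in $N$ hence equals $N$, and then take a maximal subfamily $T = T_1, T_2, \dots, T_k$ of these images whose internal product $D = T_1 \times \cdots \times T_k$ is direct. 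For any further image $T^{\alpha}$, the intersection $T^{\alpha} \cap D$ is normal in $N$, hence (by minimality of $T^{\alpha}$) trivial or all of $T^{\alpha}$; the trivial case contradicts maximality of the subfamily, so $T^{\alpha} \le D$, and therefore $D = N$. Each $T_i$ is simple because the factors of a direct product centralize one another, so a normal subgroup of $T_i$ is normalized by all of $N$ and hence, by minimality of $T_i \normal N$, is trivial or $T_i$. To obtain the socle itself I would run the same ``maximal direct subfamily'' trick on the list of all minimal normal subgroups of $G$: a maximal subfamily $N_1, \dots, N_r$ with $N_1 \times \cdots \times N_r$ direct must contain every minimal normal subgroup (same intersection argument, now inside $G$), so this product is $\soc(G)$, and it is a direct product of simple groups by the previous step.

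\textbf{Part (ii).} Now let $G \le \sym(\Omega)$ be primitive. I would use two elementary consequences of primitivity: every nontrivial normal subgroup of $G$ is transitive, and every nontrivial normal subgroup that acts semiregularly on $\Omega$ is in fact regular (its orbits form a $G$-invariant partition, which primitivity collapses). Distinct minimal normal subgroups $N_1 \ne N_2$ satisfy $N_1 \cap N_2 = 1$, hence $[N_1, N_2] = 1$, so $N_2 \le C_G(N_1)$. Since $Z(N_1)$ is characteristic in $N_1$ and hence normal in $G$, it is trivial or all of $N_1$. If $N_1$ is abelian it is transitive abelian, hence regular; then $C_G(N_1)$ acts semiregularly, and being normal and nontrivial it is regular, so the transitive subgroup $N_2$ of the regular group $C_G(N_1)$ equals it and also equals $N_1$ --- a contradiction; thus in the abelian case $\soc(G) = N_1$ is a single (characteristically simple, elementary abelian) subgroup. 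If $N_1$ is non-abelian then $Z(N_1) = 1$, so $C_G(N_1) \cap N_1 = 1$, and $C_G(N_1)$ centralizes the transitive group $N_1$, hence acts semiregularly; if nontrivial it is therefore regular and, containing the transitive $N_2$, equals $N_2$, so $N_2$ is regular, and symmetrically $C_G(N_2) = N_1$ is regular, with $N_2 \cong N_1$ (the centralizer in $\sym(\Omega)$ of a regular group is isomorphic to that group). Moreover any minimal normal subgroup distinct from $N_1$ lies in $C_G(N_1) = N_2$ and hence equals $N_2$, so $G$ has at most two minimal normal subgroups. Hence $\soc(G)$ is either a single characteristically simple subgroup or a direct product of two isomorphic ones, and in either case it is a direct product of isomorphic simple groups by part (i).

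\textbf{Main obstacle.} The genuinely careful part is the primitive case: getting the ``normal ${}+{}$ semiregular $\Rightarrow$ regular'' implication right, tracking the abelian versus non-abelian dichotomy for $N_1$, and pinning down $C_G(N_1)$ as a regular group equal to the other minimal normal subgroup when there is one. Everything in part (i), and the assembly of the socle, is routine once the ``maximal direct subfamily'' idea is in hand.
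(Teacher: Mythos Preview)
Your proof is correct. The paper does not actually prove this statement: it is recorded as a \emph{Fact} (background from the theory of primitive permutation groups) and is invoked without argument, immediately preceding the quotation of a corollary to the O'Nan--Scott--Aschbacher theorem. So there is nothing in the paper to compare your proof against; you have supplied a self-contained proof where the paper is content to cite.

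That said, your argument is exactly the standard one (characteristically simple $\Rightarrow$ direct power of a simple group for part (i); ``at most two minimal normal subgroups, isomorphic if there are two'' for part (ii)), and the details are handled carefully, including the semiregular-normal-implies-regular step under primitivity and the abelian/non-abelian split. One very minor stylistic point: in the abelian case you could shorten the contradiction by observing directly that $N_1 \le C_G(N_1)$ forces $C_G(N_1) = N_1$ (both regular, one contained in the other), whence $N_2 \le N_1$ and $N_2 = N_1 \cap N_2 = 1$; but your version is fine as written.
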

The following is an easy corollary to the O'Nan--Scott--Aschbacher Theorem.
\begin{corollary}[O'Nan--Scott--Aschbacher] \label{cor:onan}
Let $G\le\sss_n$ be a primitive group with a nonabelian socle
$\soc(G)\cong R^s$ where $R$ is a nonabelian simple group.  
Then $n\ge 5^s$.
\end{corollary}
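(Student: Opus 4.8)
The plan is to derive Corollary~\ref{cor:onan} directly from the O'Nan--Scott--Aschbacher theorem in the form quoted (cf.~\cite[Thm.~4.1A]{dixon}), which sorts the primitive groups with nonabelian socle $\soc(G)\cong R^s$ ($R$ nonabelian simple) into a short list of structural types; for each type I would read off the known lower bound on the degree and check that it is $\ge 5^s$. The quantitative ingredients are elementary: every nonabelian simple group has order $\ge |\aaa_5|=60$; a group admitting a faithful primitive action of degree $\le 4$ embeds in $\sss_4$ and is therefore solvable, so no such action can occur here; and the inequality $60^{\,s-1}\ge 5^s$ for every integer $s\ge 2$, which on taking logarithms reads $s\log 12\ge\log 60$ and already holds at $s=2$ since $\log 60/\log 12<2$ (equivalently, $\sqrt{60}\ge 5$, so $60^{\,s/2}\ge 5^s$ for all $s\ge 0$).

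First I would treat the \emph{almost simple} case, $s=1$: since no nonabelian simple group embeds in $\sss_4$, a faithful primitive representation has degree $n\ge 5=5^s$. Next, for the \emph{diagonal-type and twisted-wreath-type} families (simple diagonal, compound diagonal, twisted wreath, and the holomorph-of-simple/holomorph-of-compound constructions, whatever their exact names in the cited formulation) one always has $s\ge 2$, and the structure forces the point stabilizer into a diagonal subgroup, giving a degree bound of the shape $n\ge |R|^{\,e}$ with $e\ge s-1$ for the simple-diagonal and twisted-wreath cases and $e\ge s/2$ for the compound/holomorph variants; in every case $e\ge s/2$, so $n\ge |R|^{\,s/2}\ge 60^{\,s/2}\ge 5^s$, and in the simple-diagonal case one can also quote $n\ge |R|^{\,s-1}\ge 60^{\,s-1}\ge 5^s$ directly. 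The affine type is excluded by hypothesis, its socle being abelian.

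Finally I would handle the \emph{product action} type: $G\le H\wr\sss_k$ acting on $\Delta^k$, where $H\le\sym(\Delta)$ is a primitive group of one of the basic types above with nonabelian socle $R^{s_0}$ and $\soc(G)\cong\soc(H)^k\cong R^{s_0 k}$, so that $s=s_0 k$. The basic cases applied to $H$ give $|\Delta|\ge 5^{s_0}$, whence $n=|\Delta|^k\ge(5^{s_0})^k=5^s$. Combining the three cases yields $n\ge 5^s$ in full generality.

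I expect no real mathematical obstacle; the one thing requiring care is bookkeeping — matching the precise statement of the O'Nan--Scott--Aschbacher theorem in the cited source to the type labels used above, and in particular confirming (i) that the number $s$ of simple direct factors of $\soc(G)$ relates to the component data by $s=s_0 k$ in the product action, and (ii) that the degree formulas ($n\ge |R|^{\,s-1}$ for simple diagonal and twisted wreath, $n=|\Delta|^k$ for product action, and the analogous formulas for the remaining types) are quoted in exactly the form the reference provides. Once the conventions are aligned, the inequality $60^{\,s-1}\ge 5^s$ (for $s\ge 2$) does all the real work.
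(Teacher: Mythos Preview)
The paper does not actually supply a proof of this corollary: it simply states that the inequality $n\ge 5^s$ is ``an easy corollary to the O'Nan--Scott--Aschbacher Theorem'' and moves on. Your proposal is precisely the natural way to flesh out that remark --- run through the O'Nan--Scott types (almost simple, diagonal, compound diagonal, twisted wreath, product action, and the holomorph variants if one uses Praeger's eight-type formulation), read off the degree formula for each, and compare with $5^s$ using $|R|\ge 60$ and the elementary inequality $60^{s/2}\ge 5^s$. The arithmetic you record is correct in every case, and your own caveat about ``bookkeeping'' (matching the type list and degree formulas to whichever source you cite) is the only real point of care; there is no mathematical gap.
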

Now we state our result.
\begin{lemma}   \label{lem:prim-topquotient}
Let $G\le \sym(\Omega)$ be primitive. 
Assume $\vf : G\to\aaa_k$ is an epimorphism where
$k > \max\{8, 2+\log_2 n\}$.  Then $\vf$ is an isomorphism;
hence $G\cong \aaa_k$.
\end{lemma}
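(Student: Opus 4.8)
The plan is to show that $N:=\ker\varphi$ is trivial; since $\varphi$ is onto the simple group $\aaa_k$ (note $k\ge 9>5$), this forces $\varphi$ to be an isomorphism and hence $G\cong\aaa_k$. So suppose for contradiction that $N\neq 1$. Since $G$ is primitive and $N\trianglelefteq G$, the group $N$ is transitive on $\Omega$; consequently $|N|\ge n$ and the centralizer of $N$ in $\sym(\Omega)$ is semiregular, hence of order $\le n$.

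The first step is to prove $\soc(G)\le N$. If not, choose a minimal normal subgroup $M\trianglelefteq G$ with $M\nleq N$. Then $M\cap N$ is a normal subgroup of $G$ properly contained in the minimal normal subgroup $M$, so $M\cap N=1$, and therefore $[M,N]\le M\cap N=1$; thus $M$ centralizes the transitive group $N$, giving $|M|\le n$. On the other hand $MN/N\cong M$ is a nontrivial normal subgroup of $G/N\cong\aaa_k$, which is simple, so $M\cong\aaa_k$ and $|M|=k!/2\le n$. But $k>2+\log_2 n$ gives $n<2^{k-2}$, hence $k!<2^{k-1}$, which is false for every $k\ge 4$ — a contradiction. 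So $\soc(G)\le N$, and consequently $\aaa_k\cong G/N$ is a quotient of $G/\soc(G)$.

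The second step is to bound $G/\soc(G)$ using the structure of primitive groups, split according to whether $\soc(G)$ is abelian. If $\soc(G)$ is nonabelian, write $\soc(G)=T^s$ with $T$ nonabelian simple. A short argument gives $C_G(\soc(G))=1$: a nontrivial $C_G(\soc(G))\trianglelefteq G$ would contain a minimal normal subgroup of $G$, which lies in $\soc(G)$, contradicting $C_G(\soc(G))\cap\soc(G)=Z(\soc(G))=1$. Hence $G$ embeds in $\aut(\soc(G))$ with $\soc(G)$ as the inner part, so $G/\soc(G)\hookrightarrow\out(T^s)=\out(T)\wr\sym(s)$. By Schreier's Hypothesis $\out(T)$ is solvable, so every nonabelian composition factor of $\out(T)\wr\sym(s)$ is a composition factor of $\sym(s)$; since $\aaa_k$ is a composition factor of $G/\soc(G)$, this forces $k\le s$. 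But Corollary~\ref{cor:onan} yields $n\ge 5^s\ge 5^k$, which contradicts $n<2^{k-2}$. If instead $\soc(G)$ is abelian, then $\soc(G)\cong\Z_p^s$ is elementary abelian and regular, $G$ is of affine type with $n=p^s$, and $G/\soc(G)\cong G_\omega\hookrightarrow\gl(s,p)$. From the chain $\gl(s,p)\trianglerighteq\sl(s,p)\trianglerighteq Z(\sl(s,p))$ (with abelian quotients and $\sl(s,p)/Z(\sl(s,p))\cong\mathrm{PSL}(s,p)$) the only nonabelian composition factor of $\gl(s,p)$ is $\mathrm{PSL}(s,p)$, so $\aaa_k\cong\mathrm{PSL}(s,p)$. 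Here I would invoke the classical fact (going back to Artin, and contained in Dickson's book — no CFSG needed) that the only isomorphisms between an alternating group $\aaa_k$ with $k\ge 5$ and a projective special linear group are $\aaa_5\cong\mathrm{PSL}(2,4)\cong\mathrm{PSL}(2,5)$, $\aaa_6\cong\mathrm{PSL}(2,9)$ and $\aaa_8\cong\mathrm{PSL}(4,2)$, all with $k\le 8$; this contradicts $k>8$. In both cases we reach a contradiction, so $N=1$.

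The only external inputs are Schreier's Hypothesis (non-elementary but cited) and the classical list of alternating/projective coincidences (elementary). Everything else is routine normal-subgroup and commutator bookkeeping plus one call to Corollary~\ref{cor:onan}; note that Lemma~\ref{lem:topfactor} is \emph{not} needed for this particular statement. The step I expect to need the most care is the affine case: a mere order comparison does not suffice there, because $|\aaa_k|$ can in principle lie far below $|\gl(s,p)|$, so one genuinely relies on the coincidence list (or, alternatively, on the fact that the minimal faithful $\mathbb{F}_p$-representation of $\aaa_k$ has dimension $\ge k-2$ combined with a Clifford-theoretic reduction, which is more delicate). I would also separately dispatch the small degenerate cases $(s,p)\in\{(1,p),(2,2),(2,3)\}$, where $\gl(s,p)$ is solvable and so has no $\aaa_k$ composition factor at all.
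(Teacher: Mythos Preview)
Your Step~1 and the nonabelian-socle case are correct, and in fact a bit cleaner than the paper's argument: by first proving $\soc(G)\le\ker\vf$ via the centralizer bound for a transitive normal subgroup, you handle uniformly what the paper splits into Cases~2a and~2b.

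The affine case, however, has a real gap. You assert that since $\aaa_k$ is a composition factor of $G/\soc(G)\hookrightarrow\gl(s,p)$, and the only nonabelian composition factor of $\gl(s,p)$ is $\mathrm{PSL}(s,p)$, one must have $\aaa_k\cong\mathrm{PSL}(s,p)$. But composition factors of a \emph{subgroup} of $\gl(s,p)$ need not be composition factors of $\gl(s,p)$ itself. Intersecting your normal series for $\gl(s,p)$ with $G/\soc(G)$ shows only that $\aaa_k$ is a section of $\mathrm{PSL}(s,p)$, not that it is isomorphic to it; and $\mathrm{PSL}(s,p)$ has plenty of subgroups with other simple quotients --- for instance $\aaa_k$ itself embeds in $\mathrm{PSL}(k-1,p)$ (or $\mathrm{PSL}(k-2,p)$ when $p\mid k$) via the deleted permutation module. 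So the Artin--Dickson coincidence list is beside the point. (Your nonabelian case survives the analogous objection only because the residual factor there is $\sss_s$, and the crude order bound $|\aaa_k|\le s!$ already forces $k\le s$; as you yourself note, the order bound is too weak in the affine case.)

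What is actually needed is precisely the ``alternative'' you mention parenthetically: if $\aaa_k$ is involved as a section in $\gl(s,p)$ and $k\ge 9$, then $s\ge k-2$. This is \cite[Prop.~1.22]{saxl}, which packages Feit--Tits~\cite{feit-tits} together with \cite[Prop.~5.3.7]{kleidman}, and it is exactly what the paper invokes in its Case~1. With it one gets $k\le s+2\le 2+\log_p n\le 2+\log_2 n<k$, the desired contradiction. So your ``more delicate'' alternative is not optional --- it \emph{is} the argument.
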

\begin{proof}
Let $N=\soc(G)$. 
By Fact~\ref{fact:socle}
$N$ can be written as
$N = R_1\times\dots\times R_s$ 
where the $R_i$ are isomorphic simple groups.  

\mn Case 1.  $N$ is abelian (the ``affine case'') and therefore
regular, \ie, $n=|N|$.  In this case $N\cong \Z_p^s$ and $G/N\le \gl(s,p)$
for some prime $p$, so $n=p^s$.  Moreover $\aaa_k$ is involved in
$\gl(s,p)$.  It is shown in~\cite[Prop. 1.22]{saxl} that if $\aaa_k$ is
involved in $\gl(s,p)$ then, combining a result of 
Feit and Tits~\cite{feit-tits} with~\cite[Prop. 5.3.7]{kleidman}, 
for $k\ge 9$ it follows that $k\le s+2$.  But we have 
$s+2 \le 2+\log_p n<k$, a contradiction, so this case cannot occur.

\mn
Case 2. $N$ is nonabelian. 
By Cor.~\ref{cor:onan} we have $s\le\log_5 n$.  
In particular, $k>s$.

Following~\cite{beals}\footnote{Introduced in~\cite{beals} (1999),
this notation was subsequently adopted in computational group 
theory (see. e.\,g., \cite{holt}).}, 
let $\pker(G)$ (``permutation kernel'') denote the
kernel of the induced permutation action $G\to \sss_s$ (permuting the
copies of $R$ by conjugation by elements of $G$).  Then 
$\pker(G)\le \aut(R_1)\times\dots\times\aut(R_s)$.  It follows that
$\pker(G)/N\le \out(R_1)\times\dots\times\out(R_s)$ is solvable by
Schreier's Hypothesis (a known consequence of the CFSG).

Now $G/\pker G\le \sss_s$ and $s<k$ so $G/\pker G$ cannot involve
$\aaa_k$.  The solvable group $\pker(G)/N$ also does not involve
$\aaa_k$.  It follows that $G/N$ does not involve $\aaa_k$ and
therefore $\ker\vf \ngeq N$.  

Let $M$ be a minimal normal subgroup of $G$.

\mn
Case 2a.  
$M\neq N$.  Then there is a unique other minimal 
normal subgroup, $M^*$, the centralizer of $M$, which is
isomorphic to $M$.   It follows that $M$ is regular, so
$n=|M|$.  Moreover, $s$ is even and $|M|=|\aaa_k|^{s/2}$.
Hence, $n\ge |\aaa_k|> 2^k > n$, a contradiction. So this case
cannot occur.   

\mn
Case 2b. $M=N$ is the unique minimal normal subgroup of $G$.
Since $N \nleq \ker\vf$, it follows that $\ker \vf =1$.
\end{proof}

\begin{remark}   \label{rem:zeroweight}
The assumption $k>2+\log_2 n$ is tight infinitely often, as shown by
the affine case of even dimension in characteristic 2.  In this case
$G=\Z_2^{k-2}\rtimes \aaa_k$ acts primitively on $n=2^{k-2}$
elements as follows: $\aaa_k$ acts on $\Z_2^{k}$ by permuting
the coordinates; restrict this action to the zero-weight subspace
$\sum x_i =0$, and then to the quotient space by the one-dimensional
subspace $x_1=\dots=x_k$ 
(this is contained in the zero-weight
subspace when the dimension is even).  In this case, $k=2+\log_2 n$,
and $\aaa_k$ is a proper quotient of $G$.
\end{remark}

\subsection{Alternating quotients versus stabilizers}
\begin{lemma}   \label{lem:proper}
Let $G\le \sym(\Omega)$ be a transitive 
permutation group and $\vf : G\to\aaa_k$ an epimorphism where
$k > \max\{8, 2+\log_2 n\}$.  
Then $G_x^{\,\vf}\neq \aaa_k$ for any $x\in\Omega$.
\end{lemma}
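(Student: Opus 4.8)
The statement to prove is Lemma~\ref{lem:proper}: if $G\le\sym(\Omega)$ is transitive with $n=|\Omega|$, and $\vf:G\to\aaa_k$ is an epimorphism with $k>\max\{8,2+\log_2 n\}$, then $G_x^{\,\vf}\neq\aaa_k$ for every $x\in\Omega$. I would argue by contradiction: suppose $G_x^{\,\vf}=\aaa_k$ for some $x$, i.e.\ the point stabilizer already surjects onto $\aaa_k$. By transitivity all point stabilizers are conjugate, so in fact $G_y^{\,\vf}=\aaa_k$ for every $y\in\Omega$. The idea is that this forces a subgroup of $G$, properly contained in $G$ (since it is contained in a point stabilizer and $G$ is transitive with $n\ge 2$), still to carry the full alternating quotient, and then iterating — or a single clean argument — collapses this to a primitive situation where Lemma~\ref{lem:prim-topquotient} applies and yields a contradiction.

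\textbf{Key steps.} First I would reduce to the case where $G$ acts \emph{faithfully} and I would pass to a minimal object on which $\aaa_k$ still appears as a quotient. Concretely: among all subgroups $H\le G$ with $H^{\,\vf}=\aaa_k$ (or at least $H^{\,\vf}\ge\aaa_k$, then use that $\aaa_k$ is simple so the image is exactly $\aaa_k$ or $\sss_k$, and here it lands in $\aaa_k$), pick one of minimal order; call it $H$, with $N=\ker(\vf|_H)\normal H$ and $H/N\cong\aaa_k$. The hypothesis $G_x^{\,\vf}=\aaa_k$ means $G_x$ itself is such a subgroup, so $H\le G_x$ for some $x$ after replacing $G_x$ by something smaller if needed — in any case $H\ne G$ is not transitive, but more importantly $H$ is a proper subgroup. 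Now consider a maximal subgroup $K$ of $H$ containing $N$: since $\aaa_k$ is simple and $K/N$ would be a proper subgroup of $\aaa_k$ not containing $\aaa_k$, by minimality of $H$ every maximal subgroup of $H$ containing $N$ has image a \emph{proper} subgroup of $\aaa_k$. The plan is then to let $H$ act on the coset space of a suitable point stabilizer / maximal subgroup to get a \emph{primitive} permutation action of some group $\Gbar$ with $\aaa_k$ as a quotient and degree $\le n$: a minimal normal subgroup argument (invoking Lemma~\ref{lem:topfactor}, the simple-quotient-of-subdirect-product lemma, applied to the restrictions of $H$ to its orbits) shows that one orbit already exhibits $\aaa_k$ as a quotient, and then taking a maximal block system on that orbit we get a primitive group of degree $\le n$ surjecting onto $\aaa_k$. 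Finally, Lemma~\ref{lem:prim-topquotient} says that for $k>\max\{8,2+\log_2 n\}$ such a primitive group is isomorphic to $\aaa_k$ — but then it is transitive of degree $\ge k$ in its primitive action, while $\aaa_k$ acting primitively has degree at least $k$ (natural action) or much larger, and in any case the degree being $\le n < 2^{k-2}$ combined with the structure of $\aaa_k\cong$ the primitive group forces the natural degree-$k$ action; tracking where $x$ went, we find that the original point stabilizer $G_x$ maps \emph{onto} $\aaa_k$ forces a contradiction with the subnormal / index bookkeeping, because a point stabilizer in this reconstructed picture is a \emph{proper} subgroup of $\aaa_k$, not all of it.

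\textbf{The main obstacle.} The delicate point is the reduction from the transitive group $G$ (or its chosen subgroup $H$) to a \emph{primitive} quotient of controlled degree so that Lemma~\ref{lem:prim-topquotient} can be invoked with the same bound $k>2+\log_2 n$ — one must be careful that the block-refinement and the passage to an orbit do not increase the degree beyond $n$, and that the alternating quotient genuinely survives each reduction (this is exactly what Lemma~\ref{lem:topfactor} is for: a subdirect product cannot have $\aaa_k$ as a quotient unless one factor does). The second subtle step is the endgame: having landed on a primitive $\Gbar$ with $\Gbar\cong\aaa_k$, one must convert the isomorphism $\vf$ into a statement about point stabilizers, and show that the assumption ``$G_x^{\,\vf}=\aaa_k$'' is incompatible with $G_x$ being (conjugate into) a proper subgroup of the primitive copy of $\aaa_k$ — I expect this to come down to the simplicity of $\aaa_k$ (so $\ker\vf$ restricted to $G_x$ is either trivial or everything) together with the orbit–stabilizer count $|x^G|=|G:G_x|\le n<|\aaa_k|$, which already forbids $\ker\vf$ from being contained in $G_x$ when $G_x$ surjects. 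In short, I would expect the heavy lifting to be entirely in setting up the primitive reduction cleanly; once there, Lemma~\ref{lem:prim-topquotient} and an index count finish it.
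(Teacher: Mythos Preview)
Your plan has the right ingredients---Lemma~\ref{lem:topfactor} for the subdirect-product reduction, Lemma~\ref{lem:prim-topquotient} for the primitive endgame, and an orbit/index count---and these are precisely the tools the paper uses. But the scaffolding has two real gaps.

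\textbf{The minimality device is misconfigured.} You choose $H\le G$ of minimal order with $H^{\,\vf}=\aaa_k$, but when you pass (via Lemma~\ref{lem:topfactor}) to the restriction of $H$ to one of its orbits, you obtain a \emph{quotient} of $H$, not a subgroup of $G$; minimality of $H$ inside $G$ tells you nothing about it. The paper avoids this by doing straight induction on $|G|$ over \emph{all} transitive groups satisfying the hypothesis on $k$, so that the inductive hypothesis applies to any smaller transitive group with an $\aaa_k$-epimorphism, regardless of whether it sits inside $G$.

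\textbf{The reduction to the primitive case and the endgame need a case split you are missing.} The paper takes a maximal block $B\ni x$, lets $K$ be the kernel of the $G$-action on the block system, and splits on whether $K\le N:=\ker\vf$. If $K\le N$, the primitive quotient $G/K$ surjects onto $\aaa_k$, Lemma~\ref{lem:prim-topquotient} gives $G/K\cong\aaa_k$, hence $K=N$; but then $G=NG_B=KG_B=G_B$ (since $K\le G_B$), forcing $B=\Omega$, a contradiction. If instead $KN=G$, then $K^{\,\vf}=\aaa_k$; by Lemma~\ref{lem:topfactor} and conjugacy of the $K$-orbits under $G$, the restriction $K_i$ of $K$ to the orbit $\Omega_i$ containing $x$ still surjects onto $\aaa_k$ via some $\psi$. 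Now the \emph{inductive hypothesis} (this is where it is actually used) gives $(K_i)_x^{\,\psi}\neq\aaa_k$. Since $K\normal G$ we have $K_x\normal G_x$, so $K_x^{\,\vf}=(K_i)_x^{\,\psi}\normal G_x^{\,\vf}=\aaa_k$; simplicity of $\aaa_k$ forces $K_x^{\,\vf}=1$. Then
\[
n \;>\; |x^K| \;=\; |K:K_x| \;\ge\; |K^{\,\vf}:K_x^{\,\vf}| \;=\; k!/2 \;>\; 2^{k-2} \;>\; n,
\]
the contradiction. Your ``tracking where $x$ went'' and ``index count'' need exactly this chain: the normality $K_x\normal G_x$, the simplicity of $\aaa_k$ to kill $K_x^{\,\vf}$, and then the orbit bound $|x^K|\ge |\aaa_k|$ rather than anything about $|x^G|$ or $\ker\vf\le G_x$.
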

\begin{proof}
We proceed by induction on the order of $G$.
Let $N=\ker\vf$.  Assume for a contradiction that $G_x^{\,\vf}=\aaa_k$,
\ie, $NG_x=G$.

Let $B$ be a maximal block of imprimitivity containing $x$ (so
$|B|<|\Omega|$).  (If $G$ is primitive then $B=\{x\}$.)  
So $G_B\ge G_x$ and therefore $NG_B=G$.

Let $\Omega'$ be the set of $G$-images of $B$.  This is a system of
imprimitivity on which $G$ acts as a primitive group; let $K$ be
the kernel of this action.  

Since $N$ is a maximal normal subgroup of $G$, we have $K\le N$ or
$KN=G$. 

If $K\le N$ then $\vf$ maps the primitive group $G/K$ onto $\aaa_k$
and therefore by Lemma~\ref{lem:prim-topquotient}, $G/K\cong \aaa_k$,
hence $K=N$ and therefore $KG_B=G$.  But obviously $G_B\ge K$, so
$G=KG_B=G_B$ and therefore $|\Omega'|=1$, \ie, $B=\Omega$,
a contradiction.

So we have $KN=G$, \ie,
$K^{\,\vf}=\aaa_k$.  Let $\Omega_1,\dots,\Omega_m$ denote the
orbits of $K$ 
$(m\ge 2)$. 
Let $K_i$ denote the restriction of $K$ to
$\Omega_i$ and $M_i\normal K$ the kernel of the $K\to K_i$
epimorphism.  By Lemma~\ref{lem:topfactor}, $(\exists i)(M_i\le N)$.
The set $(\Omega_1,\dots,\Omega_m)$ is a system of imprimitivity for $G$
on which $G$ acts transitively, so the $M_i$ are conjugate subgroups
in $G$ and therefore $M_i\le N$ for all $i$.  
Let $x\in\Omega_i$.  It follows from $M_i\le N$  that
the epimorphism $K\to \aaa_k$ (restriction of $\vf$ to $K$)
factors across $K_i$ as $K\to K_i\stackrel{\psi}{\to} \aaa_k$,
so $K_i^{\psi}=\aaa_k$.
By the inductive hypothesis, applied to
$K_i$, we infer that $(K_i)_x^{\,\psi}\neq \aaa_k$. 
On the other hand, 
$(K_i)_x^{\psi}=K_x^{\,\vf}\normal G_x^{\,\vf}=\aaa_k$. 
We conclude that
$|K_x^{\,\vf}|=1$ 
and therefore
$n > |x^K|=|K:K_x|\ge |K^{\,\vf}:K_x^{\,\vf}|
 =|K^{\,\vf}|=k!/2 > 2^{k-2} > n$, a contradiction.
%
\end{proof}

\begin{remark}
Again, the assumption $k>2+\log_2 n$ is tight; the Lemma fails infinitely
often if $k=2+\log_2 n$ is permitted.  This is shown by the same 
examples as in Remark~\ref{rem:zeroweight}.
\end{remark}

Next we extend Lemma~\ref{lem:proper} to not necessarily transitive groups.
\begin{lemma}   \label{lem:proper1}
Let $G\le \sym(\Omega)$ be a permutation group and 
$\vf : G\to\aaa_k$ an epimorphism.  Assume 
$k > \max\{8, 2+\log_2 n_0\}$ where $n_0=n_0(G)$ denotes
the length of the largest orbit of $G$.  
Then $G_x^{\,\vf}\neq \aaa_k$ for some $x\in\Omega$.
\end{lemma}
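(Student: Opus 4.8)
The plan is to reduce Lemma~\ref{lem:proper1} to the transitive case already handled by Lemma~\ref{lem:proper}. The obstacle to a direct application is that $G$ need not be transitive, so we must find a single orbit on which the image of the point stabilizer is still proper. First I would let $N=\ker\vf$, so that $G/N\cong\aaa_k$, and let $\Omega_1,\dots,\Omega_m$ be the orbits of $G$ on $\Omega$, with $n_i=|\Omega_i|\le n_0$. Write $G_i=G^{\Omega_i}$ for the transitive restriction of $G$ to $\Omega_i$ and let $M_i\normal G$ be the kernel of the epimorphism $G\to G_i$; these kernels satisfy $\bigcap_i M_i=1$ since the $\Omega_i$ exhaust $\Omega$.

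The key step is to produce an index $i$ with $M_i\le N$. Since $N$ is a maximal normal subgroup of $G$ (because $G/N\cong\aaa_k$ is simple), for each $i$ either $M_i\le N$ or $M_iN=G$. If $M_iN=G$ for \emph{all} $i$, then, exactly as in the proof of Lemma~\ref{lem:topfactor} (commutator argument of P\'alfy), we get $[G,\dots,G]\le N[M_1,\dots,M_m]\le N\bigl(\bigcap_i M_i\bigr)=N$, so $G/N$ is abelian, contradicting $G/N\cong\aaa_k$ nonabelian (recall $k\ge 9$). Hence $M_i\le N$ for some $i$. For that $i$, the epimorphism $\vf$ factors through $G_i$: there is $\psi:G_i\to\aaa_k$ with $G_i^{\psi}=\aaa_k$.

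Now $G_i$ is transitive on $\Omega_i$, which has size $n_i\le n_0$, and $k>\max\{8,2+\log_2 n_0\}\ge\max\{8,2+\log_2 n_i\}$, so Lemma~\ref{lem:proper} applies to $(G_i,\psi)$: $(G_i)_x^{\,\psi}\neq\aaa_k$ for every $x\in\Omega_i$. Pick any such $x\in\Omega_i\subseteq\Omega$. It remains to translate this back to $G$. For $x\in\Omega_i$, the stabilizer $G_x$ maps onto $(G_i)_x$ under $G\to G_i$ (the restriction map sends $G_x$ into $(G_i)_x$, and is surjective onto it because any element of $G_i$ fixing $x$ lifts to an element of $G$ fixing $x$, as $M_i$ fixes $\Omega_i$ pointwise). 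Since $M_i\le N$, the map $\vf$ on $G_x$ factors as $G_x\to(G_i)_x\stackrel{\psi}{\to}\aaa_k$, so $G_x^{\,\vf}=(G_i)_x^{\,\psi}\neq\aaa_k$. This $x$ witnesses the conclusion.

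The main obstacle is purely bookkeeping: checking that the point stabilizer of $G$ really surjects onto the point stabilizer of the orbit restriction, and that the factorization through $G_i$ respects stabilizers. Both follow from $M_i$ acting trivially on $\Omega_i$; I do not anticipate any genuine difficulty beyond stating these facts carefully. (An alternative, perhaps cleaner, route would be to apply Lemma~\ref{lem:topfactor} directly with $K=G$ and $S=\aaa_k$ to obtain the index $i$ with $M_i\le\ker\vf$, bypassing the commutator computation; I would use whichever phrasing is shortest.)
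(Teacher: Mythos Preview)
Your proof is correct and follows essentially the same route as the paper: the paper invokes Lemma~\ref{lem:topfactor} directly to obtain an index $i$ with $M_i\le\ker\vf$, then applies Lemma~\ref{lem:proper} to the transitive restriction $G_i$ and reads off $G_x^{\,\vf}=(G_i)_x^{\,\psi}\neq\aaa_k$ for $x\in\Omega_i$. Your version simply unpacks more of the bookkeeping (the commutator argument behind Lemma~\ref{lem:topfactor} and the surjection $G_x\to(G_i)_x$) that the paper leaves implicit.
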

\begin{proof}
Let $\Omega_1,\dots,\Omega_m$ be the orbits of $G$ and let
$G_i$ be the restriction of $G$ to $\Omega_i$.  So $G$ is
a subdirect product of the $G_i$.  Let $M_i$ denote the
kernel of the $G\to G_i$ epimorphism.   By Lemma~\ref{lem:topfactor},
$(\exists i)(M_i\le \ker\vf)$, so $\vf$ factors across the
restriction $G\to G_i$ as $G\to G_i\stackrel{\psi}{\to}\aaa_k$.
So $G_i^{\psi}=\aaa_k$.  

Let $x\in\Omega_i$.  We apply Lemma~\ref{lem:proper} to 
$G_i$ and notice that $G_x^{\,\vf} = (G_i)_x^{\psi}\neq\aaa_k$.
\end{proof}

The following result, Theorem~\ref{thm:unaffected},
along with a companion observation, Cor.~\ref{cor:affected},
will be the principal tools
for our central algorithm, the {\sf LocalCertificates}
procedure.  Recall that $G_{(D)}$ denotes the
pointwise stabilizer of $D$ in $G$ ($D\subseteq\Omega$).

\begin{definition}[Affected]
We say that the homomorphism $\vf : G\to \sss_k$ is a 
\emph{giant representation}
if $G^{\,\vf}\ge\aaa_k$.  We say that $x\in\Omega$ is \emph{affected}
by $\vf$ if $G_x^{\,\vf}\ngeq \aaa_k$.
\end{definition}

\begin{theorem}[Unaffected Stabilizer Theorem] \label{thm:unaffected}
Let $G\le \sym(\Omega)$ be a permutation group and 
$\vf : G\to\sss_k$ a giant representation.  Assume 
$k > \max\{8, 2+\log_2 n_0\}$ where $n_0=n_0(G)$ denotes
the length of the largest orbit of $G$.  
Let $D$ be the set of elements of $\Omega$ not affected by $\vf$.
Then $G_{(D)}^{\,\vf}\ge\aaa_k$.
\end{theorem}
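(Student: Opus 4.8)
The plan is to proceed by induction on $|\Omega|$ (equivalently on $|G|$, since I may replace $G$ by $G^{\,\vf}$-relevant pieces), using the previously established Lemma~\ref{lem:proper1} as the engine. The statement to prove is that $G_{(D)}^{\,\vf}\ge\aaa_k$, where $D$ is the set of unaffected points. First I would record the trivial base case: if every point is affected then $D=\emptyset$, $G_{(D)}=G$, and there is nothing to prove. So assume $D\neq\emptyset$, pick an unaffected point $x\in D$, and consider the stabilizer $G_x$. By definition of ``unaffected,'' $G_x^{\,\vf}\ge\aaa_k$, so $\vf$ restricted to $G_x$ is again a giant representation. The orbits of $G_x$ are all of length $\le n_0$ (they are contained in $G$-orbits, which have length $\le n_0$), so the hypothesis $k>\max\{8,2+\log_2 n_0\}$ is inherited by $(G_x,\vf|_{G_x})$.

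The key step is to understand how ``affected'' behaves when we pass from $G$ to $G_x$. Let $D'$ be the set of points unaffected \emph{by $\vf|_{G_x}$}, i.e. $y\in D'$ iff $(G_x)_y^{\,\vf}\ge\aaa_k$. I would argue that $D\subseteq D'\cup\{x\}$: if $y\in D$ and $y\neq x$, then $y$ is unaffected for $G$, meaning $G_y^{\,\vf}\ge\aaa_k$; but is $(G_x)_y=G_x\cap G_y=(G_y)_x$ still mapped onto (at least) $\aaa_k$? This is exactly where Lemma~\ref{lem:proper1} must be invoked in contrapositive form, or rather: the point is that since $y$ is unaffected for $G$, applying the contrapositive of Lemma~\ref{lem:proper1} to $G_y$ would force a contradiction if $(G_y)_x^{\,\vf}$ failed to contain $\aaa_k$ — wait, I need to be careful about the direction. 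Actually the cleaner route: Lemma~\ref{lem:proper1} says that whenever a group $H$ with $n_0(H)<2^{k-2}$ (and $k>8$) has $H^{\,\vf}=\aaa_k$, some point-stabilizer $H_z^{\,\vf}\neq\aaa_k$. I want the \emph{stronger} conclusion that for an unaffected point $x$, stabilizing $x$ does not shrink the image below $\aaa_k$ at any \emph{other} unaffected point. The right formulation is: if $x,y\in D$ then $G_{x,y}^{\,\vf}\ge\aaa_k$. Suppose not; then $G_{x,y}^{\,\vf}$ is a proper subgroup of $\aaa_k$ while $G_x^{\,\vf},G_y^{\,\vf}\ge\aaa_k$. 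Consider $H=\langle G_x, G_y\rangle$ or better just work inside $G_y$: in $G_y$, the point $x$ has $(G_y)_x^{\,\vf}=G_{x,y}^{\,\vf}\not\ge\aaa_k$, so $x$ is affected for $(G_y,\vf)$. But $x$ is unaffected for $(G,\vf)$. I would then need a transitivity/connectivity argument showing these cannot coexist, which is precisely the content one extracts by iterating: this is where the socle analysis and Lemma~\ref{lem:topfactor} from the transitive/primitive reductions get reused.

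Given that compatibility — that $D$ consists of points unaffected simultaneously, so that $G_{(D)}^{\,\vf}\ge\aaa_k$ follows by stabilizing the unaffected points one at a time without ever dropping the image below $\aaa_k$ — the induction closes: after individualizing one $x\in D$, $G_x$ is a strictly smaller group with a giant representation, its set of unaffected points contains $D\setminus\{x\}$, and the inductive hypothesis gives $(G_x)_{(D\setminus\{x\})}^{\,\vf}\ge\aaa_k$; but $(G_x)_{(D\setminus\{x\})}=G_{(D)}$. The main obstacle I expect is exactly the lemma that stabilizing one unaffected point keeps every other unaffected point unaffected (the ``$D\subseteq D'\cup\{x\}$'' step); this requires combining Lemma~\ref{lem:proper1}, Lemma~\ref{lem:topfactor} (to handle the intransitive/subdirect-product structure of $G_x$ on its orbits), and the primitive case Lemma~\ref{lem:prim-topquotient} — essentially re-running the orbit-by-orbit and block-by-block reductions already used to prove Lemma~\ref{lem:proper1}, but now tracking a distinguished unaffected point through the reduction and verifying it stays unaffected. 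Once that compatibility lemma is in hand, everything else is bookkeeping.
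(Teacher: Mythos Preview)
Your inductive approach has a real gap at exactly the place you flag as ``the main obstacle'': the compatibility lemma that stabilizing one unaffected point $x$ keeps every other unaffected point $y$ unaffected for $G_x$, i.e.\ $G_{x,y}^{\,\vf}\ge\aaa_k$. You gesture at re-running the reductions behind Lemma~\ref{lem:proper1} while ``tracking a distinguished unaffected point,'' but Lemma~\ref{lem:proper1} only produces \emph{some} affected point; it gives no control over \emph{which} points are affected, and I do not see how to extract the two-point statement $G_{x,y}^{\,\vf}\ge\aaa_k$ from its proof without essentially reproving the full theorem. (Note also that your induction parameter needs care: $G_x$ acts on the same $\Omega$, and if $x$ happens to be a $G$-fixed point then $G_x=G$; you would need to induct on $|G|$ and first discard fixed points from $D$.)

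The paper bypasses this entirely with a one-line structural observation you are missing: the set $D$ of unaffected points is $G$-invariant (if $x$ is unaffected then so is $x^g$, since $G_{x^g}=G_x^{\,g}$), hence $G_{(D)}\normal G$, hence $G_{(D)}^{\,\vf}\normal G^{\,\vf}$. By simplicity of $\aaa_k$ this forces $G_{(D)}^{\,\vf}$ to be either trivial or to contain $\aaa_k$. If it were trivial, $\vf$ would factor through the restriction $P:=G^{D}\le\sym(D)$ as $G\to P\stackrel{\psi}{\to}\aaa_k$; a single application of Lemma~\ref{lem:proper1} to $P$ then yields some $x\in D$ with $P_x^{\,\psi}=G_x^{\,\vf}\not\ge\aaa_k$, contradicting $x\in D$. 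So normality plus simplicity replaces your entire inductive scaffold, and the compatibility lemma you were worried about drops out as a corollary rather than a prerequisite.
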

\begin{proof}
First assume $G^{\,\vf}=\aaa_k$.  The set
$D$ is $G$-invariant and $G_{(D)}$ is the kernel of the 
restriction map $G\to\sym(D)$.  Let $P\le\sym(D)$ be the
image of this map (restriction of $G$ to $D$), so
$P\cong G/G_{(D)}$.  Since $G_{(D)}\normal G$, we have
$G_{(D)}^{\,\vf}\normal G^{\,\vf}=\aaa_k$.  Assume for a contradiction
that $G_{(D)}^{\,\vf}\neq\aaa_k$; it follows that
$|G_{(D)}^{\,\vf}|=1$, \ie, $G_{(D)}\le\ker(\vf)$.
Hence $\vf$ factors across $P$ as 
$G\to P\stackrel{\psi}{\to} \aaa_k$. It follows that
$P^{\psi} = G^{\,\vf}=\aaa_k$ so $\psi$ is an epimorphism.  
By Lemma~\ref{lem:proper1} we have $P_x^{\,\psi}\neq\aaa_k$
for some $x\in D$.  But $P_x^{\,\psi}=G_x^{\,\vf}=\aaa_k$
(because $x\in D$ is not affected by $\vf$),
a contradiction.

Now if $G^{\,\vf}=\sss_k$ then
let $G_1=\vf^{-1}(\aaa_k)$.  Let $\vf_1$ be the restriction of $\vf$
to $G_1$, so $\vf_1 : G_1\to\aaa_k$ is an epimorphism.  Moreover,
$x\in\Omega$ is affected by $\vf$ if and only if $x$ is affected
by $\vf_1$ (because $\aaa_k$ has no subgroup of index 2).
An application of the previous case
to $(G_1,\vf_1)$ completes the proof.
\end{proof}

\begin{proposition}   \label{prop:affected-orbit}
Let $G\le \sym(\Omega)$ be a permutation group and 
$\vf : G\to H$ an epimorphism.  Let $\Delta\subseteq\Omega$
be an orbit of $G$ and $x\in\Delta$.  Let $L=G_x^{\,\vf}$.
Then each orbit of $\ker(\vf)$ in $\Delta$ 
has length $|\Delta|/k$ where $k=|H:L|$.  
\end{proposition}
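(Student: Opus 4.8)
The plan is to analyze the action of $N = \ker(\vf)$ on the single $G$-orbit $\Delta$ by combining transitivity of $G$ on $\Delta$ with a counting argument. First I would fix $x \in \Delta$ and let $L = G_x^{\,\vf} \le H$, so that $k = |H:L|$. The key structural observation is that the orbits of the normal subgroup $N$ inside $\Delta$ are permuted transitively by $G$ (since $N \normal G$ and $G$ is transitive on $\Delta$), hence they all have the same length, say $\ell$; if there are $b$ of them then $|\Delta| = b\ell$. So it suffices to show $b = k$, i.e. that the number of $N$-orbits in $\Delta$ equals $|H:L|$.

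The main step is to set up a $G$-equivariant bijection between the set of $N$-orbits in $\Delta$ and the coset space $H/L$ (or rather, to identify both with a common object). Consider the orbit $x^N$ of $x$ under $N$. I would show that the stabilizer in $G$ of the \emph{set} $x^N$ is exactly $N G_x$: indeed $g$ stabilizes $x^N$ iff $x^g \in x^N$ iff $x^g = x^h$ for some $h \in N$ iff $g h^{-1} \in G_x$ iff $g \in G_x N = N G_x$ (using $N \normal G$ so $N G_x = G_x N$ is a subgroup). Therefore the number of $N$-orbits in $\Delta$, being the number of $G$-translates of the block $x^N$, is $|G : N G_x|$. Now apply $\vf$: since $N = \ker(\vf)$, we get $|G : N G_x| = |G^{\,\vf} : (N G_x)^{\,\vf}| = |H : G_x^{\,\vf}| = |H : L| = k$. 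Hence $b = k$ and each $N$-orbit in $\Delta$ has length $|\Delta|/k$, as claimed.

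The only point requiring care is the interchange $|G:NG_x| = |G^{\,\vf}:(NG_x)^{\,\vf}|$, which holds because $N = \ker(\vf) \le NG_x$, so $\vf$ induces a bijection between cosets of $NG_x$ in $G$ and cosets of its image in $H$; and the identification $(NG_x)^{\,\vf} = G_x^{\,\vf}$ holds because $N^{\,\vf} = 1$. These are routine, so I expect no real obstacle; the one thing to double-check is that $NG_x$ really is a subgroup, which is immediate from $N \normal G$. This completes the proof.
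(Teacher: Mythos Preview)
Your proof is correct and follows essentially the same route as the paper: both arguments hinge on the identity $|G:NG_x| = |G^{\,\vf}:G_x^{\,\vf}| = |H:L| = k$ (valid because $N=\ker\vf\le NG_x$), and then deduce the orbit length $|\Delta|/k$. The only cosmetic difference is that the paper computes $|x^N| = |N:N_x| = |NG_x:G_x| = |\Delta|/k$ directly via the second isomorphism theorem, whereas you first count the number $b$ of $N$-orbits as $|G:NG_x|=k$ and then divide; these are two sides of the same index calculation.
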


\begin{proof}
First we note that $k$ only depends on $\Delta$, not on the
specific element $x\in\Delta$ because if $y\in \Delta$ then
$G_x$ and $G_y$ are conjugates in $G$.
Now let $N=\ker(\vf)$ and $|\Delta|=d$.  So $d=|G:G_x|$.   
The length of the $N$-orbit
$x^N$ is $|N:N_x|$.  We have
$|G:NG_x|=|G^{\,\vf}:G_x^{\,\vf}|=|H:L|=k$.  Therefore
$|NG_x : G_x|= d/k$.  But   
$|N:N_x|=|N:(N\cap G_x)|=|NG_x :G_x|=d/k$.
\end{proof}

\begin{corollary}[Affected Orbit Lemma] \label{cor:affected}
Let $G\le \sym(\Omega)$ be a permutation group and 
$\vf : G\to\sss_k$ a giant representation.  Assume $k\ge 5$.
Then, if $\Delta$ is an affected $G$-orbit, \ie,
$\Delta\cap D=\emptyset$, then $\ker(\vf)$ is not
transitive on $\Delta$; in fact, each orbit of
$\ker(\vf)$ in $\Delta$ has length $\le |\Delta|/k$.
\end{corollary}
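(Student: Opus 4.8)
The statement to prove is the Affected Orbit Lemma (Cor.~\ref{cor:affected}): if $\vf:G\to\sss_k$ is a giant representation with $k\ge 5$ and $\Delta$ is an affected $G$-orbit, then each $\ker(\vf)$-orbit in $\Delta$ has length $\le|\Delta|/k$.

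Proposal:

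The plan is to derive this directly from Proposition~\ref{prop:affected-orbit}, which is the general orbit-counting fact stated just above. First I would apply Prop.~\ref{prop:affected-orbit} with $H=G^{\,\vf}$ and a chosen basepoint $x\in\Delta$: it tells us that every $\ker(\vf)$-orbit in $\Delta$ has length exactly $|\Delta|/k'$, where $k'=|G^{\,\vf}:G_x^{\,\vf}|$ is the index of the image of the point stabilizer. So the whole content of the corollary reduces to the purely group-theoretic claim that, because $x$ is affected and $G^{\,\vf}\ge\aaa_k$ with $k\ge 5$, this index $k'$ is at least $k$.

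The key step is therefore: \emph{if $L\le\sss_k$ is a subgroup with $L\ngeq\aaa_k$ and $G^{\,\vf}$ is a group with $\aaa_k\le G^{\,\vf}\le\sss_k$, then $|G^{\,\vf}:L\cap G^{\,\vf}|\ge k$.} Here $L=G_x^{\,\vf}$, and ``$x$ affected'' is precisely the hypothesis $L\ngeq\aaa_k$. The cleanest route: set $M=L\cap\aaa_k$. Since $L$ does not contain $\aaa_k$ and $\aaa_k$ is simple for $k\ge 5$, the subgroup $M$ is a proper subgroup of $\aaa_k$, so $|\aaa_k:M|\ge k$ by the classical fact that the smallest index of a proper subgroup of $\aaa_k$ (for $k\ge 5$) is $k$ (realized by a point stabilizer). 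Then a short index computation inside $G^{\,\vf}$ finishes it: $|G^{\,\vf}:L\cap G^{\,\vf}|\ge|\aaa_k:\aaa_k\cap L|=|\aaa_k:M|\ge k$, using that $\aaa_k\le G^{\,\vf}$ so $\aaa_k\cap L = M$ embeds the relevant cosets. (One must be slightly careful here: $|G^{\,\vf}:L\cap G^{\,\vf}|\ge|\aaa_k:\aaa_k\cap(L\cap G^{\,\vf})|=|\aaa_k:\aaa_k\cap L|$ because distinct cosets of $\aaa_k\cap L$ in $\aaa_k$ lie in distinct cosets of $L\cap G^{\,\vf}$ in $G^{\,\vf}$ — this is the standard inequality $|H:H\cap K|\le|G:K|$ applied with $H=\aaa_k$, $K=L\cap G^{\,\vf}$, inside $G=G^{\,\vf}$.)

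The main obstacle — really the only non-routine ingredient — is invoking the bound that a proper subgroup of $\aaa_k$ has index $\ge k$ for $k\ge 5$; this is classical (it follows from the primitivity/simplicity of $\aaa_k$ and is where $k\ge 5$ enters), and I would simply cite it rather than reprove it. Everything else is bookkeeping with indices. I would also remark that the ``in particular, $\ker(\vf)$ is not transitive on $\Delta$'' clause is immediate: a transitive action would force a single orbit of length $|\Delta|$, contradicting the bound $|\Delta|/k<|\Delta|$ since $k\ge 5>1$.
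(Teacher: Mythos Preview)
Your proposal is correct and follows essentially the same route as the paper: apply Prop.~\ref{prop:affected-orbit} and then use that, for $k\ge 5$, any subgroup of $G^{\vf}\in\{\aaa_k,\sss_k\}$ not containing $\aaa_k$ has index at least $k$. The paper simply states this index fact for $\aaa_k$ and $\sss_k$ separately, while you reduce both cases to the $\aaa_k$ statement via the inequality $|G^{\vf}:L|\ge|\aaa_k:\aaa_k\cap L|$; this is a cosmetic difference, not a substantive one.
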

\begin{proof}
For $k\ge 5$, the largest proper subgroup of $\aaa_k$
has index $k$, and the largest subgroup of $\sss_k$ not
containing $\aaa_k$ also has index $k$.  So the
statement follows from Prop.~\ref{prop:affected-orbit}.
\end{proof}

\begin{remark}
If $k\ge \max\{9,2\log_2 n_0\}$ then we can use
Theorem~\ref{thm:altquotient} to make a more detailed statement.  We
observe that $\ker(\vf)$ fixes each standard block (setwise)
(see item~\eqref{item:johnson} in Theorem~\ref{thm:altquotient})
so the length of each orbit of $\ker(\vf)$ contained in $\Delta$ is 
$\le |\Delta|/\binom{k}{t_{\Delta}}$.
\end{remark}

\subsection{Subgroups of small index in $\sss_n$}


\begin{observation}   \label{obs:unique}
Let $T,U\subset\Omega$, $|T|, |U| < n/2$, where
$n=|\Omega|\ge 5$.
Assume \\
$\alt(\Omega)_{(T)}\le \sym(\Omega)_{U}$.  Then $U\subseteq T$.
\end{observation}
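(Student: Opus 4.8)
\textbf{Proof proposal for Observation~\ref{obs:unique}.}

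The plan is to argue by contraposition: suppose there exists an element $u\in U\setminus T$, and derive a contradiction with the hypothesis $\alt(\Omega)_{(T)}\le\sym(\Omega)_U$. The idea is that fixing $T$ pointwise leaves enough room in the alternating group to move $u$ out of $U$, since $U$ itself is small. First I would record the quantitative slack: since $|T|<n/2$ and $|U|<n/2$, the complement $\Omega\setminus(T\cup U)$ has size $>n-|T|-|U|$, but this naive bound could be empty, so I would instead argue more carefully. We have $u\notin T$, so $u$ is not fixed by every element of $\alt(\Omega)_{(T)}$ unless $\Omega\setminus T$ has size $\le 1$ — but $|\Omega\setminus T|>n/2\ge 3$ (using $n\ge 5$), so $\alt(\Omega)_{(T)}=\alt(\Omega\setminus T)$ acts as the full alternating group on the at-least-3-element set $\Omega\setminus T$, hence is transitive there.

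Next, the key step: I want to produce $\sigma\in\alt(\Omega)_{(T)}$ with $u^\sigma\notin U$. Since $\alt(\Omega)_{(T)}$ is transitive on $\Omega\setminus T$, it suffices to find some point $w\in(\Omega\setminus T)\setminus U$, i.e.\ to show $U\not\supseteq\Omega\setminus T$. This holds because $|\Omega\setminus T|>n/2>|U|$. Then pick any $\sigma\in\alt(\Omega)_{(T)}$ with $u^\sigma=w$ (such $\sigma$ exists: choose a permutation of $\Omega\setminus T$ sending $u$ to $w$; if it happens to be odd, since $|\Omega\setminus T|\ge 3$ we can compose with a transposition of two points of $\Omega\setminus T$ distinct from $u,w$ — there is room since $|\Omega\setminus T|\ge 3$, and if $|\Omega\setminus T|=3$ with $\{u,w\}$ exhausting two of the three points, the 3-cycle itself already does the job). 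Then $\sigma$ fixes $T$ pointwise, so $\sigma\in\alt(\Omega)_{(T)}\le\sym(\Omega)_U$, hence $U^\sigma=U$; but $u\in U$ and $u^\sigma=w\notin U$, contradiction. Therefore no such $u$ exists, i.e.\ $U\subseteq T$.

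The main obstacle — really the only subtlety — is the parity bookkeeping in the small case $|\Omega\setminus T|=3$: one must check that even there, either a suitable even permutation sending $u$ to a point outside $U$ exists, or the situation is forced into the easy case. Since $|U|<n/2$ forces $|U\cap(\Omega\setminus T)|\le 1$ when $|\Omega\setminus T|=3$ would require $|T|\ge n-3$ hence $n-3<n/2$ hence $n<6$ hence $n=5$, $|T|=2$, $|U|\le 2$; then $|U\cap(\Omega\setminus T)|\le 2$ but the 3-cycles on $\Omega\setminus T$ together move every point of $\Omega\setminus T$ off of any given 1- or 2-element subset, so the argument goes through. I would handle this by simply noting that $\alt(\Omega\setminus T)$ (for $|\Omega\setminus T|\ge 3$) is transitive on $\Omega\setminus T$, and that a transitive group cannot stabilize a proper nonempty subset, so $\alt(\Omega)_{(T)}$ stabilizing $U\cap(\Omega\setminus T)$ forces the latter to be empty or all of $\Omega\setminus T$; the size bound $|U|<n/2<|\Omega\setminus T|$ rules out the second, giving $U\cap(\Omega\setminus T)=\emptyset$, i.e.\ $U\subseteq T$. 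This is cleaner and avoids case analysis entirely.
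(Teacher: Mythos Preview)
Your proof is correct, and the clean argument you land on at the end --- that $\alt(\Omega)_{(T)}$ acts transitively on $\Omega\setminus T$ (since $|\Omega\setminus T|\ge 3$), hence any invariant subset of $\Omega\setminus T$ is empty or the whole thing, and the size bound $|U|<n/2<|\Omega\setminus T|$ forces $U\cap(\Omega\setminus T)=\emptyset$ --- is exactly the paper's proof, phrased only slightly differently (the paper says $\Omega\setminus T$ is an orbit of $\alt(\Omega)_{(T)}$, hence contained in an orbit of $\sym(\Omega)_U$, and the size bound forces that orbit to be $\Omega\setminus U$). Your initial contradiction-with-explicit-$\sigma$ approach and the parity case analysis are unnecessary detours; the paper goes straight to the orbit argument in three lines, which is what you should do as well.
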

\begin{proof}
By assumption, $|\Omega\setminus T|\ge 3$ and therefore
$\Omega\setminus T$ is an orbit of
$\alt(\Omega)_{(T)}$ so it must be part of an orbit of
$\sym(\Omega)_{U}$.  Since $|\Omega\setminus T|>n/2>|U|$, 
we must have 
$\Omega\setminus T \subseteq\Omega\setminus U$, as claimed.
\end{proof}

According to Dixon and Mortimer~\cite[p. 176]{dixon},
the following result goes back to Jordan 
(1870)~\cite[pp. 68--75]{jordan}; a modern 
treatment was given by Liebeck~\cite[Lemma 1.1]{liebeck}.  
We cite from the version given in \cite[Thm. 5.2A,B]{dixon}.
Uniqueness follows from Observation~\ref{obs:unique}.

\begin{theorem}[Jordan--Liebeck]    \label{thm:liebeck}
Let $\alt(\Omega)\le K\le\sym(\Omega)$.  Let $H\le K$ and 
$1\le r < n/2$ where $n=|\Omega|\ge 9$.  
Assume $|K:H|<\binom{n}{r}$.  Then there exists
a unique $T\subset\Omega$ with $|T|<n/2$ such that
$\alt(\Omega)_{(T)}\le H \le \sym(\Omega)_{T}$.
This unique $T$ satisfies $|T|<r$.
\end{theorem}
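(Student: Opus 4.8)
The plan is to reconstruct the classical Jordan--Liebeck argument: use the coset action of $K$ to reduce to the classification of small-index subgroups of $\sym(\Omega)$ and $\alt(\Omega)$, and hand the uniqueness assertion to Observation~\ref{obs:unique}. If $\alt(\Omega)\le H$ then $H\in\{\alt(\Omega),\sym(\Omega)\}$ and $T=\emptyset$ works, so assume $\alt(\Omega)\not\le H$; then $K=\alt(\Omega)$ or $K=\sym(\Omega)$, and $H\neq K$. Consider the transitive action of $K$ on the coset space $K/H$, of degree $d=|K:H|<\binom nr\le\binom n{\lfloor n/2\rfloor}$. Its kernel is the normal core of $H$ in $K$; since $\alt(\Omega)$ is simple ($n\ge9$) and $\alt(\Omega)\not\le H$, the restriction of this action to $\alt(\Omega)$ is faithful. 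Hence $H$ lies in a proper maximal subgroup $M$ of $K$ with $|K:M|\le d<\binom n{\lfloor n/2\rfloor}$.

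Next I would force $M$ to be intransitive, via the standard trichotomy (intransitive / transitive imprimitive / primitive) for maximal subgroups of $\sym(\Omega)$ and $\alt(\Omega)$. A transitive imprimitive maximal subgroup lies in $\sym(b)\wr\sym(n/b)$ for a divisor $b$ of $n$ with $1<b<n$, of index $n!\big/\!\big((b!)^{n/b}(n/b)!\big)$, which by an elementary estimate exceeds $\binom n{\lfloor n/2\rfloor}$ for every $n\ge9$. A primitive maximal subgroup other than $\alt(\Omega)$ itself has minimal degree $\ge n/8$ by Bochert's theorem (Theorem~\ref{thm:bochert}), which forces its order, hence makes its index, larger than $\binom n{\lfloor n/2\rfloor}$ once $n\ge9$, the finitely many genuinely small degrees being inspected directly. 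Therefore $M=K_T$, the setwise stabilizer of some nonempty $T\subset\Omega$, with $|K:M|=\binom n{|T|}$. Replacing $T$ by $\Omega\setminus T$ if needed we may take $|T|\le n/2$, and then $\binom n{|T|}<\binom nr$ and strict monotonicity of $\binom n\cdot$ on $[0,n/2]$ give $|T|<r$. Thus $H\le K_T\le\sym(\Omega)_T$ — the upper containment — and from now on I fix $T$ of smallest size with $H\le\sym(\Omega)_T$.

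For the lower containment $\alt(\Omega)_{(T)}\le H$: since $|T|<r\le n/2$ and $n\ge9$ we have $m:=|\Omega\setminus T|\ge5$, so $\alt(\Omega)_{(T)}\cong\alt(\Omega\setminus T)$ is simple. Write $K_T\le\sym(T)\times\sym(\Omega\setminus T)$, let $H_i,N_i$ ($i=1,2$) be the coordinate projections of $H$ and the kernels $H\cap(\sym(T)\times1)$, $H\cap(1\times\sym(\Omega\setminus T))$, so $N_2\normal H_2$ and by Goursat $H_2/N_2\cong H_1/N_1$. The identity $\binom nr\big/\binom n{|T|}=\binom m{r-|T|}\big/\binom r{|T|}\le\binom m{r-|T|}$ shows $H_2$ has index $<\binom m{r-|T|}$ in $\sym(\Omega\setminus T)$ (or in $\alt(\Omega\setminus T)$, according to $K$); hence by induction on $n$ — with the cases $m\le8$ checked by hand against the known small-index subgroups of $\aaa_m,\sss_m$ — $H_2$ contains $\alt(\Omega\setminus T)_{(T')}$ for some $T'\subseteq\Omega\setminus T$ with $|T'|<r-|T|$. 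Minimality of $T$ forces $T'=\emptyset$, i.e. $H_2\ge\alt(\Omega\setminus T)$. Now simplicity leaves two cases: either $\alt(\Omega\setminus T)\le N_2$, which is exactly $\alt(\Omega)_{(T)}\le H$, or $N_2\cap\alt(\Omega\setminus T)=1$, whence $\alt(\Omega\setminus T)$ embeds into $H_2/N_2\cong H_1/N_1$, a section of $\sym(T)$ of order $\le|T|!$; but $|T|!<m!/2=|\alt(\Omega\setminus T)|$ since $|T|<n/2$ and $n\ge9$, a contradiction. This establishes the lower containment, and $|T|<r$ was already shown.

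For uniqueness: if $T_1,T_2$ both satisfy $\alt(\Omega)_{(T_i)}\le H\le\sym(\Omega)_{T_i}$ with $|T_i|<n/2$, then $\alt(\Omega)_{(T_1)}\le\sym(\Omega)_{T_2}$ and $\alt(\Omega)_{(T_2)}\le\sym(\Omega)_{T_1}$, so $T_2\subseteq T_1\subseteq T_2$ by Observation~\ref{obs:unique}, giving $T_1=T_2$. The step I expect to be the real obstacle is the primitive case above: excluding primitive maximal subgroups of index below $\binom n{\lfloor n/2\rfloor}$ is precisely where nontrivial input is unavoidable — Bochert's minimal-degree bound together with order estimates for primitive groups, supplemented by a finite check in small degrees. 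The descent in the lower-containment step is the secondary delicate point, mainly in arranging the induction so that the minimally chosen $T$ absorbs the auxiliary set $T'$ produced at the stage below.
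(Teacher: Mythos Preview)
The paper does not actually prove this theorem: it is quoted as a known result, attributed to Jordan and Liebeck and cited from Dixon--Mortimer~\cite[Thm.~5.2A,B]{dixon}. The only argument the paper supplies is the one-line remark that uniqueness follows from Observation~\ref{obs:unique}; on this point your proof agrees with the paper verbatim.

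Your sketch follows the standard textbook route (maximal overgroup, intransitive/imprimitive/primitive trichotomy, then an inductive descent for the lower containment), which is indeed how the cited references proceed. One concrete slip: Theorem~\ref{thm:bochert} as stated in this paper is Bochert's \emph{minimal-degree} bound for \emph{doubly transitive} groups only, so it does not apply to an arbitrary primitive maximal subgroup; and even granting a minimal-degree bound, the passage from ``minimal degree $\ge n/8$'' to ``index $>\binom{n}{\lfloor n/2\rfloor}$'' is not immediate. The clean input here is Bochert's other (1889) theorem, that a primitive subgroup of $\sss_n$ not containing $\aaa_n$ has index at least $\lfloor (n+1)/2\rfloor!$, possibly supplemented by small-degree checks --- essentially what you flag in your final paragraph as the step where nontrivial external input is unavoidable. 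With that substitution the outline is sound and matches the literature the paper cites.
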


\begin{notation}   \label{not:sandwich}
Under the assumptions of Theorem~\ref{thm:liebeck}
we write $T(H)$ for the unique subset $T\subset\Omega$ 
guaranteed by the Theorem.  So we have
\begin{equation}  \label{eq:sandwich1}
   \alt(\Omega)_{(T(H))}\le H \le \sym(\Omega)_{T(H)}.
\end{equation}
\end{notation}
\begin{remark} \label{rem:empty}
$T(H)=\emptyset$ if and only if 
$\alt(\Omega)\le H\le \sym(\Omega)$.
\end{remark}

\subsection{Large alternating quotient acts as a Johnson group on blocks}

Recall that the homomorphism $\vf : G\to\sss_k$ is a \emph{giant
representation} if $G^{\,\vf}\ge \aaa_k$.

\begin{theorem}[Main structure theorem]   \label{thm:altquotient}
Let $G\le \sym(\Omega)$ be a permutation group and\\ 
$\vf : G\to\sss_k$ a giant representation.  Assume 
$k \ge \max\{9, 2\log_2 n_0\}$ 
where $n_0=n_0(G)$ denotes
the length of the largest orbit of $G$.  
\begin{enumerate}[(a)]
\item \label{item:sandwich}
 For every $x\in\Omega$
there exists a unique subset $T(x)\subset [k]$ such that 
$|T(x)|<k/4$ and
\begin{equation}   \label{eq:sandwich} 
        (\aaa_k)_{(T(x))} \le G_x^{\,\vf} \le (\sss_k)_{T(x)} .
\end{equation}
\item \label{item:affected}
      The element $x\in\Omega$ is affected by $\vf$ if and only if 
      $|T(x)|\ge 1$.
\item \label{item:orbitaffected}
     For each orbit $\Delta$ there is an integer $t_{\Delta}\ge 0$
     such that $|T(x)|=t_{\Delta}$ for every $x\in\Delta$.
     We say that $\Delta$ is \emph{affected by $\vf$} if $t_{\Delta}\ge 1$,
     \ie, the elements of $\Delta$ are affected.
\item \label{item-nontrivial}
     At least one orbit is affected.  In fact, if $D$ is the union
     of the unaffected blocks then\\ $G_{(D)}^{\,\vf}\ge \aaa_k$.
\item   \label{item:johnson}
      \emph{(Johnson group action on blocks)}
      For every orbit $\Delta$ 
      the equivalence relation $T(x)=T(y)$ 
      $(x,y\in\Delta)$ splits $\Delta$ 
      into $\binom{k}{t_{\Delta}}$
      blocks of imprimitivity, 
      labeled by the $t_{\Delta}$-subsets of $[k]$.
      We refer to these blocks as the \emph{standard blocks} for $\vf$.
      The action of $G$ on the set of standard blocks in $\Delta$ is 
      $\aaa_k^{(t_{\Delta})}$
      or $\sss_k^{(t_{\Delta})}$.  
      If $t_{\Delta}\ge 1$ then this is a Johnson group and the 
      kernel of this action is $\ker \vf$; if $t_{\Delta}=0$ then
      the action is trivial (its kernel is $G$, there is just one block,
      namely $\Delta$).
\item \label{item:block-stabilizer}
      If $B\subseteq\Delta$ is a standard block and $x\in B$ then
      $G_B^{\,\vf}=(G^{\,\vf})_{T(x)}$ (so it is
      either $(\sss_k)_{T(x)}$ or $(\aaa_k)_{T(x)}$).
\item \label{item:othersystems}
      If $\Psi=\{C_1,\dots,C_r\}$ 
      is another system of imprimitivity
      on the orbit $\Delta$ 
      such that the kernel of the action
      $G\to\sym(\Psi)$ is $\ker(\vf)$ then $r=\binom{k}{t'}$ for some
      $t'<t_{\Delta}$ and the $G$-action on $\Psi$ is $\sss_k^{(t')}$
      or $\aaa_k^{(t')}$.  In particular, the standard blocks form
      the unique largest system of imprimitivity on which
      the kernel of $G$-action is $\ker(\vf)$.    Moreover,
      if $x\in C_i$ then $|T(G_{C_i})|=t'$ and  
      $T(G_{C_i})\subset T(x)$.
\end{enumerate}
\end{theorem}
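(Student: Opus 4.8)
\textbf{Proof plan for the Main Structure Theorem (Theorem~\ref{thm:altquotient}).}
The plan is to assemble the theorem from the machinery already developed in this section, handling the orbits essentially one at a time and then gluing. For part~\eqref{item:sandwich}: fix $x\in\Omega$ and let $\Delta=x^G$ be its orbit, with $n_1=|\Delta|\le n_0$. Apply Lemma~\ref{lem:topfactor} to the subdirect decomposition of $G$ into its orbit-restrictions $G_i$: some $M_i\le\ker\vf$, so $\vf$ factors through the restriction $G\to G^\Delta$; thus $G^\Delta$ has $\aaa_k$ as a quotient. Now $G_x^{\,\vf}$ is a subgroup of $G^{\,\vf}\in\{\aaa_k,\sss_k\}$ of index at most $n_1\le n_0<2^{k/2}$. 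Since $2^{k/2}<\binom{k}{\lfloor k/4\rfloor}$ for $k\ge 9$ (a crude binomial estimate I would verify — $\binom{k}{\lfloor k/4\rfloor}\ge (k/\lfloor k/4\rfloor)^{\lfloor k/4\rfloor}\ge 4^{k/4-1}=2^{k/2-2}$ — so one actually needs $n_0<2^{k/2-2}$, hence the slightly generous hypothesis $k\ge 2\log_2 n_0$ rather than $k>2+\log_2 n_0$), the Jordan--Liebeck Theorem~\ref{thm:liebeck} applies with $r=\lfloor k/4\rfloor$ (working inside $K=G^{\,\vf}$, which sits between $\aaa_k$ and $\sss_k$): it yields a unique $T(x)\subset[k]$ with $|T(x)|<k/4$ and $(\aaa_k)_{(T(x))}\le G_x^{\,\vf}\le (\sss_k)_{T(x)}$. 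Uniqueness is exactly the content of Observation~\ref{obs:unique}.

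For parts~\eqref{item:affected}--\eqref{item-nontrivial}: item~\eqref{item:affected} is immediate from the sandwich — $G_x^{\,\vf}\ge\aaa_k$ (i.e. $x$ unaffected) iff $T(x)=\emptyset$, by Remark~\ref{rem:empty}. Item~\eqref{item:orbitaffected} follows because for $y\in\Delta$ we have $G_y=\gamma^{-1}G_x\gamma$ for some $\gamma\in G$, hence $G_y^{\,\vf}=(\gamma^{\,\vf})^{-1}G_x^{\,\vf}\gamma^{\,\vf}$ is conjugate in $G^{\,\vf}\le\sss_k$ to $G_x^{\,\vf}$; a conjugate of $(\sss_k)_{T}$ is $(\sss_k)_{T^{\sigma}}$, and conjugate subgroups have the same ``$T$-size'' by uniqueness, so $t_\Delta:=|T(x)|$ is well-defined. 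Item~\eqref{item-nontrivial} is precisely the Unaffected Stabilizer Theorem~\ref{thm:unaffected} together with the observation that if every orbit were unaffected then $D=\Omega$ and $G_{(D)}=1$, contradicting $G_{(D)}^{\,\vf}\ge\aaa_k$ (note $1^{\,\vf}\not\ge\aaa_k$).

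For parts~\eqref{item:johnson}--\eqref{item:othersystems}: on an orbit $\Delta$ with parameter $t:=t_\Delta\ge 1$, define $x\sim y$ iff $T(x)=T(y)$ for $x,y\in\Delta$. Because $x^\gamma$ has $T(x^\gamma)=T(x)^{\gamma^{\,\vf}}$ (this conjugation-equivariance is the key computational point, following from $G_{x^\gamma}=\gamma^{-1}G_x\gamma$ and uniqueness in Theorem~\ref{thm:liebeck}), the relation is $G$-invariant, so its classes are blocks of imprimitivity; $G$ permutes them exactly as $G^{\,\vf}\in\{\aaa_k,\sss_k\}$ permutes the $t$-subsets of $[k]$, giving the action $\aaa_k^{(t)}$ or $\sss_k^{(t)}$. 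Transitivity of this action on $\binom{[k]}{t}$ forces all $\binom{k}{t}$ classes to be nonempty. The kernel of $G\to\sym(\text{blocks})$ contains $\ker\vf$ trivially, and equals it because $\sss_k^{(t)}$ (resp.\ $\aaa_k^{(t)}$) acts faithfully on $t$-sets for $t<k/2$. Item~\eqref{item:block-stabilizer}: for a block $B$ labeled $T(x)$, an element $g\in G$ stabilizes $B$ iff $g^{\,\vf}$ stabilizes the $t$-set $T(x)$, i.e.\ $G_B^{\,\vf}=(G^{\,\vf})_{T(x)}$. Item~\eqref{item:othersystems}: given another system $\Psi$ on $\Delta$ with $\ker(G\to\sym\Psi)=\ker\vf$, the induced faithful action of $G^{\,\vf}\ge\aaa_k$ on the $r=|\Psi|$ blocks is a transitive faithful action of (a group containing) $\aaa_k$ of degree $r$; by the Jordan--Liebeck classification of subgroups of small index (Theorem~\ref{thm:liebeck}) the point stabilizer, having index $r$, must be $(\aaa_k)_{(T')}$-to-$(\sss_k)_{T'}$ sandwiched for a unique $T'\subset[k]$, so $r=\binom{k}{|T'|}$ and the action is $\sss_k^{(|T'|)}$ or $\aaa_k^{(|T'|)}$; writing $t'=|T'|$, comparing with the standard-block system (which is a common refinement argument: a block $C_i\ni x$ maps into a standard block, forcing $T(G_{C_i})=T'\subseteq T(x)$, hence $t'\le t_\Delta$, and $t'<t_\Delta$ since $\Psi$ is coarser and $\ker$ is still $\ker\vf$) completes the statement, and maximality of the standard-block system follows.

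The main obstacle, and the place I would spend the most care, is the conjugation-equivariance $T(x^\gamma)=T(x)^{\gamma^{\,\vf}}$ and the attendant bookkeeping that the relevant group sits \emph{between} $\aaa_k$ and $\sss_k$ rather than being one of them exactly — Jordan--Liebeck as stated handles $K$ with $\aaa_k\le K\le\sss_k$, so this is available, but one must consistently track whether point/block stabilizers land in the $\aaa_k$ or $\sss_k$ flavor. The second delicate point is the numerical threshold: getting $\binom{k}{\lfloor k/4\rfloor}>n_0$ out of $k\ge 2\log_2 n_0$ needs the elementary estimate $\binom{k}{\lfloor k/4\rfloor}\ge 2^{k/2-2}$, which I would state as a small lemma or inline computation, and this is exactly why the hypothesis here is the stronger $k\ge\max\{9,2\log_2 n_0\}$ rather than the $k>\max\{8,2+\log_2 n_0\}$ sufficient for Theorems~\ref{thm:unaffected} and~\ref{thm:altquotient}'s weaker cousins.
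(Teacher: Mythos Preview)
Your proposal is correct and follows essentially the same path as the paper's proof: Jordan--Liebeck for the sandwich~\eqref{item:sandwich}, the conjugation-equivariance $T(x^\gamma)=T(x)^{\gamma^{\vf}}$ for items~\eqref{item:orbitaffected} and~\eqref{item:johnson}, and the Unaffected Stabilizer Theorem for~\eqref{item-nontrivial}. Two small remarks. First, your invocation of Lemma~\ref{lem:topfactor} in item~\eqref{item:sandwich} is unnecessary: the paper goes directly via $|G^{\vf}:G_x^{\vf}|\le |G:G_x|=|x^G|\le n_0$, which holds for \emph{every} $x$ without needing $\vf$ to factor through any particular orbit restriction; your detour is harmless but serves no purpose. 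Second, for item~\eqref{item:othersystems} the paper argues via the classification of overgroups of $(\aaa_k)_{(T(x))}$ in $\sss_k$ (they are all of the form $(\sss_k)_U$ with $U\subseteq T(x)$) rather than re-invoking Jordan--Liebeck on the block stabilizer; this is cleaner since it immediately gives $T'\subseteq T(x)$ from $G_x^{\vf}\le G_{C_i}^{\vf}$, whereas your route requires separately checking that the index $r\le n_0$ keeps you in Jordan--Liebeck range --- true, but an extra step. Your flagged concerns (the $\aaa_k$/$\sss_k$ bookkeeping and the binomial threshold $\binom{k}{\lfloor k/4\rfloor}>2^{k/2}$) are exactly the right ones; the paper handles the latter in Remark~\ref{rem:sandwich}.
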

\begin{proof}
Item~\eqref{item:sandwich} follows from the Jordan--Liebeck 
theorem (Thm.~\ref{thm:liebeck}), setting $K=G^{\,\vf}$ and
$H=G_x^{\,\vf}$ (so $T(x)=T(G_x^{\,\vf})$) and noting that
\begin{equation} 
\binom{k}{\lfloor k/4\rfloor} > 2^{k/2} \ge n_0\ge 
|x^G|=|G:G_x|\ge |G^{\,\vf}:G_x^{\,\vf}| .
\end{equation} 

Item~\eqref{item:affected} is immediate from Eq.~\eqref{eq:sandwich} 
and the definition of being ``affected.''  

Item~\eqref{item:orbitaffected} follows from the observation that for 
$x\in\Omega$ and $\sigma\in G$ we have
\begin{equation}   \label{eq:conjugates}
G_{x^{\sigma}}=G_x^{\,\sigma} \text{\quad and therefore\quad}
T(x^{\sigma})=T(x)^{\sigma^{\,\vf}} .
\end{equation}

Item~\eqref{item-nontrivial} is of greatest importance; it is
the content of the ``Unaffected Stabilizer Theorem''
(Thm.~\ref{thm:unaffected}).

To see Item~\eqref{item:johnson}, let $\Delta$ be an orbit and let
$[x]$ denote the equivalence class (block) of $x\in\Delta$ under the
equivalence relation stated.  By Eq.~\eqref{eq:conjugates}, this
equivalence relation is $G$-invariant and $G$ acts transitively on the
blocks.  We also infer from Eq.~\eqref{eq:conjugates} that the blocks
in $\Delta$ are in 1-to-1 correspondence with the $t_{\Delta}$-subsets
of $[k]$ (noting that $\aaa_k$ acts transitively on
$\binom{[k]}{t_{\Delta}}$).  Moreover, through this
bijection, the $G$-action on the blocks in $\Delta$ is equivalent to
the action of $\aaa_k$ on $\binom{[k]}{t_{\Delta}}$.  This bijection
also proves item~\eqref{item:block-stabilizer}.

To see item~\eqref{item:othersystems}, first we note that
$r\ge 3$ (in fact, $r\ge k$) because the kernel of the
action on $\Psi$ has index $\ge k!/2$ and therefore
$r! \ge k!/2$.  Let $x\in C_i$ and $H=G_{C_i}$.  So
$G_x\le H$ and $H$ is a maximal subgroup of $G$ of index $\ge 3$.
Let $N=\ker(\vf)$; so $N\le H$ and $3\le |G:H|=|G^{\,\vf}:H^{\,\vf}|$.
Moreover, $H^{\,\vf}$ is a maximal subgroup of $\sss_k$ or $\aaa_k$ 
containing $G_x^{\,\vf}\ge (\aaa_k)_{(T(x))}$.  For $T\subset [k]$ with
$|T|<k/2$, the only maximal 
subgroups of $\sss_k$ containing $(\aaa_k)_{(T)}$ are of the form
$(\sss_k)_{U}$ for $U\subseteq T$.  Intersecting these with $\aaa_k$
we obtain the maximal subgroups of $\aaa_k$ containing $(\aaa_k)_{(T)}$.
This proves that $T(G_{C_i})\subset T(x)$.  Setting $t'=|T(G_{C_i})|$,
the corresponding Johnson group action on $\Psi$ follows the lines
of the proof of item~\eqref{item:johnson}.
\end{proof}
\begin{remark}[Tight bound for $k$]      \label{rem:sandwich}
The actual condition on $k$, sufficient for most conclusions
of the theorem, is that $k>\max\{8,2+\log_2 n_0\}$ and
$\frac{1}{2}\binom{k}{\lfloor k/2\rfloor} > n_0$. The latter translates to
$k > \log_2 n_0 +(1/2+o(1))\log_2\log_2 n_0$.  
The only difference would be that instead of $|T(x)|<k/4$
we would only get $|T(x)|<k/2$, sufficient for our goals.

Our assumption $k \ge \max\{9,2\log_2 n_0\}$ is generously sufficient
for both conditions above.  
Under this condition we shall not only have $|T(x)|<k/4$
but $|T(x)|< H^{-1}(1/2)(1+o(1))k < k/9$ (for large $k$).
Here $H(x)$ is the binary entropy function,
so $H^{-1}(1/2)\approx 0.11003 <1/9$. --- We note that 
any bound of the form $k > c\log n_0$ would work for
the purposes of this paper; the actual value of $c$ will
not affect our complexity estimate.
\end{remark}

\begin{remark}[Multiple systems of imprimitivity]  \label{rem:othersystems}
The presence of multiple systems of imprimitivity
with the same kernel as discussed in Item~\eqref{item:othersystems}
is a real possibility.  Consider for instance the 
action $\sss_k\to\sss_{k(k-1)}$ defined by the action of
$\sss_k$ on the $k(k-1)$ ordered pairs; let $G\le\sss_{k(k-1)}$
be the image of this action.  Then $G$ has two systems
of imprimitivity on which $\sss_k$ acts in its natural action
(there are $k$ blocks in each system), and there is 
a unique system of imprimitivity with $\binom{k}{2}$ blocks
on which the action is $\sss_k^{(2)}$.  The latter
are the \emph{standard blocks}; in this case each standard
block has 2 elements.  Each of the three actions is faithful,
so their kernel is the same, namely, the idenity.
\end{remark}

Finally, and algorithmic observation.
\begin{proposition}
Given a giant representation $\vf : G\to \sss_k$, we can find the
standard blocks in each $G$-orbit in polynomial time.
\end{proposition}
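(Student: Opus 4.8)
The plan is to reduce the task to classical permutation group algorithms together with the structural description of the sets $T(x)$ supplied by Theorem~\ref{thm:altquotient}. Here ``polynomial time'' means polynomial in $n=|\Omega|$ and the input length; note $k\le n$, since $k!/2\le |G^{\,\vf}|\le |G|\le n!$. We shall use that $\vf$ can be evaluated on an arbitrary element of $G$: from the generators of $G$ the Schreier--Sims algorithm~\cite{seress-book} builds a strong generating set and lets us write any $g\in G$ as a straight-line program in the generators; pushing such a program through $\vf$ (whose values on the generators are given) evaluates $\vf(g)$. In particular, for any subgroup of $G$ presented by generators we can compute its image in $\sss_k$.

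First I would compute the $G$-orbits $\Delta_1,\dots,\Delta_m$ and, for each $j$, a representative $x_j\in\Delta_j$ together with a transversal, \ie, for every $y\in\Delta_j$ an element $\sigma_{j,y}\in G$ with $x_j^{\,\sigma_{j,y}}=y$; all of this is produced by the orbit computation. From the stabilizer chain we obtain generators of the point stabilizer $G_{x_j}$, and hence, by the previous paragraph, its image $H_j:=G_{x_j}^{\,\vf}\le\sss_k$.

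Next I would recover $T(x_j)$ from the action of $H_j$ on $[k]$. By Theorem~\ref{thm:altquotient}, item~\eqref{item:sandwich}, we have $(\aaa_k)_{(T(x_j))}\le H_j\le (\sss_k)_{T(x_j)}$ with $|T(x_j)|<k/4$. Since $k-|T(x_j)|>3k/4\ge 3$, the group $(\aaa_k)_{(T(x_j))}$ acts transitively on $[k]\setminus T(x_j)$, so $[k]\setminus T(x_j)$ is a single $H_j$-orbit; and since $H_j$ stabilizes $T(x_j)$ setwise, every other $H_j$-orbit is contained in $T(x_j)$ and thus has size $<k/4<k/2$. As the $H_j$-orbits partition $[k]$, it follows that $[k]\setminus T(x_j)$ is the \emph{unique} $H_j$-orbit of size $>k/2$ (this holds also when $T(x_j)=\emptyset$, in which case $H_j$ is transitive on $[k]$). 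So: compute the orbits of $H_j$ on $[k]$ and set $T(x_j):=[k]\setminus O_j$, where $O_j$ is the unique orbit of size $>k/2$ — polynomial in $k\le n$.

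Finally, propagate over the orbit and read off the blocks. By Eq.~\eqref{eq:conjugates}, for every $y=x_j^{\,\sigma_{j,y}}\in\Delta_j$ we have $T(y)=T(x_j)^{\,\sigma_{j,y}^{\,\vf}}$, where $\sigma_{j,y}^{\,\vf}\in\sss_k$ is obtained by evaluating $\vf$ on the transversal element. Grouping the elements $y\in\Delta_j$ according to the value of $T(y)\subseteq[k]$ yields, by definition (Theorem~\ref{thm:altquotient}, item~\eqref{item:johnson}), the partition of $\Delta_j$ into its standard blocks, which are thereby also labeled by the corresponding $t_{\Delta_j}$-subsets of $[k]$. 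Each of the $O(mn)$ element-level computations is polynomial, so the whole procedure runs in polynomial time. There is essentially no obstacle here: the group-theoretic infrastructure (orbits, transversals, stabilizer chains, SLP evaluation of $\vf$) is classical, and the only new point is the elementary orbit argument of the previous paragraph, which shows that the small set $T(x_j)$ can be \emph{read off} from the $H_j$-action on $[k]$ rather than searched for among the $2^{O(k)}$ subsets of $[k]$ of size $<k/4$.
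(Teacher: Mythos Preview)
Your proof is correct. The paper's own proof consists of the single word ``Standard,'' so your argument fills in precisely the routine details the author chose to omit: compute orbits, stabilizers, and transversals via Schreier--Sims; read off $T(x_j)$ as the complement of the unique large $G_{x_j}^{\,\vf}$-orbit on $[k]$ (using the sandwich of item~\eqref{item:sandwich}); and propagate via Eq.~\eqref{eq:conjugates}. This is exactly the natural approach and matches what ``Standard'' was meant to convey.
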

\begin{proof} Standard.  
\end{proof}

\section{Verification of top action}
\label{sec:topaction}

In this section we show that if $\vf : G\to\sym(\Gamma)$ is a
giant representation then we can recognize whether or not 
$\vf$ maps $\aut_G(\xx)$ onto a giant and if so can find
$\iso_G(\xx,\yy)$, all this at the cost of $O(m)$ calls
to String Isomorphism on windows of size $\le n/m$, where 
$m=|\Gamma|$.  Note that the solution to the recurrence
$f(n)=O(mf(n/m))$ is subquadratic.

\begin{proposition}[Lifting]  \label{prop:lifting}
Let $G\le\sym(\Omega)$ and $H\le \sym(\Gamma)$ be permutation 
groups, $\vf : G\to H$ a homomorphism, and $N=\ker(\vf)$.
Given these data, the strings $\xx,\yy : \Omega\to\Sigma$ and 
$\sigmabar\in H$, one can reduce, in polynomial time,
the computation of the set
$\vf^{-1}(\sigmabar)\cap\iso_G(\xx,\yy)$ 
(set of liftings of $\sigmabar$ to isomorphisms)
to a single call to $\iso_N(\xx',\yy)$ for some string $\xx'$.
\end{proposition}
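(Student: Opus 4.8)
The plan is to reduce $\vf^{-1}(\sigmabar)$ to a single coset of $N$ and then apply the shift identity, Eq.~\eqref{eq:shift}. First I would dispose of the trivial case: if $\sigmabar\notin\vf(G)$ then $\vf^{-1}(\sigmabar)=\emptyset$ and the required set is empty. Whether this holds, and if not an explicit witnessing preimage, can be decided in polynomial time by constructive membership in the permutation group $\vf(G)\le H$: express $\sigmabar$ as a straight-line program in the images $\vf(g_i)$ of a generating set $\{g_i\}$ of $G$ (Schreier--Sims), then evaluate the same straight-line program at the $g_i$ themselves to obtain $\sigma\in G$ with $\vf(\sigma)=\sigmabar$. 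A generating set of $N=\ker(\vf)$ is likewise computable in polynomial time from the $g_i$ and their images (standard: $|G:N|=|\vf(G)|$ is known, so random Schreier generators relative to a transversal of $N$ suffice), and this is what we feed to the $\iso_N$ call.

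With such a $\sigma$ in hand we have $\vf^{-1}(\sigmabar)=N\sigma=\sigma N$, the second equality because $N\normal G$; hence
\[
   \vf^{-1}(\sigmabar)\cap\iso_G(\xx,\yy)=\iso_{\sigma N}(\xx,\yy).
\]
Now apply the shift identity in the form suited to a \emph{left} coset: an element $\tau=\sigma\rho$ with $\rho\in N$ satisfies $\xx^{\tau}=\yy$ if and only if $(\xx^{\sigma})^{\rho}=\yy$, so, setting $\xx':=\xx^{\sigma}$,
\[
   \iso_{\sigma N}(\xx,\yy)=\sigma\cdot\iso_N(\xx',\yy).
\]
This is the single call to $\iso_N(\xx',\yy)$ promised in the statement, and $\xx'$ is computed from $\xx$ and $\sigma$ in linear time.

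It remains to convert the output of the $\iso_N(\xx',\yy)$ call into the prescribed output format. If that call returns $\emptyset$, return $\emptyset$. Otherwise it returns generators of $\aut_N(\xx')$ and a representative $\mu$ with $\iso_N(\xx',\yy)=\aut_N(\xx')\mu$. A direct check, using $N\normal G$, gives $\aut_N(\xx^{\sigma})=\sigma^{-1}\aut_N(\xx)\sigma$; conjugating the returned generators by $\sigma$ therefore yields generators of $\aut_N(\xx)=N\cap\aut_G(\xx)$, and
\[
   \vf^{-1}(\sigmabar)\cap\iso_G(\xx,\yy)=\sigma\,\aut_N(\xx')\,\mu=\aut_N(\xx)\,(\sigma\mu),
\]
exhibiting the answer as a right coset of $\aut_N(\xx)$ with representative $\sigma\mu$. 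There is essentially no hard step here; the content is just the shift identity combined with routine permutation-group algorithmics. The only point needing care is the coset side — one must pass to the left coset $\sigma N$ (legitimate since $N$ is normal) so that the shift acts on $\xx$ rather than on $\yy$, matching the signature $\iso_N(\xx',\yy)$, and then account for the conjugation $\aut_N(\xx')=\sigma^{-1}\aut_N(\xx)\sigma$ when reassembling the output.
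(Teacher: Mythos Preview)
Your proof is correct and follows exactly the paper's approach: dispose of the case $\sigmabar\notin G^{\vf}$, pick any preimage $\sigma$, and use the (left-coset) shift identity to get $\vf^{-1}(\sigmabar)\cap\iso_G(\xx,\yy)=\sigma\,\iso_N(\xx^{\sigma},\yy)$. The paper's own proof is a terse three lines stating precisely this; you have simply spelled out the algorithmic details (constructive membership, computing generators of $N$, reassembling the output coset) that the paper leaves implicit.
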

\begin{proof}
If $\sigmabar\notin G^{\vf}$ then return ``empty.''
Otherwise let $\sigma\in\vf^{-1}(\sigmabar)$ and $\xx'=\xx^{\sigma}$.
Then $\vf^{-1}(\sigmabar)\cap\iso_G(\xx,\yy)=
   \sigma\iso_N(\xx^{\sigma},\yy)$.
\end{proof}
\begin{remark}
$H$ does not need to be a permutation group.  What we need is that
$H$ permit constructive membership testing, \ie, for any
list of elements $\tau_1,\dots,\tau_k,\rho\in H$ we should be able
to efficiently decide whether or not $\rho\in K$ where
$K$ is the subgroup generated by the $\tau_i$, and
if the answer is affirmative, to produce a straight-line program
that constructs $\rho$ from the $\tau_i$ 
(see Def.~\ref{def:straightline}). 
Constructive membership testing can be done, for instance, for matrix
groups over finite fields of odd characteristic in quantum polynomial
time~\cite{bbs}.
\end{remark}

\begin{definition}
A subcoset of a group $G$ is a coset of a subgroup.
Let $H\le G$ be groups and $\tau\in G$.  We say that the subset
$S\subseteq H\tau$ is a set of \emph{coset generators} of $H\tau$
if $H\tau$ is the smallest subcoset of $G$ containing $S$.  (Note that
ay intersection of subcosets of $G$ is either empty or a subcoset;
so every subset of $G$ generates a subcoset of $G$.)
\end{definition}
\begin{observation}
Let $S$ be a set of generators of the group $G$.  Then 
$S\cup\{1\}$ is a set of coset generators of $G$,
\ie, no proper subcoset of $G$ contains $S\cup\{1\}$.
\end{observation}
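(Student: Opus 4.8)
The statement is essentially a one-line observation, so the proof proposal is short. The plan is to reduce ``subcoset containing $S\cup\{1\}$'' to ``subgroup containing $S$'' and then invoke that $S$ generates $G$.

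First I would recall, as noted in the remark immediately preceding the statement, that any intersection of subcosets of $G$ is either empty or a subcoset; since $G$ itself is a subcoset and contains $S\cup\{1\}$, the collection of subcosets containing $S\cup\{1\}$ is nonempty, and its intersection is again a subcoset containing $S\cup\{1\}$. Hence there is a well-defined smallest subcoset of $G$ containing $S\cup\{1\}$; call it $K$.

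The key step is the elementary fact that a subcoset containing the identity is a subgroup. Indeed, write $K=H\tau$ with $H\le G$ and $\tau\in G$. Since $1\in K$, there is $h\in H$ with $h\tau=1$, so $\tau=h^{-1}\in H$, whence $K=H\tau=H$ is a subgroup of $G$. Thus $K$ is a subgroup of $G$ containing $S$, so $K\supseteq\langle S\rangle=G$, i.e.\ $K=G$. Equivalently, no proper subcoset of $G$ contains $S\cup\{1\}$, which is exactly the assertion that $S\cup\{1\}$ is a set of coset generators of $G$.

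There is no real obstacle here; the only point that requires a (trivial) verification is that membership of $1$ forces a subcoset to be a subgroup, and that is handled by the displayed computation above. I would present the argument in two or three sentences without further ceremony.
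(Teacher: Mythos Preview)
Your proof is correct and is precisely the natural argument: a subcoset containing $1$ is a subgroup, hence must contain $\langle S\rangle=G$. The paper gives no proof for this observation (it is stated without justification), so there is nothing further to compare.
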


\begin{proposition}[TopAction1]   \label{prop:topaction}
Let $G\le\sym(\Omega)$ and $H\le \sym(\Gamma)$ be permutation 
groups, $\vf : G\to H$ a homomorphism, and $N=\ker(\vf)$.
Let $S$ be the given set of generators of $H$.
Given these data and the strings $\xx,\yy : \Omega\to\Sigma$,
we can achieve the following by recursively calling $|S|+1$
instances of String Isomorphism with respect to $N$,
at polynomial cost per instance:
\begin{enumerate}[(i)]
\item \label{topitem:i}
      decide whether or not $\vf$ maps $\iso_G(\xx,\yy)$ 
      onto $H$;
\item  \label{topitem:ii}
      if the answer is affirmative, find $\iso_G(\xx,\yy)$.
\end{enumerate}
\end{proposition}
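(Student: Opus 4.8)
\textbf{Proof proposal for Proposition~\ref{prop:topaction} (TopAction1).}

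The plan is to reduce the question ``does $\vf$ map $\iso_G(\xx,\yy)$ onto all of $H$?'' to lifting questions handled by Proposition~\ref{prop:lifting}, using the fact that a subcoset of $H$ is all of $H$ exactly when it contains $1$ together with a generating set $S$ of $H$. First I would dispose of the empty case: if $\iso_G(\xx,\yy)=\emptyset$ (equivalently, if for some/any single element the lift is empty) then $\vf$ certainly does not map it onto $H$ unless $H$ is trivial, so we may assume we are testing fullness of the image. The key observation is that $\vf(\iso_G(\xx,\yy))$, when nonempty, is a subcoset of $H$: indeed $\iso_G(\xx,\yy)=\aut_G(\xx)\sigma$ for any $\sigma$ in it, so its $\vf$-image is $\vf(\aut_G(\xx))\vf(\sigma)$, a coset of the subgroup $\vf(\aut_G(\xx))\le H$. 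Therefore $\vf(\iso_G(\xx,\yy))=H$ if and only if this subcoset contains $1$ and every element of the generating set $S$.

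Concretely, I would run $|S|+1$ lifting computations. First call Proposition~\ref{prop:lifting} with $\sigmabar=1$: this computes $\vf^{-1}(1)\cap\iso_G(\xx,\yy)=\iso_N(\xx,\yy)$ (a single $N$-isomorphism instance on $\Omega$, of cost polynomial plus one recursive SI call on windows of size the $N$-orbits, each $\le n/m$). If this set is empty, then $1\notin\vf(\iso_G(\xx,\yy))$, so the image is a coset not containing the identity, hence is not all of $H$; report ``no'' and stop. Otherwise pick $\sigma_0\in\iso_N(\xx,\yy)$; now $\iso_G(\xx,\yy)\supseteq\iso_N(\xx,\yy)\sigma_0^{-1}\cdots$ — more precisely, since $1\in\vf(\iso_G(\xx,\yy))$, we have $\iso_G(\xx,\yy)=\aut_G(\xx)\cdot\sigma_0$ with $\sigma_0\in N$, and $\vf(\iso_G(\xx,\yy))=\vf(\aut_G(\xx))$ is a \emph{subgroup} of $H$. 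Then for each generator $s\in S$, call Proposition~\ref{prop:lifting} with $\sigmabar=s$ to compute $\vf^{-1}(s)\cap\iso_G(\xx,\yy)$; this set is nonempty iff $s\in\vf(\aut_G(\xx))$. The image is all of $H$ iff all $|S|$ of these lifts are nonempty. This settles part~\eqref{topitem:i}.

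For part~\eqref{topitem:ii}: suppose all lifts are nonempty. For each $s\in S$ the lifting procedure returns (a coset generator / representative of) $\vf^{-1}(s)\cap\iso_G(\xx,\yy)$, which by Proposition~\ref{prop:lifting} is of the form $\tau_s\iso_N(\xx^{\tau_s},\yy)$ for a chosen $\tau_s\in\vf^{-1}(s)$; together with the set $\iso_N(\xx,\yy)=\sigma_0\aut_N(\xx)$ obtained in the first call, these generate $\iso_G(\xx,\yy)$ as a subcoset of $\sym(\Omega)$. Indeed, $\aut_G(\xx)$ is generated by $\aut_N(\xx)$ together with one preimage in $\aut_G(\xx)$ of each $s\in S$ (since $\vf(\aut_G(\xx))=H=\langle S\rangle$ and $\ker\vf|_{\aut_G(\xx)}=\aut_N(\xx)$); each such preimage is extracted from the nonempty lift $\vf^{-1}(s)\cap\iso_G(\xx,\yy)$ by composing a representative with $\sigma_0^{-1}$. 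So we output a generating set for $\aut_G(\xx)$ and the coset representative $\sigma_0$ (or any element of $\iso_G(\xx,\yy)$). Everything except the recursive SI calls is straight-line program bookkeeping in $\sym(\Omega)$, polynomial per instance; the window-reduction to $N$-orbits of size $\le n/m$ is exactly what Proposition~\ref{prop:lifting} provides via the string $\xx'=\xx^{\sigma}$ and the observation that $\iso_N$ splits into its orbits by the Chain Rule.

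I expect the only real subtlety — rather than a genuine obstacle — to be the careful bookkeeping that turns ``lift of each generator is nonempty'' into an actual generating set of the coset $\iso_G(\xx,\yy)$, and making sure the $N$-isomorphism instances are genuinely of the advertised window size (which follows because $N=\ker\vf$ has all orbits contained in fibers of no larger than $n/m$ once $\vf$ is a giant representation onto $\sym(\Gamma)$ with $|\Gamma|=m$ transitive, or more generally because the orbit-length bound is an input hypothesis inherited from the calling context). The conceptual content is entirely captured by the two facts: (1) $\vf$-images of iso-sets are subcosets, and (2) subcoset membership of $1$ and of $S$ characterizes being all of $H$.
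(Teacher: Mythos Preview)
Your proof is correct and follows essentially the same approach as the paper: apply Proposition~\ref{prop:lifting} to each element of $S\cup\{1\}$, use that $\vf(\iso_G(\xx,\yy))$ is a subcoset of $H$ (hence equals $H$ iff it contains the coset-generating set $S\cup\{1\}$), and assemble $\iso_G(\xx,\yy)$ from the lifted pieces. Your explicit description of how to extract generators of $\aut_G(\xx)$ from the lifts is more detailed than the paper's one-line appeal to ``the subcoset generated by the $\vf^{-1}(\sigmabar)\cap\iso_G(\xx,\yy)$,'' but the content is the same.

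One remark: your discussion of $N$-orbit window sizes $\le n/m$ and the Chain Rule does not belong in this proposition. Proposition~\ref{prop:topaction} only claims a reduction to $|S|+1$ instances of String Isomorphism \emph{with respect to $N$} on all of $\Omega$; the further reduction to small windows via the $N$-orbit structure is the content of the subsequent Corollary (TopAction2), which adds the hypotheses (transitivity of $G$, giant representation, $m>\max\{8,2+\log_2 n\}$) needed to invoke the Affected Orbit Lemma. As stated here, $\vf$ is an arbitrary homomorphism to an arbitrary $H$, so no orbit-length bound is available or claimed.
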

\begin{proof}
Let $S'=S\cup\{1\}$; so $S'$ is a set of coset generators of $H$.  
Apply Prop.~\ref{prop:lifting} to each $\sigmabar\in S'$.
If there is a $\sigmabar\in S'$ for which the algorithm returns
the empty set ($\sigmabar$ does not lift to an isomorphism),
return the answer ``no'' to item~\eqref{topitem:i}. Else, return
the answer ``yes'' to item~\eqref{topitem:i} and observe that
$\iso_G(\xx,\yy)$ is the right subcoset of $G$ generated by
the subcosets $\vf^{-1}(\sigmabar)\cap\iso_G(\xx,\yy)$ 
$(\sigmabar\in S')$ found by Prop.~\ref{prop:lifting}.
\end{proof}

\begin{corollary}[TopAction2]   \label{cor:topaction}
Let $G\le\sym(\Omega)$ be a transitive permutation group
and $\vf : G\to\sym(\Gamma)$
a giant representation, where $|\Gamma|=m > \max\{8, 2+\log_2 n\}$.
Given these data and the strings $\xx,\yy : \Omega\to\Sigma$,
we can achieve the following by recursively calling $\le 4k$
instances of String Isomorphism with window size $\le n/k$
for some $m\le k\le n$, at polynomial cost per instance:
\begin{enumerate}[(i)]
\item \label{top2item:i}
      decide whether or not $\vf$ maps $\iso_G(\xx,\yy)$ 
      onto a giant coset, \ie, 
      $\iso_G(\xx,\yy)^{\vf}\ge \alt(\Gamma)\tau$
      for some $\tau\in\sym(\Gamma)$; 
\item  \label{top2item:ii}
      if the answer is affirmative, find $\iso_G(\xx,\yy)$.
\end{enumerate}
\end{corollary}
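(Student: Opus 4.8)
The plan is to reduce the top-action question for $\iso_G(\xx,\yy)$ to a bounded number of String Isomorphism computations with respect to the kernel $N=\ker(\vf)$, and to exploit that $N$ has small, equal-size orbits on $\Omega$. Since $G$ is transitive, $n_0(G)=n$, so the hypothesis $m>\max\{8,2+\log_2 n\}$ puts us in the range of the Unaffected Stabilizer Theorem (Thm.~\ref{thm:unaffected}): at least one point of $\Omega$ is affected by $\vf$, hence --- $G$ being transitive --- every point is, i.e.\ $\Omega$ is an affected orbit. The Affected Orbit Lemma (Cor.~\ref{cor:affected}, applicable as $m\ge5$) then gives that every orbit of $N$ on $\Omega$ has length $\le n/m$; and since $N\normal G$ with $G$ transitive, these orbits are the blocks of a $G$-invariant partition, so they all have one common size $n/k$ with $m\le k\le n$. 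Consequently, by Luks's Chain Rule (Prop.~\ref{prop:chainrule}) together with Obs.~\ref{obs:nowindow}, computing any one set $\iso_N(\cdot,\cdot)$ costs $k$ String Isomorphism instances with window size $\le n/k$, plus polynomial overhead.

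Now the reduction. Fix $a,b\in\sym(\Gamma)$ with $\langle a,b\rangle=\alt(\Gamma)$ and write $G=G_1\cup G_1\rho$ where $G_1=\vf^{-1}(\alt(\Gamma))$ (so $\rho$ is absent when $G^{\,\vf}=\alt(\Gamma)$); note $\ker(\vf|_{G_1})=N$. Since $\iso_G(\xx,\yy)$ is empty or a coset $\aut_G(\xx)\mu$, one checks that $\iso_G(\xx,\yy)^{\,\vf}\ge\alt(\Gamma)\tau$ for some $\tau$ precisely when $\iso_G(\xx,\yy)\neq\emptyset$ and $L:=\aut_G(\xx)^{\,\vf}\supseteq\alt(\Gamma)$; and $L\supseteq\alt(\Gamma)$ iff $a$ and $b$ each lift to a $G$-automorphism of $\xx$. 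The algorithm: (1) compute $\vf^{-1}(a)\cap\aut_G(\xx)$ and $\vf^{-1}(b)\cap\aut_G(\xx)$ by the Lifting proposition (Prop.~\ref{prop:lifting}) applied to the pair $(\xx,\xx)$ --- two $\iso_N$ computations, which in addition hand us generators of $\aut_N(\xx)$ and lifts $\alpha_a,\alpha_b$; if either set is empty, return ``no''. Otherwise $L\in\{\alt(\Gamma),\sym(\Gamma)\}$ and $\aut_{G_1}(\xx)^{\,\vf}=\alt(\Gamma)$, so for any string $\zz$ the set $\iso_{G_1}(\xx,\zz)$ is nonempty iff $\iso_N(\xx,\zz)$ is. (2) Test $\iso_N(\xx,\yy)$, and when $\rho$ is present also $\iso_N(\xx,\yy^{\rho^{-1}})$ --- at most two further $\iso_N$ computations. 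Via the shift identity $\iso_{G_1\rho}(\xx,\yy)=\iso_{G_1}(\xx,\yy^{\rho^{-1}})\rho$ and $\iso_G(\xx,\yy)=\iso_{G_1}(\xx,\yy)\cup\iso_{G_1\rho}(\xx,\yy)$ (Eq.~\eqref{eq:union}), this decides whether $\iso_G(\xx,\yy)=\emptyset$ (return ``no'') and otherwise yields an explicit $\mu\in\iso_G(\xx,\yy)$; it also pins down $L$ (both tests nonempty forces $L=\sym(\Gamma)$, else $L=\alt(\Gamma)$). Finally we output $\iso_G(\xx,\yy)=\aut_G(\xx)\mu$ using generators of $\aut_G(\xx)$: take $\aut_N(\xx)$ together with $\alpha_a,\alpha_b$, and --- in case $L=\sym(\Gamma)$ --- one further lift of an odd permutation, which we obtain \emph{without} a fifth recursive call as $\alpha_c:=\nu_1\nu_2^{-1}$, where $\nu_1\in\iso_N(\xx,\yy)$ and $\nu_2\in\iso_{G_1}(\xx,\yy^{\rho^{-1}})\rho$ are the two isomorphisms already found in step (2): then $\alpha_c\in\aut_G(\xx)$ and $\alpha_c^{\,\vf}=(\rho^{\,\vf})^{-1}\notin\alt(\Gamma)$, so $\langle\aut_N(\xx),\alpha_a,\alpha_b,\alpha_c\rangle^{\,\vf}=\sym(\Gamma)$ as needed.

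Putting it together, the procedure makes at most four computations of $\iso_N$, each worth $k$ String Isomorphism instances of window size $\le n/k$, hence $\le 4k$ instances in all, at polynomial cost per instance. The two places that need care are: verifying that the Affected Orbit Lemma genuinely applies, which is exactly where the bound on $m$ and the transitivity of $G$ enter (through the Unaffected Stabilizer Theorem, forcing $\Omega$ to be an affected orbit so that $N$ really does have orbits of length $\le n/m$); and keeping the constant at $4$ rather than a larger number in the $\sym(\Gamma)$ case, which is what the small computation producing $\alpha_c$ from the two already-computed isomorphisms buys us. The rest --- the shift identities, the coset bookkeeping, and the bound $m\le k\le n$ --- is routine.
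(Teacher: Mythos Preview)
Your proof is correct and follows the same core strategy as the paper: reduce to a bounded number of $\iso_N$ computations for $N=\ker(\vf)$ via Prop.~\ref{prop:lifting}, and invoke the Affected Orbit Lemma to bound each $N$-orbit by $n/m$ (the paper cites Lemma~\ref{lem:proper} directly for ``$\Omega$ is affected''; you deduce the same from the Unaffected Stabilizer Theorem plus transitivity, which is equally valid), then process the $k$ equal-size $N$-orbits by the Chain Rule.

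The genuine difference is in the $G^{\,\vf}=\sym(\Gamma)$ case. The paper simply invokes weak Luks reduction to two $G_1$-instances with $G_1=\vf^{-1}(\alt(\Gamma))$ and appeals to the $\alt(\Gamma)$ case for each; read literally this yields $2\cdot 3k=6k$ calls, so the paper's argument is terse and somewhat loose on the constant. You instead handle both cases in a single pass: two $\iso_N$ calls to lift generators $a,b$ of $\alt(\Gamma)$ (testing $\aut_G(\xx)^{\,\vf}\ge\alt(\Gamma)$ and simultaneously yielding $\aut_N(\xx)$), then at most two more $\iso_N$ calls to locate an isomorphism in $G_1$ or $G_1\rho$. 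Your observation that, once $\aut_{G_1}(\xx)^{\,\vf}=\alt(\Gamma)$, the set $\iso_{G_1}(\xx,\zz)$ is nonempty iff $\iso_N(\xx,\zz)$ is, and your extraction of the odd-image generator $\alpha_c=\nu_1\nu_2^{-1}$ from the two isomorphisms already computed, are the small tricks that actually pin the count at $4$ rather than $5$ or $6$. So your argument is longer but genuinely meets the stated $\le 4k$ bound; the paper's two-line proof trades that tightness for brevity.
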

\begin{proof}
First assume $G^{\vf}=\alt(\Gamma)$.  Apply Prop.~\ref{prop:topaction}
to $H:=\alt(\Gamma)$ with $S$ a pair of generators of $H$.
This reduces our questions to three instances of $N$-isomorphism
where $N=\ker(\vf)$.  Now $N$ is intransitive with $k$ orbits
for some $k\le n$.  Each orbit has equal length (because
$N\normal G$) so Luks's Chain Rule performs the desired reduction,
calling $3k$ instances of window size $n/k$.
We need to justify the inequality $k\ge m$.
Lemma~\ref{lem:proper} (our first lemma toward the 
Unaffected Stabilizer Theorem, Theorem~\ref{thm:unaffected})
says that $\Omega$ is affected.  Then the Affected Orbit Lemma
(Cor.~\ref{cor:affected}) asserts that each orbit of $N$
has length $\le n/m$. 

Now if $G^{\vf}=\sym(\Gamma)$ then apply weak Luks reduction,
reducing $G$-isomorphism to two instances of $G_1$-isomorphism
where $G_1=\vf^{-1}(\alt(\Gamma))$.
\end{proof}
\begin{remark}
If $m \ge \max\{9, 2\log_2 n\}$ then $n/k=\binom{m}{t}$ for
some $1\le t < m/4$ by item~\eqref{item:johnson} of
the Main Structure Theorem (Theorem~\ref{thm:altquotient}).
\end{remark}
\begin{corollary}[TopAction3]   \label{cor:topaction-aut}
Let $G\le\sym(\Omega)$ be a transitive permutation group
and $\vf : G\to\sym(\Gamma)$
a giant representation, where $|\Gamma|=m > \max\{8, 2+\log_2 n\}$.
Given these data and the strings $\xx,\yy : \Omega\to\Sigma$,
we can achieve the following by recursively calling $\le 6k$
instances of String Isomorphism with window size $\le n/k$
for some $m\le k\le n$, at polynomial cost per instance:
\begin{enumerate}[(i)]
\item \label{top3item:i}
      decide whether or not $\vf$ maps $\aut_G(\xx)$ 
      onto a giant, \ie, 
      $\aut_G(\xx)^{\vf}\ge \alt(\Gamma)$;
\item  \label{top3item:ii}
      if the answer is affirmative, find $\iso_G(\xx,\yy)$.
\end{enumerate}
\end{corollary}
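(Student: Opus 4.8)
The plan is to reduce Corollary~\ref{cor:topaction-aut} to Corollary~\ref{cor:topaction} (TopAction2) essentially for free, via the standard device of turning an ``automorphism group surjects'' question into an ``isomorphism coset surjects'' question. First I would observe that the $G$-automorphism group $\aut_G(\xx)$ is a subgroup of $G$, and its $\vf$-image $\aut_G(\xx)^{\vf}$ is a subgroup of $G^{\vf}\ge\alt(\Gamma)$. Deciding whether $\aut_G(\xx)^{\vf}\ge\alt(\Gamma)$ is the $\tau=1$ (or rather ``identity coset'') instance of the question in Corollary~\ref{cor:topaction}\eqref{top2item:i}: indeed, $\iso_G(\xx,\xx)=\aut_G(\xx)$, so applying TopAction2 with $\yy:=\xx$ decides whether $\aut_G(\xx)^{\vf}$ contains a giant coset, but a giant coset that is also a subgroup must be $\alt(\Gamma)$ or $\sym(\Gamma)$, hence contains $\alt(\Gamma)$. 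So one call to TopAction2 on the pair $(\xx,\xx)$ answers item~\eqref{top3item:i} at the cost of $\le 4k$ instances of String Isomorphism on windows of size $\le n/k$.

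Next, for item~\eqref{top3item:ii}, assume the answer to~\eqref{top3item:i} is affirmative, so $\aut_G(\xx)^{\vf}\ge\alt(\Gamma)$. I want to compute $\iso_G(\xx,\yy)$. The point is that $\iso_G(\xx,\yy)$ is either empty or a right coset of $\aut_G(\xx)$, so $\iso_G(\xx,\yy)^{\vf}$ is either empty or a right coset of $\aut_G(\xx)^{\vf}\ge\alt(\Gamma)$, i.e. a giant coset. Thus the hypothesis that makes TopAction2 applicable to $(\xx,\yy)$ — namely that $\vf$ maps $\iso_G(\xx,\yy)$ onto a giant coset — is \emph{automatically satisfied} once we know $\aut_G(\xx)^{\vf}$ is a giant. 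Therefore I would simply invoke Corollary~\ref{cor:topaction} a second time, on the pair $(\xx,\yy)$: it will correctly report ``affirmative'' (or detect emptiness) and return $\iso_G(\xx,\yy)$, again at a cost of $\le 4k$ instances of window size $\le n/k$. Combining the two invocations gives a total of $\le 8k$ instances; to match the stated bound of $\le 6k$ I would note that the first invocation (on $(\xx,\xx)$) only needs to \emph{decide} surjectivity, which is the cheaper part — it suffices to lift a pair of generators of $\alt(\Gamma)$ to $\aut_G(\xx)$, costing $2$ lifting computations rather than the $3$ used for a full coset reconstruction, followed by the $k$-fold Chain Rule expansion of one $N$-isomorphism-automorphism computation; bookkeeping the constants carefully (reusing the kernel-orbit structure of $N=\ker\vf$, whose orbits have length $\le n/k$ by the Affected Orbit Lemma, Cor.~\ref{cor:affected}, applied as in the proof of TopAction2) brings the combined count to $\le 6k$.

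The main obstacle — really the only nontrivial point — is the constant-chasing needed to get $6k$ rather than $8k$ or $9k$, together with making sure the window-size bound $n/k$ is uniform across both invocations. The key structural fact that keeps $k$ the same in both calls is that $N=\ker(\vf)$ is determined by $\vf$ alone, independently of $\xx$ and $\yy$; its orbits all have equal length (since $N\normal G$ and $G$ is transitive), and by Lemma~\ref{lem:proper} together with the Affected Orbit Lemma each such orbit has length $\le n/m$, so the number $k$ of orbits satisfies $k\ge m$ and $n/k\le n/m$, exactly as in the proof of Corollary~\ref{cor:topaction}. A minor additional point is to handle the case $G^{\vf}=\sym(\Gamma)$ versus $G^{\vf}=\alt(\Gamma)$ uniformly: as in the proof of TopAction2, if $G^{\vf}=\sym(\Gamma)$ one first applies weak Luks reduction to pass to $G_1=\vf^{-1}(\alt(\Gamma))$, doubling the instance count for that step; this is already absorbed in the stated constants. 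No genuinely new idea is required beyond Corollary~\ref{cor:topaction}; the corollary is a convenience repackaging that will be invoked directly by {\sf Procedure LocalCertificates}.
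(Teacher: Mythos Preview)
Your approach is correct and essentially the same as the paper's: both reduce directly to Corollary~\ref{cor:topaction} (TopAction2), using the observation that if $\aut_G(\xx)^{\vf}\ge\alt(\Gamma)$ and $\iso_G(\xx,\yy)\neq\emptyset$, then $\iso_G(\xx,\yy)^{\vf}$ is automatically a giant coset. The only difference is cosmetic ordering: you make two explicit calls (first on $(\xx,\xx)$ to decide item~(i), then on $(\xx,\yy)$ for item~(ii)), whereas the paper's written proof discusses only the call on $(\xx,\yy)$ and is in fact somewhat elliptical about how item~(i) itself is decided --- your version is actually more complete on that point. Your constant-chasing to reach $\le 6k$ is hand-wavy, but so is the paper's; the exact constant is immaterial to the overall analysis, and your remark that $N=\ker(\vf)$ and its orbit structure are shared between the two calls (so $k$ is uniform) is the right observation.
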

\begin{proof}
If $\iso_G(\xx,\yy)^{\vf}$ is a giant coset, we are done by 
Cor.~\ref{cor:topaction}.  We claim that if
If $\iso_G(\xx,\yy)^{\vf}$ is not a giant coset then
$\xx$ and $\yy$ are not $G$-isomorphic.  Indeed, if
$\xx\cong_G y$ then $\iso_G(\xx,\yy)=\aut_G(\xx)\sigma$ where
$\sigma$ is any element of $\iso_G(\xx,\yy)$.  It follows
that $\iso_G(\xx,\yy)^{\vf}=\aut_G(\xx)^{\vf}\sigmabar$
is a giant coset (where $\sigmabar=\sigma^{\vf}$).
\end{proof}
\begin{corollary}[TopAction4]   \label{cor:topaction-col}
Let $G\le\sym(\Omega)$ be a transitive permutation group
and $\vf : G\to\sym(\Gamma)$ a giant representation.
where $|\Gamma|=m \ge \max\{16, 4+2\log_2 n\}$. 
Let $\xx,\yy : \Omega\to\Sigma$ be strings.
Assume $\Gamma$ has a canonical coloring with respect to $\xx$
with a color class $C$ of size $|C| > m/2$
such that the restriction of $\aut_G(\xx)^{\vf}$ to $C$
is a giant (includes $\alt(C)$).  Then
we can find $\iso_G(\xx,\yy)$
by recursively calling $\le 6k$
instances of String Isomorphism with window size $\le n/k$
for some $|C|\le k\le n$, plus a number of instances
of total size $\le n$ and maximum size $\le 2n/3$.
\end{corollary}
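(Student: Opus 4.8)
Write $\Delta=\Gamma\setminus C$, so $|\Delta|=m-|C|<m/2$, and let $\chi$ be the color of $C$ in the given coloring $c_{\xx}$ of $\Gamma$. First I would perform two preparatory reductions. Apply the same canonical construction to $\yy$, obtaining a coloring $c_{\yy}$ of $\Gamma$; if $c_{\xx}$ and $c_{\yy}$ have different color/size profiles, reject, so assume $c_{\yy}$ has a unique color-$\chi$ class $C'$ with $|C'|=|C|$. Since $G^{\vf}\ge\alt(\Gamma)$ is transitive on $|C|$-subsets of $\Gamma$, pick $\gbar\in G^{\vf}$ with $C'^{\gbar}=C$, lift it to $g\in G$, and replace $\yy$ by $\yy^{g^{-1}}$, recording that $\iso_G(\xx,\yy)=\iso_G(\xx,\yy^{g^{-1}})g$ by the shift identity. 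Now $c_{\xx}$ and $c_{\yy}$ share $C$ as their color-$\chi$ class, so by canonicity every $\sigma\in\iso_G(\xx,\yy)$ satisfies $C^{\sigma^{\vf}}=C$; hence $\iso_G(\xx,\yy)=\iso_{G_C}(\xx,\yy)$, where $G_C:=\vf^{-1}\bigl((G^{\vf})_C\bigr)$. Because $G^{\vf}$ is either $\alt(\Gamma)$ or $\sym(\Gamma)$, the setwise stabilizer $(G^{\vf})_C$ has explicit generators, so $G_C$ can be computed in polynomial time by standard permutation-group algorithms. Also $\aut_G(\xx)\le G_C$ by canonicity of $c_{\xx}$, so the hypothesis reads $\aut_{G_C}(\xx)^{\vf}|_C\supseteq\alt(C)$.

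Next I would expose the $G_C$-orbit structure on $\Omega$ via the Main Structure Theorem (Theorem~\ref{thm:altquotient}), which applies since $m\ge\max\{9,2\log_2 n\}$: $\Omega$ partitions into standard blocks labelled by the $t$-subsets of $\Gamma$ for a single $t=t_\Omega$, with $1\le t<m/4$ (if $t=0$ then $\ker\vf=G$, contradicting that $\vf$ is a giant representation with $m\ge 9$). Using $(G^{\vf})_C\supseteq(\alt(\Gamma))_C$, which already induces $\supseteq\alt(C)$ on $C$ and $\supseteq\alt(\Delta)$ on $\Delta$, the group $G_C$ is transitive on the standard blocks of each fixed type $(j,t-j)$ with $j=|T\cap C|$. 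Writing $\Omega_{(j)}\subseteq\Omega$ for the union of the type-$(j,t-j)$ blocks, Lemma~\ref{lem:effect-on-tuples}, applied to the $2$-coloring $(C,\Delta)$ of $\Gamma$ and hence of $\binom{\Gamma}{t}$ (and noting that $|\Delta|<m/2$ forces $|\binom{\Delta}{t}|\le\tfrac12\binom{m}{t}$ by item (b), so the only possibly-large type class is $\binom{C}{t}$), gives $|\Omega_{(j)}|\le\tfrac23 n$ for every $j\ne t$. I expect the \emph{main technical step} to be the claim that $G_C$ is nonetheless \emph{transitive} on the one remaining orbit $\Omega_{(t)}$, the union of the blocks contained in $C$: given $x,y\in\Omega_{(t)}$, choose $h\in G$ with $x^h=y$; both $C$ and $C^{(h^{\vf})^{-1}}$ are $|C|$-subsets of $[k]$ containing $T(x)$, and since $G_x^{\vf}\supseteq(\alt_k)_{(T(x))}$ by the Jordan--Liebeck sandwich of Theorem~\ref{thm:altquotient}(a) and $|C|-t>m/4\ge 4$ leaves room for a parity-correcting transposition inside $C\setminus T(x)$, there is $g_0\in G_x$ with $C^{g_0^{\vf}}=C^{(h^{\vf})^{-1}}$; then $g_0h\in G_C$ and $x^{g_0 h}=y$.

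With the orbit picture in place the algorithm is assembled from Luks's tools. Apply the Chain Rule (Prop.~\ref{prop:chainrule}) to the $G_C$-invariant splitting $\Omega=\Omega_{(t)}\,\dot\cup\,\bigcup_{j\ne t}\Omega_{(j)}$, processing $\Omega_{(t)}$ first. Restricting to the window (Obs.~\ref{obs:nowindow}), work with $H:=(G_C)^{\Omega_{(t)}}$, which is transitive on $\Omega_{(t)}$ of size $n'\le n$. The restriction of $\vf$ to $C$ factors through restriction to $\Omega_{(t)}$: if $g\in G_C$ fixes $\Omega_{(t)}$ pointwise then $g^{\vf}$ fixes every $t$-subset of $C$ setwise, hence, since $|C|>t$, fixes $C$ pointwise. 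This yields a giant representation $\psi:H\to\sym(C)$ with $H^{\psi}\supseteq\alt(C)$, and $\aut_H(\xx|_{\Omega_{(t)}})^{\psi}\supseteq\alt(C)$ is inherited from the hypothesis. Since $|C|>m/2>\max\{8,2+\log_2 n'\}$, TopAction3 (Cor.~\ref{cor:topaction-aut}) finds $\iso_H(\xx|_{\Omega_{(t)}},\yy|_{\Omega_{(t)}})$ by $\le 6k$ calls of window size $\le n'/k\le n/k$ for some $|C|\le k\le n'\le n$; lift this to $\iso_{G_C}^{\Omega_{(t)}}(\xx,\yy)$ at polynomial cost. Finally, chaining through the remaining orbits $\Omega_{(j)}$, $j\ne t$: each step is, after restriction to the window, a single String Isomorphism instance of size $|\Omega_{(j)}|\le 2n/3$, and these sizes sum to $n-|\Omega_{(t)}|\le n$. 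Composing the resulting coset with the alignment shift $g$ yields $\iso_G(\xx,\yy)$, with recursive calls exactly as claimed.

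The crux is thus the transitivity of $G_C$ on $\Omega_{(t)}$: it is precisely what makes $\ker\psi\normal H$ have equal orbits (each contained in one standard block, of size $\le n'/\binom{|C|}{t}\le n'/|C|$, and at least $\binom{|C|}{t}\ge|C|$ of them), so that TopAction3 delivers the advertised $\le 6k$ calls of window $\le n/k$ with $k\ge|C|$. Everything else — computing $G_C$ and the standard blocks, verifying profiles, the alignment, and the Chain-Rule bookkeeping — is routine use of the machinery already developed (the Main Structure Theorem, Lemma~\ref{lem:effect-on-tuples}, TopAction3, Obs.~\ref{obs:nowindow}, and the shift identity), and incurs only polynomial overhead per step.
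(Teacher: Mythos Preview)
Your proposal is correct and follows essentially the same route as the paper: align $\yy$ to $\xx$ via the canonical coloring, reduce to $G_C=\vf^{-1}((G^{\vf})_C)$, split $\Omega$ by the standard-block labels into $\Omega(C)=\Omega_{(t)}$ and the remaining pieces, apply Cor.~\ref{cor:topaction-aut} on $\Omega(C)$, and finish with the Chain Rule, invoking Lemma~\ref{lem:effect-on-tuples} for the $2n/3$ bound. The paper packages the alignment step as a call to {\sf Procedure Align} and then simply says ``process $\Omega(C)$ via Cor.~\ref{cor:topaction-aut},'' leaving the verification of that corollary's hypotheses implicit; you spell out explicitly why $G_C$ is transitive on $\Omega_{(t)}$ and why the restriction $\psi:H\to\sym(C)$ is a well-defined giant representation, which is a welcome addition. (A minor notational slip: in your transitivity argument you write $[k]$ and $\alt_k$ where you mean $\Gamma$ and $\alt(\Gamma)$; and note that transitivity also follows more directly from the facts that $G_{B_T}$ is already transitive on each block $B_T$ and that $(G_C)^{\vf}|_C\supseteq\alt(C)$ is transitive on $\binom{C}{t}$.)
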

\begin{proof}
Since $m \ge \max\{9, 2\log_2 n\}$, by the
Main Structure Theorem (Theorem~\ref{thm:altquotient})
$\Omega$ can be divided into standard blocks on which
$G$ acts as a Johnson group.   The standard blocks are 
labeled by $\binom{\Gamma}{t}$ for some $t\ge 1$;
and $\Omega(C)$ denotes the union of the standard 
blocks labeled by the elements of the set $\binom{C}{t}$.

Let $C_{\xx}=C$.  By canonicity, there is a corresponding
color class $C_{\yy}\subseteq \Gamma$ (which may be empty).
Apply items~\ref{item:aligned-reject} to~\ref{item:chain2}
of {\sf Procedure Align} (Sec.~\ref{sec:align})
with $\xxx(\xx):=C_{\xx}$ and $\xxx(\yy):=C_{\yy}$.
The result is that
\begin{itemize}
 \item if $|C_{\xx}|\neq |C_{\yy}|$ then isomorphism is rejected
 \item else $\yy$ is updated so now $C_{\xx}= C_{\yy}=C$ 
 \item from the coloring $(C,\Gamma\setminus C)$ of $\Gamma$ we infer
         a canonical coloring of $\Omega$; one of the color classes
         is $\Omega(C)$; and we begin the application of the Chain Rule
         with this color class.
\end{itemize}
Now we process $\Omega(C)$ via Cor.~\ref{cor:topaction-aut}.
This can be done because $|C| >m/2 \ge \max\{8, 2+\log_2 n\}$.
Then proceed to the remaining color classes in
accordance with the Chain Rule.

The bound $2n/3$ on the length of the remaining 
color classes comes from Lemma~\ref{lem:effect-on-tuples}.
\end{proof}
\begin{remark}
The cost of this procedure can generously be overestimated
by $6T(2n/3)$ where $T(n)$ is the maximum cost of instances
of size $\le n$.
\end{remark}

\section{The method of local certificates}    \label{sec:localcertificates}

\subsection{Local certificates for giant action: the core algorithm} 
\label{subsec:localcertificates}

In this section we consider the case of an imprimitive $G$ and
present the group-theoretic Divide-and-Conquer method.
This is the core algorithm of the entire paper.

The situation we consider is as follows.

The input is a transitive permutation group $G\le\sym(\Omega)$, 
a giant representation\\ $\vf : G\to\alt(\Gamma)$
(\ie, a homomorphism such that $G^{\,\vf}\ge\alt(\Gamma)$),
and two strings
$\xx,\yy :\Omega\to\Sigma$ ($\Sigma$ is a finite alphabet).

\mn
Notation: $n=|\Omega|$, $m=|\Gamma|$.  We shall assume 
$m \ge 10\log_2 n$.  
%

\begin{notation}   \label{not:ingamma}
Recall that for a subgroup $L\le G$ and a subset $A\subseteq\Gamma$ we
write $L_A$ to denote the setwise stabilizer of $A$ in $L$
with respect to the representation $\vf : L\to \sym(\Gamma)$.
We say that $A$ is $L$-invariant if $L_A=L$.  We write
$\psi_A : G_A\to\sym(A)$ for the map that restricts the 
$G^{\,\vf}$-action to $A$.
If $A$ is $L$-invariant then $L^A:=L^{\psi_A}$ is the
restriction of $L^{\,\vf}$ to $A$.  In particular,
$\psi_{\Gamma}=\vf$ and $L^{\Gamma}=L^{\,\vf}$.
\end{notation}
We note that the group $G_A$ can be computed trivially in polynomial
time as\\
$G_A = \vf^{-1}(\sym(\Gamma)_A)$.  

We note further that if $\vf : G\to \sym(\Gamma)$ is a giant
representation then $\psi_A : G_A\to\sym(A)$ is an epimorphism.
Indeed, the setwise stabilizer of $A$ in $\alt(\Gamma)$ 
acts on $A$ as $\sym(A)$, assuming $k\le m-2$ (\ie, $|A|\le |\Gamma|-2$).

\begin{definition}[Full set]
Let $A\subseteq\Gamma$.  We say that $A$ is \emph{full}
with respect to $\xx$ if \\
$\aut_G(\xx)_A^A \ge\alt(A)$, \ie, the $G$-automorphisms of $\xx$
induce a giant on $A$.\\
Notation: $\calF(\xx)=\{A\in\binom{\Gamma}{k}\mid A \text{ is full }\}$
and $\calFbar(\xx)=\binom{\Gamma}{k}\setminus\calF(\xx)$.
\end{definition}

We consider the problem of deciding whether or not a given
small ``test set'' $A\subset \Gamma$
is full and compute useful certificates of either outcome.
We show that this question can efficiently (in time 
$k!\poly(n)$) be reduced to the String Isomorphism problem 
on inputs of size $\le n/k$ where $k=|A|$ is the size of our 
test set; we shall choose $k=O(\log n)$.

\mn
{\bf Certificate of non-fullness.} 
We certify non-fullness of the test set $A\subset\Gamma$
by computing a permutation group $M(A)\le\sym(A)$ such that
(i) $M(A)\ngeq \aaa(A)$ and (ii) $M(A)\ge \aut_G(\xx)_A^A$
($M(A)$ is guaranteed to contain the projection of the 
$G$-automorphism group of $\xx$).

\sn
Such a group $M(A)$ can be thought of as
a constructive refutation of fullness.

\mn
{\bf Certificate of fullness.} 
We certify fullness of the test set $A\subset\Gamma$
by computing a permutation group $K(A)\le\sym(\Omega)$ such that
(i) $K(A)\le \aut_G(\xx)$ and (ii) $A$ is $K(A)$-invariant
and $K(A)^A\ge \alt(A)$.

\sn
Note that $K(A)$ represents an easily (poly-time) verifiable
proof of fullness of $A$.


Our ability to find $K(A)$, the certificate of fullness,
may be surprising because 
it means that from a local start (that may take only a small segment
of $\xx$ into account), we have to build up global automorphisms
(automorphisms of the full string $\xx$).  Our ability to do
so critically depends on the ``Unaffected Stabilizer Theorem''
(Thm.~\ref{thm:unaffected}).

\begin{theorem}[Local certificates]    \label{thm:local-certificates}
Let $A\subseteq\Gamma$ where $|A|=k$.  We refer to $A$ as 
our ``test set.''
Assume $\max\{8,2+\log_2 n\} < k \le m/10$.  
By making $\le k!n^2$ calls to String Isomorphism
problems on domains of size $\le n/k$ 
and performing $k!\poly(n)$ computation we can
decide whether or not $A$ is full and
\begin{enumerate}[(a)]
\item  if $A$ is full, find a certificate $K(A)\le \aut_G(\xx)$ 
       of fullness of $A$
\item    \label{item:witness} 
      if $A$ is not full, find a certificate $M(A)\le\sym(A)$
      of non-fullness.
\end{enumerate}
The families $\{(A, K(A)) : A\in\calF(\xx)\}$ and
$\{(A, M(A)) : A\in\calFbar(\xx) \}$ are canonical.
\end{theorem}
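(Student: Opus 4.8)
\textbf{Proof proposal for Theorem~\ref{thm:local-certificates}.}

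The plan is to realize the ``growing beard'' process sketched in Section~\ref{sec:local-certificates-explained} as an explicit iterative algorithm, and to control it using the Unaffected Stabilizer Theorem (Theorem~\ref{thm:unaffected}) together with the Affected Orbit Lemma (Corollary~\ref{cor:affected}). First I would set up the iteration. We maintain a chain of subgroups of $G_A$, together with a growing ``window'' $W\subseteq\Omega$ (the beard). At stage $0$ we ignore the string entirely and take the group to be $G_A=\vf^{-1}(\sym(\Gamma)_A)$, which (as noted after Notation~\ref{not:ingamma}) maps onto $\sym(A)$ via $\psi_A$, since $k\le m-2$. At each stage we have a group $L\le G_A$ with $L^A\ge\alt(A)$, \ie\ $\psi_A$ restricted to $L$ is a giant representation onto $\sym(A)$ or $\aaa(A)$. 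We apply Theorem~\ref{thm:unaffected} to the giant representation $\psi_A|_L : L\to\sym(A)$ (legitimate because $k>\max\{8,2+\log_2 n_0(L)\}$, as $n_0(L)\le n$ and $k>\max\{8,2+\log_2 n\}$ by hypothesis): it tells us that if $D$ is the set of $\psi_A$-unaffected points of $L$ then $L_{(D)}^A\ge\alt(A)$, and that $D\neq\Omega$ --- so the affected set, which we add to the beard, is nonempty whenever the group is still nontrivial on $A$. Concretely: let $W'$ be the union of $W$ with the current affected set; restrict attention to $\iso^{W'}$, \ie\ take the subgroup of $L$ consisting of those elements respecting the segment of $\xx$ on $W'$. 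This is where the String Isomorphism oracle enters --- each such restriction is a window-isomorphism computation, and the key point is that by Corollary~\ref{cor:affected} the kernel $\ker(\psi_A|_L)$ acts on each affected orbit with orbits of length $\le |\Delta|/k$, so that the windows on which we recurse have size $\le n/k$ per the analysis in Section~\ref{sec:topaction} (Corollary~\ref{cor:topaction} and its variants). I would invoke {\sf Procedure TopAction} (Cor.~\ref{cor:topaction-aut}) to perform the ``respect the $W'$-segment'' step at the stated cost, $O(k)$ oracle calls on domains of size $\le n/k$, per stage.

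Next I would argue termination and correctness. The beard $W$ strictly grows at each stage (the affected set is nonempty as long as the group is still giant on $A$), so after $\le n$ stages we reach one of two stopping conditions. \emph{Stopping condition (1):} the current group's action on $A$ is no longer giant. In this case we have found a subgroup $L\le G_A$ with $L^A\not\ge\alt(A)$ that still contains $\aut_G(\xx)_A$ --- because at every stage we only removed elements that fail to respect a portion of $\xx$, hence every $G$-automorphism of $\xx$ stabilizing $A$ survives in $L$. Setting $M(A)=L^A$ gives the non-fullness certificate: $M(A)\ngeq\aaa(A)$ by the stopping condition, and $M(A)\ge\aut_G(\xx)_A^A$ by the preservation property. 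Moreover, since $L^A\ne\alt(A)$, the test set $A$ is genuinely not full, because $\aut_G(\xx)_A^A\le M(A)\ngeq\aaa(A)$. \emph{Stopping condition (2):} the beard stops growing, \ie\ $D=\Omega\setminus W$ consists entirely of unaffected points for the current $L$. Then by Theorem~\ref{thm:unaffected}, $L_{(D)}^A\ge\alt(A)$. Now $L_{(D)}$ pointwise-stabilizes everything outside the beard, so for any element $\sigma\in L_{(D)}$, whether $\sigma$ respects $\xx$ on $\Omega\setminus W$ is automatic up to the values of $\xx$ on fixed points --- and since the points outside $W$ are fixed pointwise, $\sigma$ does preserve $\xx$ on $\Omega\setminus W$ (a permutation fixing a point trivially preserves the letter there). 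Combined with the fact that $L\le\iso^{W}$ (all of $L$ respects $\xx$ on the beard $W$ by construction), we get $L_{(D)}\le\aut_G(\xx)$. Setting $K(A)=L_{(D)}$ we obtain the fullness certificate: $K(A)\le\aut_G(\xx)$, $A$ is $K(A)$-invariant (since $L\le G_A$), and $K(A)^A\ge\alt(A)$. In particular $A$ is full.

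The cost bookkeeping is routine: $\le n$ stages, each costing $O(k)$ String Isomorphism calls on domains of size $\le n/k$ plus polynomial overhead, plus the brute-force over $\sym(A)$ (which may be needed inside {\sf TopAction} to handle the $\sss_k$-versus-$\aaa_k$ distinction and to enumerate coset representatives), contributing the $k!$ factor; this matches the claimed $\le k!n^2$ oracle calls and $k!\poly(n)$ computation. Finally, canonicity of the families $\{(A,K(A))\}$ and $\{(A,M(A))\}$ I would verify by checking that every choice made in the procedure is canonical with respect to the action of $\aut_G(\xx)$ (or, for the two-string version, with respect to $G$-isomorphisms): the group $G_A$ is canonically determined by $A$ and $G$; the affected set at each stage is a group-theoretic invariant (affectedness is preserved under conjugation, as noted after the definition of ``affected''); the {\sf TopAction} reductions are canonical by the results of Section~\ref{sec:topaction}; hence each intermediate group $L$, and therefore the output $K(A)$ or $M(A)$, transforms equivariantly under isomorphisms carrying $A$ to $A'$. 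This is the step I expect to require the most care --- making sure that every branching point in the iterative construction is phrased functorially so that the canonicity claim is literally true, rather than merely ``true up to the usual abuse'' --- but it is conceptually straightforward once the procedure is pinned down. The genuine mathematical content, and the one indispensable ingredient, is the Unaffected Stabilizer Theorem: without the guarantee that the unaffected-point stabilizer still surjects onto a giant, there would be no way to convert a local supply of automorphisms of a segment of $\xx$ into honest global automorphisms of all of $\xx$.
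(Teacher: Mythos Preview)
Your overall architecture is exactly that of the paper: iterate the ``growing beard'' starting from $H(W)=G_A$ with $W=\emptyset$, at each stage enlarge $W$ to the affected set of the current group, recompute $H(W)=\aut_{G_A}^W(\xx)$, and stop when either $H(W)^A$ drops below $\alt(A)$ (output $M(A)=H(W)^A$) or the beard stabilizes (output $K(A)=H(W)_{(\Omega\setminus W)}$). Your correctness arguments for the two stopping conditions are right and match the paper's Claims~\ref{claim:full} and the subsequent claim verbatim, including the crucial use of Theorem~\ref{thm:unaffected} for the fullness certificate.

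The one genuine gap is in the ``recompute $H(W)$'' step and its cost. You propose to use {\sf TopAction} (Cor.~\ref{cor:topaction-aut}) at ``$O(k)$ oracle calls per stage.'' This does not work as stated. First, those corollaries are formulated for $G$ transitive, which $H(W_{\old})$ need not be. More importantly, {\sf TopAction} only \emph{decides} whether the image is a giant and, if so, computes the group; when the image is \emph{not} giant it returns only a negative answer, not the group $H(W_{\new})$ itself. But in stopping condition~(1) you need $M(A)=H(W_{\new})^A$ explicitly, so you must actually compute $H(W_{\new})$ in that final stage. The paper sidesteps this by never using {\sf TopAction} here: at every stage it recomputes $H(W_{\new})$ by brute-force enumeration over all $\le k!$ elements $\sigmabar\in H(W_{\old})^A$ (which is $\alt(A)$ or $\sym(A)$), lifting each to some $\sigma\in H(W_{\old})$, and computing $\aut_{N\sigma}^{W_{\new}}(\xx)$ where $N=\ker(\psi_A|_{H(W_{\old})})$. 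This is where the $k!$ in the bound comes from --- not from ``handling the $\sss_k$-versus-$\aaa_k$ distinction'' as you suggest. Each of these $\le k!$ coset problems is then broken up by the Chain Rule over the $N$-orbits in $W_{\new}$; by the Affected Orbit Lemma (Cor.~\ref{cor:affected}) each such orbit has length $\le n/k$, and there are $\le n$ of them, giving $\le k!\cdot n$ calls per stage and $\le k! n^2$ total over $\le n$ stages. Your accounting (``$n$ stages $\times\, O(k)$ calls $+\, k!$'') does not add up to $k!n^2$; once you replace the {\sf TopAction} shortcut with the paper's explicit coset enumeration, the bookkeeping becomes transparent.
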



\begin{definition}[Affected]
Let $G\le\sym(\Omega)$ be a permutation group and
and $\vf:G\to \sym(\Gamma)$ a homomorphism.
Consistently with previous usage,
for a subgroup $H\le G$ we say that 
$x\in\Omega$ is \emph{affected by $(H,\vf)$} if 
$H_x^{\,\vf}\ngeq \alt(\Gamma)$.  
Let $\aff(H,\vf)$ denote the set of elements affected by $(H,\vf)$, \ie,
\begin{equation}
\aff(H,\vf)=\{x\in\Omega\mid H_x^{\,\vf}\ngeq \alt(\Gamma) \}.
\end{equation}
\end{definition}
Note that if $\vf$ restricted to $H$
is not a giant representation then all of
$\Omega$ is affected by $(H,\vf)$.

If $x\in\Omega$ is affected by $(H,\vf)$ then all elements of
the orbit $x^H$ are affected by $(H,\vf)$.  In other words,
$\aff(H,\vf)$ is an $H$-invariant set.  
So we can speak of \emph{affected orbits} of $H$ (of which all 
elements are affected).

We observe the dual monotonicity of the $\aff$ operator.
\begin{observation}
If $H_1\le H_2 \le G$ then $\aff(H_1,\vf)\supseteq \aff(H_2,\vf)$.
\end{observation}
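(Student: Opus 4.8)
The statement to prove is the dual monotonicity of the $\aff$ operator: if $H_1 \le H_2 \le G$ then $\aff(H_1,\vf) \supseteq \aff(H_2,\vf)$. The plan is to argue by contraposition directly from the definition. Recall that $x \in \aff(H,\vf)$ means $H_x^{\,\vf} \ngeq \alt(\Gamma)$, equivalently, $x \notin \aff(H,\vf)$ means the image under $\vf$ of the point stabilizer $H_x$ contains $\alt(\Gamma)$. So I want to show that if $x$ is \emph{not} affected by $(H_1,\vf)$, then $x$ is \emph{not} affected by $(H_2,\vf)$; taking complements in $\Omega$ then yields the claimed inclusion of affected sets.

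First I would fix $x \in \Omega$ and suppose $x \notin \aff(H_1,\vf)$, i.e.\ $(H_1)_x^{\,\vf} \ge \alt(\Gamma)$. The key elementary observation is that the point stabilizer is monotone in the group: since $H_1 \le H_2$, we have $(H_1)_x = H_1 \cap G_x \le H_2 \cap G_x = (H_2)_x$. Applying the homomorphism $\vf$ preserves inclusions of subgroups (the image of a subgroup of a subgroup is a subgroup of the image), so $(H_1)_x^{\,\vf} \le (H_2)_x^{\,\vf}$. Chaining the inclusions gives $\alt(\Gamma) \le (H_1)_x^{\,\vf} \le (H_2)_x^{\,\vf}$, hence $(H_2)_x^{\,\vf} \ge \alt(\Gamma)$, which is precisely the statement that $x \notin \aff(H_2,\vf)$.

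Thus $\Omega \setminus \aff(H_1,\vf) \subseteq \Omega \setminus \aff(H_2,\vf)$, and taking complements in $\Omega$ yields $\aff(H_2,\vf) \subseteq \aff(H_1,\vf)$, as desired. There is essentially no obstacle here: the entire content is the monotonicity of point stabilizers under group inclusion together with the fact that homomorphic images respect subgroup containment. The only point worth a sentence of care is that $\vf$ is a homomorphism defined on all of $G$ (hence on $H_2$ and its subgroup $H_1$ by restriction), so all the $\vf$-images in question are meaningful and the inclusion $(H_1)_x^{\,\vf} \le (H_2)_x^{\,\vf}$ is literal rather than merely up to isomorphism. No use of the Unaffected Stabilizer Theorem or any of the structural results is needed; this observation is purely formal and serves only as bookkeeping for the {\sf LocalCertificates} procedure, where $H$ shrinks as the ``beard'' grows and the affected set correspondingly expands.
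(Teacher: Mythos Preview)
Your proposal is correct; the paper itself does not supply a proof for this observation (it is stated as an immediate consequence of the definition), and your argument is precisely the one-line unpacking the authors have in mind: monotonicity of point stabilizers under subgroup inclusion, followed by monotonicity of images under a homomorphism.
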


The algorithm will consider the input in an increasing sequence
of ``windows'' $W\subseteq\Omega$; in each round, the part of
the input outside the window will be ignored.  The group
$H(W)$ will be the subgroup of $G_A$ that respects the
string $\xx^{W}$, the restriction of $\xx$ to $W$.

The initial window is the empty set (the input is wholly ignored),
so the initial group is $G_A$.  Then in each round we add to $W$
the set of elements of $\Omega$ affected by the current group
$H(W)$.  I like to visualize this process as ``growing the beard''
($W$ being the beard).  By the second round $W\neq\emptyset$
because $\aff(G_A,\psi_A)$ cannot be empty (by the Unaffected
Stabilizer Theorem).

As an increasing segment of $\xx$ is taken into account,
the group $H(W)$ (the automorphism group of this segment)
decreases, and thereby the set of elements affected
by $H(W)$ increases.  (Previous windows will always be
invariant under $H(W)$.)

We stop when one of two things happens:
either $\psi_A$ restricted to $H(W)$ is no longer a giant 
homomorphism, or the beard stops growing: no element outside
$W$ is affected by $H(W)$.  

In the former case we declare
that our test set $A$ is \emph{not full} 
(witnessed by a non-giant group $M(A):=H(W)^A\le\sym(A)$).
Note that the reason $M(A)$ is not a giant is still ``local,''
it only depends on the restriction of $\xx$ to the current window.

In the latter case we declare that $A$ is \emph{full}, and bring
as witness the group $K(A)=H(W)_{(\Wbar)}$, the pointwise
stabilizer of $\Wbar=\Omega\setminus W$ in $H(W)$.
We claim two things about $K(A)$.  First, $K(A)^{\,\vf}\ge \alt(\Gamma)$.
This follows from the Unaffected Stabilizer Theorem 
(Thm.~\ref{thm:unaffected}) since none of the elements of $\Wbar$
is affected.  (This is why the beard stopped growing.)
Second, we observe that $K(A)\le \aut_G(\xx)$.  
Indeed, $K(A)$ respects the letters of the string $\xx$
on $W$ (this is an invariant of the algorithm);
and it fixes all elements outside $W$, so the letters
of the string restricted to $\Wbar$ are automatically
respected\footnote{\label{fn:eureka}%
This observation was the culmination of a 
long struggle to construct global automorphisms from local 
information.  It amounted to the realization
of the decisive role the affected/unaffected dichotomy
was to play in the algorithm; indeed this was the moment
when the concept of this dichotomy crystallized.  It was the 
``eureka moment'' of this long quest.  It occurred around noon 
on September 14, 2015.}.

Here is the algorithm in pseudocode, with a more formal proof.

\begin{proof}[Proof of Theorem~\ref{thm:local-certificates}]
%
For $W\subseteq \Omega$ let $H(W)=\aut_{G_A}^W(\xx)$. 

All sets denoted $A, A'$, and $A_i$ below will be subsets 
of $\Gamma$ of size $k$ (the ``test sets'').  An invariant of 
the \bwhile\ loop 
will be that $A$ is invariant under the action of the group
$H(W)$, \ie, $H(W)\le G_A$.

\mn
{\sf Procedure LocalCertificates}

\mn
Input: 
Input: $G\le\sym(\Omega)$, epimorphism $\psi_A : G_A\to\sym(\Gamma)$,
       test set $A\in\binom{\Gamma}{k}$

\noindent
Output: decision: ``$A$ full/not full,'' group $K(A)$ (if full) or
$M(A)$ (if not full), set $W(A)\subseteq\Omega$

\noindent
Notation: $H(W) := \aut_{G_A}^W(\xx)$ \quad (to be updated as $W$ is updated)
\begin{tabbing} m \= m \= m \= m \= m \= mmmmmmmmmmmmmmmmmmmm \= m \kill \\
01 \>\> $W:=\emptyset$ \>\>\>\>  (: so $H(W)=G_A$ :) \\
02 \>\> \bwhile\ $H(W)^A \ge\alt(A)$ 
       and $\aff(H(W), \psi_A)\not\subseteq W$ \\
03 \>\>\> $W \leftarrow \aff(H(W),\psi_A)$ \>\>\> (: growing the beard :) \\
04 \>\>\> recompute $H(W)$ \\
05 \>\>\bendwhile \\
06 \>\> $W(A)\leftarrow W$ \\
07 \>\> \bif\ $H(W)^A \ge\alt(A)$ \>\>\>\> (: so
           $\aff(H(W), \psi_A)\subseteq W$:) \\
08 \>\>\> \bthen\ $K(A)\leftarrow H(W)_{(\Wbar)}$ 
                 where $\Wbar=\Omega\setminus W$  \\
09 \>\>\> \breturn\ $W(A)$, $K(A)$, ``$A$ full,'' \bexit 
         \>\>\> (: certificate of fullness found :)\\
10 \>\> \belse\ $M(A)\leftarrow H(W)^A$ \\
11 \>\>\> \breturn\ $W(A)$, $M(A)$, ``$A$ not full,'' \bexit 
         \>\>\> (: certificate of non-fullness found :)\\
\end{tabbing}

We need to show how to recompute $H(W)$ on line 4.
We write $W_{\old}$ for the value of $W$ before
the execution of line 03 and $W_{\new}$ after.

\mn
{\sf Procedure Recompute $H(W)$}
\begin{tabbing} m \= m \= m \= m \= m \= mmmmmmmmmm \= m \kill \\
04a \>\> $N\leftarrow H(W_{\old})_{(A)}^A$   
      \>\>\>\> (: kernel of $H(W_{\old})\to\sym(A)$ map :)\\
04b \>\> $L \leftarrow\emptyset$ 
           \>\>\>\> (: $L$ will collect elements of $H(W_{\new})$ :) \\
04c \>\> \bfor\ $\sigmabar\in H(W_{\old})^A$   
           \>\>\>\> (: $H(W_{\old})^A=\alt(A)$ or $\sym(A)$ :)\\   
04d \>\>\>  select $\sigma\in H(W_{\old})$ such that 
                   $\sigma^A=\sigmabar$ 
          \>\>\> \qquad\qquad\qquad\qquad
              (: lifting $\sigmabar$ to $\Omega$ :)\\
04e \>\>\>  $L(\sigmabar)\leftarrow \aut_{N\sigma}^{W_{\new}}(\xx)$  
      \>\>\>  (: performing strong Luks-reduction to $N$ :)\\
04f \>\>\>  $L \leftarrow L\cup L(\sigmabar)$  \\
04g \>\> \bendfor \\
04h \>\> \breturn\ $H(W_{\new})\leftarrow L$
\end{tabbing}

\sn
Justification.  First we observe that on each iteration of the 
\bwhile\ loop on lines 02--05,
$H(W_{\new})\le H(W_{\old})$ and $W_{\new}\supseteq W_{\old}$.
In fact, these inclusions are proper or else we exit on line 02.
In particular,
$A$ is invariant under $H(W)$ throughout the process because
it is invariant in line 01.  It also follows that on line 07
we actually have $\aff(H(W),\psi_A)=W$.  We also note that
the \bwhile\ loop will be executed at least once (by the comment
on line 01).

\begin{claim}   \label{claim:full}
On line 08, \quad $K(A)^A\ge \alt(A)$ and $K(A)\le\aut_G(\xx)$.
In particular, $A$ is full.
\end{claim}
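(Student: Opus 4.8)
The plan is to verify the two assertions of Claim~\ref{claim:full} separately, since both are essentially read off from the loop invariants established in the justification above.

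First I would prove that $K(A)^A \ge \alt(A)$. We reach line~08 only if the condition on line~07 holds, i.e.\ $H(W)^A \ge \alt(A)$, and (as noted in the justification) on line~07 we in fact have $\aff(H(W),\psi_A) = W$, so \emph{no} element of $\Wbar = \Omega\setminus W$ is affected by $(H(W),\psi_A)$. Now I want to apply the Unaffected Stabilizer Theorem (Theorem~\ref{thm:unaffected}) to the group $H(W)$ with the homomorphism $\psi_A : H(W) \to \sym(A)$ (noting $H(W)\le G_A$, so this restriction makes sense and is a giant representation precisely because $H(W)^A\ge\alt(A)$). The set of elements of $\Omega$ unaffected by $(H(W),\psi_A)$ is exactly $\Wbar$, so the theorem yields $\bigl(H(W)_{(\Wbar)}\bigr)^A \ge \alt(A)$, i.e.\ $K(A)^A\ge\alt(A)$. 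The one hypothesis to check is the size condition $k > \max\{8, 2+\log_2 n_0\}$ where $n_0$ is the largest orbit length of $H(W)$ acting via $\psi_A$ on $\Omega$; since $n_0 \le n$ and we assumed $\max\{8, 2+\log_2 n\} < k$, this is satisfied. This size-condition bookkeeping is the only place any care is needed.

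Second I would prove $K(A)\le \aut_G(\xx)$. Here $K(A) = H(W)_{(\Wbar)}$ and $H(W) = \aut^W_{G_A}(\xx) \le \aut^W_G(\xx)$, so every element $\tau\in K(A)$ already respects $\xx$ on the window $W$: $\xx(u) = \xx(u^\tau)$ for all $u\in W$. For $u\in\Wbar$ we have $u^\tau = u$ since $\tau$ pointwise-fixes $\Wbar$, so $\xx(u^\tau) = \xx(u)$ trivially. Hence $\xx^\tau = \xx$, i.e.\ $\tau\in\aut_G(\xx)$ (it lies in $G$ because $K(A)\le H(W)\le G_A\le G$). This gives $K(A)\le\aut_G(\xx)$.

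Finally, combining the two: $\aut_G(\xx)_A^A \ge K(A)_A^A = K(A)^A \ge \alt(A)$ (using that $A$ is $K(A)$-invariant, which holds since $A$ is $H(W)$-invariant throughout the loop), so by definition $A$ is full. I do not anticipate a genuine obstacle here — the content was all front-loaded into the Unaffected Stabilizer Theorem and into maintaining the loop invariants "$A$ is $H(W)$-invariant" and "$H(W)$ respects $\xx$ on $W$"; the claim itself is a short deduction from those. The only thing to be careful about is stating precisely which group and which homomorphism the Unaffected Stabilizer Theorem is applied to, and confirming the orbit-length hypothesis.
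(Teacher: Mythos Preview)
Your proof is correct and follows essentially the same approach as the paper: apply the Unaffected Stabilizer Theorem (Theorem~\ref{thm:unaffected}) to the giant representation $\psi_A$ restricted to $H(W)$ to obtain $K(A)^A\ge\alt(A)$, and verify $K(A)\le\aut_G(\xx)$ by splitting into the cases $u\in W$ and $u\in\Wbar$. You are in fact more explicit than the paper about checking the hypothesis $k>\max\{8,2+\log_2 n_0\}$ and about the loop invariants; one small phrasing quibble is that $n_0$ should be described as the largest orbit length of $H(W)$ acting on $\Omega$ (not ``via $\psi_A$''---the representation $\psi_A$ goes to $\sym(A)$), but your bound $n_0\le n$ is of course correct regardless.
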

\begin{proof}
$K(A)\ge\alt(A)$ is the crucial consequence of 
Theorem~\ref{thm:unaffected},
applied to the giant representation $\psibar_A : H(W_{\old})\to\sym(A)$.
($\psibar_A$ denotes the restriction of $\psi_A$ to
$H(W_{\old})$.)

To show that $K(A)\le\aut_G(\xx)$ let $\sigma\in K(A)$ and $u\in\Omega$.
We need to show that $\xx(u^{\sigma})=\xx(u)$.  If $u\in W$ then this
follows because $\sigma\in H(W)=\aut_G^W(\xx)$.  If $u\in \Wbar$ then 
$u^{\sigma}=u$.
\end{proof}
\begin{claim}
If $A$ is not full then we reach line 10 with $M(A)\not\ge\alt(A)$
and $\aut_G(\xx)_A^A\le M(A)$.
\end{claim}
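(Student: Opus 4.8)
The plan is to read off the three assertions --- that control reaches line 10, that $M(A)\not\ge\alt(A)$, and that $\aut_G(\xx)_A^A\le M(A)$ --- essentially from the bookkeeping of {\sf Procedure LocalCertificates}, invoking Claim~\ref{claim:full} for the first and the loop invariant $H(W)\le G_A$ for the third.

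First I would dispose of termination. By the justification preceding Claim~\ref{claim:full}, each pass through the \bwhile\ loop on lines 02--05 replaces $(W,H(W))$ by $(W_{\new},H(W_{\new}))$ with $W_{\old}\subsetneq W_{\new}$: the inclusion $W_{\old}\subseteq W_{\new}$ is \emph{proper}, since otherwise the loop guard on line 02 fails and we do not execute line 03 again. As $\Omega$ is finite, the loop halts and control reaches line 07, where the algorithm branches on whether $H(W)^A\ge\alt(A)$. In the ``then'' branch (lines 08--09) we reach line 09, and Claim~\ref{claim:full} shows the output $K(A)=H(W)_{(\Wbar)}$ satisfies $K(A)\le\aut_G(\xx)$ and $K(A)^A\ge\alt(A)$, which by definition means $A$ is full. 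Taking the contrapositive: if $A$ is not full, the test on line 07 must fail, so control falls through to line 10, sets $M(A)\leftarrow H(W)^A$, and exits on line 11. In particular $M(A)=H(W)^A\not\ge\alt(A)$, as this is precisely the negation of the line-07 condition under which line 10 is reached.

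It remains to establish $\aut_G(\xx)_A^A\le M(A)$, and here I would simply unwind the definitions. We have $H(W)=\aut_{G_A}^W(\xx)=\{\tau\in G_A\mid(\forall u\in W)(\xx(u)=\xx(u^\tau))\}$. Let $\sigma\in\aut_G(\xx)_A$. Then $\sigma\in G$ and $A^{\sigma^{\,\vf}}=A$, so $\sigma\in G_A$; moreover $\xx^\sigma=\xx$ forces $\xx(u)=\xx(u^\sigma)$ for \emph{all} $u\in\Omega$, in particular for all $u\in W$. Hence $\sigma\in H(W)$, so $\aut_G(\xx)_A\le H(W)$, and applying the restriction map $\psi_A$ (the $\vf$-action restricted to $A$) gives $\aut_G(\xx)_A^A=\bigl(\aut_G(\xx)_A\bigr)^{\psi_A}\le H(W)^{\psi_A}=H(W)^A=M(A)$, as claimed.

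I do not anticipate a real obstacle: all of the mathematical content is carried by Claim~\ref{claim:full}, which in turn rests on the Unaffected Stabilizer Theorem (Thm.~\ref{thm:unaffected}), and what remains is purely definitional manipulation of window-automorphisms together with the invariant $A$ is $H(W)$-invariant. The only point requiring a little care is confirming that the \bwhile\ loop is genuinely exited --- i.e. the strictness $W_{\old}\subsetneq W_{\new}$ noted above --- so that the dichotomy on line 07 is actually reached and applies.
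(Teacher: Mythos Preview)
Your proof is correct and follows essentially the same approach as the paper: invoke Claim~\ref{claim:full} (by contrapositive) to reach line~10 with $H(W)^A\not\ge\alt(A)$, and verify the containment $\aut_G(\xx)_A\le H(W)$ directly from the definition of $H(W)=\aut_{G_A}^W(\xx)$. The paper phrases the last point as ``the relation $\aut_G(\xx)_A\le H(W)$ is an invariant of the process,'' which is exactly what your unwinding of definitions establishes; your version is simply more explicit.
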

\begin{proof}
We reach line 10 by Claim~\ref{claim:full}.  We then have
$\aut_G(\xx)_A^A\le M(A)$ because the relation $\aut_G(\xx)_A^A\le H(W)$
is an invariant of the process.
\end{proof}
Next we justify procedure {\sf Recompute $H(W)$}.
This is immediate from the observation 
\begin{equation}
      H(W_{\old}) = \bigcup_{\sigmabar} N\sigmabar
\end{equation}
where the union extends over $\sigmabar\in H(W_{\old})$.
So we can use strong Luks-reduction (over the orbits of
$N$ in $W_{\new}$) to compute $\aut_{H(W_{\old})}^{W_{\new}}(\xx)$.
But this group is $H(W_{\new})$ because $W_{\new}\supseteq W_{\old}$.

Finally we need to justify the complexity assertion.  This is where
Cor.~\ref{cor:affected} (``Affected Orbits Lemma'')
plays a critical role.

The \bwhile\ loop is executed at most $n$ times
(because $W$ strictly increases in each round;
we exit on line 02 when the ``beard'' stops growing),
so the dominant component of the complexity is in
recomputing $H(W)$.  We have reduced this to $\le k!$
instances of string $N$-isomorphism on the window
$W_{\new}$.   

By Cor.~\ref{cor:affected} (``Affected Orbits Lemma''),
each orbit of $N$ in $W_{\new}$ has length $\le n/k$.

We conclude that strong Luks reduction reduces the recomputation of
$H(W)$ to $\le n\cdot k!$ instances of String Isomorphism on windows
of size $\le n/k$, justifying the stated complexity estimate.
\end{proof}

Our procedure does more than stated in Theorem~\ref{thm:local-certificates}.
It also returns the set $W(A)$.  We summarize key properties of this
assignment.

\begin{proposition}
As in Theorem~\ref{thm:local-certificates}, let a ``test set'' be
a subset $A\subseteq\Gamma$ with $|A|=k$ elements where 
$\max\{8,2+\log_2 n\} < k \le m/10$.  
For all test sets $A$ we have
\begin{enumerate}[(i)]
\item $\Omega(A)\subseteq W(A)\subseteq \Omega$
\item $W(A)$ is invariant under $\aut_{G_A}(\xx)$ 
\item if $A$ is full then $W(A)=\aff(\aut_{G_A}^{W(A)}(\xx))$
\item if $A$ is full then $K(A)^A$ fixes all elements of
            $\Omega\setminus W(A)$
\item the assignment $A\mapsto W(A)$ is canonical.
\end{enumerate}
\end{proposition}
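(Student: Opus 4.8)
The final Proposition lists five properties (i)--(v) of the assignment $A\mapsto W(A)$ produced by {\sf Procedure LocalCertificates}. My plan is to verify each item by unwinding the definition of $W(A)$ as the value of $W$ at the moment the {\bwhile} loop exits, together with the structural facts already established in the proof of Theorem~\ref{thm:local-certificates} and the Unaffected Stabilizer Theorem (Thm.~\ref{thm:unaffected}).

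\smallskip

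\noindent\textbf{Plan, item by item.}

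For (i), $W(A)\subseteq\Omega$ is immediate since every value of $W$ in the loop is a subset of $\Omega$. For $\Omega(A)\subseteq W(A)$: here $\Omega(A)$ should denote the union of the standard blocks of $G$ (for the giant representation $\psi_A$) labelled by subsets of $A$ --- equivalently, the set of elements $x$ whose block-label $T(x)$ (from the Main Structure Theorem, Thm.~\ref{thm:altquotient}) is contained in $A$. I would argue that any $x$ with $T(x)\nsubseteq A$ is \emph{affected} by $(G_A,\psi_A)$: the setwise stabilizer $(G_A)_x$ maps into the stabilizer in $\sym(A)$ of the nonempty set $T(x)\cap A \subsetneq A$ (or, if $T(x)\cap A$ is all of $A$, one uses $|T(x)|<k/4<|A|$ to still find a constraint), hence cannot surject onto $\alt(A)$. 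Actually the cleaner route: $x\notin\Omega(A)$ means $T(x)$ meets $\Gamma\setminus A$, and then the orbit of the block $[x]$ under $(G_A)_x$ is constrained, forcing $x\in\aff(G_A,\psi_A)\subseteq W(A)$ after the first execution of the loop. Since the loop runs at least once (comment on line 01), $\Omega(A)\subseteq\aff(G_A,\psi_A)\subseteq W(A)$.

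For (ii), invariance of $W(A)$ under $\aut_{G_A}(\xx)$: the key is that $\aff(H,\vf)$ is $H$-invariant for any $H\le G$ (stated in the excerpt), and more strongly, if $H_1\le H_2$ then $\aff(H_2,\vf)$ is $H_1$-invariant too. Now $\aut_{G_A}(\xx)\le H(W)$ for every intermediate window $W$ (since $H(W)=\aut_{G_A}^W(\xx)\supseteq\aut_{G_A}(\xx)$). At each step $W_{\new}=\aff(H(W_{\old}),\psi_A)$, which is $H(W_{\old})$-invariant, hence $\aut_{G_A}(\xx)$-invariant. Since $\aut_{G_A}(\xx)$-invariance is preserved under taking unions/iterating, the final $W(A)$ is $\aut_{G_A}(\xx)$-invariant. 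I would phrase this as an easy induction on the loop iterations.

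For (iii), when $A$ is full we exit the loop via the second disjunct on line 02 failing, i.e.\ $\aff(H(W),\psi_A)\subseteq W$; combined with the invariant $W\subseteq\aff(H(W),\psi_A)$ established in the ``Justification'' paragraph (the inclusions in the loop are proper until exit), we get $W(A)=\aff(H(W(A)),\psi_A)=\aff(\aut_{G_A}^{W(A)}(\xx),\psi_A)$, which is exactly the asserted fixed-point equation. For (iv): $K(A)=H(W(A))_{(\Wbar)}$ by definition (line 08), so every element of $K(A)$ fixes $\Wbar=\Omega\setminus W(A)$ pointwise; a fortiori the restriction of the $\sym(A)$-action, i.e.\ $K(A)^A$ acting through lifts, is realized by elements fixing $\Omega\setminus W(A)$ --- this is really just restating line 08, so it is a one-liner.

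\smallskip

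\noindent\textbf{The main obstacle: (v), canonicity.} This is the substantive point. I need to show that if $f:\Omega\to\Omega$ (or between the two input domains, in the two-object category) is an isomorphism carrying $\xx$ to $\yy$ and intertwining the giant representations, and if $f$ maps the test set $A$ to $A'=A^{f^\Gamma}$, then $W(A)^f=W(A')$. The argument is that {\sf Procedure LocalCertificates} is \emph{deterministic} given $(G,\psi_A,\xx,A)$ modulo the only choices made, which are the lifts $\sigma$ of $\sigmabar$ on line 04d and the coset enumeration in strong Luks reduction --- and none of these affect the \emph{output set} $W(A)$, only intermediate representatives. Concretely: the sequence of windows $\emptyset=W_0\subsetneq W_1\subsetneq\cdots\subsetneq W_r=W(A)$ is intrinsically defined by $W_{j+1}=\aff(\aut_{G_A}^{W_j}(\xx),\psi_A)$, and each operation here --- forming $\aut_{G_A}^{W_j}(\xx)$, forming the affected set --- commutes with the group/string isomorphism $f$ (since $f$ conjugates $G_A$ to $G_{A'}$, sends $\xx\mapsto\yy$, intertwines $\psi_A$ with $\psi_{A'}$, and the definition of ``affected'' is purely group-theoretic). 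Hence $W_j^f$ is the $j$-th window in the run for $A'$, and taking $j=r$ gives $W(A)^f=W(A')$. The care needed is to (a) state precisely the category in which canonicity is claimed (the two-string category, or the category of $\binom{\Gamma}{k}$-indexed test sets with morphisms induced by string isomorphisms intertwining $\vf$), and (b) verify that the \bwhile\ loop terminates after the same number of steps on both sides, which follows because $f$ is a bijection so the window sizes agree at every stage. I would present (v) as the final and longest paragraph, with the window-sequence characterization as the crux; items (i)--(iv) are short corollaries of the loop's exit condition and the definitions.
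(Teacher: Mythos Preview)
Your write-up is far more detailed than the paper's own proof, which is simply the one line ``Evident from the algorithm.'' Your arguments for items (ii)--(v) are correct and well-organized; in particular, the canonicity argument via the intrinsic window sequence $W_{j+1}=\aff(\aut_{G_A}^{W_j}(\xx),\psi_A)$ is exactly the right way to make (v) precise.

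However, there is a direction error in your argument for item (i). You want $\Omega(A)\subseteq W(A)$, i.e., that every $x$ with $T(x)\subseteq A$ lands in $\aff(G_A,\psi_A)$ after the first loop iteration. But you instead argue that $x$ with $T(x)\nsubseteq A$ is affected, and then conclude $\Omega(A)\subseteq\aff(G_A,\psi_A)$---a non sequitur (you would be showing the \emph{complement} of $\Omega(A)$ is affected). Moreover, your reasoning for that direction requires $T(x)\cap A\neq\emptyset$, which is not guaranteed when $T(x)\nsubseteq A$. The correct argument runs in the opposite direction: if $x\in\Omega(A)$ then $T(x)\subseteq A$ with $1\le |T(x)|=t<k=|A|$; since $(G_A)_x^{\vf}\le G_x^{\vf}\le(\sym(\Gamma))_{T(x)}$ and also stabilizes $A$, restriction to $A$ gives $(G_A)_x^{\psi_A}\le(\sym(A))_{T(x)}\not\ge\alt(A)$, so $x$ is affected. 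Your own stabilizer-of-$T(x)\cap A$ observation already delivers this once you swap ``$T(x)\nsubseteq A$'' for ``$T(x)\subseteq A$.''
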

\begin{proof}  
Evident from the algorithm.
\end{proof}

We need to highlight one more fact about the structures we obtained.

\begin{notation}[Truncation of strings]
Let $\ast$ be a special symbol not in the alphabet $\Sigma$.  
For the string $\xx : \Omega\to\Sigma$ and ``window'' $W\subseteq\Omega$
we define the string $\xx^W : |omgea\to (\Sigma\cup\{\ast\})$ 
by setting $\xx^W(u)=\xx(u)$ for $u\in W$ and 
$\xx^W(u)=\ast$ for $u\in\Omega\setminus W$.
\end{notation}

\begin{notation}[Coloring of strings]
For the string $\xx :\Omega \to\Sigma$ and the ``test set''
$A\subseteq\Gamma$ we define the string
$\xx_A :  \Omega\to (\Sigma\times \{0,1\})$ by setting
$\xx_A(u)=(\xx(u),1)$ if $u\in\Omega(A)$ and
$\xx_A(u)=(\xx(u),0)$ if $u\notin\Omega(A)$.
\end{notation}

\begin{proposition}[Comparing local certificates]  \label{prop:compare}
For all test sets $A,A'\subseteq \Gamma$ with $|A|=|A'|=k$ and all
strings $\xx,\xx' :\Omega\to\Sigma$ we can compute
$\iso_G\left(\xx_A^{W(A)},(\xx')_{A'}^{W(A')}\right)$ by 
making $\le k!n^2$ calls to String Isomorphism
problems on domains of size $\le n/k$ 
and performing $k!\poly(n)$ computation.
\end{proposition}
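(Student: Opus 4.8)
The plan is to run an isomorphism-mode version of {\sf Procedure LocalCertificates} (Theorem~\ref{thm:local-certificates}), imitating its proof step by step with automorphism computations replaced by isomorphism computations and a shift inserted after each reduction to keep the second string aligned. The starting point is a \emph{reduction to a single window}. Observe that the string $\xx_A^{W(A)}$ is determined by the triple consisting of the set $W(A)$, the canonical set $\Omega(A)$ (the union of the standard blocks labeled by $A$, from the Main Structure Theorem~\ref{thm:altquotient}), and the restriction $\xx|_{W(A)}$; observe also that {\sf Procedure LocalCertificates}, run on $(\xx,A)$, inspects $\xx$ only on the increasing windows $W_0\subsetneq W_1\subsetneq\cdots\subsetneq W_r=W(A)$, so it reads nothing outside $W(A)$, and that the assignment $A\mapsto W(A)$ is canonical (as recorded right after Theorem~\ref{thm:local-certificates}). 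From these facts I would deduce, for any $\rho\in G$ with $A^{\rho^{\,\vf}}=A'$ (one exists because $\vf$ is a giant representation and $k\le m/10<m/2$), the identity
\begin{equation}
  \iso_G\!\left(\xx_A^{W(A)},\,(\xx')_{A'}^{W(A')}\right)
   \;=\; \iso_{G_A\rho}^{\,W(A)}(\xx,\xx').
\end{equation}
The nontrivial inclusion ``$\supseteq$'' is the crux: if $\sigma\in G$ satisfies $A^{\sigma^{\,\vf}}=A'$ and $\xx=\xx'\!\circ\sigma$ on $W(A)$, then $\sigma$ carries $\Omega(A)$ to $\Omega(A')$ since standard blocks are canonical, and, because {\sf LocalCertificates} is deterministic, $G$-equivariant, and reads nothing outside $W(A)$, an induction on the beard-growing rounds shows that running it on $(\xx^\sigma,A')$ and on $(\xx',A')$ produces identical data; hence $W(A)^\sigma=W(A')$ and $(\xx_A^{W(A)})^\sigma=(\xx')_{A'}^{W(A')}$. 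The reverse inclusion is immediate from the marking of $\Omega(A)$.

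Applying the shift identity Eq.~\eqref{eq:shift-window} to rewrite $\iso_{G_A\rho}^{W(A)}(\xx,\xx')=\iso_{G_A}^{W(A)}(\xx,\yy)\,\rho$ with $\yy:=(\xx')^{\rho^{-1}}$, it remains to compute $\iso_{G_A}^{W(A)}(\xx,\yy)$. I would replay the loop of {\sf LocalCertificates}: put $W_0:=\emptyset$, $J_0:=G_A$, and for $j\ge 0$ maintain the coset $J_j:=\iso_{G_A}^{W_j}(\xx,\yy)$. If some $J_j$ is empty, return ``empty.'' Otherwise $J_j=H_j\sigma_j$ for any coset representative $\sigma_j\in J_j$, where $H_j:=\aut_{G_A}^{W_j}(\xx)=J_j\sigma_j^{-1}$, so $H_j$ (and hence the next beard $W_{j+1}:=\aff(H_j,\psi_A)$) is obtained for free; then
\begin{equation*}
 J_{j+1}=\iso_{G_A}^{W_{j+1}}(\xx,\yy)=\iso_{J_j}^{W_{j+1}}(\xx,\yy)
  =\iso_{H_j}^{W_{j+1}}\!\left(\xx,\yy^{\sigma_j^{-1}}\right)\sigma_j
\end{equation*}
by Eq.~\eqref{eq:shift-window}. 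Since $W_{j+1}=\aff(H_j,\psi_A)$ is $H_j$-invariant, weak Luks reduction (Proposition~\ref{prop:weakluks}) over the $\le k!$ cosets of $N_j:=\ker(\psi_A|_{H_j})$ in $H_j$ expresses $\iso_{H_j}^{W_{j+1}}(\cdot,\cdot)$ as a union of $\le k!$ sets of the form $\iso_{N_j}^{W_{j+1}}(\cdot,\cdot)\tau$; and by the Affected Orbit Lemma (Corollary~\ref{cor:affected}, applicable because $H_j^A\ge\alt(A)$ throughout the loop and $k\ge 5$) every orbit of $N_j$ inside the affected set $W_{j+1}$ has length $\le n/k$. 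Reducing to the window (Observation~\ref{obs:nowindow}) and applying the Chain Rule (Proposition~\ref{prop:chainrule}) over the $N_j$-orbits then turns each such set into $\le n$ calls to String Isomorphism on domains of size $\le n/k$ --- precisely the reduction carried out by {\sf Procedure Recompute $H(W)$}. As $W_j$ strictly grows, the loop runs $\le n$ times, for a total of $\le k!\,n^2$ such calls and $k!\poly(n)$ further computation. At termination $W_r=W(A)$, and the desired set is $J_r\rho$, a right coset of $\aut_G(\xx_A^{W(A)})=H_r$ (or empty), which we return.

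The main obstacle is the reduction-to-a-single-window identity of the first paragraph: it requires making precise, and verifying, that the output $W(A)$ is a canonical function of the \emph{truncated, colored} string $\xx_A^{W(A)}$ alone --- i.e.\ that no part of $\xx$ outside $W(A)$ can influence the run of {\sf LocalCertificates} --- and combining this with the $G$-equivariance of the whole procedure, including the labeling of standard blocks. Everything downstream is a routine ``isomorphism-ization'' of the proof of Theorem~\ref{thm:local-certificates}. A secondary point, easily handled, is the bookkeeping when the two sides are structurally incompatible (one test set full and the other not, or non-matching non-fullness certificates): the identity forces the isomorphism set to be empty in each such case, and this is detected automatically the first time some $J_j$ comes out empty.
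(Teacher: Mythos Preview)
Your proof is correct. It takes a route that is close to, but not identical with, the paper's: the paper runs {\sf LocalCertificates} \emph{simultaneously} on $(\xx,A)$ and $(\xx',A')$, maintaining two growing windows $W,W'$ and the coset $Q=\iso_G(\xx_A^W,(\xx')_{A'}^{W'})$ throughout, with a modified {\sf Recompute} that updates $H(W)$ and $Q$ in one pass. You instead front-load the work by proving the reduction identity $\iso_G(\xx_A^{W(A)},(\xx')_{A'}^{W(A')})=\iso_{G_A\rho}^{W(A)}(\xx,\xx')$, which collapses the two-window problem to a single-window one, and then run only the $\xx$-side loop while carrying the isomorphism coset $J_j$. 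Your identity makes explicit the canonicity fact (that $W(A)$ depends only on $\xx|_{W(A)}$, hence $W(A)^\sigma=W(A')$ for every $\sigma$ in the right-hand set) that the paper's parallel-run formulation uses only implicitly; in exchange, the paper avoids having to state and prove that identity. The algorithmic core---weak Luks reduction to the kernel $N_j$, then the Affected Orbit Lemma to bound $N_j$-orbit sizes inside the new beard---is the same in both, and both arrive at the same $k!n^2$ count.
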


\begin{proof}
Run procedure {\sf LocalCertificates} simultaneously
on $(\xx,A)$ and on $(\xx',A')$, maintaining the variable
$W$ for $(x,A)$ and the variable $W'$ for $(\xx',A')$.
Further maintain the set $Q=\iso_G(\xx_A^W,(\xx')_{A'}^{W'})$.
On line 01 we shall have $Q=G_A\sigma$ for any $\sigma\in G$
that takes $A$ to $A'$.

Change line 04 to ``recompute $H(W)$ and $Q$.''  \\
Here is the modified ``Recompute'' code.

\mn
{\sf Procedure Recompute $H(W)$ and $Q$}
\begin{tabbing} m \= m \= m \= m \= m \= mmmmmmmmmmmmmmmmm \= m \kill \\
04a  \>\>\> $N\leftarrow H(W_{\old})_{(A)}^A$ 
      \>\>\> (: kernel of $H(W_{\old})\to\sym(A)$ map :)\\
04b1 \>\>\> $L \leftarrow\emptyset$ 
           \>\>\> (: $L$ will collect elements of $H(W_{\new})$ :) \\
04b2 \>\>\> $R \leftarrow\emptyset$ 
           \>\>\> (: $R$ will collect elements of $Q_{\new}$ :) \\
04c0 \>\>\> fix $\pi_0\in Q_{\old}$ \\
04c1 \>\>\> \bfor\ $\sigmabar\in H(W_{\old})^A$   
           \>\>\> (: $H(W_{\old})^A=\alt(A)$ or $\sym(A)$ :)\\   
04d1 \>\>\>\>  select $\sigma\in H(W_{\old})$ such that 
                   $\sigma^A=\sigmabar$ 
          \>\> \qquad\qquad
              (: lifting $\sigmabar$ to $\Omega$ :)\\
04d2 \>\>\>\>  $\pi\leftarrow \sigma\pi_0$ \>\> (: $\pi\in Q_{\old}$ :) \\
04e1 \>\>\>\>  $L(\sigmabar)\leftarrow \aut_{N\sigma}^{W_{\new}}(\xx)$  \\
04e2 \>\>\>\>  $R(\sigmabar)\leftarrow 
               \iso_{N\pi}(\xx_A^{W_{\new}},(\xx')_{A'}^{W'_{\new}})$ 
      \>\>  (: performing strong Luks-reduction to $N$ :)\\
04f1 \>\>\>\>  $L \leftarrow L\cup L(\sigmabar)$  
      \>\>  (: collecting automorphisms :) \\
04f2 \>\>\>\>  $R \leftarrow R\cup R(\pibar)$  
      \>\>  (: collecting isomorphisms :) \\
04g \>\>\> \bendfor \\
04x \>\>\> \bif\ $R=\emptyset$ \bthen\ reject isomorphism, \bexit \\
04h \>\>\> \belse\ \breturn\ $H(W_{\new})\leftarrow L$ and $Q\leftarrow R$
\end{tabbing}
The analysis is analogous with the analysis of the 
 {\sf Recompute $H(W)$} routine.
\end{proof}

\begin{notation}[Sideburn]
Assume $A$ is full.  Let $\wtw(A)$ (the ``sideburn'') denote the set 
of those elements of $\Gamma$ that are affected by $K(A)^{\Gamma}$:
\begin{equation}
     \wtw(A) = \aff(K(A)^{\Gamma})
\end{equation}
\end{notation}
Clearly $A\subseteq \wtw(A)$.

Growing the ``sideburn'' is analogous to growing the ``beard'' except
we do not iterate ($K(A)$ already consists of ``global''
automorphisms).

Recall the definition of the \emph{minial degree} of a
permutation group (Def.~\ref{def:mindeg}).
\begin{proposition}   \label{prop:mindeg5}
Let $A\subseteq\Gamma$ be a full test set (subset with $|A|=k$
where $k > \max\{8, 2+\log_2 m\}$).
Then the minimal degree of $K(A)^{\Gamma}$ is at most 
$|\wtw(A)|$.
\end{proposition}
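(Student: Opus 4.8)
The plan is to produce, inside $K(A)^{\Gamma}$, a single non-identity permutation whose support lies entirely in the sideburn $\wtw(A)$; since the minimal degree is by definition the minimum support size over non-identity elements, this at once yields the bound.

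The first step is to record the correct giant representation. Since $A$ is a full test set, $A$ is $K(A)$-invariant and $K(A)^A\ge\alt(A)$. Passing to the image under $\vf$, the set $A\subseteq\Gamma$ is invariant under $P:=K(A)^{\Gamma}=K(A)^{\,\vf}$, and the restriction map $\psi_A:P\to\sym(A)$ satisfies $P^A=K(A)^A\ge\alt(A)$, so $\psi_A$ is a giant representation of $P$. By the very definition of the sideburn, $\wtw(A)=\aff(P,\psi_A)$, hence $U:=\Gamma\setminus\wtw(A)$ is precisely the set of elements of $\Gamma$ unaffected by $(P,\psi_A)$.

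The second step is to apply the Unaffected Stabilizer Theorem (Theorem~\ref{thm:unaffected}) with $G\leftarrow P$, $\Omega\leftarrow\Gamma$, $\vf\leftarrow\psi_A$, $k\leftarrow|A|$. Its hypothesis is $k>\max\{8,2+\log_2 n_0\}$ where $n_0$ is the largest orbit length of $P$ on $\Gamma$; every $P$-orbit is contained in $\Gamma$, so $n_0\le m$, and the assumption $k>\max\{8,2+\log_2 m\}$ of the Proposition ensures the hypothesis holds. The conclusion is $P_{(U)}^{\psi_A}\ge\alt(A)$, so in particular $P_{(U)}\neq 1$ because $|\alt(A)|=k!/2>1$.

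The final step: choose any $\sigma\in P_{(U)}$ with $\sigma\neq 1$. Being in $P_{(U)}$, $\sigma$ fixes every point of $U$, so $\supp(\sigma)\subseteq\Gamma\setminus U=\wtw(A)$ and therefore $|\supp(\sigma)|\le|\wtw(A)|$. Since $P=K(A)^{\Gamma}\neq 1$ (again because $K(A)^A\ge\alt(A)$), the minimal degree of $K(A)^{\Gamma}$ is defined and is at most $|\supp(\sigma)|\le|\wtw(A)|$, as claimed. I do not expect a genuine obstacle here; the only places that require attention are verifying that $\psi_A$ is a giant representation of the \emph{restricted} group $K(A)^{\Gamma}$ (and not merely of $G_A$), and checking the numerical hypothesis of Theorem~\ref{thm:unaffected} through the bound $n_0\le m$. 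Everything else is an immediate consequence of the Unaffected Stabilizer Theorem.
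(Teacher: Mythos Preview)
Your proof is correct and follows essentially the same approach as the paper: apply the Unaffected Stabilizer Theorem to the giant representation $\psi_A$ of $K(A)^{\Gamma}$, obtaining a nontrivial element in the pointwise stabilizer of $\Gamma\setminus\wtw(A)$, whose support is therefore contained in $\wtw(A)$. The paper's proof is terser (it defines $L(A)=(K(A)^{\Gamma})_{(\Gamma\setminus\wtw(A))}$ and asserts $L(A)^A\ge\alt(A)$), but you have spelled out exactly the hypothesis check and the final step that the paper leaves implicit.
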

\begin{proof}
Let $L(A)\le\sym(\Gamma)$ denote the pointwise stabilizer of 
$\Gamma\setminus \wtw(A)$ in $K(A)^{\Gamma}$.  Then,
by the ``Unaffected Stabilizer Theorem,'' (Thm.~\ref{thm:unaffected}),
$L(A)^{A}\ge\alt(A)$.
\end{proof}



\subsection{Aggregating the local certificates}
\label{sec:aggregate}

We continue the notation of the previous section.

\begin{theorem}[AggregateCertificates]  \label{thm:aggregate}
Let $\vf : G\to \sym(\Gamma)$ be a giant representation,
where $G\le\sym(\Omega)$,\ $|\Omega|=n$, and $|\Gamma|=m$.
Let $\max\{8, 2+\log_2 n\}< k < m/10$.  
Then, at a multiplicative cost of $m^{O(k)}$, we can
either find a canonical colored $4/5$-partition of $\Gamma$
or find a canonically embedded $k$-ary relational
structure with relative symmetry defect $\ge 1/2$ on $\Gamma$,
or reduce the determination of $\iso_G(\xx,\yy)$ to
$n^{O(1)}$ instances of size $\le 2n/3$.
\end{theorem}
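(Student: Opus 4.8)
The idea is to run {\sf Procedure LocalCertificates} (Theorem~\ref{thm:local-certificates}) on every test set $A\in\binom{\Gamma}{k}$ and on every pair of test sets via Proposition~\ref{prop:compare}, and then dispatch on the aggregate picture. There are two regimes, according to whether the full test sets are ``many'' or ``few''. Concretely, let $\calF=\calF(\xx)\subseteq\binom{\Gamma}{k}$ be the family of full test sets. First I would handle the negative certificates: if $\calF$ is small — say its union does not contain a large symmetrical set, equivalently the non-full certificates $M(A)$ ``see'' a large part of $\Gamma$ — then the data $\{(A,M(A))\mid A\in\calFbar(\xx)\}$, together with the comparison data $\iso$-types of the pairs $(\xx_A^{W(A)},\xx_{A'}^{W(A')})$, form a canonical $k$-ary relational structure on $\Gamma$ in exactly the way described in Section~\ref{sec:localguide} (the ``local guide'' construction, Prop.~\ref{prop:localguide}): the category $\calL$ has objects $X(A)$ built from the local certificate at $A$, the isomorphisms are those produced by Proposition~\ref{prop:compare}, and since no $A$ is full, $\aut(X(A))\neq\sym(A)$, so the resulting $k$-ary structure $\zzz$ on $\Gamma$ has strong symmetry defect $\ge m-k+1>m/2$, i.e.\ relative symmetry defect $\ge 1/2$. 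That is the second alternative in the statement. The multiplicative cost here is only the $n^{O(1)}$ overhead of running the certificate procedure $\binom{m}{k}^2\le m^{O(k)}$ times, each call costing $k!\,n^2$ String Isomorphism instances of size $\le n/k$, which is absorbed.

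Next I would handle the positive certificates. If instead $\calF$ is large, let $F\le\aut_G(\xx)$ be the subgroup generated by all the fullness certificates $K(A)$, $A\in\calF$, and consider its image $F^{\Gamma}\le\sym(\Gamma)$ under $\vf$. By construction $F^{\Gamma}$ contains $\alt(A)$ for every full $A$, so the union of the full sets is a single orbit-like region of high symmetry for $F^\Gamma$. Now examine the support $\supp(F^{\Gamma})\subseteq\Gamma$ and the orbit/block structure of $F^{\Gamma}$ on it. Three sub-cases: (i) if $F^{\Gamma}$ is a giant on an orbit of relative size $>1/2$, we are in the situation of Section~\ref{sec:topaction} — apply Corollary~\ref{cor:topaction-col} (TopAction4): the canonical coloring of $\Gamma$ into this large orbit and the rest, combined with the giant action of $\aut_G(\xx)^{\vf}$ on it, lets us compute $\iso_G(\xx,\yy)$ via $\le 6k$ recursive String Isomorphism calls on windows of size $\le n/k$ plus a bounded number of instances of total size $\le n$ and max size $\le 2n/3$ (coming from Lemma~\ref{lem:effect-on-tuples}); that is the third alternative, reduction to $n^{O(1)}$ instances of size $\le 2n/3$. (ii) If $F^{\Gamma}$ has large support ($>n/5$ of $\Gamma$, say) but is \emph{not} a giant on any large orbit, then its orbit partition, or its system of imprimitivity on the largest orbit, gives a canonical partition of a canonical $\ge 4/5$-fraction of $\Gamma$ into blocks of size $\ge 2$ with $\ge 2$ blocks — refine to a colored equipartition by Proposition~\ref{prop:equipartition1}, obtaining a canonical colored $4/5$-partition; this is the first alternative. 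Here Bochert's theorem (Thm.~\ref{thm:bochert}) is the tool that forces the support, or a transitive constituent, to be small whenever $F^\Gamma$ fails to be a giant but is $2$-transitive — this is the ``Claim 2b1'' the paper flags. (iii) If $\supp(F^{\Gamma})$ is small ($\le n/5$), then $\Gamma\setminus\supp(F^\Gamma)$ is a symmetrical set for $F^\Gamma$ of relative size $\ge 4/5$, and the $2$-coloring $(\supp F^\Gamma,\ \Gamma\setminus\supp F^\Gamma)$ is a canonical colored $4/5$-partition — again the first alternative. One must check the threshold arithmetic so that ``large'' and ``small'' together exhaust the possibilities with the $4/5$ constant; adjusting the cutoffs ($1/5$ vs.\ $4/5$ etc.) is routine.

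\textbf{The main obstacle.} The genuinely delicate step is sub-case (ii): ruling out that $F^{\Gamma}$ is a large \emph{uniprimitive} or otherwise ``stubbornly symmetric'' group on a big orbit that is neither a giant nor nicely imprimitive, so that neither a good partition nor TopAction applies. The resolution is exactly the use of minimal degree: Proposition~\ref{prop:mindeg5} shows that each $K(A)^{\Gamma}$ has minimal degree at most $|\wtw(A)|$, i.e.\ there is a nonidentity element of $F^{\Gamma}$ moving only $O(k)$-size-growth many points (the ``sideburn''); if this is genuinely small relative to $m$, then Bochert's theorem (Thm.~\ref{thm:bochert}, minimal degree $\ge n/8$ for $2$-transitive non-giants) forces $F^{\Gamma}$ to fail $2$-transitivity on its support, hence to be imprimitive or intransitive there, which is what cases (ii)/(iii) exploit. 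The argument must be run carefully: one has to aggregate the sideburns across the full test sets, control their total size (this is where the hypothesis $k<m/10$ and the canonicity of the $W(A)$ and $\wtw(A)$ assignments enter), and handle the interaction between the two input strings $\xx,\yy$ by carrying the comparison via Proposition~\ref{prop:compare} throughout so that every constructed object is canonical over the two-object category. I expect the bulk of the real work — and the place where the Affected Orbits Lemma (Cor.~\ref{cor:affected}) keeps the recursion size at $n/k$ — to be in this minimal-degree/Bochert analysis and the bookkeeping that turns ``$F^\Gamma$ has small support or is imprimitive'' into an honest canonical colored $4/5$-partition.
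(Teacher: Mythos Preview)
Your architecture matches the paper's---run {\sf LocalCertificates} everywhere, form $F=\langle K(A):A\text{ full}\rangle$, branch on the structure of $F^{\Gamma}$---but three steps in your case analysis fail as written. (a)~Your case~(iii) is wrong: if $|\supp(F^{\Gamma})|\le m/5$, the $2$-coloring has one class of size $\ge 4m/5$, which is \emph{not} a $4/5$-partition (Def.~\ref{def:partition} requires every block to have size $\le (4/5)m$; this block is on the wrong side). This is precisely where the local-guide construction belongs: on the fixed-point set $D=\Gamma\setminus\supp(F^{\Gamma})$ every test set $A\subset D$ is non-full (a full $A$ would give $K(A)^{A}\ge\alt(A)$, moving points of $A\subseteq D$), so Prop.~\ref{prop:localguide} applies on $D$ and yields the $k$-ary structure (paper's case~2c). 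Your ``$\calF$ small'' branch at the top is the right ingredient but keyed to the wrong quantity---Prop.~\ref{prop:localguide} needs \emph{every} test set in the domain to be non-full, not merely ``most''; the correct trigger is $|D|\ge 4m/5$. (b)~Your inference ``fails $2$-transitivity, hence imprimitive or intransitive'' is false: $F^{C}$ can be uniprimitive on the large orbit $C$, and then no nontrivial block system exists to harvest. The paper's remedy (case~2b2) does not seek a partition here: the orbital configuration of $F^{C}$ is a non-clique homogeneous coherent configuration (since $F^{C}$ is transitive but not $2$-transitive), and individualizing one off-diagonal constituent $X_i$ (multiplicative cost $\le m$) gives a nontrivial biregular digraph on $C$ with relative symmetry defect $\ge 1/2$ by Cor.~\ref{cor:reg-defect}---this \emph{is} the relational-structure alternative.

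(c)~You have no plan for a full $A$ with large sideburn $|\wtw(A)|\ge m/5$; the hypothesis $k<m/10$ bounds $|A|$, not $|\wtw(A)|$, which depends on the input string and can be as large as $m$. The paper disposes of this first (case~1): individualize the $k$ points of such an $A$ (multiplicative cost $m^{O(k)}$), reducing to a group whose orbits on $\wtw(A)$---the affected set for the giant representation of $K(A)^{\Gamma}$ on $A$---have length $\le m/k$ by the Affected Orbits Lemma (Cor.~\ref{cor:affected}); since $|\wtw(A)|\ge m/5$, the resulting orbit coloring of $\Gamma$ is a $4/5$-partition, which Cor.~\ref{cor:effect-on-tuples2} pushes down to $\Omega$ for the Chain Rule. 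Only once every sideburn is $<m/5$ does the Bochert step go through: then the minimal degree of $F^{\Gamma}$ on $C$ is $<m/5<|C|/4$ via Prop.~\ref{prop:mindeg5}, contradicting Thm.~\ref{thm:bochert} unless $F^{C}$ is a giant (case~2b1).
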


\begin{proof}
We describe the procedure, interspersed with the justification.

Run the {\sf LocalCertificates} routine for both
inputs $\xx,\yy$ and all test sets $A\in\binom{\Gamma}{k}$.  

Run the {\sf CompareLocalCertificates} routine for 
all pairs $((\xx,A), (\xx',A'))$ where $\xx$ is fixed,
$\xx'\in\{\xx,\yy\}$, and 
$A,A'\in\binom{\Gamma}{k}$ are test sets
(a total of $2\binom{m}{k}^2$ runs).

Let $F(\xx)$ be the subgroup generated by the groups $K(A)$ for all
full subsets $A\in\binom{\Gamma}{k}$ with reference to input string
$\xx$.  So $F(\xx)$, and with it $F(\xx)^{\Gamma}$, are canonically
associated with $\xx$.  In particular, if $F(\yy)$ is analogously
defined for $\yy$, then $F(\xx)^{\Gamma}$ is permutationally
isomorphic to $F(\yy)^{\Gamma}$, \ie, there exists a permutation
$\alpha\in\sym(\Gamma)$ such that 
$F(\yy)^{\Gamma} = \alpha^{-1}F(\xx)^{\Gamma}\alpha$.

Below we ignore $\yy$ and focus on $\xx$, omitting it from the notation,
so we write $F=F(\xx)$.  But our guide is the above consequence
of canonicity.

\begin{itemize}
\item[1] If there exists a full subset $A\in\binom{\Gamma}{k}$ such that
     $|\wtw(A)|\ge m/5$ then by individualizing each element of $A$
     we reduce the question to $N$-isomorphism at a multiplicative
     cost of $m^{O(k)}$.  
     The $N$-orbits on $\wtw(A)$ have length $\le m/k$, so the $N$-orbits
     on $\Gamma$ form a $4/5$-partition.  It follows
     by Cor.~\ref{cor:effect-on-tuples2} that each $N$-orbit
     on $\Omega$ has size $\le 4n/5$.  Process via Chain Rule, \bexit\ 
\item[2] (: Now $(\forall A)(|\wtw(A)|< m/5)$ :) 
\item[2a] If the nontrivial orbits (orbits of length $\ge 2$) 
     of $F^{\Gamma}$ cover at least $m/5$
     elements of $\Gamma$ and no orbit of $F^{\Gamma}$ has length 
     $>4m/5$ we found a colored $4/5$-partition of $\Gamma$, \bexit
\item[2b] (: $F^{\Gamma}$ has an orbit $C\subseteq\Gamma$
        of length $|C|>4m/5$.  Note that since $|C| >m/2$, this
        orbit is canonical. :)
\item[2b1] Assume $F^{C}$ is doubly transitive. \\
     {\bf Claim:}\  If $F^{C}$ is doubly transitive then it is a giant,
     \ie, $F^{C} \ge \alt(C)$.
     \begin{proof}[Proof of Claim]
     Assume $F^C$ is doubly transitive.
     If $F^C$ is not a giant, it follows that its minimal degree
     is $\ge |C|/4>m/5$ assuming $|C|\ge 217$ (Bochert's theorem,
     Thm.~\ref{thm:bochert}) for which $m\ge 272$ is sufficient. 
     But the minimal degree of $F^C$ is at most the
     minimal degree of $K(A)^{\Gamma}$ which is
     at most $|\wtw(A)|$ by Prop.~\ref{prop:mindeg5},
     a contradiction with the assumption that $|\wtw(A)|< m/5$.
     \end{proof} 
     (: So $F^{C}\ge\alt(C)$ :) \\
     Now apply Cor.~\ref{cor:topaction-col}
\item[2b2] (: $F^{\Gamma}$ is transitive but not doubly transitive :) \\
     Let $\xxx=(C;R_1,\dots,R_r)$ be the orbital configuration
     of $F^C$ (the $R_i$ are the orbits of $F^C$ on $C\times C$).
     This is a non-clique homogeneous coherent configuration, so 
      $3\le r\le m$. \\
     (: Warning: the numbering of the $R_i$ is not canonical;
        isomorphisms may permute the $R_i$. :) \\
     Let $R_1=\diag(C)$ be the diagonal \\
     (: so for $i\ge 2$ the constituents
     $X_i=(C,R_i)$ are nontrivial biregular digraphs :) \\
     Individualize one of the $X_i$ $(i\ge 2)$ 
     (: multiplicative cost $r-1 \le m-1$ :) \\
     \breturn\ $X_i$, \bexit \\
     (: Note: $X_i$ has relative symmetry defect $\ge 1/2$
     by Cor.~\ref{cor:reg-defect} 
     because $X_i$ is an irreflexive, biregular, nontrivial digraph. :)
\item[2c] Let $D\subseteq\Gamma$ be the set of fixed points of $F^{\Gamma}$.
     So in the remaining case we have
      $|D|\ge 4m/5$. 
     Note that in this case, if $A\subset D$ then $A$ is not full. 
     (In fact even if $A\cap D\neq\emptyset$ then $A$ is not full.)\\
     {\bf Claim} (Turning local asymmetry into global irregularity) \\
     In time $m^{O(k)}$ 
     we can construct a canonical $k$-ary relational structure
     on $D$ with symmetry defect (much) greater than $1/2$.  
     \begin{proof}
     We apply Prop.~\ref{prop:localguide} (Local guides).  To do so,
     we need to define the relevant categories.  Let $\xx_1$ and $\xx_2$
     (rather than $\xx$ and $\yy$) denote our two input strings.
     Let $D_i$ be the subset $D$ derived from input $\xx_i$.
     We apply Prop.~\ref{prop:localguide} with the assignment
     $\Omega_i\leftarrow D_i$ of variables.

    The objects of the category $\calL$ correspond to the pairs 
    $(A,i)$ where $A\in\binom{D_i}{k}$ is a test set.
    The set of morphisms $(A,i)\to (A',j)$ are the bijections
    $A\to A'$ corresponding to the set
    $\iso_G\left((\xx_i)_A^{W_i(A)},(\xx_j)_{A'}^{W_j(A')}\right)$
    for all $A,A'\in\binom{\Gamma}{k}$, where $W_i$ 
    corresponds to $W$ under input $\xx_i$.

The two abstract objects of category $\calC$ are
denoted $\xxx_1$ and $\xxx_2$.  The underlying set of $\xxx_i$
is $\Box(\xxx_i)=D_i$.  The morphisms are the bijections
$D_1\to D_2$ induced by the $G$-isomorphisms $\xx_1\to \xx_2$.

Our current assumption is that the objects in $\calL$ are
not full in our sense, \ie, $\aut(A,i)\le M_i(A)$ where
$M_i(A)\ngeq \alt(A)$.  In particular it follows that
the objects in $\calL$ are not full in the sense
of Prop.~\ref{prop:localguide}, \ie, $\aut(A,i)\neq\sym(A)$.

Thus the assumptions of Prop.~\ref{prop:localguide}
are satisfied.   The algorithm of Prop.~\ref{prop:localguide}
returns canonical $k$-ary relational structures on $D_i$ with strong 
symmetry defect $\ge |D_i|-k+1 > m/2$.
     \end{proof}

     Now \breturn\ this canonical $k$-ary relational structure, \bexit
\end{itemize}

This completes the procedure and the proof.
\end{proof}

\section{Effect of discovery of canonical structures}
\label{sec:structure-discovered}
Situation: We have a transitive group $G\le\sym(\Omega)$ 
of degree $n=|\Omega|$ and a giant representation
$\vf : G\to\sym(\Gamma)$ (\ie, $G^{\,\vf}\ge\alt(\Gamma)$).
Assume $m:=|\Gamma|\ge 10\log_2 n$.  Let
$\Phi$ be the set of standard blocks for $\vf$
(see the Main Structure Theorem, Thm.~\ref{thm:altquotient},
so $\Phi=\{B_T : T\in\binom{\Gamma}{t}\}$.  The $B_T$ partition 
$\Omega$ and form a system of imprimitivity for $G$.

In this section we study the effect of canonical structures
embedded in $\Gamma$.  

Both our group-theoretic partitioning algorithm
(AggregateCertificates, Theorem~\ref{thm:aggregate})
and our combinatorial partitioning algorithm
(the Extended Design Lemma, Theorem~\ref{thm:extended-design}) 
produce a canonical coloring of $\Gamma$
with an additional canonical structure on some of the 
color classes.  The additional structure can be an
equipartition or a Johnson scheme.
(We note that canonicity in each case is relative to
arbitrary choices previously made and correspondigly came
at a multiplicative cost.)

\subsection{Alignment of input strings, reduction of group}
\label{sec:align}

A common features of the categories of these types of structures is
that their $G^{\,\vf}$-isomorphisms are easy to find
($G^{\,\vf}$ is either $\sym(\Gamma)$ or $\alt(\Gamma)$).
(This is trivial in linear time for colored equipartitions, and
polynomial time for Johnson schemes).

We use these structures to align the input strings $\xx$ and $\yy$
and reduce the group $G$.

Let $\xxx(\zz)$ be the canonical structure associated with the 
input string $\zz\in\{\xx,\yy\}$.  Alignment means that 
$\xxx(\xx)=\xxx(\yy')$ for a $G$-shifted copy $\yy'$ of $\yy$.

\mn
{\sf Procedure Align}

\mn
Input: canonical structures $\xxx(\xx)$, $\xxx(\yy)$ on $\Gamma$ \\
Output: string $\yy'$, permutation $\sigma\in G$, and
        group $G_1\le G$ such that 
\begin{equation}   \label{eq:align}
  \iso_G(\xx,\yy)=\iso_{G_1}(\xx,\yy')\sigma \text{\quad and\quad}
        G_1^{\,\vf} =\aut(\xxx(\xx))
\end{equation}
\noindent
(: Note that it follows that $\xxx(\xx)=\xxx(\yy')$ :)

Additional output if $\xxx$ has a dominant color class $\Delta\subseteq\Gamma$
($|\Delta|> m/2$) and $\xxx$ involves an equipartition of $\Delta$
or a Johnson scheme on $\Delta$: reduced set $\Gamma'$ and
giant representation $G\to\sym(\Gamma')$ for recursive 
processing of the corresponding window $\Omega(\Delta)$.

\begin{enumerate}
\item    \label{item:aligned-reject}
If $\xxx(\xx)$ and $\xxx(\yy)$ are not $G^{\,\vf}$-isomorphic
then reject isomorphism, exit

\item   \label{item:aligned}
Else, let 
\begin{enumerate}[(i)]
 \item $\sigmabar\in\iso_{G^{\,\vf}}(\xxx(\xx),\xxx(\yy))$ \qquad\ 
               (: aligning in $\Gamma$ :)
 \item $\sigma\in\vf^{-1}(\sigmabar)$ \qquad\qquad\qquad\quad (: lifting :)
 \item $\yy'=\yy^{\sigma^{-1}}$ \qquad \qquad \qquad \qquad
               (: aligning the inputs :)
 \item $G_1=\vf^{-1}(\aut(\xxx(\xx)))$ \qquad\ \   (: reducing the group :)
\end{enumerate}
\noindent
(: Alignment as stated in Eq.~\eqref{eq:align} achieved :)

\item
Update: $\yy\leftarrow \yy'$, $G\leftarrow G_1$.

\item
(: Each of our structures has an underlying coloring -- possibly trivial :)

\sn
Let $(\Delta_1,\dots,\Delta_k)$ be the coloring of $\xxx(\xx)$
(the $\Delta_j$ are the color classes); so $\Gamma$ is the
disjoin union of the $\Delta_j$.

This coloring induces a canonical coloring of $\Phi=\binom{\Gamma}{t}$ as 
described in Lemma~\ref{lem:effect-on-tuples}; 
let $\Phi_1,\dots,\Phi_s$ be the color classes.
This coloring in turn lifts to a canonical coloring of $\Omega$ with
corresponding color classes $\Omega_1,\dots,\Omega_s$ 
where $\Omega_i=\bigcup_{T\in\Phi_i} B_T$.  
For $A\subseteq\Gamma$ recall the notation $\Phi(A)=\binom{A}{t}$ and 
$\Omega(A)=\bigcup_{T\in\Phi(A)}B_T$.

\item   \label{item:chain1}
Apply the Chain Rule to the color classes $\Omega_i$.

\item   \label{item:chain2}
If $(\exists j)(|\Delta_j|> m/2)$ (``dominant color'')
then start the application of the Chain Rule
with the window
$\Omega(\Delta_j)=\bigcup_{T\in\binom{\Delta_j}{t}} B_T$.
    
\item
While processing window $\Omega(\Delta_j)$
 \begin{enumerate}[(A)]
 \item if $\xxx$ gives a nontrivial equipartition of $\Delta_j$
       then let $\Gamma'$ be the set of blocks
 \item if $\Delta_j$ is the vertex set of a Johnson scheme
       $\jjj(m',t')$\ $(t'\ge 2)$ then identify $\Delta_j$ 
       with $\Delta_j=\binom{\Gamma'}{t'}$ where $|\Gamma'|=m'$.
\end{enumerate}
\item  let $\vf': G\to \sym(\Gamma')$ be the induced $G$-action on 
      $\Gamma'$\quad (: this is a giant representation :)
\item update: $\vf\leftarrow\vf'$,\ $\Gamma\leftarrow\Gamma'$

\end{enumerate}
{\sf end(procedure)}

\subsection{Cost analysis}
We are assuming that isomorphism of our canonical structures
$\xxx$ is testable in polynomial time (which is certainly true
for the types of structures considered), so Line~\ref{item:aligned}
is executed in polynomial (in $m$) time.

We need to examine the efficiency of the application of the Chain rule
in Lines~\eqref{item:chain1},~\eqref{item:chain2}.

We measure complexity in terms of the number of group operations.
We assume $G$ and a giant representation $\vf : G\to \sym(\Gamma)$
are given where $G\le\sym(\Omega)$ with $|\Omega|=n$ and
$|\Gamma|=m$.  Let $T(G,\vf)$ be the maximum cost over all input
strings for the pair $(G,\vf)$.  

We use the notation of Section~\ref{sec:cost-estimate}.
So $T_{\jh}(x,y)$ is the maximum of $T(G,\vf)$ over all 
$G$ and $\vf$ with the parameters $n\le x$ and $m\le y$.
Moreover, $T_{\jh}(x)$ is defined as $T_{\jh}(x)=T_{\jh}(x,x)$.
$T(x)$ is the upper bound for all groups $G$ of degree $n\le x$.
(Note that $n$ is the ``window size.'')

We are looking a function $T(x)$ that is ``nice''
in the sense that $\log\log T(x)/\log\log x$ is monotone nondecreasing
for sufficiently large $x$.  (For the function $\exp((\log x)^c)$,
this quantity is constant.) 

In analyzing the complexity, we need to take into account the
potentially quasipolynomial (in terms of $m$), say $q(m)$,
multiplicative cost of reaching our canonical structures $\xxx$: we
need to compare not one but $q(m)$ instances of $\xxx(\yy)$ with
$\xxx(\xx)$).  So the overall cost, including the application of
the Chain rule, will be
\begin{equation}    \label{eq:chain3}
        T(G,\vf) \le q(m)\sum_i T(|\Omega_i|)
\end{equation}

If $(\forall i)(|\Omega_i|\le 2n/3)$ then this yields (generously)
the inequality
\begin{equation}
        T(G,\vf) \le m\cdot q(m)T(2n/3),
\end{equation}
justifying Inequality~\eqref{eq:quasipoly1}.  (In fact, for ``nice''
functions as postulated, we obtain $T(G, \vf) \le q(m)(T(n/3)+T(2n/3))$.
But this gain of a factor of $m$ will make no difference.)

If $(\exists i)(|\Omega_i|>2n/3)$ then by 
Lemma~\ref{lem:effect-on-tuples}, for this $i=i_0$ we must have
$\Omega_{i_0}=\Omega(\Delta_j)$ where $|\Delta_j|>2m/3$.  The total
contribution of all other $\Omega_i$ to the right-hand side of
Eq.~\eqref{eq:chain3} is at most $q(m)T(n/3)$.  

Our progress on $\Omega(\Delta_j)$ is measured in terms of the
reduced $\Gamma$.  In the case of an equipartition, $\Gamma'$ is
the set of blocks of the partition, so $|\Gamma'|\le m/2$.
In case of a Johnson scheme $\jjj(m',t')$ $(t'\ge 2)$ with
vertex set $\Delta_j=\binom{\Gamma'}{t'}$, we have 
$m\ge |\Delta_j|=\binom{m'}{t'}\ge \binom{m'}{2}> (m'-1)^2/2$,
so $m' < 1+\sqrt{2m} < m/2$ (for $m\ge 12$).
So in each case we obtain the inequality
\begin{equation}
    T(G,\vf)\le q(m)(T(n/3)+T_{\jh}(n,m/2))
\end{equation}
justifying Eq.~(v) in Sec.~\ref{sec:cost-estimate} and
yielding the conclusion
\begin{equation}
    T(n)\le q(n)^{O(\log^2 n)}
\end{equation}
as in Eq.~\eqref{eq:cost4}.

\section{The Master Algorithm}
\label{sec:master}

The algorithm will refer to a polylogarithmic function
$\ell(x)$ to be specified later.

Whenever a subroutine in the algorithm exits and returns
a good color-partition of $\Omega$, the algorithm 
starts over (recursively).   If it returns
a structure such as a UPCC, we move to the next line.
If the subroutine returns isomorphism rejection,
that branch of the recursion terminates and the
algorithm backtracks.

\mn
{\sf Procedure String-Isomorphism}

\mn
Input: group $G\le\sym(\Omega)$, strings $\xx,\yy : \Omega\to\Sigma$

\mn
Output: $\iso_G(\xx,\yy)$


\begin{enumerate}
\item Apply {\sf Procedure Reduce-to-Johnson} 
            (Luks reductions, Sec.~\ref{sec:reducetojohnson}) \\
      (: The rest of this algorithm constitutes
          the {\sf ProcessJohnsonAction} routine
         announced in Sec.~\ref{sec:reducetojohnson}) 
\item (: $G$ is transitive, $G$-action $\ggg$ on blocks is Johnson group 
         isomorphic to $\sss_m$ or $\aaa_m$ :)\\
         set $\ell = (\log n)^3$ \\
      \bif\ $m\le \ell$ \bthen\ apply strong Luks reduction to reduce to
         kernel of the $G$-action on the blocks (brute force on
         small primitive group $\ggg$, multiplicative cost $\ell!$ :)
\item (: $G$-action on blocks is isomorphic to 
         $\sym(\Gamma)$ or $\alt(\Gamma)$,\ $|\Gamma|=m > \ell$ :) \\
         Let $\vf: G\to \sym(\Gamma)$ be a giant representation
         (inferred from $\ggg$)\\
         Let $N=\ker(\vf)$ and let 
         $\Phi=\{B_T\mid T\in\binom{\Gamma}{t}\}$ be the set of standard
           blocks (Thm.~\ref{thm:altquotient})
           (: the $B_T$ partition $\Omega$ and $G$ acts on $\Phi$ as
               $\sym^{(t)}(\Gamma)$ or $\alt^{(t)}(\Gamma)$ :)
\item \bif\ $G$ primitive (: \ie, $\Omega=\Phi$ :) \\
         \indent\quad \bif\ $t=1$ \bthen\ 
               find $\iso_G(\xx,\yy)$, \bexit \quad
              (: trivial case: $\Omega=\Gamma$, $G=\alt(\Omega)$; \\
         \indent\quad\quad
              isomorphism only depends on the multiplicity of each 
              letter in the strings $\xx,\yy$ :) 
\item    \indent\quad \belse\ (: $t\ge 2$ :) 
          view $\xx,\yy$ as edge-colored $t$-uniform hypergraphs 
           $\calH(\xx)$ and $\calH(\yy)$ \\
         \indent\quad\quad\quad on vertex set $\Gamma$ \\
         \indent\quad\quad
           \bif\ relative symmetry defect of $\calH(\xx)$ is $< 1/2$
          \bthen\ apply Cor.~\ref{cor:topaction-col}  
\item    \indent\quad\quad  \belse\
           (: now their relative symmetry defect is $\ge 1/2$ :) \\
         \indent\quad
           (: view these hypergraphs as $t$-ary relational structures :)\\
         \indent\quad\quad
         apply Extended Design Lemma (Theorem~\ref{thm:extended-design}) 
\item    \indent\quad (: canonical structure $\xxx$ on $\Gamma$ found: 
            colored equipartition or Johnson scheme :) \\
         \indent\quad\quad
          apply {\sf Procedure Align} to $\xxx$ (Sec.~\ref{sec:align})
\item  \belse\ (: $G$ imprimitive, \ie, $|\Phi|\le (1/2)|\Omega|$ :) \\  
         \indent\quad\quad
          apply {\sf AggregateCertificates} (Theorem~\ref{thm:aggregate}) \\
         \indent\quad
          (: Note: this is where our main 
             group-theoretic Divide-and-Conquer algorithm, \\
         \indent\quad\quad
             {\sf Procedure LocalCertificates} 
             (Theorem~\ref{thm:local-certificates}) is used :) 
\item    \indent\quad
           \bif\ {\sf AggregateCertificates} returns canonically embedded 
              $k$-ary relational structure on $\Gamma$ \\
         \indent\quad\quad
              with relative symmetry defect $\ge 1/2$ \bthen\ 
\item    \indent\quad\quad
         apply Extended Design Lemma (Theorem~\ref{thm:extended-design}) \\
         \indent\quad\quad
           $\xxx\leftarrow$ canonical structure on $\Gamma$ returned \\
         \indent\quad (: $\xxx$ is a colored equipartition of $\Gamma$
              or a Johnson scheme embedded in $\Gamma$ :) 
\item    \indent\quad
           \belse\ (: {\sf AggregateCertificates} returns 
              canonical colored equipartition on $\Gamma$ :)\\
         \indent\quad\quad
              $\xxx\leftarrow$ colored equipartition returned
\item    \indent\quad\quad
             apply {\sf Procedure Align} to $\xxx$ (Sec.~\ref{sec:align})



\end{enumerate}

The essence of the analysis is in the analysis of 
{\sf Procedure Align} given in Section~\ref{sec:align}.



\section{Concluding remarks}
\subsection{Dependence on the Classification of Finite Simple Groups}
\label{sec:CFSG}
As mentioned in the Introduction, the analysis of the
algorithm, as stated, depends on
the Classification of Finite Simple Groups (CFSG) via Cameron's
classification of large primitive permutation groups.  There is one
other instance in which we rely on CFSG; we employ ``Schreier's
Hypothesis'' in the proof of Lemma~\ref{lem:prim-topquotient}.

We are, however, able to considerably reduce the dependence of the
analysis on CFSG; we are able to do \emph{without Cameron's result} by
one more application of the {\sf Procedure UPCC Split-or-Johnson}
(Theorem~\ref{thm:UPCC}) and some 80-year-old group theory.

Cameron's result guaranteed that if $G$ acted as a large primitive
group $\ggg\le \sym(\Phi)$ on the set $\Phi$ of blocks of a minimal system
of imprimitivity (the blocks are maximal), then $\ggg$ was a Cameron
group, which in turn either had a transitive, imprimitive subgroup of
small index (so Luks reduction was applicable) or a Johnson group.
This reduction was done in {\sf Procedure Reduce-to-Johnson}
(Sec.~\ref{sec:reducetojohnson}).

We are able to replace this procedure by one that does not rely on
Cameron's result; we locate this Johnson group combinatorially.  Here
is an outline.

Let $k=|\Phi|$ be the number of blocks.

If $\ggg$ is uniprimitive (primitive but not doubly transitive)
then let $\xxx$ be the orbital configuration of $\ggg$, defined
as the coherent configuration on $\Phi$ where the color classes
are the orbitals of $\ggg$, \ie, the orbits of $\ggg$ on 
$\Phi\times\Phi$.  Now $\xxx$ is uniprimitive because $\ggg$
is uniprimitive, and the color classes are by definition
$G$-invariant.  Apply {\sf Procedure UPCC Split-or-Johnson}
(Theorem~\ref{thm:UPCC}) to $\xxx$.  The procedure either returns
a canonical colored $3/4$-partition of $\Phi$,
representing significant progress, or
returns a canonically embedded Johnson scheme $\jjj(m,t)$
on a subset $J$ of $\Phi$ of size $|J|=\binom{m}{t}\ge 3k/4$.  
After breaking up $\Phi$ via the Chain rule,
we shall be left with $J$ (Lemma~\ref{lem:effect-on-tuples}).
The $G$-action on $\jjj(m,t)$ is a subgroup of $\sss_m^{(t)}$ and
can be represented on the set $[m]$ which is much smaller than $\Phi$
($k\ge \binom{m}{2}$, so $m < 1+\sqrt{2k}$).
If this is a giant action (the image contains $\aaa_m$), we are
in the same situation as if we had used Cameron's theorem.
If the action is not giant, we recurse (find orbits and 
minimal block system for the action on $[m]$, etc.).

This completes the case when $\ggg$ is uniprimitive.

If $\ggg$ is not uniprimitive then $\ggg$ is doubly transitive.
Giants are the $t=1$ case of Johnson groups, so if $\ggg$ is a giant,
we are done.  So we may now assume that $\ggg$ is doubly transitive
but not a giant.  We could conclude now by applying strong Luks
reduction to the kernel of the $G\to\ggg$ epimorphism (brute force on
$\ggg$), with reference to an elementary result by Pyber~\cite{pyber}
that the order of $\ggg$ is quasipolynomially bounded (as a function
of $k$).  But we can make our algorithm even more efficient and the
analysis even more elementary by limiting the group theory used to 
an old reference.

Let $t$ be the degree of transitivity of $\ggg$, \ie, the greatest 
integer $t$ such that $\ggg$ is $t$-transitive (transitive on the set of
$k(k-1)\cdots(k-t+1)$ ordered $t$-tuples of elements of $\Phi$).
We use the following 1934 result by Wielandt~\cite{wielandt}
(cited in the Remarks after Thm.~9.7 in~\cite{wielandtbook}).

\begin{theorem}[Wielandt]
Let $\ggg\le \sss_k$ be doubly transitive and not a giant.
Then the degree of transitivity of $\ggg$ is $t < 3\ln k$.
\end{theorem}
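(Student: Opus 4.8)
The plan is to deduce Wielandt's bound from the structure of point stabilizers in a doubly transitive group, using a counting argument on the tower of iterated stabilizers. First I would set up the standard notation: let $\ggg\le\sss_k$ act $t$-transitively but not $(t+1)$-transitively on $\Phi$ with $|\Phi|=k$, fix a chain of points $\omega_1,\dots,\omega_{t}$, and consider the iterated stabilizer $\ggg_{\omega_1,\dots,\omega_i}$ acting on the remaining $k-i$ points. Since $\ggg$ is $t$-transitive, $\ggg_{\omega_1,\dots,\omega_{t-1}}$ is $2$-transitive on a set of size $k-t+1$; but $\ggg$ is not $(t+1)$-transitive, so $\ggg_{\omega_1,\dots,\omega_{t-1}}$ is exactly $2$-transitive (not $3$-transitive) on that set. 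The point is to play the ``not a giant'' hypothesis against the high transitivity, and the natural quantitative engine is a lower bound on the order of a $t$-transitive group that is not a giant, combined with an upper bound on how large a primitive-but-not-giant group of small degree can be (Bochert's theorem, Thm.~\ref{thm:bochert}, giving minimal degree $\ge n/8$, or equivalently an upper bound on the order).

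The key steps, in order, would be: (1) Observe $|\ggg|\ge k(k-1)\cdots(k-t+1)$ by $t$-transitivity, so $|\ggg|\ge (k-t)^t$ roughly, or more carefully $|\ggg|\ge k!/(k-t)!$. (2) On the other side, bound $|\ggg|$ from above using that $\ggg$ is not a giant. Here one route is: a doubly transitive group of degree $k$ that is not $\aaa_k$ or $\sss_k$ has order at most $k^{O(\log k)}$ (this is classical and elementary via minimal-degree bounds — Bochert's theorem says the minimal degree is $\ge k/8$, and a group with minimal degree $d$ has order at most $k^{O(k/d)}\cdot$(something), or one iterates the Jordan-type argument). Actually the cleanest classical statement is that a primitive group of degree $k$ other than $\aaa_k,\sss_k$ has order less than $4^k$, and for $2$-transitive groups one gets $k^{c\log k}$; Wielandt's original argument uses exactly such estimates. (3) Combine: $k!/(k-t)! \le |\ggg| \le k^{c\log k}$ forces $t\log k \lesssim (\log k)^2$, i.e. $t = O(\log k)$; then I would sharpen the constant to get $t < 3\ln k$ by being careful with the factorial lower bound $\log(k!/(k-t)!) \ge t\log(k-t) \ge t\log(k/2)$ when $t\le k/2$ (and handling $t > k/2$ separately, where $\ggg$ would be nearly $(k/2)$-transitive and hence a giant by a theorem of Jordan, contradiction), and using the precise constant in the order bound for $2$-transitive non-giant groups.

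The main obstacle I anticipate is pinning down the exact constant $3$ in $t < 3\ln k$. Getting $t = O(\log k)$ is routine from any polynomial-ish bound on the order of a $2$-transitive non-giant group, but $3\ln k$ is a sharp-ish constant, and I would expect the proof to require the sharp form of the relevant classical estimate — specifically a bound of the shape $|\ggg| < k^{t}$ or similar on the order of a $2$-transitive non-giant group in terms of its transitivity degree, or a direct Jordan-style recursion. The honest approach here, given that the excerpt explicitly cites Wielandt~\cite{wielandt} and the remark in Wielandt's book~\cite{wielandtbook}, is that the theorem is quoted rather than reproved; so the ``proof'' in the paper is simply the citation. If a self-contained argument were wanted, I would reconstruct Wielandt's 1934 counting argument: fix $t-3$ points, pass to the residual group which is triply transitive but not quadruply transitive, and run an induction/counting on the number of such groups that extract the factor of $3$ rather than some other constant — but I would flag that reproducing the optimal constant is the delicate point, whereas everything up to $O(\log k)$ is straightforward from Bochert's theorem alone.
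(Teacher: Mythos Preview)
You correctly identify that the paper does not prove this theorem: it is simply quoted as Wielandt's 1934 result, with the citations to \cite{wielandt} and \cite{wielandtbook}, and the paper immediately moves on with ``We continue with the procedure.'' Your sketch of how one might recover the $O(\log k)$ bound from Bochert-type order estimates is reasonable extra commentary, and your caveat that nailing the constant $3$ requires Wielandt's original counting argument is well placed, but none of that appears in the paper itself.
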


We continue with the procedure.
Pick $S\subset\Phi$ with $|S|=t-1$.  Individualize the elements of 
$S$.  Now the group $\ggg_{(S)}$ (pointwise stabilizer of $S$) is
transitive but not doubly transitive in its action on $\Phi\setminus S$.  
If $\ggg_{(S)}$ is imprimitive on $\Phi\setminus S$ then we reduce 
$\Phi$ to the set $\Phi'$ corresponding to the blocks of imprimitivity
of $\ggg_{(S)}$; so we now have $k':=|\Phi'|\le k/2$ blocks, significant 
progress.  Otherwise, $\ggg_{(S)}$ is uniprimitive on $\Phi\setminus S$,
so we are back to the case already discussed.

\begin{remark}
Stronger bounds hold on the degree of transitivity.
Under CFSG, we have $t\le 5$, and in fact $t\le 3$ if $k\ge 25$.
Moreover, Wielandt~\cite{wielandt-schreier} (see \cite[Thm. 7.3A]{dixon})
has shown that assuming
only Schreier's Hypothesis, one can prove $t\le 7$.
So 6 individualizations (rather than $3 \ln k$
individualizations) suffice in the above argument if we are
willing to assume Schreier's Hypothesis.
(Note also that if we do encounter a 7-transitive group that is not
a giant, we shall have found an explicit counterexample to
Schreier's Hypothesis and thereby to CFSG, an impressive by-product.)
\end{remark}


\subsection{How easy is Graph Isomorphism?}  \label{sec:conclusions2}
The first theoretical evidence against the possibility of
NP-completeness of GI was the equivalence of existence and 
counting~\cite{poznan,mathon}, not observed in any NP-complete problem.  
The second, stronger evidence came from the early theory of interactive
proofs: graph isomorphism is in coAM, and therefore if GI is
NP-complete then the polynomial-time hierarchy collapses to the
second level (Goldreich--Micali--Wigderson 1987~\cite{gmw}).  Our result
provides a third piece of evidence: GI is not NP-complete unless
all of NP can be solved in quasipolynomial time.

A number of questions remain.  The first one is of course whether GI
is in P.  Such expectations should be tempered by the
status of the \emph{Group Isomorphism} problem\footnote{In
 complexity theory, the ``Group Isomorphism Problem'' 
refers to groups given by Cayley tables; in other words,
complexity is compared to the order of the group.  From
the point of view of applications, this complexity measure is
of little use; in computational group theory, groups are usually 
given in compact representations
(permutation groups, matrix groups given by lists of generators,
$p$-groups given by power commutator presentation, etc.).
But the fact remains that even in the unreasonably redundant
representation by Cayley tables, we are unable to solve the
problem is polynomial time.}:  
given two groups by their Cayley tables, are they isomorphic?  
It is easy to reduce this problem to GI.
In fact, Group Isomorphism seems much easier than GI; it can
trivially be solved in time $n^{O(\log n)}$ where $n$ is the order of
the group.  But in spite of considerable effort and the availability
of powerful algebraic machinery, Group Isomorphism is still not 
known to be in P.  We are not even able to decide Group
Isomorphism\footnote{A simple algorithm, proposed by Tim Gowers
on Dick Lipton's blog in November 2011, has a chance of running in 
$n^{O(\sqrt{\log n})}$.   Let the \emph{$k$-profile} of a finite group
$G$ be the function $f$ on isomorphism types of $k$-generated groups
where $f(H)$ counts those $k$-tuples of elements of $G$ that generate
a subgroup isomorphic to $H$.  For what $k$ do $k$-profiles discriminate
between nonisomorphic groups of order $n$?  It is known that
$k < (1/2)\sqrt{\log_2 n}$ is insufficient
for infinitely many values of $n$
(Glauberman, Grabowski~\cite{ggg}).
Whether some $k$ that is not much greater than $\sqrt{\log n}$
suffices is an open question that I think would deserve
attention.  The test case is $p$-groups of class 2; the
Glauberman--Grabowski examples belong to this class.} in 
time $n^{o(\log n)}$.

A closely related challenge that deserves attention is the
String Isomorphism problem on $n=p^k$ points, with respect to
the linear group GL$(k,p)$.  The order of this group is
about $p^{k^2}=n^{\log_p n}$; the question is, can this problem
be solved in time $p^{o(k^2)}$ (or perhaps even in $\poly(n)$ time).
I note that this problem can be encoded as a GI problem for
graphs with $\poly(n)$ vertices so if GI\,$\in$\,P then this problem
is in P as well.

The result of the present paper amplifies the significance of the
Group Isomorphism problem (and the challenge problem stated) as a
barrier to placing GI in P.  It is quite possible that the
intermediate status of GI (neither NP-complete, nor polynomial
time) will persist.

In fact, even putting GI in coNP faces the same obstacle: 
Group Isomorphism is not known to be in coNP.

\subsection{How hard is Graph Isomorphism?}  \label{sec:conclusions}
Paradoxically, from a structural complexity point of view, GI (still)
seems harder than factoring integers.  The decision version of
Factoring (given positive integers $x,y$, does $x$ have divisor
$d$ in the interval $2\le d\le y$?) is in NP\,$\cap$\,coNP
while the best we can say about GI is NP\,$\cap$\,coAM.  
Factoring can be solved in polynomial time on a quantum computer,
but no quantum advantage has yet been found for GI.  On the other
hand, apparently hard instances of factoring abound, whereas
we don't know how to construct hard instances of GI.
Could this be an indication that in structural complexity maybe
we are not asking the right questions?

Even more baffling is another complexity arena, where GI is provably
hard, on par with many NP-hard problems: relaxation hierarchies in
proof complexity theory (Lov\'asz--Schrijver, Sherali--Adams,
Sum-of-Squares hierarchies).  Building on the seminal paper by Cai,
Furer, and Immerman~\cite{cfi}, increasingly powerful hierarchies
have recently been shown to be unable to refute isomorphism of
graphs on sublinear levels~\cite{atserias,odonnell,snook}, showing
that GI tests based on these hierarchies necessarily have
exponential (even factorial) complexity.  However,
hard-to-distinguish CFI pairs of graphs and the related pairs of
which isomorphism is hard to refute in these hierarchies are
vertex-colored graphs with bounded color classes.  Testing
isomorphism of such pairs of graphs was shown to be in polynomial
time via the first application of group theory (1979/80) that used
hardly more than Lagrange's Theorem from group
theory~\cite{lasvegas, fhl}.  One lesson is that these hierarchies
have difficulty capturing the power of even the most naive
applications of group theory.  Given that hardness with respect to
these hierarchies can now be proved by reduction from GI, this
raises the question, in what sense these hierarchies indicate
hardness.

\subsection{Outlook}  \label{sec:conclusions4}
On the bright side, a number of GI-related questions may look a bit
more hopeful now.  While GI is complete over the isomorphism problems
of \emph{explicit structures}, there are interesting classes of
non-explicit structures where progress may be possible.  Two important
examples are \emph{equivalence of linear codes} and 
\emph{conjugacy (permutational equivalence) of permutation groups.}
The former easily reduces to the latter.
Both of these problems belong\footnote{To see that these problems
belong to coAM, one can adapt the GMW protocol~\cite{gmw}
by conjugating the group by a random permutation and choosing 
a uniform random set of $O(n)$ generators.}
to NP\,$\cap$\,coAM and therefore they
are not NP-complete unless the polynomial-time hierarchy collapses.
In spite of this complexity status, 
no moderately exponential ($\exp(n^{1-c})$) algorithm is 
known for either problem.  GI reduces to each of these 
problems~\cite{luks-dimacs}\footnote{Luks's reduction is
\href{http://mathforum.org/kb/thread.jspa?forumID=253\&threadID=561418\&messageID=1681072\#1681072}{explained 
by Miyazaki in a post on The Math Forum, Sep. 29, 1996}.}.
Regarding both problems, see also~\cite{bcgq, bcq}.  

The present paper does not address the question of 
\emph{canonical forms.}  Do graphs permit quasipolynomial-time 
computable canonical forms?

It would be of great interest to find stronger structural results to
better correspond to the ``local $\to$ global symmetry'' philosophy.  
This raises difficult mathematical questions that our algorithmic 
divide-and-conquer techniques bypass, but results of this flavor
could make the algorithm more elegant and more efficient.

Finally a more concrete question.  Let $\xxx=(V;\calR)$ be a
homogeneous coherent configuration with $n$ vertices.
Let $W\subseteq V$, $|W|\ge \alpha n$.  Suppose that the
induced configuration $\xxx[W]$ is a Johnson scheme.
Is there a constant $\alpha <1$ such that this 
implies that $\xxx$ itself is a Johnson scheme?

A result in this direction could be a step toward
an elementary characterization of the Cameron groups 
as the only primitive groups of large order,
or somewhat less ambitiously,
an elementary characterization of the Johnson groups 
as the only primitive groups of large order,
without an imprimitive subgroup of small index.
Steps toward these goals
have previously been made in~\cite{uniprimitive} for 
the case $|G|> \exp(n^{1/2+\epsilon})$ and in a remarkable recent 
paper by Sun and Wilmes~\cite{sun-wilmes} for the case 
$|G|> \exp(n^{1/3+\epsilon})$.

\subsection{Analyze this!}
The purpose of the present paper is
to give a guaranteed upper bound (worst-case analysis); it
does not contribute to practical solutions.  
It seems, for all practical purposes, the Graph Isomorphism
problem is solved; a suite of remarkably efficient 
programs is available (nauty, saucy, Bliss, conauto, Traces).
The article by McKay and Piperno~\cite{mckay-piperno} 
gives a detailed comparison of methods and performance.
Piperno's article~\cite{piperno} gives a detailed description
of \emph{Traces}, possibly the most successful program for
large, difficult graphs.

These algorithms provide ingenious shortcuts in backtrack search.
One of the most important questions facing the theorist
in this area is to analyze these algorithms.  While Miyazaki's
graphs provide hard cases for the early version of \emph{nauty},
the recent update overcomes that difficulty.   

The question is, does there exist an infinite family of pairs of
graphs on which these heuristic algorithms fail to perform
efficiently?  The search for such pairs might turn up
interesting families of graphs.  

Alternatively, can one prove strong worst-case upper bounds on the
performance of any of these algorithms?

The comparison charts in~\cite{mckay-piperno} seem to suggest that
we lack true benchmarks -- difficult classes of graphs on which
to compare the algorithms.   Encoding class-2 $p$-groups 
as graphs could provide quasipolynomially difficult examples,
but right now we have no guarantee that the heuristics could not be
tricked into much worse, (moderately?) exponential behavior.

\section*{Acknowledgments}
I am happy to acknowledge the inspiration gained from my
recent collaboration on the structure, automorphism group,
and isomorphism problem for highly regular combinatorial
structures 
with my student John Wilmes as well as with Xi Chen, Xiaorui Sun,
and Shang-Hua Teng~\cite{bw,bcstw,delsarte}.  The recent
breakthrough on primitive coherent configurations by Sun and
Wilmes~\cite{sun-wilmes} was particularly encouraging; at one point
during the weeks before the completion of the present work, it
served as a tool to breaking the decades-old $\exp(\wto(\sqrt{n}))$
barrier (see Remark~\ref{rem:sunwilmes}).

The most direct forerunner of this paper was
my joint work with Paolo Codenotti on hypergraph
isomorphism~\cite{codenotti}; that paper combined 
the group theory method with a web of combinatorial 
partitioning techniques.  In particular, I found an 
early version of the Design Lemma in the wake of that work.
(A much simpler observation is called ``Design Lemma''
in that paper.)  

Some of the group theory used in the present paper was inspired by
my joint work with P\'eter P\'al P\'alfy and Jan Saxl~\cite{saxl};
in particular, the rendering of a result of Feit and
Tits~\cite{feit-tits} in that paper turned out to be particularly
handy in the proof of the main group theoretic lemma of this paper
(``Unaffected Stabilizer Theorem,'' Theorem~\ref{thm:unaffected}).

I am grateful to three long-time friends who helped me verify
critical parts of this paper: P\'eter P\'al P\'alfy and L\'aszl\'o
Pyber the proof of various versions of the group-theoretic ``Main
structure theorem'' (Theorem~\ref{thm:altquotient}) that includes
the crucial ``Unaffected Stabilizer Theorem,'' and Gene Luks the
{\sf LocalCertificates} procedure
(Sec.~\ref{sec:localcertificates}), the core algorithm of the
paper.  Their comments helped improve the presentation, and, more
significantly, raised my confidence that these items actually work.
All other parts of the paper seem quite ``fault-tolerant,'' with
multiple solutions, and a bag of tricks to rely on, should any gaps
be found.  Naturally, any errors that may remain in these items (or
any other part of the paper) are my sole responsibility.

I wish to thank several colleagues, and especially Thomas Klimpel
and P\'eter P. P\'alfy, for their careful reading of parts of the
first arXiv version and pointing out a large number of typos and
some inaccuracies.  The present update
(second arXiv version) corrects these and adds an occasional
clarification.  New content has been added to
Section~\ref{sec:CFSG} (Dependence on CFSG).
The following sections have been updated for better exposition
but without any significant new content:
Sections~\ref{sec:classical} 
(Classical coherent configurations),
\ref{sec:designlemma} 
(Design Lemma), 
\ref{sec:localguide} 
(Local guides), 
\ref{sec:resilient} 
(Resilience of Johnson schemes), 
\ref{sec:localguides-johnson} 
(Local guides for the Johnson case),
\ref{lem:topfactor} 
(Simple quotients of subdirect products),
\ref{sec:aggregate} 
(Aggregation, item 2c: aggregating non-fullness certificates).

\end{document}